\def\cleardoublepage{\clearpage\if@twoside \ifodd\c@page\else
    \hbox{}
    \thispagestyle{empty}
    \newpage
    \if@twocolumn\hbox{}\newpage\fi\fi\fi}
\makeatother \clearpage{\pagestyle{plain}\cleardoublepage}
\newcommand*\chapterlabel{}
\titleformat{\chapter}[display]  
	{\normalfont\bfseries\Huge} 
	{\gdef\chapterlabel{\thechapter\ }}     
 	{0pt} 
 	  {\begin{tikzpicture}[remember picture,overlay]
    \node[yshift=-8cm] at (current page.north west)
      {\begin{tikzpicture}[remember picture, overlay]
        \draw[fill=black] (0,0) rectangle(35.5mm,15mm);
        \node[anchor=north east,yshift=-7.2cm,xshift=34mm,minimum height=30mm,inner sep=0mm] at (current page.north west)
        {\parbox[top][30mm][t]{15mm}{\raggedleft $\phantom{\textrm{l}}$\color{white}\chapterlabel}};  
        \node[anchor=north west,yshift=-7.2cm,xshift=37mm,text width=\textwidth,minimum height=30mm,inner sep=0mm] at (current page.north west)
              {\parbox[top][30mm][t]{\textwidth}{\color{black}#1}};
       \end{tikzpicture}
      };
   \end{tikzpicture}
   \gdef\chapterlabel{}
  } 
\titlespacing*{\chapter}{0pt}{50pt}{30pt}
\titlespacing*{\section}{0pt}{13.2pt}{*0}  
\titlespacing*{\subsection}{0pt}{13.2pt}{*0}
\titlespacing*{\subsubsection}{0pt}{13.2pt}{*0}
\newcounter{myparts}
\newcommand*\partlabel{}
\titleformat{\part}[display]  
	{\normalfont\bfseries\Huge} 
	{\gdef\partlabel{\thepart\ }}     
 	{0pt} 
 	  {\setlength{\unitlength}{20mm}
	  \addtocounter{myparts}{1}
	  \begin{tikzpicture}[remember picture,overlay]
    \node[anchor=north west,xshift=-65mm,yshift=-6.9cm-\value{myparts}*20mm] at (current page.north east) 
      {\begin{tikzpicture}[remember picture, overlay]
        \draw[fill=black] (0,0) rectangle(62mm,20mm);   
        \node[anchor=north west,yshift=-6.1cm-\value{myparts}*20mm,xshift=-60.5mm,minimum height=30mm,inner sep=0mm] at (current page.north east)
        {\parbox[top][30mm][t]{55mm}{\raggedright \color{white}Part \partlabel $\phantom{\textrm{l}}$}};  
        \node[anchor=north east,yshift=-6.1cm-\value{myparts}*20mm,xshift=-63.5mm,text width=\textwidth,minimum height=30mm,inner sep=0mm] at (current page.north east)
              {\parbox[top][30mm][t]{\textwidth}{\raggedleft \color{black}#1}};
       \end{tikzpicture}
      };
   \end{tikzpicture}
   \gdef\partlabel{}
  } 
\def\resetMathstrut@{%
  \setbox\z@\hbox{%
    \mathchardef\@tempa\mathcode`\(\relax
      \def\@tempb##1"##2##3{\the\textfont"##3\char"}%
      \expandafter\@tempb\meaning\@tempa \relax
  }%
  \ht\Mathstrutbox@1.2\ht\z@ \dp\Mathstrutbox@1.2\dp\z@
}
 \DeclareMathOperator{\ND}{\mathcal{N}}
 \newcommand{\PR}{\mathbb{P}}
 \newcommand{\PQ}{\mathbb{Q}}
 \newcommand{\PU}{\mathbb{U}}
 \newcommand{\PV}{\mathbb{V}}
 \newcommand{\PS}{\mathbb{S}}
\newcommand{\I}{\mathbb{1}}
\newcommand{\E}{\mathbb{E}}
\DeclareMathOperator{\Var}{Var}
\DeclareMathOperator{\Cov}{Cov}
\DeclareMathOperator{\sgn}{sgn}
\newcommand{\R}{\mathbb{R}}
\newcommand{\N}{\mathbb{N}}
\newcommand{\wt}[1]{\widetilde{{#1}}}
\newtheorem{theorem}{Theorem}
\newtheorem{remark}[theorem]{Remark}
\newtheorem{condition}[theorem]{Condition}
\newtheorem{definition}[theorem]{Definition}
\DeclareMathOperator*{\esssup}{ess\,sup}
\DeclareMathOperator{\Tr}{Tr}
\DeclareMathOperator{\IQR}{IQR}
\DeclareMathOperator{\ord}{ord}
\DeclareMathOperator{\eig}{eig}
\DeclareMathOperator{\diag}{diag}
\newtheorem{lemma}[theorem]{Lemma}
\DeclareMathOperator{\Pois}{Pois}
\newcommand{\wh}[1]{\widehat{{#1}}}
\newcommand{\mc}[1]{\mathcal{{#1}}}
\DeclareMathOperator{\id}{id}
\DeclareMathOperator{\ce}{ce}
\DeclareMathOperator{\ic}{ic}
\DeclareMathOperator{\var}{var}
\DeclareMathOperator{\est}{est}
\DeclareMathOperator{\msq}{msq}
\DeclareMathOperator{\mgf}{mgf}
\DeclareMathOperator{\dist}{d}
\DeclareMathOperator{\Sym}{Sym}
\title{Adaptive importance sampling via minimization of estimators of cross-entropy, mean square, and inefficiency constant}
\author{Tomasz Badowski  \footnote{Supported by the Berlin Mathematical School and Freie Universit\"at Berlin.} \\[11pt]
\emph{Freie Universit\"{a}t Berlin}}
\begin{document} 
\maketitle 

\chapter*{Abstract}
\addcontentsline{toc}{chapter}{Abstract}
\vspace{2em}
The inefficiency of using an unbiased estimator in a 
Monte Carlo procedure can be quantified using an 
inefficiency constant, equal to the product of the variance of the estimator 
and its mean computational cost. 
We develop methods 
for obtaining the parameters of the importance sampling (IS) change of measure via 
single- and multi-stage minimization 
of well-known estimators of cross-entropy and the mean square of the IS estimator, 
as well as of new estimators of such a mean square and inefficiency constant. 
We prove the convergence and asymptotic properties 
of the minimization results in our methods. 
We show that if a zero-variance IS parameter exists, 
then, under appropriate assumptions, 
minimization results of the new estimators converge to such a parameter 
at a faster rate than such results of the well-known estimators, and a 
positive definite asymptotic covariance matrix of the minimization results of the cross-entropy estimators is four times such a matrix 
for the well-known mean square estimators. 
We introduce criteria for comparing the asymptotic efficiency of stochastic optimization methods, 
applicable to the minimization methods of estimators considered in this work. 
In our numerical experiments for computing expectations of functionals of an Euler scheme, the
minimization of the new estimators led to the lowest inefficiency constants and variances of the IS estimators, 
followed by the minimization of the well-known mean square estimators, and the cross-entropy ones. 
\tableofcontents

\chapter{Introduction}\label{secInt} 
In this work we consider the problem of estimating an expectation of the form $\E_{\PQ_1}(Z)$, 
where $\PQ_1$ is a probability and $Z$ is a $\PQ_1$-integrable random variable. 
Such expectations are of interest in a variety of fields. For instance, they arise as prices of derivatives in mathematical 
finance \cite{glasserman2004monte}, as committors in molecular dynamics \cite{HartmannEntr14,Prinz_2011, Allen_2009}, and as probabilities 
of buffer overflow in telecommunications, system failure in dependability modelling, 
or ruin in insurance risk modelling \cite{asmussen2007stochastic}. 
The Monte Carlo (MC) method relies on approximating such an expectation using an average of independent replicates of 
$Z$ under $\PQ_1$. The inefficiency of the MC method can be quantified using an inefficiency constant, also known as a work-normalized variance 
\cite{Glynn_1992, effimprGlynn94, rubino2009rare, badowski2011, badowski2013}. 
We discuss such constants and their interpretations in more detail in Chapter \ref{secIneff}. 
Efficiency improvement techniques (EITs) (the term having been proposed in \cite{Glynn_1992}) try to improve the efficiency 
of the estimation of the expectation of interest over the crude MC as above, e.g. 
by using some MC method with a lower inefficiency constant. 
Popular statistical EITs include control variates, importance sampling (IS), antithetic variables, and 
stratified sampling; see e.g. \cite{asmussen2007stochastic, effimprGlynn94}. 
Control variates method relies on generating in an MC method replicates of a control variates estimator, 
equal to the sum of $Z$ and a $\PQ_1$-zero-mean random variable, called a control variate \cite{asmussen2007stochastic, Szechtman2001}. 
In importance sampling (IS), 
for a probability $\PQ_2$, called an IS distribution, and a random variable $L$ such that $\E_{\PQ_2}(ZL)=\E_{\PQ_1}(Z)$, called an IS density, 
one computes in an MC method replicates of the IS estimator $ZL$, under $\PQ_2$. 
IS has found numerous applications among others to the computation of the expectations mentioned above and is a useful tool for 
rare-event simulation \cite{Glynn_2012,asmussen2007stochastic, ZhangWang2014, bucklew2004, Jourdain2009, Lemaire2010}. 
Adaptive EITs use the information from the random drawings available to make the estimation method more efficient, e.g. by tuning some 
parameter of the method from some set $A \subset \R^l$. 
For instance, in adaptive control variates one typically tunes the parameter in some parametrization of the control variates, while 
in IS --- in some parametrizations $b\to \PQ(b)$ of the IS distributions and $b\to L(b)$ of the IS densities. 
Adaptive IS and control variates can have a two-stage form, in the first stage of which an adaptive parameter as above is obtained and in the second 
a separate IS or control variates MC procedure is performed using this parameter. 
Typically in the literature adaptive control variates and IS 
have attempted to find a parameter optimizing (i.e. minimizing or maximizing) some function $f:A\to\overline{\R}$.  
Frequently, such a function was the variance or equivalently the mean square of the adaptive estimator and it was minimized; 
see e.g. \cite{Rubinstein97optimizationof,Jourdain2009,arouna04,Lemaire2010,Lapeyre2011}
for adaptive IS and \cite{Szechtman2001,Nelson_1990,KimH07} for control variates. 
We say that two functions $f_i$, $i=1,2$, are positively (negatively) linearly equivalent, if 
$f_1= af_2+b$ for some linear proportionality constant $a\in (0,\infty)$ ($a \in(-\infty,0)$) and $b \in \R$. 
In a number of adaptive IS approaches it was proposed to maximize a certain function negatively linearly equivalent to
the cross-entropy distance (also known as Kullback-Leibrer divergence) of the zero variance IS distribution (if it exists) 
from the IS distribution considered \cite{Rubinstein_optim,Rubinstein_2004, pupa2011}. 

We define cross-entropy to be a certain function of the IS parameter, 
positively linearly equivalent to the cross-entropy distance of the zero-variance IS distribution from 
the IS distribution considered, 
even though this name is sometimes used in the literature as a synonym of the cross-entropy distance \cite{Rubinstein_optim, cover2006elements}.
In addition to minimizing the mean square and such a cross-entropy, 
in this work we also minimize inefficiency constant. 
To our knowledge, it is the first time when inefficiency constant is being minimized for adaptive MC. 
One reason why many previous works focused on the minimization of variance rather than inefficiency constant may be that 
for some problems considered in these works the mean computation cost was approximately constant in the function of the adaptive parameter 
and thus the inefficiency constant and variance were approximately proportional. 
For instance, this is typically the case in parametric adaptive control variates and in parametric IS for many 
problems of derivative pricing in computational finance \cite{glasserman2004monte, Jourdain2009, Lemaire2010}. However, in 
numerous current and potential applications of IS in which the computation of a replicate of the IS estimator involves simulating 
a stochastic process until a random time, the mean cost typically depends on the IS parameter 
and the minimization of the variance and the inefficiency constant is no longer equivalent. 
This is for instance typically the case when performing IS for 
pricing knock-out barrier options in computational finance \cite{glasserman2004monte, Jourdain2009}. 
Further examples are provided by the molecular dynamics applications in which one is interested in computing 
expectations of various functionals of discretizations of diffusions considered 
until their exit time of some set; see e.g. \cite{ZhangWang2014,DupuisSpil2012} and our numerical experiments. 
See also \cite{Glynn_2012} and references therein for some examples from queueing theory and dependability modelling. 

Two types of stochastic optimization methods have typically been used in the literature 
for optimizing some functions $f$ as above. Methods of the first type are 
stochastic approximation algorithms. These are multi-stage stochastic optimization methods using stochastic gradient descent, 
in which estimates of the values of gradients of such $f$ are computed in each stage. 
See e.g. \cite{KimH07} for an application of such methods to variance minimization 
in adaptive control variates and \cite{arouna04, Lemaire2010, Lapeyre2011} in adaptive IS. 
One problem with such methods is that their practical performance heavily depends on the choice of step sizes, 
and some heuristic tuning of them may be needed to achieve a reasonable performance \cite{KimH07}. 
Stochastic optimization methods of the second type rely, in their simplest form, 
on the optimization of $b \to \wh{f}(b,\omega)$ for an appropriate random function 
$\wh{f}:A\times \Omega \to \R$ (where $(\Omega,\mc{F},\mc{P})$ is the default probability space and $\omega \in \Omega$ 
is an elementary event). The function $\wh{f}$ can be thought of as an estimator or a stochastic counterpart of $f$, 
and thus the methods from this class have been called stochastic counterpart methods, 
alternative names including sample path and sample 
average approximation methods \cite{HomemdeMello02rareevent, KimH07, HandbookMonteCarlo, Shapiro2003}. 
See Chapter 6, Section 9, in \cite{Shapiro2003} for a historical review 
of such methods, related to M-estimation and in particular maximum likelihood estimation in statistics \cite{Vaart}. 
The most well-known example of an application of the 
stochastic counterpart method to efficiency improvement are linearly parametrized control variates 
\cite{asmussen2007stochastic,Szechtman2001,Nelson_1990}, in which to obtain the control variates parameter one 
minimizes the sample variance of the control variates estimator by solving a certain system of linear equations. 
See \cite{Rubinstein97optimizationof,Rubinstein_optim,Rubinstein_2004,Jourdain2009} for 
applications of the stochastic counterpart method to adaptive IS and \cite{KimH07} for an application to nonlinearly parametrized 
control variates. In some works on adaptive IS 
it was proposed to perform a multi-stage stochastic counterpart method (as opposed to the single-stage one as above), 
in which the optimization result from a given stage is used to construct 
the estimator optimized in the subsequent stage \cite{Rubinstein97optimizationof,Rubinstein_2004}. 
As discussed heuristically in Section 2 in \cite{Rubinstein97optimizationof}, such 
an approach may be better than the single-stage one 
because the asymptotic distribution of the optimization results of the estimators 
from its final stage may be 
less spread than when using some default estimators in the single-stage case. 

In this work we investigate single- and multi-stage 
stochastic counterpart methods minimizing some well-known estimators of
mean square \cite{Rubinstein97optimizationof, Jourdain2009} and 
cross-entropy \cite{Rubinstein_optim, Rubinstein_2004}, 
as well as newly proposed estimators of mean square and inefficiency constant. 
In our theoretical analysis we focus on the parametrizations of IS obtained
via exponential change of measure (ECM) and via linearly parametrized exponential tilting for Gaussian stopped sequences (LETGS). 
Using IS in some special cases of the ECM and LETGS settings 
has been demonstrated to lead to significant variance reductions e.g. 
in rare event simulation \cite{bucklew2004,asmussen2007stochastic} 
and when pricing options in computational finance \cite{Jourdain2009, Lemaire2010}. 
We provide sufficient and in some cases also necessary assumptions under which 
there exist unique minimum points of the cross-entropy 
and mean square as well as of their estimators in the ECM and LETGS settings 
and we give some sufficient conditions for these assumptions to hold in the Euler scheme case. 
It is well known that for some important parametrizations of IS 
the minimum points of the cross-entropy estimators can be found exactly, 
which makes these estimators more convenient to minimize than the well-known mean square estimators, 
for the minimization of which one typically uses some iterative methods. This is for instance the case 
in some special cases of the ECM setting, in IS for finite support distributions (see examples 3.5 and 3.6 in \cite{Rubinstein_2004}),
and when using the Girsanov transformation with a linear parametrization of IS drifts for diffusions \cite{ZhangWang2014}. 
We show that this is also the case in the LETGS setting. 

An important contribution of this work is the definition of versions of single- and multi-stage minimization methods of the above estimators
in the ECM and LETGS settings whose results enjoy appropriate strong convergence and asymptotic properties 
in the limit of the increasing 
budget of the single-stage minimization or the increasing number of stages of the multi-stage minimization. 
To ensure such properties of the multi-stage methods we use increasing numbers of simulations in 
the consecutive stages and projections of the minimization results onto some bounded sets. Furthermore, 
in the proofs we apply a new multi-stage strong law of large numbers. 
For the cross-entropy  estimators we consider their exact minimization utilising formulas for their minimum points,
and we prove the a.s. convergence of their minimization results to the unique minimum point of cross-entropy. 
We show that the well-known  mean square estimators in both settings and the new mean square estimators in the ECM setting are convex 
and we prove the a.s. convergence of the results of their minimization with gradient-based stopping criteria 
to the unique minimum point of mean square. 
For the new mean square estimators in the 
LETGS setting and the ones of the inefficiency constant in the ECM setting for a constant computation cost, 
we prove the a.s. convergence of their minimization results to the unique minimum point of the mean square 
when using the following two-phase minimization procedure. 
In its first phase some convex estimator of the mean square as above can be minimized, and then, using its minimization result as a starting point, 
one can carry out a constrained minimization of the considered estimator or an unconstrained minimization but of an appropriately modified such estimator. 
For the inefficiency constant estimators in the LETGS setting 
we propose a more complicated three-phase minimization procedure with gradient-based stopping criteria, 
the first phase of which can be as above. We prove the convergence of the minimization results in such a procedure 
to the set of the first-order critical points of the inefficiency constant which have not higher values of the inefficiency 
constant than in the minimum point of the variance, or even by at least some positive constant lower 
such values if the gradient of the inefficiency constant in the minimum point of the variance does not vanish.

Using the theory of the asymptotic behaviour of minimization results of random functions from \cite{Shapiro1993}, we develop 
such a theory for the minimization results of such functions when using gradient-based stopping criteria. 
We use it for proving the asymptotic properties of the single- and multi-stage minimization methods of the estimators as above. 
To our knowledge, previously in the literature only the
strong convergence and asymptotic properties of the single-stage minimization of the well-known mean square estimators were proved in
\cite{Jourdain2009}, but only in the limit of the increasing number of simulations, 
in the ECM setting for normal random vectors, under stronger integrability assumptions than in our work, and using 
exact minimization which cannot be implemented in practice as opposed to the minimization with gradient-based stopping 
criteria considered in this work. 

Another important contribution of this work is the definition of the
first- and second-order criteria for comparing 
the asymptotic efficiency of certain stochastic optimization methods for the minimization of a given function. 
A method more efficient in the first-order sense 
leads to lower values of the minimized function in the minimization results by at least a fixed positive constant 
with probability going to one as the budget of the method increases. The second-order asymptotic efficiency of the minimization  methods 
in which such values  converge in probability to the same constant 
can be quantified using some parameters, like the means, of some second-order asymptotic distributions of such values around such a constant. 
We apply such criteria to comparing the asymptotic efficiency 
of the single- and multi-stage minimization methods of the estimators discussed above. For these methods, the means of the distributions as above can be potentially 
estimated and adaptively minimized. 

We show that if $\PQ_1(Z\neq0)>0$ then there exists a unique IS distribution leading to the lowest variance of the IS estimator,
which we call the optimal-variance one.  
If additionally $Z\geq 0$, $\PQ_1$ a.s., then the optimal-variance IS distribution leads to a zero-variance IS estimator.  
IS parameters leading to such distributions are called optimal-variance or zero-variance ones respectively. 
We show that if there exists an optimal-variance IS parameter for the new mean square estimators or 
a zero-variance one for the 
inefficiency constant estimators, then under appropriate assumptions a.s. the minimization results of the exact 
single- and multi-stage minimization of such estimators are equal to such respective parameters
for a sufficiently large simulation budget used. Furthermore, for the single- or multi-stage minimization
of these estimators with gradient-based stopping criteria we can have a faster rate of convergence 
of the minimization results to such parameters than for the well-known estimators. 
We also show that if there exists a zero-variance IS parameter, 
then, under appropriate assumptions,  the asymptotic covariance matrix of the minimization results of 
the cross-entropy estimators is positive definite 
and is four times such a matrix for the well-known mean square estimators. 

We provide an analytical example in which  
all possible relations between the asymptotic variances (i.e. equalities and both strict inequalities) 
of the minimization results of different types of estimators converging to the same point are 
achieved for different parameters of the example, except that using the cross-entropy 
estimators always leads to not lower asymptotic variance than using the well-known mean square estimators. 

In our numerical experiments we consider an Euler scheme discretization of a diffusion in a potential. 
We address the problem of estimating the moment-generating function (MGF) of the exit time of such an Euler scheme of a domain, the probability to exit it 
by a fixed time, and the probabilities to leave it through given parts of the boundary, called committors.
Such quantities are of interest e.g. in molecular dynamics applications; see \cite{Dupuis2012,HartmannShuette2012, ZhangWang2014,
HartmannEntr14,Prinz_2011, Allen_2009}. 
We use IS in the LETGS setting, for which under the IS distribution we receive again 
an Euler scheme but this time with an additional drift depending on the IS parameter, 
called an IS drift. 
For the estimation of the above quantities we use a two-stage method as discussed above,
in the first stage of which to obtain the IS parameter we use 
simple multi-stage minimization of various estimators. 
In our numerical experiments, 
the minimization of the new estimators of inefficiency constant and 
mean square led to the lowest variances and inefficiency constants of the IS estimators, followed by the minimization of the well-known mean square 
estimators, and of the cross-entropy ones. 
In one case, the minimization of the inefficiency constant estimators outperformed the minimization 
of the new mean square estimators by arriving at a
lower mean cost and a higher variance but so that their product, equal to the inefficiency constant, was lower. 
The variances and inefficiency constants of the adaptive IS estimators in our experiments strongly depended on the 
parametrization of the IS drifts used and could be reduced by adding appropriate positive constants to the variables $Z$ as above.
For a committor we also performed experiments comparing the spread of the IS drifts obtained from single-stage minimization, 
which yielded results qualitatively and quantitatively close to the 
case when a zero-variance IS parameter exists as discussed above. 
We provide some intuitions supporting the observed results. 

\chapter{\label{secIneff}Monte Carlo method and inefficiency constant} 
Let us further in this work denote $\N_p=\{p,p+1,\ldots\}$, $\N=\N_0$, $\N_+=\N_1$, and $\R_+=(0,\infty)$.
For a set $A \in \mc{B}(\R^l)$ for some $l \in \N_+$, or $A \in \mc{B}(\overline{\R})$ (where $\mc{B}(B)$ is the Borel $\sigma$-field on $B$),
the default measurable space which we shall consider on it is $(A,\mc{B}(A))$, further denoted simply as $\mc{S}(A)$. 
Consider a probability $\PQ_1$ on a measurable space $\mc{S}_1=(\Omega_1,\mc{F}_1)$ 
and let $Z$ be an $\R$-valued random variable on $\mc{S}_1$ (i.e. a measurable function from $\mc{S}_1$ to $\mc{S}(\R)$), 
such that $\E_{\PQ_1}(|Z|)<\infty$. We are interested in the estimation of $\alpha:=\E_{\PQ_1}(Z)$.
The above defined quantities shall be frequently used further in this work. 
In the Monte Carlo (MC) method, for some $n \in \N_+$, one approximates $\alpha$ using an MC average $\wh{\alpha}_n:=\frac{1}{n}\sum_{i=1}^nZ_i$ 
of independent random variables $Z_i$, $i=1,\ldots n$, each having the same distribution as 
$Z$ under $\PQ_1$, shortly called independent replicates of $Z$ under $\PQ_1$. 
Variance of $\wh{\alpha}_n$ measures its mean squared error of approximation of $\alpha$, and for $\var: = \Var_{\PQ_1}(Z)$ we have 
$\Var(\wh{\alpha}_n)=\frac{\var}{n}$. 

When performing an MC procedure on a computer it is often the case that there exists a 
nonnegative random variable $\dot{C}$ 
on $\mc{S}_1$ such that for generated independent replicates $(Z_i,\dot{C}_i)$, $i=1,\ldots n$, of $(Z,\dot{C})$ under $\PQ_1$, 
$\dot{C}_i$ are typically approximately equal to some practical costs, 
like computation times, needed to generate $Z_i$. We call such $\dot{C}$ a practical cost variable (of an MC step). 
Often  we have 
$\dot{C}=p_{\dot{C}}C$ for some $p_{\dot{C}}\in \R_+$, which may be different for different computers and implementations (shortly, for different 
practical realizations) 
considered and a random variable $C$ on $\mc{S}_1$, called a theoretical cost (of an MC step), which is common for these practical realizations. 
In case when the practical costs of generating $Z_i$ are approximately constant, one can take  $C=1$. 
A random $C$ can be e.g. 
the internal duration time of a  stochastic process from which $Z$ is computed, like its hitting time of some set. 
For instance, when pricing knock-out barrier options in computational finance using the MC method \cite{glasserman2004monte, Jourdain2009} as
such $C$ one can typically take the minimum of the hitting time of the asset of the barrier and the expiry date of the option. 
We define a mean theoretical cost $c=\E_{\PQ_1}(C)$ and a theoretical inefficiency constant $\ic=c\var$ (whenever 
this product makes sense, i.e. when we do not multiply zero by infinity in it), 
and the practical ones $\dot{c}=\E_{\PQ_1}(\dot{C})=p_{\dot{C}}c$ and $\dot{\ic}=\dot{c}\var=p_{\dot{C}}\ic$.
For $\dot{c}$ and $\var$ finite, practical inefficiency constants are reasonable measures of the
inefficiency of MC procedures as above, i.e. higher such constants imply lower efficiency. 
The name inefficiency constant was coined in \cite{badowski2011, badowski2013}, 
while in some other works such a constant was called a work-normalized variance \cite{rubino2009rare}. 
However, the idea of using a reciprocal of a practical inefficiency constant to quantify the efficiency of MC methods was conceived much earlier,  
see \cite{Glynn_1992} for a historical review. 
Glynn and Whitt \cite{Glynn_1992} 
proposed more general criteria for quantifying the asymptotic efficiency of simulation estimators using asymptotic efficiency rates and values and 
the above practical inefficiency constant is equal to the reciprocal of their efficiency value in the 
special case of an MC method, in which the efficiency rate equals $\frac{1}{2}$.
See \cite{effimprGlynn94}, Section 10 of Chapter 3 in \cite{asmussen2007stochastic}, or Section 1.1.3
in \cite{glasserman2004monte} for accessible descriptions of their approach in the special case of MC methods. 

Further on in this chapter we provide some interpretations of inefficiency constants, both from the literature and new ones, 
justifying their utility for quantifying the inefficiency of MC procedures. 
The theorems introduced in the process will be frequently used further on in this work. 
 We focus on theoretical inefficiency constants 
 (often dropping further the word theoretical), but analogous interpretations hold also for the practical ones. 

The following interpretation of inefficiency constants was given in Section 2.6 in \cite{badowski2013}. 
The ratio of positive finite inefficiency constants $\ic$ of different sequences of MC procedures as above (indexed 
by the numbers $n$ of replicates used in them)
is equal to the limit of ratios of their mean costs $n_{\epsilon}c$ corresponding 
to the minimum numbers of replicates $n_{\epsilon}=\lceil\frac{\var}{\epsilon}\rceil$ 
needed to reduce the variances $\frac{\var}{n_{\epsilon}}$ of the MC averages $\wh{\alpha}_n$ below a given threshold $\epsilon>0$ for $\epsilon \rightarrow 0$. 

Consider a function $f:\R_+^2\to\R_+$ such that for each $x,y \in \R_+$, $f(x,y)=f(y,x)$ and for each $a\in \R_+$, $af(x,y)=f(ax,ay)$. Let 
$g:\R_+^2\to[0,\infty)$ be such that $g(x,y)=\frac{|x-y|}{f(x,y)}$, so that $g(x,y)=g(y,x)$ and $g(ax,ay)=g(x,y)$, $a \in \R_+$. 
For instance, 
$f(x,y)$ can be equal to $\max(x,y)$, $\min(x,y)$, or $\frac{x+y}{2}$, in which case $g(x,y)$ can be interpreted 
as the relative difference of $x$ and $y$. For some $\delta>0$, we say that $x,y \in \R_+$ are 
$\delta$-approximately equal, which we denote as $x\approx_{\delta}y$, if $g(x,y)\leq \delta$. Note that 
$x\approx_{0}y$ implies that $x=y$. 
The below simple interpretations of inefficiency constants were given in sections 1.9 and 2.6 of \cite{badowski2013}
in the special case of $f=\min$ as above. 
For two MC procedures for estimating $\alpha$, one like above using $n$ replicates and an analogous primed one, assuming that $\ic, \ic' \in \R_+$, 
from an easy calculation we have
\begin{equation}
g(\frac{\frac{\var}{n}}{\frac{\var'}{n'}},\frac{\ic}{\ic'})= g(nc,n'c')
\end{equation}
and
\begin{equation}
g(\frac{nc}{n'c'},\frac{\ic}{\ic'})= g(\frac{\var}{n},\frac{\var'}{n'}). 
\end{equation}
In particular, the ratio of positive finite inefficiency constants 
of these procedures is $\delta$-approximately equal to the ratio of the
variances of their respective MC averages 
for $\delta$-approximately equal respective mean total costs 
and it is also $\delta$-approximately equal to the ratio of their average costs for $\delta$-approximately equal 
variances of their MC averages. 


Let $(Z_i,C_i),\ i \in \N_+$, be independent replicates of $(Z,C)$ under $\PQ_1$. 
Before providing further interpretations of inefficiency constants, let us recall some basic
facts about MC procedures as above. 
From the strong law of large numbers (SLLN), for $\wh{\alpha}_n=\frac{1}{n}\sum_{i=1}^nZ_i$ it holds a.s.
$\lim_{n\to\infty}\wh{\alpha}_n =\alpha$,
and if $\var<\infty$, then 
from the central limit theorem (CLT), $\sqrt{n}(\wh{\alpha}_n-\alpha)\Rightarrow \ND(0,\var)$. Consider the following sample variance estimators
\begin{equation}\label{varN} 
\wh{\var}_n = \frac{n}{n-1}(\frac{1}{n}\sum_{i=1}^kZ_i^2- \wh{\alpha}_n^2),\quad n\in \N_2.
\end{equation}
If $\var <0$, then from the SLLN  a.s. $\lim_{n \to \infty} \wh{\var}_n=\var$ and if further $\var>0$, then from Slutsky's
lemma (see e.g. Lemma 2.8 in \cite{Vaart})
\begin{equation}\label{CLTv}
\sqrt{\frac{n}{\wh{\var}_n}}(\wh{\alpha}_n-\alpha) \Rightarrow \ND(0,1),
\end{equation}
which can be used to construct asymptotic confidence intervals for $\alpha$, as discussed e.g. in Chapter 3, 
Section 1 in \cite{asmussen2007stochastic}. 

For $n \in \N_+$, let $\wh{c}_n=\frac{1}{n}\sum_{i=1}^nC_i$ and 
for $n \in \N_2$, let 
$\wh{\ic}_n=\wh{c}_n\wh{\var}_n$. 
Assuming that $c,\var<\infty$, from the SLLN, a.s. $\lim_{n\to \infty}\wh{c}_n= c$ and 
$\lim_{n\to \infty}\wh{\ic}_n= \ic$.
Let $S_n = \sum_{i=1}^nC_i$, $n \in \N$ (in particular $S_0=0$),  
so that  $S_n$ is the cost of generating the first $n$ replicates of $Z$. 
For $t \in \R_+$, consider 
\begin{equation}\label{nt1}
N_t=\sup\{n \in \N: S_n\leq t\}, 
\end{equation}
or
\begin{equation}\label{nt2}
N_t=\inf\{n \in \N: S_n\geq t\}.
\end{equation}
The above defined $N_t$ are reasonable choices of the numbers of simulations to perform if we want to spend an approximate 
total budget $t$ (like e.g. some internal simulation time) on the whole MC procedure. Definition (\ref{nt1}) ensures that we do not exceed 
the budget $t$. Under definition (\ref{nt2}) we let ourselves
finish the last computation started before the budget $t$ is exceeded and thus we do not waste 
the computational effort already invested in it. Note that under (\ref{nt2}) we have $N_t>0$, $t \in \R_+$,
which does not need to be the case under (\ref{nt1}). 
If $C<\infty$, $\PQ_1$ a.s., then a.s. $C_i<\infty$, $i\in \N_+$, and thus under both definitions a.s.
\begin{equation}\label{ntoinfty}
\lim_{t\to \infty}N_t=\infty.
\end{equation}
For some subset $A$ of some set $D$ we denote $\I_A$ or $\I(A)$ the indicator function of $A$, i.e. a function equal to 
one on $A$ and to zero on $D\setminus A$.
For a real-valued random variable $Y$ we denote $Y_+=Y\I(Y>0)$ and $Y_-=-Y\I(Y<0)$. 
We have the following well-known slight generalization of the ordinary SLLN (see the corollary on page 292 in \cite{billingsley1979}).
\begin{theorem}\label{thSLLNG} 
 If an $\overline{\R}$-valued random variable $Y$ is such that  $\E(Y_{-})<\infty$, 
 then for $Y_1,Y_2, \ldots$, i.i.d. $\sim Y$, a.s. 
 \begin{equation} 
\frac{1}{n}\sum_{i=1}^nY_i \to \E(Y) \in \R\cup\{\infty\}. 
 \end{equation} 
\end{theorem}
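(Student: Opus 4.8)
The plan is to split into two cases according to whether $\E(Y_+)$ is finite or infinite. The hypothesis $\E(Y_-)<\infty$ guarantees that $\E(Y)=\E(Y_+)-\E(Y_-)$ is well defined as an element of $\R\cup\{\infty\}$, so both cases together cover exactly the range of values claimed for the limit. I would state this reduction first and then treat the two cases separately.

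First I would dispose of the case $\E(Y_+)<\infty$. Here $\E(|Y|)=\E(Y_+)+\E(Y_-)<\infty$, so $Y$ is integrable and $\E(Y)\in\R$, and the conclusion is exactly the ordinary SLLN for integrable i.i.d.\ summands, which I would simply invoke. The substantive case is $\E(Y_+)=\infty$, in which $\E(Y)=\infty$ and I must show that a.s.\ $\frac{1}{n}\sum_{i=1}^n Y_i\to\infty$. The idea is truncation from above: for each $M\in\N_+$ put $Y^M=\min(Y,M)$ and $Y_i^M=\min(Y_i,M)$. Since $-Y_-\le Y^M\le M$ and $\E(Y_-)<\infty$, each $Y^M$ is integrable, so applying the ordinary SLLN to the i.i.d.\ integrable variables $Y_i^M$ yields, on an a.s.\ event, $\frac{1}{n}\sum_{i=1}^n Y_i^M\to\E(Y^M)$. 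Because $Y_i\ge Y_i^M$ pointwise, on that event $\liminf_n \frac{1}{n}\sum_{i=1}^n Y_i\ge\E(Y^M)$.

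It then remains to let $M\to\infty$. Using the identity $\min(Y,M)=\min(Y_+,M)-Y_-$ (checked separately on $\{Y\ge0\}$ and $\{Y<0\}$) I get $\E(Y^M)=\E(\min(Y_+,M))-\E(Y_-)$, and by monotone convergence on $\min(Y_+,M)\uparrow Y_+$ together with the fixed finite value $\E(Y_-)$ this tends to $\E(Y_+)-\E(Y_-)=\infty$. Intersecting over the countably many levels $M\in\N_+$ the a.s.\ events from the previous step, I obtain a single a.s.\ event on which $\liminf_n \frac{1}{n}\sum_{i=1}^n Y_i\ge\E(Y^M)$ holds for every $M$, whence the liminf equals $+\infty$ and the averages converge to $\infty=\E(Y)$.

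The main obstacle is confined to the infinite-mean case, and within it is really just the bookkeeping of the almost-sure statements: the SLLN is applied for each fixed truncation level on its own a.s.\ event, and the passage $\E(Y^M)\to\infty$ must be coordinated over a countable family of levels so that one common a.s.\ event supports the final conclusion; taking $M$ real instead of integer would be a pitfall here. Everything else reduces to the standard integrable SLLN and monotone convergence.
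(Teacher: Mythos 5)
Your proof is correct. The paper does not actually prove this statement — it simply cites the corollary on page 292 of Billingsley — and your truncation argument (ordinary SLLN in the integrable case; lower truncation-free bounds $\liminf_n \frac{1}{n}\sum_{i=1}^n Y_i \ge \E(\min(Y,M))$ over countably many levels $M\in\N_+$ when $\E(Y_+)=\infty$) is precisely the standard proof of that result, with the almost-sure bookkeeping handled properly. One point worth making explicit, since the paper allows $Y$ to be $\overline{\R}$-valued: $\E(Y_-)<\infty$ forces $\PR(Y=-\infty)=0$, which is what makes each truncation $\min(Y,M)$ genuinely integrable, and in your first case $\E(Y_+)<\infty$ likewise forces $\PR(Y=+\infty)=0$ so the ordinary SLLN applies.
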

Let $c>0$ (in particular we can have $c=\infty$). Then, from the above lemma a.s. $\lim_{n\to \infty}\frac{S_n}{n}= c$ and thus 
\begin{equation}\label{sntoinfty} 
\lim_{n\to \infty}S_n = \infty, 
\end{equation} 
so that under both definitions a.s. 
\begin{equation}\label{nt0} 
N_t < \infty,\quad t \geq 0. 
\end{equation} 
From renewal theory (see Theorem 5.5.2 in \cite{chung2001course}),  under definition (\ref{nt1}) we have  a.s. 
\begin{equation}\label{nttc} 
\lim_{t\to \infty}\frac{N_t}{t}=\frac{1}{c}. 
\end{equation} 
Since, marking $N_t$ given by (\ref{nt2}) with a prim, we have
$N_t\leq N_t'\leq N_t +1$, (\ref{nttc}) holds also when using definition (\ref{nt2}). 

Let us further consider general $\N\cup\{\infty\}$-valued random variables $N_t$, $t \in \R_+$. 
Let $m \in \N_+$ and $Y$ be an $\R^m$-valued random vector such that $\E(Y_{i}^2)<\infty$, $i=1,\ldots,m$, with mean 
$\mu=\E(Y)$ and covariance matrix $W=\E((Y-\mu)(Y-\mu)^T)$. Let $X_1,X_2,\ldots$ be i.i.d. $\sim Y$. 
Let $\wh{\mu}_n=\frac{1}{n}\sum_{i=1}^nX_i$, $n \in \N_+$. 
For the string $\lambda$ substituted by each of the strings $\alpha$, $\var$, $\ic$, $c$, and $\mu$, 
for $p=2$ for $\lambda$ substituted by $\var$ or $\ic$ and for $p=1$ otherwise, 
consider an estimator $\wt{\lambda}_t$ of $\lambda$ corresponding to the total budget $t \in \R_+$ and with an initial value
$\lambda_0$, where $\lambda_0 \in \R^m$ for $\lambda$ substituted by $\mu$ and $\lambda_0\in \R$ otherwise, defined as follows 
\begin{equation}\label{lambdatdef} 
\wt{\lambda}_t= \wh{\lambda}_{N_t}\I(N_t\in \N_p) +\lambda_0\I(N_t\notin \N_p). 
\end{equation} 
We shall need the following trivial remark.
\begin{remark}\label{remtaua}
For each $k \in \N_+,$ let $\tau_k$ be an a.s. $\N$-valued random variable (i.e. $\PR(\tau_k \in \N) =1$) and let 
a.s. $\lim_{k\to \infty}\tau_k=\infty$.  
Let further $a_k$, $k \in \N$, be random variables such that a.s. $\lim_{k\to \infty}a_k=a$. 
Then, a.s. $\lim_{k\to \infty} \I(\tau_k \in \N)a_{\tau_k}=a$.  
\end{remark}
When we have a.s. (\ref{ntoinfty}), (\ref{nt0}), and for some $\lambda$ as above, a.s. $\lim_{n\to \infty}\wh{\lambda}_n=\lambda$, then  
from Remark \ref{remtaua}, a.s.
\begin{equation}\label{wtlambda}
\lim_{t\to\infty} \wt{\lambda}_t=\lambda.
\end{equation}

\begin{lemma}\label{lemAnsc}
Let $a_n$, $n\in \N_+$, be $\N_+$-valued random variables such that for some  $t_n\in \R_+$, $n \in \N_+$, such that $\lim_{n\to\infty}t_n= \infty$,
for some $b \in \R_+$, we have 
$\frac{a_n}{t_n} \overset{p}{\to} b$.
Then, 
\begin{equation}
\frac{\sum_{i=1}^{a_n}(X_i-\mu)}{\sqrt{a_n}} \Rightarrow \ND(0,W). 
\end{equation}
\end{lemma}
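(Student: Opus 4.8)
The statement is the multivariate random-index central limit theorem (Anscombe's theorem, in the form due to R\'enyi). The plan is to first reduce to one dimension by the Cram\'er--Wold device, and then to run the standard Anscombe argument: replace the random summation index by a deterministic one to which the ordinary CLT applies, and control the resulting error with a maximal inequality. By Cram\'er--Wold it suffices to show that for every $v\in\R^m$ we have $v^T\frac{\sum_{i=1}^{a_n}(X_i-\mu)}{\sqrt{a_n}}\Rightarrow\ND(0,v^TWv)$. Fixing $v$ and setting $\xi_i=v^T(X_i-\mu)$, the $\xi_i$ are i.i.d.\ real random variables with $\E(\xi_i)=0$ and $\Var(\xi_i)=\sigma^2:=v^TWv$. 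If $\sigma^2=0$ then $\xi_i=0$ a.s.\ and both sides are degenerate at $0$, so I may assume $\sigma^2>0$; writing $T_k=\sum_{i=1}^k\xi_i$, the goal becomes $T_{a_n}/\sqrt{a_n}\Rightarrow\ND(0,\sigma^2)$.

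Next I would introduce the deterministic centering index $m_n=\lfloor b\,t_n\rfloor$. Since $b>0$ and $t_n\to\infty$ we have $m_n\to\infty$, so the ordinary CLT gives $T_{m_n}/\sqrt{m_n}\Rightarrow\ND(0,\sigma^2)$. Moreover $m_n/t_n\to b$ deterministically, so from $a_n/t_n\overset{p}{\to}b$ with $b>0$ I obtain $a_n/m_n\overset{p}{\to}1$, and hence $\sqrt{m_n/a_n}\overset{p}{\to}1$.

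The crux, and the step I expect to be the main obstacle, is to show $(T_{a_n}-T_{m_n})/\sqrt{m_n}\overset{p}{\to}0$; the difficulty is that the index $a_n$ is random and must be decoupled from the walk. For this I would establish the Anscombe uniform-continuity-in-probability condition via Kolmogorov's maximal inequality on the walk increments: for every $\epsilon>0$ and $\delta>0$,
\[
\PR\Bigl(\max_{0\le j\le \lfloor \delta m\rfloor}|T_{m+j}-T_m|>\epsilon\sqrt{m}\Bigr)\le \frac{\lfloor\delta m\rfloor\,\sigma^2}{\epsilon^2 m}\le\frac{\delta\sigma^2}{\epsilon^2},
\]
with the analogous bound over the backward window $m-\lfloor\delta m\rfloor\le k\le m$. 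Thus, after scaling by $\sqrt{m}$, the fluctuation of $T_k$ over an index window of relative width $\delta$ about $m$ is small uniformly in $m$ as $\delta\to0$. Given $\eta>0$, I would choose $\delta$ so small that the right-hand side is below $\eta$; since $a_n/m_n\overset{p}{\to}1$ yields $\PR(|a_n-m_n|>\delta m_n)\to0$, on the complementary event $|T_{a_n}-T_{m_n}|$ is dominated by the maximal fluctuation above, and a union bound gives $\PR(|T_{a_n}-T_{m_n}|>\epsilon\sqrt{m_n})\to0$ upon letting $n\to\infty$ and then $\delta\to0$.

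Finally I would combine these ingredients. By Slutsky's lemma, $T_{a_n}/\sqrt{m_n}=T_{m_n}/\sqrt{m_n}+(T_{a_n}-T_{m_n})/\sqrt{m_n}\Rightarrow\ND(0,\sigma^2)$, and then $T_{a_n}/\sqrt{a_n}=(T_{a_n}/\sqrt{m_n})\sqrt{m_n/a_n}\Rightarrow\ND(0,\sigma^2)$, again by Slutsky, since $\sqrt{m_n/a_n}\overset{p}{\to}1$. This establishes the one-dimensional convergence for arbitrary $v$, and the Cram\'er--Wold device then delivers the claimed multivariate statement $\frac{\sum_{i=1}^{a_n}(X_i-\mu)}{\sqrt{a_n}}\Rightarrow\ND(0,W)$.
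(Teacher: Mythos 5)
Your proof is correct and follows essentially the same route as the paper: the paper likewise reduces to $m=1$ via Cram\'er--Wold, disposes of the degenerate case, and then invokes Theorem 7.3.2 in Chung's book, whose proof is precisely the Anscombe argument (deterministic centering index, Kolmogorov's maximal inequality, Slutsky) that you have written out in full. No gaps; you have simply made explicit what the paper delegates to a citation.
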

\begin{proof} 
Using Cram\'{e}r-Wold device (see page 16 in \cite{Vaart}) it is sufficient to consider the case of $m=1$, which let us assume. 
For $W=0$ we have a.s. $X_i=\mu$, $i \in \N_+$, so that the thesis is obvious. 
The general case with $W>0$ can be easily inferred from the special case in which $\mu=0$ and $W=1$, 
which can be proved analogously as Theorem 7.3.2 in \cite{chung2001course}.  
\end{proof} 

Consider the following condition 
(which for $c \in \R_+$ follows e.g. from (\ref{nttc}) holding a.s.). 
\begin{condition}\label{condnt}
It holds $c \in \R_+$ and 
\begin{equation}
\frac{N_t}{t}\overset{p}{\to}\frac{1}{c}.
\end{equation}
\end{condition}

For $a, b \in \overline{\R}$, by 
$a\wedge b$ we denote their minimum and 
$a\vee b$ --- their maximum. 
\begin{theorem}\label{theConvIneff}
Under Condition \ref{condnt} we have
\begin{equation}\label{astoic1}
\sqrt{t}(\wt{\mu}_t -\mu)\Rightarrow \ND(0,cW).
\end{equation}
\end{theorem}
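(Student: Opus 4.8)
The plan is to reduce the continuous limit in $t$ to limits along arbitrary sequences, and on each such sequence to invoke the random-index central limit theorem of Lemma \ref{lemAnsc} together with Slutsky's lemma. First I would use that $\sqrt{t}(\wt{\mu}_t-\mu)\Rightarrow\ND(0,cW)$ as $t\to\infty$ holds if and only if $\sqrt{t_n}(\wt{\mu}_{t_n}-\mu)\Rightarrow\ND(0,cW)$ for every deterministic sequence $t_n\to\infty$ (the standard sequential characterization of a limit over a continuous parameter), so I fix such a sequence for the remainder of the argument.

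Next I would discard the degenerate events. Because $c\in\R_+$, Condition \ref{condnt} gives $N_t/t\overset{p}{\to}1/c\in\R_+$, and taking $\epsilon<1/c$ in the definition of convergence in probability shows $\PR(N_t\notin\N_+)=\PR(N_t\in\{0,\infty\})\to0$. Splitting $\sqrt{t}(\wt{\mu}_t-\mu)=\sqrt{t}(\wh{\mu}_{N_t}-\mu)\I(N_t\in\N_+)+\sqrt{t}(\mu_0-\mu)\I(N_t\notin\N_+)$, the second summand converges to $0$ in probability, since it is a fixed multiple of $\sqrt{t}\,\I(N_t\notin\N_+)$, whose probability of being nonzero tends to zero.

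Then I would replace $N_{t_n}$ by the surrogate $a_n=N_{t_n}\I(N_{t_n}\in\N_+)+\I(N_{t_n}\notin\N_+)$, which is $\N_+$-valued and coincides with $N_{t_n}$ on $\{N_{t_n}\in\N_+\}$, an event of probability tending to one; since altering a variable on a set of vanishing probability preserves convergence in probability, $a_n/t_n\overset{p}{\to}1/c$. Lemma \ref{lemAnsc} with $b=1/c$ then gives $\frac{1}{\sqrt{a_n}}\sum_{i=1}^{a_n}(X_i-\mu)\Rightarrow\ND(0,W)$, while the continuous mapping theorem yields $\sqrt{t_n/a_n}\overset{p}{\to}\sqrt{c}$. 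By Slutsky's lemma their product $\sqrt{t_n}(\wh{\mu}_{a_n}-\mu)=\sqrt{t_n/a_n}\cdot\frac{1}{\sqrt{a_n}}\sum_{i=1}^{a_n}(X_i-\mu)$ converges in distribution to $\ND(0,cW)$. On $\{N_{t_n}\in\N_+\}$ this quantity equals $\sqrt{t_n}(\wt{\mu}_{t_n}-\mu)$, so the two differ only on a set whose probability tends to zero, and a final application of Slutsky's lemma transfers the limit to $\sqrt{t_n}(\wt{\mu}_{t_n}-\mu)$, completing the argument for the fixed sequence and hence for the continuous limit.

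The step I expect to be the main obstacle is the bookkeeping around the degenerate indices $N_t\in\{0,\infty\}$: Lemma \ref{lemAnsc} is available only for strictly positive integer-valued indices, so one must pass to the surrogate $a_n$ and verify both that this substitution preserves the in-probability convergence $a_n/t_n\overset{p}{\to}1/c$ and that it leaves the asymptotic law unchanged --- the latter precisely because $a_n$ and $N_{t_n}$ agree off an event whose probability vanishes in the limit. Everything else is routine manipulation with the continuous mapping theorem and Slutsky's lemma.
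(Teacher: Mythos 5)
Your proposal is correct and follows essentially the same route as the paper's proof: the paper likewise passes to an $\N_+$-valued surrogate index ($M_t=(\I(N_t\neq\infty)N_t)\vee 1$, playing the role of your $a_n$), applies Lemma \ref{lemAnsc} to get asymptotic normality under the $\sqrt{M_t}$ normalization, and then uses Slutsky's lemma twice --- once to discard the vanishing-probability events where $N_t\notin\N_+$ and once to convert the $\sqrt{M_t}$ normalization into $\sqrt{t}$ via $\I(N_t\in\N_+)\sqrt{t/(cN_t)}\overset{p}{\to}1$. The only cosmetic difference is that you make the reduction to deterministic sequences $t_n\to\infty$ explicit, which the paper leaves implicit when invoking the sequence-indexed Lemma \ref{lemAnsc} for the continuous parameter $t$.
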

\begin{proof}
For each $t \in \R_+$, let $M_t=(\I(N_t\neq \infty)N_t) \vee 1$, 
which is an $\N_+$-valued random variable, equal to $N_t$ when $N_t \in \N_+$. 
From Condition \ref{condnt}, it holds 
\begin{equation}\label{pntinn}
\lim_{t\to\infty}\PR(M_t=N_t\in \N_+)= 1 
\end{equation}
and thus
\begin{equation}
\frac{M_t}{t}\overset{p}{\to}\frac{1}{c}. 
\end{equation}
Thus, from Lemma \ref{lemAnsc}
\begin{equation}\label{asympsig}
R_t:=\sqrt{M_t}(\wh{\mu}_{M_{t}} -\mu)\Rightarrow \ND(0,W).
\end{equation}
Let $\wt{R}_t=\I(N_t\in \N_+)R_t=\I(N_t\in \N_+)\sqrt{N_t}(\wt{\mu}_{t} -\mu)$.
From (\ref{pntinn}),
$R_t-\wt{R}_t \overset{p}{\to}0$. 
Therefore, from  (\ref{asympsig}) and Slutsky's lemma, $\wt{R}_t \Rightarrow \ND(0,W)$, and thus 
\begin{equation}\label{wtras}
\sqrt{c}\wt{R}_t \Rightarrow \ND(0,cW).  
\end{equation}
Let $G_t=\sqrt{t}(\wt{\mu}_t - \mu)$ and $\wt{G}_t=\I(N_t\in \N_+)G_t$. Then, 
$G_t-\wt{G}_t \overset{p}{\to}0$, so that to prove
(\ref{astoic1}) it is sufficient to prove that 
\begin{equation}\label{gtas}
\wt{G}_t\Rightarrow \ND(0,cW).  
\end{equation}
From (\ref{pntinn}), the continuous mapping theorem, and Slutsky's lemma, 
$S_t:=\I(N_t \in \N_+)\sqrt{\frac{t}{cN_t}}\overset{p}{\to}1$.
Thus, (\ref{gtas}) follows from (\ref{wtras}) and the fact that from Slutsky's lemma
\begin{equation}
\sqrt{c}\wt{R}_t-\wt{G}_t= \sqrt{c}\wt{R}_t(1-S_t)\overset{p}{\to}0.
\end{equation}
\end{proof}

In the below theorem and remark we extend 
the interpretations of inefficiency constants provided at the beginning of 
Section 10, Chapter 3 in \cite{asmussen2007stochastic} (see also \cite{effimprGlynn94} and Example 1 in \cite{Glynn_1992}). 

\begin{theorem}\label{thcostCLT}
If $\var<\infty$ and Condition \ref{condnt} holds, then
\begin{equation}\label{astoic}
\sqrt{t}(\wt{\alpha}_t -\alpha)\Rightarrow \ND(0,\ic). 
\end{equation}
If we further have $\var>0$ and a.s. (\ref{ntoinfty}) and (\ref{nt0}), then  
\begin{equation}\label{atvt}
\sqrt{\frac{t}{\wt{\ic}_t}}(\wt{\alpha}_t -\alpha)\Rightarrow \ND(0,1).
\end{equation}
\end{theorem}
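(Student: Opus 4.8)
The plan is to read off (\ref{astoic}) as the scalar specialization of the already-proven Theorem \ref{theConvIneff}, and then to upgrade it to the self-normalized statement (\ref{atvt}) by showing that $\wt{\ic}_t$ is a consistent estimator of $\ic$ and applying Slutsky's lemma.

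First I would invoke Theorem \ref{theConvIneff} in the case $m=1$ with $Y=Z$ and initial value $\mu_0=\alpha_0$. Since $\E_{\PQ_1}(|Z|)<\infty$ forces $\alpha$ to be finite, the hypothesis $\var<\infty$ gives $\E(Z^2)=\var+\alpha^2<\infty$, so the integrability requirement of the general setup is met, and one has $\mu=\alpha$ and $W=\var$. With this identification the sample means $\wh{\mu}_n$ coincide with $\wh{\alpha}_n$, and because both $\mu$ and $\alpha$ use $p=1$ in (\ref{lambdatdef}), the estimator $\wt{\mu}_t$ coincides with $\wt{\alpha}_t$. Under Condition \ref{condnt} Theorem \ref{theConvIneff} then yields $\sqrt{t}(\wt{\alpha}_t-\alpha)\Rightarrow\ND(0,c\var)=\ND(0,\ic)$, which is exactly (\ref{astoic}).

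For (\ref{atvt}) I would first establish $\wt{\ic}_t\overset{p}{\to}\ic$. Under the added hypotheses we have a.s. (\ref{ntoinfty}) and (\ref{nt0}), and since $c\in\R_+$ (in particular $c<\infty$) together with $\var<\infty$, the SLLN gives a.s. $\lim_{n\to\infty}\wh{\ic}_n=\ic$. Feeding these three facts into (\ref{wtlambda}) with $\lambda=\ic$ produces a.s. $\lim_{t\to\infty}\wt{\ic}_t=\ic$, hence convergence in probability. Because $c\in\R_+$ and $0<\var<\infty$, the limit $\ic=c\var$ lies in $\R_+$, so normalizing by $\sqrt{\wt{\ic}_t}$ is asymptotically well behaved.

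Finally I would combine these. Writing
\[
\sqrt{\tfrac{t}{\wt{\ic}_t}}(\wt{\alpha}_t-\alpha)=\sqrt{\tfrac{\ic}{\wt{\ic}_t}}\cdot\frac{\sqrt{t}(\wt{\alpha}_t-\alpha)}{\sqrt{\ic}},
\]
the continuous mapping theorem gives $\sqrt{\ic/\wt{\ic}_t}\overset{p}{\to}1$, while the second factor converges weakly to $\ND(0,1)$ by (\ref{astoic}) together with $\ic\in\R_+$; Slutsky's lemma then delivers (\ref{atvt}). The only point demanding care — and hence the main obstacle — is confirming the consistency of $\wt{\ic}_t$ under precisely the stated hypotheses, so that both the application of (\ref{wtlambda}) and the positivity $\ic\in\R_+$ are justified; once this is in place the rest is a routine Slutsky argument.
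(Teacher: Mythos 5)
Your proposal is correct and follows essentially the same route as the paper: (\ref{astoic}) is obtained as the $m=1$ specialization of Theorem \ref{theConvIneff} with $Y=Z$, and (\ref{atvt}) follows from the a.s. consistency of $\wt{\ic}_t$ via (\ref{wtlambda}) for $\lambda=\ic$ together with Slutsky's lemma. The only difference is that you spell out the consistency and positivity checks that the paper leaves implicit.
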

\begin{proof}
Formula (\ref{astoic}) follows immediately from Theorem \ref{theConvIneff} and (\ref{atvt}) follows 
from (\ref{astoic}), (\ref{wtlambda}) holding a.s. for $\lambda =\ic$, and Slutsky's lemma.
\end{proof}


\begin{remark}\label{remConfIC}
Let $X \sim \ND(0,1)$ and let for $\beta \in (0,1)$, $z_{\beta}$ be the $\beta$-quantile of the normal distribution, i.e. $\PR(X \leq z_{\beta})=\beta$.
Let $\gamma \in (0,1)$ and $p_{\gamma}= z_{1-\frac{\gamma}{2}}$, so that  $\PR(|X|\leq p_{\gamma})=1-\gamma$.
Assuming (\ref{atvt}), for the random interval 
$I_{\gamma,t}=(\wt{\alpha}_t- p_{\gamma}\sqrt{\frac{\wt{\ic}_t}{t}},\wt{\alpha}_t +p_{\gamma}\sqrt{\frac{\wt{\ic}_t}{t}})$ we have 
\begin{equation}\label{aAsympInt}
\lim_{t \to \infty}\PR(\alpha \in I_{\gamma,t})=\PR(|X|\leq p_{\gamma})=1-\gamma,
\end{equation}
i.e. $I_{\gamma,t}$ is an asymptotic $1-\gamma$ confidence interval for $\alpha$. 
It follows that $\wt{\ic}_t$ and $\wt{\alpha}_t$ can play the same role when constructing the asymptotic confidence intervals for 
$\alpha$ for $t \to \infty$, as 
 $\wh{\var}_n$ and $\wh{\alpha}_n$ do for $n\to \infty$ as discussed below (\ref{CLTv}). 
For $C=1$, both approaches to constructing the asymptotic confidence intervals are equivalent. 
\end{remark} 

\chapter{\label{secImp}Importance sampling} 
\section{\label{secBackDens}Background on densities} 
Consider a measurable space $\mc{S}=(D,\mc{D})$, 
let $\mu_1$ and $\mu_2$ be measures on $\mc{S}$, and let $A \in \mc{D}$. We say 
that $\mu_1$ has a density $L$ (also a called Radon-Nikodym derivative) 
with respect to $\mu_2$ on $A$, which we denote as 
$L=(\frac{d\mu_1}{d\mu_2})_A$, if $L$ is a measurable function from $\mc{S}$ to $\mc{S}(\overline{\R})$ 
such that for each $B \in\mc{D}$, $\mu_1(A\cap B) =  \int\! L\I(A\cap B)\, \mathrm{d}\mu_2$.
If $L=(\frac{d\mu_1}{d\mu_2})_A$, then for each measurable function  $f$ from $\mc{S}$ to $\mc{S}(\overline{\R})$ such that 
that $\I_Af$ is nonnegative or $\mu_1$-integrable, it holds 
\begin{equation}\label{epr1ay} 
\int\! \I(A)f\, \mathrm{d}\mu_1=\int\! \I(A)fL\, \mathrm{d}\mu_2. 
\end{equation} 
Such an $L$ 
is uniquely defined $\mu_2$ a.e. on $A$, i.e. for some $L':\mc{S} \to \mc{S}(\overline{\R})$ 
we also have $L'=(\frac{d\mu_1}{d\mu_2})_A$ only if $L'=L$, $\mu_2$ a.e. on $A$ (i.e. if $\mu_2(\{L'=L\}\cap A)=\mu_2(A)$). 
Furthermore, such an $L$ is $\mu_2$ a.e. nonnegative on $A$. 
We say that $\mu_1$ is absolutely continuous with respect to $\mu_2$ on $A$, 
which we denote as $\mu_1 \ll_{A} \mu_2$, if for each $B \in \mc{D}$, 
from $\mu_2(A\cap B)=0$ it follows that $\mu_1(A\cap B)=0$. 
We say that $\mu_1$ and $\mu_2$ are mutually absolutely continuous on $A$ if 
$\mu_1 \ll_A \mu_2$ and $\mu_2 \ll_A \mu_1$, which we also denote as $\mu_1 \sim_A \mu_2$. 
If $L=(\frac{d\mu_1}{d\mu_2})_A$ exists, then it holds $\mu_1 \ll_{A} \mu_2$. 
We say that a measure $\mu$ on $\mc{S}$ is $\sigma$-finite on $A$ if $A$ 
is a countable union of sets from $\mc{D}$ with $\mu$-finite measure. Note that if $\mu$ is a probability distribution 
then it is $\sigma$-finite on $A$. 
From the Radon-Nikodym theorem, if $\mu_1$ and $\mu_2$ are $\sigma$-finite on $A$ and $\mu_1 \ll_A \mu_2$, 
then $L= (\frac{d\mu_1}{d\mu_2})_A$ exists. 
\begin{lemma}\label{lemsimA} 
Let 
$L=(\frac{d\mu_1}{d\mu_2})_A$. 
Then, $\mu_1 \sim_A \mu_2$ only if $\mu_2(\{L=0\}\cap A)=0$, in which case
\begin{equation}\label{ln0l}
\frac{\I(L\neq0)}{L}=\left(\frac{d\mu_2}{d\mu_1}\right)_A. 
\end{equation}
\end{lemma}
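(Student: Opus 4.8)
The plan is to establish the two assertions separately: first the implication that mutual absolute continuity forces $\mu_2(\{L=0\}\cap A)=0$, and then, under that conclusion, that the displayed function is a valid version of the reciprocal Radon--Nikodym derivative. Throughout I assume the hypothesis $\mu_1\sim_A\mu_2$, which in particular gives both $\mu_1\ll_A\mu_2$ (already guaranteed by the existence of $L$) and $\mu_2\ll_A\mu_1$.

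For the first part I would test the defining property of $L=(\frac{d\mu_1}{d\mu_2})_A$ on the measurable set $B=\{L=0\}$. Since $L$ vanishes identically on $\{L=0\}$, the defining integral yields $\mu_1(\{L=0\}\cap A)=\int \I(\{L=0\}\cap A)\,L\,\mathrm{d}\mu_2=0$. Invoking $\mu_2\ll_A\mu_1$ with this same $B$, the $\mu_1$-null set $\{L=0\}\cap A$ is then also $\mu_2$-null, which is exactly $\mu_2(\{L=0\}\cap A)=0$.

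For the second part I would verify directly that $L':=\I(L\neq0)/L$ satisfies the defining identity of $(\frac{d\mu_2}{d\mu_1})_A$, namely $\mu_2(A\cap B)=\int \I(A\cap B)\,L'\,\mathrm{d}\mu_1$ for every $B\in\mc{D}$. Applying the change-of-variables identity (\ref{epr1ay}) with integrand $f=\I(B)L'$ transfers the integral from $\mu_1$ to $\mu_2$ at the cost of an extra factor $L$; since $L'\cdot L=\I(L\neq0)$ pointwise (using the convention that $\I(L\neq0)/L$ vanishes where $L=0$, so the product is $1$ on $\{L\neq0\}$ and $0$ on $\{L=0\}$), the right-hand side collapses to $\int \I(A\cap B)\I(L\neq0)\,\mathrm{d}\mu_2=\mu_2(A\cap B\cap\{L\neq0\})$. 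Finally I would use the zero-set fact from the first part to discard the intersection with $\{L\neq0\}$, recovering $\mu_2(A\cap B)$ as required.

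The delicate point, and the step I would treat most carefully, is checking the hypotheses for invoking (\ref{epr1ay}): that formula requires $\I_A f$ to be nonnegative or $\mu_1$-integrable, whereas $L$ is only guaranteed $\mu_2$-a.e.\ nonnegative on $A$, so $L'$ need not be nonnegative everywhere. The clean remedy is to replace $L$ by $L\vee 0$: the two agree $\mu_2$-a.e.\ on $A$, so $L\vee0$ is also a version of $(\frac{d\mu_1}{d\mu_2})_A$, and because $\mu_1\ll_A\mu_2$ the exceptional $\mu_2$-null subset of $A$ is $\mu_1$-null as well, leaving the $\mu_1$-integral on the left unchanged. After this reduction $L'\geq0$ genuinely holds and is finite $\mu_2$-a.e.\ on $A$ (thanks to $\mu_2(\{L=0\}\cap A)=0$), so (\ref{epr1ay}) applies and the computation above goes through verbatim.
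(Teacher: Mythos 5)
Your proof is correct and follows essentially the same route as the paper's: the forward implication by testing the defining identity of $L$ on the set $\{L=0\}$ (which is automatically $\mu_1$-null) and invoking $\mu_2\ll_A\mu_1$, and formula (\ref{ln0l}) by the same change-of-variables computation via (\ref{epr1ay}) combined with $L\cdot\I(L\neq0)/L=\I(L\neq0)$ and the null-set fact. Your extra care with the nonnegativity hypothesis of (\ref{epr1ay}) (replacing $L$ by $L\vee 0$) is a refinement the paper glosses over, and it is worth noting that your second computation in fact uses only $\mu_2(\{L=0\}\cap A)=0$, so it simultaneously yields $\mu_2\ll_A\mu_1$ and hence the converse implication, exactly as in the paper's proof.
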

 \begin{proof}
 If $\mu_2(\{L=0\}\cap A)=0$, then for $B \in \mc{D}$, from (\ref{epr1ay}),
 \begin{equation}
 \int\! \I(A\cap B)\, \mathrm{d}\mu_2= \int\!L\frac{\I(L\neq0)}{L}\I(A\cap B)\, \mathrm{d}\mu_2= \int\!\frac{\I(L\neq0)}{L}\I(A\cap B)\, \mathrm{d}\mu_1
 \end{equation}
 so that we have (\ref{ln0l}) and  $\mu_1 \sim_A \mu_2$. 
 On the other hand, since $\mu_1(\{L=0\}\cap A)=\int\!  L\I(\{L=0\}\cap A)\, \mathrm{d}\mu_2=0$, 
 if $\mu_2(\{L=0\}\cap A)>0$ then we cannot have $\mu_2 \ll_A \mu_1$.  
 \end{proof}
For $A =D$ we omit $A$ in the above notations, e.g. we write $\mu_1 \ll \mu_2$, $\mu_1\sim\mu_2$, 
and $L=\frac{d\mu_1}{d\mu_2}$. 
We say that $q$ is a random condition on $\mc{S}$ if 
$\{x\in D: q(x)\} \in \mc{D}$. Often the event $\{x \in D:q(x)\}$ will be denoted simply as $\{q\}$ and we shall frequently write 
$q$ in the place of $\{q \}$ in various notations. 

\section{\label{secIS}IS and zero- and optimal-variance IS distributions} 
If for some probability $\PQ_2$ on $\mc{S}_1$, $\PQ_1 \ll_{Z\neq 0} \PQ_2$, then for $L=(\frac{d\PQ_1}{d\PQ_2})_{Z\neq 0}$ we have 
$\alpha=\E_{\PQ_2}(ZL)$.
Importance sampling (IS) relies on estimating $\alpha$ by using in an MC method independent replicates of such an IS estimator
$ZL$ under $\PQ_2$.
The variance of the IS estimator fulfills
\begin{equation}\label{varzl}
\Var_{\PQ_2}(ZL)=\E_{\PQ_2}((ZL)^2)-\alpha^2= \E_{\PQ_1}(Z^2L)-\alpha^2.
\end{equation}
\begin{condition}\label{condpqs}
It holds $\PQ_1 \ll_{Z\neq 0} \PQ_2$ and for some $L=(\frac{d\PQ_1}{d\PQ_2})_{Z\neq 0}$ we have  $\Var_{\PQ_2}(ZL)=0$
or equivalently $\PQ_2$ a.s. $ZL=\alpha$. 
\end{condition}

\begin{theorem}
Condition \ref{condpqs} holds only if it holds with 'for some' replaced by 'for each'. 
\end{theorem}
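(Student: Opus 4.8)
The plan is to work with the a.s.\ reformulation in Condition~\ref{condpqs}, namely that $\PQ_2$ a.s.\ $ZL=\alpha$, and to reduce the claim to a single observation: for \emph{any} two densities $L,L'=(\frac{d\PQ_1}{d\PQ_2})_{Z\neq 0}$ the products $ZL$ and $ZL'$ agree $\PQ_2$ a.s. Once this is established, the theorem is immediate. The absolute-continuity part $\PQ_1\ll_{Z\neq 0}\PQ_2$ of the condition does not mention the density at all, so it automatically holds 'for each' as soon as it holds 'for some'; and if $ZL=\alpha$ $\PQ_2$ a.s.\ for one admissible $L$, then $ZL'=ZL=\alpha$ $\PQ_2$ a.s.\ for every admissible $L'$, so the variance-zero clause likewise passes from 'for some' to 'for each'.

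To prove $ZL=ZL'$ $\PQ_2$ a.s.\ I would split $\Omega_1$ into $\{Z\neq 0\}$ and $\{Z=0\}$. On $\{Z\neq 0\}$ I invoke the uniqueness of the Radon--Nikodym derivative recalled after (\ref{epr1ay}): since both $L$ and $L'$ are densities of $\PQ_1$ with respect to $\PQ_2$ on $A=\{Z\neq 0\}$, we have $L=L'$ $\PQ_2$ a.e.\ on $A$, whence $ZL=ZL'$ $\PQ_2$ a.e.\ on $\{Z\neq 0\}$. On $\{Z=0\}$ the factor $Z$ vanishes, so with the usual convention $0\cdot(\pm\infty)=0$ we get $ZL=0=ZL'$ identically there. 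Combining the two pieces yields $ZL=ZL'$ outside a $\PQ_2$-null set, as needed, and then $ZL'=\alpha$ $\PQ_2$ a.s.\ follows from the hypothesis $ZL=\alpha$ $\PQ_2$ a.s.

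For consistency of the two formulations of the condition, I would also record that $\E_{\PQ_2}(ZL')=\alpha$ holds for \emph{every} density $L'$ and not just for the distinguished one. This follows from (\ref{epr1ay}) applied with $A=\{Z\neq 0\}$ and $f=Z$, together with $Z\I(Z=0)=0$, exactly as in the derivation of $\alpha=\E_{\PQ_2}(ZL)$ preceding Condition~\ref{condpqs}. Hence $\Var_{\PQ_2}(ZL')=0$ is genuinely equivalent to $ZL'=\alpha$ $\PQ_2$ a.s.\ for each $L'$, and the paragraph above transfers the latter from one density to all of them.

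The only real subtlety, and the step I would be most careful about, is the behaviour on $\{Z=0\}$: a density $L$ is pinned down only on $A=\{Z\neq 0\}$ and may be infinite on a $\PQ_2$-positive part of $\{Z=0\}$, so one must rely on the convention $0\cdot\infty=0$ to conclude $ZL=0$ there rather than on any finiteness of $L$. Everything else is a direct application of the $\PQ_2$-a.e.\ uniqueness of the density on $A$, so no genuine obstacle arises beyond this bookkeeping.
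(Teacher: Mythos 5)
Your proposal is correct and follows essentially the same route as the paper's proof: uniqueness of the density $\PQ_2$-a.e.\ on $\{Z\neq 0\}$ gives $ZL=ZL'$ there, the factor $Z$ kills everything on $\{Z=0\}$, and the a.s.\ identity $ZL=\alpha$ transfers to every admissible $L'$. Your extra remarks on the $0\cdot\infty$ convention and on $\E_{\PQ_2}(ZL')=\alpha$ are fine but not needed beyond what the paper records.
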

\begin{proof}
Let $L$ be as in Condition \ref{condpqs} and $L'=(\frac{d\PQ_1}{d\PQ_2})_{Z\neq 0}$. Then,
$L=L'$, $\PQ_2$ a.s. on $Z\neq 0$ and $0=ZL=ZL'$ on $Z=0$. Thus,  
from $\PQ_2$ a.s. $ZL=\alpha$ it also holds $\PQ_2$ a.s. $ZL'=\alpha$.  
\end{proof}

\begin{condition}\label{condn0}
It holds $\PQ_1(Z \neq 0)>0$. 
\end{condition}
\begin{condition}\label{condpqe}
It holds $\PQ_1(Z\neq 0)>0$ and either $\PQ_1$ a.s. $Z\geq 0$ or $\PQ_1$ a.s. $Z \leq 0$. 
\end{condition}
\begin{theorem}\label{thpqs}
If Condition \ref{condpqe} holds, then for a probability $\PQ^*$ given by
\begin{equation}\label{pqprzez}
\frac{d\PQ^*}{d\PQ_1}= \frac{Z}{\alpha},
\end{equation}
Condition \ref{condpqs} holds for $\PQ_2=\PQ^*$. Furthermore, $\PQ^*(Z\neq 0)=1$ and
$\PQ^* \sim_{Z\neq 0} \PQ_1$ with
\begin{equation}\label{lsdef}
L^*:=\I(Z\neq 0)\frac{\alpha}{Z}=(\frac{d\PQ_1}{d\PQ^*})_{Z\neq 0}.
\end{equation}
\end{theorem}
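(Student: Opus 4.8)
The plan is to proceed in three stages: first confirm that $\PQ^*$ is a well-defined probability, then describe its relationship to $\PQ_1$ on $\{Z\neq 0\}$ via Lemma \ref{lemsimA}, and finally read off Condition \ref{condpqs}.

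First I would observe that Condition \ref{condpqe} forces $\alpha\neq 0$ together with $\frac{Z}{\alpha}\geq 0$, $\PQ_1$ a.s., so that the right-hand side of (\ref{pqprzez}) is a genuine $\PQ_1$-density. Indeed, suppose first that $Z\geq 0$ $\PQ_1$ a.s.; then $\alpha=\E_{\PQ_1}(Z)\geq 0$, and since $\PQ_1(Z>0)=\PQ_1(Z\neq 0)>0$ we obtain $\alpha>0$, whence $\frac{Z}{\alpha}\geq 0$. The symmetric case $Z\leq 0$ a.s. gives $\alpha<0$ and again $\frac{Z}{\alpha}\geq 0$. In both cases $\E_{\PQ_1}(\frac{Z}{\alpha})=\frac{\alpha}{\alpha}=1$, so (\ref{pqprzez}) defines a probability $\PQ^*$ with $\frac{d\PQ^*}{d\PQ_1}=\frac{Z}{\alpha}$ on all of $\Omega_1$, and hence also on the subset $\{Z\neq 0\}$.

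Next I would establish $\PQ^*(Z\neq 0)=1$ by the direct computation
\[
\PQ^*(Z=0)=\int\! \I(Z=0)\frac{Z}{\alpha}\,\mathrm{d}\PQ_1=0,
\]
since $\I(Z=0)Z=0$ identically. To obtain the density on $\{Z\neq 0\}$ I would apply Lemma \ref{lemsimA} with $\mu_1=\PQ^*$, $\mu_2=\PQ_1$, $A=\{Z\neq 0\}$, and $L=\frac{Z}{\alpha}$. On $A$ we have $L=\frac{Z}{\alpha}\neq 0$, so the hypothesis $\PQ_1(\{L=0\}\cap A)=0$ of the lemma holds trivially; the lemma then yields $\PQ^*\sim_{Z\neq 0}\PQ_1$ and, through (\ref{ln0l}), $\frac{\I(L\neq 0)}{L}=\I(Z\neq 0)\frac{\alpha}{Z}=(\frac{d\PQ_1}{d\PQ^*})_{Z\neq 0}$, which is exactly the identity (\ref{lsdef}). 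Finally, mutual absolute continuity gives $\PQ_1\ll_{Z\neq 0}\PQ^*$, and $L^*$ is the required Radon-Nikodym density, so that computing $ZL^*=Z\,\I(Z\neq 0)\frac{\alpha}{Z}=\alpha\,\I(Z\neq 0)$ and invoking $\PQ^*(Z\neq 0)=1$ gives $ZL^*=\alpha$, $\PQ^*$ a.s., hence $\Var_{\PQ^*}(ZL^*)=0$ and Condition \ref{condpqs} for $\PQ_2=\PQ^*$.

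The computations are short, and the only point requiring care is the bookkeeping around $\{Z=0\}$: the expressions $\frac{Z}{\alpha}$ and $\frac{\alpha}{Z}$ are mutually reciprocal only off $\{Z=0\}$, which is precisely why Lemma \ref{lemsimA} and the indicator $\I(Z\neq 0)$ in (\ref{lsdef}) are needed rather than a naive inversion. The sign hypothesis in Condition \ref{condpqe} enters only to guarantee that $\alpha\neq 0$ and that the density in (\ref{pqprzez}) is nonnegative, so I expect no genuine obstacle beyond this measure-theoretic care.
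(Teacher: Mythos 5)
Your proof is correct and follows essentially the same route as the paper's: well-definedness of $\PQ^*$ from Condition \ref{condpqe}, the computation $\PQ^*(Z\neq 0)=\E_{\PQ_1}(\frac{Z}{\alpha})=1$, an application of Lemma \ref{lemsimA} to obtain (\ref{lsdef}), and the observation that $ZL^*=\alpha$, $\PQ^*$ a.s. The only difference is that you spell out the details (the sign analysis giving $\alpha\neq 0$ and the verification of the hypothesis of Lemma \ref{lemsimA}) that the paper leaves implicit.
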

\begin{proof}
Condition \ref{condpqe} implies that $\PQ^*$ is well-defined. Furthermore,  
$\PQ^*(Z\neq 0)=\E_{\PQ_1}(\frac{Z}{\alpha})=1$ and 
from Lemma \ref{lemsimA} we have (\ref{lsdef}). 
In particular, $ZL^*=\alpha$, $\PQ^*$ a.s., that is Condition \ref{condpqs} holds for $\PQ_2=\PQ^*$. 
\end{proof}

\begin{lemma}\label{condpq2}
Assuming Condition \ref{condn0}, if there exists a probability $\PQ_2$ fulfilling Condition \ref{condpqs}, then
Condition \ref{condpqe} holds and such $\PQ_2$ is equal to the probability $\PQ^*$ as in Theorem \ref{thpqs}. 
\end{lemma}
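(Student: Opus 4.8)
The plan is to first pin down the value of $\alpha$ and the $\PQ_2$-support of $Z$, then to extract the sign of $Z$, and finally to identify $\PQ_2$ with $\PQ^*$ via the reverse density formula of Lemma \ref{lemsimA}. Throughout, let $L=(\frac{d\PQ_1}{d\PQ_2})_{Z\neq 0}$ be as in Condition \ref{condpqs}, so that $\PQ_2$ a.s. $ZL=\alpha$.

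First I would show $\alpha\neq 0$. Suppose instead $\alpha=0$, so $\PQ_2$ a.s. $ZL=0$; since $Z\neq 0$ on $\{Z\neq 0\}$, this forces $L=0$ $\PQ_2$-a.e. on $\{Z\neq 0\}$. Taking $f=1$ in the transformation formula (\ref{epr1ay}) then gives $\PQ_1(Z\neq 0)=\int\I(Z\neq 0)L\,\mathrm{d}\PQ_2=0$, contradicting Condition \ref{condn0}. With $\alpha\neq 0$ in hand, the identity $ZL=\alpha$ $\PQ_2$-a.s. is incompatible with $Z=0$ (where $ZL=0$), so $\PQ_2(Z=0)=0$, i.e. $\PQ_2(Z\neq 0)=1$, and $L=\alpha/Z$ holds $\PQ_2$-a.s. on $\{Z\neq 0\}$. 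Since a density on $\{Z\neq 0\}$ is determined only up to $\PQ_2$-a.e. equality there, and its values on $\{Z=0\}$ are irrelevant, I may replace $L$ by the version $L=\I(Z\neq 0)\frac{\alpha}{Z}$.

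Next I would establish Condition \ref{condpqe}. Because such a density $L$ is $\PQ_2$-a.e. nonnegative on $\{Z\neq 0\}$ and equals $\alpha/Z$ there, we get $\alpha/Z\geq 0$ $\PQ_2$-a.s. on $\{Z\neq 0\}$, i.e. $Z$ carries $\PQ_2$-a.s. the sign of $\alpha$. Assuming without loss of generality $\alpha>0$, this means $\PQ_2(Z<0)=0$; since $\{Z<0\}\subset\{Z\neq 0\}$, the one-sided absolute continuity $\PQ_1\ll_{Z\neq 0}\PQ_2$ transfers this to $\PQ_1(Z<0)=0$, so $\PQ_1$ a.s. $Z\geq 0$ (the case $\alpha<0$ yielding $\PQ_1$ a.s. $Z\leq 0$). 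Together with Condition \ref{condn0} this is exactly Condition \ref{condpqe}, and Theorem \ref{thpqs} then makes $\PQ^*$ well-defined with $\frac{d\PQ^*}{d\PQ_1}=\frac{Z}{\alpha}$.

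Finally I would show $\PQ_2=\PQ^*$. With the chosen version $L=\I(Z\neq 0)\frac{\alpha}{Z}$ we have $\{L=0\}\cap\{Z\neq 0\}=\emptyset$, so Lemma \ref{lemsimA} applies and yields $(\frac{d\PQ_2}{d\PQ_1})_{Z\neq 0}=\frac{\I(L\neq 0)}{L}=\I(Z\neq 0)\frac{Z}{\alpha}$. For arbitrary $B\in\mc{F}_1$, using $\PQ_2(Z\neq 0)=1$ and this reverse density gives $\PQ_2(B)=\int\I(Z\neq 0)\frac{Z}{\alpha}\I_B\,\mathrm{d}\PQ_1$, which equals $\int\frac{Z}{\alpha}\I_B\,\mathrm{d}\PQ_1=\PQ^*(B)$ since the integrand vanishes on $\{Z=0\}$. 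Hence $\PQ_2=\PQ^*$. The main obstacle I anticipate is the measure-theoretic bookkeeping in the first three steps: the densities live only on $\{Z\neq 0\}$ and are fixed only up to $\PQ_2$-null sets, so the delicate point is to transfer the a.s. sign statement from $\PQ_2$ to $\PQ_1$ correctly using only the one-sided absolute continuity on $\{Z\neq 0\}$, rather than assuming full equivalence of the two measures before it has been established.
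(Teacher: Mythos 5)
Your proof is correct and follows essentially the same route as the paper's: both hinge on showing $\alpha\neq 0$, identifying $\I(Z\neq 0)\frac{\alpha}{Z}$ as a version of $(\frac{d\PQ_1}{d\PQ_2})_{Z\neq 0}$, and applying Lemma \ref{lemsimA} to obtain the reverse density $\frac{Z}{\alpha}$ and hence $\PQ_2=\PQ^*$. The only differences are cosmetic reorderings (you get $\alpha\neq 0$ by contradiction and $\PQ_2(Z\neq 0)=1$ directly from $ZL=\alpha\neq 0$, and you read the sign of $Z$ off the forward density plus $\PQ_1\ll_{Z\neq 0}\PQ_2$ rather than off the reverse density), all of which are sound.
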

\begin{proof}
Let conditions \ref{condpqs} and \ref{condn0} hold. Then, $\PQ_2$ a.s. 
\begin{equation}\label{azl}
\I(Z\neq 0)\frac{\alpha}{Z}= \I(Z\neq 0)L=(\frac{d\PQ_1}{d\PQ_2})_{Z\neq0}.
\end{equation}
Thus, from Condition \ref{condn0},
$\E_{\PQ_2}\left(\I(Z\neq 0)\frac{\alpha}{Z}\right)= \PQ_1(Z \neq 0)>0$,
which implies that $\alpha \neq 0$. 
From (\ref{azl}) and Lemma \ref{lemsimA} we have $\PQ_1 \sim_{Z\neq 0} \PQ_2$ and 
$\frac{Z}{\alpha} = (\frac{d\PQ_2}{d\PQ_1})_{Z\neq0}$.
Thus, $\PQ_2(Z \neq 0)=\E_{\PQ_1}(\I(Z \neq 0)\frac{Z}{\alpha})=1$ and
$\frac{Z}{\alpha} = \frac{d\PQ_2}{d\PQ_1}$. 
In particular,  $\PQ_1$ a.s. $Z\sgn(\alpha)\geq 0$ and thus Condition \ref{condpqe} holds and $\PQ_2=\PQ^*$. 
\end{proof}

\begin{theorem}\label{thvaruniq}
If Condition \ref{condpqe} holds, then the probability $\PQ^*$ as in Theorem \ref{thpqs} is the unique probability 
$\PQ_2$ for which Condition \ref{condpqs} holds. 
\end{theorem}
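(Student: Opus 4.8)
The plan is to obtain this uniqueness statement almost immediately by combining the two preceding results, Theorem \ref{thpqs} and Lemma \ref{condpq2}, so the real work is just to check that their hypotheses are available here. I would separate the claim into its existence part (that $\PQ^*$ itself satisfies Condition \ref{condpqs}) and its uniqueness part (that any probability satisfying Condition \ref{condpqs} must coincide with $\PQ^*$).

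For the existence part, I would simply invoke Theorem \ref{thpqs}: assuming Condition \ref{condpqe}, that theorem already asserts that the probability $\PQ^*$ defined by (\ref{pqprzez}) satisfies Condition \ref{condpqs} with $\PQ_2=\PQ^*$. So there is at least one probability fulfilling Condition \ref{condpqs}, namely $\PQ^*$, and nothing new needs to be proved here.

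For the uniqueness part, the key observation I would make is that Condition \ref{condpqe} includes the requirement $\PQ_1(Z\neq 0)>0$, which is exactly Condition \ref{condn0}; hence Condition \ref{condn0} holds under our hypothesis. Then, given any probability $\PQ_2$ fulfilling Condition \ref{condpqs}, I would apply Lemma \ref{condpq2}, whose hypotheses (Condition \ref{condn0} together with the existence of a probability fulfilling Condition \ref{condpqs}) are now met. That lemma directly yields $\PQ_2=\PQ^*$, i.e. every such probability equals the one from Theorem \ref{thpqs}.

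There is no genuine obstacle in this argument, since the substantive content has already been carried by Theorem \ref{thpqs} and Lemma \ref{condpq2}; the only point requiring a moment's care is noticing that Condition \ref{condpqe} supplies Condition \ref{condn0} for free, which is precisely what licenses the use of Lemma \ref{condpq2}. Combining the existence conclusion from Theorem \ref{thpqs} with the identification $\PQ_2=\PQ^*$ from Lemma \ref{condpq2} then gives that $\PQ^*$ is the unique probability $\PQ_2$ for which Condition \ref{condpqs} holds.
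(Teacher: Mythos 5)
Your proposal is correct and follows essentially the same route as the paper's own proof: the paper likewise notes that Condition \ref{condpqe} implies Condition \ref{condn0}, uses Theorem \ref{thpqs} for the existence of a probability fulfilling Condition \ref{condpqs}, and then applies Lemma \ref{condpq2} to conclude $\PQ_2=\PQ^*$. No issues.
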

\begin{proof}
Since Condition \ref{condpqe} implies Condition \ref{condn0} and Theorem \ref{thpqs}
implies the existence of $\PQ_2$ fulfilling Condition \ref{condpqs}, from Lemma \ref{condpq2}, $\PQ_2=\PQ^*$. 
\end{proof}
We shall call the probability $\PQ^*$ as in Theorem \ref{thpqs} the zero-variance IS distribution. 
Assuming that $L=(\frac{d\PQ_1}{d\PQ_2})_{Z\neq 0}$, from (\ref{varzl}),
\begin{equation}\label{varazl}
\Var_{\PQ_2}(|ZL|) =  \E_{\PQ_1}(Z^2L)- (\E_{\PQ_1}(|Z|))^2,
\end{equation}
and thus
\begin{equation}\label{varpqineq}
\begin{split}
\Var_{\PQ_2}(ZL)=\Var_{\PQ_2}(|ZL|)+(\E_{\PQ_1}(|Z|))^2-\alpha^2
\geq (\E_{\PQ_1}(|Z|))^2-\alpha^2,\\
\end{split}
\end{equation}
with equality holding only if Condition \ref{condpqs} holds for  $|Z|$ (i.e. for $Z$ replaced by $|Z|$ 
and in particular for $\alpha$ replaced by $\E_{\PQ_1}(|Z|)$). 
Let Condition  \ref{condn0} hold. Then, Condition \ref{condpqe} holds for $|Z|$ and from Theorem \ref{thvaruniq},
$\PQ^*$ as in Theorem \ref{thpqs} but for $Z$ replaced by $|Z|$, i.e. such that 
\begin{equation}\label{abszpq1}
\frac{d\PQ^*}{d\PQ_1}=\frac{|Z|}{\E_{\PQ_1}(|Z|)}, 
\end{equation}
is the unique probability $\PQ_2$ for which Condition \ref{condpqs} holds for $|Z|$.
The fact that Condition \ref{condpqs} holds for $|Z|$ for such a $\PQ^*$ is well-known, see e.g. Theorem 1.2 in Chapter V in \cite{asmussen2007stochastic},
but the uniqueness result is to our knowledge new. Furthermore, we have
\begin{equation}
L^*:=\I(Z\neq 0)\frac{\E_{\PQ_1}(|Z|)}{|Z|}=(\frac{d\PQ_1}{d\PQ^*})_{Z\neq 0}. 
\end{equation}
Note that from Condition \ref{condpqs} holding for $|Z|$ and (\ref{varazl}), $\PQ^*$ a.s. (or equivalently $\PQ_1$ a.s. on $Z\neq 0$) 
\begin{equation}\label{zlsmsq}
|Z|L^*= \E_{\PQ_1}(|Z|)=\sqrt{\E_{\PQ_1}(Z^2L^*)}. 
\end{equation}
We call such a $\PQ^*$ the optimal-variance IS distribution. Under
Condition \ref{condpqe} the optimal-variance IS distribution is also the zero-variance one. 
In some places in the literature our optimal-variance IS distribution is called simply the optimal IS distribution
(see e.g. page 127 in \cite{asmussen2007stochastic}). However, since as argued in Chapter \ref{secIneff}
it may be more optimal to minimize inefficiency constant than variance and the optimal-variance IS distribution does not need to lead
to the lowest inefficiency constant achievable via IS, calling it optimal may be misleading. 

\section{\label{secISIneff}Mean cost and inefficiency constant in IS}
Let $L=(\frac{d\PQ_1}{d\PQ_2})_{Z\neq 0}$ and 
let $C$ be a nonnegative (theoretical) cost variable on $\mc{S}_1$ for computing replicates of $ZL$ under $\PQ_2$. 
We shall consider $C$ to be the same for different $\PQ_2$ under consideration. 
The mean cost under $\PQ_2$ is $\E_{\PQ_2}(C)$ and such a (theoretical) inefficiency constant is 
\begin{equation} 
\Var_{\PQ_2}(ZL)\E_{\PQ_2}(C)
\end{equation} 
(assuming that it is well-defined). 

Note that if the zero-variance IS distribution  $\PQ^*$ exists and the mean cost $\E_{\PQ^*}(C)$ is finite, 
then the inefficiency constant under $\PQ^*$ is zero. 

The below theorem provides an intuition why in our numerical experiments in Chapter \ref{secNumExp}, for some $a\in [0,\infty)$ and $s \in \R_+$,
for a nonincreasing function $f(x)=\I(x<s)+a$ and a strictly decreasing one $f(x)=\exp(-sx)$, and for $Z=f(C)$, 
we observed mean cost reduction after changing the initial distribution to a one in a sense closer to the respective zero-variance IS distribution $\PQ^*$. 
\begin{theorem}\label{thDecrCost}
Let $f:\mc{S}(\R) \to \mc{S}([0,\infty))$, 
$Z=f(C)$,  $\E_{\PQ_1}(Z)\in \R_+$, and $\E_{\PQ_1}(C)<\infty$. 
Let $\PQ^*$ be the zero-variance IS distribution. 
\begin{enumerate}
\item If $f$ is nonincreasing, then
\begin{equation}\label{ineqpqc}
\E_{\PQ^*}(C)\leq \E_{\PQ_1}(C),
\end{equation}
and if further for some $0\leq x_1<x_2<\infty$ we have  $f(x_1)>f(x_2)$, $\PQ_1(C\in[0,x_1])>0$, and $\PQ_1(C\in[x_2,\infty))>0$
(which is the case e.g. if $f$ is strictly decreasing and $C$ is not $\PQ_1$ a.s. constant),
then the inequality in (\ref{ineqpqc}) is sharp.
\item If $f$ is nondecreasing, then
\begin{equation}\label{ineqpqc2}
\E_{\PQ^*}(C)\geq \E_{\PQ_1}(C),
\end{equation}
and if further for some $0\leq x_1<x_2<\infty$ we have  $f(x_1)<f(x_2)$, $\PQ_1(C\in[0,x_1])>0$, and $\PQ_1(C\in[x_2,\infty))>0$, then 
the inequality in (\ref{ineqpqc2}) is sharp.
\end{enumerate} 
\end{theorem}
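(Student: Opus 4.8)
The plan is to translate the comparison of mean costs into a statement about the sign of a covariance under $\PQ_1$ and then invoke a classical correlation inequality for monotone functions of a single random variable. Since $f\geq 0$ and $\alpha=\E_{\PQ_1}(Z)\in\R_+$, we have $\PQ_1(Z\neq0)>0$ and $Z\geq0$, so Condition \ref{condpqe} holds and Theorem \ref{thpqs} applies, giving $\frac{d\PQ^*}{d\PQ_1}=\frac{Z}{\alpha}=\frac{f(C)}{\alpha}$. Applying the change-of-measure identity (\ref{epr1ay}) to the nonnegative integrand $C$ then yields
\begin{equation}
\E_{\PQ^*}(C)=\frac{1}{\alpha}\E_{\PQ_1}(Cf(C))\in[0,\infty],
\end{equation}
which is well defined because $Cf(C)\geq0$. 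Recalling $\alpha=\E_{\PQ_1}(f(C))$, the inequalities (\ref{ineqpqc}) and (\ref{ineqpqc2}) are respectively equivalent to $\E_{\PQ_1}(Cf(C))\leq\E_{\PQ_1}(C)\E_{\PQ_1}(f(C))$ and the reverse, i.e. to $\Cov_{\PQ_1}(C,f(C))\leq0$ and $\geq0$.

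The core step is this correlation inequality, which I would prove by a two-independent-copies coupling. Let $(C,C')$ be i.i.d. with the law of $C$ under $\PQ_1$, and for $n\in\N_+$ set $g_n(x)=x\wedge n$, a bounded nondecreasing function. For the nonincreasing $f$ of part 1, the product $(g_n(C)-g_n(C'))(f(C)-f(C'))$ is pointwise nonpositive, as $g_n$ and $f$ vary oppositely; taking expectations over the product measure and expanding (all terms finite since $g_n$ is bounded and $f(C)$ is integrable, $\E_{\PQ_1}(f(C))=\alpha<\infty$) gives $\E_{\PQ_1}((C\wedge n)f(C))\leq\E_{\PQ_1}(C\wedge n)\,\alpha\leq\E_{\PQ_1}(C)\,\alpha$. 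Letting $n\to\infty$, the left-hand side converges to $\E_{\PQ_1}(Cf(C))$ by monotone convergence, yielding (\ref{ineqpqc}) and, as a by-product, the finiteness of $\E_{\PQ^*}(C)$. Part 2 is identical with the inequalities reversed, using that for nondecreasing $f$ the coupled product is nonnegative; here monotone convergence delivers $\E_{\PQ_1}(Cf(C))\geq\E_{\PQ_1}(C)\,\alpha$ even when the left-hand side equals $\infty$.

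For the sharpness claims I would quantify the coupling. Fix $n>x_2$ and consider the product-measure event $A=\{C\in[0,x_1]\}\cap\{C'\in[x_2,\infty)\}$, of probability $\PQ_1(C\in[0,x_1])\,\PQ_1(C\in[x_2,\infty))>0$ by hypothesis. On $A$ one has $g_n(C)-g_n(C')\leq x_1-x_2<0$ while, by monotonicity of $f$ and $f(x_1)>f(x_2)$, $f(C)-f(C')\geq f(x_1)-f(x_2)>0$, so the coupled integrand is at most the strictly negative constant $(x_1-x_2)(f(x_1)-f(x_2))$ on $A$ and at most $0$ elsewhere. This forces $\E_{\PQ_1}((C\wedge n)f(C))\leq\E_{\PQ_1}(C\wedge n)\,\alpha-\delta$ for a fixed $\delta>0$ independent of $n>x_2$, and passing to the limit gives $\E_{\PQ_1}(Cf(C))\leq\E_{\PQ_1}(C)\,\alpha-\delta$, hence the strict inequality in (\ref{ineqpqc}). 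Part 2 is symmetric with $f(x_1)<f(x_2)$, the event now contributing a strictly positive amount (or, in the degenerate case $\E_{\PQ^*}(C)=\infty$, the strict inequality (\ref{ineqpqc2}) holds trivially since $\E_{\PQ_1}(C)<\infty$).

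The main obstacle is not the correlation inequality itself but the careful bookkeeping of infinities: $Cf(C)$ need not be a priori integrable, so I would lean throughout on the nonnegativity of $C$ and $f(C)$ together with the truncation $C\wedge n$ and monotone convergence, which make every application of the coupling identity rigorous and simultaneously produce the finiteness of $\E_{\PQ^*}(C)$ in part 1 rather than requiring it as a hypothesis.
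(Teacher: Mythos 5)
Your proposal is correct and follows essentially the same route as the paper: both express $\E_{\PQ^*}(C)$ as $\E_{\PQ_1}(Cf(C))/\E_{\PQ_1}(f(C))$ and reduce the claim to the sign of $\E_{\PQ_1}((f(C_1)-f(C_2))(C_1-C_2))$ for two independent copies of $C$. The only difference is that you add a truncation $C\wedge n$ with monotone convergence to handle the possible non-integrability of $Cf(C)$ in the nondecreasing case and you quantify the strictness via an explicit positive-probability event, points the paper leaves implicit.
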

\begin{proof}
From (\ref{pqprzez}) we have
\begin{equation}\label{pqsc}
\E_{\PQ^*}(C) =\frac{\E_{\PQ_1}(f(C)C)}{\E_{\PQ_1}(f(C))}.
\end{equation}
For $C_1$ and $C_2$ being independent replicates of $C$ under $\PQ_1$, we have 
\begin{equation}\label{pq1fcc}
\E_{\PQ_1}(f(C)C)-\E_{\PQ_1}(C)\E_{\PQ_1}(f(C))=\frac{1}{2}\E_{\PQ_1}((f(C_1)-f(C_2))(C_1-C_2)),
\end{equation}
which is nonpositive if $f$ is nonincreasing and negative under the additional assumptions of point one, 
or nonnegative if $f$ is nondecreasing 
and positive under the additional assumptions of point two. From this and (\ref{pqsc}), the thesis easily follows. 
\end{proof}

\section{\label{secFamily}Parametric IS}
For some nonempty set $A$, let us consider a family 
$\PQ(b)$, $b \in A$, of probability distributions on $\mc{S}_1$. 
Typically, we shall assume that for some $l\in \N_+$ 
\begin{equation}\label{abrl}
A \in\mc{B}(\R^l).
\end{equation}
Consider a function $L:A\times \Omega_1 \to \R$, for which we denote $L(b)=L(b,\cdot)$, $b \in A$.
If the following condition is fulfilled, then for each $b \in A$ 
one can perform IS using the IS distribution $\PQ(b)$ and density $L(b)$
as in Section \ref{secIS}.
\begin{condition}\label{condpqbllpq1}
It holds 
$L(b)=(\frac{d\PQ_1}{d\PQ(b)})_{Z\neq 0}$, $b \in A$.
\end{condition}
For $x_1$ and $x_2$ being two $\sigma$-fields, measurable spaces, or measures, by $x_1 \otimes x_2$ we denote their product
$\sigma$-field, measurable space, or measure respectively, while for $n \in \N_+$,
by $x_1^n$ we mean such an $n$-fold product of $x_1$. 
The following conditions will be useful further on. 

\begin{condition}\label{condLmes}
We have (\ref{abrl}) and $L$ is measurable from $\mc{S}(A)\otimes\mc{S}_1$ to $\mc{S}(\R)$.
\end{condition}
\begin{condition}\label{condxi}
We have (\ref{abrl}) and a probability $\PR_1$ on a measurable space $\mc{C}_1$ and
$\xi: \mc{C}_1 \otimes \mc{S}(A)\rightarrow \mc{S}_1$ 
are such that for each $b \in A$
\begin{equation}\label{ximin}
\PQ(b)(B)=\PR_1(\xi(\cdot,b)^{-1}[B]),\quad B \in \mc{F}_1, 
\end{equation}
or equivalently, for each random variable $X \sim \PR_1$, $\xi(X,b)\sim \PQ(b)$, $b \in A$. 
\end{condition}
\begin{remark}
Let conditions \ref{condpqbllpq1}, \ref{condLmes}, and \ref{condxi} hold and let $b$ be some $A$-valued random variable, which 
can be e.g. some adaptively obtained IS parameter. 
Let  $\beta_i \sim \PR_1$, $i \in \N_+$, be i.i.d. and independent of $b$. 
Then, from Fubini's theorem it follows that the
random variables $\wh{\alpha}_n =\frac{1}{n}\sum_{i=1}^n(ZL(b))(\xi(\beta_{i},b))$, $n\in \N_+$, are unbiased and strongly consistent estimators of $\alpha$, 
i.e. $\E(\wh{\alpha}_n)=\alpha$, $n\in \N_+$, and a.s. $\lim_{n\to \infty}\wh{\alpha}_n=\alpha$. 
\end{remark}

In the further sections we shall often deal with families of distributions and densities satisfying the following condition. 
\begin{condition}\label{condB1} 
A set $B_1 \in \mc{F}_1$ is such that we have  $\PQ(b)\sim_{B_1} \PQ_1$  and 
$L(b)=(\frac{d\PQ_1}{d\PQ(b)})_{B_1}$,  $b \in A$. 
\end{condition} 
Let us formulate separately the special important case of the above condition. 
\begin{condition}\label{condpqpq1} 
Condition \ref{condB1} holds for $B_1=\Omega_1$, or equivalently $\PQ(b)\sim \PQ_1$ and $L(b)=\frac{d\PQ_1}{d\PQ(b)}$, $b \in A$. 
\end{condition} 
The following condition will be useful to avoid different technical problems like when dividing by $L$ or taking its logarithm. 
\begin{condition}\label{condg0} 
It holds $L(b)(\omega)>0$, $b \in A$, $\omega \in \Omega_1$. 
\end{condition} 
\begin{condition}\label{condZB} 
Condition \ref{condB1} holds, $\PQ_1(\{Z\neq 0\}\setminus B_1)=0$, and
$\PQ(b)(\{Z\neq 0\}\setminus B_1)=0$, $b\in A$.
\end{condition}
\begin{remark}\label{rempqbznpq1}
Note that for $Z$ such that Condition \ref{condZB} holds, we have $\PQ(b) \sim_{Z \neq 0\cup B_1} \PQ_1$, $b \in A$, and 
\begin{equation} 
L(b)=(\frac{d\PQ_1}{d\PQ(b)})_{B_1\cup \{Z \neq 0\}},\quad b\in A, 
\end{equation} 
so that Condition \ref{condpqbllpq1} holds. 
\end{remark}
\begin{definition}\label{defParamZero}
We say that $b^* \in A$ is a zero-variance (optimal-variance) IS parameter if Condition \ref{condpqe} (Condition \ref{condn0}) holds and 
$\PQ(b^*)$ is the zero-variance (optimal-variance) IS distribution. 
\end{definition}
Note that in the literature the name optimal IS parameter is sometimes used for the parameter minimizing the variance $b\in A\to \Var_{\PQ(b)}(ZL(b))$
of the IS estimator (see e.g. \cite{Lapeyre2011}), which may be not equal 
to an optimal-variance IS parameter in the sense of the above definition. 

The below theorem characterizes the random variables $Z$ as above for which there exists a zero-variance IS parameter,
under some of the above conditions. 
\begin{theorem}\label{thexistqs} 
Let us assume Condition \ref{condB1}.
Then, Condition \ref{condZB} holds and there exists a zero-variance IS parameter $b_1$ (for which we denote $\PQ^*=\PQ(b_1)$), only 
if for some $b_2 \in A$, $\PQ(b_2)(B_1)=1$ and for some $\beta \in \R \setminus 0$, 
\begin{equation}\label{zperffrom}
Z = \I_{B_1}\I(L(b_2)\neq 0)\frac{\beta}{L(b_2)},\quad \PQ_1\text{ a.s.} 
\end{equation}
Furthermore, in the latter case we have  $\beta=\alpha$ and $\PQ(b_2)=\PQ^*$.
\end{theorem}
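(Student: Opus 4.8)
The plan is to prove both directions by taking, in the forward implication, the witness $b_2$ to be the zero-variance IS parameter $b_1$ itself, and then reading off the form (\ref{zperffrom}) from the explicit description of $\PQ^*$ and $L^*$ in Theorem \ref{thpqs} together with the density-inversion formula of Lemma \ref{lemsimA}. The recurring subtlety in every step is bookkeeping of which identities hold $\PQ_1$-a.e.\ on $B_1$ versus globally, and passing between a density on $B_1$ and a genuine global Radon-Nikodym derivative.

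For the forward implication I would assume Condition \ref{condZB} and that $b_1$ is a zero-variance IS parameter, so by Definition \ref{defParamZero} Condition \ref{condpqe} holds and $\PQ^*=\PQ(b_1)$ is the zero-variance IS distribution. Setting $b_2=b_1$, I would first show $\PQ(b_1)(B_1)=1$: Theorem \ref{thpqs} gives $\PQ^*(Z\neq 0)=1$, while Condition \ref{condZB} gives $\PQ(b_1)(\{Z\neq 0\}\setminus B_1)=0$, so $\PQ(b_1)(\{Z\neq 0\}\cap B_1)=1$ and a fortiori $\PQ(b_1)(B_1)=1$. Next, applying Lemma \ref{lemsimA} to $L(b_1)=(\frac{d\PQ_1}{d\PQ(b_1)})_{B_1}$ (legitimate since Condition \ref{condB1} gives $\PQ_1\sim_{B_1}\PQ(b_1)$) yields $\frac{\I(L(b_1)\neq 0)}{L(b_1)}=(\frac{d\PQ(b_1)}{d\PQ_1})_{B_1}$, $\PQ_1$-a.e.\ on $B_1$. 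Since the global density $\frac{d\PQ^*}{d\PQ_1}=\frac{Z}{\alpha}$ of (\ref{pqprzez}) also serves as a density on $B_1$, uniqueness of the Radon-Nikodym derivative $\PQ_1$-a.e.\ on $B_1$ forces $Z=\alpha\frac{\I(L(b_1)\neq 0)}{L(b_1)}$ there. Multiplying by $\I_{B_1}$, and using that Condition \ref{condZB} makes $Z=0$ $\PQ_1$-a.e.\ off $B_1$, gives (\ref{zperffrom}) with $\beta=\alpha$, which is nonzero because Condition \ref{condpqe} makes $Z$ sign-definite and nonzero with positive probability.

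For the ``furthermore'' part I would start only from the right-hand side (under Condition \ref{condB1}) and recover everything. Using Lemma \ref{lemsimA} as above, (\ref{zperffrom}) rewrites as $Z=\beta\,\I_{B_1}(\frac{d\PQ(b_2)}{d\PQ_1})_{B_1}$, $\PQ_1$-a.e., so $Z$ is sign-definite (the sign of $\beta$); moreover $Z$ cannot vanish $\PQ_1$-a.e., for otherwise the $B_1$-density would be $\PQ_1$-null on $B_1$, contradicting $\PQ(b_2)(B_1)=1$. Hence Condition \ref{condpqe} holds and $\PQ^*$ is well defined. I would then promote the $B_1$-density to a global one: since $\PQ(b_2)(B_1)=1$ and $Z=\I_{B_1}Z$, for every $B\in\mc{F}_1$ one gets $\PQ(b_2)(B)=\int_B\frac{Z}{\beta}\,\mathrm{d}\PQ_1$, i.e.\ $\frac{d\PQ(b_2)}{d\PQ_1}=\frac{Z}{\beta}$. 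Taking $B=\Omega_1$ forces $\beta=\alpha$, and then comparison with (\ref{pqprzez}) identifies $\PQ(b_2)=\PQ^*$.

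I expect the main obstacle to be the ``furthermore'' direction rather than the forward one: there one must manufacture Condition \ref{condpqe} (in particular $\PQ_1(Z\neq 0)>0$, where the hypothesis $\PQ(b_2)(B_1)=1$ is essential) before $\PQ^*$ even makes sense, and then carefully upgrade an identity holding a priori only $\PQ_1$-a.e.\ on $B_1$ to a global Radon-Nikodym identity, controlling the behaviour off $B_1$ through the indicator in (\ref{zperffrom}) and the concentration of $\PQ(b_2)$ on $B_1$.
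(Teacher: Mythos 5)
Your proof is correct and follows essentially the same route as the paper's: both directions rest on identifying $(\frac{d\PQ(b_2)}{d\PQ_1})_{B_1}$ via Lemma \ref{lemsimA} and comparing it with the global density $\frac{Z}{\alpha}$ from (\ref{pqprzez}), and your deviations are cosmetic (comparing densities directly on $B_1$ rather than via $L^*$ on $\{Z\neq 0\}$ in the forward direction, and integrating the promoted global density rather than computing $\E_{\PQ(b_2)}(ZL(b_2))$ in the converse). The only point you leave unaddressed in the converse is the verification that Condition \ref{condZB} itself holds (in particular its clause $\PQ(b)(\{Z\neq 0\}\setminus B_1)=0$ for general $b$), but the paper's own proof disposes of this with a one-line assertion, so you are not below its level of detail.
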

\begin{proof}
Let us first show the right implication. From Condition \ref{condZB} and $\PQ^*=\PQ(b_1)$ it follows for $b_2=b_1$ that 
$\PQ(b_2)(B_1)=\PQ^*(B_1)=\PQ^*(Z\neq 0\cap B_1)=\PQ^*(Z\neq 0) - \PQ(b_2)(\{Z\neq 0\}\setminus B_1)=1$. 
From (\ref{lsdef}), $\PQ^*$ a.s. $L(b_2)Z=\I(Z\neq 0)\alpha$, which from $\PQ^*\sim_{Z\neq 0} \PQ_1$ holds also $\PQ_1$ a.s. 
Thus, since from Condition \ref{condpqe} we have $\alpha \neq 0$, it holds $\PQ_1$ a.s. that if $Z\neq 0$ then also $L(b_2)\neq 0$. Therefore, 
we have $\PQ_1$ a.s. 
$Z= \I(Z\neq 0\wedge L(b_2)\neq 0)\frac{1}{L(b_2)}\alpha$. 
Thus, from Condition \ref{condZB}, 
\begin{equation}\label{zperffrom2}
Z= \I_{B_1}\frac{1}{L(b_2)}\I(Z\neq 0\wedge L(b_2)\neq 0)\alpha,\quad \PQ_1\text{ a.s. } 
\end{equation}
From Condition \ref{condB1}, $\PQ^*\sim_{B_1}\PQ_1$, and thus from Lemma \ref{lemsimA} and (\ref{pqprzez}),
$0=\PQ_1(\{L(b_2)= 0\}\cap B_1)=\PQ_1(\{Z= 0\}\cap B_1)$, 
so that from (\ref{zperffrom2}) and $\PQ_1(Z\neq 0)>0$ 
we have (\ref{zperffrom}) only for $\beta=\alpha$.

For the left implication note that for $Z$ as in (\ref{zperffrom}) Condition \ref{condZB} holds.
Furthermore, from Condition \ref{condB1} and Lemma \ref{lemsimA}, 
\begin{equation}\label{pqb2b1l}
\PQ(b_2)(B_1\cap \{L(b_2)\neq 0\})= \PQ(b_2)(B_1).
\end{equation}
From (\ref{zperffrom}), (\ref{pqb2b1l}), and $\PQ(b_2)(B_1)=1$
\begin{equation}\label{alphaD}
\alpha= \E_{\PQ(b_2)}(ZL(b_2))=\beta\PQ(b_2)(B_1\cap \{L(b_2)\neq 0)=\beta\neq0, 
\end{equation}
so that Condition \ref{condpqe} holds. We have
\begin{equation}
\frac{d\PQ(b_2)}{d\PQ_1}= \I(B_1)\I(L(b_2)\neq 0)\frac{1}{L(b_2)}=\frac{Z}{\beta} = \frac{d\PQ^*}{d\PQ_1}, 
\end{equation}
where in the first equality we used  Condition \ref{condB1}, $\PQ(b_2)(B_1)=1$, and Lemma \ref{lemsimA},
in the second (\ref{zperffrom}), and in the last (\ref{alphaD}) and (\ref{pqprzez}). 
\end{proof}
\begin{remark}\label{remOptIS}
From the discussion in Section \ref{secIS}, the optimal-variance IS distribution for $Z$ is the zero-variance one for $|Z|$.
Thus, from the above theorem for $Z$ replaced by $|Z|$ we receive a characterization of variables $Z$ for which there exists an optimal-variance IS 
parameter under certain assumptions. 
\end{remark}

\chapter{The minimized functions and their estimators}\label{secMinFun}
\section{The minimized functions}\label{secCoeffDiv}
For some nonempty set $A$, consider a family of probability 
distributions as in Section \ref{secFamily} 
for which Condition \ref{condpqbllpq1} holds. 
Assuming Condition \ref{condg0} and that
\begin{equation}\label{cecond}
\E_{\PQ_1}((Z\ln(L(b)))_{-}) <\infty,\quad b \in A,
\end{equation}
we define a cross-entropy (function) $\ce:A\to \R \cup \{\infty \}$ as 
\begin{equation}\label{ceDef}
\ce(b)=\E_{\PQ_1}(Z\ln(L(b))),\quad b \in A  
\end{equation}
(see the discussion in Chapter \ref{secInt} regarding its name). 
\begin{remark}\label{remfdif}
Let us discuss how $\ce(b)$ is related to a certain $f$-divergence of the 
zero-variance IS distribution from  $\PQ(b)$. 
For some convex function $f:[0,\infty) \rightarrow \R$, the 
$f$-divergence $\dist(\PR_1,\PR_2)$ of a probability $\PR_2$ from another one $\PR_1$ such that $\PR_2 \ll \PR_1$ is given by the formula
\begin{equation}\label{dpqpr} 
\dist(\PR_1,\PR_2)=\E_{\PR_1}(f(\frac{d\PR_2}{d\PR_1})). 
\end{equation} 
Such an $f$-divergence is also known as Csisz\'ar $f$-divergence or Ali-Silvey distance \cite{pupa2011, Ali_1966,LieseV06}. 
From Jensen's inequality we have  $\dist(\PR_1,\PR_2)\geq f(1)$, and if $f$ is 
strictly convex then the equality in this inequality holds only if $\PR_1=\PR_2$. 
For example, for the strictly convex function $f(x) =x \ln(x)$ (which we assume to be zero for $x=0$), $\dist(\PR_1,\PR_2)$ 
is called Kullback-Leibler divergence or cross-entropy distance (of $\PR_2$ from $\PR_1$), while 
for $f(x) =(x^2-1)$, $\dist(\PR_1,\PR_2)$ is called Pearson divergence. 
For $\dist$ denoting the cross-entropy distance, 
let us assume Condition \ref{condpqe}, so that the zero-variance IS distribution $\PQ^*$ exists, $\PQ^* \ll \PQ(b)$, $b \in A$, and
\begin{equation}\label{dKul1}
\begin{split}
\dist(\PQ(b),\PQ^*)&=  \E_{\PQ(b)}(\frac{d\PQ^*}{d\PQ(b)}\ln(\frac{d\PQ^*}{d\PQ(b)}))\\
&= \E_{\PQ^*}(\ln(\frac{d\PQ^*}{d\PQ(b)}))\\
&=\E_{\PQ_1}(\frac{Z}{\alpha}(\ln(\frac{d\PQ^*}{d\PQ_1}) + \ln(L(b)))),\\
\end{split}
\end{equation}
where in the last equality we used (\ref{pqprzez}) and 
\begin{equation}\label{pql}
(\frac{d\PQ^*}{d\PQ(b)})_{Z\neq0}=\frac{d\PQ^*}{d\PQ_1}L(b).
\end{equation}
Assuming that $Z \geq 0$, we have 
\begin{equation}
\E_{\PQ_1}(Z\ln(\frac{d\PQ^*}{d\PQ_1}))=\E_{\PQ_1}(Z\ln(Z))- \alpha\ln(\alpha). 
\end{equation}
From $x\ln(x)\geq -e^{-1}$ we have $\E_{\PQ_1}(Z\ln(Z))\geq -e^{-1}$. Assuming further that
\begin{equation}\label{pq1zlnz}
\E_{\PQ_1}(Z\ln(Z))<\infty, 
\end{equation}
we receive from (\ref{dKul1}) that 
\begin{equation}\label{dKul}
\begin{split}
\dist(\PQ(b),\PQ^*)&= \alpha^{-1}(\E_{\PQ_1}(Z\ln(Z))- \alpha\ln(\alpha) + \ce(b)).\\
\end{split}
\end{equation}
If (\ref{dKul}) holds as above for each $b \in A$, then
$b\to\ce(b)$ and $b\to\dist(\PQ(b),\PQ^*)$ are positively linearly equivalent (see Chapter \ref{secInt}).
Note that from the discussion leading to formula (\ref{dKul}) and from $\dist(\PQ(b),\PQ^*)\geq 0$, 
a sufficient assumption for (\ref{cecond}) to hold is that we have $Z\geq 0$ and (\ref{pq1zlnz}).
\end{remark}

We define the mean square of the IS estimator as
\begin{equation}\label{msqDef}
\msq(b)=\E_{\PQ(b)}((ZL(b))^2)=\E_{\PQ_1}(Z^2L(b)),\quad b\in A,
\end{equation}
and such a variance as
\begin{equation}\label{varform}
\var(b)=\msq(b)-\alpha^2, \quad b\in A. 
\end{equation}
\begin{remark}\label{remPers}
Assuming that Condition \ref{condn0} holds, for $\PQ^*$ denoting the optimal-variance IS distribution as in Section \ref{secIS} and 
$\dist$ denoting the Pearson divergence as in Remark \ref{remfdif}, 
from (\ref{pql}) and (\ref{abszpq1}) we have for $b\in A$
\begin{equation}\label{distvar}
\begin{split}
\dist(\PQ(b),\PQ^*) &= \E_{\PQ(b)}((\frac{|Z|}{\E_{\PQ_1}(|Z|)}L(b))^2-1)\\
&= \frac{1}{(\E_{\PQ_1}(|Z|))^2}(\msq(b)-(\E_{\PQ_1}(|Z|))^2)\\   
&= \frac{1}{(\E_{\PQ_1}(|Z|))^2}(\var(b)+\alpha^2-(\E_{\PQ_1}(|Z|))^2).\\
\end{split}
\end{equation}
Thus, in such a case $b \in A\to \dist(\PQ(b),\PQ^*)$ is positively 
linearly equivalent to $\msq$ and $\var$.
\end{remark}

Let $C$ be some $[0,\infty]$-valued theoretical cost variable on $\mc{S}_1$. 
Let $c(b)=\E_{\PQ(b)}(C)$ be the mean cost under $\PQ(b)$, $b \in A$. 
\begin{condition}\label{condicwelldef}
For each $b \in A$, it does not hold $c(b)=\infty$ and $\var(b)=0$, 
or $c(b)=0$ and $\var(b)=\infty$. 
\end{condition}
Assuming Condition \ref{condicwelldef},
we define a (theoretical) inefficiency constant as
\begin{equation}
\ic(b)=c(b)\var(b),\quad b \in A.
\end{equation}
Frequently, the proportionality constants $p_{\dot{C}}$ of 
the practical to the theoretical costs of the IS MC as in Chapter \ref{secIneff}
can be chosen the same for different IS parameters $b\in A$,
so that the practical and theoretical inefficiency constants are proportional and their minimization is equivalent. 

\section{\label{secEstMin}Estimators of the minimized functions} 
Consider a family of probability 
distributions as in Section \ref{secFamily} 
and let us assume that conditions \ref{condpqbllpq1} and \ref{condLmes} hold. 
Consider a measurble function $f:\mc{S}(A) \to \mc{S}(\overline{\R})$ and for some $p \in \N_+$, consider 
\begin{equation}\label{dvcdef} 
\wh{\est}_n:\mc{S}(A)^2\otimes\mc{S}_1^n \rightarrow \mc{S}(\R),\quad n \in \N_p, 
\end{equation} 
called estimators of $f$, where $\wh{\est}_n(b',b)$ is thought of as an estimator of $f(b)$ under $\PQ(b')^n$, $b,b' \in A$, $n \in \N_p$. 
In all this work, for $b' \in A$, we denote $\PQ'=\PQ(b')$ and $L'=L(b')$. 
We say that some $\wh{\est}_n$ as above is an unbiased estimator of $f$  if 
\begin{equation}
f(b) = \E_{(\PQ')^n}(\wh{\est}_n(b',b)),\quad b',b \in A. 
\end{equation} 
Let us further in this section assume the following condition. 
\begin{condition}\label{condKappa} 
We have $b' \in \R^l$ and 
$\kappa_1,\kappa_2,\ldots,$ are i.i.d. $\sim \PQ'$ and $\wt{\kappa}_n=(\kappa_i)_{i=1}^n$, 
$n \in \N_+$. 
\end{condition} 
We call the estimators $\wh{\est}_n$, $n \in \N_p$,  strongly consistent for $f$ if 
for each $b',b \in A$, a.s. 
\begin{equation} 
\lim_{n\to \infty}\wh{\est}_n(b',b)(\wt{\kappa}_n)=f(b). 
\end{equation} 
For a function $Y$ on $\Omega_1$ (like e.g. $Z$ or $L$), 
we define such functions $Y_1,\ldots, Y_n$ on $\Omega_1^n$ by the formula 
\begin{equation}\label{Yidefs}
Y_i(\omega)=Y(\omega_i),\quad \omega=(\omega_{i})_{i=1}^n\in \Omega_1^n,
\end{equation}
and whenever $Y$ takes values in some linear space we denote
\begin{equation}\label{overdef}
\overline{(Y)}_n =\frac{1}{n}\sum_{i=1}^nY_i. 
\end{equation}
For the cross-entropy as in the previous section, assuming (\ref{cecond}), we have 
\begin{equation}
\ce(b)=\E_{\PQ'}(L'Z\ln(L(b))),
\end{equation}
so that for $n \in \N_+$, from Theorem \ref{thSLLNG}, its unbiased strongly consistent estimators are
\begin{equation}\label{ceEst}
\wh{\ce}_n(b',b)=\overline{(L'Z\ln(L(b)))}_n=\frac{1}{n}\sum_{i=1}^n L_i'Z_i\ln(L_i(b)).
\end{equation}
For mean square, we have  
\begin{equation}
\msq(b)=\E_{\PQ'}(Z^2L'L(b)), 
\end{equation}
so that for $n \in \N_+$, its unbiased strongly consistent estimators are
\begin{equation}\label{whmsqdef}
\wh{\msq}_n(b',b)= \overline{(Z^2L'L(b))}_n. 
\end{equation}
The above mean square estimators 
and estimators negatively linearly equivalent to 
the above cross-entropy estimators in the function of $b$ (see Chapter \ref{secInt}) have been considered before 
in the literature; see e.g. \cite{Rubinstein97optimizationof, Rubinstein_optim, Rubinstein_2004, Jourdain2009}. 
Thus, we call the above estimators well-known. 
We shall now proceed to define some new estimators. 
If $\PQ(b) \ll \PQ'$, then for variance, we have  for $n \in \N_2$
\begin{equation}\label{varEqu} 
\begin{split} 
\var(b) &= \Var_{\PQ(b)}(ZL(b))\\ 
&= \E_{(\PQ(b))^n}\left(\frac{1}{n(n-1) }\sum_{i<j \in \{1,\ldots,n\}} (Z_iL_i(b) - Z_jL_j(b))^2\right)\\ 
&= \frac{1}{n(n-1)}\E_{(\PQ')^n} 
\left(\sum_{i<j \in \{1,\ldots,n\}}\left(\frac{d\PQ(b)}{d\PQ'}\right)_i\left(\frac{d\PQ(b)}{d\PQ'}\right)_j (Z_iL_i(b) - Z_jL_j(b))^2\right).\\ 
\end{split} 
\end{equation} 
Let us further in this section assume conditions \ref{condpqpq1} and \ref{condg0}. 
Then, $\frac{d\PQ(b)}{d\PQ'} = \frac{L'}{L(b)}$, and from (\ref{varEqu}), we have the
following unbiased estimators of $\var$ for $n \in \N_2$ 
 \begin{equation}\label{varest}
 \begin{split}
 \wh{\var}_n(b',b)&=\frac{1}{n(n-1)}\sum_{i<j \in \{1,\ldots,n\}} \frac{L_i'L_j'}{L_i(b)L_j(b)}(Z_iL_i(b) - Z_jL_j(b))^2\\
 &= \frac{1}{n(n-1)}\left(\sum_{i=1}^n\left(Z_i^2L_i'L_i(b)\sum_{j\in \{1,\ldots,n\}, j \neq i}\frac{L_j'}{L_j(b)}\right)
 -\sum_{i<j \in \{1,\ldots,n\}}2Z_iZ_jL_i'L_j'\right)\\
&=\frac{n}{n-1}\left(\wh{\msq}_n(b',b)\overline{\left(\frac{L'}{L(b)}\right)}_n - \overline{(ZL')}_n^2\right).
 \end{split} 
 \end{equation}
Thus, $b\to\wh{\var}_n(b',b)$ is positively linearly equivalent to the following estimator of mean square
\begin{equation}\label{msq2ndef}
\wh{\msq2}_{n}(b',b)= \wh{\msq}_n(b',b)\overline{\left(\frac{L'}{L(b)}\right)}_n,
\end{equation}
which can be considered also for $n=1$. 
From the facts that from the SLLN, a.s. 
\begin{equation}
\lim_{n\to \infty} \overline{(ZL')}_n(\wt{\kappa}_n)= \E_{\PQ'}(ZL')=\alpha  
\end{equation}
and 
\begin{equation}\label{cons1}
\lim_{n\to \infty} \overline{\left(\frac{L'}{L(b)}\right)}_n(\wt{\kappa}_n)= \E_{\PQ'}\left(\frac{L'}{L(b)}\right)=1, 
\end{equation}
estimators $\wh{\msq2}_{n}$ and $\wh{\var}_n$ 
are strongly consistent for $\msq$ and $\var$ respectively. 
Let us further in this section assume that 
\begin{equation}\label{Cinfty0}
\PQ(b)(C=\infty)=0,\quad b \in A.  
\end{equation}
Then, strongly consistent and unbiased estimators of the mean cost $c$ are 
\begin{equation}\label{cest}
\wh{c}_n(b',b)=\I(i=1,\ldots,n)\overline{\left(\frac{L'}{L(b)}\I(C\neq \infty)C\right)}_n.
\end{equation}
Let us further in this section assume Condition \ref{condicwelldef}. Then, strongly consistent estimators of $\ic$ are for $n \in \N_2$,
\begin{equation}\label{icdef}
\wh{\ic}_n(b',b)=\wh{c}_n(b',b)\cdot\wh{\var}_n(b',b),
\end{equation}
which are in general not unbiased. For each $n \in \N_3$, defining helper unbiased estimators of variance for $k=1,\ldots,n$
\begin{equation}
\wh{\var}_{n,k}(b',b)=\frac{1}{(n-1)(n-2)}\left(\sum_{i<j \in \{1,\ldots,n\}\setminus\{k\}}\frac{L_i'L_j'}{L_i(b)L_j(b)}(Z_iL_i(b) - Z_jL_j(b))^2\right), 
\end{equation}
we have the following unbiased estimator of $\ic$ 
\begin{equation}\label{ic2Est}
\wh{\ic2}_{n}(b',b)=\frac{1}{n}\sum_{k=1}^n \left(\frac{L'}{L(b)}\I(C\neq \infty)C\right)_k\wh{\var}_{n,k}(b',b). 
\end{equation}

\chapter{\label{secExamp}Examples of parametrizations of IS} 
 In this chapter we introduce a number of parametrizations of IS, most
 of which shall be used in the theoretical reasonings or numerical experiments 
 in this work. 
\section{\label{secECM}Exponential change of measure} 
Exponential change of measure (ECM), also known as exponential tilting, is a popular 
method for obtaining a family of IS distributions from a given one. 
It has found numerous applications among others 
in IS for rare event simulation \cite{bucklew2004,asmussen2007stochastic} 
or for pricing derivatives in computational finance \cite{Jourdain2009, Lemaire2010}. 
In this work by default all vectors (including gradients of functions)
are considered to be column vectors. 
For some $l \in \N_+$, consider an $\R^l$-valued random vector $X$ on $\mc{S}_1$. We define the moment-generating function as
$b\in \R^l\to\Phi(b)=\E_{\PQ_1}(\exp(b^TX))$. 
Let $A$ be the set of all $b \in \R^l$ for which $\Phi(b) <\infty$.
Note that $0 \in A$ and from the convexity of the exponential function, $A$ is convex. 
The cumulant generating function is defined as $\Psi(b)=\ln(\Phi(b))$, $b \in A$.
\begin{condition}\label{condtX2}
For each $b_1, b_2 \in A$ such that $b_1\neq b_2$, $(b_1-b_2)^TX$ is not $\PQ_1$ a.s. constant.
\end{condition}
\begin{lemma}\label{lempsi}
$\Psi$ is convex on $A$ and it is strictly convex on $A$ only if Condition \ref{condtX2} holds.
\end{lemma}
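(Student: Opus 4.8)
The plan is to establish convexity directly from H\"older's inequality and then to obtain the ``only if'' direction by contraposition.

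First I would prove that $\Psi$ is convex. Fix $b_1,b_2 \in A$ and $\theta \in (0,1)$, and set $b_\theta = \theta b_1 + (1-\theta)b_2$; since $A$ is convex (as already noted in the excerpt), $b_\theta \in A$. Writing $\exp(b_\theta^T X) = \exp(b_1^T X)^\theta \exp(b_2^T X)^{1-\theta}$ and applying H\"older's inequality under $\PQ_1$ with conjugate exponents $1/\theta$ and $1/(1-\theta)$ gives
\[
\Phi(b_\theta) = \E_{\PQ_1}\!\left(\exp(b_1^T X)^\theta \exp(b_2^T X)^{1-\theta}\right) \leq \Phi(b_1)^\theta \Phi(b_2)^{1-\theta}.
\]
Since $\exp(b^T X) > 0$ yields $\Phi(b) \in (0,\infty)$ for every $b \in A$, I may take logarithms to obtain $\Psi(b_\theta) \leq \theta\,\Psi(b_1) + (1-\theta)\,\Psi(b_2)$, which is convexity on $A$.

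For the ``only if'' assertion I would prove the contrapositive: if Condition \ref{condtX2} fails, then $\Psi$ is not strictly convex. Suppose there exist $b_1 \neq b_2$ in $A$ such that $(b_1-b_2)^T X$ equals $\PQ_1$ a.s. some constant $k \in \R$. Then along the segment one has $b_\theta^T X = b_2^T X + \theta (b_1-b_2)^T X = b_2^T X + \theta k$, $\PQ_1$ a.s., whence $\Phi(b_\theta) = e^{\theta k}\Phi(b_2)$ and $\Psi(b_\theta) = \Psi(b_2) + \theta k$. Because $\Psi(b_1) = \Psi(b_2) + k$, the latter equals $\theta\,\Psi(b_1) + (1-\theta)\,\Psi(b_2)$, so the convexity inequality holds with equality for every $\theta \in (0,1)$; thus $\Psi$ is affine on the segment $[b_1,b_2]$ and hence not strictly convex on $A$.

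The routine parts are the verification of H\"older's hypotheses and the positivity and finiteness of $\Phi$ on $A$, both immediate. The only delicate point is the a.s.-constant manipulation: I must record $(b_1-b_2)^T X = k$ on a set of full $\PQ_1$-measure and carry this through the expectation defining $\Phi(b_\theta)$, which is legitimate since the integrands then agree $\PQ_1$ a.s. I would also remark that the converse implication (Condition \ref{condtX2} $\Rightarrow$ strict convexity), though not required for the present ``only if'' statement, would instead invoke the sharp equality case of H\"older's inequality, namely that equality forces $\exp(b_1^T X)$ and $\exp(b_2^T X)$ to be proportional $\PQ_1$ a.s.
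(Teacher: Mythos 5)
Your proof is correct for the statement as literally written, and its convexity half coincides with the paper's: H\"older's inequality applied to $\exp(b_1^TX)^{\theta}\exp(b_2^TX)^{1-\theta}$, followed by taking logarithms. The strict-convexity half is where you take a genuinely different route. The paper invokes the sharp equality case of H\"older's inequality (citing Rudin): equality in $\Phi(q_1b_1+q_2b_2)\leq \Phi(b_1)^{q_1}\Phi(b_2)^{q_2}$ forces $\exp(b_1^TX)=a\exp(b_2^TX)$, $\PQ_1$ a.s., for some $a\in\R_+$, i.e.\ $(b_1-b_2)^TX$ a.s.\ constant; this characterizes the equality cases exactly and therefore gives the equivalence of strict convexity with Condition \ref{condtX2} in \emph{both} directions. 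You instead prove only the contrapositive of the literal ``only if'' claim by a direct computation showing that $\Psi$ is affine along the segment $[b_1,b_2]$ whenever $(b_1-b_2)^TX$ is a.s.\ constant; this is more elementary and avoids the equality case of H\"older entirely. The trade-off is that, as you yourself note, your argument does not deliver the converse implication (Condition \ref{condtX2} implies strict convexity). Be aware that the paper uses ``only if'' here in the sense of ``if and only if'' (a convention it follows elsewhere, e.g.\ Lemma \ref{lemsimA}), and the converse direction is precisely what is invoked later, in the proof of Theorem \ref{thPsimu}, to conclude that $\Psi$ is strictly convex under Condition \ref{condtX}. So for the lemma to do its job in the paper, the appeal to the H\"older equality case that you relegate to a closing remark would need to be carried out in full, at which point the two proofs essentially coincide.
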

\begin{proof} 
Let $b_1, b_2 \in A$ and $q_1,q_2 \in \R_+$ be such that $q_1+q_2 =1$. From H\"{o}lder's inequality 
\begin{equation}\label{phisum} 
\Phi(\sum_{i=1}^2q_ib_i)\leq \prod_{i=1}^2\Phi(b_i)^{q_i} 
\end{equation} 
and taking the logarithms  of the both sides  we receive 
\begin{equation}\label{psisum} 
\Psi(\sum_{i=1}^2q_ib_i)\leq \sum_{i=1}^2q_i\Psi(b_i).
\end{equation} 
Thus, $\Psi$ is convex. Equality in (\ref{phisum}) or equivalently in (\ref{psisum}) holds only if for some $a \in \R_+$, $\PQ_1$ a.s. 
$\exp(b_1^TX)=a\exp(b_2^TX)$ (see  page 63 in \cite{rudin1970}) or equivalently if for some $c \in \R$, $\PQ_1$ a.s. 
$(b_1-b_2)^TX=c$. $\Psi$ is strictly convex only if there do not exist $b_1, b_2 \in A$, $b_1\neq b_2,$ such that an
equality in (\ref{psisum}) holds, and thus only if Condition \ref{condtX2} holds. 
\end{proof}
\begin{condition}\label{condtX}
For each $t \in \R^l\setminus \{0\}$, $t^TX$ is not $\PQ_1$ a.s. constant. 
\end{condition}
Note that Condition \ref{condtX} implies Condition \ref{condtX2} and if $A$ has a nonempty interior then 
these conditions are equivalent. If $A$ contains some neighbourhood of zero, then $X$ has finite all mixed moments,
i.e. $\E(\prod_{i=1}^l|X_i|^{v_i})<\infty$, $v \in \N^l$. For $v \in \N^l$,
let us denote $\partial_v = \frac{\partial^{v_1+\ldots+v_l}}{\partial_{b_1}^{v_1}\ldots \partial_{b_l}^{v_l}}$.
\begin{condition}\label{condPartvm}
$A$ is open, $\Phi$ is smooth (i.e. infinitely continuously differentiable) on $A$, and for each $v \in N^l$ we have 
\begin{equation}\label{partvm}
\partial_v \Phi(b)=\E_{\PQ_1}(\partial_v\exp(b^TX))=\E_{\PQ_1}(\exp(b^TX)\prod_{i=1}^lX_i^{v_i}).\\ 
\end{equation}
\end{condition}
\begin{remark}\label{remPartv}
It is easy to show using inductively the mean value theorem and Lebesgue's dominated convergence theorem that 
Condition \ref{condPartvm} holds when $A=\R^l$ or when $\PQ_1([0,\infty)^l)=1$ and for some $\lambda>0$,  $A=(-\infty,\lambda)^l$.  
\end{remark}
We define the exponentially tilted family of probability distributions $\PQ(b)$, $b \in A$, corresponding to the above 
$\PQ_1$ and $X$ by the formula 
\begin{equation} 
\frac{d\PQ(b)}{d\PQ_1} = \exp(b^TX - \Psi(b)),\quad  b \in A. 
\end{equation} 
Note that $\PQ(0)=\PQ_1$ and 
\begin{equation}\label{lbecm} 
L(b):= \exp(-b^TX +\Psi(b)) = \frac{d\PQ_1}{d\PQ(b)}, \quad b \in A. 
\end{equation} 
Note that conditions \ref{condLmes}, \ref{condpqpq1}, and \ref{condg0} hold for the above $\PQ(b)$ and $L(b)$, $b \in A$. 
From Lemma \ref{lempsi}, for each $\omega \in \Omega_1$, $b\in A \to L(b)(\omega)$ is log-convex (and thus also convex)
and if Condition \ref{condtX2} holds, then it is strictly log-convex (and thus also strictly convex). 
Let us define means $\mu(b)=\E_{\PQ(b)}(X)$ and covariance matrices $\Sigma(b)=\E_{\PQ(b)}((X-\mu(b))(X-\mu(b))^T)$, for $b \in A$ for which 
they exist. Note that the functions $\Phi$, $\Psi$, $\Sigma$, and $\mu$ depend only on the law of $X$ under $\PQ_1$. 
If for some $b \in A$ it holds $\Sigma(b)\in \R^{l\times l}$, then we have 
$t^T\Sigma(b)t=\E_{\PQ(b)}((t^T(X-\mu(b)))^2)$, $t \in \R^l$, and thus $\Sigma(b)$ is positive definite only if 
Condition \ref{condtX} holds. 
When Condition \ref{condPartvm} holds, then we receive by direct calculation that $\nabla \Psi(b) = \mu(b)$ and $\nabla^2 \Psi(b)=\Sigma(b)$, $b \in A$. 

Let $U$ be an open subset of $\R^l$. 
The following well-known lemma is an easy consequence of the inverse function theorem. 
\begin{lemma}\label{lemInv} 
If $f:U \to \R^l$ is injective and differentiable with an
invertible derivative $Df$ on $U$, then $f$ is a diffeomorphism of the open sets $U$ and $f(U)$. 
\end{lemma}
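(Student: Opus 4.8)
The plan is to apply the inverse function theorem locally at every point of $U$ and then glue the resulting local inverses into a single global inverse, using the hypothesis that $f$ is injective on all of $U$. First I would fix an arbitrary $x \in U$. Since $Df(x)$ is invertible, the inverse function theorem supplies open neighbourhoods $V_x \subseteq U$ of $x$ and $W_x \subseteq \R^l$ of $f(x)$ together with a differentiable inverse $g_x : W_x \to V_x$ of the restriction $f|_{V_x}$. In particular each $f|_{V_x}$ is a homeomorphism onto the open set $W_x$, so $f$ maps open subsets of $U$ to open subsets of $\R^l$: given open $O \subseteq U$ and $x \in O$, the set $f(V_x \cap O)$ is an open neighbourhood of $f(x)$ contained in $f(O)$. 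Taking $O = U$ shows that $f(U)$ is open.

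Next I would verify that $f : U \to f(U)$ is a homeomorphism. It is a continuous bijection onto $f(U)$ (bijectivity because $f$ is injective by hypothesis and surjective onto its image by definition), and the previous paragraph shows it is an open map; hence $f^{-1} : f(U) \to U$ is continuous. Here the global injectivity is essential: it guarantees that the various local inverses $g_x$ agree with one another and therefore assemble into a single well-defined map $f^{-1}$. Indeed, for $y \in W_x \cap W_{x'}$ the two preimages $g_x(y)$ and $g_{x'}(y)$ both map to $y$ under $f$, so injectivity of $f$ forces $g_x(y) = g_{x'}(y)$; thus $f^{-1}$ restricted to $W_x$ coincides with $g_x$.

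Finally, differentiability of $f^{-1}$ follows because differentiability is a local property: near any $y = f(x) \in f(U)$ the map $f^{-1}$ agrees with the differentiable local inverse $g_x$, whose derivative at $y$ equals $(Df(x))^{-1}$. Hence $f^{-1}$ is differentiable on $f(U)$, and $f$ is a diffeomorphism of $U$ onto $f(U)$. I expect the only delicate point to be the gluing step of the second paragraph: the inverse function theorem by itself yields merely local diffeomorphisms, and it is precisely the global injectivity assumption that upgrades this to a single global inverse, so I would take care to spell out explicitly that the local inverses are mutually consistent rather than treat the passage from local to global as automatic.
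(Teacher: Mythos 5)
The paper gives no proof of this lemma, stating only that it is ``an easy consequence of the inverse function theorem,'' and your argument—apply the inverse function theorem locally, use global injectivity to glue the local inverses, conclude $f$ is an open map and hence a homeomorphism onto the open set $f(U)$, and finally read off differentiability of the inverse locally—is exactly the standard filling-in of that remark. The only caveat, which the paper itself glosses over, is that the classical inverse function theorem requires $f$ to be $C^1$ rather than merely differentiable; this is harmless here since in the paper's application ($f=\mu=\nabla\Psi$ under Condition \ref{condPartvm}) the map is smooth.
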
 
By $|\cdot|$ we denote the standard Euclidean norm.  
\begin{lemma}\label{lemConvFun} 
If $U$ is convex and a function $g:U \to \R$ is strictly convex and differentiable, then the function 
$b\in U\to \nabla g(b)$ is injective. 
\end{lemma}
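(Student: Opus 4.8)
The plan is to argue by contradiction, reducing the statement to the elementary one-variable fact that the derivative of a strictly convex differentiable function is strictly increasing. So suppose there are $b_1, b_2 \in U$ with $b_1 \neq b_2$ but $\nabla g(b_1) = \nabla g(b_2)$; I aim to contradict strict convexity. First I would restrict $g$ to the segment joining the two points: since $U$ is convex, $b_1 + t(b_2 - b_1) \in U$ for $t \in [0,1]$, so I may define $\phi(t) = g(b_1 + t(b_2 - b_1))$. As the composition of the strictly convex $g$ with the injective affine map $t \mapsto b_1 + t(b_2 - b_1)$, the function $\phi$ is strictly convex on $[0,1]$, and by the chain rule it is differentiable with $\phi'(t) = \nabla g(b_1 + t(b_2 - b_1))^T (b_2 - b_1)$.

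Next I would invoke that a strictly convex differentiable function of one variable has a strictly increasing derivative, so that $\phi'(1) > \phi'(0)$, which reads
\[
(\nabla g(b_2) - \nabla g(b_1))^T (b_2 - b_1) > 0.
\]
Under the assumption $\nabla g(b_1) = \nabla g(b_2)$ the left-hand side vanishes, a contradiction. Hence no such pair exists and $b \in U \to \nabla g(b)$ is injective.

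The main obstacle is securing the \emph{strict} inequality $\phi'(1) > \phi'(0)$ rather than merely $\phi'(1) \geq \phi'(0)$: a naive passage to the limit in the convexity inequality yields only the non-strict first-order condition, because the one-sided difference quotient of a convex function converges to its derivative from above. I would instead use the standard monotonicity of difference quotients, namely that $s \mapsto (\phi(t_0 + s) - \phi(t_0))/s$ is non-decreasing for convex $\phi$ and strictly increasing under strict convexity; comparing the slopes through an interior point $u \in (0,1)$ then separates $\phi'(0)$ from $\phi'(1)$ strictly. Equivalently, one can establish the strict first-order inequality $g(y) > g(x) + \nabla g(x)^T(y - x)$ for $y \neq x$ and apply it to the pairs $(x,y) = (b_1, b_2)$ and $(x,y) = (b_2, b_1)$, adding the two resulting inequalities to obtain the same strictly positive monotonicity expression.
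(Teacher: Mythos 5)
Your proof is correct and follows essentially the same route as the paper's: restrict $g$ to the segment joining $b_1$ and $b_2$, observe that equal gradients force equal endpoint derivatives of the one-variable restriction, and contradict the strict monotonicity of the derivative of a strictly convex differentiable function of one variable. Your additional care in justifying the strict (rather than non-strict) inequality via monotonicity of difference quotients fills in a step the paper leaves implicit, but the argument is the same.
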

\begin{proof}
If for some $b_1, b_2 \in U$, $b_1\neq b_2$, we had $\nabla g(b_1)=\nabla g(b_2)$, then for $v = \frac{b_2-b_1}{|b_2-b_1|}$ it would hold 
\begin{equation}
\left(\frac{dg(b_1+tv)}{dt}\right)_{t=0} =v^T\nabla g(b_1)= v^T\nabla g(b_2)=\left(\frac{dg(b_1+tv)}{dt}\right)_{t=|b_2-b_1|},
\end{equation}
which is impossible since $t \in  [0,|b_2-b_1|]\to g(b_1+tv)$ is strictly convex. 
\end{proof}
\begin{theorem}\label{thPsimu}
If conditions \ref{condtX} and \ref{condPartvm} hold, then 
$b\in A \to\mu(b)=\nabla \Psi(b)$ is a diffeomorphism of the open sets $A$ and $\mu[A]$.
\end{theorem}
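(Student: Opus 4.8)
The plan is to verify that the map $\mu=\nabla\Psi$ satisfies the hypotheses of Lemma \ref{lemInv} on the set $U=A$, namely that it is injective and differentiable with an everywhere invertible derivative; the diffeomorphism claim, including the openness of $\mu[A]$, then follows directly from that lemma. I would first assemble the structural facts already recorded in the text. Since Condition \ref{condtX} implies Condition \ref{condtX2}, Lemma \ref{lempsi} gives that $\Psi$ is strictly convex on $A$, which is convex and, by Condition \ref{condPartvm}, open. Under Condition \ref{condPartvm} the function $\Phi$ is smooth and strictly positive on $A$, so $\Psi=\ln\Phi$ is smooth, with $\nabla\Psi(b)=\mu(b)$ and $\nabla^2\Psi(b)=\Sigma(b)$ for $b\in A$; in particular $\mu$ is differentiable on $A$ with derivative $D\mu(b)=\Sigma(b)$.

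Next I would obtain injectivity of $\mu$ by applying Lemma \ref{lemConvFun} with $g=\Psi$ and $U=A$: because $\Psi$ is strictly convex and differentiable on the convex set $A$, its gradient $\mu=\nabla\Psi$ is injective there.

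The remaining and only substantive point is that $D\mu(b)=\Sigma(b)$ is invertible for every $b\in A$. As $\Sigma(b)$ is symmetric, it suffices to prove it is positive definite. Starting from the identity $t^T\Sigma(b)t=\E_{\PQ(b)}((t^T(X-\mu(b)))^2)$ one sees $\Sigma(b)$ is positive semidefinite, so it could fail to be positive definite only if $t^T\Sigma(b)t=0$ for some $t\neq 0$, which would force $t^TX$ to be $\PQ(b)$ a.s. constant. Since $\frac{d\PQ(b)}{d\PQ_1}=\exp(b^TX-\Psi(b))$ is strictly positive, we have $\PQ(b)\sim\PQ_1$, so $t^TX$ would then also be $\PQ_1$ a.s. constant, contradicting Condition \ref{condtX}. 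Hence $\Sigma(b)$ is positive definite and in particular invertible. With injectivity, differentiability, and invertibility of the derivative in hand, Lemma \ref{lemInv} applied to $f=\mu$ and $U=A$ shows that $\mu$ is a diffeomorphism of the open sets $A$ and $\mu[A]$.

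The step demanding the most care is the positive definiteness of $\Sigma(b)$. The text records the \emph{only if} direction (positive definiteness implies Condition \ref{condtX}), whereas here I need the converse, so I would be explicit about transferring the non-degeneracy Condition \ref{condtX} from $\PQ_1$ to $\PQ(b)$ via their mutual absolute continuity; everything else is a routine assembly of the quoted lemmas.
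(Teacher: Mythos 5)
Your proposal is correct and follows essentially the same route as the paper's proof: strict convexity of $\Psi$ via Lemma \ref{lempsi}, injectivity of $\nabla\Psi$ via Lemma \ref{lemConvFun}, positive definiteness of $D\mu=\Sigma$ from Condition \ref{condtX}, and then Lemma \ref{lemInv}. Your only addition is to spell out the converse direction of the positive-definiteness equivalence (transferring non-degeneracy from $\PQ_1$ to $\PQ(b)$ via $\PQ(b)\sim\PQ_1$), which the paper leaves implicit in ``the above discussion''; this is a worthwhile clarification but not a different argument.
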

\begin{proof}
From Condition \ref{condtX} and Lemma \ref{lempsi},
$\Psi$ is strictly convex. From Condition \ref{condPartvm},
$D\mu=\nabla^2\Psi=\Sigma$, which from \ref{condtX} and the above discussion 
is positive definite. Thus, for $U=A$ the thesis follows from Lemma \ref{lemConvFun}
for $g=\Psi$ and Lemma \ref{lemInv} for $f=\mu$. 
\end{proof}
Some important special cases of ECM 
for $l=1$ are when $X$ has a binomial, Poisson, or gamma distribution under $\PQ(b)$, $b \in A$, 
while for general $l \in \N_+$ --- when $X$ has a multivariate normal 
distribution  (see page 130 in \cite{asmussen2007stochastic}). 
In all these cases,  
from Remark \ref{remPartv}, Condition 
\ref{condPartvm} holds. Furthermore, for the first three cases and 
non-degenerate multivariate normal distributions, Condition \ref{condtX} is satisfied 
and we have analytical formulas for $\mu^{-1}$. 
In the gamma case, for some $\alpha,\lambda \in \R_+$, 
and $A=(-\infty,\lambda)$, for each $b \in A$, for $\lambda_b = \lambda -b$, under $\PQ(b)$, $X$ has a distribution with a density 
\begin{equation}
\frac{1}{\Gamma(\alpha)}\lambda_b^{\alpha}x^{\alpha-1}\exp(-\lambda_bx) 
\end{equation}
with respect to the Lebesgue measure on $(0,\infty)$. Furthermore, for each $b \in A$ it holds
$\Psi(b)= \alpha\ln(\frac{\lambda}{\lambda-b})$ and  
$\mu(b)=\frac{\alpha}{\lambda-b}$, and for each $x\in \mu[A]=\R_+$,
$\mu^{-1}(x)=\lambda - \frac{\alpha}{x}$.
In the Poisson case we have $A=\R$ and for some initial mean $\mu_0\in\R_+$, for each $b \in A$
we have $\mu(b) = \mu_0 \exp(b)$ and 
\begin{equation}
\PQ(b)(X=k) = \frac{\mu(b)^k}{k!}\exp(-\mu(b)), \quad k \in \N,
\end{equation}
i.e. $X \sim \Pois(\mu(b))$ under $\PQ(b)$. 
Furthermore, it holds 
$\Psi(b)= \mu_0(\exp(b)-1)$, $b\in A$, $\mu^{-1}(x)=\ln(\frac{x}{\mu_0})$, $x \in \mu[A]= (0,\infty)$, 
and $\Sigma(b)=\mu(b)$, $b \in A$. In the multivariate normal case we have $A=\R^l$ and for 
$M\in \R^{l\times l}$ being some positive semidefinite covariance matrix and $\mu_0 \in \R^l$ some initial mean, 
for each $b \in A$, $\mu(b)=\mu_0 + Mb$ and under $\PQ(b)$,  $X\sim\ND(\mu(b),M)$. 
Moreover, it holds $\Psi(b)=b^T\mu_0 + \frac{1}{2}b^TMb$ and $\Sigma(b)=M$, $b \in A$. 
An important special case are non-degenerate normal distributions in which $M$ is positive definite, 
$\mu[A]=A$, and $\mu^{-1}(x)=M^{-1}(x-\mu_0)$, $x \in A$. In the standard multivariate normal case we have $M=I_l$ and $\mu_0=0$, so that $X\sim\ND(b, I_l)$
under $\PQ(b)$, $b\in A$. 


For an exponential tilting in which $A=\R^l$ we shall further need the following function defined for $a\in[0,\infty)$
\begin{equation}\label{fadef}
F(a)= \sup\{|\Psi(b)|:b\in \R^l,\ |b|\leq a\}. 
\end{equation}
For instance, in the multivariate standard normal case as above we have $\Psi(b)=\frac{|b|^2}{2}$ and thus
$F(a)=\frac{a^2}{2}$, while in the Poisson case $F(a)=\mu_0(\exp(a)-1)$. 

\begin{remark}\label{remECMcost}
In some practical realizations of ECM,
the computation times on a computer needed to generate i.i.d. replicates of 
the IS estimator $ZL(b)$ under $\PQ(b)$ for different $b\in A$
are approximately equal to the same constant. 
This is typically the case e.g. when $X \sim \ND(0,I_l)$ under $\PQ_1$. 
In such a case one can often take the theoretical cost $C=1$.
\end{remark}

\section{\label{secProd}IS for independently parametrized product distributions}
Let $n \in \N_+$. For each $i\in\{1,\ldots,n\}$, consider a probability distribution $\wt{\PQ}_{1,i}$ on
a measurable space $\mc{S}_{1,i}=(\Omega_{1,i},\mc{F}_{1,i})$, 
a nonempty set $A_i$, and parametric families of probabilities $\wt{\PQ}_i(b_i)$
and  densities $\wt{L}_i(b_i)=\frac{d\wt{\PQ}_{1,i}}{d\wt{\PQ}_i(b_i)}$, $b_i\in A_i$. 
Let us define the corresponding product measure $\PQ_1=\bigotimes_{i=1}^n\wt{\PQ}_{1,i}$, product parameter set
$A=\prod_{i=1}^nA_i$, and families of independently parametrized product probabilities $\PQ(b)=\bigotimes_{i=1}^n\wt{\PQ}(b_i)$
and  densities $L(b)=\prod_{i=1}^n\wt{L}_i(b_i)$, $b=(b_i)_{i=1}^n\in A$. 
Then, $\PQ(b) \sim \PQ_1$ and $L(b)=\frac{d\PQ_1}{d\PQ(b)}$, $b \in A$.

Let us further consider the special case of $\wt{\PQ}_i$ and $\wt{L}_i$ as above being the
exponentially tilted probabilities and densities given by some probabilities $\wt{\PQ}_{1,i}$ and random variables $\wt{X}_i$,
having moment-generating functions $\Phi_i$, and cumulant generating functions $\Psi_i$,  $i=1,\ldots,n$.
Then, $\PQ(b)$ and $L(b)$, $b\in A$, are the exponentially tilted probabilities and densities corresponding to the above
probability $\PQ_1$ and a random variable
$X(\omega)=(\wt{X}_i(\omega_i))_{i=1}^n$, $\omega\in \prod_{i=1}^n\Omega_{1,i}$, 
with a moment-generating function $\Phi(b)=\prod_{i=1}^n\Phi_i(b_i)$ and a cumulant generating function
$\Psi(b)=\sum_{i=1}^n\Psi_i(b_i)$. If Condition \ref{condtX} or \ref{condPartvm}
holds in the $i$th case for $i \in \{1,\ldots,n\}$, then such a condition holds  
also in the product case. If 
$\mu_i$ is the  mean function in the $i$th case, $i=1,\ldots,n$, then 
$\mu(b)=(\mu_i(b_i))_{i=1}^n$, $b= (b_i)_{i=1}^n\in A$, is such a mean function in the product case, and if all $\mu^{-1}_i$ exist, then  
for each $x=(x_i)_{i=1}^n \in \mu[A]=\prod_{i=1}^n\mu_i[A_i]$, $\mu^{-1}(x)=(\mu^{-1}_i(x_i))_{i=1}^n$. 

\section{IS for stopped sequences}\label{secStoppedGen}
\subsection{Change of measure for stopped sequences using a tilting process}\label{secStopped}
Let $\wt{\PU}_1$ be a probability measure on a measurable space $\wt{\mc{C}}=(\wt{E},\wt{\mc{E}})$,
let $\mc{C}=(E,\mc{E}):=\wt{\mc{C}}^{\N_+}$, let
$\eta=(\eta_i)_{i\in\N_+}=\id_E$ be the coordinate process on 
$E$, and let $\wt{\eta}_k=(\eta_i)_{i=1}^k$, $k \in \N_+$. 
Let $\PU$ be the unique probability measure on $\mc{C}$ such that
$\eta_1,\ \eta_2,\ \ldots$, are i.i.d. $\sim \wt{\PU}_1$ under $\PU$ (see Theorem 16, Chapter 9 in \cite{fristedt1997modern}). 
Let $\mc{F}_k=\sigma(\wt{\eta}_k)$, $k \in\N_+$, i.e. it is the natural filtration of $\eta$, 
and let $\mc{F}_0=\{\emptyset,E\}$, i.e. it is a trivial $\sigma$-field. 
For some $d \in \N_+$ and a nonempty set $B \in \mc{B}(\R^d)$, let conditions \ref{condLmes},
\ref{condpqpq1},  and \ref{condg0} hold for $A=B$, $\PQ_1 =\wt{\PU}_1$, and some probabilities $\PQ(b)$
and densities $L(b)$ denoted further as $\wt{\PU}(b)$ and $\wt{L}(b)$, $b \in B$.  
Let $\kappa(b)=\wt{L}(b)^{-1}=\frac{d\wt{\PU}(b)}{d\wt{\PU}_1}$, $b \in B$.  
\begin{definition}\label{defJ}
We define $\mc{J}$ to be the set of all $\mc{S}(B)$-valued, $(\mc{F}_k)_{k\in \N}$-adapted stochastic processes $\lambda=(\lambda_k)_{k\in \N}$ on $\mc{C}$. 
\end{definition} 
%
Processes $\lambda$ as in the above definition shall be called tilting processes. 
The following lemma follows from Lemma 7, Chapter 21 in \cite{fristedt1997modern}. See Definition 18,
Chapter 21 in \cite{fristedt1997modern} for the definition of Borel spaces. From Proposition 20 in that Chapter, $\mc{S}(B)$ is a Borel space. 
\begin{lemma}\label{lemMesPsi}
Let $\mc{\Psi}$ be a measurable space, $\mc{B}$ be a Borel space, $V$ be a $\Psi$-valued random variable, and $Y$ be a 
$\mc{B}$-valued, $\sigma(V)$-measurable random variable. Then, there exists a measurable function 
$f:\mc{\Psi}\to \mc{B}$ such that $Y=f(V)$.
\end{lemma}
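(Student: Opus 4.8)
The plan is to recognise this as the Doob--Dynkin factorisation lemma and to prove it by reducing the $\mc{B}$-valued case to the scalar case through the Borel-space structure of $\mc{B}$. First I would unwind the definition of a Borel space: $\mc{B}$ being a Borel space means there is a bimeasurable bijection $\phi$ from $\mc{B}$ onto a Borel subset $B_0$ of a Polish space, which we may take to be a Borel subset of $\R$. Composing, $\phi\circ Y$ is then a real-valued, $\sigma(V)$-measurable random variable, so it would suffice to factor it as $g\circ V$ for some measurable $g:\mc{\Psi}\to\R$; the desired $f$ will be obtained from $g$ by composing with $\phi^{-1}$ after a range correction described below.

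For the scalar factorisation I would argue by the usual measure-theoretic induction on the structure of $\phi\circ Y$. For an indicator $\I_E$ with $E\in\sigma(V)$, the definition of the generated $\sigma$-field gives $E=V^{-1}[C]$ for some measurable $C\subseteq\mc{\Psi}$, whence $\I_E=\I_C\circ V$. By linearity this extends to $\sigma(V)$-measurable simple functions. For a general nonnegative $\sigma(V)$-measurable function I would take simple functions $s_n\uparrow(\phi\circ Y)_+$, write each as $s_n=g_n\circ V$, and define $g(x)=\lim_n g_n(x)$ where the limit exists and $g(x)=0$ otherwise. Then $g$ is measurable, and since $g_n(V(\omega))=s_n(\omega)$ converges for every $\omega$, the limit defining $g$ exists at every point of the range of $V$ and attains the correct value, so $g\circ V=(\phi\circ Y)_+$. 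Treating the negative part analogously and subtracting finishes the scalar case.

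The one subtlety I would address carefully is the range constraint: the factor $g$ produced above maps into $\R$, whereas $f$ must take values in $\mc{B}$. Since $g(V(\omega))=\phi(Y(\omega))\in B_0$ for every $\omega$, I would fix some $x_0\in B_0$ and replace $g$ by $\wt{g}:=g$ on $g^{-1}[B_0]$ and $\wt{g}:=x_0$ elsewhere; then $\wt{g}:\mc{\Psi}\to B_0$ is measurable and still satisfies $\wt{g}\circ V=\phi\circ Y$. Setting $f:=\phi^{-1}\circ\wt{g}$ yields a measurable map $\mc{\Psi}\to\mc{B}$ with $f\circ V=\phi^{-1}\circ\phi\circ Y=Y$, as required.

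I expect the main obstacle to be structural rather than computational: everything hinges on the Borel-space hypothesis on $\mc{B}$, which is exactly what legitimises the reduction to $\R$, together with the measurability of the limiting factor $g$, whose values off the range of $V$ are irrelevant but must still be chosen measurably. Since the statement is classical, in the final write-up I would simply invoke Lemma 7 of Chapter 21 in \cite{fristedt1997modern}, noting that $\mc{S}(B)$ is a Borel space by Proposition 20 of that chapter, with the argument above as the underlying justification.
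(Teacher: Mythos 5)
Your proposal is correct. The paper does not actually prove this lemma: it simply derives it from Lemma 7, Chapter 21 of Fristedt and Gray (noting via Proposition 20 there that $\mc{S}(B)$ is a Borel space), which is exactly the citation you say you would invoke in the final write-up. What you add is the underlying Doob--Dynkin argument that the reference encapsulates: reduce to the real-valued case through the bimeasurable bijection $\phi$ of $\mc{B}$ onto a Borel set $B_0\subseteq\R$, factor indicators via $\sigma(V)=\{V^{-1}[C]\}$, pass to simple functions and monotone limits, and then correct the range of the scalar factor so that $\phi^{-1}$ can be applied. All of these steps are sound, and you correctly identify the two points that need care: the measurable choice of the limit off the range of $V$, and the range constraint forcing the replacement of $g$ by $\wt{g}$ valued in $B_0$ before composing with $\phi^{-1}$. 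So your write-up is a faithful, self-contained expansion of what the paper delegates entirely to the reference.
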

Let further in this section $\lambda$ be as in Definition \ref{defJ}. 
From the above lemma there exist $h_0 \in B$ and $h_k:\wt{\mc{C}}^k \to \mc{S}(B)$, $k \in \N_+$, 
such that $\lambda_0 = h_0$ and $\lambda_k=h_k(\wt{\eta}_k)$, $k \in \N_+$, which let us further consider.  
%
Let $\gamma_0=1$ and
\begin{equation}\label{gammandef}
\gamma_n=\prod_{k=0}^{n-1}\kappa(\lambda_{k})(\eta_{k+1}),\quad n \in \N_+. 
\end{equation}
For a nonempty set $T \subset [0,\infty)$ and
a filtration $\mc{G}_{t\in T}$ on a measurable space $(\Omega,\mc{G})$,  
let $\mc{G}_{\infty}:=\sigma(\bigcup_{t\in T}\mc{G}_t)$. A stopping time $\tau$ for $\mc{G}_{t\in T}$
is a $T \cup\{\infty\}$-valued random variable such that $\tau\leq t\in \mc{G}_t$, $t \in T$.
For such a $\tau$ one defines a $\sigma$-field 
\begin{equation}
\mc{G}_{\tau}=\{A\in \mc{G}_{\infty}: A\cap \{\tau \leq t\} \in \mc{G}_t,\ t \in T\}. 
\end{equation}
For $\tau$ being a stopping time for the filtration $(\mc{F}_k)_{k\in \N}$ as above it also holds
\begin{equation}\label{mcftau}
\mc{F}_{\tau}=\{B \in \mc{E}:B\cap\{\tau = n\} \in \mc{F}_n,\ n\in\N\}. 
\end{equation}
For a probability $\PS$ on $\mc{C}$ and such a $\tau$ 
we shall denote $\PS_{|\tau} = \PS_{|\mc{F}_{\tau}}$. 
Identifying each $n \in \N_+$ with a constant random variable we thus have $\PS_{|n} = \PS_{|\mc{F}_{n}}$.  
The following theorem is an easy consequence of Theorem 3, Chapter 22 in \cite{fristedt1997modern}. 
\begin{theorem}\label{thDiscrGirs} 
There exists a unique probability $\PV$ on $\mc{C}$ satisfying one of the following equivalent conditions.
\begin{enumerate}
 \item Under $\PV$, $\eta_1$ has density $\kappa(h_0)$ with respect to $\wt{\PU}_1$ and for each $k \in \N_+$, $\eta_{k+1}$
 has conditional density $\kappa(\lambda_k)$ with respect to $\wt{\PU}_1$ given $\mc{F}_k$ 
 (see Definition 14, Chapter 21 in \cite{fristedt1997modern}). 
 \item For each $n \in \N$, 
\begin{equation}\label{dqdp} 
 \frac{d\PV_{|n}}{d\PU_{|n}} =\gamma_n.
 \end{equation} 
\end{enumerate}
\end{theorem}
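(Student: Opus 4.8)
The plan is to build $\PV$ from its finite-dimensional densities $\gamma_n$ and then to show that these encode exactly the conditional-density prescription of the first condition. First I would record the two facts that make everything work: by conditions \ref{condpqpq1} and \ref{condg0}, for every $b\in B$ the density $\kappa(b)=\frac{d\wt{\PU}(b)}{d\wt{\PU}_1}$ is strictly positive and satisfies $\int \kappa(b)\, d\wt{\PU}_1=1$; and, using the factorizations $\lambda_k=h_k(\wt{\eta}_k)$ with $h_k$ measurable supplied by Lemma \ref{lemMesPsi}, each $\gamma_n$ is a well-defined nonnegative $\mc{F}_n$-measurable random variable on $\mc{C}$.

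Next I would establish that $(\gamma_n)_{n\in\N}$ is a nonnegative $\PU$-martingale for $(\mc{F}_k)_{k\in\N}$ with $\E_{\PU}(\gamma_n)=1$. The key step is the increment computation: since $\lambda_n$ is $\mc{F}_n$-measurable while $\eta_{n+1}$ is $\PU$-independent of $\mc{F}_n$ with law $\wt{\PU}_1$,
\begin{equation}
\E_{\PU}\!\left(\kappa(\lambda_n)(\eta_{n+1})\mid \mc{F}_n\right)=\int \kappa(b)\, d\wt{\PU}_1\,\Big|_{b=\lambda_n}=1,
\end{equation}
whence $\E_{\PU}(\gamma_{n+1}\mid\mc{F}_n)=\gamma_n$. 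Consequently the set functions $\nu_n(A)=\E_{\PU}(\gamma_n\I_A)$, $A\in\mc{F}_n$, are probabilities that are consistent, i.e. $\nu_{n+1}$ restricted to $\mc{F}_n$ equals $\nu_n$ by the tower property. For existence and uniqueness of a probability $\PV$ on $\mc{C}$ with $\PV_{|n}=\nu_n$, that is, satisfying (\ref{dqdp}), I would invoke the cited Theorem 3, Chapter 22 in \cite{fristedt1997modern}, whose hypotheses are met by this consistent martingale family; uniqueness is immediate in any case because $\bigcup_{n}\mc{F}_n$ is an algebra generating $\mc{E}$, so a probability on $\mc{E}$ is determined by its restrictions $\PV_{|n}$. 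This settles condition 2 and the uniqueness claim.

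Finally I would prove the equivalence of the two conditions. For $2\Rightarrow 1$, fix $k\in\N$, take $A\in\mc{F}_k$ and a bounded measurable $\phi$ on $\wt{\mc{C}}$, and compute
\begin{equation}
\E_{\PV}\!\left(\I_A\,\phi(\eta_{k+1})\right)=\E_{\PU}\!\left(\gamma_{k+1}\I_A\,\phi(\eta_{k+1})\right)=\E_{\PU}\!\left(\gamma_k\I_A\!\int \kappa(\lambda_k)(y)\phi(y)\, d\wt{\PU}_1(y)\right),
\end{equation}
factoring out the $\mc{F}_k$-measurable quantity $\gamma_k\I_A$ and using the $\PU$-independence of $\eta_{k+1}$; converting the right-hand side back to a $\PV$-expectation via (\ref{dqdp}) at level $k$ (valid because the inner integral is an $\mc{F}_k$-measurable function of $\lambda_k$) shows that $\E_{\PV}(\phi(\eta_{k+1})\mid\mc{F}_k)=\int \kappa(\lambda_k)(y)\phi(y)\, d\wt{\PU}_1(y)$, which is precisely the statement that under $\PV$ the conditional density of $\eta_{k+1}$ given $\mc{F}_k$ is $\kappa(\lambda_k)$ (the $k=0$ case giving the density $\kappa(h_0)$ of $\eta_1$). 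For $1\Rightarrow 2$, I would run the same identity in reverse, showing inductively by the tower property that the law of $\wt{\eta}_n$ under $\PV$ has density $\gamma_n$ with respect to $\wt{\PU}_1^{\otimes n}$, i.e. $\frac{d\PV_{|n}}{d\PU_{|n}}=\gamma_n$. The main obstacle is genuinely the existence step: consistency of the $\nu_n$ alone does not force countable additivity on the infinite product $\mc{C}=\wt{\mc{C}}^{\N_+}$, so the Ionescu--Tulcea-type construction in the cited theorem is essential; the remainder is bookkeeping with conditional densities (in the sense of Definition 14, Chapter 21 in \cite{fristedt1997modern}) and with the measurability of the maps $h_k$.
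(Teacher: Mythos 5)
Your proof is correct and follows essentially the route the paper intends: the paper gives no written argument beyond declaring the theorem "an easy consequence of Theorem 3, Chapter 22 in \cite{fristedt1997modern}", and your existence step rests on exactly that Ionescu--Tulcea-type construction. The martingale/consistency computation for $(\gamma_n)_{n\in\N}$ and the conditional-density bookkeeping establishing the equivalence of the two conditions are the details the citation leaves implicit, and you have filled them in correctly.
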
 
%
Let $\PV$ 
be as in the above theorem and let $\tau$ be a stopping time for $(\mc{F}_{n})_{n\in \N}$. 
\begin{lemma} \label{lemdpvdpu}
It holds 
\begin{equation}\label{tauinfsim}
\PV_{|\tau}\sim_{\tau <\infty} \PU_{|\tau}, 
\end{equation}
with
\begin{equation}\label{dpvdpu}
\I(\tau<\infty)\gamma_{\tau}=\left(\frac{d\PV_{|\tau}}{d\PU_{|\tau}}\right)_{\tau<\infty}. 
\end{equation} 
\end{lemma}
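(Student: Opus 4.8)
The plan is to reduce the stopped-time density of $\PV_{|\tau}$ with respect to $\PU_{|\tau}$ on $\{\tau<\infty\}$ to the deterministic-time densities $\gamma_n$ of $\PV_{|n}$ with respect to $\PU_{|n}$ furnished by Theorem \ref{thDiscrGirs}, gluing them together along the discrete decomposition (\ref{mcftau}) of $\mc{F}_{\tau}$. First I would record that $(\gamma_n)_{n\in\N}$ from (\ref{gammandef}) is $(\mc{F}_n)_{n\in\N}$-adapted, so that $\I(\tau<\infty)\gamma_{\tau}=\sum_{n\in\N}\I(\tau=n)\gamma_n$ is $\mc{F}_{\tau}$-measurable: each summand is $\mc{F}_n$-measurable because $\gamma_n$ is $\mc{F}_n$-measurable and $\{\tau=n\}\in\mc{F}_n$. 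Likewise $\{\tau<\infty\}\in\mc{F}_{\tau}$, since $\{\tau<\infty\}\cap\{\tau=n\}=\{\tau=n\}\in\mc{F}_n$, so that the set $A=\{\tau<\infty\}$ is a legitimate domain for a density in the sense of Section \ref{secBackDens}.

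The core step is to verify (\ref{dpvdpu}) directly from the defining property of a density on a set. I would fix $B\in\mc{F}_{\tau}$ and use that $\{\tau<\infty\}$ is the disjoint union of the sets $\{\tau=n\}$, $n\in\N$. By (\ref{mcftau}) each $B\cap\{\tau=n\}$ lies in $\mc{F}_n$, so Theorem \ref{thDiscrGirs} (in its second form $\frac{d\PV_{|n}}{d\PU_{|n}}=\gamma_n$) gives $\PV(B\cap\{\tau=n\})=\int_{B\cap\{\tau=n\}}\gamma_n\,\mathrm{d}\PU=\int_{B\cap\{\tau=n\}}\gamma_{\tau}\,\mathrm{d}\PU$, the last equality holding because $\gamma_{\tau}=\gamma_n$ on $\{\tau=n\}$. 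Summing over $n\in\N$ (all integrands being nonnegative) yields
\begin{equation}
\PV(B\cap\{\tau<\infty\})=\int\!\I(\tau<\infty)\gamma_{\tau}\I(B)\,\mathrm{d}\PU.
\end{equation}
Since the integrand is $\mc{F}_{\tau}$-measurable, the integral against $\PU$ equals the one against $\PU_{|\tau}$, and $\PV$ agrees with $\PV_{|\tau}$ on $\mc{F}_{\tau}$; comparing with the definition of $\bigl(\frac{d\PV_{|\tau}}{d\PU_{|\tau}}\bigr)_{\tau<\infty}$ this is exactly (\ref{dpvdpu}).

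To upgrade this to the mutual absolute continuity (\ref{tauinfsim}) I would apply Lemma \ref{lemsimA} with $A=\{\tau<\infty\}$ and $L=\I(\tau<\infty)\gamma_{\tau}$: it then suffices to check that $\PU_{|\tau}(\{L=0\}\cap\{\tau<\infty\})=0$. In fact this set is empty, because each factor $\kappa(\lambda_k)(\eta_{k+1})=\wt{L}(\lambda_k)(\eta_{k+1})^{-1}$ in (\ref{gammandef}) is strictly positive by Condition \ref{condg0} (which is assumed for $\wt{L}$ in the present setting), so $\gamma_n>0$ everywhere for each finite $n$ and hence $\gamma_{\tau}>0$ on $\{\tau<\infty\}$; Lemma \ref{lemsimA} then delivers (\ref{tauinfsim}). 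The only genuinely delicate point, and hence the main obstacle, is the measure-theoretic bookkeeping in the middle paragraph: confirming the $\mc{F}_{\tau}$-measurability of $\I(\tau<\infty)\gamma_{\tau}$ and that restricting $\PU$ to $\mc{F}_{\tau}$ leaves the integral unchanged, so that the deterministic-time densities of Theorem \ref{thDiscrGirs} assemble correctly into the stopped-time density; everything else is routine.
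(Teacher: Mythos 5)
Your proof is correct and follows essentially the same route as the paper's: decompose $B\cap\{\tau<\infty\}$ over the events $\{\tau=n\}$, invoke (\ref{mcftau}) and the fixed-time densities (\ref{dqdp}) from Theorem \ref{thDiscrGirs}, sum, and then deduce (\ref{tauinfsim}) from Lemma \ref{lemsimA}. The only difference is that you make explicit the $\mc{F}_{\tau}$-measurability of $\I(\tau<\infty)\gamma_{\tau}$ and the strict positivity of $\gamma_{\tau}$ on $\{\tau<\infty\}$ (via Condition \ref{condg0}) needed for Lemma \ref{lemsimA}, points the paper leaves implicit.
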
 
\begin{proof} 
To prove (\ref{dpvdpu}) we notice that for each $B \in \mc{F}_{\tau}$ we have  
\begin{equation} 
\begin{split} 
\E_{\PU}(\I(B\cap \{\tau<\infty\}) \gamma_{\tau})&= \sum_{n=0}^{\infty}\E_{\PU}(\I(\{\tau =n\}\cap B) \gamma_n) \\ 
&= \sum_{n=0}^{\infty}\PV(\{\tau = n\}\cap B) = \PV(B\cap \{\tau<\infty\}),\\ 
\end{split} 
\end{equation} 
where in the second equality we used (\ref{mcftau}) and (\ref{dqdp}). 
Now, (\ref{tauinfsim}) follows from (\ref{dpvdpu}) and Lemma \ref{lemsimA}. 
\end{proof} 
From the above lemma, if $\tau < \infty$ both $\PV$ and $\PU$ a.s., 
then $\PV_{|\tau}\sim \PU_{|\tau}$. 
In this work, a product over an empty set is considered to be equal one. 
For some  $\epsilon\in\R_+$, considered to avoid some technical problems 
as discussed above Condition \ref{condg0}, let us define 
\begin{equation}\label{Leps}
L=\I(\tau<\infty)\frac{1}{\gamma_\tau} + \epsilon \I(\tau=\infty) 
= \I(\tau<\infty)\prod_{k=0}^{\tau-1}\wt{L}(\lambda_{k}(\eta_{k+1}))+ \epsilon \I(\tau=\infty).
\end{equation}
Then, from Lemma \ref{lemdpvdpu} and the discussion in Section \ref{secBackDens} it holds 
\begin{equation}
L=\left(\frac{d\PU_{|\tau}}{d\PV_{|\tau}}\right)_{\tau<\infty}.  
\end{equation}
Let $Z$ be an $\R$-valued, $\mc{F}_\tau$-measurable random variable such that 
$\E_{\PU}(|Z|)<\infty$ (for short we shall also informally describe such a $Z$ as 
an $\R$-valued element of $L^1(\PU_{|\tau})$, 
see e.g. Chapter 20 in \cite{fristedt1997modern}). 
Let us assume that 
$\PU(Z\neq 0,\ \tau =\infty) =\PV(Z\neq 0,\ \tau =\infty)=0$, so that 
from (\ref{tauinfsim}),
$\PU_{|\tau} \sim_{Z\neq 0\vee\tau<\infty} \PV_{|\tau}$ and 
\begin{equation}\label{LRadon}
L=\left(\frac{d\PU_{|\tau}}{d\PV_{|\tau}}\right)_{Z\neq 0\vee\tau<\infty}.  
\end{equation}
Then, one can perform IS as in Section \ref{secIS} for $\PQ_1=\PU_{|\tau}$, $\PQ_2=\PV_{|\tau}$, and $L$ as above. 
Note that such a $\PQ_1$ is defined on $\mc{S}_1=(\Omega_1,\mc{F}_1)=(E,\mc{F}_\tau)$. 
\begin{remark}\label{rembettertau} 
Consider two stopping times $\tau_1,\tau_2$ for $(\mc{F}_n)_{n\in \N}$, such that 
$\tau_1\leq \tau_2$  and an $\R$-valued $Z \in L^1(\PU_{\tau_1})$ such that $\PU(Z\neq 0,\ \tau_2 =\infty) =\PV(Z\neq 0,\ \tau_2 =\infty)=0$. 
Then, we also have $Z \in L^1(\PU_{\tau_2})$ and $\PU(Z\neq 0,\ \tau_1 =\infty) =\PV(Z\neq 0,\ \tau_1 =\infty)=0$. 
Furthermore, denoting $L$ as in (\ref{Leps}) for $\tau=\tau_i$ as $L_{\tau_i}$, we have  
\begin{equation}\label{condZLtau}
\E_{\PV}(ZL_{\tau_2}|\mc{F}_{\tau_1})=ZL_{\tau_1}. 
\end{equation} 
Indeed, for each $D \in \mc{F}_{\tau_1}$ and $i=1,2$, from (\ref{LRadon}) it holds 
\begin{equation} 
\E_{\PV}(ZL_{\tau_i}\I(D))=\E_{\PV}(ZL_{\tau_i}\I(D\cap\{Z\neq 0\}))=\E_{\PU}(Z\I(D\cap\{Z\neq 0\})). 
\end{equation} 
From (\ref{condZLtau}) and conditional Jensen's inequality we have  $\Var_{\PV}(ZL_{\tau_2})\geq \Var_{\PV}(ZL_{\tau_1})$,
i.e. using $\tau_1$ for IS as above leads to not higher variance than using $\tau_2$. Furthermore, $\E_{\PV}(\tau_1)\leq \E_{\PV}(\tau_2)$, 
so that, for the theoretical costs equal to the respective stopping times,
using $\tau_1$ also leads to not higher mean cost and inefficiency constant than $\tau_2$
(assuming that such constants are well-defined).
\end{remark}

\subsection{\label{secParamStopped}Parametrizations of IS for stopped sequences}
For some $l \in \N_+$ and a nonempty set $A \in \mc{B}(\R^l)$, let us consider a function 
\begin{equation}\label{lambdaAJ}
\lambda:A\to \mc{J}  
\end{equation}
(see Definition \ref{defJ}), 
called a parametrization of tilting processes. 
For each $b \in A$, let $\PV(b)$ and $\PV_{|\tau}(b)$ be given by $\lambda(b)$
similarly as $\PV$ and $\PV_{|\tau}$
are given by $\lambda$ in the unparametrized case in the previous section. 
Let $\PQ_1$ and $\mc{S}_1=(\Omega_1,\mc{F}_1)$ be as in the previous section. Let for each $b \in A$, $\PQ(b)=\PV_{|\tau}(b)$ 
and $L(b)$ be defined by formula (\ref{Leps}) 
but using $\lambda(b)$ in the place of $\lambda$.
Note that such an $L$ satisfies Condition \ref{condg0}. 

\begin{condition}\label{condbxmes}
For each $n \in \N$, $(b,x)\rightarrow \lambda_n(b)(x)$ is measurable from $\mc{S}(A)\otimes (E,\mc{F}_n)$ to 
$\mc{S}(B)$. 
\end{condition}

\begin{theorem}\label{thLmes}
Under Condition \ref{condbxmes}, Condition \ref{condLmes} holds for the  above $L$. 
\end{theorem}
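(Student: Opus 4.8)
The plan is to read off the structure of (\ref{Leps}) after replacing $\lambda$ by $\lambda(b)=(\lambda_k(b))_{k\in\N}$. Setting, for $n\in\N$,
\[
G_n(b,\omega):=\prod_{k=0}^{n-1}\wt{L}(\lambda_k(b)(\omega))(\eta_{k+1}(\omega))
\]
(with $G_0\equiv 1$), formula (\ref{Leps}) gives $L(b)=G_n(b,\cdot)$ on $\{\tau=n\}$ and $L(b)=\epsilon$ on $\{\tau=\infty\}$, so that
\[
L(b)(\omega)=\sum_{n=0}^{\infty}\I(\tau=n)(\omega)\,G_n(b,\omega)+\epsilon\,\I(\tau=\infty)(\omega).
\]
Since $A\in\mc{B}(\R^l)$ holds by the standing assumption of Section \ref{secParamStopped}, the content of Condition \ref{condLmes} reduces to showing that $(b,\omega)\mapsto L(b)(\omega)$ is $\mc{B}(A)\otimes\mc{F}_\tau$-measurable, where $\mc{S}_1=(E,\mc{F}_\tau)$. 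I would do this in two steps and then assemble the countable sum.

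For the first step I would prove that each $G_n$ is $\mc{B}(A)\otimes\mc{F}_n$-measurable. Fix $n$ and $0\le k\le n-1$ and consider $\phi_k:(b,\omega)\mapsto(\lambda_k(b)(\omega),\eta_{k+1}(\omega))$. Its first component is $\mc{B}(A)\otimes\mc{F}_k$-measurable into $\mc{S}(B)$ by Condition \ref{condbxmes}, hence $\mc{B}(A)\otimes\mc{F}_n$-measurable since $\mc{F}_k\subset\mc{F}_n$; its second component depends only on $\omega$ and is $\mc{F}_{k+1}$-measurable into $\wt{\mc{C}}$ with $k+1\le n$. Thus $\phi_k$ is measurable from $\mc{S}(A)\otimes(E,\mc{F}_n)$ to $\mc{S}(B)\otimes\wt{\mc{C}}$, and composing with $\wt{L}$ — which is $\mc{B}(B)\otimes\wt{\mc{E}}$-measurable by Condition \ref{condLmes} applied in the unparametrized case — shows that each factor $\wt{L}\circ\phi_k$ is $\mc{B}(A)\otimes\mc{F}_n$-measurable; by Condition \ref{condg0} it is finite and positive, so the finite product $G_n$ is $\mc{B}(A)\otimes\mc{F}_n$-measurable.

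For the second step I would promote $\I(\tau=n)G_n$ from $\mc{F}_n$ to $\mc{F}_\tau$. The key observation is that $\{\tau=n\}$ lies in both $\mc{F}_n$ and $\mc{F}_\tau$ (as $\tau$ is a stopping time) and, by (\ref{mcftau}), the traces of $\mc{F}_n$ and of $\mc{F}_\tau$ on $\{\tau=n\}$ coincide. Since the trace of a product $\sigma$-field on a rectangle $A\times\{\tau=n\}$ equals the product of $\mc{B}(A)$ with the trace on $\{\tau=n\}$, the traces of $\mc{B}(A)\otimes\mc{F}_n$ and of $\mc{B}(A)\otimes\mc{F}_\tau$ on $A\times\{\tau=n\}$ agree; hence any $\mc{B}(A)\otimes\mc{F}_n$-measurable set contained in $A\times\{\tau=n\}$ already lies in $\mc{B}(A)\otimes\mc{F}_\tau$, which makes $\I(\tau=n)G_n$ (vanishing off $A\times\{\tau=n\}$) measurable with respect to $\mc{B}(A)\otimes\mc{F}_\tau$. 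Applying this for every $n$, noting $\{\tau=\infty\}\in\mc{F}_\tau$ so that the last term is $\mc{B}(A)\otimes\mc{F}_\tau$-measurable, and summing the countably many measurable summands (at each $(b,\omega)$ at most one is nonzero) yields the measurability of $L$, i.e. Condition \ref{condLmes}.

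I expect the second step to be the main obstacle: all the subtlety is in correctly handling the stopping-time $\sigma$-field $\mc{F}_\tau$ inside a joint-measurability statement, which forces the trace-and-generation argument for product $\sigma$-fields above; the first step is a routine composition of measurable maps.
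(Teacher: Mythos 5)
Your proof is correct, and its skeleton coincides with the paper's: both decompose $L$ over the slices $\{\tau=n\}$, $n\in\N\cup\{\infty\}$, establish $\mc{B}(A)\otimes\mc{F}_n$-measurability on each finite slice (your $G_n$ is exactly $\gamma_n(b)^{-1}$ in the paper's notation, and your composition argument via $\phi_k$ spells out what the paper asserts in one line from Condition \ref{condbxmes} and Condition \ref{condLmes} for $\wt{L}$), and treat $\{\tau=\infty\}$ trivially. Where you genuinely diverge is in the assembly step. The paper invokes Lemma \ref{lemMesDFtau}, the identity $\mc{D}\otimes\mc{F}_\tau=\{C:\forall_k\, C\cap(D\times\{\tau=k\})\in\mc{D}\otimes\mc{F}_k\}$, which it proves through the generation lemmas \ref{lemGB} and \ref{lemBi}; you instead use the fact that the trace of a product $\sigma$-field on a product set factors, combined with the coincidence of the traces of $\mc{F}_n$ and $\mc{F}_\tau$ on $\{\tau=n\}$ (which does follow from (\ref{mcftau}) together with $\{\tau=n\}\in\mc{F}_n\cap\mc{F}_\tau$). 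Your route is leaner: you only ever need the one inclusion (slice-wise measurability implies $\mc{B}(A)\otimes\mc{F}_\tau$-measurability), and the trace argument delivers exactly that without building the $\sigma$-field $\mc{K}$ of Lemma \ref{lemBi} and identifying its generators. The paper's version buys a reusable two-sided characterization of $\mc{D}\otimes\mc{F}_\tau$, at the cost of more machinery. One cosmetic remark: the appeal to Condition \ref{condg0} for positivity of the factors is unnecessary for measurability of the finite product $G_n$ (real-valued measurable functions multiply to a measurable function regardless), but it does no harm.
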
 

To prove the above theorem we will need the following lemmas. 

\begin{lemma}\label{lemGB}
 Let $\mc{G}$ be a $\sigma$-field, $A \in \mc{G}$, for some set $T$, 
 $C_t \in \mc{G}$, $t \in T$, and $\mc{C}=\sigma(C_t:t \in T)$. Then
 \begin{equation}\label{Acapsigma}
 A \cap \mc{C}:= \{A\cap C:C\in\mc{C}\}  \subset \sigma(A,\{A \cap C_t:t \in T\}).
 \end{equation}
\end{lemma}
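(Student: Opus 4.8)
The plan is to use the good-sets principle. Write $\mc{H}:=\sigma(A,\{A\cap C_t:t\in T\})$ for the target $\sigma$-field (a sub-$\sigma$-field of $\mc{G}$ on the ambient space, which I denote $\Omega$), and introduce the collection
\begin{equation}
\mc{D}:=\{C\subset \Omega:\ A\cap C\in \mc{H}\}.
\end{equation}
Since $\mc{C}=\sigma(C_t:t\in T)$, it suffices to show that $\mc{D}$ is a $\sigma$-field containing every generator $C_t$; then $\mc{C}\subset \mc{D}$, which is exactly the assertion $\{A\cap C:C\in\mc{C}\}\subset \mc{H}$. That each $C_t$ lies in $\mc{D}$ is immediate, because $A\cap C_t$ is by construction one of the generators of $\mc{H}$.

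It then remains to verify the three $\sigma$-field axioms for $\mc{D}$. First, $\Omega\in\mc{D}$ since $A\cap\Omega=A\in\mc{H}$, using that $A$ itself is a generator of $\mc{H}$. Second, for countable unions: if $C_n\in\mc{D}$ for $n\in\N$, then $A\cap\bigcup_n C_n=\bigcup_n(A\cap C_n)$ is a countable union of members of $\mc{H}$ and hence lies in $\mc{H}$, so $\bigcup_n C_n\in\mc{D}$. The step I expect to require the most care is closure under complements, where the complement must be taken in $\Omega$ but the membership criterion only controls the trace on $A$. The key is the identity
\begin{equation}
A\cap C^c= A\setminus(A\cap C)= A\cap(A\cap C)^c.
\end{equation}
If $C\in\mc{D}$, so that $A\cap C\in\mc{H}$, then $(A\cap C)^c\in\mc{H}$, and intersecting with $A\in\mc{H}$ gives $A\cap C^c\in\mc{H}$, i.e. $C^c\in\mc{D}$. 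Note this is precisely where it is essential that $A$ was included among the generators of $\mc{H}$; without $A\in\mc{H}$ the complement step would fail.

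Having shown that $\mc{D}$ is a $\sigma$-field containing $\{C_t:t\in T\}$, I conclude $\mc{C}=\sigma(C_t:t\in T)\subset\mc{D}$, which yields $A\cap C\in\mc{H}$ for every $C\in\mc{C}$ and hence the claimed inclusion~(\ref{Acapsigma}).
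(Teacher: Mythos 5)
Your proposal is correct and follows essentially the same good-sets argument as the paper: both define the collection of sets $C$ with $A\cap C\in\sigma(A,\{A\cap C_t:t\in T\})$, verify it is a $\sigma$-field containing the generators $C_t$ (with the complement step handled via $A\cap C^c=A\setminus(A\cap C)$, which is exactly where the inclusion of $A$ among the generators is used), and conclude by minimality of $\mc{C}$. The only cosmetic difference is that the paper restricts the good-sets class to members of $\mc{C}$ while you take it over all subsets of the ambient space; both variants work.
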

 \begin{proof}
Let $\mc{A}=\sigma(A,\{A \cap C_t:t \in T\})$ and $\mc{D}=\{C \in \mc{C}: A\cap C \in \mc{A}\}$.  
It holds $\emptyset \in \mc{D}$ and  $C_t \in\mc{D}$, $t \in T$. If $B_i \in \mc{D}$, $i \in \N$, then 
$A\cap \bigcup_{i\in \N}B_i =\bigcup_{i\in \N}A\cap B_i \in \mc{A}$ and thus $\bigcup_{i\in \N}B_i \in \mc{D}$.
If $B\in \mc{D}$, then $A\cap B'= A \setminus(A \cap B) \in \mc{A}$ and thus $B' \in \mc{D}$. Thus, 
$\mc{D}= \mc{C}$ and we have  (\ref{Acapsigma}).
\end{proof}

\begin{lemma}\label{lemBi}
Let  $(B,\mc{B})$ be a measurable space, $I$ be a countable set,
$\mc{B}_i$ be a sub-$\sigma$-field of $\mc{B}$, $i \in I$, and 
$B_i \in \mc{B}_i$, $i\in I$, be such that $\bigcup_{i \in I}B_i=B$. 
Then, 
\begin{equation}
\mc{K}=\{C \subset B:\forall_{i \in I} C \cap B_i \in \mc{B}_i\}  
\end{equation}
is a sub-$\sigma$-field of $\mc{B}$. If further for each $i \in I$,
for some set $T_i$ and  
for some $C_{i,j}\in \mc{B}_i$, $j \in T_i$, it holds 
$\mc{B}_i= \sigma(C_{i,t}:\ t \in T_i)$,
then
\begin{equation}\label{Kgen}
\mc{K}= \sigma(B_i,\{C_{i,t}\cap B_i:t \in T_i\}:i \in I).   
\end{equation}
\end{lemma}
\begin{proof}
For each $C \in \mc{K}$ it holds $C \cap B_i \in \mc{B}$, $i \in I$, and thus
$C= \bigcup_{i\in I}C \cap B_i \in \mc{B}$.  
It holds $\emptyset \in \mc{K}$, for $A_i\in \mc{K}$, $i \in I$, 
$(\bigcup_{i \in \N} A_i) \cap B_j = \bigcup_{i \in \N} (A_i \cap B_j) \in \mc{B}_j$, $j \in I$,
and for $A \in \mc{K}$, $A'\cap B_i=B_i \setminus (B_i \cap A) \in \mc{B}_i$, $i \in I$,
so that $\mc{K}$ is a sub-$\sigma$-field of $\mc{B}$. 
For $A \in \mc{K}$ we have $A = \bigcup_{i \in I} A \cap B_i$ and $A \cap B_i \in B_i\cap\mc{B}_i$, $i \in I$. 
Furthermore, $B_i\cap\mc{B}_i \subset \mc{K}$, $i \in I$. Thus,
$\mc{K}= \sigma(B_i\cap\mc{B}_i: i \in I)$.
Therefore, (\ref{Kgen}) follows from the fact that from Lemma \ref{lemGB}
\begin{equation}
B_i\cap\mc{B}_i \subset \sigma(B_i,\{C_{i,t}\cap B_i:t \in T_i\}), \quad i \in I. 
\end{equation}
\end{proof}

\begin{lemma}\label{lemMesDFtau}
Let $(D,\mc{D})$ be a measurable space, $\mc{F}_n$, $n \in \N$, be a filtration in a measurable space $(\Omega,\mc{F})$, and $\tau$ be a stopping time for 
such a filtration. Then,
\begin{equation}\label{longProd}
\mc{D}\otimes \mc{F}_{\tau}= \{C \in D\times\Omega :\forall_{k\in \N\cup\{\infty\}} C \cap (D\times \{\tau =k\}) \in \mc{D}\otimes \mc{F}_k\}.
\end{equation}
\end{lemma}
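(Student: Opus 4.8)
The plan is to read off the claim directly from Lemma \ref{lemBi}, applied to the countable partition of $D \times \Omega$ induced by the level sets of $\tau$. I set $I = \N \cup \{\infty\}$ and, writing $\mc{F}_\infty = \sigma(\bigcup_{n\in\N}\mc{F}_n)$ for the ambient $\sigma$-field and $\mc{B} := \mc{D}\otimes\mc{F}_\infty$, I put $B_k = D\times\{\tau = k\}$ and $\mc{B}_k = \mc{D}\otimes\mc{F}_k$ for each $k \in I$ (with $\mc{B}_\infty = \mc{D}\otimes\mc{F}_\infty$). I would then verify the hypotheses of Lemma \ref{lemBi}: since $\tau$ is a stopping time, $\{\tau = k\} = \{\tau\leq k\}\setminus\{\tau\leq k-1\}\in\mc{F}_k$ for $k\in\N$ and $\{\tau = \infty\} = \Omega\setminus\bigcup_{n\in\N}\{\tau = n\}\in\mc{F}_\infty$, so $B_k\in\mc{B}_k$; each $\mc{B}_k$ is a sub-$\sigma$-field of $\mc{B}$; the $B_k$ partition $D\times\Omega$; and each $\mc{B}_k = \mc{D}\otimes\mc{F}_k$ is generated by its measurable rectangles $E\times F$ ($E\in\mc{D}$, $F\in\mc{F}_k$), which supply the generating families $C_{k,t}$ required by the lemma.

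With this setup the set $\mc{K}$ of Lemma \ref{lemBi} is exactly the right-hand side of (\ref{longProd}), and the lemma yields both that $\mc{K}$ is a sub-$\sigma$-field of $\mc{B}$ and that
\[
\mc{K} = \sigma\big(E\times(F\cap\{\tau = k\}) : E\in\mc{D},\ F\in\mc{F}_k,\ k\in I\big),
\]
using $(E\times F)\cap(D\times\{\tau = k\}) = E\times(F\cap\{\tau = k\})$ and noting that each $B_k$ is the special case $E = D$, $F = \Omega$. It then remains to identify this generated $\sigma$-field with $\mc{D}\otimes\mc{F}_\tau$, which I would do by comparing generators in both directions.

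For $\mc{D}\otimes\mc{F}_\tau\subseteq\mc{K}$ I take a rectangle $E\times F$ with $F\in\mc{F}_\tau$ and check membership in $\mc{K}$ by slicing: for every $k$, $(E\times F)\cap(D\times\{\tau = k\}) = E\times(F\cap\{\tau = k\})\in\mc{D}\otimes\mc{F}_k$, since $F\cap\{\tau = k\}\in\mc{F}_k$ (for finite $k$ this is the characterization of $\mc{F}_\tau$ via level sets, cf. (\ref{mcftau}); for $k = \infty$ it holds because $F,\{\tau = \infty\}\in\mc{F}_\infty$); as $\mc{K}$ is a $\sigma$-field containing these rectangle generators of $\mc{D}\otimes\mc{F}_\tau$, the inclusion follows. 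For $\mc{K}\subseteq\mc{D}\otimes\mc{F}_\tau$ I verify that each generator $E\times(F\cap\{\tau = k\})$ of $\mc{K}$ is a rectangle whose second factor lies in $\mc{F}_\tau$: intersecting $F\cap\{\tau = k\}$ with a level set $\{\tau = n\}$ gives $\emptyset$ when $n\neq k$ and $F\cap\{\tau = k\}\in\mc{F}_k$ when $n = k$, and $F\cap\{\tau = k\}\in\mc{F}_\infty$, so the stopping-time criterion holds and the generator lies in $\mc{D}\otimes\mc{F}_\tau$. Combining the two inclusions gives $\mc{K} = \mc{D}\otimes\mc{F}_\tau$, which is (\ref{longProd}).

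The step I expect to need the most care is the bookkeeping for $k = \infty$: the set $\{\tau = \infty\}$ is only $\mc{F}_\infty$-measurable and not $\mc{F}_n$-measurable for finite $n$, so it must be treated separately from the finite levels both when checking the hypotheses of Lemma \ref{lemBi} and when testing membership in $\mc{F}_\tau$, relying on $\{\tau = \infty\}\cap\{\tau = n\} = \emptyset$ for all finite $n$. Everything else is a routine generator-by-generator comparison.
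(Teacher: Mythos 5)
Your proof is correct and follows essentially the same route as the paper's: both apply Lemma \ref{lemBi} with $B_k=D\times\{\tau=k\}$, $\mc{B}_k=\mc{D}\otimes\mc{F}_k$ for $k\in\N\cup\{\infty\}$ to identify the right-hand side of (\ref{longProd}) with the $\sigma$-field generated by the sets $E\times(F\cap\{\tau=k\})$, and then establish the two inclusions by comparing rectangle generators. The extra care you flag for the $k=\infty$ level is handled implicitly in the paper in the same way, so there is no substantive difference.
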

\begin{proof}
Let us denote the right-hand side of \ref{longProd} as $\mc{K}$. Then, it is equal to such a
$\mc{K}$ from Lemma \ref{lemBi} for $B=D\times\Omega$, $\mc{B}=\mc{D}\otimes \mc{F}_{\infty}$,
$I=\N\cup\{\infty\}$, and for $B_i= (D\times \{\tau =i\})$ and $\mc{B}_i = \mc{D}\otimes \mc{F}_i$, $i \in I$, which let us further consider.  
From that lemma, 
\begin{equation}
\mc{K}= \sigma(C_1 \times (C_2\cap \{\tau =i\}):C_1\in \mc{D},\ C_2\in \mc{F}_{i},i \in I).   
\end{equation}
By definition, $\mc{D}\otimes \mc{F}_{\tau}=\sigma(C_1\times C_2:C_1 \in \mc{D},\ C_2 \in \mc{F}_{\tau})$. 
For each $C_1 \in \mc{D}$, $C_2 \in \mc{F}_{\tau}$, and $i\in I$ it holds
$(C_1\times C_2) \cap (D\times \{\tau =i\})=C_1\times (C_2\cap\{\tau =i\})\in \mc{D}\otimes \mc{F}_i$, so that 
$\mc{D}\otimes \mc{F}_{\tau}\subset \mc{K}$.  
For each $C_1\in \mc{D}$, $i \in I$, and $C_2\in \mc{F}_{i}$, it holds 
$C_1 \times (C_2\cap \{\tau =i\}) \in \mc{D}\otimes \mc{F}_{\tau}$, so that 
$\mc{K} \subset \mc{D}\otimes \mc{F}_{\tau}$. 
\end{proof}

Let us now provide a proof  of Theorem \ref{thLmes}. 
\begin{proof} 
From Condition \ref{condbxmes} and Condition \ref{condLmes} holding for $\wt{L}$, for $n \in \N$, for $\gamma_n(b)$ given by 
$\lambda(b)$, $b \in A$, in the way that $\gamma_n$ is given by $\lambda$ in the previous section, 
$(b,x)\to \gamma_n(b)(x)$ is measurable from $\mc{S}(A)\otimes (E,\mc{F}_n)$ to $\mc{S}(\R)$. 
Let $B \in \mc{B}(\R)$. For $n \in \N$ it holds 
\begin{equation} 
L^{-1}(B)\cap(A \times \{\tau=n\}) = \gamma_n^{-1}(B)\cap(A \times \{\tau=n\}) \in \mc{B}(A)\otimes \mc{F}_n. 
\end{equation} 
Furthermore, $L^{-1}(B)\cap(A \times \{\tau=\infty\})$ is equal to $A \times \{\tau=\infty\}$ if 
$\epsilon \in B$ and to $\emptyset$ otherwise. Thus, from Lemma \ref{lemMesDFtau}, $L^{-1}(B)\in \mc{B}(A)\otimes \mc{F}_\tau$. 
\end{proof}

\begin{condition}\label{condzntau} 
$\PQ_1(Z\neq0,\ \tau=\infty)=0$ and $\PQ(b)(Z\neq0,\ \tau=\infty)=0$, $b\in A$.  
\end{condition}
\begin{condition}\label{equivCond}
It holds  $\tau < \infty$, $\PQ_1$ a.s. and $\PQ(b)$ a.s., $b\in A$. 
\end{condition}

\begin{remark}\label{remztauequiv}
From (\ref{tauinfsim}), Condition \ref{condB1} is satisfied 
for $B_1 = \{\tau <\infty\}$. Thus, for such a $B_1$, Condition \ref{condZB} 
is equivalent to Condition \ref{condzntau}. In particular,  Condition \ref{condpqpq1} is implied by Condition  
\ref{equivCond}. 
\end{remark}

\begin{definition}\label{defLin}
Let $B=\R^d$, $A=\R^l$, and let
an $\R^{d\times l}$-valued 
process $\Lambda=(\Lambda_k)_{k\geq 0}$ on $\mc{C}$ 
be such that for each  $j\in\{1,\ldots,l\}$, 
$(((\Lambda_k)_{i,j})_{i=1}^d)_{k \in \N} \in \mc{J}$. 
Then, we define the corresponding linear parametrization $\lambda$ of 
tilting processes as in (\ref{lambdaAJ}) to be such that
\begin{equation}\label{lambdalambda}
\lambda_k(b) =\Lambda_kb,\quad  k \in \N, b \in A.
\end{equation}
\end{definition}
Note that for $\lambda$ as in the above definition 
Condition \ref{condbxmes} holds and we have $\PQ(0)=\PQ_1$. 

\subsection{Change of measure for Gaussian stopped sequences using a tilting process}\label{secGSS}
Let $\wt{\PU}_1=\ND(0,I_d)$, $X=\id_{\R^d}$, and let $\wt{\PU}(b)$ and $\wt{L}(b)$, $b \in B:=\R^d$, 
be the exponentially tilted distributions and densities corresponding to such $X$, $\PQ_1=\wt{\PU}_1$,  and $A=B$, as in Section \ref{secECM}. 
For such distributions and densities, let us consider the corresponding definitions for stopped sequences 
for some tilting process $\lambda \in \mc{J}$ and $h_k$, $k \in \N$, as in Section \ref{secStopped}.  
In particular, $\kappa(b)(x)=\exp(-\frac{1}{2}|b|^2+b^Tx)$. 
Let $\dot{\eta}_{k} = \eta_{k} - \lambda_{k-1}$, $k \in \N_+$. 
The following theorem is a discrete version of Girsanov's theorem. 
\begin{theorem}\label{thNormGirs}
Under $\PV$, the random variables $\dot{\eta}_{k}$, $k \in \N$, are i.i.d. $\sim \ND(0,I_d)$.  
\end{theorem}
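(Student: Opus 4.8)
The plan is to read off from the defining property of $\PV$ (point 1 of Theorem \ref{thDiscrGirs}) that under $\PV$ the conditional law of $\eta_{k+1}$ given $\mc{F}_k$ is $\ND(\lambda_k,I_d)$, then to translate this into the statement that $\dot{\eta}_{k+1}=\eta_{k+1}-\lambda_k$ is, conditionally on $\mc{F}_k$, a standard Gaussian independent of $\mc{F}_k$, and finally to iterate this to obtain full joint independence.

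First I would carry out the conditional-density computation. By Theorem \ref{thDiscrGirs}, under $\PV$ the variable $\eta_{k+1}$ has conditional density $\kappa(\lambda_k)$ with respect to $\wt{\PU}_1=\ND(0,I_d)$ given $\mc{F}_k$, and $\eta_1$ has density $\kappa(h_0)=\kappa(\lambda_0)$ with respect to $\wt{\PU}_1$. Since $\kappa(b)(x)=\exp(-\frac12|b|^2+b^Tx)$, multiplying this by the $\ND(0,I_d)$ density gives $(2\pi)^{-d/2}\exp(-\frac12|x-b|^2)$, the density of $\ND(b,I_d)$. Hence for every bounded measurable $g$,
\[
\E_{\PV}(g(\eta_{k+1})\mid \mc{F}_k)=\int g(x)\,\mathrm{d}\ND(\lambda_k,I_d)(x)\quad\text{a.s.},
\]
and because $\lambda_k$ is $\mc{F}_k$-measurable, applying this to $g(\cdot)=\wt{g}(\cdot-\lambda_k)$ and changing variables $y=x-\lambda_k$ yields
\[
\E_{\PV}(\wt{g}(\dot{\eta}_{k+1})\mid \mc{F}_k)=\int \wt{g}(y)\,\mathrm{d}\ND(0,I_d)(y)\quad\text{a.s.}
\]
The right-hand side is a nonrandom constant, so $\dot{\eta}_{k+1}$ is independent of $\mc{F}_k$ and distributed as $\ND(0,I_d)$. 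The same computation with the unconditional density of $\eta_1$ shows $\dot{\eta}_1=\eta_1-\lambda_0\sim\ND(0,I_d)$, recalling that $\lambda_0=h_0$ is deterministic.

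Next I would assemble these facts by induction on $n$ to show that $(\dot{\eta}_1,\ldots,\dot{\eta}_n)$ are i.i.d. $\ND(0,I_d)$. For the inductive step, note that each $\dot{\eta}_j=\eta_j-\lambda_{j-1}$ with $j\leq n$ is $\mc{F}_n$-measurable, since $\eta_j$ is $\mc{F}_n$-measurable and $\lambda_{j-1}$ is $\mc{F}_{j-1}\subset\mc{F}_n$-measurable. Thus $(\dot{\eta}_1,\ldots,\dot{\eta}_n)$ is $\mc{F}_n$-measurable, and since $\dot{\eta}_{n+1}$ is independent of $\mc{F}_n$ with law $\ND(0,I_d)$, the vector $(\dot{\eta}_1,\ldots,\dot{\eta}_{n+1})$ consists of independent $\ND(0,I_d)$ variables. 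As independence of the whole sequence is equivalent to that of all its finite subfamilies, this completes the argument.

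I do not expect a genuine obstacle here: the statement is the discrete-time Girsanov identity, and the only slightly delicate points are phrasing the conditional law correctly, using the conditional-density characterization in point 1 of Theorem \ref{thDiscrGirs} rather than the Radon--Nikodym form in point 2, and checking the $\mc{F}_n$-measurability of the already-centred increments so that independence from $\mc{F}_n$ upgrades to joint independence.
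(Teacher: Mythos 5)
Your proof is correct, but it takes a genuinely different route from the paper's. The paper works from point 2 of Theorem \ref{thDiscrGirs} (the Radon--Nikodym form $\frac{d\PV_{|n}}{d\PU_{|n}}=\gamma_n$): it writes $\PV((\dot{\eta}_i)_{i=1}^n\in\Gamma)=\E_{\PU}(\I((\dot{\eta}_i)_{i=1}^n\in\Gamma)\gamma_n)$ as one explicit $nd$-dimensional integral, observes that the factor $\gamma_n$ recombines with the product of standard Gaussian densities into $\prod_k(2\pi)^{-d/2}\exp(-\frac12|x_k-h_{k-1}|^2)$, and then performs the triangular change of variables $y_k=x_k-h_{k-1}(x_1,\ldots,x_{k-1})$ innermost-out (via Fubini) to land on the standard Gaussian product measure. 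You instead use point 1 of Theorem \ref{thDiscrGirs}: the one-step conditional density $\kappa(\lambda_k)$ times the $\ND(0,I_d)$ density is the $\ND(\lambda_k,I_d)$ density, so the conditional law of $\dot{\eta}_{k+1}$ given $\mc{F}_k$ is the constant $\ND(0,I_d)$, whence independence from $\mc{F}_k$, and joint independence follows by induction using the $\mc{F}_n$-measurability of the earlier centred increments. The two arguments are essentially the same computation organized differently: the paper's is a single self-contained density calculation but hides the probabilistic structure inside an iterated change of variables, while yours isolates the one-step statement and lets the standard filtration induction do the bookkeeping, which makes the ``discrete Girsanov'' mechanism more transparent. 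The only step you gloss slightly is the substitution $g(\cdot)=\wt{g}(\cdot-\lambda_k)$ with $\lambda_k$ random: the conditional-density characterization does give $\E_{\PV}(F(\lambda_k,\eta_{k+1})\mid\mc{F}_k)=\int F(\lambda_k,x)\kappa(\lambda_k)(x)\,\mathrm{d}\wt{\PU}_1(x)$ for jointly measurable bounded $F$ by a monotone class argument, so this freezing step is routine, but it deserves a sentence. (Also note the theorem's index set should be read as $k\in\N_+$, since $\dot{\eta}_0$ is not defined; your proof indexes correctly.)
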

\begin{proof}
Writing $h_k$ in the place of $h_k(x_1,\ldots,x_k)$, $k\in \N_+$,
for each $n\in\N_+$ and $\Gamma \in (\mc{B}(\R^{d}))^n$ 
\begin{equation} 
\begin{split} 
 \PV((\dot{\eta}_i)_{i=1}^n \in \Gamma)&=\E_{\PU}(\I((\dot{\eta}_i)_{i=1}^n \in \Gamma)\gamma_n)\\ 
 &= \int_{(\R^d)^n} \I((x_k- h_{k-1})_{k=1}^n \in \Gamma)\frac{1}{(2\pi)^{nd/2}}\exp(-\frac{1}{2}\sum_{k=1}^n (x_k - h_{k-1})^{2}) \,\mathrm{d}x_n\ldots \,\mathrm{d}x_1\\ 
 &= \int_{(\R^d)^n} \I(y \in \Gamma)\frac{1}{(2\pi)^{nd/2}}\exp(-\frac{1}{2}\sum_{k=1}^n y_k^{2})\,\mathrm{d}y_n\ldots \,\mathrm{d}y_1 \\ 
 &= \PU((\eta_i)_{i=1}^n \in \Gamma),
\end{split} 
\end{equation} 
where we used Fubini's theorem and a sequence of changes of variables $y_k(x_k) = x_k - h_{k-1}$, $k =1, \ldots,n$, 
each of which is a diffeomorphism with a Jacobian 1. 
\end{proof}
Let us consider a function $\pi:E\to E$ such that $\pi=(\dot{\eta}_i)_{i\in \N_+}$. 
Its inverse function $\pi^{-1}$ is given by the formula 
\begin{equation}\label{piinv}
\pi^{-1}= (\eta_k + \lambda_{k-1}(\pi^{-1}))_{k \in \N_+},
\end{equation}
or in more detail we have $\pi^{-1}=(\ddot{\eta}_i)_{i=1}^{\infty}$
for $\ddot{\eta}_i=\eta_i +\ddot{\lambda}_{i-1}$, $i \in \N_+$, where $\ddot{\lambda}_0=h_0$ and 
$\ddot{\lambda}_k=h_k((\ddot{\eta}_i)_{i=1}^k)$, $k\in\N_+$. 
Note that both $\pi$ and $\pi^{-1}$ are measurable from $\mc{U}_n:=(E,\mc{F}_n)$ to $\mc{U}_n$, $n\in\N$,
i.e. $\pi$ is an isomorphism of $\mc{U}_n$, $n \in \N$, and thus also of $\mc{U}_{\infty}:=(E,\mc{F}_{\infty})=\mc{C}$. From Theorem 
\ref{thNormGirs} we have  
$\PU(B)=\PV(\pi^{-1}[B]),\ B \in \mc{E}$, so that 
\begin{equation}\label{pupv}
\PU(\pi[B]) =\PV(B),\quad B \in \mc{E}. 
\end{equation} 
In particular, for each random variable $Y$ on $\mc{C}$ 
the distribution of $Y\pi^{-1}:=Y(\pi^{-1})$ under $\PU$ is the same as of $Y$ under $\PV$. 
\begin{remark}
For $\overrightarrow{\pi}$ denoting the image function of $\pi$, we have
\begin{equation}
\begin{split}
\overrightarrow{\pi}[\mc{F}_{\tau}] &= \{\pi[B]:\ B\in \mc{F}_\tau\}\\ 
&= \{\pi[B]:\ B \in \mc{F}_{\infty},\ B \cap \{\tau=k\} \in \mc{F}_k,\ k \in \N\cup\{\infty\} \}\\ 
&= \{C \in \mc{F}_{\infty}:\ \pi^{-1}[C] \cap \{\tau=k\} \in \mc{F}_k,\ k \in \N\cup\{\infty\}\}\\ 
&= \{C \in \mc{F}_{\infty}:\ C \cap \{\tau\pi^{-1}=k\} \in \mc{F}_k,\ k \in \N\cup\{\infty\}\}\\ 
&=\mc{F}_{\tau \pi^{-1}}, \\ 
\end{split} 
\end{equation} 
where in the fourth equality we used the fact that $\pi$ is an isomorphism of $\mc{U}_n$, $n\in \N\cup\{\infty\}$. 
In particular, if a random variable $Y$ on $\mc{C}$ is $\mc{F}_\tau$-measurable, then 
$Y\pi^{-1}$ is $\mc{F}_{\tau\pi^{-1}}$-measurable, i.e. it depends only on the information available until the time $\tau\pi^{-1}$. 
\end{remark}
For some parametrization $\lambda(b)$, $b\in A$, of tilting processes as in (\ref{lambdaAJ}),
let $\pi_b$ be given by $\lambda(b)$ in the way that $\pi$ is given by $\lambda$ above. 
Let further $\PQ(b)$ and $L(b)$, $b\in A$, correspond to such a parametrization
as in Section \ref{secParamStopped}, and let $\mc{S}_1=(\Omega_1,\mc{F}_1)$ and $\PQ_1$ be as in that section. 
Let $\xi: E \times A \to E$ be such that 
\begin{equation}\label{xigss}
\xi(\eta, b) = \pi_b^{-1}(\eta),\quad b \in A.
\end{equation}
\begin{theorem}
Under Condition \ref{condbxmes} and the above definitions, Condition \ref{condxi} holds
for $\mc{C}_1= \mc{C}$ and $\PR_1=\PU$.  
\end{theorem}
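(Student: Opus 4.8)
The plan is to verify in turn the three requirements of Condition~\ref{condxi}: that $A$ satisfies (\ref{abrl}), that $\xi$ is (jointly) measurable from $\mc{C}\otimes\mc{S}(A)$ into $\mc{S}_1=(E,\mc{F}_\tau)$, and that the pushforward identity (\ref{ximin}) holds. The first is immediate, since $A\in\mc{B}(\R^l)$ is a standing assumption of Section~\ref{secParamStopped} and $\PR_1=\PU$ is a probability on $\mc{C}_1=\mc{C}$. Throughout I would exploit that the unparametrized constructions of Section~\ref{secGSS} apply to each fixed $b\in A$, because $\lambda(b)\in\mc{J}$; in particular $\pi_b$, the relation (\ref{pupv}), and Theorem~\ref{thNormGirs} are available with $\pi$, $\PV$ replaced by $\pi_b$, $\PV(b)$.

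For the pushforward identity I would first note that, $\pi_b$ being a bijection of $E$, for every $B\in\mc{F}_\tau$ one has $\xi(\cdot,b)^{-1}[B]=\{\eta:\pi_b^{-1}(\eta)\in B\}=\pi_b[B]$. The analogue of (\ref{pupv}) for the data $\lambda(b)$ then gives $\PU(\pi_b[B])=\PV(b)(B)$. Since $B\in\mc{F}_\tau$ and $\PQ(b)=\PV_{|\tau}(b)$ is the restriction of $\PV(b)$ to $\mc{F}_\tau$, this yields $\PR_1(\xi(\cdot,b)^{-1}[B])=\PV(b)(B)=\PQ(b)(B)$, which is exactly (\ref{ximin}).

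The substantive step is the joint measurability of $\xi(\eta,b)=\pi_b^{-1}(\eta)=(\ddot{\eta}_i)_{i=1}^{\infty}$, which I would prove coordinatewise by induction on $i$ using the recursion $\ddot{\eta}_i=\eta_i+\ddot{\lambda}_{i-1}$ from (\ref{piinv}). For the inductive step, Condition~\ref{condbxmes} makes $(b,x)\to\lambda_{i-1}(b)(x)$ measurable with respect to $\mc{B}(A)\otimes\mc{F}_{i-1}$; since $\mc{F}_{i-1}=\sigma(\wt{\eta}_{i-1})$, this $\sigma$-field coincides with $\sigma(V)$ for $V=(b,\eta_1,\ldots,\eta_{i-1})$, so Lemma~\ref{lemMesPsi} (whose Borel-space hypothesis is met by $\mc{S}(B)$) furnishes a measurable $g_{i-1}$ with $\lambda_{i-1}(b)(x)=g_{i-1}(b,x_1,\ldots,x_{i-1})$, whence $\ddot{\lambda}_{i-1}=g_{i-1}(b,\ddot{\eta}_1,\ldots,\ddot{\eta}_{i-1})$. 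The induction hypothesis makes $(\eta,b)\to(b,\ddot{\eta}_1,\ldots,\ddot{\eta}_{i-1})$ jointly measurable, and composing with $g_{i-1}$ shows that $\ddot{\lambda}_{i-1}$, hence $\ddot{\eta}_i$, is jointly measurable; the base case $i=1$ follows likewise from the measurability of $b\to\lambda_0(b)$ (which, as $\mc{F}_0$ is trivial, depends on $b$ alone). Since $\mc{E}$ is generated by the coordinate projections, $\xi$ is then measurable from $\mc{C}\otimes\mc{S}(A)$ into $(E,\mc{E})$, and therefore also into $\mc{S}_1=(E,\mc{F}_\tau)$ because $\mc{F}_\tau\subseteq\mc{E}$.

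I expect the inductive measurability argument to be the main obstacle, specifically the bookkeeping of factoring the $\mc{F}_{i-1}$-measurable coordinate $\lambda_{i-1}(b)$ jointly through $b$ and the first $i-1$ coordinates so that the recursion closes; by contrast, the pushforward identity reduces immediately to (\ref{pupv}) once the bijectivity of $\pi_b$ is invoked.
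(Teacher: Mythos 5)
Your proof is correct and follows essentially the same route as the paper: joint measurability of $\xi$ is obtained by induction from the recursion (\ref{piinv}) (restricting to $\mc{F}_n\otimes\mc{B}(A)$ at each level) and then passed down to $\mc{F}_\tau\subset\mc{E}$, while (\ref{ximin}) is read off directly from (\ref{pupv}) applied to $\lambda(b)$. The only difference is one of detail: you make explicit the Doob--Dynkin factoring of $\lambda_{i-1}(b)$ through $(b,x_1,\ldots,x_{i-1})$ via Lemma \ref{lemMesPsi}, a step the paper's one-line induction leaves implicit.
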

\begin{proof}
From  (\ref{piinv}) it follows by induction that
$\xi$ is measurable from $\mc{U}_n\otimes\mc{S}(A)$ to $\mc{U}_n$, $n \in\N$.
Thus, it is also measurable from 
$\mc{C}\otimes\mc{S}(A)$ to $\mc{C}$ and due to
$\mc{F}_1=\mc{F}_\tau\subset \mc{E}$, also to $\mc{S}_1$.
Furthermore, from (\ref{pupv}), 
\begin{equation}
\PU(\xi(\cdot, b)^{-1}[B])= \PV_{|\tau}(b)(B),\quad B \in \mc{F}_\tau, 
\end{equation}
i.e.  (\ref{ximin}) holds. 
\end{proof}

For each random variable $Y$ on $\mc{C}$ and $b\in A$, let us denote 
\begin{equation}\label{ybdef}
Y^{(b)}= Y\pi_b^{-1}= Y(\xi(\cdot, b)).
\end{equation}
Note that from (\ref{piinv}) we have 
\begin{equation}\label{xietab}
(\xi(\eta, b))_k=\eta_k^{(b)} = \eta_k + \lambda_{k-1}(b)^{(b)},\quad k \in \N_+. 
\end{equation}
For each $b \in A$ it holds
\begin{equation}\label{LLETGS}
L(b)=\I(\tau < \infty)\exp(\sum_{k=0}^{\tau-1} (\frac{1}{2}|\lambda_k(b)|^2 - \lambda_k(b)^T \eta_{k+1}))+\I(\tau =\infty)\epsilon.
\end{equation}
From (\ref{xietab}) and (\ref{LLETGS}), for each $b',b \in A$ we have
\begin{equation}\label{LBBP}
L(b)^{(b')}= \I(\tau^{(b')} < \infty)
\exp(\sum_{k=0}^{\tau^{(b')}-1} (\frac{1}{2}|\lambda_k(b)^{(b')}|^2 - (\lambda_k(b)^{(b')})^T (\eta_{k+1}+ \lambda_k(b')^{(b')})))+
\I(\tau^{(b')} =\infty)\epsilon
\end{equation}
 and in particular
\begin{equation}\label{LBB}
L(b)^{(b)}= \I(\tau^{(b)} < \infty)\exp(-\sum_{k=0}^{\tau^{(b)}-1} (\frac{1}{2}|\lambda_k(b)^{(b)}|^2 + (\lambda_k(b)^{(b)})^T \eta_{k+1}))+
\I(\tau^{(b)} =\infty)\epsilon. 
\end{equation}

\subsection{\label{secLETS}Linearly parametrized exponential tilting for stopped sequences} 
Let $\wt{\PU}_1$, $\wt{\mc{C}}=(\wt{E},\wt{\mc{E}})$, $\wt{X}$, $\wt{\PU}(b)$, $\wt{L}(b)$, 
$\wt{\Psi}(b)$, $b \in B=\R^d$, and $\wt{F}$ be 
as some $\PQ_1$, $\mc{S}_1=(\Omega_1,\mc{F}_1)$, $X$, $\PQ(b)$, $L(b)$, $\Psi(b)$, $b \in A=B$, 
and $F$ in the ECM setting in Section \ref{secECM}. 
Let $\lambda$ be a linear paramatrization of tilting 
processes corresponding to some $\Lambda$ as in Definition \ref{defLin} 
and consider the corresponding families of probabilities 
$\PQ(b)$ and densities  $L(b)$, $b \in A$, as in Section \ref{secParamStopped}. 
Note that we now have from (\ref{Leps}), for $U(b)=\I(\tau<\infty)\sum_{k=0}^{\tau-1}\wt{\Psi}(\lambda_k(b))$, $b \in A$, 
and $H = -\I(\tau<\infty)\sum_{k=0}^{\tau-1} (\wt{X}(\eta_{k+1}))^T\Lambda_k$, that 
\begin{equation}\label{lnlbLETS}
L(b)= \I(\tau<\infty)\exp(U(b)+ Hb) +\epsilon\I(\tau=\infty),\quad b\in A. 
\end{equation}
We shall call the above parametrization of IS the linearly parametrized exponentially tilted stopped sequences (LETS) setting.  
Its special case in which $\wt{\PU}_1=\ND(0,I_d)$ and $\wt{X}= \id_{\R^d}$ 
shall be called the linearly parametrized exponentially tilted Gaussian stopped sequences (LETGS) setting. Note that the LETGS setting
is a special case of the parametrized IS for Gaussian stopped sequences as in Section \ref{secGSS}. 
In the LETGS setting $H = -\I(\tau<\infty)\sum_{k=0}^{\tau-1} \eta_{k+1}^T\Lambda_k$ and for 
$G:= \I(\tau<\infty)\frac{1}{2}\sum_{k=0}^{\tau-1}\Lambda_k^T\Lambda_k$ 
we have  $U(b)=b^TGb$, $b\in A$, so that 
\begin{equation}\label{lnlG}
\begin{split}
L(b)=
\I(\tau < \infty)\exp(b^TGb+Hb) + \I(\tau =\infty)\epsilon, \quad b\in A.\\  
\end{split}
\end{equation}
Furthermore, we have
$G^{(b')}=\I(\tau^{(b')}<\infty)\frac{1}{2}\sum_{k=0}^{\tau^{(b')}-1}(\Lambda_k^{(b')})^T\Lambda_k^{(b')}$ and 
\begin{equation}\label{hbp}
H^{(b')} = -\I(\tau^{(b')}<\infty)\sum_{k=0}^{\tau^{(b')}-1}(\eta_{k+1} + \Lambda_k^{(b')}b')^T\Lambda_k^{(b')}, 
\end{equation}
and formula (\ref{LBBP}) can be rewritten as 
\begin{equation}\label{lbbp2}
L(b)^{(b')}=\I(\tau^{(b')} < \infty)\exp(b^TG^{(b')}b+H^{(b')}b) + \I(\tau^{(b')} =\infty)\epsilon.\\   
\end{equation}
\begin{remark}\label{remInfLb}
Note that in the LETGS setting, on $\tau <\infty$ we have 
\begin{equation}\label{infLb}
\inf_{b\in \R^l} \ln(L(b)) \geq \sum_{k=1}^{\tau}\inf_{y\in \R^d} (\frac{1}{2}|y|^2 - \eta_{k}^Ty) 
= -\frac{1}{2}\sum_{k=1}^{\tau}|\eta_{k}|^2  \in \R.\\
\end{equation}
 \end{remark}
 \begin{remark}
  In our numerical experiments performing IS for computing expectations of functionals of an Euler scheme in the LETGS setting, 
  the simulation times were roughly proportional to the replicates of $\tau$ under $\PQ(b)$. 
  Thus, on several occasions in this work when dealing with the 
  LETGS setting we shall consider the theoretical cost $C=s\tau$ for some $s \in \R_+$.
 \end{remark}

\begin{remark}\label{remConstMat} 
Consider the special case of the LETS setting in which $\Lambda$ is a sequence of constant matrices and 
$\tau=n \in \N_+ $ is deterministic. 
Then, for the above $\PQ(b)$ and $L(b)$, $b\in A$, 
a family of probabilities $\PQ'(b), b\in A$, on $\wt{\mc{C}}^{n}$ 
such that $\PQ'(b)(\wt{\eta}_n[C])=\PQ(b)(C)$, $C \in \mc{F}_n$, $b\in A$, 
and $L':A\times\wt{E}^n\to \R$ such that $L'(b)(\wt{\eta}_n)=L(b)$, 
are the exponentially tilted families of probabilities and densities 
corresponding to $\PQ_1':=\wt{\PU}_1^n$ 
and $X'(\omega):=\sum_{i=1}^n\Lambda_{i-1}^T\wt{X}(\omega_i)$, 
$\omega=(\omega_i)_{i=1}^n \in \wt{E}^n$, as in Section \ref{secECM}. 
Note that for each random variable $Y'$ on $\wt{\mc{C}}^n$, $Y=Y'(\wt{\eta}_n)$ is an $\mc{F}_n$-measurable random variable
with the same distribution  under $\PQ(b)$ as of $Y'$ under $\PQ(b)'$, $b \in A$. 
Note also that if further $\tau=1$ and $\Lambda_0=I_d$, then $\PQ_1':=\wt{\PU}_1$, $L'(b)=\wt{L}(b)$ 
$\PQ'(b)=\wt{\PU}(b)$, $b \in A$, and $X'=\wt{X}$. 
\end{remark}


\section{\label{secBrown}IS for a Brownian motion up to a stopping time}
Let us now briefly discuss IS for computing expectations 
of functionals of a Brownian motion up to a stopping time. 
For some $d \in \N_+$, let $B=(B_t)_{t \geq 0}$ be the coordinate process on the Wiener space $\mc{C}([0,\infty),\R^d)$, whose measurable space let us denote as $\mc{W}$.
Let $(\wt{\mc{F}}_t)_{t\geq 0}$ be the natural filtration of $B$.
Let $\wt{\PU}$ be the unique probability on $\mc{W}$ for which $B$ is a $d$-dimensional Brownian motion 
(see Chapter 1, Section 3 in \cite{revuz1999continuous}). 
For a probability $\PS$ on $\mc{W}$ and a stopping time $\tau$ for $(\wt{\mc{F}}_t)_{t\geq 0}$, we denote $\PS_{|\tau}=\PS_{|\wt{\mc{F}}_\tau}$. 
From Girsanov's theorem, if $(\wt{\lambda}_t)_{t\geq 0}$ is a predictable locally square-integrable $\R^d$-valued process on $\mc{W}$ for which
\begin{equation}
\wt{\gamma}_t=\exp\left(\int_{0}^t \wt{\lambda}_s^T \, \mathrm{d}B_s - \frac{1}{2}\int_{0}^t |\wt{\lambda}_s|^2 \mathrm{d}s\right), \quad t \geq 0,
\end{equation}
is a martingale under $\wt{\PU}$ (for which e.g. Novikov's condition suffices), then from Kolmogorov's extension theorem there exists 
a unique measure $\wt{\PV}$ on $\mc{W}$ such that $\frac{d\wt{\PU}_{|t}}{d\wt{\PV}_{|t}}=\wt{\gamma}_t$, $t \geq 0$. 
Furthermore, 
\begin{equation}\label{wtBsol}
\wt{B}_t = B_t-\int_{0}^t\wt{\lambda}_s\, \mathrm{d}s,\quad t\geq 0,
\end{equation}
is a Brownian motion under $\wt{\PV}$. 
From Proposition 1.3, Chapter 8 in \cite{revuz1999continuous}, for a stopping time $\wt{\tau}$ for $(\wt{\mc{F}}_t)_{t \geq 0}$,
we have 
$\I(\wt{\tau}<\infty)\wt{\gamma}_{\wt{\tau}}=\left(\frac{d\wt{\PV}_{|\wt{\tau}}}{d\wt{\PU}_{|\wt{\tau}}}\right)_{\wt{\tau}<\infty}$ 
and thus $\wt{L}=\I(\wt{\tau}<\infty)\frac{1}{\wt{\gamma}_{\wt{\tau}}}=\left(\frac{d\wt{\PU}_{|\wt{\tau}}}{d\wt{\PV}_{|\wt{\tau}}}\right)_{\wt{\tau}<\infty}$, 
similarly as in the discrete case. 
Thus, if for some $\R$-valued $\wt{Z} \in L^1(\wt{\PU}_{|\wt{\tau}})$ we have 
$\wt{\PU}$ and $\wt{\PV}$ a.s. that $\wt{\tau}=\infty$ implies 
$\wt{Z}=0$, then 
$\wt{\PU}_{|\wt{\tau}} \sim_{Z\neq 0} \PV_{|\wt{\tau}}$ and we can perform IS 
for computing $\E_{\wt{\PU}}(\wt{Z})$ analogously as in the discrete case. 
For adaptive IS, for some $l \in \N_+$, we can use e.g. 
linear parametrization $\wt{\lambda}_t(b) = \wt{\Lambda}_tb$, $b\in A:=\R^l$ of tilting processes for some  
$\R^{d\times l}$-valued predictable
process $(\wt{\Lambda}_t)_{t \geq 0}$ with locally square integrable coordinates.

Due to the fact that the sequence $(B_{k+1}-B_k)_{k \in \N}$ has i.i.d. $\sim \ND(0,I_d)$ coordinates
under $\wt{\PU}$, under appropriate identifications
the LETGS setting can be viewed as a special discrete case of the IS for Brownian motion with a linear parametrization of tilting processes
as above. 
In the further sections we focus mainly on the discrete case, both
for simplicity and due to it having important numerical applications. 
However, many of our reasonings can be generalized to the Brownian case. 

\section{\label{secEulSDE}IS for diffusions and Euler schemes}
Let us  use the notations for IS for a
Brownian motion from the previous section.
Let us consider  Lipschitz functions 
$\mu:\mc{S}(\R^m) \rightarrow \mc{S}(\R^m)$ 
and $\sigma:\mc{S}(\R^m) \rightarrow \mc{S}(\R^{m \times d})$. 
Then, there exists a unique strong solution $Y$ of the SDE 
\begin{equation}\label{diff} 
dY_t=\mu(Y_t)dt + \sigma(Y_t)dB_t, \quad Y_0=x_0
\end{equation} 
(see e.g. Section 5.2 in \cite{karatzas1991brownian}). 
Such a $Y$ is called a diffusion, $\mu$ a drift, and $\sigma$ a diffusion matrix. 
For $\wt{\tau}$ being a stopping time for $(\wt{\mc{F}}_t)_{t\geq 0}$ (like e.g. some hitting time of $Y$ of an appropriate set) 
and some $\R$-valued $\wt{Z} \in L^1(\wt{\PU}_{|\wt{\tau}})$, one can be interested in estimating 
\begin{equation} 
\wt{\phi}(x_0) = \E_{\wt{\PU}}(\wt{Z}). 
\end{equation} 
A popular way of discretizing $Y$, especially in many dimensions, is by using an Euler scheme $X=(X_k)_{k\in \N}$ 
with a time step $h\in\R_+$, which, for some $\eta_1,\eta_2,\ldots,$ i.i.d. $\sim \ND(0,I_m)$ and some starting point $x_0 \in \R^m$, fulfills   
$X_{0}=x_0$ and 
\begin{equation} 
X_{k+1}=X_k+h\mu(X_k)+\sqrt{h}\sigma(X_k)\eta_{k+1},\quad k \in \N.  
\end{equation} 
We shall sometimes need a time-extended version $X'$ of such an $X$, defined in the below remark. 
\begin{remark}\label{remExt} 
For an Euler scheme $X$ as above, $X'=(X_k,kh)_{k \in \N}$ is also an Euler scheme, 
in the definition of which, in the place of $m$, $x_0$, $\mu$ and $\sigma$, we use
$m'=m+1$, $x_0'=(x_0,0)$, as well as $\mu':\R^{m'}\to \R^{m'}$ and  $\sigma': \R^{m'}\rightarrow \R^{m'\times d}$ 
such that for each $x\in \R^m$ and $t \in \R$ we have 
$\mu'(x,t)=(\mu(x),1)$, $\sigma_{i,j}'(x,t)=\sigma_{i,j}(x)$,  $i \leq m$, and  $\sigma_{m',j}'(x,t)=0$, $j \in \{1,\ldots, d\}$. 
 \end{remark}
Let further $\eta_i$, $i \in \N_+$, be as in Section \ref{secStopped} for $\wt{\PU}_1=\ND(0,I_d)$,
so that $X$ as above is an Euler scheme under $\PU$ as in that section. 
As discussed further on, in some cases, for a sufficiently small $h$, 
for an appropriate stopping time $\tau$ for $(\mc{F}_n)_{n\geq 0}$ and an appropriate $Z \in  L^1(\PU_{|\tau})$, 
$\wt{\phi}(x_0)$ can be approximated well using 
\begin{equation} 
\phi(x_0)=\E_{\PU}(Z). 
\end{equation} 

For some function $r:\mc{S}(\R^m)\rightarrow \mc{S}(\R^d)$, called an IS drift, let us consider a tilting process 
$\lambda_k = \sqrt{h}r(X_k)$, $k \in \N$. Then, for 
\begin{equation}\label{wtmu}
\wt{\mu} = \mu+\sigma r 
\end{equation}
and $\dot{\eta}_k$, $k \in \N$, as in Section \ref{secGSS}, we have  
\begin{equation}
X_{k+1} = X_k+h\wt{\mu}(X_k)+\sqrt{h}\sigma(X_k)\dot{\eta}_{k+1},\quad k \in \N,
\end{equation}
so that from Theorem \ref{thNormGirs}, $X$ is an Euler scheme under $\PV$ with a drift $\wt{\mu}$. 
As discussed in Section \ref{secGSS}, the distribution of $X$ under $\PV$ is the same as of $\wh{X}:=X(\pi^{-1})$ under $\PU$. 
Since  $\dot{\eta}_i=\eta_i\pi$, we have $\dot{\eta}_i\pi^{-1}=\eta_i$, $i \in \N_+$, so that $\wh{X}$ satisfies
$\wh{X}_{0}=x_0$ and
\begin{equation}\label{whx}
\wh{X}_{k+1} = \wh{X}_k+h\wt{\mu}(\wh{X}_k)+\sqrt{h}\sigma(\wh{X}_k)\eta_{k+1},\quad k \in \N,
\end{equation}
i.e. it is also an Euler scheme with a drift $\wt{\mu}$, but this time under $\PU$. 

For a nonempty set $A \in \mc{B}(\R^l)$, let us consider a parametrization 
$r: A\rightarrow\{f:\R^m\rightarrow \R^d\}$ of IS drifts, such that $(b,x)\rightarrow r(b)(x)$
is measurable from $\mc{S}(A)\otimes \mc{S}(\R^m)$ to $\mc{S}(\R^d)$, and let
$\wt{\mu}(b)= \mu+\sigma r(b)$, $b \in A$. 
Consider a parametrization $\lambda:A\rightarrow\mc{J}$ of tilting processes such that
\begin{equation}\label{lambdakb}
\lambda(b)=(\lambda_k(b))_{k\in \N}=(\sqrt{h}r(b)(X_k))_{k\in \N},\quad b \in A.
\end{equation}
Note that Condition \ref{condbxmes} holds for such a parametrization. 
Note also that, using notation (\ref{ybdef}), from (\ref{whx}) we have
\begin{equation}\label{Xkb}
X_{k+1}^{(b)} = X_k^{(b)}+h\wt{\mu}(b)(X_k^{(b)})+\sqrt{h}\sigma(X_k^{(b)})\eta_{k+1},\quad k \in \N. 
\end{equation}
Let us now describe the linear case of the above parametrization, leading to IS in the special case of the LETGS setting. We take $A=\R^l$ and
for some functions $\wt{r}_i:\mc{S}(\R^m) \rightarrow \mc{S}(\R^d)$, $i=1, \ldots,l$, called IS basis functions, 
we set 
\begin{equation}\label{rbxdef}
r(b)(x)=\sum_{i=1}^l b_i \wt{r}_i(x),\quad b \in \R^l,\ x \in \R^m.  
\end{equation} 
Let $\Theta:\R^m \rightarrow \R^{d\times l}$ be such that for $i=1,\ldots,l$ and $j =1,\ldots,d$
\begin{equation}\label{thetadef}
\Theta_{j,i}(x)=\sqrt{h}(\wt{r}_{i})_j(x).
\end{equation}
Then, a process $\Lambda$ leading to $\lambda(b)$ given by (\ref{lambdalambda}) 
and such that (\ref{lambdakb}) holds, can be defined as 
\begin{equation} \label{lambdatheta}
\Lambda_k = \Theta(X_k),\quad k \in \N.
\end{equation} 

An example of a stopping time $\tau$ for $(\mc{F}_k)_{k\geq 0}$ is  
an exit time of $X$ of some $D\in \mc{B}(\R^m)$, that is 
$\tau = \inf\{k \in \N:X_k \notin D\}$, 
for which we have $\tau^{(b)} = \inf\{k \in \N:X_k^{(b)} \notin D\}$, $b \in A$.

\begin{theorem}\label{thNTau}
Let us consider some linear parametrization of IS drifts as above. 
Let $\tau$ be the exit time of $X$ of $D\in \mc{B}(\R^m)$ such that $x_0 \in D$,
let $B \subset \R^l$ be nonempty, and let there exist $v \in \R^m$, $v \neq 0$, such that 
\begin{equation}\label{m1def}
M_1:= \sup_{x,y \in D} |v^T (x-y)|<\infty
\end{equation}
and
\begin{equation}\label{m2def}
M_2:=\sup_{x \in D,b \in B}|v^T\wt{\mu}(b)(x)|< \infty.
\end{equation}
For some $i \in \{1,\ldots,d\}$, let there exist $\delta_i\in\R_+$ and $\delta_j\in [0,\infty)$,
$j \in \{1,\ldots,d\}$, $j \neq i$, such that 
$|(v^T \sigma(x))_i| \geq \delta_i$ and $|(v^T \sigma(x))_j| \leq \delta_j$, $j \neq i$, $x\in D$. 
Let $M=M_1+hM_2$ and consider the following random conditions on $\mc{C}$ for $k \in \N_+$
\begin{equation}
q_k(\omega) =(|\eta_{k,i}(\omega)\delta_i| > \frac{M}{\sqrt{h}}+ |\sum_{j \in \{1,\ldots,d\},\ j\neq i}\eta_{k,j}(\omega)\delta_j|), \quad \omega \in E.
\end{equation}
Then, a random variable $\wh{\tau}$ on $\mc{C}$ such that
$\wh{\tau}(\omega)= \inf\{k \in \N_+:q_k(\omega)\}$, $\omega \in E$,
fulfills 
$\tau^{(b)}\leq \wh{\tau}$, $b \in B$.
Under $\PU$, the variable $\wh{\tau}$ has a geometric distribution with a parameter $q=\PU(q_1)$, that is
$\PU(\wh{\tau}=k)=q(1-q)^{k-1}$, $k \in \N_+$.  
 \end{theorem}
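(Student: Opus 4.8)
The plan is to derive both assertions from a single one-step exit criterion for the shifted Euler scheme $X^{(b)}$, whose recursion is given by (\ref{Xkb}). Fix $b\in B$ and $k\in\N_+$ and suppose that $X^{(b)}$ has not left $D$ by time $k-1$, i.e. $X_{k-1}^{(b)}\in D$. Using (\ref{Xkb}) I would write the $v$-projected increment $v^T(X_k^{(b)}-X_{k-1}^{(b)})$ as the sum of the drift part $h\,v^T\wt{\mu}(b)(X_{k-1}^{(b)})$ and the noise part $\sqrt{h}\,(v^T\sigma(X_{k-1}^{(b)}))\eta_k$, bounding the drift part in modulus by $hM_2$ via (\ref{m2def}) (legitimate since $X_{k-1}^{(b)}\in D$ and $b\in B$).

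For the noise part I would isolate the $i$-th coordinate by the triangle inequality, $|(v^T\sigma(X_{k-1}^{(b)}))\eta_k|\geq |(v^T\sigma(X_{k-1}^{(b)}))_i|\,|\eta_{k,i}|-\sum_{j\neq i}|(v^T\sigma(X_{k-1}^{(b)}))_j|\,|\eta_{k,j}|$, and then invoke the coordinate bounds $|(v^T\sigma(x))_i|\geq\delta_i$ and $|(v^T\sigma(x))_j|\leq\delta_j$ which hold throughout $D$. This yields the lower bound $\delta_i|\eta_{k,i}|-\sum_{j\neq i}\delta_j|\eta_{k,j}|$ for $|(v^T\sigma(X_{k-1}^{(b)}))\eta_k|$, i.e. the quantities that make up the defining inequality of $q_k$. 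Combining with the drift bound, on the event $q_k$ the increment obeys $|v^T(X_k^{(b)}-X_{k-1}^{(b)})|>\sqrt{h}\cdot\frac{M}{\sqrt{h}}-hM_2=M_1$, the threshold $\frac{M}{\sqrt{h}}$ and the constant $M=M_1+hM_2$ being calibrated precisely so that the surviving bound equals $M_1$. Since by (\ref{m1def}) any two points of $D$ lie within $v$-distance $M_1$, and $X_{k-1}^{(b)}\in D$, this forces $X_k^{(b)}\notin D$.

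To conclude $\tau^{(b)}\leq\wh{\tau}$ I would apply this criterion at $k=\wh{\tau}(\omega)$. Either $X^{(b)}$ has already left $D$ before time $\wh{\tau}$, in which case $\tau^{(b)}<\wh{\tau}$, or it remains in $D$ through time $\wh{\tau}-1$, so $X_{\wh{\tau}-1}^{(b)}\in D$; since $q_{\wh{\tau}}$ holds by definition of $\wh{\tau}$, the one-step criterion gives $X_{\wh{\tau}}^{(b)}\notin D$ and hence $\tau^{(b)}=\wh{\tau}$. In both cases $\tau^{(b)}\leq\wh{\tau}$, and as $b\in B$ was arbitrary the first claim follows.

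The distributional statement is then immediate. Each condition $q_k$ is a function of $\eta_k$ alone, and under $\PU$ the vectors $\eta_1,\eta_2,\ldots$ are i.i.d. $\sim\ND(0,I_d)$, so the indicators $\I(q_k)$ are i.i.d. Bernoulli$(q)$ with $q=\PU(q_1)$. Consequently $\wh{\tau}$ is the index of the first success in this Bernoulli sequence, giving $\PU(\wh{\tau}=k)=(1-q)^{k-1}q$ for $k\in\N_+$. I expect the only delicate point to be the calibration of the one-step inequality, namely correctly accounting for the cross coordinates $j\neq i$ through the triangle inequality and the bound $|(v^T\sigma(x))_j|\leq\delta_j$; the reduction to $\tau^{(b)}\leq\wh{\tau}$ and the identification of the geometric law are routine.
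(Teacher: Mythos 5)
Your proof is correct and follows essentially the same route as the paper's: the one-step exit criterion obtained by projecting the Euler increment of $X^{(b)}$ onto $v$, the triangle-inequality lower bound on the noise term, the calibration $M=M_1+hM_2$ so that the surviving bound is $M_1$, and then the observation that $q_k$ depends on $\eta_k$ alone, so that under $\PU$ the indicators $\I(q_k)$ are i.i.d.\ Bernoulli and $\wh{\tau}$ is a first-success time. The one caveat — present in the paper's own proof as well — is that your triangle inequality yields the term $\sum_{j\neq i}\delta_j|\eta_{k,j}|$, whereas $q_k$ as stated uses $|\sum_{j\neq i}\eta_{k,j}\delta_j|$, which is in general smaller, so the implication from $q_k$ strictly requires reading the definition with the sum of absolute values (immaterial in the paper's applications, where $d=1$ and the sum over $j\neq i$ is empty).
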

 \begin{proof}
 Let $b \in B$. From (\ref{Xkb}), for each $k \in \N_+$ 
 \begin{equation}
 \begin{split}
 (X_{k}^{(b)} \notin D)\wedge (X_{k-1}^{(b)} \in D)  &=
 (\sqrt{h}\sigma(X_{k-1}^{(b)})\eta_k \notin D - X_{k-1}^{(b)}-h\wt{\mu}(b)(X_{k-1}^{(b)})) \wedge (X_{k-1}^{(b)} \in D)\\
 &\Leftarrow (\sqrt{h}v^T\sigma(X_{k-1}^{(b)})\eta_k \notin v^T (D - X_{k-1}^{(b)} -h\wt{\mu}(b)(X_{k-1}^{(b)}))) \wedge (X_{k-1}^{(b)} \in D)\\
 &\Leftarrow (\sqrt{h}|v^T\sigma(X_{k-1}^{(b)})\eta_k| > M) \wedge (X_{k-1}^{(b)} \in D)\\
 &\Leftarrow q_k\wedge (X_{k-1}^{(b)} \in D).
 \end{split} 
 \end{equation}
 Thus, $q_k \Rightarrow (X_{k-1}^{(b)} \notin D \vee X_{k}^{(b)} \notin D) \Rightarrow \tau^{(b)} \leq k$. 
 For $\omega \in E$ such that $\wh{\tau}(\omega) < \infty$ it holds  $q_{\wh{\tau}(\omega)}(\omega)$, 
 and thus $\tau^{(b)}(\omega) \leq \wh{\tau}(\omega)$. 
 \end{proof}
 \begin{remark}\label{remCond}
 Note that if 
 for each $b \in A$ the assumptions of Theorem \ref{thNTau} 
 hold for $B=\{b\}$, then from $\tau^{(b)}$ having the same distribution under $\PU$
 as $\tau$ under $\PV(b)$, we receive that $\tau$ has all finite moments under $\PV(b)$, $b \in  A$, and 
 in particular Condition \ref{equivCond} holds.
 \end{remark}
 We say that a matrix- or vector-valued function $f$ is uniformly bounded 
 on some subset $B$ of its domain if 
for some arbitrary vector or matrix norm $||\cdot||$ we have $\sup_{x \in B}||f(x)||<\infty$. 
 \begin{remark}\label{remNTau} 
 Note that (\ref{m1def}) holds for $D$ bounded and arbitrary $v \in \R^m$. Furthermore, if for some $v \in \R^m$,  $v^T \mu$, $v^T\sigma$, and $\Theta$ are uniformly 
 bounded on $D$ (which holds e.g. when they are continuous on $\R^l$ and $D$ is bounded) then  (\ref{m2def}) holds 
 for each bounded $B$. 
 \end{remark}

\section{\label{secSomeISFun}Zero-variance IS for diffusions}
To provide an intuition when the variance of the IS estimator of the expectation a functional of an Euler scheme
can be small, let us 
briefly describe a situation when its diffusion counterpart 
has zero variance. See 
Section 4 in \cite{Glynn_2012} for details. 
Using notations as in the previous section, for $\wt{\tau}$ being the hitting time of $Y$ a boundary of an open set $D$ such that $x_0 \in D$, 
as well as for an appropriate $g:\R^m\rightarrow \R$ and 
$\beta:\R^m\rightarrow \R$, consider 
\begin{equation}\label{USDE} 
\wt{Z}=\I(\wt{\tau}<\infty)g(Y_{\wt{\tau}})\exp(\int_{0}^{\wt{\tau}} \beta(Y_s)\,\mathrm{d}s). 
\end{equation} 
If there exists an appropriate function $u:\R^m\rightarrow \R$, such that for $Lu =\Tr(\sigma\sigma^T)\Delta u + \mu^T\nabla u+ \beta u$ we have  
\begin{equation}\label{PDEIn} 
Lu(x)=0, \quad x \in D, 
\end{equation} 
and 
\begin{equation}\label{PDEBord} 
u(x)=g(x), \quad x \in \partial D, 
\end{equation} 
then, from the Feynman-Kac theorem, $\wt{\phi}(x_0) = u(x_0)$. 
Under certain assumptions, including $u(x)>0$, $x\in D$, it can be proved (see Theorem 4 in \cite{Glynn_2012}) that for $r$ equal to 
\begin{equation}\label{rstar} 
r^*:= \frac{\sigma^T\nabla u}{u}= \sigma^T\nabla(\ln(u)), 
\end{equation} 
for the IS for a Brownian motion as in Section \ref{secBrown} with $\wt{\lambda}_t= r^*(Y_t)$, 
we have  $\wt{Z}\wt{L}=\wt{\phi}(x_0)$, $\wt{\PV}$ a.s., i.e. the IS estimator for the diffusion 
case has zero variance. Furthermore, from (\ref{wtBsol}) 
\begin{equation}\label{Ynewequ} 
dY_t =(\mu + \sigma r^*)(Y_t)dt + \sigma(Y_t)d\wt{B}_t, \quad Y_0=x_0. 
\end{equation} 
For $\tau$ being the exit time of $X$ of some set $B$, a possible Euler scheme counterpart of (\ref{USDE}) is 
\begin{equation} \label{zform} 
Z= \I(\tau <\infty)g(X_{\tau})\exp(\sum_{k=0}^{\tau-1}h\beta(X_k)).  
\end{equation} 
Under appropriate assumptions for such a $Z$ we have 
\begin{equation}\label{phiwtphi} 
\lim_{h\to 0}\phi(x_0)= \wt{\phi}(x_0) 
\end{equation}
for $B=D$; see \cite{Gobet2000,Gobet2010}. 
Furthermore, in \cite{Gobet2010} it was proved that in some situations the rate of convergence in (\ref{phiwtphi}) 
can be increased by taking as $B$ an appropriately shifted $D$. 
Further on for $\tau$ as above and $Z$ as in (\ref{zform}) we shall assume that $B=D$, 
but one can easily modify the below reasonings to consider the shifted set instead. 
It seems intuitive that for some such $Z$, for $r$ close to $r^*$, and for small $h$, we can receive 
low variance of the Euler scheme IS estimator $ZL$. This intuition shall be confirmed in our numerical 
experiments in Chapter \ref{secNumExp}. 

\section{\label{secImpSpec}Some examples of expectations of functionals of diffusions and Euler schemes} 
We shall now discuss several examples of expectations of functionals of diffusions and their 
Euler scheme counterparts. As discussed in Chapter \ref{secInt}, these expectations can be of interest among others in molecular dynamics, and 
their Euler scheme counterparts were estimated in our numerical experiments described in Section \ref{secNumExp}. 
In the first two examples, for diffusions 
we consider the expectations $\wt{\phi}(x_0)=\E_{\wt{\PU}}(\wt{Z})$ for some $\wt{Z}$ as in (\ref{USDE}),
and for the corresponding Euler schemes we consider $\phi(x_0)=\E_{\PU}(Z)$ for the variable $Z$ as in (\ref{zform}). 
In the first example, for some $p\in \R_+$ we take $\beta(x) = -p$ and  $g(x)=1$, $x \in \R^m$, so that
$\wt{Z}=\exp(-p \wt{\tau})\I(\wt{\tau}<\infty)$ 
and $Z=\exp(-p h \tau)\I(\tau<\infty)$. 
The quantities $\wt{\mgf}(x_0):=\wt{\phi}(x_0)$ and 
$\mgf(x_0):=\phi(x_0)$ for this case are called the moment-generating functions (MGFs) of $\wt{\tau}$ and $h\tau$ respectively.
Let us consider some
$a \in \R$, called an added constant.
For the second example let us assume that 
\begin{equation}\label{wtputaufin}
\PU(\tau<\infty)=\wt{\PU}(\wt{\tau}<\infty)=1
\end{equation}
and let $D' =\R^m \setminus D = A \cup B$ for two closed disjoint sets $A$ and $B$ from $\mc{B}(\R^m)$. Let 
$\beta(x)=0$, $x\in \R^m$, $g(x) =a+1$, $x \in B$, and $g(x)=a$, $x \in A$. 
We receive $\wt{Z}=\I(\wt{\tau}<\infty)(a+ \I(Y_{\wt{\tau}}\in B))$ and $\wt{\phi}(x_0)$ equal to 
$\wt{q}_{AB,a}(x_0):=a+ \wt{\PU}(Y_{\wt{\tau}}\in B))$, 
which we shall call a translated committor. 
For the added constant $a=0$, we denote $\wt{q}_{AB,a}(x_0)$ simply as 
$\wt{q}_{AB}(x_0)$ and call it a committor. In the Euler scheme case we consider analogous definitions but with omitted tildes and with $X$ in the place of $Y$. 
Committors are of interest for instance when computing the 
reaction rates and characterizing the reaction mechanisms of dynamic processes; see \cite{HartmannEntr14,Prinz_2011, Allen_2009}. 

For the third example, for some $D$, $X$, $\tau$, and $\wt{\tau}$ as in Section \ref{secSomeISFun}, as well  
for some $T \in \R_+$, let us now consider $\wt{Z}=\I(\wt{\tau} \leq T) +a$, 
$\wt{p}_{T,a}(x_0)= \E_{\wt{\PU}}(\wt{Z})$, 
and $\wt{p}_T(x_0)=\wt{p}_{T,0}(x_0)$, while for the Euler scheme case 
$Z=\I(h\tau \leq T) +a$, 
$p_{T,a}(x_0)=\E_{\PU}(Z)$, and $p_{T}(x_0)=p_{T,0}(x_0)$. 
Note that for 
\begin{equation}\label{taup}
\tau'= \tau\wedge \left\lfloor \frac{T}{h}\right\rfloor 
\end{equation}
it holds $Z= \I(X_{\tau'}\in D') +a$. 
Note also that for the time-extended process $X'$  corresponding to the above $X$ as in Remark \ref{remExt}, such a $\tau'$ is the exit time of $X'$ of
\begin{equation}\label{whd}
\wh{D}=D \times [0,h \left\lfloor \frac{T}{h}\right\rfloor). 
\end{equation}
Such a $\tau'$ is the stopping time which we shall further consider by default for IS in the LETGS setting for computing $p_{T,a}(x_0)$. 
A possible alternative would be to use $\tau$, which, 
as discussed in Remark \ref{rembettertau}, would lead to not lower variance and mean cost for the cost variables 
equal to the respective stopping times. 

\begin{remark}\label{remSuffphiwtphi}
Sufficient assumptions for (\ref{phiwtphi}) to hold for the MGFs and translated committors as above can be derived e.g. from the
discussion in Section 4 in \cite{Gobet2010} (along with appropriate convergence rates in it), while for 
\begin{equation}\label{ptax0}
\lim_{h\to 0} p_{T,a}(x_0)=\wt{p}_{T,a}(x_0)
\end{equation}
--- from reasonings analogous as in Section 1.2 of \cite{Gobet2000}. 
\end{remark}

Let $\wh{\psi}_a$ be an unbiased estimator of $\psi_a$ equal to 
$q_{AB,a}(x_0)$ or $p_{T,a}(x_0)$, i.e. $\E(\wh{\psi}_a)=\psi_a$.
Then, the translated estimator $\wh{\psi}_{a,0}=\wh{\psi}_a-a$ is an unbiased estimator of $\psi_0$ equal to $q_{AB}(x_0)$ or $p_{T}(x_0)$ respectively,
and $\Var(\wh{\psi}_{a,0})=\Var(\wh{\psi}_a)$. 
The reason why we are considering such translated estimators of $\psi_0$ for nonzero added constants $a$
is that using these estimators in the adaptive IS procedures 
in our numerical experiments as discussed in Chapter \ref{secNumExp} led to lower variances and inefficiency constants than for $a=0$. 


Note that we have  $q_{AB}(x_0)+q_{BA}(x_0)=1$ and similarly for the diffusion case, so that if $\wh{q}$ is an
unbiased estimator of one of the quantities $q_{AB}(x_0)$ or $q_{BA}(x_0)$, then $1-\wh{q}$ is such an estimator of the other quantity
with the same variance and inefficiency constant. Therefore, given an estimator $\wh{q}_{AB}$ of $q_{BA}(x_0)$
and $\wh{q}_{AB}$ of $q_{BA}(x_0)$, it seems reasonable to compute both quantities as above using the estimator leading to a
lower inefficiency constant. 


\section{Diffusion in a potential}\label{secDiffusion}
We define a diffusion $Y$ in a differentiable potential $V:\R^m\longmapsto \R$ and corresponding to a temperature $\epsilon\in\R_+$ to be a unique strong solution of
\begin{equation}\label{equPot}
dY_t = -\nabla V(Y_t)dt + \sqrt{2\epsilon} d B_t,\quad Y_{0}=x_0,
\end{equation}
assuming that such a solution exists, which is the case e.g. if $\nabla V$ is Lipschitz. 
For such a diffusion, under appropriate assumptions as in Section \ref{secSomeISFun}, 
an IS drift  (\ref{rstar}) leading to a zero-variance IS estimator and probability $\wt{\PV}$ is
\begin{equation}\label{rstar2}
 r^*=\sqrt{2\epsilon}(\nabla \ln(u)). 
\end{equation}
Let $F=-\epsilon \ln(u)$, $C_0 \in \R$, and let us define an optimally-tilted potential
\begin{equation}\label{wtV}
V^*=V+2F + C_0.
\end{equation}
Then, (\ref{Ynewequ}) can be rewritten as
\begin{equation}
dY_{t} = -\nabla V^*(Y_t)dt + \sqrt{2\epsilon} d \wt{B}_t, \quad Y_{0}=x_0. 
\end{equation}
Thus, under $\wt{\PV}$, $Y$ is a diffusion in potential $V^*$. 

\section{\label{secSpecNumExp}The special cases considered in our numerical experiments}
Let $D: = (a_1,a_2)= (-3.5,3.5)$. 
Consider a smooth potential $V:\R\to\R$ such that
\begin{equation}\label{3wpot}
V(x)= \frac{1}{200}(0.5x^6 - 15x^4 + 119x^2 + 28x + 50),\quad x \in D, 
\end{equation}
and $\nabla V$ is Lipschitz. Such a $V$ restricted to $D$ 
is shown in Figure \ref{pot3W}.
\begin{figure}[h]
\includegraphics[width=0.5\textwidth]{./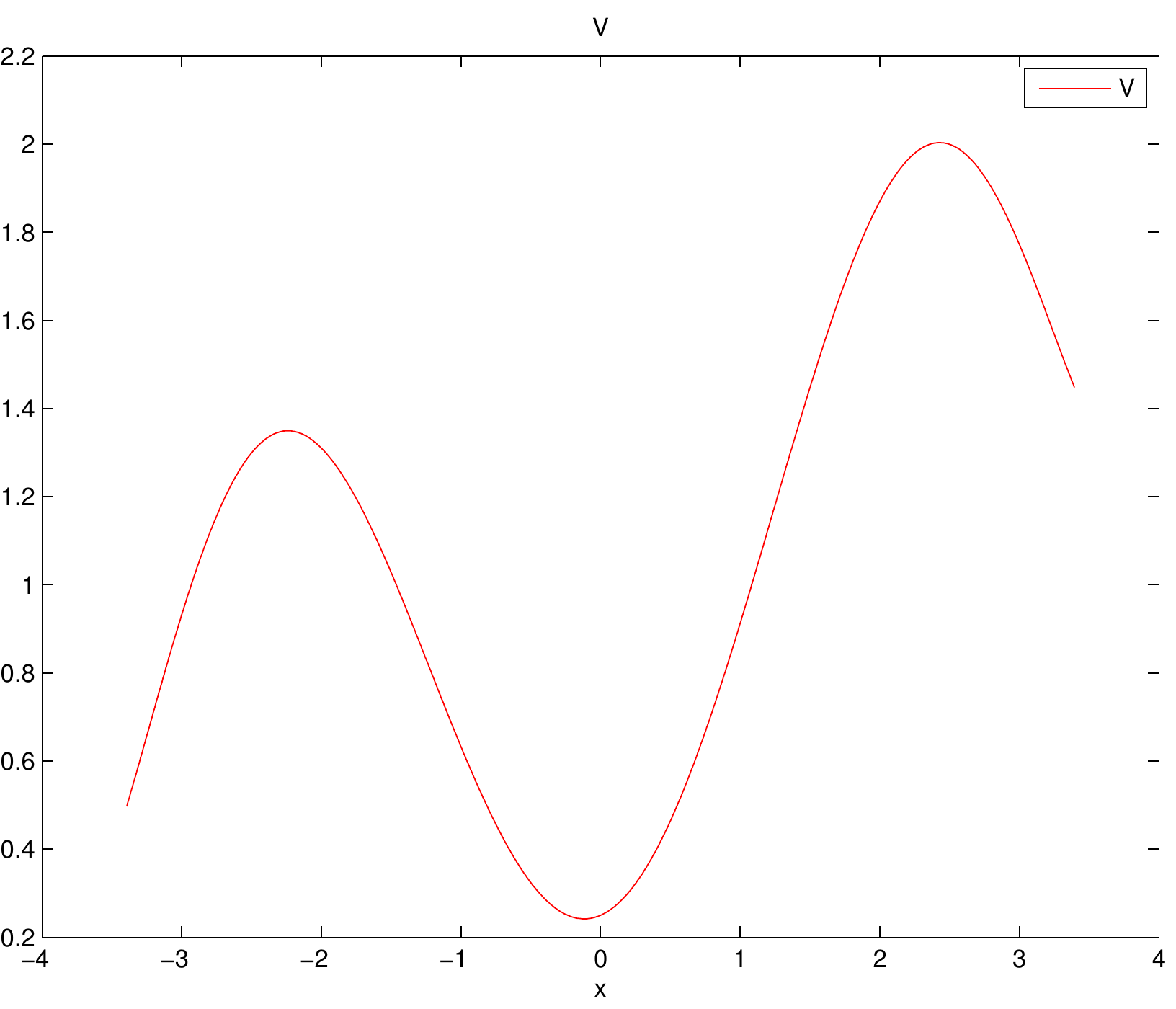}
\caption{The three-well potential given by (\ref{3wpot}) on $D$.}
\label{pot3W}
\end{figure}
For a temperature $\epsilon=0.5$, consider a diffusion $Y$ in such a potential starting at some $x_0 \in D$.
Let $\wt{\tau}$ be the hitting time of $Y$ of the boundary of $D$. 
Let $A=(-\infty,a_1)$, $B=(a_2,\infty)$, and let $\wt{q}_{1,a}=\wt{q}_{AB,a}$ and $\wt{q}_{2,a}=\wt{q}_{BA,a}$ 
(see Section \ref{secImpSpec}), 
which for $a=0$ will be denoted simply as $\wt{q}_1$ and $\wt{q}_2$,  
and analogously in the Euler scheme case in which the tildes are omitted. 
Let us also consider $\wt{\mgf}$ and $\mgf$ for $p = \wt{p}:=0.1$. 
We computed approximations of such $\wt{q}_i(x)$ and $\wt{\mgf}(x)$ in the function of $x$ using finite difference 
discretizations of PDEs given by (\ref{PDEIn}) and (\ref{PDEBord}). The results are shown in figures
\ref{figCom} and \ref{figMGF}. 
In figures \ref{figVCom} and \ref{figVMGF} we show
approximations of the optimally tilted potentials (\ref{wtV}) for the MGF and committors $\wt{q}_{i,a}$ for $a=0$ and $a=\wt{a}:=0.05$, $i=1,2$. 

\begin{figure}%
\centering
\subfloat[]{\label{figCom}\includegraphics[width=0.47\textwidth]{./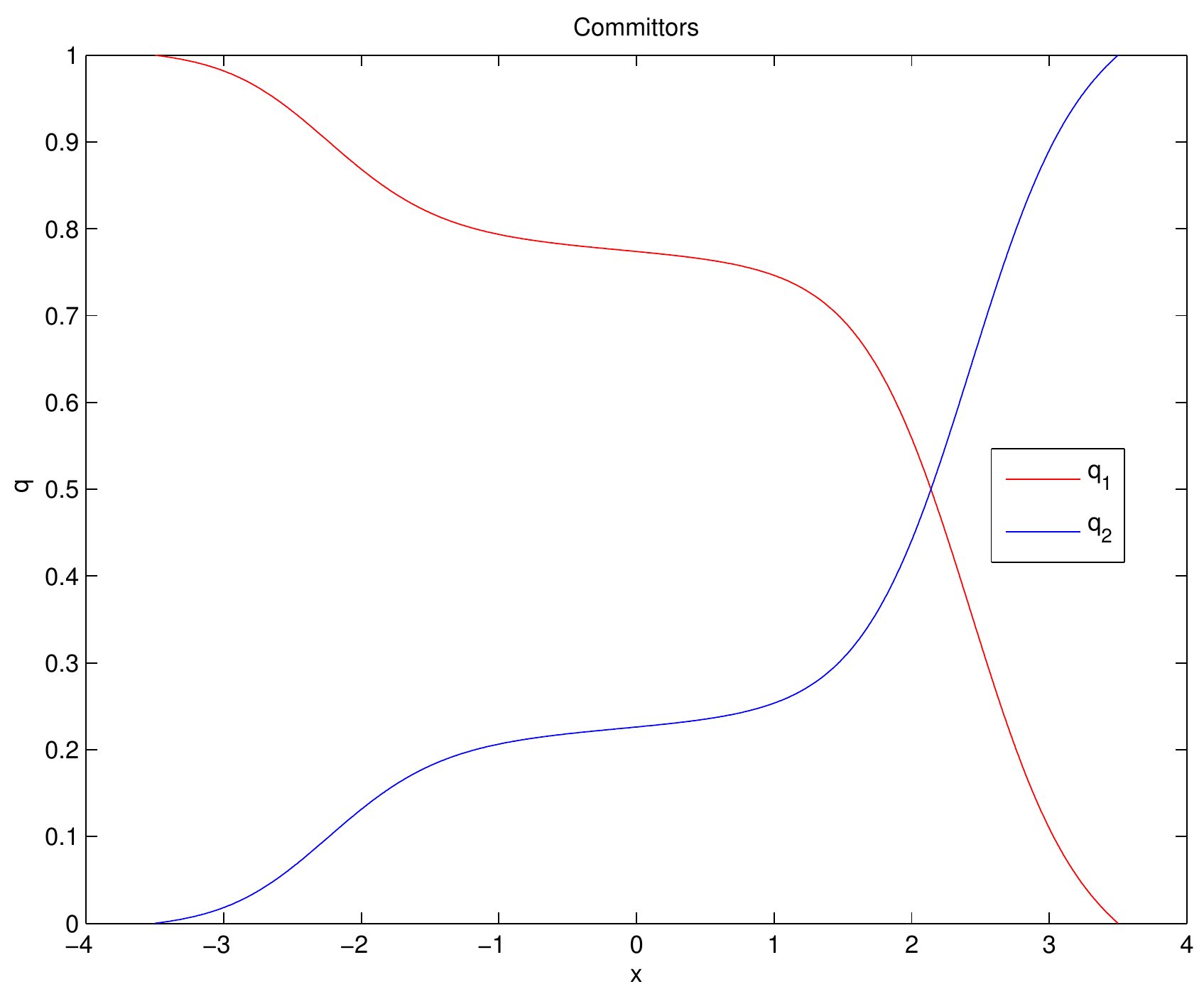}}
\qquad  
\subfloat[]{\label{figMGF}\includegraphics[width=0.45\textwidth]{./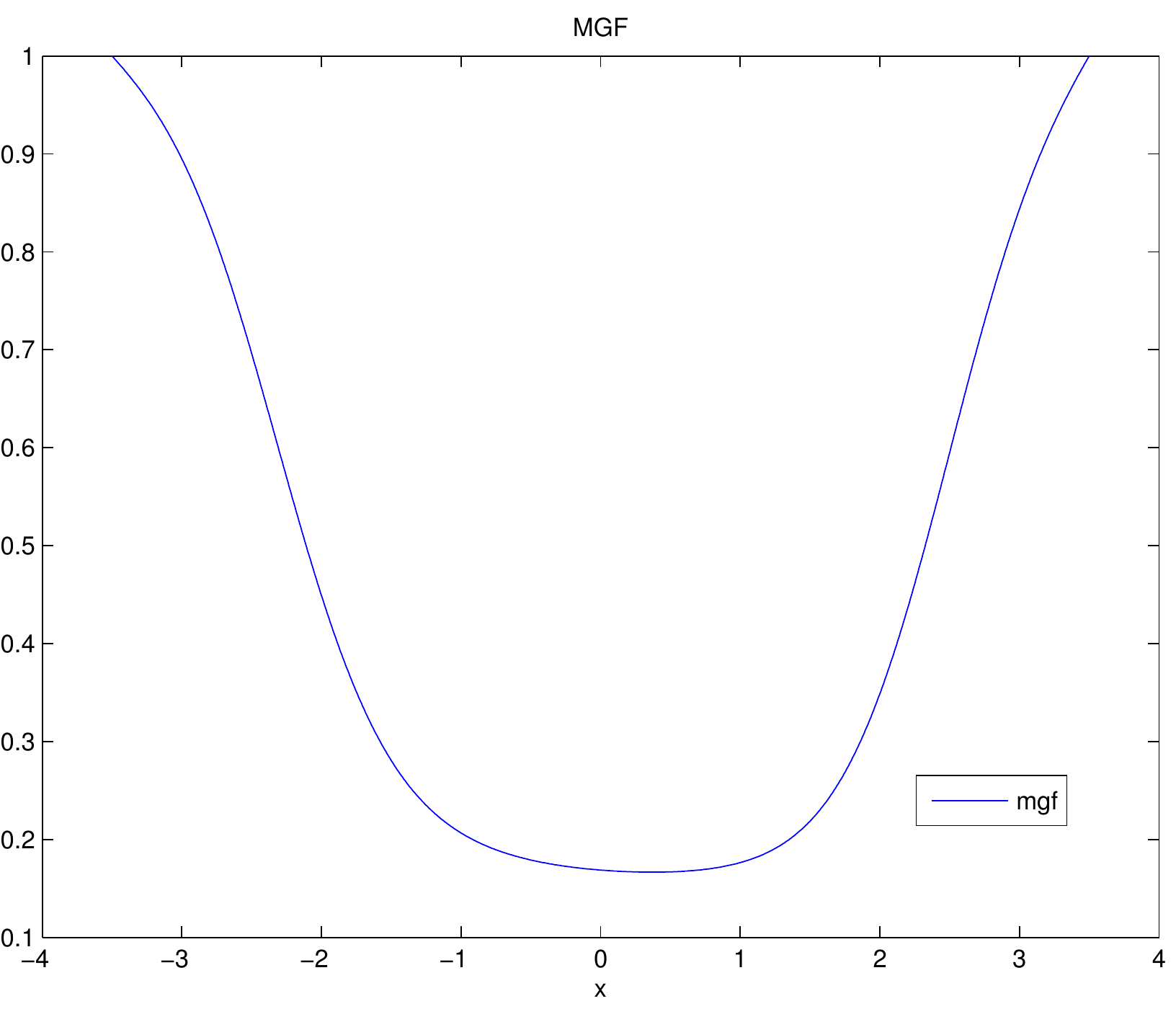}} 
\caption{The committors and MGF as in the main text.}
\end{figure}

\begin{figure}
\centering
\subfloat[]{\label{figVCom}
\includegraphics[width=0.45\textwidth]{./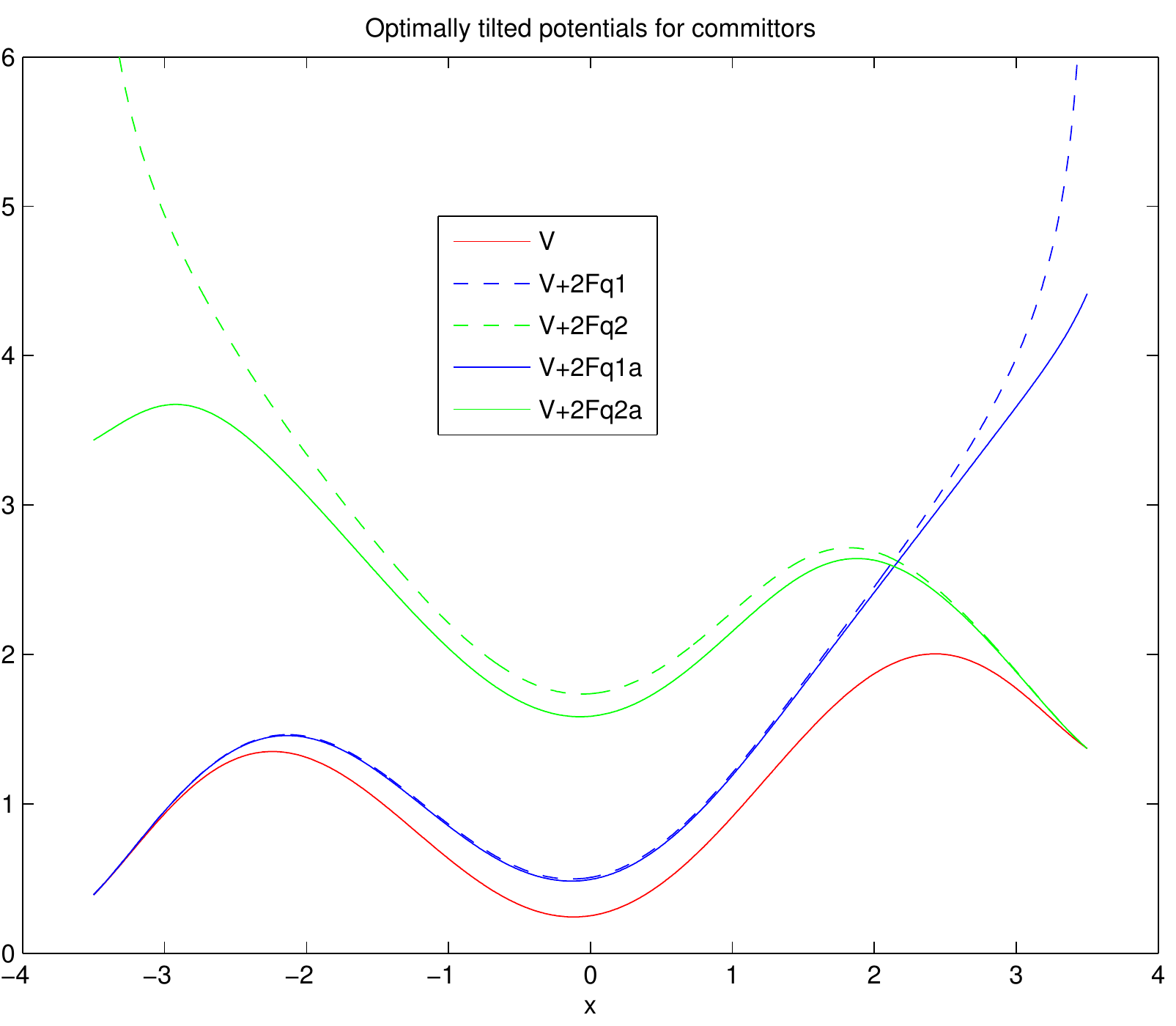}} 
\qquad
\subfloat[] {\label{figVMGF}\includegraphics[width=0.45\textwidth]{./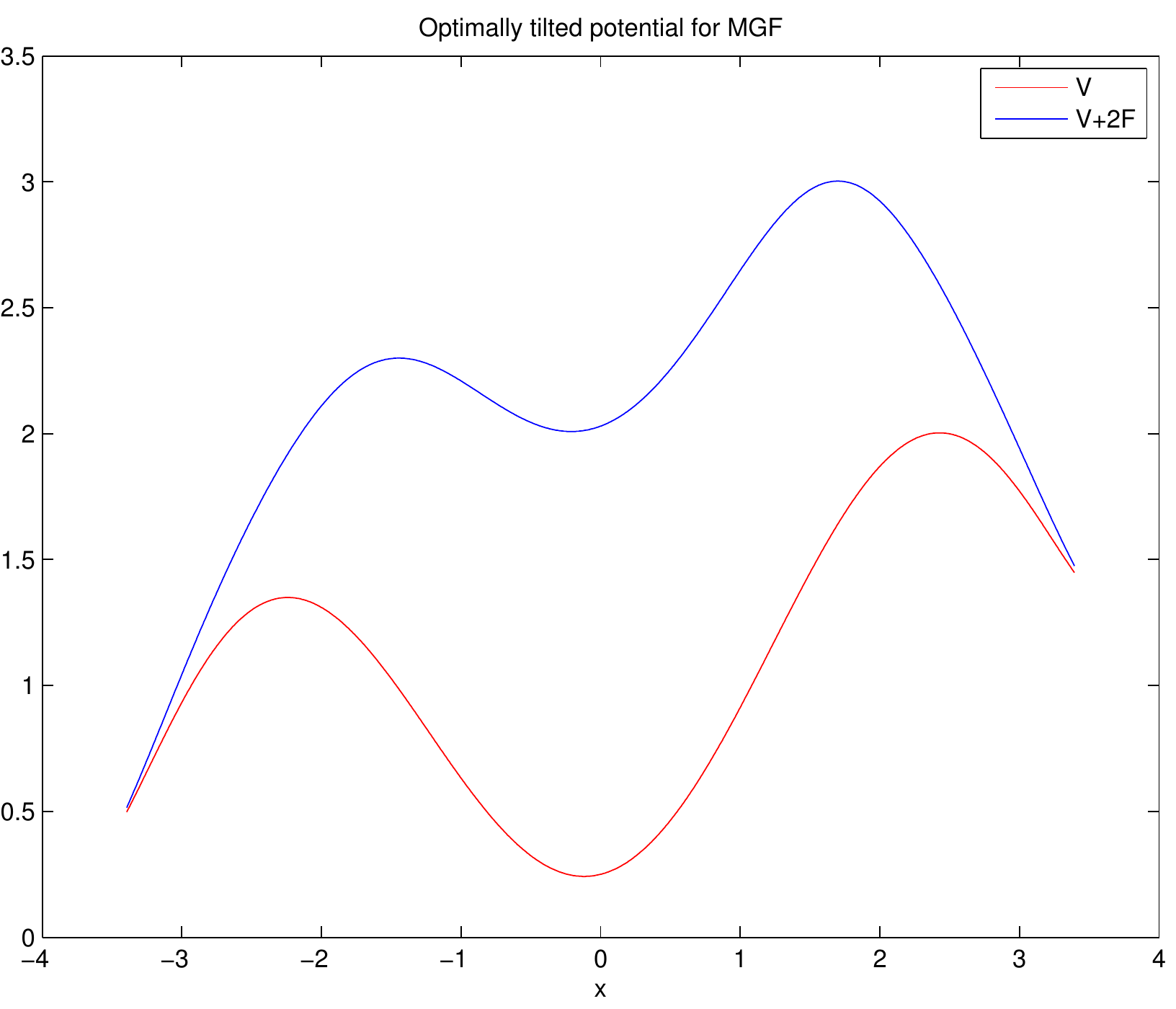}}
\caption{Optimally tilted potentials as in (\ref{wtV}) for the MGF and committors. $Fqi$ and $Fqia$ 
are the functions $F$ as in Section \ref{secDiffusion} for the $i$th committor for $a=0$ and $a=\wt{a}$ respectively. 
The constant $C_0$ for the $i$th committor for $i\in \{1,2\}$ was chosen so that the tilted potential is 
equal to the original potential in point $a_i$ and for the MGF --- so that these potentials are equal in both $a_1$ and $a_2$.} 
\label{comMGF3WOpt}
\end{figure}

In our experiments we considered an Euler scheme $X$ with a time step $h=0.01$ corresponding to the above diffusion $Y$
starting at $x_0=0$. We focused on estimating  $\mgf(x_0)$ for $p=\wt{p}$, 
$q_{i}(x_0)$ for $i=1,2$, and $p_{T}(x_0)$ for $T=10$. 
For some $M\in \N_2$,  $\wt{a}_1 = -3.6$, $\wt{a}_2=3.6$,  $\wt{d}=\frac{\wt{a}_2-\wt{a}_1}{M-1}$, 
and $p_i= \wt{a}_1 +(i-1)\sigma$, $i \in\{1,\ldots,M\}$, consider Gaussian functions 
\begin{equation}\label{riexp} 
\wt{r}_i(x)=\frac{1}{\sqrt{\epsilon}}\exp(-\frac{(x-p_i)^2}{\wt{d}^2}), \quad i =1,\ldots,M. 
\end{equation} 
In our experiments we used a linear parametrization of IS drifts as in Section \ref{secEulSDE}. 
For each estimation problem we used as the IS basis functions the above Gaussian functions for $M=10$. 
For estimating $p_{T,a}(x_0)$, considering a time-extended Euler scheme as 
in Remark \ref{remExt} corresponding to the above $X$, we additionally performed experiments using $2M$ time-dependent IS basis functions 
\begin{equation}\label{rtimedep}
\wh{r}_i(x,t)= \wt{r}_i(x),\ \wh{r}_{M+i}(x,t)= t^p\wt{r}_i(x),\quad i=1,\ldots,M,  
\end{equation}
for different $p \in \N_+$, and for $M=5$ and $M=10$. 
See Section \ref{secSimpleMin} and Chapter \ref{secNumExp} for further details on our numerical experiments. 
Note that since the above $\wt{r}_i$ are continuous and $D$ is bounded,  
from remarks \ref{remCond}, \ref{remNTau}, 
and Theorem \ref{thNTau}, in which one can take $v=1$ and $\delta_1=\sqrt{2\epsilon}$, it follows that Condition \ref{equivCond} holds 
when estimating the MGF and committors as above. 
Using further the fact that $\E_{\wt{\PU}}(\wt{\tau})<\infty$ (which follows e.g. from Lemma 7.4 in \cite{karatzas1991brownian}), (\ref{wtputaufin}) holds. 
Furthermore, from Remark \ref{remSuffphiwtphi}, we have (\ref{phiwtphi}) for the MGF and tanslated committors, and (\ref{ptax0})
for the exit probability. 
\chapter{Some properties of the minimized functions and their estimators}\label{secDiv}
In this chapter we discuss various properties of the functions and their estimators from Chapter \ref{secMinFun}
for some parametrizations of IS from the previous chapter. These properties 
will be useful when proving the convergence and asymptotic properties of certain minimization methods of such estimators further on. 
\section{\label{secCEECM}Cross-entropy and its estimators in the ECM setting}
Let us consider the ECM setting as in Section \ref{secECM}. We have
\begin{equation}\label{ceEstECM}
\wh{\ce}_n(b',b) = \overline{(ZL'(\Psi(b) -bX))}_n= \Psi(b)\overline{(ZL')}_n -b^T \overline{(ZL'X)}_n.
\end{equation}
Let us assume Condition \ref{condPartvm}. Then,
\begin{equation}\label{nablabce}
\nabla_b\wh{\ce}_n(b',b) = \nabla\Psi(b)\overline{(ZL')}_n -\overline{(ZL'X)}_n
\end{equation}
and
\begin{equation}\label{whce}
\nabla^2_b\wh{\ce}_n(b',b) = \nabla^2\Psi(b)\overline{(ZL')}_n.
\end{equation}
Let us further assume Condition \ref{condtX}, so that $\nabla^2\Psi$ is positive definite. 
Then, from (\ref{whce}), $b\to\wh{\ce}_n(b',b)(\omega)$ has a positive definite Hessian and thus it is strictly convex
only for $\omega \in \Omega_1^n$ and $b'\in A$ such that 
\begin{equation}\label{gzlp}
\overline{(ZL')}_n(\omega)>0.
\end{equation} 
Furthermore, $b^*_n \in A$ is the unique minimum point of  $b\to \wh{\ce}_n(b',b)(\omega)$ only if (\ref{gzlp}) holds and
\begin{equation}\label{nabcebs} 
\nabla_b\wh{\ce}_n(b',b^*_n)(\omega)=0
\end{equation} 
(where by $\nabla_b\wh{\ce}_n(b',b^*_n)(\omega)$ we mean $\nabla_b(\wh{\ce}_n(b',b)(\omega))_{b=b^*_n}$).
Assuming (\ref{gzlp}),  from 
(\ref{nablabce}), (\ref{nabcebs}) holds only if 
\begin{equation}
\mu(b^*_n)=\nabla\Psi(b^*_n) =\frac{\overline{(ZL'X)}_n}{\overline{(ZL')}_n}(\omega),
\end{equation}
or from Theorem \ref{thPsimu} only if 
$\frac{\overline{(ZL'X)}_n}{\overline{(ZL')}_n}(\omega) \in \mu[A]$
and
\begin{equation}\label{bsceECM}
b^*_n=\mu^{-1}\left(\frac{\overline{(ZL'X)}_n}{\overline{(ZL')}_n}(\omega)\right).
\end{equation}
Let us assume that 
\begin{equation}\label{zxifin}
\E_{\PQ_1}(|ZX_i|)<\infty, \quad i=1,\ldots, l. 
\end{equation}
Due to $X$ having finite all mixed moments, from H\"{o}lder's inequality, (\ref{zxifin})
holds e.g. when $\E_{\PQ_1}(|Z|^{p})<\infty$ for some $p>1$. For the cross-entropy we then have
\begin{equation}\label{cebform}
\ce(b) = \alpha\Psi(b)-b^T\E_{\PQ_1}(ZX),\quad b \in A.
\end{equation}
Thus, analogously as for the cross-entropy estimator above,  $\ce$ has a positive definite Hessian everywhere only if $\alpha>0$,
and $\ce$ has a unique minimum point only if $\alpha> 0$ and
\begin{equation}\label{muepqzx}
\frac{\E_{\PQ_1}(ZX)}{\alpha}\in \mu[A], 
\end{equation}
in which case such a point is
\begin{equation}\label{ceminbs}
b^*=\mu^{-1}\left(\frac{\E_{\PQ_1}(ZX)}{\alpha}\right). 
\end{equation}
\begin{remark}\label{remCEneg}
Note that we can receive analogous conditions as above for the cross-entropy and its estimator to have negative definite Hessians
or have unique maximum points by replacing $Z$ by $-Z$ (and thus also $\alpha$ by $-\alpha$) 
in the above conditions. The formulas for the maximum points remain the same 
as for the minimum points above. With some exceptions, in the further sections we shall focus on the minimization of cross-entropy and its estimators and will
be interested in checking the conditions as in the main text above. 
However, we can analogously perform their 
maximization, or jointly optimization if we consider alternatives of the above conditions. 
\end{remark}

\section{\label{secCondLETS}Some conditions in the LETS setting}
Let $||\cdot||_\infty$ denote the supremum norm induced by the standard Euclidean norm $|\cdot|$. 
Consider the LETS setting as in Section \ref{secLETS}. For each real matrix-valued process $Y=(Y_k)_{k\in \N}$ on $\mc{C}$ and $B \in \mc{E}$, let 
us define 
\begin{equation}
||Y||_{\tau,B,\infty} = \esssup_{\PU}(\I_B\I(0<\tau<\infty)\max(||Y_0||_{\infty}, \ldots,||Y_{\tau-1}||_{\infty})),
\end{equation}
which for $B=\Omega_1$ is denoted simply as $||Y||_{\tau,\infty}$.
Let $S$ be an $\overline{\R}$-valued random variable on $\mc{S}_1$.
Further on in this work we will often assume the following conditions.
\begin{condition}\label{condBound1}
It holds
\begin{equation}\label{Rlambda}
R:=||\Lambda||_{\tau,S\neq0,\infty}<\infty. 
\end{equation} 
\end{condition}
\begin{condition}\label{condBound2}
A number $s \in \N_+$ is such that
\begin{equation}\label{Staufin}
\I(S\neq 0)\tau \leq s.  
\end{equation}
\end{condition}
Note that conditions \ref{condBound1} or \ref{condBound2} hold for each possible random variable $S$ as above only if 
they hold for some $S$ such that $S(\omega)\neq0$, $\omega \in \Omega_1$, that is only if
\begin{equation}\label{supLambdafin}
||\Lambda||_{\tau,\infty}<\infty 
\end{equation}
for Condition \ref{condBound1}, or 
\begin{equation}\label{tauFin}
\tau \leq s \in \N_+
\end{equation}
for Condition \ref{condBound2}.
\begin{remark}\label{condImplzntau}
Note that Condition \ref{condBound2} implies Condition \ref{condzntau} for $Z=S$, 
while (\ref{tauFin}) implies Condition \ref{equivCond}. 
\end{remark} 
For each real matrix-valued function $f$ on $\R^m$ and $B \subset \R^m$, 
let us denote $||f||_{B,\infty}=\sup_{x \in B}||f||_{\infty}$. If $\tau$ is the exit time of an Euler scheme 
$X$ of a set $D$ such that $X_0\in D$, then for $\Lambda$ is as in  (\ref{lambdatheta}) we have 
$||\Lambda||_{\tau,\infty}\leq ||\Theta||_{D,\infty}$. 
In particular, if 
\begin{equation}\label{thetafin} 
||\Theta||_{D,\infty}<\infty, 
\end{equation} 
then we have (\ref{supLambdafin}). Note that from (\ref{thetadef}), 
(\ref{thetafin}) is equivalent to $||\wt{r}_i||_{D,\infty}<\infty$, $i=1, \ldots,l$. 
In particular, (\ref{thetafin}) and thus also (\ref{supLambdafin}) 
hold in our numerical experiments as discussed in Section \ref{secSpecNumExp}, both when using the time-independent 
and time-dependent IS basis functions, where in the time-dependent case by $\wt{r}_i$ we mean $\wh{r}_i$ as in Section \ref{secSpecNumExp} and we consider 
$D$ equal to $\wh{D}$ as in (\ref{whd}), $X$ equal to $X'$ as in Remark \ref{remExt}, and $\tau$ equal to $\tau'$ as in (\ref{taup}). 

Let us discuss how one can enforce (\ref{tauFin}) if it is initially not fulfilled, as is the case for the translated committors and the MGF in our numerical experiments.
Analogous reasonings as below can be applied also to more general stopped sequences or processes than in the LETS setting. 
For some $s\in \N_+$ and $z_s \in \R$, 
instead of $\tau$ and $Z$ we can consider their terminated versions $\tau_s = \tau\wedge s$ and $Z_s=\I(\tau \leq s)Z +z_s\I(\tau > s)$ 
and focus on computing $\alpha_s=\E_{\PU}(Z_s)=\E_{\PU}(\I(\tau \leq s)Z) +z_s\PU(\tau > s)$ rather than $\alpha=\E_\PU(Z)$.  
If $\PU(\tau=\infty)=0$, or 
$\PU(Z\neq 0,\ \tau=\infty)=0$ and $\lim_{s\to \infty}z_s=0$, then $\PU$ a.s. $Z_s \rightarrow Z$, so that 
assuming further that $\limsup_{s\to \infty}|z_s|<\infty$, from $|Z_s|\leq |Z| +|z_s|$  and Lebesgue's dominated convergence theorem, 
$\lim_{s\rightarrow\infty}\alpha_s=\alpha$.
Thus, in such a case, for a sufficiently large $s$ we will make arbitrarily small absolute error when  approximating $\alpha$ by $\alpha_s$.
Let us provide some upper bounds on this error. 
If $\esssup_{\PU}(|Z -z_s|\I(\tau> s))\leq M_s \in [0,\infty)$, then 
\begin{equation}\label{errEstim}
|\alpha - \alpha_s|=|\E_{\PU}((Z-z_s)\I(\tau> s))| \leq M_s\PU(\tau> s).
\end{equation}
For the MGF example from Section \ref{secImpSpec} we can take 
$z_s=M_s=\frac{1}{2}\exp(-ph(s+1))$, while for the translated committors we can choose $z_s=a+\frac{1}{2}$ and $M_s= \frac{1}{2}$. 
The quantity $\PU(\tau > s)$ 
can be estimated using IS from the same simulations as used to estimate 
$\alpha_s$ or in a separate IS MC procedure. 
Alternatively, if we have $\tau \leq \wh{\tau}$ for some random variable $\wh{\tau}$ with a known distribution, 
we can use the inequality $\PU(\tau> s) \leq \PU(\wh{\tau}> s)$ to bound the right side of (\ref{errEstim}) from above.
For instance, if $\wh{\tau}$ has a geometric distribution with a parameter $q$ (see Theorem \ref{thNTau} for a situation in which this may occur), then 
we have $\PU(\wh{\tau}> s)= (1-q)^{s}$ and thus $|\alpha - \alpha_s|\leq M_s(1-q)^{s}$.

\section{Some conditions in the LETGS setting}\label{secSomeCondLETGS}
Let us discuss some conditions and random conditions in the LETGS setting, which, as we shall discuss in the further sections, 
turn out to be necessary for the existence of the unique minimum points of cross-entropy, mean square, and their estimators in this setting. 
Let $Z$ be an $\R$-valued $\mc{S}_1$-measurable random variable (where $\mc{S}_1=(E,\mc{F}_{\tau})$). 
\begin{definition}\label{defAb} 
For $b \in A=\R^l$, we define a random condition $A_b$ on $\mc{S}_1$ as follows 
\begin{equation} 
A_b = (Z\neq0,\ 0<\tau<\infty,\text{ and there exists } k \in \N,\ k<\tau, \text{ such that } \lambda_k(b) \neq 0). 
\end{equation} 
\end{definition} 
\begin{lemma}\label{lemABL} 
If $A_b$ does not hold and $Z\neq 0$, then for each $a \in \R^l$ and $t \in \R$
\begin{equation}\label{Latb} 
L(a+tb)=L(a). 
\end{equation} 
\end{lemma}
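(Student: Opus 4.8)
The plan is to negate the random condition $A_b$ under the standing hypothesis $Z \neq 0$, which splits the situation into two exhaustive cases, and then to observe that in each case formula (\ref{lnlG}) renders $L(a+tb)$ manifestly independent of $t$. First I would unwind Definition \ref{defAb}: since $Z\neq 0$ is assumed, the failure of $A_b$ means that it is not the case that $0<\tau<\infty$ and some $k<\tau$ satisfies $\lambda_k(b)\neq 0$. As $\tau$ is $\N\cup\{\infty\}$-valued, this leaves precisely two possibilities, namely either $\tau=\infty$, or else $\tau<\infty$ together with $\lambda_k(b)=\Lambda_k b=0$ for every $k$ with $0\le k<\tau$ (the subcase $\tau=0$ being absorbed into the latter, the condition on $k$ then being vacuous).

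On $\{\tau=\infty\}$ formula (\ref{lnlG}) gives $L(c)=\epsilon$ for every $c\in\R^l$, so in particular $L(a+tb)=\epsilon=L(a)$ and (\ref{Latb}) holds trivially.

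On $\{\tau<\infty\}$ with $\Lambda_k b=0$ for all $k<\tau$ I would use the representation $L(c)=\exp(c^TGc+Hc)$ from (\ref{lnlG}) together with the explicit LETGS forms of $G$ and $H$. Because $G=\frac{1}{2}\sum_{k=0}^{\tau-1}\Lambda_k^T\Lambda_k$ one gets $Gb=\frac{1}{2}\sum_{k=0}^{\tau-1}\Lambda_k^T(\Lambda_k b)=0$, and because $H=-\sum_{k=0}^{\tau-1}\eta_{k+1}^T\Lambda_k$ one gets $Hb=-\sum_{k=0}^{\tau-1}\eta_{k+1}^T(\Lambda_k b)=0$, where I use $\Lambda_k b=\lambda_k(b)$ from (\ref{lambdalambda}). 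Expanding the exponent of $L(a+tb)$ and exploiting the symmetry of $G$,
\[
(a+tb)^T G (a+tb) + H(a+tb) = a^T G a + 2t\,a^T(Gb) + t^2 b^T(Gb) + Ha + t\,Hb = a^T G a + Ha,
\]
which is exactly the exponent of $L(a)$; hence $L(a+tb)=L(a)$, as required.

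I expect no genuine obstacle here: the content is linear algebra, the vanishing of $Gb$ and $Hb$ being immediate once $\Lambda_k b=0$ on $\{k<\tau\}$. The only points demanding a little care are the bookkeeping in correctly negating $A_b$ and the empty-sum convention when $\tau=0$ (which forces $G=0$ and $H=0$), both of which the above uniform case split handles cleanly.
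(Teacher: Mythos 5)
Your proof is correct and follows essentially the same route as the paper's: the same case split on $\tau$ (with $\tau=0$ handled by the empty-sum convention) and the same key observation that linearity of $\lambda_k$ makes the $b$-direction invisible in the exponent. The only cosmetic difference is that you work through the quadratic-form representation (\ref{lnlG}) and verify $Gb=Hb=0$, whereas the paper argues directly from (\ref{LLETGS}) via $\lambda_k(a+tb)=\lambda_k(a)$; these are equivalent.
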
 
\begin{proof} 
From (\ref{LLETGS}), when $\tau=0$ then the both sides of (\ref{Latb}) are equal to $1$ and
when $\tau=\infty$ --- to $\epsilon$.
If $A_b$ does not hold, $Z\neq 0$, and
$0<\tau<\infty$, then for each $0\leq k<\tau$ we have $\lambda_k(b)=0$, and thus for each $a \in \R^l$ and $t \in \R$, 
$\lambda_k(a+tb) =\lambda_k(a)+t\lambda_k(b)=\lambda_k(a)$, so that (\ref{Latb}) also follows from (\ref{LLETGS}). 
\end{proof}  

\begin{lemma}\label{lemEquiv}
For $n \in \N_+$, the following random conditions on $\mc{S}_1^n$ are equivalent.
\begin{enumerate}
\item For each $b\in \R^l,\ b \neq 0$, there exists  $i\in\{1,\ldots, n\}$ such that 
$(A_b)_i$ holds (where we use the notation as in (\ref{Yidefs})). 
\item For some (equivalently, for each) random variable $K$ on $\Omega$ which is positive on $Z \neq 0$, 
$\overline{(\I(Z\neq 0)GK)}_n$ 
is positive definite.
\item It holds $N:= \sum_{i=1}^n \I(Z_i\neq0,\ \tau_i<\infty)\tau_i>0$. Let 
a matrix $B \in \R^{(dN)\times l}$ be such that 
for each $i \in \{1,\ldots,n\}$ such that $0<\tau_i<\infty$ and $Z_i\neq0$, for each $k \in \{0,\ldots,\tau_i-1\}$ and $j\in \{1,\ldots,d\}$
the $\sum_{v=1}^{i-1}\I(Z \neq 0,\ \tau_v<\infty)\tau_v +kd +j$th row of $B$
is equal to the $j$th row of $(\Lambda_{k})_i$. Then, the columns of $B$ are linearly independent. 
\end{enumerate}
\end{lemma}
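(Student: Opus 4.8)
The natural strategy is to prove a cycle of implications, but since each condition is really a statement about the linear-algebraic rank of a certain data matrix built from the tilting coefficients $\Lambda_k$, I would instead route everything through a single concrete object. Let me recall from Section~\ref{secLETGS} that $G = \I(\tau<\infty)\tfrac12\sum_{k=0}^{\tau-1}\Lambda_k^T\Lambda_k$, so on $\{Z\neq0,\ 0<\tau<\infty\}$ the matrix $G$ is a (scaled) Gram matrix of the rows of the $\Lambda_k$, $k<\tau$. The key observation tying the three conditions together is the identity, for $b\in\R^l$,
\begin{equation}\label{planquadform}
b^T\Big(\overline{(\I(Z\neq0)GK)}_n\Big)b = \frac{1}{n}\sum_{i=1}^n \I(Z_i\neq0)K_i\, G_i[b,b],
\end{equation}
where $G_i[b,b]=\tfrac12\sum_{k=0}^{\tau_i-1}|(\Lambda_k)_i\,b|^2 = \tfrac12\sum_{k<\tau_i}|\lambda_k(b)_i|^2$ on $0<\tau_i<\infty$ and vanishes when $\tau_i\in\{0,\infty\}$. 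Since each summand in \eqref{planquadform} is nonnegative (as $K>0$ on $\{Z\neq0\}$), the averaged matrix is always positive semidefinite, and it is positive definite precisely when there is no nonzero $b$ annihilating every term.

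Granting \eqref{planquadform}, I would argue as follows. For $(1)\Leftrightarrow(2)$: the quadratic form in \eqref{planquadform} vanishes at a given $b\neq0$ iff for every $i$ with $Z_i\neq0$ and $0<\tau_i<\infty$ we have $\lambda_k(b)_i=0$ for all $k<\tau_i$ — that is, iff $(A_b)_i$ fails for every $i$ (using Definition~\ref{defAb}, noting $A_b$ also requires $0<\tau<\infty$, and that $\tau=0$ or $\tau=\infty$ contribute nothing). Hence condition (1), which asserts that for each $b\neq0$ some index $i$ realizes $(A_b)_i$, is exactly the statement that no nonzero $b$ makes the form vanish, i.e. positive definiteness. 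Because the summands are nonnegative, whether the form vanishes does not depend on the particular positive weights $K_i$, which simultaneously delivers the ``for some (equivalently, for each)'' clause in~(2). For $(2)\Leftrightarrow(3)$: I would recognize $\overline{(\I(Z\neq0)GK)}_n$ (for any fixed admissible $K$) as positively proportional, up to the positive diagonal weights, to $B^TB$ where $B\in\R^{(dN)\times l}$ is the stacked matrix from condition~(3); a symmetric matrix of the form $B^TB$ is positive definite iff $B$ has full column rank iff the columns of $B$ are linearly independent, and $N>0$ is forced because otherwise $B$ is empty and the form is identically zero (excluded by positive definiteness). The bookkeeping of row indices in (3) is just an explicit description of stacking the rows of $(\Lambda_k)_i$ over the relevant $(i,k,j)$, so that $|Bb|^2 = 2\sum_i \I(Z_i\neq0)G_i[b,b]$, matching \eqref{planquadform} with $K\equiv1$.

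The main obstacle, and the only place demanding care, is verifying the quadratic-form identity \eqref{planquadform} and its partner $|Bb|^2 = 2\sum_i\I(Z_i\neq0,\tau_i<\infty)G_i[b,b]$ cleanly across the three regimes $\tau_i=0$, $0<\tau_i<\infty$, and $\tau_i=\infty$, since $G$ and the definition of $A_b$ treat these cases differently and the indicator $\I(Z\neq0)$ must be threaded through consistently. I would handle this by reducing to the pointwise statement that, on $\{Z_i\neq0,\ 0<\tau_i<\infty\}$, the failure of $(A_b)_i$ is equivalent to $\sum_{k<\tau_i}|\lambda_k(b)_i|^2=0$, which is immediate from the definition of $A_b$ together with linearity $\lambda_k(b)=\Lambda_k b$ (Definition~\ref{defLin}). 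Once the identity is established, the three equivalences are standard facts about Gram matrices and no further analytic input is needed.
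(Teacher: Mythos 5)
Your proposal is correct and follows essentially the same route as the paper's proof, which also establishes the equivalence of (1) and (2) via the quadratic-form identity $b^T\overline{(K\I(Z\neq 0)G)}_nb=\frac{1}{2n}\sum_{i=1}^n(\I(0<\tau<\infty,\ Z\neq 0)K\sum_{k=0}^{\tau-1}|\lambda_k(b)|^2)_i$ and treats the link to (3) as an immediate consequence. Your explicit Gram-matrix formulation $B^TB$ for the $(2)\Leftrightarrow(3)$ step is just a spelled-out version of what the paper declares obvious, so there is no substantive difference in approach.
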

\begin{proof}
The fact that the second point above is a random condition follows from Sylvester's criterion.
The equivalence of the first two conditions follows from the fact that for each $b\in\R^l$
\begin{equation}
b^T\overline{(K\I(Z\neq 0)G)}_nb=\frac{1}{2n}\sum_{i=1}^n(\I(0<\tau<\infty,\ Z\neq 0)K\sum_{k=0}^{\tau-1}|\lambda_k(b)|^2)_i
\end{equation}
and the equivalence of the first and last condition is obvious. 
\end{proof}
\begin{definition}\label{defLemEquiv}
We define $r_n$ to be one of the equivalent random conditions in Lemma \ref{lemEquiv}. 
\end{definition}

\begin{lemma}\label{lemCond1}
The below three conditions are equivalent. 
\begin{enumerate} 
\item For each $b \in \R^l$, $b \neq 0$, we have  $\PQ_1(A_b)>0$. 
\item For each $b \in \R^l$, from $\PQ_1(A_b)=0$ it follows that $b=0$. 
\item Let $\wt{\Lambda}_j=((\Lambda_{k,i,j})_{i=1}^d)_{k\in \N} \in \mc{J}$, $j=1,\ldots,l$  (see Definition \ref{defJ}). Let 
$\sim$ be a relation of equivalence on $\mc{J}$ such that for $\psi_1, \psi_2 \in \mc{J}$, $\psi_1\sim\psi_2$, 
only if $\PQ_1$ a.s. if $0<\tau<\infty$ and $Z \neq 0$ then 
$\psi_{1,i}=\psi_{2,i}$, $i=0,\ldots,\tau-1$. Then, the equivalence classes 
$[\wt{\Lambda}_1]_\sim,\ldots,[\wt{\Lambda}_l]_{\sim}$ are linearly independent 
in the linear space 
$\mc{J}/\mathord\sim$ of equivalence classes of $\sim$, defined in a standard way
(i.e. the operations in such a linear space are defined by using in them in the place of the equivalence classes their arbitrary members and then taking the 
equivalence class of the result).  
\end{enumerate}
\end{lemma}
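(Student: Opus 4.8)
The plan is to prove the three conditions equivalent by establishing $(1)\Leftrightarrow(2)$ and $(2)\Leftrightarrow(3)$ separately. The first of these is a matter of pure logic, while the second amounts to unwinding the definition of linear (in)dependence in the quotient space $\mc{J}/\mathord\sim$ and matching it with the probabilistic statement. For $(1)\Leftrightarrow(2)$ I would simply use contraposition: since probabilities are nonnegative, $\PQ_1(A_b)>0$ is the negation of $\PQ_1(A_b)=0$, so the implication ``$b\neq0\Rightarrow\PQ_1(A_b)>0$'' of $(1)$ is logically equivalent to its contrapositive ``$\PQ_1(A_b)=0\Rightarrow b=0$'' of $(2)$.

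For $(2)\Leftrightarrow(3)$ the first step is to identify the relevant linear combination. For $c=(c_j)_{j=1}^l\in\R^l$, the process $\sum_{j=1}^l c_j\wt{\Lambda}_j$ has $k$th term $\sum_{j=1}^l c_j(\Lambda_{k,i,j})_{i=1}^d=\Lambda_k c$, which by (\ref{lambdalambda}) equals $\lambda_k(c)$; hence $\sum_{j=1}^l c_j\wt{\Lambda}_j=\lambda(c)$, the tilting process for the parameter $c$. Before using this I would note that $\{\psi\in\mc{J}:\psi\sim 0\}$ is a linear subspace, since the condition defining $\sim$ is linear in $\psi$, so that $\mc{J}/\mathord\sim$ and linear (in)dependence of the classes are well defined and $[\psi_1]_\sim=[\psi_2]_\sim$ iff $\psi_1-\psi_2\sim 0$. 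Consequently the classes $[\wt{\Lambda}_1]_\sim,\ldots,[\wt{\Lambda}_l]_\sim$ are linearly dependent in $\mc{J}/\mathord\sim$ if and only if there exists $c\neq0$ with $\lambda(c)\sim 0$.

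The key step is then to show $\lambda(c)\sim 0$ if and only if $\PQ_1(A_c)=0$. By the definition of $\sim$, the relation $\lambda(c)\sim 0$ means that $\PQ_1$-a.s., whenever $0<\tau<\infty$ and $Z\neq0$ one has $\lambda_i(c)=0$ for all $i=0,\ldots,\tau-1$, i.e. $\PQ_1$ gives measure zero to the failure of this implication. Negating the implication pointwise, its failure is exactly the event that $Z\neq0$, $0<\tau<\infty$, and $\lambda_k(c)\neq0$ for some $k<\tau$, which is $A_c$ as in Definition \ref{defAb}. Hence $\lambda(c)\sim 0$ iff $\PQ_1(A_c)=0$. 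Combining with the previous paragraph, linear dependence of the classes is equivalent to the existence of $c\neq0$ with $\PQ_1(A_c)=0$, which is precisely the negation of $(2)$; therefore $(3)$, the linear independence, holds if and only if $(2)$ holds.

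I do not expect a serious obstacle: the whole argument is a translation between the quotient-space and probabilistic formulations. The only points needing care are the logical negation of the almost-sure implication defining $\sim$ and the verification that $\{\psi:\psi\sim 0\}$ is a linear subspace, both of which are routine.
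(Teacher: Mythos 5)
Your proposal is correct and follows essentially the same route as the paper: (1)$\Leftrightarrow$(2) by pure logic, and (2)$\Leftrightarrow$(3) by identifying $\sum_j c_j\wt{\Lambda}_j$ with $\lambda(c)$ and observing that this sum is zero in $\mc{J}/\mathord\sim$ exactly when $\PQ_1(A_c)=0$. The paper states this equivalence in one line; you merely spell out the same identification and the negation of the almost-sure implication in more detail.
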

\begin{proof}
The equivalence of the first two conditions is obvious. 
The equivalence of the last two conditions follows from the fact that, using notations as in the third condition, for $b \in \R^l$,
$\sum_{i=1}^lb_i \wt{\Lambda}_i$ is equal to the zero in $\mc{J}/\mathord\sim$ only if $\PQ_1(A_b)=0$. 
\end{proof}

\begin{condition}\label{cond1}
We define the condition under consideration to be one of the conditions from Lemma \ref{lemCond1}.  
\end{condition}


\begin{remark}\label{remOtherPU}
Note that for a probability $\PS \sim_{\tau<\infty} \PQ_1$ we have  
$\PS(A_b)>0$ only if  $\PQ_1(A_b)>0$, so that 
Condition \ref{cond1} holds only if it holds for such a $\PS$ in the place of $\PQ_1$. 
\end{remark}
\begin{remark}\label{remPQ1AEquiv}
Note that $\PQ_1(A_b)>0$ only if for some $l\in \N_+$ and $k \in \N$, $k <l$, we have  $\PQ_1(Z \neq 0,\ \tau=l,\ \lambda_k(b)\neq0)>0$.  
\end{remark}

\begin{lemma}\label{lemPosK}
Let for some probability $\PS \sim_{\tau<\infty}\PQ_1$, a
random variable $K$ on $\mc{S}_1$ be $\PS$ a.s. positive on $Z\neq0$, and let
$\I(Z\neq 0)KG$ have $\PS$-integrable entries. Then, $\E_{\PS}(\I(Z\neq0)KG)$ is positive definite only if Condition \ref{cond1} holds. 
\end{lemma}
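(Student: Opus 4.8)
The plan is to prove the contrapositive, mirroring the quadratic-form computation used for Lemma \ref{lemEquiv} but with the expectation under $\PS$ replacing the empirical average. First I would fix $b \in \R^l$ and evaluate the quadratic form of the candidate matrix. Since $\lambda_k(b)=\Lambda_k b$ and $b^T\Lambda_k^T\Lambda_k b = |\lambda_k(b)|^2$, the definition of $G$ gives $b^TGb = \I(\tau<\infty)\frac{1}{2}\sum_{k=0}^{\tau-1}|\lambda_k(b)|^2$, where the $\tau=0$ term is an empty sum and the $\tau=\infty$ term is annihilated by the indicator. Using the integrability hypothesis on $\I(Z\neq0)KG$ to justify passing the expectation through, I would obtain
\begin{equation*}
b^T\E_{\PS}(\I(Z\neq0)KG)b = \frac{1}{2}\E_{\PS}\left(\I(0<\tau<\infty,\ Z\neq0)K\sum_{k=0}^{\tau-1}|\lambda_k(b)|^2\right).
\end{equation*}

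Next I would identify the support of the integrand. The factor $K$ is $\PS$ a.s. positive on $Z\neq0$ and the remaining factor is a nonnegative sum of squares, so the integrand is $\PS$ a.s. nonnegative; moreover it is strictly positive exactly on the event where $0<\tau<\infty$, $Z\neq0$, and $\lambda_k(b)\neq0$ for some $k<\tau$, which is precisely $A_b$ of Definition \ref{defAb}. Hence the quadratic form equals $\frac{1}{2}\E_{\PS}(\I_{A_b}K\sum_{k=0}^{\tau-1}|\lambda_k(b)|^2)$, and this is strictly positive if and only if $\PS(A_b)>0$.

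To finish, I would assume that Condition \ref{cond1} fails. By the second formulation in Lemma \ref{lemCond1} this yields some $b\neq0$ with $\PQ_1(A_b)=0$. Since $A_b\subseteq\{\tau<\infty\}$ and $\PS\sim_{\tau<\infty}\PQ_1$, Remark \ref{remOtherPU} (read contrapositively) gives $\PS(A_b)=0$ as well. By the previous step the quadratic form then vanishes at this nonzero $b$, so $\E_{\PS}(\I(Z\neq0)KG)$ is not positive definite. This establishes the contrapositive and hence the lemma.

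I expect the only delicate points to be bookkeeping rather than substance: correctly disposing of the cases $\tau\in\{0,\infty\}$ in the identity for $b^TGb$, and invoking the $\PS$ a.s. positivity of $K$ on $Z\neq0$ so that the integrand genuinely vanishes off $A_b$. Everything else reduces to the support argument above, so I do not anticipate a serious obstacle.
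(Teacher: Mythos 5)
Your proposal is correct and is essentially the paper's own argument: the same quadratic-form identity $b^T\E_{\PS}(\I(Z\neq 0)KG)b=\frac{1}{2}\E_{\PS}(\I(Z\neq 0,\ 0<\tau<\infty)K\sum_{k=0}^{\tau-1}|\lambda_k(b)|^2)$, the same observation that this is positive precisely when $\PS(A_b)>0$, and the same appeal to Remark \ref{remOtherPU} to pass between $\PS$ and $\PQ_1$. Phrasing it as a contrapositive is only a cosmetic difference.
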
 
\begin{proof} 
For each $b\in \R^l$, $b \neq 0$,
\begin{equation}
b^T\E_{\PS}(\I(Z\neq 0)KG)b=\frac{1}{2}\E_{\PS}(\I(Z\neq 0,\ 0<\tau<\infty)K\sum_{k=0}^{\tau-1}|\lambda_k(b)|^2)
\end{equation}
is greater than zero only if $\PS(A_b)>0$, so that from Remark \ref{remOtherPU} we receive the thesis.
\end{proof}

Let $\Sym_n(\R)$ denote the subset of $\R^{n\times n}$ consisting of symmetric matrices, 
and let $m_n:\Sym_n(\R) \to \R$ be such that for $A \in \Sym_n(\R)$, $m_n(A)$ is equal to the lowest eigenvalue of $A$,
or equivalently
\begin{equation}
m_n(A)=\inf_{x \in \R^n,\ |x|=1} x^TAx.  
\end{equation}

\begin{lemma}\label{lemLipschMin}
$m_n$ is Lipschitz from $(\Sym_n(\R),||\cdot||_{\infty})$ to $(\R,|\cdot|)$ with a Lipschitz constant $1$.
\end{lemma}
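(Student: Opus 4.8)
The plan is to use the variational characterization $m_n(A)=\inf_{|x|=1}x^TAx$ stated just above the lemma, together with the fact that $||\cdot||_{\infty}$ is the operator norm induced by the Euclidean norm $|\cdot|$, so that $||M||_{\infty}=\sup_{|x|=1}|Mx|$. The core observation is that for any two matrices $A,B\in\Sym_n(\R)$ and any unit vector $x$, the difference of the quadratic forms is controlled pointwise by the operator norm of $A-B$. The whole argument is a short two-sided estimate, so I do not expect a genuine obstacle; the only point requiring a little care is the manipulation of infima, which I address below.

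First I would fix $A,B\in\Sym_n(\R)$ and an arbitrary unit vector $x$, and write $x^TAx=x^TBx+x^T(A-B)x$. Then by the Cauchy--Schwarz inequality and the definition of the operator norm,
\begin{equation}
|x^T(A-B)x|\leq |x|\,|(A-B)x|\leq ||A-B||_{\infty}\,|x|^2=||A-B||_{\infty},
\end{equation}
since $|x|=1$. Hence $x^TAx\leq x^TBx+||A-B||_{\infty}$ for every unit $x$.

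Next I would take the infimum over all unit vectors $x$ on both sides. Since the constant $||A-B||_{\infty}$ does not depend on $x$, taking the infimum of the right-hand side gives $\inf_{|x|=1}(x^TBx+||A-B||_{\infty})=m_n(B)+||A-B||_{\infty}$, while the left-hand side infimum is $m_n(A)$; the inequality is preserved under taking infima since it holds termwise. This yields $m_n(A)\leq m_n(B)+||A-B||_{\infty}$, i.e. $m_n(A)-m_n(B)\leq ||A-B||_{\infty}$. Interchanging the roles of $A$ and $B$ and using $||B-A||_{\infty}=||A-B||_{\infty}$ gives the reverse bound $m_n(B)-m_n(A)\leq ||A-B||_{\infty}$. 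Combining the two inequalities gives $|m_n(A)-m_n(B)|\leq ||A-B||_{\infty}$, which is exactly the Lipschitz property with constant $1$.

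If one wishes to confirm that the constant $1$ is optimal (not required by the statement, but natural to remark), I would note that taking $A=I_n$ and $B=0$ gives $m_n(A)-m_n(B)=1=||A-B||_{\infty}$, so no smaller constant works; this also confirms the infimum is attained and the estimate is sharp. The argument relies only on the variational formula and the Cauchy--Schwarz inequality, so it is self-contained given the material in the excerpt.
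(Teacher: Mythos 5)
Your proof is correct and follows essentially the same route as the paper: decompose $x^TAx = x^TBx + x^T(A-B)x$, bound the perturbation term by $||A-B||_{\infty}$ for unit $x$, and pass to the infimum in both directions. The only difference is that you spell out the Cauchy--Schwarz step that the paper leaves implicit, which is fine.
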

\begin{proof}
For $A, B \in \Sym_n(\R)$ and $x \in \R^n$, $|x|=1$, we have 
$x^TAx= x^TBx+x^T(A-B)x$, so that 
\begin{equation}
x^TBx - ||A-B||_{\infty} \leq x^TAx\leq x^TBx + ||A-B||_{\infty} 
\end{equation}
and thus 
\begin{equation}
m_n(B) - ||A-B||_{\infty} \leq m_n(A)\leq m_n(B)+ ||A-B||_{\infty} 
\end{equation}
and 
\begin{equation}
|m_n(B)-m_n(A)|\leq ||A-B||_{\infty}. 
\end{equation}
\end{proof}
\begin{lemma}\label{lemPosDef}
If the entries of some matrices $M_n\in \Sym_{l}(\R)$, $n \in \N_+$, 
converge to the respective entries of a positive definite symmetric matrix $M\in \R^{l\times l}$, then for a sufficiently large 
$n$, $M_n$ is positive definite.
\end{lemma}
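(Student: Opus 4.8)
The plan is to reduce positive definiteness to a statement about the smallest eigenvalue and then invoke the Lipschitz continuity already established in Lemma \ref{lemLipschMin}. I would work with the function $m_l:\Sym_l(\R)\to\R$ assigning to each symmetric matrix its lowest eigenvalue, and use the characterization that $A\in\Sym_l(\R)$ is positive definite if and only if $m_l(A)>0$. This follows from the variational formula $m_l(A)=\inf_{x\in\R^l,\ |x|=1}x^TAx$ recorded just above Lemma \ref{lemLipschMin}, since positive definiteness of $A$ is equivalent to $x^TAx>0$ for all $x\neq0$, equivalently (the unit sphere being compact) to $\inf_{|x|=1}x^TAx>0$.

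First I would observe that, as $M$ is positive definite by hypothesis, $m_l(M)>0$. Next I would pass from the assumed entrywise convergence $M_n\to M$ to convergence in the operator norm $||\cdot||_\infty$: the symmetric matrices $M_n-M$ tend to zero entrywise, and since all norms on the finite-dimensional space $\R^{l\times l}$ are equivalent, this gives $||M_n-M||_\infty\to 0$.

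Applying Lemma \ref{lemLipschMin} then yields $|m_l(M_n)-m_l(M)|\leq ||M_n-M||_\infty$, so that $m_l(M_n)\to m_l(M)>0$. Hence there exists $N\in\N_+$ with $m_l(M_n)>0$ for all $n\geq N$, and for every such $n$ the matrix $M_n$ is positive definite, which is exactly the assertion.

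The argument is essentially routine, so no step poses a genuine obstacle; the only point that merits a brief remark is the passage from entrywise convergence to convergence in $||\cdot||_\infty$. This is immediate from the equivalence of norms in finite dimensions, or, if one wishes to avoid invoking that, from a direct estimate bounding $||M_n-M||_\infty$ by a fixed multiple of the largest absolute entry of $M_n-M$.
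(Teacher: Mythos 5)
Your proof is correct and follows essentially the same route as the paper: reduce positive definiteness to $m_l(A)>0$ and use the Lipschitz continuity of $m_l$ from Lemma \ref{lemLipschMin} to deduce $m_l(M_n)\to m_l(M)>0$. The only difference is that you spell out the (routine) passage from entrywise convergence to convergence in $||\cdot||_\infty$, which the paper leaves implicit.
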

\begin{proof}
This follows from the fact that 
$A \in \Sym_{l}(\R)$ is positive definite only if $m_l(A)>0$, and from Lemma \ref{lemLipschMin}, $\lim_{n\to \infty}m_l(M_n)=m_l(M)$.
\end{proof}
\begin{theorem}\label{thPos}
If Condition \ref{cond1} holds, then under Condition \ref{condKappa}, 
a.s. for a sufficiently large $n$,
$r_n(\wt{\kappa}_n)$ holds for $r_n$ as in Definition \ref{defLemEquiv}. 
In particular, a.s.
$\lim_{n\rightarrow \infty }\PR(r_n(\wt{\kappa}_n))= 1$. 
\end{theorem}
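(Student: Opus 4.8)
The plan is to prove the stronger, matrix form of the conclusion, using the second of the equivalent formulations of $r_n$ in Lemma~\ref{lemEquiv} (see Definition~\ref{defLemEquiv}): for any random variable $K$ positive on $\{Z\neq 0\}$, the condition $r_n(\wt{\kappa}_n)$ holds precisely when the symmetric positive semidefinite matrix $M_n:=\overline{(\I(Z\neq 0)GK)}_n(\wt{\kappa}_n)$ is positive definite. The freedom to choose $K$ is what I would exploit: taking $K=(1+\Tr(G))^{-1}$, which is positive everywhere, forces $\I(Z\neq 0)GK$ to have bounded, hence $\PQ'$-integrable, entries, since $G$ is positive semidefinite and so $\|GK\|\leq \Tr(G)/(1+\Tr(G))<1$. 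I would also first transfer Condition~\ref{cond1} from $\PQ_1$ to $\PQ'=\PQ(b')$: since in the LETGS setting $\PQ'\sim_{\tau<\infty}\PQ_1$ and $A_b\subseteq\{0<\tau<\infty\}$, Remark~\ref{remOtherPU} gives that Condition~\ref{cond1} holds with $\PQ'$ in place of $\PQ_1$, that is $\PQ'(A_b)>0$ for every $b\neq 0$.

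With $K$ fixed as above, I would apply the strong law of large numbers entrywise to the i.i.d.\ sequence $(\I(Z\neq 0)GK)(\kappa_i)$ under Condition~\ref{condKappa}, obtaining a.s.\ $M_n\to M:=\E_{\PQ'}(\I(Z\neq 0)GK)$. The computation in the proof of Lemma~\ref{lemPosK} expresses $b^TMb=\frac{1}{2}\E_{\PQ'}(\I(Z\neq 0,\,0<\tau<\infty)K\sum_{k=0}^{\tau-1}|\lambda_k(b)|^2)$, whose nonnegative integrand is positive on exactly the event $A_b$ because $K>0$ there; hence $b^TMb>0$ iff $\PQ'(A_b)>0$. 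By the transferred Condition~\ref{cond1} this holds for all $b\neq 0$, so $M$ is positive definite. Lemma~\ref{lemPosDef} then yields that a.s.\ $M_n$ is positive definite for all sufficiently large $n$, i.e.\ $r_n(\wt{\kappa}_n)$ holds for all large $n$. Finally, since $\I(r_n(\wt{\kappa}_n))\to 1$ a.s., bounded convergence gives $\PR(r_n(\wt{\kappa}_n))=\E(\I(r_n(\wt{\kappa}_n)))\to 1$, the stated limit.

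The main obstacle is that $r_n$ must hold simultaneously for all nonzero $b$, an uncountable family. The naive route---fix $b$, note that the events $(A_b)_i$ are i.i.d.\ with positive probability so a.s.\ one occurs, then union over $b$---breaks down at this uncountable union. Encoding all directions at once through the single matrix $M_n$ avoids it, and the only delicate point then is integrability: $G$ itself need not be $\PQ'$-integrable, which is exactly why the weight $K$ is introduced. That this does not disturb the argument rests on the fact that for a sum of positive semidefinite matrices with strictly positive weights, positive definiteness is independent of the weights, which is also what makes the two formulations of $r_n$ in Lemma~\ref{lemEquiv} equivalent.
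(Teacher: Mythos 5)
Your proof is correct and follows essentially the same route as the paper: pick an everywhere-positive weight $K$ making $\I(Z\neq 0)GK$ bounded (the paper uses $K=\exp(-\max_{i,j}|G_{i,j}|)$ where you use $(1+\Tr(G))^{-1}$, an immaterial difference), apply the SLLN entrywise, invoke Lemma \ref{lemPosK} with $\PS=\PQ'$ for positive definiteness of the limit, and conclude via Lemma \ref{lemPosDef} and the second formulation in Lemma \ref{lemEquiv}. The only cosmetic difference is that you make explicit the transfer of Condition \ref{cond1} to $\PQ'$ via Remark \ref{remOtherPU}, which the paper absorbs into its application of Lemma \ref{lemPosK}.
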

\begin{proof}
Let $K=\exp(-\max_{i,j=1,\ldots,d}|G_{i,j}|)$. Then, $K>0$ and the entries of the matrix $\I(Z\neq 0)GK$ are bounded and thus $\PQ'$-integrable. 
Thus, from Lemma \ref{lemPosK} for $\PS=\PQ'$, 
$\E_{\PQ'}(\I(Z\neq 0)KG)$ is positive definite. 
Let $A_n=\overline{(\I(Z\neq 0)KG)}_n(\wt{\kappa}_n)$. From the SLLN, a.s.
\begin{equation}
\lim_{n\rightarrow \infty}A_n=\E_{\PQ'}(\I(Z\neq 0)KG).
\end{equation} 
Thus, from Lemma \ref{lemPosDef}, a.s. $A_n$ is positive definite for a sufficiently large $n$ and 
the thesis follows from the second point of Lemma \ref{lemEquiv}. 
\end{proof}

\section{Discussion of Condition \ref{cond1} in the Euler scheme case}\label{secCond}
Let us consider IS for an Euler scheme with a 
linear parametrization of IS drifts, discussed in Section \ref{secEulSDE} below formula (\ref{Xkb}). 
In this section we shall reformulate Condition \ref{cond1} and provide some sufficient assumptions for it to hold in such a case. 

Let us define a measure $\nu$ on $\mc{S}(\R^m)$ to be such that 
for each $B \in \mc{B}(\R^m)$ 
\begin{equation}\label{nudef} 
\begin{split} 
\nu(B)&= \E_{\PU}(\I(Z\neq 0,\ 0<\tau<\infty)\sum_{k=0}^{\tau-1}\I(X_k\in B))\\ 
&= \sum_{l\in \N_+}\sum_{k=0}^{l-1}\PU(Z\neq 0,\ \tau=l,\ X_k\in B)\\ 
&= \sum_{k=0}^{\infty}\PU(Z\neq 0,\ k< \tau<\infty,\ X_k\in B).\\ 
\end{split} 
\end{equation} 
\begin{remark}\label{remnu}
From the second line of (\ref{nudef}), Remark \ref{remPQ1AEquiv}, and (\ref{lambdatheta}), $\PQ_1(A_b)=\PU(A_b)=0$ is equivalent to 
\begin{equation}\label{mutheta}
\nu(\{\Theta b \neq 0\})=0
\end{equation}
(where $\{\Theta b \neq 0\}=\{x\in \R^m:\Theta(x) b \neq 0\}$).  
\end{remark}
\begin{remark}
Let for each $i \in \{1,\ldots,l\}$, $\wt{\Theta}_i:\R^m \rightarrow \R^d$ be the $i$th column of $\Theta$ and $[\wt{\Theta}_i]_{\approx}$ be the class of 
equivalence of $\wt{\Theta}_i$ with respect to the relation $\approx$ of equality $\nu$ a.e. on the set $\mc{K}$ of
measurable functions from $\mc{S}(\R^m)$ to $\mc{S}(\R^d)$. Then, from Remark \ref{remnu}, Condition \ref{cond1} is equivalent to 
$[\wt{\Theta}_i]_{\approx}$, $i=1,\ldots,l$, being linearly independent in the linear space $\mc{K}/\mathord\approx$ defined in a standard way. 
\end{remark}

Let us assume that $m =n+1$ for some $n \in \N_+$. 
Consider the following condition concerning the IS basis functions $\wt{r}_i:\R^m \to \R^d$, $i=1,\ldots,l$, as in Section \ref{secEulSDE}. 

\begin{condition}\label{condrform}
For some $m_1,m_2 \in \N_+$, functions $g_{1,i}:\R \to \R$, $i =1,\ldots,m_1$, 
and $g_{2,i}:\R^n \to \R^d$, $i =1,\ldots,m_2$, are such that
for $K_1=\{kh:k\in\{1,\ldots,m_1\}\}$, $g_{1,i|K_1}$, $i =1,\ldots,m_1$, are linearly independent and
for each $i \in\{1,\ldots,m_1\}$, for some open set $K_{2,i}\subset \R^n$, 
$g_{2,j|K_{2,i}}$, $j =1,\ldots,m_{2}$, are continuous and linearly independent.   
Furthermore, we have $l=m_1m_2$, and denoting $\pi(i,j)= m_2(i-1)+j$, 
for each $x \in \R^n$ and $t \in \R$ we have  $\wt{r}_{\pi(i,j)}(x,t) = g_{1,i}(t)g_{2,j}(x)$,  $i =1,\ldots,m_1$, $j =1,\ldots,m_2$. 
\end{condition}

\begin{remark}\label{remrfulfill}
As the functions $g_{1,i}$ as in the above condition one can take for example polynomials $g_{1,i}(t)=t^{i-1}$, $i =1,\ldots,m_1$. 
For $m_1=2$ one can also use $g_{1,1}(t)=1$ and $g_{1,2}(t)=t^p$ for some $p \in \N_+$. For $n=1$ 
and arbitrary nonempty open sets $K_{2,i}\subset \R$, $i=1,\ldots,m_2$, as the functions  $g_{2,i}$ in the above condition one can take 
e.g. polynomials analogously as above or Gaussian functions $g_{2,i}(x)=a_i\exp(\frac{(x-p_i)^2}{s})$ 
for some $a_i\in \R \setminus\{0\}$, $s\in \R_+$,  and $p_i \in R$ different 
for different $i$
(the linear independence of such Gaussian functions on each open interval can be proved by an analogous reasoning as in \cite{egreg2014}). 
In particular, for such $K_{2,i}$, Condition \ref{condrform} holds 
for the functions $\wt{r}_i$ equal to $\wh{r}_i$ as in (\ref{rtimedep}) 
or equal to $\wh{r}_i$ such that $\wh{r}_i(x,t)= \wt{r}_i(x)$, $x \in \R^n$, $t \in \R$, for $\wt{r}_i$ as in (\ref{riexp}), 
where in the first case $m_1=2$, in the second $m_1=1$, and in both cases 
$n=1$ and $m_2$ is equal to $M$ as in Section \ref{secSpecNumExp}. 
\end{remark}

Let $\lambda$ denote the Lebesgue measure on $\R^n$ and $\delta_x$ --- the Dirac measure centred on $x$.
\begin{theorem}\label{thRestr} 
If Condition \ref{condrform} holds and
\begin{equation}\label{lambdadeltall}
\lambda \otimes\delta_{ih}\ll_{K_{2,i}\times\{ih\}} \nu,\quad i=1,\ldots,m_1,  
\end{equation}
then Condition \ref{cond1} holds. 
\end{theorem}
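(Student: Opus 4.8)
The plan is to reduce the statement to a pure linear-independence fact via Remark \ref{remnu}. By (\ref{thetadef}) the $i$th column of $\Theta$ is $\sqrt{h}\,\wt{r}_i$, so $\Theta b=\sqrt{h}\sum_{i=1}^{l}b_i\wt{r}_i$ and, by Remark \ref{remnu} together with the definition of Condition \ref{cond1}, it suffices to show that the only coefficient vector with $\sum_{i=1}^{l}b_i\wt{r}_i=0$ holding $\nu$-a.e.\ is $b=0$. Writing the single index through $\pi(i,j)=m_2(i-1)+j$ and renaming the coefficients $c_{ij}:=b_{\pi(i,j)}$, and inserting the product form $\wt{r}_{\pi(i,j)}(x,t)=g_{1,i}(t)g_{2,j}(x)$ from Condition \ref{condrform}, I would thus start from a relation
\[
\sum_{i=1}^{m_1}\sum_{j=1}^{m_2}c_{ij}\,g_{1,i}(t)\,g_{2,j}(x)=0\quad\text{for }\nu\text{-a.e. }(x,t),
\]
and try to conclude that every $c_{ij}$ vanishes.

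The first key step is to slice this identity in time using the absolute-continuity hypothesis (\ref{lambdadeltall}). Fixing $i_0\in\{1,\ldots,m_1\}$ and letting $E$ be the $\nu$-null set off which the relation holds, the containment $\lambda\otimes\delta_{i_0h}\ll_{K_{2,i_0}\times\{i_0h\}}\nu$ forces $(\lambda\otimes\delta_{i_0h})\big((K_{2,i_0}\times\{i_0h\})\cap E\big)=0$; since $\delta_{i_0h}$ is a point mass, this says that for $\lambda$-a.e.\ $x\in K_{2,i_0}$ the relation holds at $t=i_0h$. Introducing $d_{i_0,j}:=\sum_{i=1}^{m_1}c_{ij}\,g_{1,i}(i_0h)$, the sliced relation reads $\sum_{j=1}^{m_2}d_{i_0,j}\,g_{2,j}(x)=0$ for $\lambda$-a.e.\ $x\in K_{2,i_0}$.

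The second step removes the almost-everywhere and strips the $g_{2,j}$ and then the $g_{1,i}$. Because $K_{2,i_0}$ is open and the $g_{2,j}$ are continuous there (Condition \ref{condrform}), a fixed linear combination of the $g_{2,j}$ that vanishes $\lambda$-a.e.\ on an open set vanishes identically on it; the assumed linear independence of $g_{2,j|K_{2,i_0}}$ then gives $d_{i_0,j}=0$ for all $j$. Letting $i_0$ range over $\{1,\ldots,m_1\}$ yields, for each fixed $j$, the homogeneous system $\sum_{i=1}^{m_1}c_{ij}\,g_{1,i}(i_0h)=0$, $i_0=1,\ldots,m_1$, i.e.\ the vector $(c_{ij})_i$ lies in the kernel of the $m_1\times m_1$ matrix with entries $g_{1,i}(i_0h)$. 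Since $K_1=\{h,2h,\ldots,m_1h\}$ has exactly $m_1$ points and $g_{1,i|K_1}$ are linearly independent, that matrix is invertible, whence $c_{ij}=0$ for all $i,j$, giving the desired $\nu$-a.e.\ linear independence and hence Condition \ref{cond1}.

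The step I expect to require the most care is the time-slicing: correctly exploiting the localized absolute continuity $\ll_{K_{2,i}\times\{ih\}}$ to pass from the single product-space $\nu$-null exceptional set to a genuine $\lambda$-a.e.\ statement on each slice $\{t=ih\}$, keeping track that the slice and its exceptional set depend on $i_0$. The subsequent continuity-to-everywhere upgrade and the two finite-dimensional linear-independence arguments (for $g_{2,j}$ on $K_{2,i_0}$ and for $g_{1,i}$ on $K_1$) are then routine.
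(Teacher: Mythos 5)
Your proposal is correct and follows essentially the same route as the paper's proof: reduce via Remark \ref{remnu} to the condition $\nu(\{\Theta b\neq 0\})=0$, slice at each time $ih$ using (\ref{lambdadeltall}), apply continuity and linear independence of the $g_{2,j}$ on each $K_{2,i}$ to kill the coefficients $\sum_{i}b_{\pi(i,j)}g_{1,i}(ih)$, and finish with the linear independence of the $g_{1,i}$ on $K_1$. The matrix-invertibility phrasing of the last step is just a restatement of the paper's direct appeal to linear independence.
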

\begin{proof} 
Let $b \in \R^l$ be such that $\PU(A_b)=0$. Then, from Remark \ref{remnu}, 
$\nu(\{\Theta b \neq 0\})=0$ and thus for $i =1,\ldots,m_1$, 
$\nu(\{(x,ih):x \in K_{2,i},\ \Theta(x,ih) b \neq 0\})=0$ and  from (\ref{lambdadeltall})
\begin{equation}\label{lambdaxK}
\lambda(\{x \in K_{2,i}:\Theta(x,ih) b \neq 0\})=0.  
\end{equation}
From (\ref{thetadef}) and Condition \ref{condrform} we have for $x \in \R^n$ and $t \in \R$
\begin{equation}\label{Thetaxtb}
\Theta(x,t) b= \sqrt{h}\sum_{j=1}^{m_1}\sum_{k=1}^{m_2}b_{\pi(j,k)}g_{1,j}(t)g_{2,k}(x).
\end{equation}
Denoting for $i=1,\ldots,m_1$ and  $k=1,\ldots,m_2$
\begin{equation}\label{ajdef}
a_{i,k} =\sum_{j=1}^{m_1}b_{\pi(j,k)}g_{1,j}(ih),   
\end{equation}
we thus have $\Theta(x,ih) b= \sqrt{h}\sum_{k=1}^{m_2}a_{i,k}g_{2,k}(x)$, $x \in \R^n$,
and from (\ref{lambdaxK}),  
\begin{equation}
\lambda(\{x\in K_{2,i}: \sum_{k=1}^{m_2}a_{i,k}g_{2,k}(x)\neq0\})=0.  
\end{equation}
Thus, for $i=1,\ldots,m_1$, 
from the continuity and linear independence of $g_{2,k|K_{2,i}}$ , $k=1,\ldots,m_2$, we 
have $a_{i,k}=0$, $k= 1,\ldots,m_2$. Therefore, from (\ref{ajdef}), for $k=1,\ldots,m_2$, 
$\sum_{j=1}^{m_1}b_{\pi(j,k)}g_{1,j|K_1} =0$, so that from the linear independence of $g_{1,j|K_1}$, $j=1,\ldots, m_1$, we have $b=0$.
\end{proof}

Let us assume the following condition. 
\begin{condition}\label{condformprim} 
We have $\sigma_{m,i}=0$, $i =1,\ldots,d$, $\mu_m=1$, $(x_{0})_m=0$, and 
$\wt{\sigma}:\R^m \to \R^{n \times d}$ is such that 
$\wt{\sigma}_{i,j}=\sigma_{i,j}$, $i =1,\ldots,n$, $j =1,\ldots,d$. 
\end{condition}
Note that it now holds 
for $\wt{X}_k= (X_{k,i})_{i=0}^n$, $k \in \N$, that 
\begin{equation}\label{xkprob}
X_k = (\wt{X}_k, kh),\quad k \in \N. 
\end{equation}
For $x\in \R^n$ and $k \in \N$ for which $\wt{\sigma}(x,kh)$ has linearly independent rows, let $Q_k(x) = (h\wt{\sigma}(x,kh) \wt{\sigma}(x,kh)^T)^{-1}$, 
and for $y \in \R^n$, let
\begin{equation}
\rho_k(x,y)=\frac{\sqrt{\det(Q_k(x))}}{(2\pi)^{\frac{m}{2}}}\exp((y- x -h\mu(x))^TQ_k(y- x -h\mu(x))).
\end{equation}

 \begin{theorem}\label{thbc}
 Let $k\in \N_+$ and sets $B_1,B_2,\ldots,B_k,C \in \mc{B}(\R^n)$ have positive Lebesgue measure. Let $\PU$ a.s. the fact that
 $\wt{X}_i \in B_i,\ i=1,\ldots,k$ and  $\wt{X}_{k+1} \in C$ imply 
that $Z\neq 0$ and $k<\tau<\infty$. 
 Let further 
 $\wt{\sigma}(x,t)$ have independent rows for each $(x,t) \in \{x_0\} \cup \bigcup_{i=1}^kB_i\times \{ih\}$. Then, 
\begin{equation}
\lambda \otimes\delta_{ih}\ll_{B_{i}\times\{ih\}} \nu,\quad i=1,\ldots,k.   
\end{equation}
 \end{theorem}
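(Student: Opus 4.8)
The plan is to establish the $k$ absolute-continuity statements one at a time, so fix $i \in \{1,\ldots,k\}$. Recalling the definition of $\ll_A$ from Section \ref{secBackDens}, I would first note that every measurable subset of the slice $B_i\times\{ih\}$ has the form $S\times\{ih\}$ for a measurable $S\subseteq B_i$, and that $(\lambda\otimes\delta_{ih})(S\times\{ih\})=\lambda(S)$. By contraposition the claim thus reduces to showing that $\lambda(S)>0$ implies $\nu(S\times\{ih\})>0$. The crucial simplification is that, by Condition \ref{condformprim}, $X_j=(\wt{X}_j,jh)$ has deterministic last coordinate $jh$, so in the series (\ref{nudef}) for $\nu(S\times\{ih\})$ only the term with $j=i$ survives, yielding exactly
\begin{equation}
\nu(S\times\{ih\})=\PU(Z\neq0,\ i<\tau<\infty,\ \wt{X}_i\in S).
\end{equation}

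Next I would invoke the standing hypothesis. Since $S\subseteq B_i$, on the event $E'=\{\wt{X}_1\in B_1,\ldots,\wt{X}_{i-1}\in B_{i-1},\ \wt{X}_i\in S,\ \wt{X}_{i+1}\in B_{i+1},\ldots,\wt{X}_k\in B_k,\ \wt{X}_{k+1}\in C\}$ we have $\wt{X}_j\in B_j$ for all $j\le k$ together with $\wt{X}_{k+1}\in C$, so by assumption $\PU$ a.s.\ on $E'$ it holds that $Z\neq0$ and $k<\tau<\infty$, hence $i<\tau<\infty$. Therefore $E'$ is contained up to a $\PU$-null set in the event on the right-hand side above, and so $\nu(S\times\{ih\})\ge\PU(E')$. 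It then suffices to prove $\PU(E')>0$ whenever $\lambda(S)>0$.

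Finally I would compute $\PU(E')$ from the Markov structure of the Euler scheme: because the time coordinate is deterministic, $(\wt{X}_j)_{j\ge0}$ is a time-inhomogeneous Markov chain started from the fixed point $\wt{X}_0$ whose step from time $j$ carries the Gaussian transition density $\rho_j$, so $\PU(E')$ equals the iterated integral of $\prod_{j=0}^{k}\rho_j(x_j,x_{j+1})$ over $x_1\in B_1,\ldots,x_i\in S,\ldots,x_k\in B_k,\ x_{k+1}\in C$, with $\wt{X}_0$ held fixed. For $x_0$ and for every $x\in B_j$ the matrix $\wt{\sigma}$ has independent rows, so $Q_j$ is positive definite and each factor $\rho_j$ is a strictly positive Gaussian density on the relevant domain; since $B_1,\ldots,B_k,C$ and $S$ all have positive Lebesgue measure, the integrand is strictly positive on a set of positive product measure and the integral is positive. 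The main obstacle I anticipate is the bookkeeping in justifying that $\PU(E')$ is exactly this iterated integral — that is, verifying via the independent-rows assumption that each $\rho_j$ really is an everywhere-positive transition density, so that the density chain rule applies throughout the domain of integration; once that is in place the positivity conclusion is immediate.
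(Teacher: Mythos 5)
Your proposal is correct and follows essentially the same route as the paper's own proof: reduce to showing that any positive-Lebesgue-measure $S\subseteq B_i$ has $\nu(S\times\{ih\})>0$, bound this below by the probability of the cylinder event $\wt{X}_1\in B_1,\ldots,\wt{X}_i\in S,\ldots,\wt{X}_{k+1}\in C$ (using the hypothesis to absorb the conditions $Z\neq 0$ and $i<\tau<\infty$ for free), and express that probability as an iterated integral of the strictly positive Gaussian transition densities $\rho_j$, which the independent-rows assumption guarantees are well defined. The only cosmetic differences are that you phrase the reduction via contraposition and note that the sum defining $\nu$ collapses to a single term, whereas the paper simply keeps one term of the sum as a lower bound.
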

 \begin{proof}
It follows from the fact that for each $j\in \{1,\ldots,k\}$, for each $D\subset \mc{B}(B_j)$ such that $\lambda(D)>0$, 
for $\wt{D}=\prod_{i=1}^{j-1}B_i\times D\times \prod_{i=j+1}^{k}B_i\times C$ we have
\begin{equation}
\begin{split}
0 &< \int_{\wt{D}}\prod_{i=0}^k\rho_i(x_i,x_{i+1})\,\mathrm{d}x_1\,\mathrm{d}x_2\ldots\,\mathrm{d}x_{k+1}\\
&= \PU((\wt{X}_i)_{i=1}^{k+1}\in \wt{D})\\
&=\PU(Z\neq 0,\ k<\tau<\infty,\ (\wt{X}_i)_{i=1}^{k+1}\in \wt{D})\\
&\leq  \PU(Z\neq 0,\ j<\tau<\infty,\ \wt{X}_j \in D)\\
&\leq \nu(D\times \{jh\}),\\
\end{split}
\end{equation}
where in the last line we used (\ref{xkprob}) and the last line of (\ref{nudef}). 
\end{proof}

\begin{remark}\label{remImpl}
Let us consider the problems of estimating an MGF $\mgf(x_0)$, a translated committor $q_{AB,a}(x_0)$, and a translated 
exit probability by a given time 
$p_{T,a}(x_0)$ as in Section \ref{secImpSpec} for $a \neq -1$. As $x_0$, $\mu$, and $\sigma$ fulfilling the above Condition 
\ref{condformprim} let us consider $x_0'$, $\mu'$, and $\sigma'$ as in Remark \ref{remExt}, and 
as an Euler scheme $X$ in the LETGS setting as above let us consider the time-extended process $X'$ as in that remark. 
Note that the process $X$ as in Section \ref{secImpSpec} is now equal to the above $\wt{X}$. 
Let $k \in \N_+$. Then for $D$ and $Z$ corresponding to the above expectations as in Section \ref{secImpSpec}, assuming that 
$C\in\mc{B}(B)$ in the case of estimation of the translated committor, or $C\in \mc{B}(D')$ for the MGF or the exit probability,  
and additionally $T\geq h(k+1)$ in the case of the exit probability, for each $B \in \mc{B}(D)$ 
we have that $\wt{X}_i \in B,\ i=1,\ldots,k$ and  $\wt{X}_{k+1} \in C$ implies that $Z\neq 0$ and $\tau =k+1$ 
(where for the exit probability rather than $\tau$ we mean $\tau'$ as in (\ref{taup})). 
This holds also for $Z$ and $\tau$ replaced by their terminated versions $Z_s$ and $\tau_s$ 
for $s\in \N_+$, $s>k$, as in Section \ref{secCondLETS}. 
\end{remark}

From remarks \ref{remrfulfill} and \ref{remImpl} and theorems \ref{thRestr} and \ref{thbc}
it follows that Condition \ref{cond1} holds in all the cases considered in our numerical experiments as in Section 
\ref{secSpecNumExp} if for the case of the exit probability before 
a given time we assume that $T\geq h(m_1+1)$ for $m_1$ depending on the basis functions used as 
in Remark \ref{remrfulfill}. 
Furthermore, this condition also holds in such terminated cases as in Section \ref{secCondLETS} for each $s\in \N_+$, $s > m_1$.

\section{\label{secCeLETGS}Cross-entropy and its estimators in the LETGS setting}
Consider the LETGS setting. From (\ref{lnlG}),
\begin{equation}\label{whcebLETGS}
\begin{split}
\wh{\ce}_n(b',b) &= \overline{ZL'(b^TGb +Hb + \I(\tau=\infty)\ln(\epsilon))}_n\\
&=b^T\overline{(ZL'G)}_nb+\overline{(ZL'H)}_nb+ \ln(\epsilon)\overline{(ZL'\I(\tau=\infty))}_n,\\ 
\end{split}
\end{equation}
so that 
\begin{equation}
\nabla_b\wh{\ce}_n(b',b)= 2\overline{(ZL'G)}_nb+\overline{(ZL'H)}_n.
\end{equation}
Thus, $b\rightarrow\wh{\ce}_n(b',b)(\omega)$ has a unique minimum point $b^*_n\in A$ only for 
$\omega \in \Omega_1^n$ for which $\overline{(ZL'G)}_n(\omega)$ is positive definite, in which case
for $A_n(b'):=2\overline{(ZL'G)}_n$ and $B_n(b'):=-\overline{(ZL'H)}_n$ we have 
\begin{equation}\label{bsdefLETGS}
b^*_n= (A_n(b'))^{-1}(\omega)B_n(b')(\omega).
\end{equation}
Note that if $Z\geq 0$ then from the second point of Lemma \ref{lemEquiv} for $K=ZL'$, for each
$\omega \in \Omega_1^n$, $\overline{(ZL'G)}_n(\omega)$ is positive definite only if $r_n(\omega)$ holds (see Definition \ref{defLemEquiv}). 
\begin{condition}\label{condInt1}
$ZG$ and $ZH$ have $\PQ_1$-integrable (equivalently, $\PU$-integrable) entries.
\end{condition}
\begin{lemma}\label{lempInt} 
Let Condition \ref{condBound1} hold for $S=Z$, let for some $p>1$, $\E_{\PU}(|Z|^p)<\infty$, and let 
for some $s \in \N_+$ and a random
variable $\wh{\tau}$ with a geometric distribution under $\PU$ with a parameter $q\in (0,1]$ it hold
\begin{equation}\label{tauleqwt}
\I(Z\neq 0)\tau \leq \wt{\tau}:=s+\wh{\tau}.  
\end{equation}
Then, for each $1\leq u<p$, we have $\E_{\PU}(|ZH_i|^u)<\infty$ and
$\E_{\PU}(|ZG_{i,j}|^u)<\infty$, $i, j \in \{1,\ldots,l\}$. In particular, Condition \ref{condInt1} holds. 
\end{lemma}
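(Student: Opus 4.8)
The plan is to establish the two moment bounds by a pointwise estimate on the event $\{Z\neq0\}$ followed by a H\"older split that separates the factor $|Z|$, controlled by the hypothesis $\E_{\PU}(|Z|^p)<\infty$, from the contribution of the tilting matrices and the Gaussian increments, controlled by Condition \ref{condBound1} and the geometric tail of $\wh{\tau}$.

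First I would record the pointwise bounds. On $\{Z\neq0,\ 0<\tau<\infty\}$, Condition \ref{condBound1} for $S=Z$ gives $||\Lambda_k||_\infty\leq R$ for $0\leq k<\tau$, $\PU$-a.e. Using the formulas for $G$ and $H$ from Section \ref{secLETS}, the sub-multiplicativity of $||\cdot||_\infty$, and its equivalence with the entrywise maximum norm, this yields constants $c_1,c_2$ depending only on $d$ with $|G_{i,j}|\leq c_1R^2\tau$ and $|H_i|\leq c_2 R\sum_{k=1}^{\tau}|\eta_k|$; both sides vanish when $\tau=0$. Since (\ref{tauleqwt}) gives $\tau\leq s+\wh{\tau}<\infty$ on $\{Z\neq0\}$, writing $V_k=|\eta_k|$ and $\wt{W}=\sum_{k=1}^{s+\wh{\tau}}V_k$ I obtain, $\PU$-a.e.,
\[
|ZG_{i,j}|\leq c_1R^2\,|Z|\,(s+\wh{\tau}),\qquad |ZH_i|\leq c_2R\,|Z|\,\wt{W}.
\]

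For both parts I would apply H\"older's inequality with the conjugate exponents $p/u$ and $p/(p-u)$, which is legitimate because $1\leq u<p$. Setting $w=up/(p-u)$ this gives, for instance,
\[
\E_{\PU}(|ZG_{i,j}|^u)\leq (c_1R^2)^u\,(\E_{\PU}(|Z|^p))^{u/p}\,(\E_{\PU}((s+\wh{\tau})^w))^{(p-u)/p},
\]
and the last factor is finite since a geometric random variable has finite moments of every order. The same split reduces the $H$-part to proving $\E_{\PU}(\wt{W}^{\,w})<\infty$.

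The hard part will be exactly this last estimate, because $\wh{\tau}$---which fixes the number of summands in $\wt{W}$---may depend strongly on the increments $\eta_k$ (in the setting of Theorem \ref{thNTau} it is a hitting time of the $\eta_k$), so no independence or Wald-type argument is available. To circumvent this I would use Minkowski's inequality in $L^w(\PU)$ (valid as $w\geq1$, which follows from $u\geq1$) to write $||\wt{W}||_w\leq\sum_{k\geq1}||\I(k\leq s+\wh{\tau})V_k||_w$, and then decouple each summand by a further H\"older step with conjugate exponents $a,b$:
\[
\E_{\PU}(\I(k\leq s+\wh{\tau})V_k^w)\leq \PU(s+\wh{\tau}\geq k)^{1/a}\,(\E_{\PU}(V_1^{wb}))^{1/b}.
\]
Here $V_1=|\eta_1|$ has all moments finite, while $\PU(s+\wh{\tau}\geq k)$ decays geometrically in $k$; hence the resulting series converges and $\E_{\PU}(\wt{W}^{\,w})<\infty$. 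Combining the two parts proves the claim for every $1\leq u<p$, and the case $u=1$ is Condition \ref{condInt1}.
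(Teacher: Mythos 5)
Your proposal is correct and follows essentially the same strategy as the paper: the same pointwise bounds $|G_{i,j}|\lesssim R^2\tau$ and $|H_i|\lesssim R\sum_{k=1}^{\tau}|\eta_k|$ on $\{Z\neq 0\}$, the same H\"older split isolating $(\E_{\PU}(|Z|^p))^{u/p}$ (your exponent $w=up/(p-u)$ is exactly the paper's $r$ defined by $\tfrac{u}{r}+\tfrac{u}{p}=1$), and the same reliance on the geometric tail of $\wh{\tau}$ for the remaining factor. The one place where you diverge is the estimate of $\E_{\PU}\bigl(\bigl(\sum_{k=1}^{\wt{\tau}}|\eta_k|\bigr)^r\bigr)$, where — as you rightly note — no independence between $\wh{\tau}$ and the increments may be assumed: you decouple via Minkowski's inequality in $L^w$ followed by a termwise H\"older step, whereas the paper decomposes over the events $\{\wt{\tau}=l\}$, applies the power-mean inequality $(\sum_{k=1}^{l}a_k)^r\leq l^{r-1}\sum_{k=1}^{l}a_k^r$, and then uses Schwarz's inequality to separate $\I(\wt{\tau}=l)$ from $\sum_{k=1}^{l}|\eta_k|^r$. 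Both devices close the argument by the same geometric decay; yours avoids the explicit expansion over the values of $\wt{\tau}$ and is, if anything, slightly more streamlined.
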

 \begin{proof} 
Let  $1\leq u<p$.  
 For $r \in (u,\infty)$ such that $\frac{u}{r}+\frac{u}{p}=1$, using H\"{o}lder's inequality and 
 (\ref{tauleqwt}) we have 
 \begin{equation}
 \E_{\PU}(|Z||G||_{\infty}|^u)\leq \E_{\PU}((|Z|\tau \frac{1}{2}R^2)^u) \leq (\frac{1}{2}R^{2})^u(\E_{\PU}(|Z|^{p}))^{\frac{u}{p}}(\E_{\PU}(\wt{\tau}^{r}))^{\frac{u}{r}} < \infty
 \end{equation}
and
 \begin{equation}
 \E_{\PU}((|Z||H||_{\infty}|)^u) \leq \E_{\PU}((|Z| R\sum_{k=1}^{\tau}|\eta_{k}|)^u)
 \leq R^u(\E_{\PU}(Z^{p}))^{\frac{u}{p}}(\E_{\PU}((\sum_{k=1}^{\wt{\tau}}|\eta_{k}|)^r))^{\frac{u}{r}}.\\
 \end{equation}
 Furthermore,
 \begin{equation}
 \E_{\PU}((\sum_{k=1}^{\wt{\tau}}|\eta_{k}|)^r) = \sum_{l=1}^{\infty}\E_{\PU}(\I(\wt{\tau}=l)(\sum_{k=1}^{l}|\eta_{k}|)^r)
 \leq \sum_{l=1}^\infty l^{r-1}\E_{\PU}(\I(\wt{\tau}=l)\sum_{k=1}^{l}|\eta_{k}|^r)\\
 \end{equation} 
and from Schwarz's inequality, 
 \begin{equation}
 \E_{\PU}(\I(\wt{\tau}=l)\sum_{k=1}^{l}|\eta_{k}|^r)\leq \PU(\wt{\tau}=l)^{\frac{1}{2}}(\E_{\PU}((\sum_{k=1}^{l}|\eta_{k}|^r)^2))^{\frac{1}{2}}.  
 \end{equation}
It holds $\PU(\wt{\tau}=s+k) = q(1-q)^{k-1}$, $k \in \N_+$, and 
$\E_{\PU}((\sum_{k=1}^{l}|\eta_{k}|^r)^2)= l \E_{\PU}(|\eta_{1}|^{2r} + l(l-1)(\E_{\PU}(|\eta_{1}|^{r}))^2$. 
The thesis easily follows from the above formulas. 
\end{proof} 
Note that (\ref{tauleqwt}) in the above lemma holds e.g. for $s=0$ and $\wh{\tau}$ as in 
Theorem \ref{thNTau} if the assumptions of this theorem hold for $B=\{0\}$, or for $\wh{\tau}=0$ 
for $\tau$ being an arbitrary stopping time terminated at $s$ as in Section \ref{secCondLETS}. 

Let us assume conditions \ref{condzntau} and \ref{condInt1}. 
Then, from (\ref{lnlG}) we receive the following formula for the cross-entropy  
\begin{equation} \label{cebLETGS}
\ce(b)=\E_{\PQ_1}(Z\ln(L(b)))=b^T\E_{\PQ_1}(ZG)b+\E_{\PQ_1}(ZH)b. 
\end{equation} 
Let 
$\wt{A}=2\E_{\PQ_1}(ZG) =\nabla^2\ce(b)$, $b \in \R^l$,
and 
$\wt{B}=-\E_{\PQ_1}(ZH)$.
Then, we have 
$\nabla \ce(b) = \wt{A}b-\wt{B}$.
Thus, if $\E_{\PQ_1}(ZG)$ is positive definite, then 
$\ce$ has a unique point $b^*\in A$, satisfying 
\begin{equation}\label{zlg}
b^*=\wt{A}^{-1}\wt{B}. 
\end{equation}
If $Z\geq 0$, then from Lemma \ref{lemPosK} for $K=Z$, $\E_{\PQ_1}(ZG)$
is positive definite only if Condition \ref{cond1} holds. 
Remark \ref{remCEneg} applies also to the above discussion in the LETGS setting. 

\section{\label{secGen}Some properties of expectations of random functions} 
Some of the below theorems are modifications or 
slight extensions of well-known results; 
see the appendix of Chapter 1 in \cite{Shapiro2003}. 

Let $l \in \N_+$ and $A\in \mc{B}(\R^l)$ be nonempty. A function $f:A\to \overline \R$ is said to be lower semicontinuous in a point $b \in A$ if 
$\liminf_{x \to b}f(x)\geq f(b)$, and it is said to be lower semicontinuous if it is lower semicontinuous in each $b \in A$. 
\begin{lemma}\label{lemMin}
A lower semicontinuous function $f:A \to \overline{\R}$ such that $f> -\infty$ (i.e. $f(b)> -\infty$, $b \in A$)
attains a minimum on each nonempty compact set $K \subset A$  (where such a minimum may be equal to infinity). 
\end{lemma}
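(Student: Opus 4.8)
The plan is to use the standard minimizing-sequence argument, exploiting the sequential compactness of $K$ together with the lower semicontinuity of $f$. Fix a nonempty compact $K \subset A$ and set $m = \inf_{x \in K} f(x) \in [-\infty,+\infty]$. If $m = +\infty$, then $f \equiv +\infty$ on $K$, and since $K$ is nonempty any point of $K$ attains the minimum (equal to $+\infty$, as permitted by the statement). So I may assume $m < +\infty$.

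First I would extract a minimizing sequence: by definition of the infimum, for each $n \in \N_+$ there is an $x_n \in K$ with $f(x_n) \le m + \tfrac{1}{n}$ when $m$ is finite, and with $f(x_n) \le -n$ when $m = -\infty$; in either case $f(x_n) \to m$. Since $K$ is a compact subset of $\R^l$, it is sequentially compact, so there is a subsequence $(x_{n_k})_k$ converging to some $x^* \in K$.

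Next I would combine two bounds on $f(x^*)$. Lower semicontinuity of $f$ at $x^*$ gives $f(x^*) \le \liminf_{k\to\infty} f(x_{n_k})$, and since $f(x_{n_k}) \to m$ this yields $f(x^*) \le m$. On the other hand, $x^* \in K$ forces $f(x^*) \ge m$. Hence $f(x^*) = m$, so the infimum over $K$ is attained at $x^*$. As a by-product, since $f > -\infty$ everywhere, $m = f(x^*) > -\infty$, so the attained minimum in fact lies in $(-\infty, +\infty]$, consistent with the statement.

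There is essentially no serious obstacle here; the argument is routine. The only points requiring care are the direction of the semicontinuity inequality (that lower semicontinuity bounds $f(x^*)$ from above by the $\liminf$ along any sequence approaching $x^*$) and the extended-real bookkeeping — in particular, admitting the a priori possibility $m = -\infty$ when forming the minimizing sequence and then excluding it at the end via $f > -\infty$, and separately disposing of the degenerate case $m = +\infty$.
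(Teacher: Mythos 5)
Your proof is correct and follows essentially the same route as the paper's: a minimizing sequence in $K$, a convergent subsequence by compactness, and the lower semicontinuity inequality $f(x^*)\leq \liminf_k f(x_{n_k})=m$ to conclude. The extra bookkeeping for the cases $m=+\infty$ and $m=-\infty$ is harmless and only makes explicit what the paper leaves implicit.
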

 \begin{proof}
Let $m=\inf_{b \in K}f(b)$ and let $a_n \in K$, $n \in \N_+$, be such that $\lim_{n\rightarrow \infty} f(a_n)=m$. 
Consider a subsequence $(a_{n_k})_{k \in \N_+}$ of $(a_n)_{n\in \N_+}$, converging to some $b^* \in K$.
Then, from the lower semicontinuity of $f$,  $m=\liminf_{k\rightarrow \infty} f(a_{n_k}) \geq f(b^*)$, so that $f(b^*)=m$.
\end{proof}

\begin{condition}\label{condHlow}
A (random) function $h:\mc{S}(A)\otimes(\Omega,\mc{F})\to \mc{S}(\overline{\R})$ 
is such that a.s. 
$b\ in A\to h(b):=h(b,\cdot)$ is lower semicontinuous and 
\begin{equation}\label{lowerBound}
\E(\sup_{b \in A}(h(b)_{-}))<\infty. 
\end{equation}
For such a $h$ we denote $b\in A\to f(b):=\E(h(b))$. 
\end{condition}

\begin{lemma}\label{lemSemi}
Assuming  Condition \ref{condHlow}, we have  $f>-\infty$ and $f$ is 
lower semicontinuous on $A$.
\end{lemma}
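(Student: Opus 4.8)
The plan is to reduce the lower semicontinuity of $f$ at an arbitrary point $b \in A$ to the sequential characterization of lower semicontinuity valid in the metric space $\R^l$, and then to transport the a.s.\ pointwise lower semicontinuity of $\omega \mapsto h(\cdot,\omega)$ through the expectation by means of Fatou's lemma. The integrable envelope $g := \sup_{b \in A} h(b)_-$ furnished by (\ref{lowerBound}) is exactly what legitimizes this passage. First I would dispose of the claim $f > -\infty$: for each fixed $b \in A$ we have $h(b)_- \le g$, so by (\ref{lowerBound}) the negative part $h(b)_-$ is integrable, and hence $f(b) = \E(h(b)) = \E(h(b)_+) - \E(h(b)_-)$ is a well-defined element of $(-\infty,\infty]$; in particular $f(b) > -\infty$.

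For lower semicontinuity, I would fix $b \in A$ and an arbitrary sequence $b_n \to b$ in $A$, noting that on $\R^l$ it suffices to show $\liminf_n f(b_n) \ge f(b)$, since this sequential condition characterizes lower semicontinuity at $b$. Consider the shifted variables $X_n := h(b_n) + g$. Since $h(b_n)_- \le g$ a.s., we have $X_n \ge 0$ a.s., so Fatou's lemma gives $\E(\liminf_n X_n) \le \liminf_n \E(X_n)$. On the a.s.\ event on which $b' \mapsto h(b')$ is lower semicontinuous we have $\liminf_n h(b_n) \ge h(b)$, whence a.s.\ $\liminf_n X_n \ge h(b) + g$. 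Writing $h(b) + g = h(b)_+ + (g - h(b)_-)$ as a sum of two nonnegative variables, additivity yields $\E(h(b)+g) = f(b) + \E(g)$, and likewise $\E(X_n) = f(b_n) + \E(g)$. Combining these three facts gives $f(b) + \E(g) \le \liminf_n f(b_n) + \E(g)$, and since $\E(g) < \infty$ by (\ref{lowerBound}) it may be cancelled, producing $f(b) \le \liminf_n f(b_n)$, which is what I want.

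The step demanding the most care is the Fatou application together with the bookkeeping of the additive shift by $g$: one must verify that $X_n \ge 0$ a.s.\ (so Fatou is legitimate), that $\E(g)$ is finite (so it can be subtracted without creating an indeterminate $\infty - \infty$), and that the identities $\E(h(b)+g) = f(b)+\E(g)$ and $\E(h(b_n)+g) = f(b_n)+\E(g)$ remain valid even though $f$ may take the value $+\infty$ --- all of which follow from $h(b)_-,\, h(b_n)_- \le g$ and the integrability of $g$. By contrast, the reduction to sequences and the inference of $\liminf_n X_n \ge h(b)+g$ from the a.s.\ lower semicontinuity of $h$ are routine.
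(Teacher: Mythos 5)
Your proposal is correct and follows essentially the same route as the paper: the paper also establishes $f>-\infty$ directly from (\ref{lowerBound}) and then applies Fatou's lemma (justified by the integrable envelope from (\ref{lowerBound})) together with the a.s.\ lower semicontinuity of $b\to h(b)$ to get $\liminf_n f(a_n)\geq \E(\liminf_n h(a_n))\geq f(b)$. Your only addition is to spell out explicitly, via the shift by $g=\sup_{b\in A}h(b)_-$, why Fatou applies — a detail the paper leaves implicit.
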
 
\begin{proof}
From (\ref{lowerBound}), $f>-\infty$. For each $b \in A$ and 
  $a_n \in A$, $n \in \N$, such that $\lim_{n\to \infty }a_n=b$,
  from Fatou's lemma (which can be used thanks to (\ref{lowerBound})) and the a.s. lower semicontinuity of $b\rightarrow h(b)$,
  \begin{equation}
  \liminf_{n\to \infty}f(a_n) \geq \E(\liminf_{n \rightarrow \infty} h(a_{n})) \geq f(b).
  \end{equation} 
  \end{proof} 

Let further in this section $A \subset \R^l$ be open. 
For $x \in A$, let 
$d_x=\inf_{y\in A'} |y-x|$. 
For a sequence $x_n \in A$, $n \in \N_+$, let us write $x_n\uparrow A$  if
$\max (\frac{1}{d_{x_n}},\ |x_n|)\to \infty$ as $n \to \infty$, i.e. $x_n$ 
in a sense tries to leave $A$.
For $a \in\overline{\R}$ and $f:A\to \overline{\R}$, let us denote by
$\lim_{x \uparrow A} f(x)=a$ the fact that $\lim_{n\to \infty}f(x_n)= a$ 
whenever $x_n \uparrow A$. 

\begin{condition}\label{condInf}
A lower semicontinuous function $f:A \to \overline{\R}$ fulfills $f>-\infty$
and $\lim_{x\uparrow A}f(x)= \infty$. 
\end{condition}

\begin{condition}\label{condStrict}
Condition \ref{condInf} holds, the set $B$ on which $f$ is finite is nonempty and convex, and $f$ is strictly convex on $B$.
\end{condition}

\begin{lemma}\label{lemMinPoint}
Under Condition \ref{condInf}, $f$ attains a minimum on $A$ and if Condition \ref{condStrict}
holds, then the corresponding minimum point $b^*$ is unique and $f(b^*)<\infty$.
\end{lemma}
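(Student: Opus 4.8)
The plan is to exploit the coercivity encoded in Condition \ref{condInf} to reduce the problem to a compact domain, where existence of a minimizer is already delivered by Lemma \ref{lemMin}. The conceptual heart of the argument is the claim that every sublevel set $S_c = \{x \in A : f(x) \le c\}$, for $c \in \R$, has closure $\overline{S_c}$ that is a compact subset of $A$. Once this is established, existence of a global minimizer follows by applying Lemma \ref{lemMin} on $\overline{S_c}$ for a suitable $c$, and uniqueness follows from strict convexity in the standard midpoint fashion.

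First I would treat the trivial case $\inf_A f = \infty$ (then $f\equiv\infty$ and every point is a minimizer) and otherwise fix a real $c$ strictly above $m:=\inf_A f$, so that $S_c$ is nonempty. To prove the sublevel-set claim, I would argue by contradiction: if $\overline{S_c}$ failed to be a compact subset of $A$, then either $S_c$ is unbounded, giving a sequence $x_n \in S_c$ with $|x_n|\to\infty$, or $\overline{S_c}$ meets the complement of the open set $A$, giving $x_n \in S_c$ with $x_n$ converging to a point outside $A$ and hence $d_{x_n}\to 0$. In either case $\max(1/d_{x_n},|x_n|)\to\infty$, i.e. $x_n\uparrow A$, so by Condition \ref{condInf} we would have $f(x_n)\to\infty$, contradicting $f(x_n)\le c$. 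Thus $\overline{S_c}\subset A$ is compact. Since $f$ is lower semicontinuous with $f>-\infty$, Lemma \ref{lemMin} yields a point $b^*\in\overline{S_c}$ attaining $\min_{\overline{S_c}} f$. A short infimum comparison — using $S_c \subseteq \overline{S_c}\subseteq A$ together with $f>c\ge m$ off $S_c$ — gives $f(b^*)=\inf_{\overline{S_c}} f = m$, so $b^*$ is a global minimizer on $A$; moreover $f(b^*)>-\infty$ forces $m>-\infty$, ruling out an infimum of $-\infty$.

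For the second part, assume Condition \ref{condStrict}. Since $B$ (where $f$ is finite) is nonempty we have $m<\infty$, so the minimizer $b^*$ produced above satisfies $f(b^*)=m<\infty$ and hence $b^*\in B$, which gives the asserted $f(b^*)<\infty$. For uniqueness I would suppose $b_1^*\neq b_2^*$ are both global minimizers; both lie in $B$ as their value is $m<\infty$, so by convexity of $B$ the midpoint $\tfrac12(b_1^*+b_2^*)$ lies in $B$, and strict convexity of $f$ on $B$ gives $f\!\left(\tfrac12(b_1^*+b_2^*)\right)<\tfrac12 f(b_1^*)+\tfrac12 f(b_2^*)=m$, contradicting the minimality of $m$. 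Hence the minimizer is unique.

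I expect the main obstacle to be the sublevel-set claim, specifically translating the definition of $x_n\uparrow A$ correctly into the two distinct failure modes of relative compactness (escape to infinity versus approach to the complement of $A$), and being careful that the closure is taken in $\R^l$ while the compact set must still sit inside the open set $A$ so that Lemma \ref{lemMin} applies. Everything after that reduction is routine.
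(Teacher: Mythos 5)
Your proof is correct and follows essentially the same route as the paper: both arguments use the coercivity in Condition \ref{condInf} to reduce the minimization to a compact subset of $A$ (the paper via the explicit set $K=\{b\in A:|b|\leq M,\ d_b\geq \frac{1}{M}\}$ for large $M$, you via the closure of a sublevel set), then invoke Lemma \ref{lemMin} for existence and strict convexity on $B$ for uniqueness. Your version spells out the compactness of the sublevel set in more detail, but the underlying idea is identical.
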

\begin{proof}
Under Condition \ref{condInf}, 
for a sufficiently large $M>0$, for a compact set 
\begin{equation}\label{Kdef}
K=\{b\in A: |b| \leq M,\ d_b \geq \frac{1}{M}\}, 
\end{equation}
we have  $\inf_{b\in A}f(b)=\inf_{b \in K} f(b)$. 
From Lemma \ref{lemMin} there exists a minimum point $b^* \in K$ of $f$ on $K$ and thus also on $A$.
Under Condition \ref{condStrict} we have $f(b^*)<\infty$ and the uniqueness of 
$b^*$ follows from the strict convexity of $f$. 
\end{proof}

\begin{lemma}\label{lemFat}
Assuming Condition \ref{condHlow}, if with positive probability 
$\lim_{b\uparrow A}h(b)= \infty$, then $\lim_{b\uparrow A}f(b)= \infty$. 
\end{lemma} 
\begin{proof}
For $a_k \uparrow A$, from Fatou's lemma
\begin{equation}
\liminf_{k\rightarrow\infty}f(a_k)\geq 
\E(\liminf_{k\rightarrow \infty}h(a_k)) =\infty.  
\end{equation}
\end{proof}

\begin{lemma}\label{lemPropConv}
Under Condition \ref{condHlow}, let us assume that $A$ is convex, 
for some $b_0 \in A$,  $f(b_0) <\infty$, 
and a.s. $b\rightarrow h(b)$ is 
convex. Then, $f$ is convex on the convex nonempty set $B \subset A$ 
on which it is finite. 
If further 
with positive probability $b\rightarrow h(b) \in\R$ is strictly convex and 
$\lim_{b\uparrow A}h(b)= \infty$,
then $f$ satisfies Condition \ref{condStrict}. 
\end{lemma}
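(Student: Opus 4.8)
The plan is to transfer the a.s.\ convexity and strict convexity of the random function $h(\cdot)$ to its expectation $f = \E(h(\cdot))$ by integrating the relevant a.s.\ inequalities, the only genuine work being the integrability bookkeeping needed to legitimize taking expectations. Throughout I would write $B = \{b \in A : f(b) < \infty\}$ for the set on which $f$ is finite; by Lemma \ref{lemSemi} we already know $f > -\infty$, so on $B$ the function $f$ is real-valued, and the hypothesis $b_0 \in B$ makes $B$ nonempty.

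For the first assertion I would fix $b_1, b_2 \in B$ and weights $q_1, q_2 \geq 0$ with $q_1 + q_2 = 1$, and put $b = q_1 b_1 + q_2 b_2$, which lies in $A$ by convexity of $A$. The key preliminary observation is that each $h(b_i)$ is integrable: its negative part is dominated by $\sup_{c \in A}(h(c)_{-})$, which is integrable by Condition \ref{condHlow}, while $\E(h(b_i)) = f(b_i) < \infty$ controls the positive part. The a.s.\ convexity of $h$ gives $h(b) \leq q_1 h(b_1) + q_2 h(b_2)$ a.s.; the right-hand side is integrable with expectation $q_1 f(b_1) + q_2 f(b_2)$, and the left-hand side has integrable negative part, so taking expectations yields $f(b) \leq q_1 f(b_1) + q_2 f(b_2) < \infty$. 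This simultaneously shows $b \in B$, so that $B$ is convex, and that $f$ is convex on $B$.

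For Condition \ref{condStrict} under the additional hypotheses, I would first dispatch Condition \ref{condInf}: lower semicontinuity of $f$ and $f > -\infty$ come from Lemma \ref{lemSemi}, while $\lim_{b\uparrow A} f(b) = \infty$ follows from Lemma \ref{lemFat} applied with the assumption that $\lim_{b\uparrow A} h(b) = \infty$ with positive probability. Nonemptiness and convexity of $B$ are already in hand from the first part. It then remains to prove that $f$ is strictly convex on $B$. Here I would fix distinct $b_1, b_2 \in B$ and $q_1, q_2 \in (0,1)$ with $q_1 + q_2 = 1$, set $b = q_1 b_1 + q_2 b_2 \in B$ using the convexity of $B$, and introduce the convexity-gap random variable $g = q_1 h(b_1) + q_2 h(b_2) - h(b)$. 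Since all three values are integrable, $g$ is integrable; by a.s.\ convexity $g \geq 0$ a.s., and on the positive-probability event where $b \to h(b) \in \R$ is strictly convex one has $g > 0$. A nonnegative integrable random variable that is positive on a set of positive probability has strictly positive expectation, so $\E(g) > 0$, i.e.\ $f(b) < q_1 f(b_1) + q_2 f(b_2)$, as required.

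The main point requiring care, rather than a deep obstacle, is the integrability bookkeeping: I must ensure that every expectation I manipulate is well-defined and finite before integrating the a.s.\ (in)equalities, which is precisely where membership in $B$ together with the uniform lower bound $\E(\sup_{c\in A}(h(c)_{-}))<\infty$ from Condition \ref{condHlow} is used, and where the convexity of $B$ established in the first part is needed to guarantee that the midpoint $b$ again lies in $B$. The other delicate step is the passage from a strict inequality holding only on a positive-probability event to a strict inequality of the expectations, and this is handled cleanly by isolating the nonnegative integrable gap $g$ and noting that its expectation cannot vanish.
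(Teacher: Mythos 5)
Your proposal is correct and follows the same route as the paper: the paper's proof simply asserts that the (strict) convexity of $f$ and the convexity of $B$ "easily follow from $f(b)=\E(h(b))$" and invokes lemmas \ref{lemSemi} and \ref{lemFat} for the remaining points of Condition \ref{condStrict}, which is exactly the argument you spell out. Your integrability bookkeeping and the nonnegative-gap argument for strict convexity are the correct way to fill in the details the paper leaves implicit.
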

\begin{proof}
The (strict) convexity of $f$ and the convexity of $B$ easily follow from 
$f(b)=\E(h(b))$. 
The remaining points of Condition \ref{condStrict} 
follow from lemmas \ref{lemSemi} and \ref{lemFat}. 
\end{proof}

\section{\label{secMsqGen}Some properties of mean square and its estimators}
Let us consider the mean square function and its estimators 
as in sections \ref{secCoeffDiv} and \ref{secEstMin} (under appropriate assumptions as in these sections). 
\begin{condition}\label{condLConv}
$A$ is convex and $b \in A\to L(b)(\omega)\in \R$ is 
convex and continuous, $\omega \in \Omega_1$. 
\end{condition} 
From (\ref{whmsqdef}), if Conditon \ref{condLConv} holds, 
then $b\to \wh{\msq}_n(b',b)$ is convex and continuous (for each $b'\in A$ and when evaluated on each 
$\omega\in \Omega_1^n$).   
\begin{definition}\label{pmsqdef}
For $A$ open and convex, let the random condition $p_{\msq}$ on $\mc{S}_1$ hold only if $Z\neq 0$, 
$b\to L(b)\in \R_+$ is strictly convex, and $\lim_{b\uparrow A} L(b)= \infty$.  
\end{definition}
\begin{remark}\label{remMsqEst}
If Condition \ref{condLConv} holds and for some $n \in \N_+$, $\omega=(\omega_i)_{i=1}^n \in \Omega_1^n$,
and $i \in \{1,\ldots,n\}$, $p_{\msq}(\omega_i)$ holds, then for each $b' \in A$,
$b\to\wh{\msq}_n(b',b)(\omega)\in \R$ is strictly convex, 
continuous, and $\lim_{b\uparrow A}\wh{\msq}_n(b',b)(\omega)= \infty$. 
\end{remark}
It holds 
\begin{equation}\label{msq2sum}
\wh{\msq2}_{n}(b',b)=\frac{1}{n^2}\sum_{i=1}^n\left(Z_i^2L_i'L_i(b)\sum_{j\in\{1,\ldots,n\}, j\neq i}\frac{L_j'}{L_j(b)}\right)
+\frac{1}{n}\overline{((ZL')^2)}_n.
\end{equation}
Thus, $b\to\wh{\var}_{n}(b',b)$ and $b\to\wh{\msq2}_{n}(b',b)$
are positively linearly equivalent to $b\to f_{\var,n}(b',b)$ for
\begin{equation}\label{fnvardef}
f_{\var,n}(b',b)= \sum_{i=1}^n\left(Z_i^2L_i'\sum_{j\in \{1,\ldots,n\}, j \neq i}\frac{L_j'L_i(b)}{L_j(b)}\right),\quad b',b \in A.
\end{equation}
\begin{condition}\label{condconVVar}
For each $\omega_1,\omega_2 \in \Omega_1$, $b\to \frac{L(b)(\omega_1)}{L(b)(\omega_2)}$ is convex. 
\end{condition}
Note that if Condition \ref{condconVVar} holds, then $b\to f_{\var,n}(b',b)$ is convex
and thus so are $b\to \wh{\var}_n(b',b)$ and $b\to \wh{\msq2}_{n}(b',b)$. 

\begin{remark}\label{remStrategy}
Let us assume Condition \ref{condKappa} and that $b \in A\to L(b)(\omega)\in \R$ is continuous, $\omega \in \Omega_1$. 
Then, for each $n \in \N_+$, from $\wh{\msq}_n$ being nonnegative, Condition \ref{condHlow} holds for 
$h(b,\cdot)= \wh{\msq}_n(b',b)(\wt{\kappa}_n)$, for which $f=\msq$ in that condition. 
\end{remark}
\begin{lemma}\label{lempmsq}
Under Condition \ref{condLConv}, if  $\PQ_1(p_{\msq})>0$ and for some $b\in A$, $\msq(b)<\infty$,
then $f=\msq$ satisfies Condition \ref{condStrict}.  
\end{lemma}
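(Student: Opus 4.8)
The plan is to apply Lemma \ref{lemPropConv} to the random function $h(b,\omega) = Z(\omega)^2 L(b)(\omega)$, which by (\ref{msqDef}) satisfies $\E_{\PQ_1}(h(b)) = \E_{\PQ_1}(Z^2 L(b)) = \msq(b)$, so that $f = \msq$ is exactly the expectation of $h$ as in Condition \ref{condHlow}. Since the random condition $p_{\msq}$ is referenced, $A$ is open and convex (see Definition \ref{pmsqdef}), which supplies the convexity of $A$ required by Lemma \ref{lemPropConv}.

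First I would verify Condition \ref{condHlow} for this $h$. Because $L(b)$ is a density, it is $\PQ(b)$ a.e.\ nonnegative on $\{Z \neq 0\}$, and from Condition \ref{condpqbllpq1} we have $\PQ_1 \ll_{Z\neq 0} \PQ(b)$, so $L(b) \geq 0$ holds $\PQ_1$ a.e.\ on $\{Z \neq 0\}$; together with $h = 0$ on $\{Z = 0\}$ this gives $h(b) \geq 0$ $\PQ_1$ a.s. Hence $h(b)_- = 0$ and (\ref{lowerBound}) holds trivially. The a.s.\ continuity of $b \to L(b)(\omega)$ assumed in Condition \ref{condLConv} makes $b \to h(b)(\omega) = Z(\omega)^2 L(b)(\omega)$ continuous, hence lower semicontinuous, so Condition \ref{condHlow} is satisfied.

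Next I would check the remaining hypotheses of the first part of Lemma \ref{lemPropConv}: the convexity of $A$ is already noted, $\msq(b_0) < \infty$ for some $b_0 \in A$ is precisely the standing hypothesis, and $b \to h(b)(\omega) = Z(\omega)^2 L(b)(\omega)$ is convex since it is the product of the nonnegative constant $Z(\omega)^2$ with the convex function $L(b)(\omega)$ from Condition \ref{condLConv}. This already yields the convexity of $\msq$ on the nonempty convex set $B$ on which it is finite.

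Finally, to obtain the full Condition \ref{condStrict} I would exploit $\PQ_1(p_{\msq}) > 0$: on the event $p_{\msq}$, by Definition \ref{pmsqdef} we have $Z \neq 0$, the map $b \to L(b) \in \R_+$ strictly convex, and $\lim_{b\uparrow A} L(b) = \infty$. Multiplying by the strictly positive constant $Z^2$ preserves strict convexity and gives $\lim_{b\uparrow A} h(b) = \infty$, so with positive probability $b \to h(b) \in \R$ is strictly convex and coercive in this sense; the second part of Lemma \ref{lemPropConv} then applies and shows that $\msq$ satisfies Condition \ref{condStrict}. The only care needed is the bookkeeping of the nonnegativity of $L(b)$ under $\PQ_1$ and the transfer of strict convexity and of the coercivity $\lim_{b\uparrow A}$ from $L(b)$ to $Z^2 L(b)$; the rest is a direct invocation of Lemma \ref{lemPropConv}.
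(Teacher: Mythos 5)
Your proof is correct and follows essentially the same route as the paper, which likewise reduces the claim to Lemma \ref{lemPropConv} applied to the one-sample mean-square estimator (via remarks \ref{remStrategy} and \ref{remMsqEst}). The only cosmetic difference is that you take $h(b)=Z^2L(b)$ under $\PQ_1$ rather than $h(b)=Z^2L'L(b)$ under $\PQ'$, which amounts to the same change-of-measure identity and if anything spells out the nonnegativity of $h$ more carefully than the paper does.
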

\begin{proof}
It follows from Remark \ref{remMsqEst}, Remark \ref{remStrategy} for $n=1$, and Lemma \ref{lemPropConv}.
\end{proof}


\section{\label{secMsqECM}Mean square and its estimators in the ECM setting}
Let us consider the ECM setting as in Section \ref{secECM} for $A$ open. As discussed there, 
for each $\omega\in\Omega_1$, $b\to L(b)(\omega)$ is convex, and under Condition 
\ref{condtX}, $b\to L(b)(\omega)$ is strictly convex. 
Thus, under Condition \ref{condtX}, for each $\omega \in \Omega_1$, $p_{\msq}(\omega)$ holds (see Definition (\ref{pmsqdef})) only if $Z(\omega)\neq0$ and 
$\lim_{b \uparrow A}L(b)(\omega)= \infty$. 
Note that for $X$ having a non-degenerate normal distribution under $\PQ_1$, 
for each $\omega \in \Omega_1$, $\lim_{|b| \to \infty}L(b)(\omega)= \infty$,
so that $p_{\msq}$ holds only if $Z\neq 0$. 
For $X$ having the distribution of a product of $n$ exponentially tilted distributions from the gamma family under $\PQ_1$, we
have $\PQ_1(X \in \R_+^n)=1$, and for $\omega \in \Omega_1$ such that $X(\omega) \in \R_+^n$, 
we have $L(b)(\omega)\to \infty$ as $b \uparrow A$. 
Thus, for such an $\omega$, $p_{\msq}(\omega)$ holds only if $Z(\omega)\neq 0$, and the condition $\PQ_1(p_{\msq})>0$, appearing in Lemma \ref{lempmsq},
reduces to $\PQ_1(Z\neq0)>0$. 
For $X$ having a Poisson distribution under $\PQ_1$, we have $\lim_{|b| \to \infty}L(b)(\omega)= \infty$ 
when $X(\omega) \in \N_+$, 
but not when $X(\omega)=0$. Thus, in such a case $p_{\msq}$ holds when $Z\neq0$ and $X \in \N_+$,
but not when $X=0$, and we have $\PQ_1(p_{\msq})>0$ only if $\PQ_1(X\in \N_+,\ Z\neq0)>0$.

\begin{remark}\label{remMsqECM}
Let us assume Condition \ref{condPartvm}. Then, for each $n\in\N_+$ and $b'\in A$, 
\begin{equation}
\nabla_b \wh{\msq}_n(b',b)= \overline{(Z^2L'(\mu(b)-X)L(b))}_n 
\end{equation}
and 
\begin{equation}
\nabla^2_b \wh{\msq}_n(b',b)= \overline{(Z^2L'(\Sigma(b)+(\mu(b)-X)(\mu(b)-X)^T)L(b))}_n.  
\end{equation}
Let us further in this remark assume Condition \ref{condtX}, so that $\Sigma(b)$ is positive definite. Then, for
$\omega\in \Omega_1^n$ such that $Z(\omega_i)\neq 0$ for some $i\in\{1,\ldots,n\}$,
$\nabla^2_b \wh{\msq}_n(b',b)(\omega)$ is positive definite for each $b',b\in A$. Indeed, in such a case for each $v \in \R^l\setminus\{0\}$ we have
\begin{equation}
\begin{split}
v^T\nabla^2_b \wh{\msq}_n(b',b)v&= v^T\Sigma(b)v\overline{(Z^2L'L(b))}_n +\overline{(Z^2L'L(b)((\mu(b)-X)^Tv)^2)}_n\\
&\geq v^T\Sigma(b)v\overline{(Z^2L'L(b))}_n>0.\\
\end{split}
\end{equation}
\end{remark}

Note that Condition \ref{condconVVar} holds for ECM since for each $\omega_1,\omega_2 \in \Omega_1$ and $b \in A$ we have
\begin{equation}
\frac{L(b)(\omega_1)}{L(b)(\omega_2)}=\exp(b^T(X(\omega_2)-X(\omega_1))).
\end{equation}
In particular, as discussed in the previous section, the estimators of variance and the new estimators of mean square are convex.  
For each $n \in \N_+$, $\omega \in \Omega_1^n$, and $i,j \in \{1,\ldots,n\}$, let us denote 
\begin{equation}
v_{j,i}(\omega)=X(\omega_{j})- X(\omega_i). 
\end{equation}
For each $n \in \N_2$, let a function $g_{\var,n}:A\times \R^l\times \Omega_1^n \to \R$ be such that for each
$b' \in A$, $b \in \R^l$, and $\omega \in \Omega_1^n$
\begin{equation}\label{gvardef}
g_{\var,n}(b',b)(\omega)=\sum_{i=1}^n\left((Z^2L')(\omega_i)\sum_{j\in \{1,\ldots,n\}, j \neq i}L'(\omega_j)\exp(b^Tv_{j,i}(\omega))\right).
\end{equation} 
Note that for each $b'$ and $\omega$ as above, $b\in \R^l\to g_{\var,n}(b',b)(\omega)$ is convex and 
\begin{equation}\label{gvarfvar}
g_{\var,n}(b',b)(\omega)=f_{\var,n}(b',b)(\omega),\quad b \in A. 
\end{equation}
For $A=\R^l$, we have $g_{\var,n}=f_{\var,n}$,
but in some cases, like for the gamma family of distributions as in  Section \ref{secECM},
we have $A \neq \R^l$ and $f_{\var,n}$ is only a restriction of $g_{\var,n}$. For each $b'$, $b$, and $\omega$ as above, it holds
\begin{equation}
\nabla_b g_{\var,n}(b',b)(\omega)= \sum_{i=1}^n\left((Z^2L')(\omega_i)\sum_{j\in \{1,\ldots,n\}, j \neq i}L'(\omega_j)v_{j,i}(\omega)\exp(b^Tv_{j,i}(\omega))\right)
\end{equation} 
and 
\begin{equation}\label{nabla2g}
\nabla^2_b g_{\var,n}(b',b)(\omega)= \sum_{i=1}^n\left((Z^2L')(\omega_i)\sum_{j\in \{1,\ldots,n\}, j \neq i}L'(\omega_j)v_{j,i}(\omega)v_{j,i}(\omega)^T\exp(b^Tv_{j,i}(\omega))\right).
\end{equation} 

Let $n \in \N_2$ and $\omega \in \Omega_1^n$. Let $D(\omega) \in \R^{l \times n^2}$ be a
matrix whose $(i-1)n+j$th column, $i, j \in\{1,\ldots,n\}$, is equal to 
$\I(Z\neq 0)(\omega_i)v_{j,i}(\omega)$. 

\begin{lemma}\label{thgvarstrict}
If $D(\omega)$ has linearly independent rows, then for each $b \in \R^l$ and $b' \in A$, 
$\nabla^2_bg_{\var,n}(b',b)(\omega)$ is positive definite.
\end{lemma}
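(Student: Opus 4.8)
The plan is to read off the Hessian (\ref{nabla2g}) as a nonnegatively weighted sum of rank-one matrices and to tie the positivity of the weights to the columns of $D(\omega)$. First I would fix $b \in \R^l$, $b' \in A$, and $\omega \in \Omega_1^n$, and for $i \neq j$ abbreviate
\[
c_{i,j} = (Z^2L')(\omega_i)\,L'(\omega_j)\exp(b^Tv_{j,i}(\omega)),
\]
so that $\nabla^2_b g_{\var,n}(b',b)(\omega) = \sum_{i=1}^n\sum_{j \neq i} c_{i,j}\,v_{j,i}(\omega)v_{j,i}(\omega)^T$. Because we are in the ECM setting, Condition \ref{condg0} gives $L'(\omega_i),L'(\omega_j) > 0$, and the exponential factor is positive, so each $c_{i,j} \geq 0$, with $c_{i,j} > 0$ exactly when $Z(\omega_i) \neq 0$. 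Being a sum of terms $c_{i,j}v_{j,i}v_{j,i}^T$ with $c_{i,j}\geq 0$, the Hessian is symmetric and positive semidefinite.

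To upgrade this to positive definiteness I would take any $t \in \R^l$ with $t^T\nabla^2_b g_{\var,n}(b',b)(\omega)\,t = 0$ and show $t = 0$. Expanding,
\[
t^T\nabla^2_b g_{\var,n}(b',b)(\omega)\,t = \sum_{i=1}^n\sum_{j \neq i}c_{i,j}\bigl(v_{j,i}(\omega)^Tt\bigr)^2 = 0,
\]
and since this is a sum of nonnegative terms each summand must vanish; in particular $v_{j,i}(\omega)^Tt = 0$ whenever $Z(\omega_i) \neq 0$ and $j \neq i$.

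Finally I would connect this to $D(\omega)$. Its $(i-1)n+j$th column is $\I(Z\neq0)(\omega_i)\,v_{j,i}(\omega)$, and these columns are zero both when $Z(\omega_i)=0$ and when $i=j$ (since $v_{i,i}(\omega)=0$). The conclusion of the previous step therefore says precisely that $t$ is orthogonal to every column of $D(\omega)$, i.e. $t^TD(\omega) = 0$; as $D(\omega)$ has linearly independent rows this forces $t = 0$, which establishes positive definiteness. The only point demanding care is the bookkeeping between the index range of the double sum (ordered pairs with $j \neq i$) and the full $n^2$ columns of $D(\omega)$, which is resolved by observing that the ``missing'' diagonal columns are zero and so impose no extra constraint; beyond this there is no substantive obstacle.
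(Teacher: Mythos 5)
Your argument is correct and is essentially the paper's proof read contrapositively: the paper picks, for each $t\neq 0$, a pair $i\neq j$ with $t^T\I(Z\neq 0)(\omega_i)v_{j,i}(\omega)\neq 0$ (guaranteed by the linear independence of the rows of $D(\omega)$) and lower-bounds the quadratic form by the corresponding single positive term of (\ref{nabla2g}), whereas you assume the quadratic form vanishes and deduce orthogonality of $t$ to every column of $D(\omega)$. The decomposition, the role of $D(\omega)$, and the positivity of the weights via Condition \ref{condg0} are the same in both.
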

\begin{proof}
If $D(\omega)$ has linearly independent rows, then for each $t \in \R^l$, $t\neq 0$, there exist $i,j \in \{1,\ldots,n\}$, $i\neq j$,
 such that $t^T\I(Z\neq 0)(\omega_i)v_{j,i}(\omega)\neq 0$, so that from (\ref{nabla2g}), 
\begin{equation}
t^T\nabla^2_bg_{\var,n}(b',b)(\omega)t \geq (Z^2L')(\omega_i)L'(\omega_j)(t^Tv_{j,i}(\omega))^2\exp(b^Tv_{j,i}(\omega))> 0.
\end{equation}
\end{proof}

Let for each $k \in \{1,\ldots,n\}$ a matrix $\wt{D}(k,\omega)\in \R^{l \times (n-1)}$ have the consecutive columns equal to
$v_{k,j}(\omega)$ for $j=1,2,\ldots,k-1,k+1,k+2,\ldots,n$. 

\begin{lemma}\label{lemDindep}
$D(\omega)$ has linearly independent rows only if for some $i \in \{1,\ldots,n\}$, 
$Z(\omega_i)\neq 0$, and 
for some (equivalently, for each) $k \in \{1,\ldots,n\}$, $\wt{D}(k,\omega)$  has linearly independent rows. 
\end{lemma}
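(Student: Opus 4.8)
The plan is to translate everything into the geometry of the finite point set $x_i := X(\omega_i) \in \R^l$, $i=1,\ldots,n$, and to use the standard fact that an $l\times m$ matrix has linearly independent rows exactly when its columns span $\R^l$. Writing $S=\{i:Z(\omega_i)\neq 0\}$, the columns of $D(\omega)$ are $\I(Z\neq 0)(\omega_i)\,v_{j,i}(\omega)=\I(Z\neq 0)(\omega_i)(x_j-x_i)$, so the nonzero ones are exactly the vectors $x_j-x_i$ with $i\in S$ and $j\in\{1,\ldots,n\}$ (the columns with $i\notin S$ vanish), while the columns of $\wt{D}(k,\omega)$ are $v_{k,j}(\omega)=x_k-x_j$, $j\neq k$. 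Hence $D(\omega)$ has linearly independent rows iff $\operatorname{span}\{x_j-x_i:i\in S,\ j\in\{1,\ldots,n\}\}=\R^l$, and $\wt{D}(k,\omega)$ has linearly independent rows iff $\operatorname{span}\{x_k-x_j:j\neq k\}=\R^l$.

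First I would record the elementary affine-algebra identity that the span of all pairwise differences equals the span of the differences taken from any single base point: for every $k$,
\[
V := \operatorname{span}\{x_j-x_i:i,j\in\{1,\ldots,n\}\}=\operatorname{span}\{x_k-x_j:j\neq k\},
\]
since $x_j-x_i=(x_k-x_i)-(x_k-x_j)$ expresses an arbitrary pairwise difference through differences from $x_k$, while the reverse inclusion is immediate. This identity alone already yields the ``for some (equivalently, for each) $k$'' clause: the column span of $\wt{D}(k,\omega)$ equals $V$ independently of $k$, so $\wt{D}(k,\omega)$ has linearly independent rows for one $k$ iff $V=\R^l$ iff it does so for every $k$.

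Next I would handle the two implications. If $S=\emptyset$ then every column of $D(\omega)$ is zero, its column span is $\{0\}\neq\R^l$, and $D(\omega)$ fails to have linearly independent rows; this gives the necessity of the existence of some $i$ with $Z(\omega_i)\neq0$. Conversely, assuming $S\neq\emptyset$, I would fix any $k\in S$ and show $\operatorname{span}\{x_j-x_i:i\in S,\ j\}=V$: the inclusion $\supseteq$ holds because the differences $x_j-x_k$ from the base index $k\in S$ are themselves among these columns, and $\subseteq$ holds because for $i\in S$ one has $x_j-x_i=(x_j-x_k)-(x_i-x_k)\in V$. Thus when $S\neq\emptyset$ the column span of $D(\omega)$ coincides with $V$, and combining this with the previous paragraph gives that $D(\omega)$ has linearly independent rows iff $S\neq\emptyset$ and $V=\R^l$, that is, iff $Z(\omega_i)\neq0$ for some $i$ and $\wt{D}(k,\omega)$ has linearly independent rows for some (equivalently, each) $k$; the stated necessity direction is contained in this equivalence.

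The only mildly delicate point, rather than a genuine obstacle, is the bookkeeping of the indicator weights $\I(Z\neq0)(\omega_i)$: one must choose the base index of the affine identity inside $S$ so that the columns zeroed out by the indicators do not interfere with the span computation. Once the base point is taken in $S$, the remaining argument is purely linear-algebraic and mechanical.
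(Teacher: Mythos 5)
Your proof is correct and follows essentially the same route as the paper's: reduce linear independence of rows to the column span being all of $\R^l$, observe that the column span of $\wt{D}(k,\omega)$ is the common space $V$ of pairwise differences independently of $k$, and show that the column span of $D(\omega)$ equals $V$ precisely when some $Z(\omega_i)\neq 0$. You merely write out in full the steps the paper labels "easy to check."
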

\begin{proof}
If $Z(\omega_i)=0$, $i=1,\ldots,n$, then $D(\omega)$ has zero rows, 
so that they are linearly dependent. 
The dimensions of the linear spans of the columns and vectors 
of a matrix are the same, so that the matrices 
$D(\omega)$ and $\wt{D}(k,\omega)$, $k \in \{1,\ldots,n\}$,
have linearly independent rows only if the dimension of 
the linear span of their columns is equal to $l$. 
Thus, the thesis follows from the easy to check fact that 
the linear span $V$ of the columns of 
$\wt{D}(k,\omega)$ for different $k\in \{1,\ldots,n\}$ is the same and if 
$Z(\omega_i)\neq 0$ for some $i \in \{1,\ldots,n\}$, 
then the linear span of the columns of $D(\omega)$ is equal to $V$.
\end{proof}

For a vector $v \in \R^m$, by $v\leq 0$ we mean 
that its coordinates are nonpositive. 
\begin{theorem}\label{thgvarInf}
If the system of linear inequalities
\begin{equation}\label{sysIneq} 
D^T(\omega)b\leq 0,\quad b \in \R^l, 
\end{equation}
has only the zero solution, then for each $b' \in A$, 
$g_{\var,n}(b',b)(\omega) \to \infty$ as $|b|\to \infty$ 
and $\nabla^2_bg_{\var,n}(b',b)(\omega)$ 
is positive definite, $b \in \R^l$. 
If $b$ is a solution of (\ref{sysIneq}), 
then for each $b'\in A$, $a \in \R^l$, and $t \in [0,\infty)$, we have
\begin{equation}\label{gnvarsm}
g_{\var,n}(b',a+tb)(\omega)\leq g_{\var,n}(b',a)(\omega). 
\end{equation}
\end{theorem}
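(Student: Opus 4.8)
The plan is to read $g_{\var,n}(b',b)(\omega)$ from (\ref{gvardef}) as a nonnegative linear combination of exponentials: writing $c_{i,j}=(Z^2L')(\omega_i)L'(\omega_j)$, we have $g_{\var,n}(b',b)(\omega)=\sum_{i=1}^n\sum_{j\neq i}c_{i,j}\exp(b^Tv_{j,i}(\omega))$. Since $L'>0$ by Condition \ref{condg0}, the coefficient $c_{i,j}$ is strictly positive exactly when $Z(\omega_i)\neq 0$, which is precisely when the $(i-1)n+j$th column $\I(Z\neq0)(\omega_i)v_{j,i}(\omega)$ of $D(\omega)$ is nonzero. Hence the system (\ref{sysIneq}), namely $D^T(\omega)b\leq 0$, is equivalent to requiring $b^Tv_{j,i}(\omega)\leq 0$ for every pair with $c_{i,j}>0$. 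This reformulation is what ties the hypotheses to the analytic behaviour of $g_{\var,n}$.

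For the second assertion I would first dispatch (\ref{gnvarsm}). If $b$ solves (\ref{sysIneq}), then for each pair with $c_{i,j}>0$ we have $b^Tv_{j,i}(\omega)\leq 0$, so $\exp(tb^Tv_{j,i}(\omega))\leq 1$ for all $t\in[0,\infty)$. Factoring $\exp((a+tb)^Tv_{j,i}(\omega))=\exp(a^Tv_{j,i}(\omega))\exp(tb^Tv_{j,i}(\omega))$ and summing termwise (terms with $c_{i,j}=0$ vanish and may be ignored) gives $g_{\var,n}(b',a+tb)(\omega)\leq g_{\var,n}(b',a)(\omega)$, which is (\ref{gnvarsm}).

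For the first assertion, the positive definiteness of $\nabla^2_bg_{\var,n}(b',b)(\omega)$ follows quickly: if (\ref{sysIneq}) has only the zero solution then the columns of $D(\omega)$ must span $\R^l$, for otherwise some nonzero $b$ orthogonal to all of them would satisfy $D^T(\omega)b=0\leq 0$. Thus $D(\omega)$ has linearly independent rows, and positive definiteness is exactly Lemma \ref{thgvarstrict}.

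The main work, and the step I expect to be the genuine obstacle, is the coercivity $g_{\var,n}(b',b)(\omega)\to\infty$ as $|b|\to\infty$. Here I would use a compactness argument on the unit sphere. Define $\phi(u)=\max_{(i,j):\,c_{i,j}>0}u^Tv_{j,i}(\omega)$ for unit vectors $u$; the reformulation above shows that $\phi(u)>0$ for every unit $u$, since otherwise $u$ would be a nonzero solution of (\ref{sysIneq}). As a maximum of finitely many linear functions, $\phi$ is continuous on the compact unit sphere, so $\epsilon:=\min_{|u|=1}\phi(u)>0$. Bounding $g_{\var,n}$ below by its single largest positive-coefficient term then yields, for $|b|=r$ and $c_{\min}:=\min_{(i,j):\,c_{i,j}>0}c_{i,j}>0$, the estimate $g_{\var,n}(b',b)(\omega)\geq c_{\min}\exp(r\epsilon)$, which tends to infinity with $r$. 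The delicacy is that the maximizing pair depends on the direction $u$; it is precisely the uniform lower bound $\epsilon$ coming from compactness that removes this dependence and forces uniform exponential growth in all directions.
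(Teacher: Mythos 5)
Your proof is correct and follows essentially the same route as the paper: the same termwise exponential bound for (\ref{gnvarsm}), the same reduction of positive definiteness to Lemma \ref{thgvarstrict} via linear independence of the rows of $D(\omega)$, and the same compactness argument on the unit sphere (your $\phi$ and $\epsilon$ are the paper's $f$ and $\delta$, and your $c_{\min}$ is its $a$) for coercivity.
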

\begin{proof}
For $b\in \R^l$ for which (\ref{sysIneq}) holds, for $i\in \{1,\ldots,n\}$ such that  $Z(\omega_i)\neq 0$,
for $j\in \{1,\ldots,n\}$, $i\neq j$, we have $b^Tv_{j,i}(\omega) \leq 0$, so that (\ref{gnvarsm})
follows from (\ref{gvardef}). 
Let further (\ref{sysIneq}) have only the zero solution. 
Then, $D(\omega)$ has linearly independent rows, and thus the
positive definiteness of $\nabla^2_bg_{\var,n}(b',b)(\omega)$ follows from Lemma \ref{thgvarstrict}. 
Consider a function $b \in \R^l\to f(b):=\max \{b^Tv_{j,i}(\omega):Z(\omega_i)\neq 0,\ i, j \in \{1,\ldots,n\},\ i\neq j\}$. 
Then, for each $b\in \R^l$, $b\neq0$, it holds $f(b)>0$. Thus, from the continuity of $f$
we have  $0<\delta :=  \min\{f(b):|b|=1\}$ and for
$0<a:=\min\{(Z^2L')(\omega_i)L'(\omega_j): Z(\omega_i)\neq 0,\ i, j \in \{1,\ldots,n\},\ i\neq j\}$, from (\ref{gvardef}) 
\begin{equation}
g_{\var,n}(b',b)(\omega) \geq a\exp(\delta |b|) \to \infty
\end{equation}
as $|b| \to \infty$.  
\end{proof}

There exist numerical methods for finding the set of solutions of (\ref{sysIneq})
and in particular for checking if it has only the zero solution; 
see \cite{Kolbig1979}. 

\begin{theorem}\label{thzgd}
Let us assume that
\begin{equation}\label{zomegai}
Z(\omega_i)\neq 0,\quad i =1\ldots,n.
\end{equation}
Then, (\ref{sysIneq}) has only the zero solution only 
if $D(\omega)$ has linearly independent rows, which from Lemma \ref{lemDindep}
holds only if for some (equivalently, for each) $k \in \{1,\ldots,n\}$, 
$\wt{D}(k,\omega)$  has linearly independent rows. 
\end{theorem}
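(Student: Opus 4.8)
The plan is to reduce the inequality system (\ref{sysIneq}) to a system of linear equalities, exploiting the antisymmetry of the vectors $v_{j,i}(\omega)$, after which the first equivalence in the statement becomes the standard fact that a matrix has a trivial left null space exactly when its rows are linearly independent. The second equivalence will then be read off directly from Lemma \ref{lemDindep}.

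First I would observe that one implication is immediate and requires no assumption on $Z$: since $b^TD(\omega)=0$ entails $D^T(\omega)b\leq 0$, the left null space of $D(\omega)$ is always contained in the solution set of (\ref{sysIneq}). Hence if (\ref{sysIneq}) has only the zero solution, then $b^TD(\omega)=0$ forces $b=0$, i.e. $D(\omega)$ has linearly independent rows. (This is the direction already used in the proof of Theorem \ref{thgvarInf}.)

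The content of the theorem lies in the converse, and this is where (\ref{zomegai}) enters. Under (\ref{zomegai}) every indicator $\I(Z\neq 0)(\omega_i)$ equals one, so the columns of $D(\omega)$ are exactly the vectors $v_{j,i}(\omega)=X(\omega_j)-X(\omega_i)$, $i,j\in\{1,\ldots,n\}$. The key point is the antisymmetry $v_{i,j}(\omega)=-v_{j,i}(\omega)$: for a vector $b$ solving (\ref{sysIneq}) we have both $b^Tv_{j,i}(\omega)\leq 0$ and $b^Tv_{i,j}(\omega)=-b^Tv_{j,i}(\omega)\leq 0$, which forces $b^Tv_{j,i}(\omega)=0$ for all $i,j$; conversely any such $b$ trivially satisfies the inequalities. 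Consequently the solution set of (\ref{sysIneq}) coincides with $\{b\in\R^l:b^TD(\omega)=0\}$, the left null space of $D(\omega)$. If the rows of $D(\omega)$ are linearly independent this null space is $\{0\}$, giving that (\ref{sysIneq}) has only the zero solution, and combining the two implications yields the first equivalence.

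For the remaining equivalence I would simply invoke Lemma \ref{lemDindep}: assumption (\ref{zomegai}) guarantees $Z(\omega_i)\neq 0$ for some $i$ (indeed for every $i$), so that lemma gives that $D(\omega)$ has linearly independent rows if and only if $\wt{D}(k,\omega)$ does for some (equivalently, each) $k\in\{1,\ldots,n\}$. The main obstacle here is conceptual rather than computational: one must spot that the index swap $(i,j)\mapsto(j,i)$ makes the inequality system self-dual and thereby collapses it to the orthogonality conditions defining the left null space. Once that observation is made, the remainder is routine linear algebra relating full row rank to a trivial left null space, and I expect no genuine technical difficulty beyond keeping track of which direction actually requires (\ref{zomegai}).
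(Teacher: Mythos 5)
Your proposal is correct and follows essentially the same route as the paper: under (\ref{zomegai}) the antisymmetry $v_{i,j}(\omega)=-v_{j,i}(\omega)$ collapses the inequality system to the equality system $D^T(\omega)b=0$, so its solution set is the left null space of $D(\omega)$, and the second equivalence is then Lemma \ref{lemDindep}. The only difference is presentational — you separate out the direction that holds without (\ref{zomegai}) — but the key observation and the linear algebra are identical to the paper's argument.
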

\begin{proof}
Assuming (\ref{zomegai}), for $b \in \R^l$, $D^T(\omega)b\leq 0$
holds only if 
\begin{equation}
v_{i,j}(\omega)^Tb\leq 0,\quad i, j \in\{1,\ldots,n\}.  
\end{equation}
Since $v_{i,j}(\omega)=-v_{j,i}(\omega)$, this holds only if 
$v_{i,j}(\omega)^Tb= 0$, $i, j \in\{1,\ldots,n\}$, i.e.
only if $D^T(\omega)b=0$. 
\end{proof}

\section{\label{secStrong}Strongly convex functions and $\epsilon$-minimizers}
For some nonempty $A \subset \R^l$, 
consider a function $f:A \to \overline{\R}$. 
For some $\epsilon\geq 0$, we say that $x^* \in A$ is an $\epsilon$-minimizer of $f$, if
\begin{equation}
f(x^*) \leq \inf_{x\in A}f(x) + \epsilon. 
\end{equation}
Consider a convex set $S\subset A$, such that $A$ is a neighbourhood of $S$ (i.e. $S$ is contained in some open set $D \subset A$).
Then, $f$ is said to be strongly convex on $S$ (where we do not mention $S$ if it is equal to $A$) with a (strong convexity) constant $m>0$,
if $f$ is twice differentiable on $S$ and for each $b \in \R^l$ and $x \in S$ 
\begin{equation} 
b^T\nabla^2f(x)b \geq m|b|^2. 
\end{equation} 
Let us discuss some properties of strongly convex functions $f$ on $S$ as above (see Section 9.1.2. in  \cite{Boyd_2004} for more details). 
It is well known that  $f$ as above is strictly convex on $S$, and from Taylor's theorem 
it easily follows that for $x,y \in S$ 
\begin{equation}\label{fxfy}
f(y)\geq f(x)+(\nabla f(x))^T(y-x)+\frac{m}{2}|y-x|^2.
\end{equation}
In particular, $f(y) \to \infty$ as $|y| \to \infty$, $y \in S$. 
Furthermore, if $\nabla f(x)=0$, then 
\begin{equation}\label{strongfb}
f(y)\geq f(x)+\frac{m}{2}|y-x|^2.
\end{equation}
Thus, $x$ is a unique minimum point of $f_{|S}$ only if $\nabla f(x)=0$.
The right-hand side of (\ref{fxfy}) in the function of $y\in \R^l$ is minimized by $\wt{y}=x- \frac{1}{m}\nabla f(x)$, and thus we have 
\begin{equation}\label{fxfy2}
f(y)\geq f(x)+(\nabla f(x))^T(\wt{y}-x)+\frac{m}{2}|\wt{y}-x|^2= f(x)-\frac{1}{2m}|\nabla f(x)|^2.
\end{equation}
Let $f$ have a unique minimum point $b^*\in S$. Then, from (\ref{fxfy2}) for $y=b^*$, for $x \in S$ we have
\begin{equation}
f(x)\leq f(b^*)+\frac{1}{2m}|\nabla f(x)|^2. 
\end{equation}
In particular, each $x\in S$ is a $\frac{1}{2m}|\nabla f(x)|^2$-minimizer of $f$.

\section{\label{secMsqLETGS}Mean square and its well-known estimators in the LETGS setting}
Let us in this section consider the LETGS setting.
\begin{theorem}\label{thMsqStr} 
Let $b',b\in \R^l$, $n\in \N_+$, and $\omega \in\Omega_1^n$. Then, 
$b\in \R^l\to f(b):=\wh{\msq}_n(b',b)(\omega)$ is convex and 
if $r_n(\omega)$ holds (see Definition \ref{defLemEquiv}), then 
$f$ is strongly convex. If $r_n(\omega)$ does not hold, then there exists $b \in \R^l \setminus \{0\}$ such that 
\begin{equation}\label{fatb}
f(a+tb)=f(a),\quad a \in \R^l,\ t \in \R.
\end{equation}
\end{theorem}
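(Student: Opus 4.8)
The plan is to treat the three assertions essentially independently, after recording the form of $f$. Fix $b'$ and $\omega$, write $c_i=(Z^2L')_i(\omega)\ge 0$, and note from (\ref{lnlG}) that for indices with $\tau_i(\omega)<\infty$ one has $L_i(b)(\omega)=\exp(q_i(b))$ with $q_i(b)=b^TG_i(\omega)b+H_i(\omega)b$, while $L_i$ is the constant $\epsilon$ on $\{\tau_i=\infty\}$ and the constant $1$ on $\{\tau_i=0\}$; thus $f(b)=\tfrac1n\sum_{i=1}^n c_iL_i(b)(\omega)$, a smooth function of $b$. For convexity I would use that $G_i=\I(\tau_i<\infty)\tfrac12\sum_{k<\tau_i}(\Lambda_k)_i^T(\Lambda_k)_i$ is positive semidefinite, so each $q_i$ is convex, hence each $L_i(\cdot)$ (being $\exp$ of a convex function, or constant) is convex; as $c_i\ge 0$, $f$ is a nonnegative combination of convex functions and so convex.

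For strong convexity under $r_n(\omega)$ I would first differentiate: on $\{\tau_i<\infty\}$, $\nabla^2 L_i(b)=L_i(b)\bigl(2G_i+\nabla q_i(b)\nabla q_i(b)^T\bigr)\succeq 2L_i(b)G_i$, the constant terms contributing nothing, so that, writing $\mc{I}=\{i:Z_i(\omega)\neq0,\ 0<\tau_i<\infty\}$ and discarding the positive semidefinite remainders, $\nabla^2 f(b)\succeq\tfrac2n\sum_{i\in\mc{I}}c_iL_i(b)G_i$. The crucial step is a lower bound on $L_i$ that is \emph{uniform in} $b$: by Remark \ref{remInfLb}, applied to the $i$-th coordinate at $\omega$, there is a constant $d_i>0$ depending only on $\omega$ with $L_i(b)(\omega)\ge d_i$ for every $b$. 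Consequently $\nabla^2 f(b)\succeq M:=\tfrac2n\sum_{i\in\mc{I}}c_id_iG_i$ for all $b$, where $c_id_i>0$ for $i\in\mc{I}$.

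It then remains to see that $M$ is positive definite, and here I would invoke the first characterization of $r_n$ in Lemma \ref{lemEquiv}: for each $b\neq0$ some $(A_b)_i$ holds, which forces $Z_i(\omega)\neq0$, $0<\tau_i<\infty$ and $(\Lambda_k)_i b\neq0$ for some $k<\tau_i$, whence $b^TG_ib=\tfrac12\sum_{k<\tau_i}|(\Lambda_k)_ib|^2>0$ for some $i\in\mc{I}$. Since all summands in $M$ are positive semidefinite, $b^TMb>0$ for every $b\neq0$, i.e. $M$ is positive definite. Therefore $v^T\nabla^2 f(b)v\ge m_l(M)\,|v|^2$ for all $b$, with $m_l(M)>0$ (see Section \ref{secSomeCondLETGS}), which is precisely strong convexity of $f$.

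For the degenerate case, if $r_n(\omega)$ fails then the negation of the first condition of Lemma \ref{lemEquiv} produces some $b\neq0$ such that $(A_b)_i$ is false for every $i$; for indices with $Z_i(\omega)=0$ the $i$-th summand of $f$ vanishes identically, while for the remaining indices Lemma \ref{lemABL}, applied at $\omega_i$, gives $L_i(a+tb)(\omega)=L_i(a)(\omega)$ for all $a,t$, and summing yields (\ref{fatb}). The one subtlety I expect to be the main obstacle is conceptual rather than computational: the mere pointwise positive definiteness of $\nabla^2 f(b)$ (which follows directly from $r_n$ with the weights $c_iL_i(b)$) is strictly weaker than strong convexity over all of $\R^l$, so one genuinely needs the $b$-uniform bound $L_i(b)\ge d_i>0$; once Remark \ref{remInfLb} supplies it, the eigenvalue estimate is routine.
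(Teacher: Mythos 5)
Your proof is correct and follows essentially the same route as the paper's: convexity from the positive semidefiniteness of the Hessian (equivalently, log-convexity of each $L_i$), the degenerate case from point one of Lemma \ref{lemEquiv} together with Lemma \ref{lemABL}, and strong convexity by discarding the rank-one part of $\nabla^2 L_i(b)$, invoking Remark \ref{remInfLb} for the $b$-uniform lower bound on $L_i(b)$, and using Lemma \ref{lemEquiv} to get positive definiteness of the resulting constant matrix $M$. The only cosmetic difference is that you certify $M\succ0$ via the first characterization of $r_n$ with individual weights $c_id_i$, whereas the paper applies the second characterization with $K=Z^2L'$ and then pulls out the common constant $m_2=\min_i m_{2,i}$; the subtlety you flag at the end (pointwise positive definiteness of the Hessian not sufficing, hence the need for the uniform bound) is precisely the point the paper's proof addresses the same way.
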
 
\begin{proof} 
It holds
\begin{equation}
\nabla^2 f(b)=\overline{(Z^2L'(2G+ (2Gb +H)(2Gb +H)^T)L(b))}_n(\omega),
\end{equation}
which is positive semidefinite, so that $f$ is convex. 
If $r_n(\omega)$ does not hold, then from the first point of Lemma \ref{lemEquiv}
there exists $b \in \R^l$ such that for each $i\in 1,\ldots,n,$, $A_b(\omega_i)$ does not hold, so that 
from Lemma \ref{lemABL} and (\ref{whmsqdef}) we receive (\ref{fatb}). 
Let us assume that $r_n(\omega)$ holds. Then, from the second point of Lemma \ref{lemEquiv} for $K=Z^2L'$, 
the matrix $M:=\overline{(Z^2L'G)}_n(\omega)$ is positive definite. 
Let $m_1>0$ be such that $b^TMb \geq m_1 |b|^2$, $b \in \R^l$. 
For each $i \in \{1,\ldots,n\}$ such that $\tau(\omega_i)<\infty$, from Remark \ref{remInfLb} we have 
\begin{equation}
m_{2,i}:=\inf_{b \in \R^l}L(b)(\omega_i) =\exp(\inf_{b\in \R^l} \ln(L(b)(\omega_i))) \in \R_+.
\end{equation}
Let $m_2 = \min\{m_{2,i}:i\in\{1,\ldots,n\},\ \tau(\omega_i)<\infty\}$. Then, $m_2\in \R_+$ and for each $a, b \in \R^l$ we have 
\begin{equation}
 a^T\nabla^2 f(b)a \geq 2a^T\overline{(Z^2L'GL(b))}_n(\omega)a 
 \geq 2m_2a^T\overline{(Z^2L'G)}_n(\omega)a 
 \geq 2m_1m_2|a|^2. 
\end{equation}
\end{proof}

\begin{theorem}\label{thWhMsq}
If conditions \ref{condKappa} and \ref{cond1} hold, then a.s. for a sufficiently large $n$, 
$b\rightarrow\wh{\msq}_n(b',b)(\wt{\kappa}_n)$ is strongly convex. In particular, 
the probability of this event converges to one as $n\to \infty$.
\end{theorem}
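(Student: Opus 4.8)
The plan is to read off the statement from the two immediately preceding results, Theorem \ref{thMsqStr} and Theorem \ref{thPos}, so that essentially no new computation is needed. First I would recall the pointwise implication already supplied by Theorem \ref{thMsqStr}: for every $b' \in \R^l$, every $n \in \N_+$, and every $\omega \in \Omega_1^n$ for which the random condition $r_n(\omega)$ holds (see Definition \ref{defLemEquiv}), the map $b \in \R^l \to \wh{\msq}_n(b',b)(\omega)$ is strongly convex. Thus strong convexity of the sample mean square on $\wt{\kappa}_n$ is guaranteed on the event $\{r_n(\wt{\kappa}_n)\}$; conversely, the same theorem shows that when $r_n(\omega)$ fails there is a nonzero direction along which the function is constant, so it is not even strictly convex, whence the event that $b \to \wh{\msq}_n(b',b)(\wt{\kappa}_n)$ is strongly convex coincides with $\{r_n(\wt{\kappa}_n)\}$.

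Next I would invoke Theorem \ref{thPos}, which asserts that under conditions \ref{condKappa} and \ref{cond1} one has, almost surely, $r_n(\wt{\kappa}_n)$ holding for all sufficiently large $n$. Intersecting this almost sure event with the implication of the first paragraph, I obtain that almost surely, for all sufficiently large $n$, the function $b \to \wh{\msq}_n(b',b)(\wt{\kappa}_n)$ is strongly convex. This is precisely the first assertion of the theorem.

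For the ``in particular'' claim I would let $F_n$ denote the event on which $b \to \wh{\msq}_n(b',b)(\wt{\kappa}_n)$ is strongly convex. By Theorem \ref{thMsqStr} we have $\{r_n(\wt{\kappa}_n)\} \subset F_n$, so $\PR(F_n) \geq \PR(r_n(\wt{\kappa}_n))$; since $\PR(F_n) \leq 1$ and $\lim_{n\to\infty}\PR(r_n(\wt{\kappa}_n)) = 1$ by the final assertion of Theorem \ref{thPos}, it follows that $\lim_{n\to\infty}\PR(F_n) = 1$. Because all the substantive work has been done in the earlier theorems, there is no serious obstacle here; the only point deserving a word of care is measurability, namely that $F_n$ is genuinely an event. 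This is clear once one notes, as in Lemma \ref{lemEquiv} (via Sylvester's criterion applied to $\overline{(\I(Z\neq 0)GK)}_n$), that $r_n$ is a bona fide random condition, so that $\{r_n(\wt{\kappa}_n)\}$ is measurable and, by the coincidence observed in the first paragraph, equals $F_n$.
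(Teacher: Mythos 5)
Your proposal is correct and follows exactly the route of the paper, whose proof simply states that the result follows directly from theorems \ref{thPos} and \ref{thMsqStr}. The extra care you take over measurability of the strong-convexity event and the identification of that event with $\{r_n(\wt{\kappa}_n)\}$ is a sound elaboration of the same argument.
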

\begin{proof}
It follows directly from theorems \ref{thPos} and \ref{thMsqStr}.
\end{proof}

\begin{theorem}\label{thMsq}
Let Condition \ref{condzntau} hold. 
If $\msq(b_0)<\infty$ for some $b_0\in \R^l$, then $\msq$ is convex on the convex nonempty set $B$ on which it is finite and
if further Condition \ref{cond1} holds, then $f=\msq$ satisfies Condition \ref{condStrict} (in particular,
from Lemma \ref{lemMinPoint}, it has a unique minimum point). 
If Condition \ref{cond1} does not hold, then there exists $b \in \R^l$, $b\neq 0$, such that 
\begin{equation}\label{msqaptb}
\msq(a+tb)= \msq(a),\quad a \in \R^l,\ t \in \R. 
\end{equation}
\end{theorem}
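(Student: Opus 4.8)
The plan is to work throughout with the explicit LETGS form $L(b)=\I(\tau<\infty)\exp(b^TGb+Hb)+\I(\tau=\infty)\epsilon$ from (\ref{lnlG}), in which $G=\I(\tau<\infty)\frac12\sum_{k=0}^{\tau-1}\Lambda_k^T\Lambda_k$ is positive semidefinite, so that for every $\omega$ the map $b\to \ln L(b)(\omega)$ is a convex quadratic (or a constant) and hence $b\to L(b)(\omega)$ is convex and continuous; thus Condition \ref{condLConv} holds. Since Condition \ref{condzntau} is equivalent to Condition \ref{condZB} for $B_1=\{\tau<\infty\}$ (Remark \ref{remztauequiv}), it guarantees Condition \ref{condpqbllpq1} (Remark \ref{rempqbznpq1}) and hence the representation $\msq(b)=\E_{\PQ_1}(Z^2L(b))$. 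Setting $h(b,\cdot)=Z^2L(b)\ge0$, Condition \ref{condHlow} holds trivially with $f=\msq$, and the first assertion (convexity of $\msq$ on the set $B$ where it is finite, which is nonempty because $b_0\in B$) follows from the first part of Lemma \ref{lemPropConv}.

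For the second assertion I would verify each clause of Condition \ref{condStrict} separately. Lower semicontinuity and $\msq>-\infty$ are immediate from Lemma \ref{lemSemi} and $\msq\ge0$. For strict convexity, fix $b_1\ne b_2$ in $B$, put $v=b_1-b_2$ and $t\in(0,1)$, and integrate the pointwise convexity gap $Z^2\big(tL(b_1)+(1-t)L(b_2)-L(tb_1+(1-t)b_2)\big)\ge0$. On the event $A_b$ of Definition \ref{defAb} evaluated at $v$ we have $v^TGv=\frac12\sum_{k=0}^{\tau-1}|\Lambda_kv|^2>0$, so $\ln L$, and therefore $L$, is strictly convex along the segment and the gap is strictly positive there; since Condition \ref{cond1} gives $\PQ_1(A_v)>0$, the integral is strictly positive, i.e. $\msq$ is strictly convex on $B$.

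The coercivity clause $\lim_{b\uparrow A}\msq(b)=\infty$ (here $A=\R^l$, so this means $\msq(b)\to\infty$ as $|b|\to\infty$) is the \textbf{main obstacle}, and is exactly where the LETGS case departs from Lemma \ref{lempmsq} and Lemma \ref{lemFat}: Condition \ref{cond1} does \emph{not} produce a single sample $\omega$ at which $L(\cdot)(\omega)$ is coercive, but only, for each fixed direction, a positive-probability set $A_v$ on which $L$ blows up. I would therefore argue in two steps. First, ray by ray: for fixed $v\ne0$ the leading term $s^2v^TGv$ forces $L(b_0+sv)(\omega)\to\infty$ on $A_v$ as $s\to\infty$, so by Fatou's lemma $\liminf_{s\to\infty}\msq(b_0+sv)\ge \E\big(\liminf_{s\to\infty} Z^2L(b_0+sv)\big)=\infty$, the last equality because the integrand equals $+\infty$ on the positive-probability set $A_v$. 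Second, I would upgrade this ray-wise blow-up to genuine coercivity by convexity: for any finite $\alpha\ge\msq(b_0)$ the sublevel set $\{\msq\le\alpha\}$ is closed (lower semicontinuity) and convex, and were it unbounded it would contain a recession half-line $b_0+sv$, $s\ge0$, with some $v\ne0$, contradicting $\msq(b_0+sv)\to\infty$. Hence every sublevel set is bounded, which is the desired coercivity, so Condition \ref{condStrict} holds (the parenthetical uniqueness of the minimum point then being Lemma \ref{lemMinPoint}).

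Finally, the third assertion follows by contraposition on Condition \ref{cond1}. If it fails then, by the equivalence of conditions in Lemma \ref{lemCond1}, there is $b\ne0$ with $\PQ_1(A_b)=0$. By Lemma \ref{lemABL}, $\PQ_1$-a.s. on $\{Z\ne0\}$ we have $L(a+tb)=L(a)$ for all $a\in\R^l$ and $t\in\R$; since $Z^2L(a+tb)=Z^2L(a)$ trivially also holds on $\{Z=0\}$, integrating against $\PQ_1$ yields $\msq(a+tb)=\msq(a)$, which is (\ref{msqaptb}).
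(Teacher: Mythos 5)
Your proof is correct, and for the key part (establishing Condition \ref{condStrict}) it takes a genuinely different route from the paper. The paper's proof of the first two assertions is indirect: it invokes Remark \ref{remStrategy} and Lemma \ref{lemPropConv} with $h(b,\cdot)=\wh{\msq}_n(b',b)(\wt{\kappa}_n)$ for a \emph{sufficiently large fixed} $n$ — by Theorem \ref{thWhMsq} the $n$-sample estimator is strongly convex (hence strictly convex and coercive) with positive probability once $n$ is large, so the "with positive probability $h$ is strictly convex and $\lim_{b\uparrow A}h(b)=\infty$" hypothesis of Lemma \ref{lemPropConv} is met by aggregating samples rather than by exhibiting a single coercive sample. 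You correctly diagnose why the naive single-sample application of Lemma \ref{lemFat} with $h(b,\cdot)=Z^2L(b)$ fails (Condition \ref{cond1} gives only direction-dependent positive-probability events $A_v$, not one $\omega$ with $G(\omega)$ positive definite), and you repair it differently: strict convexity by integrating the pointwise convexity gap over $A_{b_1-b_2}$, and coercivity by a ray-wise Fatou argument followed by the standard upgrade from ray-wise blow-up to bounded sublevel sets via the recession cone of a closed convex set. Your route is more elementary and self-contained in that it never leaves the single-sample representation and does not need Theorem \ref{thWhMsq} or the machinery of $r_n$ and strong convexity of estimators; the paper's route buys reuse of results it needs anyway for the minimization algorithms (Theorems \ref{thMsqStr} and \ref{thWhMsq}) and a shorter proof on the page. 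The third assertion is proved exactly as in the paper.
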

\begin{proof}
The first part of the thesis follows from theorems \ref{thMsqStr} and 
\ref{thWhMsq}, the properties of strongly convex functions discussed
in Section \ref{secStrong}, Remark \ref{remStrategy}, and, under Condition \ref{condKappa},
from Lemma \ref{lemPropConv} for 
$h(b,\cdot)=\wh{\msq}_n(b',b)(\wt{\kappa}_n)$ for a sufficiently large $n$.
If Condition \ref{cond1} does not hold, then there exists $b\in \R^l$, $b\neq 0$, such that $\PQ_1(A_{b})=0$, for which 
(\ref{msqaptb}) follows from Lemma \ref{lemABL} and formula (\ref{msqDef}). 
\end{proof} 

\section{\label{secDiff}Smoothness of functions in  the LETS setting}
In this section we provide some sufficient conditions for the smoothness and for certain properties of the derivatives of functions defined 
in Section \ref{secCoeffDiv}. Unless stated otherwise, we consider the LETS setting, which 
contains the LETGS setting as a special case (see Section \ref{secLETS}). 
From Remark \ref{remConstMat}, the ECM setting for $A=\R^l$ can be identified 
with the LETS setting for $\tau=1$ and $\Lambda_0=I_l$, so that it is easy 
to modify the below theory to deal also with such an ECM setting. 
\begin{condition}\label{condlemYmore}
A measurable function $S:\mc{S}_1\to \mc{S}(\overline{\R})$ is such that conditions \ref{condBound1} and \ref{condBound2} hold and for each $\theta \in (\R^d)^{s}$ 
\begin{equation}\label{zbeta} 
\E_{\PU}(|S|\exp(\sum_{i=1}^s\theta_i^T\wt{X}(\eta_i)))<\infty. 
\end{equation}
\end{condition}
Note that Condition \ref{condlemYmore} implies that $S$ is $\PU$-integrable. 
\begin{remark}\label{remCounter}
In the special case which can be identified with the ECM setting 
for $A=\R^l$ as discussed above,  Condition \ref{condBound1} holds for $R=1$ 
and Condition \ref{condBound2} holds for $s=1$. Thus, 
for some $S:\mc{S}_1\to \mc{S}(\overline{\R})$, a counterpart of Condition \ref{condlemYmore}
in the ECM setting for $A=\R^l$ reduces to demanding that
\begin{equation}\label{yen}
\E_{\PU}(|S|\exp(\theta^TX))<\infty,\quad \theta\in \R^l. 
\end{equation} 
\end{remark}

\begin{remark}\label{remHold}
Since for each $s \in \N_+$ and $\theta \in (\R^d)^s$, 
$\E_{\PU}(\exp(\sum_{i=1}^s\theta_i^T\wt{X}(\eta_i)))
=\exp(\sum_{i=1}^s\wt{\Psi}(\theta_i))<\infty$, 
from H\"{o}lder's inequality, (\ref{zbeta}) holds if we have $\E_{\PU}(|S|^q)<\infty$ for some $q\in (1,\infty)$. 
\end{remark}

\begin{condition}\label{condUN}
We have $t,s \in \N_+$ and $f:(\mc{S}(\R^l))^t\otimes \mc{C}\to \mc{S}(\R)$ 
is such that for each $M\in \R_+$, 
for some $N \in \N_+$, $\phi \in ((\R^d)^{s})^N$, 
and $u \in \R_+^N$, we have $\PU$ a.s. 
\begin{equation}\label{zbound}
\sup_{b\in (\R^l)^t:|b_i|\leq M,\ i=1,\ldots,t}|f(b)|
\leq \sum_{i=1}^Nu_i\exp(\sum_{j=1}^s\phi_{i,j}^T\wt{X}(\eta_j)) 
\end{equation}
(where $f(b)=f(b,\cdot)$).
\end{condition}

\begin{remark}\label{remCondUN}
Let Condition \ref{condUN} hold and $S$ satisfy Condition \ref{condlemYmore} (for the same $s$). Let $M\in \R_+$ and consider the 
corresponding $N$, $\phi$, and $u$ as in Condition \ref{condUN}. Then, for each $\theta \in (\R^d)^{s}$
\begin{equation}\label{partialBound}
\E_{\PU}(\sup_{b\in (\R^l)^t:|b_i|\leq M,\ i=1,\ldots,t}|Sf(b)\exp(\sum_{j=1}^s\theta_j^T\wt{X}(\eta_j))|) 
\leq \sum_{i=1}^Nu_i\E_{\PU}(S\exp(\sum_{j=1}^s(\phi_{i,j}+\theta_j)^T\wt{X}(\eta_j))) <\infty.
\end{equation}
In particular, $\E_{\PU}(\sup_{b\in (\R^l)^t:|b_i|\leq M,\ i=1,\ldots,t}|Sf(b)|)<\infty$. Furthermore, from the above $M\in \R_+$ being arbitrary, 
for each $b \in (\R^l)^t$, $Sf(b)$ satisfies Condition \ref{condlemYmore}. 
\end{remark}

In this work we assume that $x^0=1,\ x \in \R$. 
\begin{theorem}\label{thzb} 
Let conditions \ref{condBound1} and \ref{condBound2} hold, let $t \in \N_+$, $r \in \R^t$, 
$w\in \N^{t\times l}$, $u \in \N^{t\times s\times d}$, $z \in \N^{t\times s\times d\times l}$, 
$y \in \N_+^{t\times s}$, $v \in \prod_{m=1}^t\prod_{i=1}^s(\N^l)^{y_{m,i}}$, and 
$q \in \prod_{m=1}^t\prod_{i=1}^s\N^{y_{m,i}}$. Let for each $b\in (\R^l)^t$ 
\begin{equation}\label{zbdef}
\begin{split}
f(b)&=\I(S\neq0)|\prod_{m=1}^t( 
L(b_m)^{r_{m}}\prod_{i=1}^lb_{m,i}^{w_{m,i}}\prod_{i=1}^{\tau\wedge s}(\prod_{j=1}^d(\wt{X}_j(\eta_i)^{u_{m,i,j}}\\
&\cdot\prod_{k=1}^l (\Lambda_{i-1})_{j,k}^{z_{m,i,j,k}})
\prod_{j=1}^{y_{m,i}}(\partial_{v_{m,i,j}}\wt{\Psi}(\lambda_{i-1}(b_m)))^{q_{m,i,j}}))|. 
\end{split}
\end{equation}
Then, Condition \ref{condUN} holds for such an $f$ (for the same $t$ and $s$ as above). 
\end{theorem}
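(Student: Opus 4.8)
The plan is to bound the supremum in (\ref{zbound}) factor by factor on the event $\{S\neq0\}$, on which $f$ is supported by virtue of the indicator $\I(S\neq0)$ present in (\ref{zbdef}). There conditions \ref{condBound1} and \ref{condBound2} supply the uniform bounds $\tau\le s$ and $||\Lambda_k||_\infty\le R$ for $k<\tau$, which are the two facts that turn everything below deterministic. Since $A=\R^d$ in the LETS setting, $\wt{\Psi}$ is smooth with $\partial_v\wt{\Psi}$ continuous and, by (\ref{fadef}), $|\wt{\Psi}(y)|\le \wt{F}(|y|)<\infty$ for every $y$. First I would fix $M\in\R_+$ and restrict to $|b_m|\le M$, $m=1,\ldots,t$; as $||\cdot||_\infty$ is the operator norm induced by $|\cdot|$, we get directly $|\lambda_k(b_m)|=|\Lambda_k b_m|\le ||\Lambda_k||_\infty|b_m|\le RM$ for $k<\tau$ on $\{S\neq0\}$.

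The factors of (\ref{zbdef}) then split into two classes. Those bounded by a constant uniformly over the ball and over $\{S\neq0\}$ are the monomial $\prod_i b_{m,i}^{w_{m,i}}$ (bounded by $M^{\sum_i w_{m,i}}$), the entries $(\Lambda_{i-1})_{j,k}$ (bounded by $R$), the powers of $\partial_{v}\wt{\Psi}(\lambda_{i-1}(b_m))$ (bounded since $\partial_v\wt{\Psi}$ is continuous on the compact ball $\{|y|\le RM\}$), and the ``$U$-part'' of $L$: writing $L(b_m)=\exp(U(b_m)+Hb_m)$ on $\{\tau<\infty\}$ as in (\ref{lnlbLETS}), we have $|U(b_m)|\le \tau\,\wt{F}(RM)\le s\,\wt{F}(RM)$, so $\exp(r_m U(b_m))$ is bounded. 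The factors genuinely depending on $\wt{X}$ are the ``$H$-part'' $\exp(r_m Hb_m)$ and the monomials $\wt{X}_j(\eta_i)^{u_{m,i,j}}$. For the former, $|Hb_m|\le RM\sum_{k=1}^s|\wt{X}(\eta_k)|$, and I would use $\exp(a|x|)\le \exp(ax)+\exp(-ax)$ together with $|\wt{X}(\eta_k)|\le\sum_j|\wt{X}_j(\eta_k)|$ to dominate $\exp(r_m Hb_m)$ by a finite sum of terms $\exp(\sum_{k=1}^s\phi_k^T\wt{X}(\eta_k))$. For the latter, polynomial-versus-exponential domination $|x|^u\le C_u(\exp(x)+\exp(-x))$ writes each $\wt{X}_j(\eta_i)^{u_{m,i,j}}$ as a constant times a sum of two exponentials in the single coordinate $\wt{X}_j(\eta_i)$.

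The one genuinely delicate point is the product $\prod_{i=1}^{\tau\wedge s}$, whose length is random, whereas $N$, $\phi$, $u$ must be fixed. Here I would first bound, for each fixed $i\in\{1,\ldots,s\}$, the supremum over the ball of the $i$-th factor by a sum of exponentials $\beta_i\ge0$ of $\wt{X}(\eta_i)$ of the above form, and then exploit $\tau\le s$ and $\beta_i\ge0$ to write
\begin{equation}
\prod_{i=1}^{\tau\wedge s}\beta_i=\prod_{i=1}^{\tau}\beta_i\le\prod_{i=1}^{s}\max(\beta_i,1)\le\prod_{i=1}^{s}(\beta_i+1),
\end{equation}
which removes the randomness from the upper limit, since extending the product to $s$ only inserts factors $\max(\beta_i,1)\ge1$. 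Finally I would multiply this bound together with the deterministic product over $m=1,\ldots,t$ and the constant factors and expand: each $\beta_i+1$ and each sum-of-exponentials factor is a finite sum of terms $\exp(\phi^T\wt{X}(\eta_k))$ (using $1=\exp(0^T\wt{X}(\eta_k))$), so the fully expanded product is a finite sum $\sum_{i=1}^N u_i\exp(\sum_{j=1}^s\phi_{i,j}^T\wt{X}(\eta_j))$ with $u_i>0$ and $\phi_{i,j}\in\R^d$, which is exactly (\ref{zbound}). Here $N$ and the $\phi_{i,j}$ depend only on the deterministic data $t,s,d,l,r,w,u,z,y,q,v$ and on $M$ through $RM$. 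The main obstacle is precisely this bookkeeping of the expansion combined with the handling of the random product length; the analytic inputs (boundedness of $\wt{\Psi}$ and its derivatives on compacts, and the two domination inequalities) are routine.
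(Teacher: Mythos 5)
Your proposal is correct and follows essentially the same route as the paper's proof: bound each factor on $\{S\neq0\}$ using $\tau\le s$ and $\|\Lambda_k\|_\infty\le R$, dominate the $L^{r_m}$ and polynomial factors by sums of exponentials of the $\wt{X}_j(\eta_i)$, and extend the random-length product to length $s$ via factors at least one. The paper achieves the last step by building the ``$+1$'' directly into its bounding quantities ($g(x)=e^x+e^{-x}$, $U_a=1+\sup|\partial_a\wt{\Psi}|$, $1+R$), which is the same device as your $\max(\beta_i,1)\le\beta_i+1$.
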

\begin{proof}
Let $M \in [0,\infty)$ and $g(x)=e^x +e^{-x}$. For $p \in \R$ and $b \in \R^l$, $|b|\leq M$, 
from (\ref{lnlbLETS}) we have  $\PU$ a.s.
\begin{equation}\label{lbound}
\I(S\neq 0)L^p(b)\leq K(p):=\exp(|p|s\wt{F}(RM))\prod_{i=1}^s\prod_{j=1}^dg(pRM\wt{X}_j(\eta_i)). 
\end{equation}
Let for $x \in [0,\infty)$ and $a\in \N^l$
\begin{equation}
U_a(x)=1+\sup_{b\in\R^l,\ |b|\leq x}|\partial_a\wt{\Psi}(b)|,
\end{equation}
which is finite thanks to Remark \ref{remPartv}. 
Then, for each $b \in (\R^l)^t$, $|b_i|\leq M$, $i=1, \ldots,m$, we have $\PU$ a.s. 
\begin{equation}\label{zbound2}
\begin{split}
|f(b)|&\leq \I(S\neq0)\prod_{m=1}^t(K(r_{m})M^{\sum_{i=1}^lw_{m,i}} \prod_{i=1}^{s}(\prod_{j=1}^d(g(\wt{X}_j(\eta_i))^{u_{m,i,j}}
(1+R)^{\sum_{k=1}^lz_{m,i,j,k}})\\
&\cdot \prod_{j=1}^{y_{m,i}}U_{v_{m,i,j}}(RM)^{q_{m,i,j}})). 
\end{split}
\end{equation} 
The right-hand side of (\ref{zbound2}) can be rewritten to have the form as the right-hand side of (\ref{zbound}). 
\end{proof}


\begin{theorem}\label{thYmore}
If conditions \ref{condBound1} and \ref{condBound2} hold, 
then for each $t \in \N_+$,  $p_1,p_2 \in \R^t$ such that $p_{2,i}\geq 1$, $i=1,\ldots,t$, 
$M\in \R_+$, $v\in (\N^l)^t$, and for 
$h_{p_1,p_2}(b,\omega):=(\I(S\neq 0)\prod_{i=1}^t|\partial_{v_i}(L^{p_{1,i}}(b_i))|^{p_{2,i}})(\omega)$, 
$b \in (\R^l)^t$, $\omega \in E$, Condition \ref{condUN} holds for $f=h_{p_1,p_2}$. 
\end{theorem}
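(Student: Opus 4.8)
The plan is to reduce $h_{p_1,p_2}$ to a finite combination of expressions of the form (\ref{zbdef}) and then invoke Theorem \ref{thzb}, after recording that the class $\mc{U}$ of functions satisfying Condition \ref{condUN} (for the fixed $s$) is closed under the operations needed. Concretely, $\mc{U}$ is closed under finite sums (add the dominating bounds); under finite products (the product of two bounds $\sum_i u_i\exp(\sum_j\phi_{i,j}^T\wt{X}(\eta_j))$ is again of this shape, the linear forms being added coordinatewise); and under $|f|\mapsto|f|^q$ for $q\geq1$, using the elementary inequality $(\sum_{i=1}^N a_i)^q\leq N^{q-1}\sum_{i=1}^N a_i^q$ for $a_i\geq0$ together with $\exp(L)^q=\exp(qL)$, so that a power of a sum of exponentials of linear forms is again such a sum. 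Theorem \ref{thzb} supplies the base case: every expression of form (\ref{zbdef}) lies in $\mc{U}$.

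Next I would reduce to the event $\{\tau<\infty\}$. Since Condition \ref{condBound2} gives $\I(S\neq0)\tau\leq s$, we have $\I(S\neq0)=\I(S\neq0,\ \tau<\infty)$, $\PU$-a.s., so I may work where, by (\ref{lnlbLETS}), $L(b)=\exp(\ell(b))$ with $\ell(b)=\sum_{k=0}^{\tau-1}(\wt{\Psi}(\Lambda_k b)-\wt{X}(\eta_{k+1})^T\Lambda_k b)$; here $\wt{\Psi}$ is smooth because in the LETS setting $\wt{X}$ has full domain $\R^d$, so Condition \ref{condPartvm} holds by Remark \ref{remPartv}. Applying the multivariate Fa\`a di Bruno formula to $L^{p_{1,i}}(b_i)=\exp(p_{1,i}\ell(b_i))$ yields $\partial_{v_i}(L^{p_{1,i}}(b_i))=L^{p_{1,i}}(b_i)\,Q_{v_i}(b_i)$, where $Q_{v_i}$ is a finite sum, indexed by the set partitions of the multi-index $v_i$, of products of partials $\partial_a\ell(b_i)$ of total order $|v_i|$. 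Each $\partial_a\ell$ is in turn a finite sum over the steps $k=0,\ldots,\tau-1$ of products of entries of $\Lambda_k$ with, for $|a|\geq1$, a factor $\partial_{a'}\wt{\Psi}(\Lambda_k b_i)$, and, for first order only, the linear contribution $-\wt{X}_j(\eta_{k+1})(\Lambda_k)_{j,\cdot}$ coming from $H$.

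Multiplying by $\I(S\neq0)$ and using $\tau\leq s$ to rewrite each $\sum_{k=0}^{\tau-1}$ as $\sum_{k=0}^{s-1}\I(k<\tau)(\cdots)$, I obtain that $\I(S\neq0)\,|\partial_{v_i}(L^{p_{1,i}}(b_i))|$ is dominated by a deterministic finite sum of terms, each of which, once the steps entering the $\wt{\Psi}$-derivative and $\wt{X}$ factors are selected, matches exactly the monomial template (\ref{zbdef}) with $t=1$ (the product $\prod_{i=1}^{\tau\wedge s}$ encoding the truncation at $s$). By Theorem \ref{thzb} and closure of $\mc{U}$ under sums, $\I(S\neq0)\,|\partial_{v_i}(L^{p_{1,i}}(b_i))|\in\mc{U}$ for each $i$. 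Closure under powers $p_{2,i}\geq1$ then gives $\I(S\neq0)\,|\partial_{v_i}(L^{p_{1,i}}(b_i))|^{p_{2,i}}\in\mc{U}$, and closure under products over $i=1,\ldots,t$ gives $h_{p_1,p_2}\in\mc{U}$, which is the assertion of the theorem.

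I expect the main obstacle to be the bookkeeping of the Fa\`a di Bruno expansion: checking that every term of $\partial_{v_i}(L^{p_{1,i}}(b_i))$, after multiplication by $\I(S\neq0)$ and truncation, genuinely fits the rigid form (\ref{zbdef}) with nonnegative-integer exponents on the $\wt{X}_j(\eta_i)$, $(\Lambda_{i-1})_{j,k}$, and $\partial\wt{\Psi}$ factors, and that the number of such terms is deterministic (depending only on $v_i$ and $s$). The non-integer power closure, forced by the fact that $p_{2,i}$ need not be an integer, is the other delicate point, but it reduces to the stated inequality for powers of finite sums.
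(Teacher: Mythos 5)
Your proof is correct and follows essentially the same route as the paper's: both arguments reduce the claim to Theorem \ref{thzb} by expanding the derivatives of $L^{p_{1,i}}(b_i)$ into finitely many monomials of the form (\ref{zbdef}) and then using that Condition \ref{condUN} is preserved under finite sums and products of the dominating bounds. The only minor divergence is in the treatment of the non-integer exponents $p_{2,i}$: the paper first bounds $|h_{p_1,p_2}(b)|$ by $\I(S\neq 0)\prod_{i=1}^t(1+|\partial_{v_i}(L^{p_{1,i}}(b_i))|)^{p_{3,i}}$ for integers $p_{3,i}\geq p_{2,i}$ and only then expands, whereas you keep the real exponent and apply $(\sum_{i=1}^N a_i)^q\leq N^{q-1}\sum_{i=1}^N a_i^q$ directly to the dominating sum of exponentials; both devices work, and your explicit Fa\`a di Bruno bookkeeping merely fills in the expansion step that the paper asserts without detail.
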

\begin{proof}
Since for $p_3 \in (\N_+)^t$ such that $p_{3,i}\geq p_{2,i}$, $i=1,\ldots,t$, we have 
\begin{equation}
|h_{p_1,p_2}(b)|\leq \I(S\neq 0)\prod_{i=1}^t(1+|\partial_{v_i}(L^{p_{1,i}}(b_i))|)^{p_{3,i}}, 
\end{equation}
it is sufficient to prove the above theorem for $p_2 \in \N_+^t$. In such a case 
$h_{p_1,p_2}(b)$ is a linear combination of a finite number of variables as in (\ref{zbdef}). 
Thus, the thesis follows from Theorem \ref{thzb}. 
\end{proof}

\begin{theorem}\label{thDiffLp}
If Condition \ref{condlemYmore} holds, then for each $p\in\R$, for
$g(b)=SL^{p}(b)$, $b\in \R^l\to f(b)=\E_{\PU}(g(b)) \in \R$
is smooth and we have 
$\partial_vf(b)=\E_{\PU}(\partial_vg(b))$, $v \in \N^l$, $b \in \R^l$.
\end{theorem}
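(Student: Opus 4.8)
The plan is to prove, by induction on the order $|v| := v_1 + \cdots + v_l$ of the multi-index $v \in \N^l$, that $f$ is $|v|$-times continuously differentiable with $\partial_v f(b) = \E_{\PU}(\partial_v g(b))$ for every $b \in \R^l$; since this holds for all $v$, $f$ is smooth. First I would record two preliminary facts. By Condition \ref{condg0} we have $L(b)(\omega) > 0$ for all $b, \omega$, so $g(b) = SL^p(b)$ is well defined for every real $p$. Moreover, on $\{S \neq 0\}$ we have $\tau \le s < \infty$ by Condition \ref{condBound2}, so there $L(b)$ is a finite product of factors of the form $\exp(\wt\Psi(\lambda_k(b)) - \lambda_k(b)^T\wt X(\eta_{k+1}))$, each smooth in $b$ (using the smoothness and local boundedness of the derivatives of $\wt\Psi$ guaranteed in this setting via Remark \ref{remPartv}); hence $b \mapsto \partial_v g(b)(\omega)$ exists and is continuous for every $v$, $\PU$-a.s., while $g \equiv 0$ on $\{S = 0\}$.

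The crux is the local domination: for every $v \in \N^l$ and every $M \in \R_+$,
\[
\E_{\PU}\Big(\sup_{|b| \le M}|\partial_v g(b)|\Big) < \infty.
\]
This follows by combining results already at hand. Applying Theorem \ref{thYmore} with $t = 1$, $p_1 = p$, $p_2 = 1$, and the given $v$ shows that $h(b) := \I(S\neq 0)|\partial_v(L^p(b))|$ satisfies Condition \ref{condUN} for the same $s$ as in Condition \ref{condlemYmore}. Since $S$ satisfies Condition \ref{condlemYmore}, Remark \ref{remCondUN} then yields $\E_{\PU}(\sup_{|b|\le M}|S h(b)|) < \infty$; because $S$ does not depend on $b$ and $S = S\I(S\neq 0)$, we have $|S h(b)| = |S|\,|\partial_v(L^p(b))| = |\partial_v g(b)|$, which is exactly the claimed bound. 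The case $v = 0$ gives in particular $\E_{\PU}(|g(b)|) < \infty$, so $f(b) \in \R$, settling the base of the induction.

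For the inductive step I would write $v = w + e_j$ with $|w| = |v| - 1$ and differentiate the identity $\partial_w f(b) = \E_{\PU}(\partial_w g(b))$ in the $j$-th coordinate. For $|t| \le 1$ the mean value theorem gives, $\PU$-a.s., $\frac{\partial_w g(b + t e_j) - \partial_w g(b)}{t} = \partial_v g(b + \theta t e_j)$ for some $\theta = \theta(\omega,t) \in (0,1)$, and these difference quotients are dominated by $\sup_{|b'| \le |b|+1}|\partial_v g(b')|$, which is integrable by the bound above with $M = |b|+1$. Letting $t \to 0$ and invoking the a.s.\ continuity of $b' \mapsto \partial_v g(b')$ together with the dominated convergence theorem shows $\partial_{e_j}\partial_w f(b)$ exists and equals $\E_{\PU}(\partial_v g(b))$. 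The same domination and continuity, applied to an arbitrary sequence $b_n \to b$ eventually lying in a fixed ball, show that $b \mapsto \E_{\PU}(\partial_v g(b))$ is continuous, completing the induction and giving $f \in C^\infty$.

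The hard part will be the local integrable domination of the derivatives uniformly over bounded $b$; everything else — the mean value theorem bookkeeping and the passage to the limit under the integral — is routine once that bound is available. Fortunately the heavy lifting for the domination has been packaged into Theorem \ref{thYmore} and Remark \ref{remCondUN}, so the proof reduces to invoking them with the parameters $t=1$, $p_1=p$, $p_2=1$ and verifying the elementary identity $|S h(b)| = |\partial_v g(b)|$.
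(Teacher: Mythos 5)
Your proposal is correct and follows essentially the same route as the paper's proof: induction on $|v|$, with the key local integrable domination obtained from Theorem \ref{thYmore} (for $p_1=p$, $p_2=1$) combined with Remark \ref{remCondUN}, and the inductive step carried out via the mean value theorem and Lebesgue's dominated convergence theorem.
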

\begin{proof}
It follows from Theorem \ref{thYmore} for $p_1=p$ and $p_2=1$ and from Remark \ref{remCondUN}
by induction over $\sum_{i=1}^lv_i$
using mean value and Lebesgue's dominated convergence theorems. 
\end{proof}

\begin{theorem}\label{thDiff} 
If Condition \ref{condlemYmore} holds 
\begin{enumerate}
\item for $S=1$, then $1=\E_{\PU}(L^{-1}(b))$ and for each $v \in \N^l$, $v\neq0$, 
$\E_{\PU}(\partial_v(L^{-1}(b)))=0$, $b \in \R^l$,
\item for $S=Z^2$, then $\msq$ is smooth
and $\partial_v\msq(b)=\E_{\PU}(Z^2\partial_vL(b))$, $b \in \R^l$, $v\in \N^l$,
\item for $S=C$, then $b\to c(b)$
is smooth and $\partial_v c(b)=\E_{\PU}(C \partial_v(L^{-1}(b)))=\E_{\PU}(\I(C\neq \infty)C \partial_v(L^{-1}(b)))$, $b \in \R^l$, $v\in \N^l$,
\item for $S$ equal to $Z^2$ and $C$, then $\ic$ is smooth. 
\end{enumerate}
\end{theorem}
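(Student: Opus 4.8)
The plan is to reduce point four to the preceding three points together with the product and chain rules for smoothness. First I would observe that by point two, under Condition \ref{condlemYmore} for $S=Z^2$, the mean square $b\to\msq(b)=\E_{\PU}(Z^2L(b))$ is smooth on $\R^l$, and by point three, under Condition \ref{condlemYmore} for $S=C$, the mean cost $b\to c(b)=\E_{\PU}(\I(C\neq\infty)CL^{-1}(b))$ is smooth on $\R^l$. Assuming Condition \ref{condlemYmore} for both $S=Z^2$ and $S=C$ therefore gives the smoothness of both factors simultaneously. Recall also that in the LETS setting $\ic$ is defined (under Condition \ref{condicwelldef}) as $\ic(b)=c(b)\var(b)$, and from (\ref{varform}) we have $\var(b)=\msq(b)-\alpha^2$.

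The key step is then the following chain of elementary observations. Since $\msq$ is smooth, $\var=\msq-\alpha^2$ differs from $\msq$ by a constant and hence is itself smooth on $\R^l$, with the same partial derivatives $\partial_v\var(b)=\partial_v\msq(b)$ for $v\neq 0$. The inefficiency constant $\ic=c\cdot\var$ is then a product of two smooth real-valued functions on $\R^l$, and a product of smooth functions is smooth; thus $\ic$ is smooth. One should note here that Condition \ref{condicwelldef} must hold so that $\ic(b)$ is well-defined (i.e.\ we never form the product $0\cdot\infty$), but in the present setting Condition \ref{condlemYmore} for $S=C$ forces $c(b)=\E_{\PU}(\I(C\neq\infty)CL^{-1}(b))\in\R$ to be finite for every $b$, so the product $c(b)\var(b)$ is an ordinary product of two finite reals and no degeneracy can occur.

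I do not expect any real obstacle here: the only work is to record that the hypothesis ``Condition \ref{condlemYmore} for $S$ equal to $Z^2$ and $C$'' is exactly the conjunction of the hypotheses of points two and three, and that smoothness is preserved under subtracting a constant and under taking products. The mildly delicate point, and the one I would be careful to state, is the well-definedness of $\ic$: the finiteness of $c(b)$ coming from point three rules out the problematic case in Condition \ref{condicwelldef}, so $\ic=c\,\var$ is a genuine product of smooth finite-valued functions and its smoothness is immediate from the Leibniz rule. A one-line proof citing points two and three plus the product rule is all that is required.
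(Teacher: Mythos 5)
Your proposal is correct and follows essentially the same route as the paper, which disposes of point four in one line as "a consequence of points two and three"; you simply spell out the intermediate steps ($\var=\msq-\alpha^2$ is smooth, $\ic=c\,\var$ is a product of finite-valued smooth functions) and add the harmless but sensible remark that finiteness of $c(b)$ rules out any degeneracy in Condition \ref{condicwelldef}. Nothing further is needed.
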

\begin{proof}
The first three points follow from Theorem \ref{thDiffLp} and from the fact that due to remarks \ref{condImplzntau} and
\ref{remztauequiv},
we have $1=\E_{\PU}(L(b)^{-1})$, $\msq(b)=\E_{\PU}(Z^2L(b))$, and $c(b)=\E_{\PU}(CL^{-1}(b))$ respectively, $b \in \R^l$,
and in the third point additionally (\ref{Cinfty0}). 
The last point is a consequence of points two and three.
\end{proof}

\begin{theorem}\label{thECMPos}
In the ECM setting for $A=\R^l$, let us assume that $\PQ_1(Z\neq0)>0$, conditions \ref{condtX} and \ref{condPartvm} hold, and we have (\ref{yen}) for
$S=Z^2$. Then, $\nabla^2\msq(b)$ exists and is positive definite, $b \in \R^l$. 
\end{theorem}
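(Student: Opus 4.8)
The plan is to reduce the statement to a differentiation-under-the-integral argument combined with the explicit Hessian of $L(b)$, exactly paralleling the estimator computation in Remark~\ref{remMsqECM}. First I would establish that $\msq$ is twice continuously differentiable with $\nabla^2\msq(b)=\E_{\PQ_1}(Z^2\nabla^2 L(b))$. By Remark~\ref{remConstMat} the ECM setting with $A=\R^l$ is identified with the LETS setting for $\tau=1$, $\Lambda_0=I_l$, and by Remark~\ref{remCounter} the hypothesis (\ref{yen}) for $S=Z^2$ is precisely Condition~\ref{condlemYmore} in this identification. Hence Theorem~\ref{thDiffLp} (applied with $S=Z^2$ and $p=1$, so that $g(b)=Z^2L(b)$ and $f=\msq$) yields that $\msq$ is smooth and $\partial_v\msq(b)=\E_{\PQ_1}(Z^2\partial_vL(b))$ for every $v\in\N^l$ and $b\in\R^l$; taking the second-order partials gives the claimed Hessian formula.

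Next I would compute $\nabla^2 L(b)$ explicitly. From (\ref{lbecm}) we have $L(b)=\exp(-b^TX+\Psi(b))$, and under Condition~\ref{condPartvm} it holds $\nabla\Psi(b)=\mu(b)$ and $\nabla^2\Psi(b)=\Sigma(b)$. Differentiating, $\nabla L(b)=L(b)(\mu(b)-X)$ and therefore
\begin{equation}
\nabla^2 L(b)=L(b)\bigl(\Sigma(b)+(\mu(b)-X)(\mu(b)-X)^T\bigr),
\end{equation}
so that $\nabla^2\msq(b)=\E_{\PQ_1}(Z^2L(b)(\Sigma(b)+(\mu(b)-X)(\mu(b)-X)^T))$.

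Finally, I would verify positive definiteness. For $v\in\R^l$, $v\neq0$,
\begin{equation}
v^T\nabla^2\msq(b)v=v^T\Sigma(b)v\cdot\msq(b)+\E_{\PQ_1}\bigl(Z^2L(b)(v^T(\mu(b)-X))^2\bigr)\geq v^T\Sigma(b)v\cdot\msq(b).
\end{equation}
Under Condition~\ref{condtX}, since $\PQ(b)\sim\PQ_1$ the random variable $v^TX$ is not $\PQ(b)$-a.s. constant, so $v^T\Sigma(b)v=\E_{\PQ(b)}((v^T(X-\mu(b)))^2)>0$, i.e. $\Sigma(b)$ is positive definite; moreover $\msq(b)=\E_{\PQ_1}(Z^2L(b))>0$ because $L(b)>0$ everywhere (Condition~\ref{condg0}) and $\PQ_1(Z\neq0)>0$. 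Hence $v^T\nabla^2\msq(b)v>0$, which gives the claim. The only genuinely delicate step is the first one, namely the justification of differentiating twice under the expectation; this is exactly what the integrability hypothesis (\ref{yen}) and Theorem~\ref{thDiffLp} are there to supply, and the rest is the routine computation and positivity argument above.
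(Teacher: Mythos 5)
Your proposal is correct and follows essentially the same route as the paper: the paper likewise obtains $\nabla^2\msq(b)=\E_{\PQ_1}(Z^2L(b)(\nabla^2\Psi(b)+(\mu(b)-X)(\mu(b)-X)^T))$ via the ECM counterpart of Theorem \ref{thDiff} (i.e.\ Theorem \ref{thDiffLp} under the identification of Remark \ref{remConstMat}), splits it as $\Sigma(b)\msq(b)+\E_{\PQ_1}(W)$ with $\E_{\PQ_1}(W)$ positive semidefinite, and concludes from the positive definiteness of $\Sigma(b)$ and $\msq(b)\in\R_+$. The only cosmetic difference is that the paper first explicitly records the integrability of $W$ (via the ECM counterpart of Theorem \ref{thzb} and Remark \ref{remCondUN}) before splitting the expectation, a step you carry out implicitly.
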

\begin{proof}
From a counterpart of Theorem \ref{thzb} and Remark \ref{remCondUN} for ECM, 
$W=Z^2L(b)(\mu(b) -X)(\mu(b) -X)^T$ has integrable entries.
Thus, from the second point of a counterpart of Theorem \ref{thDiff} for ECM
\begin{equation}
\begin{split}
\nabla^2\msq(b)&=\E_{\PQ_1}(Z^2L(b)(\nabla^2\Psi(b)+(\mu(b) -X)(\mu(b) -X)^T))\\
&=\nabla^2\Psi(b)\msq(b) + \E_{\PQ_1}(W).\\
\end{split}
\end{equation}
For $v \in \R^l$, $v^T\E_{\PQ_1}(W)v=\E_{\PQ_1}(Z^2L(b)((\nabla\Psi(b) -X)^Tv)^2)$, so that $\E_{\PQ_1}(W)$ is positive semidefinite.
Thus, the thesis follows from the fact that as discussed in Section \ref{secECM}, $\nabla^2\Psi(b)$ is positive definite and from
$\msq(b)\in \R_+$, $b \in \R^l$. 
\end{proof}

\begin{theorem}\label{thLETGSStrong}
In the LETGS setting, if Condition \ref{condlemYmore} holds for $S=Z^2$ 
and Condition \ref{cond1} holds, then for a positive definite matrix 
\begin{equation}\label{Mdef}
M=\E_{\PQ_1}(2GZ^2\exp(-\frac{1}{2}\sum_{i=1}^{\tau}|\eta_i|^2)) \in \R^{l\times l},  
\end{equation}
$\nabla^2\msq(b)-M$ is positive semidefinite, $b\in \R^l$. In particular, $\msq$ is strongly convex
with a constant $m$ equal to the lowest eigenvalue of $M$.
\end{theorem}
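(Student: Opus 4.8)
The plan is to obtain an explicit integral representation of the Hessian $\nabla^2\msq(b)$ and then bound it from below, in the positive semidefinite order, by $M$. First I would invoke the second point of Theorem \ref{thDiff}, which applies since Condition \ref{condlemYmore} is assumed for $S=Z^2$, to get $\partial_v\msq(b)=\E_{\PU}(Z^2\partial_v L(b))$ for all $v\in\N^l$; taking $v$ of order two yields
\begin{equation}
\nabla^2\msq(b)=\E_{\PU}(Z^2\nabla^2 L(b)),\quad b\in\R^l.
\end{equation}
On $\{\tau<\infty\}$ formula (\ref{lnlG}) gives $L(b)=\exp(b^TGb+Hb)$, so with $\nabla(b^TGb+Hb)=2Gb+H^T$ and Hessian $2G$ a direct computation yields $\nabla^2 L(b)=L(b)(2G+(2Gb+H^T)(2Gb+H^T)^T)$. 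On $\{\tau=\infty\}$ the density $L(b)=\epsilon$ is constant and $G=0$, and moreover $Z=0$ a.s. there, because Condition \ref{condBound2} (contained in Condition \ref{condlemYmore}) implies Condition \ref{condzntau} for $Z=S$ by Remark \ref{condImplzntau}. Hence a.s.
\begin{equation}
Z^2\nabla^2 L(b)=Z^2L(b)\left(2G+(2Gb+H^T)(2Gb+H^T)^T\right).
\end{equation}

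Next I would split the Hessian according to this identity. Since $(2Gb+H^T)(2Gb+H^T)^T$ is positive semidefinite and $Z^2L(b)\geq 0$, the matrix $\E_{\PU}(Z^2L(b)(2Gb+H^T)(2Gb+H^T)^T)$ is positive semidefinite, so it is enough to show that $\E_{\PU}(2GZ^2L(b))-M$ is positive semidefinite (both expectations being finite: $\msq(b)=\E_{\PU}(Z^2L(b))<\infty$ by Theorem \ref{thDiff}, and on $\{Z\neq0\}$ conditions \ref{condBound1} and \ref{condBound2} bound $||G||_{\infty}$ by $\tfrac{1}{2}sR^2$, so $2GZ^2L(b)$ is dominated by $sR^2Z^2L(b)$). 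Here I would use Remark \ref{remInfLb}, which gives $L(b)\geq\exp(-\tfrac{1}{2}\sum_{k=1}^{\tau}|\eta_k|^2)$ on $\{\tau<\infty\}$ for every $b$. Because $G=\I(\tau<\infty)\tfrac{1}{2}\sum_{k=0}^{\tau-1}\Lambda_k^T\Lambda_k$ is a sum of positive semidefinite matrices and $Z^2\geq0$, the pointwise matrix $2GZ^2(L(b)-\exp(-\tfrac{1}{2}\sum_{k=1}^{\tau}|\eta_k|^2))$ is positive semidefinite; taking expectations gives that $\E_{\PU}(2GZ^2L(b))-M$ is positive semidefinite, and combining the two bounds shows $\nabla^2\msq(b)-M$ is positive semidefinite for all $b$.

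Finally I would prove that $M$ itself is positive definite by applying Lemma \ref{lemPosK} with $\PS=\PQ_1$ and $K=2Z^2\exp(-\tfrac{1}{2}\sum_{k=1}^{\tau}|\eta_k|^2)$. This $K$ is a.s. positive on $\{Z\neq0\}$ (where $\tau<\infty$ a.s.), and $\I(Z\neq0)KG$ has $\PU$-integrable entries by the domination just noted together with $Z^2$ being $\PU$-integrable. Since $M=\E_{\PQ_1}(\I(Z\neq0)KG)$ and Condition \ref{cond1} holds, Lemma \ref{lemPosK} gives that $M$ is positive definite. Writing $m=m_l(M)>0$ for its lowest eigenvalue, the bound that $\nabla^2\msq(b)-M$ is positive semidefinite yields $v^T\nabla^2\msq(b)v\geq v^TMv\geq m|v|^2$ for all $b,v\in\R^l$, which is exactly strong convexity of $\msq$ with constant $m$ in the sense of Section \ref{secStrong}. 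I expect the only delicate points to be the careful treatment of the event $\{\tau=\infty\}$ (handled through Condition \ref{condzntau}) and the integrability needed both to pass the Hessian under the expectation and to apply Lemma \ref{lemPosK}; the positive semidefinite estimates themselves are then routine.
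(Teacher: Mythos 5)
Your proposal is correct and follows essentially the same route as the paper's proof: the Hessian formula from the second point of Theorem \ref{thDiff}, splitting off the positive semidefinite rank-one term, lower-bounding $L(b)$ via Remark \ref{remInfLb} to compare $\E_{\PQ_1}(2GZ^2L(b))$ with $M$, and invoking Lemma \ref{lemPosK} with $K=2Z^2\exp(-\tfrac{1}{2}\sum_{i=1}^{\tau}|\eta_i|^2)$ for positive definiteness of $M$. The only cosmetic differences are that you verify integrability by direct domination (the paper cites Theorem \ref{thzb} and Remark \ref{remCondUN}) and that you spell out the treatment of $\{\tau=\infty\}$, which the paper leaves implicit.
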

\begin{proof}
From the second point of Theorem \ref{thDiff}, we have
\begin{equation}
\nabla^2 \msq(b)=\E_{\PQ_1}(Z^2(2G+ (2Gb +H)(2Gb +H)^T)L(b)).
\end{equation}
From Theorem \ref{thzb} and Remark \ref{remCondUN}, $Z^2G$ and 
 $W:=Z^2(2Gb +H)(2Gb +H)^T)L(b)$ have $\PQ_1$-intergrable entries,
and from $\I(Z\neq 0)|\exp(-\frac{1}{2}\sum_{i=1}^{\tau}|\eta_i|^2)|\leq 1$, so does $Z^2G\exp(-\frac{1}{2}\sum_{i=1}^{\tau}|\eta_i|^2)$. 
Furthermore, $v^T\E_{\PQ_1}(W)v=\E_{\PQ_1}(Z^2L(b)((2Gb +H)^Tv)^2)$, $v \in \R^l$, i.e. $\E_{\PQ_1}(W)$ is positive semidefinite. 
From Lemma \ref{lemPosK} for $K=2Z^2\exp(-\frac{1}{2}\sum_{i=1}^{\tau}|\eta_i|^2)$, 
$M$ is positive definite. Furthermore, 
from Remark \ref{remInfLb}, for each $v \in \R^l$,  $v^T\E_{\PQ_1}(2GZ^2L(b))v\geq v^TMv$, and thus also 
$v^T(\E_{\PQ_1}(2GZ^2L(b))-M)v\geq 0$. 
\end{proof}

\section{Some properties of inefficiency constants} 
Let us consider the inefficiency constant function and its estimator as in sections \ref{secCoeffDiv} and \ref{secEstMin}. 
\begin{condition}\label{condcgegc0} 
It holds $\inf_{b \in A}c(b) = c_{min}\in\R_+$. 
\end{condition} 
\begin{condition}\label{condcgegCmin} 
For some $C_{min}\in\R_+$ we have $C(\omega) \geq  C_{min}$, $\omega \in \Omega_1$. 
\end{condition} 
Note that Condition \ref{condcgegCmin} implies Condition \ref{condcgegc0} for $c_{min}=C_{min}$.
\begin{remark}
Note that in the Euler scheme case as in Section \ref{secEulSDE}, for 
$\tau$ being the exit time of the scheme of a set $D$ such that $x_0 \in D$,
for $s \in \R_+$ and $C=s\tau$,  Condition \ref{condcgegCmin} holds for $C_{min}=s$. 
\end{remark}
Under Condition \ref{condcgegc0} we have $\ic\geq c_{min}\var$ and thus if further $A$ is open and 
$\lim_{b\uparrow A}\var(b)= \infty$, then $\lim_{b\uparrow A}\ic(b)=\infty$. 
Note also that if $c$ and $\var$ are lower semicontinuous (which from Lemma \ref{lemSemi} holds e.g. 
if $b\to L(b)(\omega)$ is continuous, $\omega \in \Omega_1$) then $\ic$  
is lower semicontinuous as well. Thus, if further $A$ is open and  $\lim_{b\uparrow A}\ic(b)=\infty$, 
then from Lemma \ref{lemMinPoint}, $\ic$ attains a minimum on $A$.
\begin{remark}\label{remicvar}
Let us assume that $\var$ has a unique minimum point $b^* \in A$.
If for some $b\in A$ it holds $\ic(b)<\ic(b^*)$, then $b\neq b^*$ and thus 
$\var(b^*)<\var(b)$, so that we must have $c(b)<c(b^*)$. 
Note that if $\var$, $c$, and $\ic$ are differentiable 
(some sufficient assumptions for which were discussed in Section \ref{secDiff}), then
a sufficient condition for the existence of $b\in A$ such that $\ic(b)<\ic(b^*)$ is that 
$\nabla \ic(b^*)\neq 0$. Since $\nabla\var(b^*)=0$, 
we have $\nabla \ic(b^*)=\var(b^*)\nabla c(b^*)$, so that
$\nabla \ic(b^*)\neq 0$ only if $\var(b^*)\neq 0$ and $\nabla c(b^*)\neq 0$.
\end{remark}
\begin{remark}\label{remicpos}
Let $c(b)>0 $, $b\in A$, let $\var$ have a unique minimum point $b^*$, 
and let $\var(b^*)=0$ and $c(b^*)<\infty$. 
Then, $b^*$ is also the unique minimum point of $\ic$ and we have  $\ic(b^*)=0$. 
If further $A$ is open,
$\msq$ and $c$ are twice continuously differentiable, 
and $\nabla^2\msq(b^*)$ is positive definite, then from
\begin{equation}
\nabla^2 \ic(b) =(\nabla^2c(b))\var(b) + c(b)\nabla^2\msq(b) + (\nabla c(b))(\nabla \msq(b))^T+(\nabla \msq(b))(\nabla c(b))^T,
\end{equation}
we have
\begin{equation}\label{nicb}
\nabla^2 \ic(b^*)=c(b^*)\nabla^2\msq(b^*),
\end{equation}
and thus $\nabla^2 \ic(b^*)$ is positive definite. 
\end{remark}

\chapter{\label{secMin}Minimization methods of estimators and their convergence properties} 
\section{\label{secSimpleMin}A simple adaptive IS procedure used in our numerical experiments} 
Let us describe a simple framework of adaptive IS 
via minimization of estimators of various functions from Section \ref{secEstMin}, 
shown in Scheme \ref{schm1}. A special case 
of this framework was used in the numerical experiments in this work. 
In the further sections we discuss 
some modifications of this framework which ensure suitable convergence 
and asymptotic properties of the minimization results of the estimators. 

Consider some estimators $\wh{\est}_k$, $k \in \N_p$, as in (\ref{dvcdef}). 
Let $b_0\in A$, $k\in \N_+$, $n_i \in \N_+ $, $i=1,\ldots,k$, and $N=n_{k+1}\in\N_+$. 
Let $b_i$, $i=1,\ldots,k$, be some $A$-valued random variables, defined in Scheme \ref{schm1}.
Let us assume Condition \ref{condxi} and
let for  $i=1,\ldots,k+1$ and $j=1,\ldots,n_i$,
$\beta_{i,j}$ be i.i.d. $\sim \PR_1$ and $\chi_{i,j}=\xi(\beta_{i,j},b_{i-1})$. 
Let us denote $\wt{\chi}_i=(\chi_{i,j})_{j=1}^{n_i}$, $i=1,\ldots,k+1$
\floatname{algorithm}{Scheme}
\begin{algorithm}
\begin{algorithmic}                    
\FOR{ $i:=1$ \TO $k$} 
\STATE 
Minimize $b\rightarrow\wh{\est}_{n_i}(b_{i-1},b)(\wt{\chi}_i)$, e.g. using exact formulas or 
some numerical minimization method started at $b_{i-1}$. Let $b_i$ be the minimization result. 
\ENDFOR 
\STATE Approximate $\alpha$ with 
\begin{equation}\label{zln} 
\overline{(ZL(b_k))}_N(\wt{\chi}_{k+1}).
\end{equation}
\end{algorithmic}
\caption{A scheme of adaptive IS}          
\label{schm1}                           
\end{algorithm}
For $k=1$ we call the inside of the loop in Scheme \ref{schm1} single-stage minimization (SSM) and denote $b'=b_0$, while 
for $k>1$ we call this whole loop multi-stage minimization (MSM). 

Let us now consider the LETGS setting and $\xi$ as in (\ref{xigss}).
Then, using the notation (\ref{ybdef}), (\ref{zln}) is equal to 
\begin{equation} 
\frac{1}{N}\sum_{i=1}^N(ZL(b_k))^{(b_k)}(\beta_{k+1,i}). 
\end{equation} 
From the discussion in Section \ref{secCeLETGS}, if 
$A_{n_i}(b_{i-1})(\wt{\chi}_i)=\frac{1}{n_i}\sum_{j=1}^{n_i}2(ZL(b_{i-1})G)^{(b_{i-1})}(\beta_{i,j})$ 
is positive definite, then for $B_n(b_{i-1})(\wt{\chi}_i)=-\frac{1}{n_i}\sum_{j=1}^{n_i}(ZL((b_{i-1}))H)^{(b_{i-1})}(\beta_{i,j})$, 
the unique minimum point $b_i$ of $b\to \wh{\ce}_{n_i}(b_{i-1},b)(\wt{\chi}_i)$ is given by the formula 
$b_i= (A_{n_i}(b_{i-1})(\wt{\chi}_i))^{-1}B_n(b_{i-1})(\wt{\chi}_i)$. 
Thus, in such a case, finding $b_i$ reduces to solving a linear system of equations. 
For $\est$ replaced by $\msq$, $\msq2$, or $\ic$, the 
functions $b\to \wh{\est}_{n_i}(b_{i-1},b)(\wt{\chi}_i)$ can be minimized using some numerical minimization methods which can 
utilise some formulas for their exact derivatives. 
Let us only provide formulas for such derivatives used in our numerical experiments. 
It holds
\begin{equation}
\nabla_b\wh{\msq}_{n_i}(b_{i-1},b)(\wt{\chi}_i)=\frac{1}{n_i}\sum_{j=1}^{n_i}(Z^2L(b_{i-1})(2Gb + H) L(b))^{(b_{i-1})}(\beta_{i,j}),
\end{equation}
\begin{equation}
\nabla^2_b\wh{\msq}_{n_i}(b_{i-1},b)(\wt{\chi}_i)=\frac{1}{n_i}\sum_{j=1}^{n_i}(Z^2L(b_{i-1})(2G + (2Gb+H)(2Gb+H)^T) L(b))^{(b_{i-1})}(\beta_{i,j}),
\end{equation}
\begin{equation}
\begin{split}
\nabla_b\wh{\msq2}_{n_i}(b_{i-1},b)(\wt{\chi}_i)&=\nabla_b\wh{\msq}_{n_i}(b_{i-1},b)(\wt{\chi}_i)
 \frac{1}{n_i}\sum_{j=1}^{n_i}(\frac{L(b_{i-1})}{L(b)})^{(b_{i-1})}(\beta_{i,j}) \\
 & - \wh{\msq}_{n_i}(b_{i-1},b)(\wt{\chi}_i)\frac{1}{n_i}\sum_{j=1}^{n_i}((2Gb + H)\frac{L(b_{i-1})}{L(b)})^{(b_{i-1})}(\beta_{i,j}), 
\end{split}
 \end{equation}
  and for 
  \begin{equation}
  \wh{\var}_{n_i}(b_{i-1},b)(\wt{\chi}_i) = 
  \frac{n_i}{n_i-1}(\wh{\msq2}_{n_i}(b_{i-1},b)(\wt{\chi}_i)-(\frac{1}{n_i}\sum_{i=1}^{n_i}(ZL(b_{i-1}))^{(b_{i-1})}(\beta_{i,j}))^2), 
  \end{equation}
  \begin{equation}
\begin{split}
  \nabla_b \wh{\ic}_{n_i}(b_{i-1},b)(\wt{\chi}_i) &=\frac{1}{n_i-1}\sum_{j=1}^{n_i}(\frac{L(b_{i-1})}{L(b)}C)^{(b_{i-1})}(\beta_{i,j}) 
  \nabla_b\wh{\msq2}_{n_i}(b_{i-1},b)(\wt{\chi}_i)\\
  &-\frac{1}{n_i}\sum_{j=1}^{n_i}((2Gb + H)\frac{L(b_{i-1})}{L(b)}C)^{(b_{i-1})}(\beta_{i,j})
  \wh{\var}_{n_i}(b_{i-1},b)(\wt{\chi}_i).  
\end{split}
  \end{equation}
Formulas for the second derivatives of $\wh{\msq2}_{n_i}(b_{i-1},\cdot)$ and $\wh{\ic}_{n_i}(b_{i-1},\cdot)$ can also be easily computed
and used in minimization algorithms, but we did not apply them in our experiments. 
When evaluating the above expressions one can take advantage of formulas 
(\ref{LBB}), (\ref{hbp}), and (\ref{lbbp2}). 

\section{\label{secSLLN}Helper strong laws of large numbers}
In this section we provide various SLLNs needed further on. 
The following uniform SLLN is well-known; see Theorem A1, Section 2.6 in \cite{rubinstein1993discrete}.
\begin{theorem}\label{thStUnif}
Let $Y$ be a random variable with values in a measurable space $\mc{S}$, let
$V \subset \R^l$ be nonempty and compact, and let $h:\mc{S}(V)\otimes \mc{S} \to \mc{S}(\R)$ be
such that a.s. $x\to h(x,Y)$ is continuous and $\E(\sup_{x \in V}|h(x,Y)|)<\infty$.
Then, for $Y_1,Y_2,\ldots$ i.i.d. $\sim Y$, a.s. as $n \to \infty$,
$x\in V\to \frac{1}{n}\sum_{i=1}^n h(x,Y_i)$  converges uniformly to a continuous 
function $x\in V\to \E(h(x,Y))$.
\end{theorem}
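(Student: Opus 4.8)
The plan is to combine the ordinary strong law of large numbers, applied at each fixed point, with an equicontinuity-in-$L^1$ estimate that is made uniform over $V$ using compactness. First I would record that for each $x \in V$ we have $\E(|h(x,Y)|) \leq \E(\sup_{x\in V}|h(x,Y)|)<\infty$, so $f(x):=\E(h(x,Y))$ is finite, and by the SLLN (Theorem \ref{thSLLNG} applied to $h(x,Y)$ and to $-h(x,Y)$) a.s.\ $\frac{1}{n}\sum_{i=1}^n h(x,Y_i)\to f(x)$ for each fixed $x$.

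The key auxiliary object is the local oscillation, defined for $x_0\in V$ and $\delta>0$ by $w_{x_0,\delta}(Y)=\sup_{x\in V,\ |x-x_0|\leq \delta}|h(x,Y)-h(x_0,Y)|$. Because $x\to h(x,Y)$ is a.s.\ continuous, this supremum a.s.\ equals the supremum over a countable dense subset of $\{x\in V:|x-x_0|\leq\delta\}$, which makes $w_{x_0,\delta}$ measurable; it is dominated by $2\sup_{x\in V}|h(x,Y)|$ and, again by continuity, $w_{x_0,\delta}(Y)\downarrow 0$ a.s.\ as $\delta\downarrow 0$. Lebesgue's dominated convergence theorem then yields $\E(w_{x_0,\delta}(Y))\to 0$ as $\delta\to 0$; in particular $f$ is continuous, since $|f(x)-f(x_0)|\leq \E(w_{x_0,|x-x_0|}(Y))$.

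Next I would fix $\epsilon>0$ and, for each $x_0\in V$, choose $\delta(x_0)>0$ with $\E(w_{x_0,\delta(x_0)}(Y))<\epsilon$. The balls $B(x_0,\delta(x_0))$ cover the compact set $V$, so finitely many of them, centred at $x_1,\ldots,x_N$ with radii $\delta_1,\ldots,\delta_N$, already cover $V$. On the a.s.\ event (a finite, hence null-complement, intersection) where the pointwise SLLN holds at each $x_j$ and the SLLN holds for each $w_{x_j,\delta_j}(Y)$, I would split, for $x\in B(x_j,\delta_j)$,
\[
\frac{1}{n}\sum_{i=1}^n h(x,Y_i)-f(x)=\Bigl(\tfrac{1}{n}\sum_{i=1}^n\bigl(h(x,Y_i)-h(x_j,Y_i)\bigr)\Bigr)+\Bigl(\tfrac{1}{n}\sum_{i=1}^n h(x_j,Y_i)-f(x_j)\Bigr)+\bigl(f(x_j)-f(x)\bigr).
\]
The first term is bounded in absolute value by $\frac{1}{n}\sum_{i=1}^n w_{x_j,\delta_j}(Y_i)\to \E(w_{x_j,\delta_j}(Y))<\epsilon$, the second tends to $0$, and the third is at most $\E(w_{x_j,\delta_j}(Y))<\epsilon$. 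Taking the supremum over the finitely many $j$ gives $\limsup_n\sup_{x\in V}\bigl|\frac{1}{n}\sum_{i=1}^n h(x,Y_i)-f(x)\bigr|\leq 2\epsilon$ a.s., and letting $\epsilon\downarrow 0$ along a sequence (intersecting the corresponding a.s.\ events) yields the claimed uniform convergence a.s.

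The main obstacle I anticipate is the measurability of the oscillation $w_{x_0,\delta}$ and the justification that the relevant supremum may be reduced to a countable one via the a.s.\ continuity of $x\to h(x,Y)$; the remaining bookkeeping — that only finitely many exceptional null sets (one per chosen centre and one per oscillation SLLN) enter, so their union remains null — is routine.
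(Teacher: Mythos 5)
Your proof is correct, and it is the standard oscillation-plus-compactness argument: the paper itself does not prove Theorem \ref{thStUnif} but cites it (Theorem A1, Section 2.6 in the Rubinstein--Shapiro reference), and the very same decomposition — local oscillation dominated by $2\sup_{x\in V}|h(x,Y)|$, dominated convergence to make its mean small, a finite subcover, and the SLLN applied at the finitely many centres and to the finitely many oscillation variables — is exactly what the paper uses in its proof of the multi-stage analogue, Theorem \ref{thSLLNEstUnif}. The only point worth flagging is the measurability of the oscillation supremum, which you handle via a countable dense subset just as one must (the paper sidesteps such issues by working on a complete probability space, cf.\ Remark \ref{remMesUnc}).
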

For each $p \in [1,\infty]$ and $\overline{\R}$-valued 
random variable $U$, let $|U|_p$ denote the norm of $U$ in $L^p(\PR)$. 
We have the following well-known generalization of H\"{o}lder's inequality which follows from it by induction. 
\begin{lemma}\label{lemHoldGen}
Let $n \in \N_+$, let $U_i$,  $i=1,\ldots,n$, be $\overline{\R}$-valued random variables, and let 
$q_i \in [1,\infty]$, $i=1,\ldots,n$, be such that $\sum_{i=1}^n \frac{1}{q_i}=1$. Then, it holds
\begin{equation}
|\prod_{i=1}^nU_i|_1 \leq \prod_{i=1}^n|U_i|_{q_i}.
\end{equation}
\end{lemma}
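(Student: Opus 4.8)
The plan is to prove the inequality by induction on $n$, using the ordinary two-factor H\"older inequality both as the base case and as the engine of the inductive step. For the base case I would take $n=1$: the constraint $\frac{1}{q_1}=1$ forces $q_1=1$, and the claim reduces to the triviality $|U_1|_1\leq |U_1|_1$. (Equivalently one may start at $n=2$, where the statement is exactly the classical H\"older inequality $|U_1U_2|_1\leq |U_1|_{q_1}|U_2|_{q_2}$ for conjugate exponents $q_1,q_2\in[1,\infty]$, including the endpoint pair $(1,\infty)$.)

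For the inductive step, assume the claim for $n-1$ factors and let $U_1,\ldots,U_n$ and $q_1,\ldots,q_n$ be given with $\sum_{i=1}^n\frac{1}{q_i}=1$. First I would peel off the last factor and introduce the conjugate exponent $p\in[1,\infty]$ of $q_n$, defined by $\frac{1}{p}=1-\frac{1}{q_n}=\sum_{i=1}^{n-1}\frac{1}{q_i}$. Applying the two-factor H\"older inequality to $\prod_{i=1}^{n-1}U_i$ and $U_n$ gives
\[
\left|\prod_{i=1}^nU_i\right|_1\leq \left|\prod_{i=1}^{n-1}U_i\right|_p\,|U_n|_{q_n}.
\]
It then remains to bound $\left|\prod_{i=1}^{n-1}U_i\right|_p$. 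Here I would rewrite this quantity as $\left|\prod_{i=1}^{n-1}|U_i|^p\right|_1^{1/p}$ and apply the induction hypothesis to the $n-1$ variables $|U_i|^p$ with the rescaled exponents $r_i:=q_i/p$. The two verifications that make this work are that $\sum_{i=1}^{n-1}\frac{1}{r_i}=p\sum_{i=1}^{n-1}\frac{1}{q_i}=1$, and that each $r_i\geq 1$, the latter because $\frac{1}{p}\geq\frac{1}{q_i}$ (as $\frac1p$ is a sum of nonnegative reciprocals including $\frac{1}{q_i}$) forces $p\leq q_i$. Since a direct computation gives $\left||U_i|^p\right|_{r_i}=|U_i|_{q_i}^p$, the induction hypothesis followed by taking $p$-th roots yields $\left|\prod_{i=1}^{n-1}U_i\right|_p\leq\prod_{i=1}^{n-1}|U_i|_{q_i}$, and substituting into the displayed inequality completes the induction.

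The main obstacle, such as it is, will be bookkeeping around the endpoint exponents rather than any substantive analytic difficulty. I must check that the rescaled exponents $r_i=q_i/p$ stay in $[1,\infty]$ when some $q_i=\infty$ (giving $r_i=\infty$), and treat separately the degenerate case $p=\infty$, which occurs exactly when $q_1=\cdots=q_{n-1}=\infty$ and hence $q_n=1$; in that case $\left|\prod_{i=1}^{n-1}U_i\right|_\infty\leq\prod_{i=1}^{n-1}|U_i|_\infty$ follows at once from the fact that the essential supremum of a product is at most the product of the essential suprema. I would also confirm that the identities $\left||U_i|^p\right|_{r_i}=|U_i|_{q_i}^p$ and $\left|\prod U_i\right|_p=\left|\prod|U_i|^p\right|_1^{1/p}$ are read correctly in the extended-real setting where individual factors may be infinite. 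All of these checks are routine once the conjugate exponent $p$ has been chosen as above.
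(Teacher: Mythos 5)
Your proposal is correct and follows exactly the route the paper indicates: the paper offers no written proof beyond the remark that the lemma ``follows from [H\"older's inequality] by induction,'' and your induction on $n$ --- peeling off the last factor with the conjugate exponent $p$ defined by $\frac{1}{p}=\sum_{i=1}^{n-1}\frac{1}{q_i}$ and rescaling the remaining exponents --- is the standard way to carry that out. Your attention to the endpoint cases ($q_i=\infty$, $p=\infty$) is a welcome level of care that the paper leaves implicit.
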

Let further in this section $n_i \in \N_+$, $i \in \N_+$, and $r \in \N_+$. 
To our knowledge, the SLLNs that follow are new. 

\begin{theorem}\label{thSLLN}
Let $M_i \geq 0$, $i \in \N_+$, be such that 
\begin{equation}\label{sumsuff}
\sum_{i=0}^\infty\frac{M_i}{n_i^r}<\infty.
\end{equation}
Consider $\sigma$-fields $\mc{G}_i\subset \mc{F}$, $i \in \N_+$, and $\overline{\R}$-valued random variables 
$\psi_{i,j}$, $j=1\ldots,n_i$, $i \in \N_+$, which are conditionally independent given $\mc{G}_i$ for the same $i$ and 
different $j$, and we have $\E(\psi_{i,j}^{2r})\leq M_i<\infty$ and
$\E(\psi_{i,j}|\mc{G}_i)=0$. Then, for
$\wh{a}_i =\frac{1}{n_i}\sum_{j=1}^{n_i}\psi_{i,j}$, $i \in \N_+$,
we have a.s.
$\lim_{n\rightarrow \infty}\wh{a}_n= 0$.
\end{theorem}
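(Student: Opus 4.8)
The plan is to prove the almost sure convergence $\wh{a}_n \to 0$ by showing that $\sum_{n=1}^\infty \PR(|\wh{a}_n| > \epsilon) < \infty$ for every $\epsilon > 0$, after which the Borel--Cantelli lemma yields that a.s. $|\wh{a}_n| \leq \epsilon$ for all sufficiently large $n$, and letting $\epsilon$ run through a sequence decreasing to zero gives the conclusion. The summability condition (\ref{sumsuff}) is tailored exactly for this: the exponent $r$ and the bound $M_i$ on the $2r$-th moments should combine to control a $2r$-th moment of $\wh{a}_n$ divided by a power of $n_n$.

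The main step is to estimate $\E(\wh{a}_i^{2r})$ by conditioning on $\mc{G}_i$. Conditionally on $\mc{G}_i$, the variables $\psi_{i,1},\ldots,\psi_{i,n_i}$ are independent with conditional mean zero, so when one expands
\begin{equation}
\E(\wh{a}_i^{2r}\mid \mc{G}_i)= \frac{1}{n_i^{2r}}\sum \E(\psi_{i,j_1}\cdots\psi_{i,j_{2r}}\mid \mc{G}_i),
\end{equation}
every term containing a conditionally-independent factor that appears exactly once vanishes. Hence only multi-indices in which each distinct index is repeated at least twice survive; there are at most $O(n_i^r)$ such multi-indices (since at most $r$ distinct indices can occur among $2r$ slots when each must appear at least twice). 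Each surviving conditional expectation is bounded, via the conditional generalized Hölder inequality (Lemma \ref{lemHoldGen}) together with the moment bound $\E(\psi_{i,j}^{2r})\leq M_i$, by $M_i$. Taking total expectation, this gives $\E(\wh{a}_i^{2r}) \leq K M_i / n_i^{r}$ for a combinatorial constant $K$ depending only on $r$.

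Combining this moment bound with Markov's inequality applied to $\wh{a}_n^{2r}$ gives
\begin{equation}
\PR(|\wh{a}_n|>\epsilon) \leq \frac{\E(\wh{a}_n^{2r})}{\epsilon^{2r}} \leq \frac{K}{\epsilon^{2r}}\frac{M_n}{n_n^{r}},
\end{equation}
and summing over $n$ and using (\ref{sumsuff}) shows that the series converges, completing the Borel--Cantelli argument. The hard part will be the combinatorial bookkeeping in bounding the number of surviving multi-indices by $O(n_i^r)$ and verifying that the constant $K$ is uniform in $i$; the moment bound on each surviving term is routine once Lemma \ref{lemHoldGen} is invoked conditionally, but one must take care that the conditional independence is used correctly so that mixed terms with a singly-occurring index indeed vanish after conditioning on $\mc{G}_i$.
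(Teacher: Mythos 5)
Your proposal is correct and follows essentially the same route as the paper's proof: Borel--Cantelli plus Markov's inequality on the $2r$-th moment, a multinomial expansion in which terms containing a singly-occurring index vanish by conditional independence and conditional mean zero, the generalized H\"{o}lder inequality (Lemma \ref{lemHoldGen}) to bound each surviving term by $M_i$, and a count of $O(n_i^r)$ surviving multi-indices yielding $\E(\wh{a}_i^{2r})\leq K M_i/n_i^r$. The only cosmetic difference is that the paper applies H\"{o}lder unconditionally (after first verifying integrability so the mixed terms may be split), whereas you phrase it conditionally; both work.
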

\begin{proof}
From the Borel-Cantelli lemma it is sufficient to prove that for each $\epsilon >0$ 
\begin{equation}\label{bcto}
\sum_{i=1}^{\infty}\PR(|\wh{a}_i|>\epsilon) < \infty.
\end{equation}
From Markov's inequality we have 
\begin{equation}
\PR(|\wh{a}_i|>\epsilon) \leq \frac{\E(\wh{a}_i^{2r})}{\epsilon^{2r}},
\end{equation}
so that it is sufficient to prove that 
\begin{equation}\label{sumeai2r}
\sum_{i=1}^{\infty}\E(\wh{a}_i^{2r}) < \infty. 
\end{equation}
Let us consider separately the easiest to prove case of $r=1$. 
We have for $i\in\N_+$, and $j,l\in\{1,\ldots,n_i\}$, $j\neq l$, from the conditional independence
\begin{equation}\label{epsijlG}
\E(\psi_{i,j}\psi_{i,l}|\mc{G}_i)=\E(\psi_{i,j}|\mc{G}_i)\E(\psi_{i,l}|\mc{G}_i)=0, 
\end{equation}
and thus
\begin{equation}\label{epsijl}
\E(\psi_{i,j}\psi_{i,l})=\E(\E(\psi_{i,j}\psi_{i,l}|\mc{G}_i))=0.
\end{equation}
Thus, for $i\in \N_+$
\begin{equation}\label{ai2Mn}
\E(\wh{a}_i^2)=\frac{1}{n_i^2}(\sum_{j=1}^{n_i}\E(\psi_{i,j})^2 + \sum_{j<l \in \{1,\ldots,n\}}2\E(\psi_{i,j}\psi_{i,l}))
\leq \frac{M_i}{n_i}.
\end{equation}
Now, (\ref{sumeai2r}) follows from (\ref{sumsuff}).

For general $r\in \N_+$, denoting $J_i = \{v \in \N^{n_i}: \sum_{j=1}^{n_i}v_j=2r\}$ and for $v \in J_i$, ${2r \choose v} =\frac{(2r)!}{\prod_{j=1}^{n_i} v_j!}$, 
we have for $v \in J_i$, from Lemma \ref{lemHoldGen} for $n=n_i$, $U_j=\psi_{i,j}^{v_j}$, 
and $\frac{1}{q_i}=\frac{v_i}{2r}$,
\begin{equation}\label{in1}
\E(|\prod_{j=1}^{n_i}\psi_{i,j}^{v_j}|)\leq \prod_{j=1}^{n_i}(\E(\psi_{i,j}^{2r}))^{\frac{v_j}{2r}} \leq M_i. 
\end{equation}
Thus,
\begin{equation} 
\E(\wh{a}_i^{2r})=\frac{1}{n_i^{2r}}\sum_{v \in J_i}{2r \choose v} \E(\prod_{j=1}^{n_i}\psi_{i,j}^{v_j})<\infty.
\end{equation}
For $v \in J_i$ such that $v_k =1$ for some $k \in \{1,\ldots,n_i\}$,
denoting $\psi_{i,\sim k} = \prod_{j\in\{1,\ldots, n_i\},j\neq k}\psi_{i,j}^{v_j}$, we have
that $\psi_{i,k}$ and $\psi_{i,\sim k}$ are conditionally independent. Furthermore, 
from Lemma \ref{lemHoldGen} for $n=n_i$, $U_j=\psi_{i,j}^{v_j}$ for $j\neq k$, $U_k=1$,
and $\frac{1}{q_i}=\frac{v_i}{2r}$, we have 
\begin{equation}
\E(|\psi_{i,\sim k}|) \leq \prod_{j\in\{1,\ldots,n_i\}, j\neq k}(\E(\psi_{i,j}^{2r}))^{\frac{v_j}{2r}}<\infty. 
\end{equation}
Thus,
\begin{equation}
\E(\prod_{j=1}^{n_i}\psi_{i,j}^{v_j}|\mc{G}_i)= \E((\E(\psi_{i,k}|\mc{G}_i)\E(\psi_{i,\sim k})|\mc{G}_i))=0,
\end{equation}
and $\E(\prod_{j=1}^{n_i}\psi_{i,j}^{v_j})=0$.
Therefore, for $\wt{J}_i= \{v \in (\N\setminus\{1\}) ^{n_i}: \sum_{j=1}^{n_i}v_j=2r\}$ we have
\begin{equation}\label{wtjsum}
\E(\wh{a}_i^{2r})=\frac{1}{n_i^{2r}}\sum_{v \in \wt{J}_i}{2r \choose v} \E(\prod_{j=1}^{n_i}\psi_{i,j}^{v_j}). 
\end{equation}
Note that for $v \in \wt{J}_i$ and $p(v):=|\{j\in\{1,\ldots,n_i\}: v_j\neq 0\}|$, it holds  $p(v)\leq r$, 
and thus for $J'_i= \{v \in \{0,2,3,\ldots,2r\}^{n_i}: p(v)\leq r \}$, we have $\wt{J}_i\subset J_i'$.  
Therefore,
\begin{equation}\label{in2}
|\wt{J}_i|\leq|J_i'|\leq {n_i \choose  r}(2r)^r\leq n_i^r(2r)^r.
\end{equation}
Furthermore, 
\begin{equation}\label{in3}
{2r \choose v}\leq (2r)!.
\end{equation}
From (\ref{wtjsum}), (\ref{in1}), (\ref{in2}), and (\ref{in3})
\begin{equation}\label{lastin}
\E(\wh{a}_i^{2r})\leq \frac{M_i}{n_i^{r}}(2r)!(2r)^r.
\end{equation}
Inequality (\ref{sumeai2r}) follows from (\ref{lastin}) and (\ref{sumsuff}).
\end{proof}

Let $l \in \N_+$, let $A \in \mc{B}(\R^l)$ be nonempty, and let a family of probability distributions $\PQ(b)$, $b \in A$, be as in Section \ref{secFamily}. 
Let $b_i$, $i \in \N$, be $A$-valued random variables.
\begin{condition}\label{condKi}
Nonempty sets $K_i \in \mc{B}(A)$, $i \in \N_+$,  are such that a.s. for a sufficiently large $i$, 
$b_i \in K_i$. 
\end{condition}

\begin{condition}\label{condChi}
For each $i \in \N_+$, $\chi_{i,j}$, $j =1,\ldots,n_i$, are conditionally independent given $b_{i-1}$
and have conditional distribution $\PQ(v)$ given $b_{i-1}=v$ (see page 420 in \cite{fristedt1997modern} or  
page 15 in \cite{Ikeda1981} for a definition of a conditional distribution).
It holds $\wt{\chi}_i=(\chi_{i,j})_{j=1}^{n_i}$, $i \in \N_+$. 
\end{condition}
Condition \ref{condChi} is implied by the following one. 
\begin{condition}\label{condBetaxi}
Condition \ref{condxi} holds and 
for each $i \in \N_+$, $\beta_{i,j}\sim\PR_1$, $j =1,\ldots,n_i$, are independent 
and independent of $b_{i-1}$. Furthermore, $\chi_{i,j}=\xi(\beta_{i,j},b_{i-1})$,
$j =1,\ldots,n_i$,
and $\wt{\chi}_i=(\chi_{i,j})_{j=1}^{n_i}$, $i \in \N_+$. 
\end{condition}

Let us further in this section assume conditions \ref{condKi} and \ref{condChi}.
\begin{condition}\label{condhLLN}
A function $h:\mc{S}(A)\otimes \mc{S}_1 \to \mc{S}(\R)$ is such that
for each $v\in A$, $\E_{\PQ(v)}(h(v,\cdot))=0$, and for 
\begin{equation}\label{mineq}
M_{i}=\sup_{w \in K_{i-1}}\E_{\PQ(w)}(h(w,\cdot)^{2r}),\quad  i \in \N_+, 
\end{equation} 
(\ref{sumsuff}) holds.
\end{condition}

\begin{theorem}\label{thSLLNh} 
Under Condition \ref{condhLLN}, for 
\begin{equation}
\wh{b}_i=\frac{1}{n_i}\sum_{k=1}^{n_i}h(b_{i-1},\chi_{i,k}),\quad i \in \N_+,
\end{equation}
we have a.s.
\begin{equation}\label{bitozero}
\lim_{i\to\infty}\wh{b}_i =0.
\end{equation} 
\end{theorem}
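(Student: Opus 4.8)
The plan is to deduce the claim from Theorem \ref{thSLLN}, applied with the $\sigma$-fields $\mc{G}_i=\sigma(b_{i-1})$. The naive choice would be to set $\psi_{i,j}=h(b_{i-1},\chi_{i,j})$, but this runs into the difficulty that the moment bound $M_i$ in Condition \ref{condhLLN} is a supremum of $\E_{\PQ(w)}(h(w,\cdot)^{2r})$ over $w\in K_{i-1}$, whereas Condition \ref{condKi} only guarantees that $b_{i-1}\in K_{i-1}$ for sufficiently large $i$, a.s.; off this set we have no control on the moments. To circumvent this I would instead work with the truncated variables
\begin{equation}
\psi_{i,j}=\I(b_{i-1}\in K_{i-1})h(b_{i-1},\chi_{i,j}),\quad j=1,\ldots,n_i,\ i\in \N_+,
\end{equation}
which agree with $h(b_{i-1},\chi_{i,j})$ precisely on the $\mc{G}_i$-measurable event $\{b_{i-1}\in K_{i-1}\}$.

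First I would verify the three hypotheses of Theorem \ref{thSLLN} for these $\psi_{i,j}$. Since the indicator $\I(b_{i-1}\in K_{i-1})$ is $\mc{G}_i$-measurable, Condition \ref{condChi} gives that $\psi_{i,1},\ldots,\psi_{i,n_i}$ remain conditionally independent given $\mc{G}_i$. For the conditional mean, using the conditional distribution $\PQ(v)$ of $\chi_{i,j}$ given $b_{i-1}=v$ from Condition \ref{condChi} together with $\E_{\PQ(v)}(h(v,\cdot))=0$ from Condition \ref{condhLLN}, one obtains
\begin{equation}
\E(\psi_{i,j}\mid\mc{G}_i)=\I(b_{i-1}\in K_{i-1})\E(h(b_{i-1},\chi_{i,j})\mid b_{i-1})=0.
\end{equation}
For the moment bound, conditioning on $b_{i-1}$ and using that $\E_{\PQ(w)}(h(w,\cdot)^{2r})\leq M_i$ for $w\in K_{i-1}$ yields
\begin{equation}
\E(\psi_{i,j}^{2r})=\E\big(\I(b_{i-1}\in K_{i-1})\E_{\PQ(b_{i-1})}(h(b_{i-1},\cdot)^{2r})\big)\leq M_i\PR(b_{i-1}\in K_{i-1})\leq M_i,
\end{equation}
which is finite because (\ref{sumsuff}) is assumed in Condition \ref{condhLLN}.

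With (\ref{sumsuff}) in hand, Theorem \ref{thSLLN} gives a.s. $\lim_{i\to\infty}\frac{1}{n_i}\sum_{j=1}^{n_i}\psi_{i,j}=0$. It then remains to replace $\psi_{i,j}$ by $h(b_{i-1},\chi_{i,j})$: by Condition \ref{condKi}, a.s. there is a random index beyond which $b_i\in K_i$, hence $b_{i-1}\in K_{i-1}$, so that $\I(b_{i-1}\in K_{i-1})=1$ and $\wh{b}_i=\frac{1}{n_i}\sum_{j=1}^{n_i}\psi_{i,j}$ for all sufficiently large $i$; consequently (\ref{bitozero}) follows. I expect the main technical obstacle to lie in the conditional-distribution manipulations behind the two displayed conditional computations --- in particular in justifying $\E(h(b_{i-1},\chi_{i,j})\mid b_{i-1})=\E_{\PQ(b_{i-1})}(h(b_{i-1},\cdot))$ for the jointly measurable $h$ via the regular conditional distribution of Condition \ref{condChi}, and in keeping the index shift between $K_i$ and $K_{i-1}$ consistent with the eventual location of $b_{i-1}$.
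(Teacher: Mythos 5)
Your proposal is correct and follows essentially the same route as the paper: the paper likewise introduces the truncated variables $h_i(b_{i-1},\chi_{i,j})=\I(b_{i-1}\in K_{i-1})h(b_{i-1},\chi_{i,j})$, verifies the hypotheses of Theorem \ref{thSLLN} with $\mc{G}_i=\sigma(b_{i-1})$ via the conditional Fubini theorem, and then discards the indicator using Condition \ref{condKi}. The only (immaterial) difference is the order in which the truncation is removed relative to the application of Theorem \ref{thSLLN}.
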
 
\begin{proof} 
Let for $i \in \N_+$, $h_i:A\times \Omega_1 \to \R$ be such that for each $x \in \Omega_1$,
$h_i(v,x)= h(v,x)$ when $v \in K_{i-1}$ and $h_i(v,x)=0$ when $v \in A \setminus K_{i-1}$. For 
\begin{equation}
\wh{a}_i=\frac{1}{n_i}\sum_{k=1}^{n_i}h_{i}(b_{i-1},\chi_{i,k}),
\end{equation}
from Condition \ref{condKi} we have a.s.
$\wh{b}_i- \wh{a}_i=\I(b_{i-1} \notin K_{i-1})\wh{b}_i\to 0$
as $i \to \infty$. Thus, to prove (\ref{bitozero}) it is sufficient to prove that a.s. 
\begin{equation}\label{liai0}
\lim_{i\to \infty}\wh{a}_i=0.  
\end{equation}
Let $\psi_{i,j}=h_{i}(b_{i-1},\chi_{i,j})$, $i\in \N_+$, $j=1,\ldots,n_i$. 
From the conditional Fubini's theorem (see Theorem 2, Section 22.1 in \cite{fristedt1997modern})
\begin{equation}
\begin{split}
\E(\psi_{i,j}^{2r})&=\E((\E_{\PQ(v)}(h_i^{2r}(v,\cdot)))_{v=b_{i-1}})\\
&=\E((\I(v\in K_{i-1})\E_{\PQ(v)}(h^{2r}(v,\cdot)))_{v=b_{i-1}})\leq M_i.
\end{split}
\end{equation}
Furthermore, $\psi_{i,j}$, $j=1,\ldots,n_i$ are conditionally independent given $\mc{G}_i:=\sigma(b_{i-1})$, and  
from some well-known properties of conditional distributions (see Definition 1, Section 23.1 in \cite{fristedt1997modern}), we have
\begin{equation}
\begin{split}
\E(\psi_{i,j}|\mc{G}_i)&=(\E_{\PQ(v)}(h_i(v,\cdot)))_{v=b_{i-1}}\\
&=(\I(v \in K_{i-1})\E_{\PQ(v)}(h(v,\cdot)))_{v=b_{i-1}}=0.   
\end{split}
\end{equation}
Thus, (\ref{liai0}) follows from Theorem \ref{thSLLN}.
\end{proof}

\begin{theorem}\label{thSLLNS}
If $g:\mc{S}(A)\otimes \mc{S}_1 \to \mc{S}(\R)$ is such that 
$f(v):=\E_{\PQ(v)}(g(v,\cdot)) \in \R$, $v \in A$, and for
\begin{equation}
P_{i}=\sup_{v \in K_{i-1}}\E_{\PQ(v)}(g(v,\cdot)^{2r}),\quad i\in \N_+,
\end{equation} 
we have
\begin{equation}\label{pini} 
\sum_{i=1}^{\infty}\frac{P_i}{n_i^r} <\infty, 
\end{equation}
then Condition \ref{condhLLN} holds for $h(v,y)=g(v,y)-f(v)$, $v \in A$, $y\in \Omega_1$. 
\end{theorem}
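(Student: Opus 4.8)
The plan is to verify directly the two requirements of Condition \ref{condhLLN} for the function $h(v,y)=g(v,y)-f(v)$. The first requirement, that $\E_{\PQ(v)}(h(v,\cdot))=0$ for each $v\in A$, is immediate: by linearity of the expectation and the definition of $f$, we have $\E_{\PQ(v)}(h(v,\cdot))=\E_{\PQ(v)}(g(v,\cdot))-f(v)=f(v)-f(v)=0$, the centering constant $f(v)$ being finite by hypothesis. I would also note that $h$ is a legitimate element of the class considered in Condition \ref{condhLLN}: $g$ is measurable by assumption and $f$ depends only on $v$, so once the measurability of $v\mapsto f(v)$ is established (which can be obtained by a Fubini-type argument exploiting the conditional-distribution structure underlying Condition \ref{condChi}), $h$ is measurable from $\mc{S}(A)\otimes\mc{S}_1$ to $\mc{S}(\R)$.

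The core of the argument is the second requirement, that the moments $M_i=\sup_{w\in K_{i-1}}\E_{\PQ(w)}(h(w,\cdot)^{2r})$ satisfy (\ref{sumsuff}). First I would control the centered $2r$-th moment by the raw one. Since $2r$ is even, $g(w,\cdot)^{2r}=|g(w,\cdot)|^{2r}$ and likewise for $h$, so I may work with absolute values and combine the elementary convexity ($c_r$) inequality $|a-b|^{2r}\le 2^{2r-1}(|a|^{2r}+|b|^{2r})$ with Jensen's inequality $|f(w)|^{2r}=|\E_{\PQ(w)}(g(w,\cdot))|^{2r}\le \E_{\PQ(w)}(|g(w,\cdot)|^{2r})$. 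This yields, for every $w\in A$,
\begin{equation}
\E_{\PQ(w)}(h(w,\cdot)^{2r})\le 2^{2r-1}\left(\E_{\PQ(w)}(g(w,\cdot)^{2r})+|f(w)|^{2r}\right)\le 2^{2r}\E_{\PQ(w)}(g(w,\cdot)^{2r}).
\end{equation}

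Taking the supremum over $w\in K_{i-1}$ then gives $M_i\le 2^{2r}P_i$ for each $i\in\N_+$, whence $\sum_i M_i/n_i^r\le 2^{2r}\sum_i P_i/n_i^r$, which is finite by the hypothesis (\ref{pini}); thus (\ref{sumsuff}) holds and Condition \ref{condhLLN} is verified. I expect the only genuinely delicate point to be the measurability of the centering function $f$, and hence of $h$, since the moment estimate itself is routine and the constant $2^{2r}$ is harmless as it is absorbed into the convergent series. The discrepancy between the summation starting at $i=0$ in (\ref{sumsuff}) and at $i=1$ in (\ref{pini}) is immaterial, since convergence of a nonnegative series depends only on its tail.
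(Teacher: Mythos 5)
Your proof is correct and follows essentially the same route as the paper's: both verify the zero-mean property trivially and then bound the centered $2r$-th moment via the convexity inequality $|a-b|^{2r}\le 2^{2r-1}(|a|^{2r}+|b|^{2r})$ combined with Jensen's inequality, arriving at the identical constant $M_i\le 2^{2r}P_i=4^rP_i$. Your additional remark on the measurability of $v\mapsto f(v)$ is a reasonable extra observation that the paper leaves implicit.
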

\begin{proof}
Clearly, $\E_{\PQ(v)}(h(v,\cdot))=0$, $v \in A$. Furthermore, for $v \in A$
\begin{equation}
\begin{split}
\E_{\PQ(v)}(h(v,\cdot)^{2r})&\leq \E_{\PQ(v)}(|g(v,\cdot)| + |f(v)|)^{2r}) \\
&\leq 2^{2r-1}\E_{\PQ(v)}(g(v,\cdot)^{2r} + f(v)^{2r}) \\
&\leq 4^{r}\E_{\PQ(v)}(g(v,\cdot)^{2r}), \\
\end{split}
\end{equation}
where in the second inequality we used the fact that 
$\frac{a+b}{2}\leq (\frac{a^p+b^p}{2})^{\frac{1}{p}}$, $a,b \in[0,\infty)$, $p \in [1,\infty)$, and 
in the last inequality we used conditional Jensen's inequality. 
Thus, $M_i \leq 4^{r} P_i$ and (\ref{sumsuff}) follows from (\ref{pini}). 
\end{proof}

\begin{condition}\label{condPi}
Condition \ref{condpqbllpq1} holds for $Z$ replaced by some $S\in L^1(\PQ_1)$ and for
\begin{equation}\label{dip}
P_i=\sup_{v\in K_{i-1}} \E_{\PQ(v)}((SL(v))^{2r})=\sup_{v\in K_{i-1}} \E_{\PQ_1}(S^{2r}L(v)^{2r-1}),\quad i\in \N_+,
\end{equation}
we have (\ref{pini}). 
\end{condition}

\begin{theorem}\label{thSLLNc}
Under Condition \ref{condPi}, a.s.  
\begin{equation}
\lim_{k\to \infty}\overline{(SL(b_{k-1}))}_{n_k}(\wt{\chi}_k)=\E_{\PQ_1}(S). 
\end{equation}
\end{theorem}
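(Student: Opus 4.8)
The plan is to recognize the left-hand side as a triangular sample average of the type handled by Theorem \ref{thSLLNh} and to reduce the statement to a direct application of Theorem \ref{thSLLNS}. Concretely, I would set $g(v,y)=(SL(v))(y)$ for $v\in A$, $y\in\Omega_1$, so that $\overline{(SL(b_{k-1}))}_{n_k}(\wt{\chi}_k)=\frac{1}{n_k}\sum_{j=1}^{n_k}g(b_{k-1},\chi_{k,j})$. Writing $f(v):=\E_{\PQ(v)}(g(v,\cdot))$ and $h(v,y):=g(v,y)-f(v)$ as in theorems \ref{thSLLNS} and \ref{thSLLNh}, this average equals $\wh{b}_k+f(b_{k-1})$, so the whole task is to identify $f$ and to check that the moment bound $P_i$ of Theorem \ref{thSLLNS} is exactly the one supplied by Condition \ref{condPi}.

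The two key computations both rest on Condition \ref{condpqbllpq1} for $S$ in place of $Z$ together with the density identity (\ref{epr1ay}). First, since on $\{S=0\}$ both $S$ and $SL(v)$ vanish, applying (\ref{epr1ay}) with the set $\{S\neq 0\}$ gives $f(v)=\E_{\PQ(v)}(SL(v))=\E_{\PQ(v)}(\I(S\neq 0)SL(v))=\E_{\PQ_1}(\I(S\neq 0)S)=\E_{\PQ_1}(S)$; thus $f$ is constant in $v$, lies in $\R$ because $S\in L^1(\PQ_1)$, and equals the target value. Second, the same manipulation applied to $S^{2r}L(v)^{2r-1}$ yields $\E_{\PQ(v)}((SL(v))^{2r})=\E_{\PQ_1}(S^{2r}L(v)^{2r-1})$, so the quantity $P_i=\sup_{v\in K_{i-1}}\E_{\PQ(v)}(g(v,\cdot)^{2r})$ demanded by Theorem \ref{thSLLNS} coincides with the $P_i$ of Condition \ref{condPi}, whence (\ref{pini}) holds by hypothesis.

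With these identifications in hand, Theorem \ref{thSLLNS} shows that Condition \ref{condhLLN} is satisfied for $h(v,y)=(SL(v))(y)-\E_{\PQ_1}(S)$, and Theorem \ref{thSLLNh} then gives that a.s. $\wh{b}_k=\overline{(SL(b_{k-1}))}_{n_k}(\wt{\chi}_k)-\E_{\PQ_1}(S)\to 0$, which is precisely the assertion after relabeling. The structural hypotheses of those SLLNs, namely conditions \ref{condKi} and \ref{condChi} on the projection sets and on the conditional law of $\chi_{i,j}$ given $b_{i-1}$, are in force throughout this part of the section, so no further verification is needed. The only genuine bookkeeping is the handling of the exceptional set $\{S=0\}$ in the two change-of-measure steps and the observation that $f$ is constant; I expect this to be the main (and only mildly delicate) point, while everything else is a straightforward citation of the preceding laws of large numbers.
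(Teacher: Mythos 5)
Your proposal is correct and follows exactly the paper's route: the paper's proof is precisely the application of Theorem \ref{thSLLNS} with $g(v,y)=(SL(v))(y)$, for which $f(v)=\E_{\PQ_1}(S)$, followed by Theorem \ref{thSLLNh}. Your additional verification that the moment bound $P_i$ of Theorem \ref{thSLLNS} coincides with the one in Condition \ref{condPi} is the change-of-measure identity already written out inside that condition, so nothing further is needed.
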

\begin{proof}
This follows from Theorem \ref{thSLLNS} for $g(v,y)=(SL(v))(y)$, $v \in A$, $y\in \Omega_1$, in which $f(v)= \E_{\PQ_1}(S)$, $v\in A$,
as well as from Theorem \ref{thSLLNh}.
\end{proof}

For each $\overline{\R}$-valued random variable $Y$ on $\mc{S}_1$ and $q \geq 1$, let $||Y||_q=\E_{\PQ_1}(|Y|^q)^{\frac{1}{q}}$. 
\begin{lemma}\label{lemHold}
Let $p, q \in [1,\infty]$ be such that 
$\frac{1}{p}+\frac{1}{q}=1$, let $S \in L^{2rp}(\PQ_1)$, let Condition \ref{condpqbllpq1} hold for $Z=S$,
and let for 
\begin{equation}\label{ridef}
R_i=\sup_{v\in K_{i-1}}||\I(S\neq 0)L(v)^{2r-1}||_{q},\quad i \in \N_+,
\end{equation}
it hold
\begin{equation}\label{rini}
\sum_{i=1}^{\infty}\frac{R_i}{n_i^r} <\infty.
\end{equation}
Then, Condition \ref{condPi} holds. 
\end{lemma}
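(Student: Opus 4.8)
The plan is to reduce the claimed summability (\ref{pini}) to the assumed summability (\ref{rini}) through a single application of Hölder's inequality, after which everything is routine. The key observation is that $P_i$ and $R_i$ differ only by a factor that is constant in both $i$ and $v$, coming from the $L^{2rp}(\PQ_1)$-norm of $S$.

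First I would exploit the identity already recorded in Condition \ref{condPi}, namely $\E_{\PQ(v)}((SL(v))^{2r}) = \E_{\PQ_1}(S^{2r}L(v)^{2r-1})$, which follows from (\ref{epr1ay}) applied with $A=\{S\neq0\}$ and $L(v)=(\frac{d\PQ_1}{d\PQ(v)})_{S\neq0}$, the latter being guaranteed by Condition \ref{condpqbllpq1} for $Z=S$. Since $S^{2r}$ vanishes on $\{S=0\}$ and $L(v)$ is $\PQ_1$-a.e.\ nonnegative on $\{S\neq0\}$, I may insert the indicator and write
\begin{equation}
P_i = \sup_{v\in K_{i-1}} \E_{\PQ_1}(\I(S\neq0)S^{2r}L(v)^{2r-1}).
\end{equation}
Then I would apply Hölder's inequality under $\PQ_1$ with the conjugate exponents $p,q$, valid over the whole range $[1,\infty]$ including the endpoints $p=\infty,q=1$ and $p=1,q=\infty$, to the product of $S^{2r}$ and $\I(S\neq0)L(v)^{2r-1}$, giving
\begin{equation}
\E_{\PQ_1}(\I(S\neq0)S^{2r}L(v)^{2r-1}) \leq ||S^{2r}||_p\,||\I(S\neq0)L(v)^{2r-1}||_q.
\end{equation}

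The first factor is constant: a direct computation gives $||S^{2r}||_p = \E_{\PQ_1}(|S|^{2rp})^{1/p} = ||S||_{2rp}^{2r}$, which is finite precisely because $S\in L^{2rp}(\PQ_1)$ and is independent of $v$ and $i$. Taking the supremum over $v\in K_{i-1}$ of the second factor yields exactly $R_i$ as defined in (\ref{ridef}), so that $P_i \leq ||S||_{2rp}^{2r}\,R_i$. Summing and using (\ref{rini}) then gives
\begin{equation}
\sum_{i=1}^{\infty}\frac{P_i}{n_i^r} \leq ||S||_{2rp}^{2r}\sum_{i=1}^{\infty}\frac{R_i}{n_i^r} <\infty,
\end{equation}
which is (\ref{pini}), so Condition \ref{condPi} holds. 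There is no genuine obstacle here; the only points requiring care are the correct placement of the indicator $\I(S\neq0)$ so that the Radon--Nikodym derivative $L(v)$ (defined only on $\{S\neq0\}$) enters legitimately, and the verification that the Hölder step remains valid at the $L^\infty$ endpoints of the range of $p$ and $q$.
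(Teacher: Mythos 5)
Your proposal is correct and follows essentially the same route as the paper's proof: a single application of H\"{o}lder's inequality with exponents $p,q$ to obtain $P_i \leq ||S^{2r}||_p R_i$, after which (\ref{pini}) follows from (\ref{rini}). The extra care you take with the indicator $\I(S\neq 0)$ and the $L^\infty$ endpoints is sound but not a departure from the paper's argument.
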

\begin{proof}
From H\"{o}lder's inequality $\E_{\PQ_1}(S^{2r}L(v)^{2r-1}) \leq ||S^{2r}||_p||\I(S\neq 0)L(v)^{2r-1}||_q$, so that 
for $P_i$ as in (\ref{dip}) we have $P_i \leq ||S^{2r}||_pR_i$. 
Thus, from  (\ref{rini}), (\ref{pini}) holds for such $P_i$. 
\end{proof} 
The following uniform SLLN can be thought of as a multi-stage version of Theorem \ref{thStUnif} and some reasonings in its below proof
are analogous as in the proof of the latter in Theorem A1, Section 2.6 in \cite{rubinstein1993discrete}. 

\begin{theorem}\label{thSLLNEstUnif}
Let $V \subset \R^l$ be a nonempty compact set and let $h:\mc{S}_1\otimes \mc{S}(V)\rightarrow \mc{S}(\overline{\R})$ 
be such that for $\PQ_1$ a.e. 
$\omega \in \Omega_1$, $b\to h(\omega,b)$ is continuous. 
Let $Y(\omega)= \sup_{b \in V}|h(\omega,b)|$, $\omega \in \Omega_1$, 
and let Condition \ref{condPi} hold for $S=Y$. 
Then,  a.s. as $k\rightarrow \infty$, $b\in V\to\wh{a}_k(b):=\frac{1}{n_k}\sum_{i=1}^{n_k}h(\chi_{k,i},b)L(b_{k-1})(\chi_{k,i})$ 
converges uniformly to a continuous function $b \in V\rightarrow a(b):=\E_{\PQ_1}(h(\cdot,b)) \in \R$.
\end{theorem}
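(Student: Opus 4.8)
The plan is to mirror the proof of the single-stage uniform SLLN (Theorem \ref{thStUnif}), replacing each appeal to the ordinary SLLN by the multi-stage law in Theorem \ref{thSLLNc}, and to reduce the uniform statement to finitely many pointwise statements through a compactness argument. The observation that makes this possible is an inheritance of hypotheses: for any auxiliary $\mc{S}_1$-measurable $S'$ with $\{S'\neq 0\}\subset\{Y\neq 0\}$ and $|S'|\leq cY$ for some constant $c$, Condition \ref{condPi} holds for $S'$ whenever it holds for $S=Y$. Indeed, the density relation in Condition \ref{condpqbllpq1} restricts from $\{Y\neq 0\}$ to the smaller set $\{S'\neq 0\}$, and the moments $P_i$ are multiplied by at most $c^{2r}$, so the summability (\ref{pini}) is preserved.

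First I would record the properties of the limit. Since $|h(\cdot,b)|\leq Y$ and $Y\in L^1(\PQ_1)$ (part of Condition \ref{condPi}), the quantity $a(b)=\E_{\PQ_1}(h(\cdot,b))$ is finite, and applying Lebesgue's dominated convergence theorem along any sequence $b_m\to b$, using the a.s. continuity of $b\to h(\omega,b)$ and the dominating function $Y$, shows that $a$ is continuous on $V$. Next, for each fixed $b\in V$, Theorem \ref{thSLLNc} applied with $S=h(\cdot,b)$ (which satisfies Condition \ref{condPi} by the inheritance just noted) gives a.s. $\wh{a}_k(b)=\overline{(h(\cdot,b)L(b_{k-1}))}_{n_k}(\wt{\chi}_k)\to a(b)$.

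For the uniform upgrade I would introduce, for $b_0\in V$ and $\delta>0$, the oscillation $W_{b_0,\delta}(\omega)=\sup_{b\in V,\,|b-b_0|\leq\delta}|h(\omega,b)-h(\omega,b_0)|$, which is measurable (the supremum may be taken over a countable dense subset of the ball by a.s. continuity), satisfies $W_{b_0,\delta}\leq 2Y$, and tends to $0$ $\PQ_1$-a.e. as $\delta\downarrow 0$. Theorem \ref{thSLLNc} with $S=W_{b_0,\delta}$ yields a.s. $\overline{(W_{b_0,\delta}L(b_{k-1}))}_{n_k}(\wt{\chi}_k)\to\E_{\PQ_1}(W_{b_0,\delta})$, while dominated convergence (dominating function $2Y$) gives $\E_{\PQ_1}(W_{b_0,\delta})\to 0$ as $\delta\downarrow 0$. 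Since $L(b_{k-1})\geq 0$, for $|b-b_0|\leq\delta$ one has $|\wh{a}_k(b)-\wh{a}_k(b_0)|\leq\overline{(W_{b_0,\delta}L(b_{k-1}))}_{n_k}(\wt{\chi}_k)$ and $|a(b)-a(b_0)|\leq\E_{\PQ_1}(W_{b_0,\delta})$, so a triangle inequality bounds $|\wh{a}_k(b)-a(b)|$ by these two oscillation terms plus the pointwise error at $b_0$.

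Finally I would fix $\epsilon>0$, choose for each $b_0$ a radius $\delta_{b_0}$ with $\E_{\PQ_1}(W_{b_0,\delta_{b_0}})<\epsilon$, extract from the open cover $\{B(b_0,\delta_{b_0})\}_{b_0\in V}$ of the compact set $V$ a finite subcover centred at $b_1,\ldots,b_N$, and combine the finitely many a.s. statements above. This gives a.s. $\limsup_{k\to\infty}\sup_{b\in V}|\wh{a}_k(b)-a(b)|\leq 2\epsilon$, and intersecting over $\epsilon=1/m$, $m\in\N_+$, yields the asserted a.s. uniform convergence to the continuous limit $a$. The only delicate point is the oscillation control, namely ensuring that the random evaluation points $b_{k-1}$ do not spoil the estimate; this is exactly what Theorem \ref{thSLLNc} absorbs, since its hypothesis Condition \ref{condPi} bounds the relevant moments uniformly over $v\in K_{i-1}$, so that no further argument about the location of $b_{k-1}$ is needed beyond Condition \ref{condKi} already built into that theorem.
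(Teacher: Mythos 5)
Your proof is correct and follows essentially the same route as the paper's: continuity of $a$ via dominated convergence, oscillation functions dominated by $2Y$ controlled through Theorem \ref{thSLLNc}, a finite subcover of the compact set $V$, and pointwise convergence at the finitely many centres, again via Theorem \ref{thSLLNc}. The only cosmetic difference is that you bound $|a(b)-a(b_0)|$ directly by $\E_{\PQ_1}(W_{b_0,\delta})$ where the paper invokes the uniform continuity of $a$, and you make explicit the (correct) inheritance of Condition \ref{condPi} for dominated variables, which the paper uses implicitly.
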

\begin{proof}
Obviously,
\begin{equation}\label{hleqS}
|h(\omega,b)| \leq Y(\omega),\quad \omega \in \Omega_1,\ b \in V,
\end{equation}
and for each $b \in K_0$, for $P_1$ as in (\ref{dip}) for $S=Y$, 
\begin{equation}
\E_{\PQ_1}(Y) = \E_{\PQ(b)}(YL(b))\leq (\E_{\PQ(b)}((YL(b))^{2r}))^{\frac{1}{2r}}\leq P_1^{\frac{1}{2r}} <\infty. 
\end{equation}
Thus, for each $v \in V$ and $v_k \in V$, $k \in \N_+$, such that $\lim_{k\to\infty}v_k = v$, from Lebesgue's 
dominated convergence theorem and $\PQ_1$ a.s. continuity of $b\rightarrow h(\cdot,b)$,
\begin{equation}
\lim_{k\rightarrow \infty}a(v_k)=\E_{\PQ_1}(\lim_{k\rightarrow \infty}h(\cdot,v_k))=a(v) \in\R. 
\end{equation}
Thus, $a$ is finite and continuous on $V$.
Let $\epsilon >0$. From the uniform continuity of $a$ on $V$, let $\delta>0$ be such that 
\begin{equation}\label{axay}
|a(x)-a(y)|<\epsilon,\quad x,y \in V,\ |x-y|<\delta.
\end{equation}
For each $y \in V$ and $n \in \N_+$, let $B_{n,y}=\{x\in V: |x-y|\leq\frac{1}{n}\}$, and let for each $\omega \in \Omega_1$
\begin{equation}
r_{n,y}(\omega) =\sup\{|h(\omega,x) - h(\omega,y)|:x \in B_{n,y}\}. 
\end{equation}
For $\PQ_1$ a.e. $\omega$ for which $h(\omega,\cdot)$ is continuous, $\lim_{n\rightarrow \infty}r_{n,y}(\omega)=0$, $y \in V$. Furthermore, 
\begin{equation}\label{rk2s}
r_{n,y}(\omega) \leq 2Y(\omega), \quad \omega \in \Omega_1,\ n \in \N_+,\ y\in V,
\end{equation}
so that from Lebesgue's dominated convergence theorem
\begin{equation}\label{lkepq}
\lim_{n\rightarrow \infty}\E_{\PQ_1}(r_{n,y})=\E_{\PQ_1}(\lim_{n\rightarrow \infty}r_{n,y})=0, \quad y\in V.
\end{equation}
Thus, for each $y \in V$ there exists $n_y\in \N_+$, $n_y >\frac{1}{\delta}$, such that 
\begin{equation}\label{qrne6}
\E_{\PQ_1}(r_{n_y,y}) <\epsilon,
\end{equation}
for which let us denote $W_y=B_{n_y,y}$. For each $x, y \in V$
\begin{equation}
|\wh{a}_k(x)- \wh{a}_k(y)|\leq\frac{1}{n_k}\sum_{i=1}^{n_k}L(b_{k-1})(\chi_{k,i})|h(\chi_{k,i},x) - h(\chi_{k,i},y)|,
\end{equation}
so that for each $y \in V$
\begin{equation}\label{wwka}
\sup_{x \in W_y}|\wh{a}_k(x)- \wh{a}_k(y)|\leq \frac{1}{n_k}\sum_{i=1}^{n_k}L(b_{k-1})(\chi_{k,i})r_{n_y,y}(\chi_{k,i}).
\end{equation}
From (\ref{rk2s}), Condition \ref{condPi} holds for $S=r_{n_y,y}$, so that
from Theorem \ref{thSLLNc}, the right-hand side of (\ref{wwka}) converges a.s. to $\E_{\PQ_1}(r_{n_y,y})$ as $k\rightarrow \infty$. 
Thus, from (\ref{qrne6}), for each $y \in V$, a.s. for a sufficiently large $k$, 
\begin{equation}
\sup_{x \in W_y}|\wh{a}_k(x)- \wh{a}_k(y)|<\epsilon.
\end{equation}
The family $\{W_y, y \in V\}$ is a cover of $V$. 
From the compactness of $V$ there exists a finite set of points $y_1, \ldots, y_m \in V$ 
such that $\{W_{y_i}: i=1,\ldots,m\}$ is a cover $V$, and a.s. for a sufficiently large $k$ we have 
\begin{equation}\label {akwakv}
\sup_{x \in W_{y_i}}|\wh{a}_k(x)- \wh{a}_k(y_i)|<\epsilon,\quad i =1,\ldots,m.
\end{equation}
From (\ref{hleqS}), for each $x \in V$, Condition \ref{condPi} holds for $S=h(\cdot,x)$, so that
from  Theorem \ref{thSLLNc}, for each $x \in V$, 
a.s. $\lim_{k\rightarrow \infty}\wh{a}_k(x)=a(x)$. Thus, a.s. for a sufficiently large $k$
\begin{equation}\label{akviakvj}
|\wh{a}_k(y_i) -  a(y_i)|<\epsilon, \quad i=1,\ldots,m.
\end{equation}
Therefore, a.s. for a sufficiently large $k$ for which (\ref{akwakv}) and (\ref{akviakvj}) hold, 
for each $x \in V$, for some $i \in \{1,\ldots,m\}$ such that $|y_i -x|<\delta$,  
\begin{equation}
|\wh{a}_k(x)- a(x)|\leq |\wh{a}_k(x)- \wh{a}_k(y_i)| + |\wh{a}_k(y_i)- a(y_i)|+ |a(y_i)-a(x)| < 3\epsilon.
\end{equation}
\end{proof}

\section{\label{secUnifEst}Locally uniform convergence of estimators} 
In this section we apply the SLLNs from the previous section 
to provide sufficient conditions for the single- and multi-stage a.s. locally uniform 
convergence of various estimators from Section \ref{secEstMin} as well as their derivatives to the corresponding functions and their 
derivatives. Such a convergence will be needed 
when proving the convergence and asymptotic properties of the minimization results of these estimators in the further sections. 
By $\rightrightarrows$ we denote uniform convergence.
For some $A \subset \R^l$, we say that functions $f_n:A\to \R$, $n \in \N_+$,
converge locally uniformly to some function $f:A\to \R$, which we denote as $f_n \overset{loc}{\rightrightarrows} f$, if for each compact set 
$K\subset A$, $f_{n|K}\rightrightarrows f_{|K}$, i.e. $f_n$ converges to $f$ uniformly on $K$.
\begin{lemma}\label{thCompact} 
Let $l,m \in \N_+$, let $D\subset \R^l$ be nonempty and compact, let functions $f:D\rightarrow \R^m$ and $s: \R^m \rightarrow \R$ be continuous,  
and for some $f_n:D\to\R^m$, $n \in\N_+$, let $f_n \rightrightarrows f$. Then, $s(f_n) \rightrightarrows s(f)$. 
If further $s_n:\R^m\to\R$, $n \in \N_+$, are such that  $s_n\overset{loc}{\rightrightarrows}s$, then $s_n(f_n)\rightrightarrows s(f)$. 
\end{lemma}
\begin{proof}
For $M= \sup_{x\in D}|f(x)|<\infty$ let $K=\overline{B}_l(0,M+1)$,
and let $\epsilon >0$. Since $s$ is uniformly continuous on $K$, let us choose 
$0<\delta<1 $ such that 
$|s(x)-s(y)|<\epsilon$ when $|x-y| <\delta$, $x, y \in K$. 
Let $N \in \N_+$ be such that 
for $n \geq N$, $|f_n(x)-f(x)|< \delta$, $x \in D$. Then, for $n\geq N$ we have  $|s(f_n(x)) - s(f(x))|<\epsilon$, $x \in D$. 
Let further $M \in \N_+$, $M \geq N$, be such that for $n \geq M$, $|s_n(y)-s(y)|<\epsilon$, $y \in K$. Then, for $n \geq M$
and $x \in D$
\begin{equation}
|s_n(f_n(x))-s(f(x))| \leq |s_n(f_n(x))- s(f_n(x))|+|s(f_n(x))  -s(f(x))|< 2\epsilon. 
\end{equation}
\end{proof}

Until dealing with the cross-entropy estimators at the end of this section, 
we shall consider 
the LETS setting. Similarly as in Section \ref{secDiff}, 
this will allow us to cover the special case of the LETGS setting
and it is straightforward to modify the below theory 
to deal with the ECM setting for $A=\R^l$. 

\begin{theorem}\label{thunifDiff}
Assuming Condition \ref{condKappa}, if Condition \ref{condlemYmore} holds
\begin{enumerate}
\item for $S=1$, then  a.s. (as $n\to \infty$)
$b\to\overline{(L'\partial_v(L^{-1})(b))}_n(\wt{\kappa}_n)$ converges locally uniformly 
to $0$ for $v \in \N^l\setminus \{0\}$ and to $1$ for $v=0$, 
\item for $S=Z^2$, then a.s.
$b\to\partial_v\wh{\msq}_n(b',b)(\wt{\kappa}_n)= \overline{(Z^2L'\partial_vL(b))}_n(\wt{\kappa}_n)\overset{loc}{\rightrightarrows}\partial_v\msq$, $v \in \N^l$,
\item for $S=C$, then a.s. $b\to\partial_v \wh{c}_n(b',b)(\wt{\kappa}_n)\overset{loc}{\rightrightarrows}\partial_v c$, $v \in \N^l$,
\item both for $S$ equal to $Z^2$ and $1$, then
a.s. $b\to\partial_v \wh{\msq2}_{n}(b',b)(\wt{\kappa}_k)\overset{loc}{\rightrightarrows}\partial_v\msq$ and
$b\to\partial_v \wh{\var}_n(b',b)(\wt{\kappa}_n)\overset{loc}{\rightrightarrows}\partial_v\var$, $v \in \N^l$,
\item for $S=C$, $S=Z^2$, and $S=1$, then a.s. $b\to \partial_v \wh{\ic}_n(b',b)(\wt{\kappa}_n)\overset{loc}{\rightrightarrows}\partial_v \ic$, $v \in \N^l$. 
\end{enumerate}
\end{theorem}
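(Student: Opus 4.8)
The plan is to express each estimator and each of its derivatives appearing in the statement as a sample average $\frac1n\sum_{i=1}^n h(\kappa_i,b)$ over the i.i.d.\ family $\kappa_1,\kappa_2,\ldots\sim\PQ'$ supplied by Condition \ref{condKappa}, and then to apply the uniform SLLN of Theorem \ref{thStUnif} on an arbitrary nonempty compact set $V\subset\R^l$; as $V$ is arbitrary, this yields locally uniform convergence. For a fixed such $V$ the two hypotheses of Theorem \ref{thStUnif} must be verified for the corresponding summand $h$: the a.s.\ continuity of $b\to h(\kappa,b)$ and the bound $\E_{\PQ'}(\sup_{b\in V}|h(\kappa,b)|)<\infty$. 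Continuity holds for every $\omega$ since, by (\ref{lnlbLETS}), $b\to L(b)(\omega)$ is smooth (constant equal to $\epsilon$ on $\{\tau=\infty\}$ and $\exp(U(b)+Hb)$ on $\{\tau<\infty\}$), hence so are all the $b$-derivatives $\partial_vL(b)$ and $\partial_v(L^{-1})(b)$ entering the summands; moreover each displayed formula for a derivative of an estimator is merely term-by-term differentiation of a finite sum and requires no separate argument.

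For the first three points I would take $h$ equal to $L'\partial_v(L^{-1})(b)$, $Z^2L'\partial_vL(b)$, and $\I(C\neq\infty)CL'\partial_v(L^{-1})(b)$ respectively. In each case the summand vanishes on $\{\tau=\infty\}$ --- by Condition \ref{condBound2} for the second and third points (which via Remark \ref{condImplzntau} forces $Z^2$, resp.\ $C$, to vanish there), and by the structure of $\partial_v(L^{-1})$ for $v\neq0$ in the first point --- so the change-of-measure formula (\ref{epr1ay}) on $\{\tau<\infty\}$ reduces the limit $\E_{\PQ'}(h)$ to an expectation of the unweighted integrand under $\PQ_1$ (equivalently $\PU$). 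These are identified as $0$ (resp.\ $1$ for $v=0$, using (\ref{cons1})), $\partial_v\msq$, and $\partial_vc$ through the corresponding points of Theorem \ref{thDiff}. The required supremum bound is obtained in every case by factoring the fixed nonnegative $L'$ out of the supremum, passing to $\PQ_1$ by change of measure, and invoking Theorem \ref{thYmore} together with Remark \ref{remCondUN}: with $S=1$, $Z^2$, $C$ respectively (each satisfying Condition \ref{condlemYmore} by hypothesis) and the appropriate $p_1,p_2,v$, Remark \ref{remCondUN} gives $\E_{\PQ_1}(\sup_{b\in V}|S\,\partial_v L^{\pm1}(b)|)<\infty$.

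For the last two points I would reduce to the convergences just established via the multi-index Leibniz rule. From (\ref{msq2ndef}), $\wh{\msq2}_n=\wh{\msq}_n\,\overline{(L'/L(b))}_n$, and $\overline{(L'/L(b))}_n$ is the $v=0$ instance of the point-one summand; hence $\partial_v\wh{\msq2}_n$ is a finite sum of products $\partial_w\wh{\msq}_n\cdot\partial_{v-w}\overline{(L'/L(b))}_n$. Since products of sequences converging locally uniformly to continuous limits again converge locally uniformly (each factor being uniformly bounded on compacta in the limit; cf.\ Lemma \ref{thCompact} with $s(x,y)=xy$), and since $\partial_{v-w}\overline{(L'/L(b))}_n\to0$ for $v\neq w$ while the $w=v$ term tends to $\partial_v\msq\cdot1$, one obtains $\partial_v\wh{\msq2}_n\overset{loc}{\rightrightarrows}\partial_v\msq$. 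The convergence for $\wh{\var}_n$ then follows from (\ref{varest}), as $\overline{(ZL')}_n\to\alpha$ is $b$-independent and $\frac{n}{n-1}\to1$, and the convergence for $\wh{\ic}_n=\wh{c}_n\wh{\var}_n$ from (\ref{icdef}) follows by the same product argument from points three and four.

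The main obstacle I anticipate is the uniform-in-$b$ supremum integrability: this is precisely where the explicit dominating functions produced by Theorem \ref{thzb} and Theorem \ref{thYmore} are indispensable, and one must correctly match the parameters $S,p_1,p_2,v$ and execute the change of measure from $\PQ'$ to $\PQ_1$ so that Remark \ref{remCondUN} applies. By comparison, the Leibniz bookkeeping and the local-boundedness argument underlying the last two points are routine.
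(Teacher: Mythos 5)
Your proposal is correct and follows essentially the same route as the paper: the first three points via the uniform SLLN of Theorem \ref{thStUnif} with the dominating functions supplied by Theorem \ref{thYmore} and Remark \ref{remCondUN} (taking $p_2=1$ and the appropriate $p_1$) and with limits identified through Theorem \ref{thDiff}, and the last two points by the Leibniz/product decomposition of $\wh{\msq2}_n=\wh{\msq}_n\,\overline{(L'/L(b))}_n$, the last line of (\ref{varest}), (\ref{icdef}), and the product-of-uniformly-convergent-sequences argument that the paper packages as Lemma \ref{thCompact}. The only cosmetic difference is that you spell out the multi-index Leibniz bookkeeping and the vanishing of the cross terms $\partial_{v-w}\overline{(L'/L(b))}_n\to 0$ explicitly, which the paper leaves implicit.
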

\begin{proof}
The first three points follow from such points of Theorem \ref{thDiff}, 
Theorem \ref{thYmore} for $p_2=1$ and appropriate $p_1$, Remark \ref{remCondUN},
and from Theorem \ref{thStUnif} (note that from Condition \ref{condlemYmore} for $S=C$ we have such a condition for $S=\I(C\neq \infty)C$).
The fourth point follows from the first two points, the fact that a.s. 
$\overline{(ZL')}_n(\wt{\kappa}_n)\to \alpha$, the last line in  (\ref{varest}), (\ref{msq2ndef}), and Lemma \ref{thCompact}. 
The fifth point follows from points three, four, and Lemma \ref{thCompact}. 
\end{proof}

Let us further in this section assume the following condition. 
\begin{condition}\label{condLiKi}
$A=\R^l$, $r \in \N_+$, for each $i \in \N_+$, $n_i\in \N_+$, and for each
$i \in \N$, $L_i \in [0,\infty)$ and
$K_i= \{b \in \R^l: |b|\leq L_i\}$. 
\end{condition}
Consider the following conditions.
\begin{condition}\label{condLi}
For each $a_1,a_2\in \R_+$
\begin{equation}\label{bnini}
\sum_{i=1}^{\infty} \frac{\exp(a_1\wt{F}(a_2L_{i-1}))}{n_i^r}<\infty.
\end{equation}
\end{condition}
\begin{condition}\label{condLinf}
$\lim_{i\to \infty}L_i =\infty$.
\end{condition}
\begin{remark}\label{remniLi}
Let us discuss possible choices of $n_i$ and $L_i$ such that conditions \ref{condLi} and \ref{condLinf} 
hold for each $r \in \N_+$, in some special cases of the LETS setting. Let $A_1\in \N$, $A_2\in\N_+$,  $m\in \N_2$, $0<\delta<1$, and $B_1, B_2 \in \R_+$. 
Consider $\wt{F}(x)=\frac{x^2}{2}$, which corresponds to $\wt{X}$ having multivariate standard normal distribution under $\wt{\PU}_1$
(see sections \ref{secECM} and \ref{secLETS}).  
Then, one can take $n_i=A_1 +A_2m^i$ and  $L_i=(B_1+ B_2 (i+1)^{1-\delta})^{\frac{1}{2}}$, or alternatively  
$n_i=A_1 + A_2 i!$ and $L_i=(B_1+ B_2 (i+1))^{\frac{1}{2}}$.
For some $a_1,a_2 \in \R_+$, denoting $b_i=\frac{\exp(a_1\wt{F}(a_2L_{i-1}))}{n_i^r}$, $i \in \N_+$,
in the first case we have $\lim _{i\to \infty}b_i^{\frac{1}{i}}=\frac{1}{m^r}<1$
and in the second case, using Stirling's formula, we have $\lim _{i\to \infty}b_i^{\frac{1}{i}}=0$. Thus, in both cases
(\ref{bnini}) follows from Cauchy's criterion. 
For $\wt{F}(x)=\mu_0(\exp(x)-1)$, which corresponds to the Poisson case with initial mean $\mu_0$, one can take e.g. 
$L_i= B_1\ln(B_2+\ln(i+1))$ and some $n_i$ as for the normal case above. 
\end{remark}
\begin{lemma}\label{lemIneqL} 
If conditions \ref{condBound1} and \ref{condBound2}  hold, then for each $p \in [0,\infty)$ and $b \in \R^l$
\begin{equation}\label{inequqq1}
\E_{\PQ_1}(\I(S\neq0)L(b)^p)\leq\exp(s(p\wt{F}(R|b|)+\wt{F}(Rp|b|))).\\ 
\end{equation}
In particular, if $p\geq 1$ then
\begin{equation}\label{inequqq2}
\E_{\PQ_1}(\I(S\neq0)L(b)^p)\leq\exp(2ps\wt{F}(Rp|b|))).\\ 
\end{equation}
\end{lemma}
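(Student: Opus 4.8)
The plan is to push everything onto $\PU$ (recall that in this setting $\PQ_1=\PU_{|\tau}$, so $\E_{\PQ_1}$ and $\E_{\PU}$ agree on the $\mc{F}_\tau$-measurable integrand) and to exploit that, by Condition \ref{condBound2}, the event $\{S\neq 0\}$ is contained in $\{\tau\leq s\}\subset\{\tau<\infty\}$. On $\{S\neq 0\}$ the term $\epsilon\I(\tau=\infty)$ in (\ref{Leps}) therefore vanishes, and (\ref{lnlbLETS}) lets me write the genuine identity of random variables
\begin{equation}
\I(S\neq 0)L(b)^p = \I(S\neq 0)\prod_{k=0}^{\tau-1}\exp\bigl(p\wt{\Psi}(\Lambda_kb) - p(\Lambda_kb)^T\wt{X}(\eta_{k+1})\bigr).
\end{equation}
First I would split each factor into a deterministic part and a mean-one fluctuation. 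Setting $\theta_k=-p\Lambda_kb$, which is $\mc{F}_k$-measurable, and $R_k=\exp(\theta_k^T\wt{X}(\eta_{k+1})-\wt{\Psi}(\theta_k))$, the identity $\E_{\wt{\PU}_1}(\exp(\theta^T\wt{X}))=\exp(\wt{\Psi}(\theta))$, valid for every $\theta\in\R^d$ because $B=\R^d$, gives $\E_{\PU}(R_k\mid\mc{F}_k)=1$ since $\eta_{k+1}$ is independent of $\mc{F}_k$ with law $\wt{\PU}_1$.

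The central device is the process $Z_n=\prod_{k=0}^{(n\wedge\tau)-1}R_k$, $n\in\N$. Using that $\{k<\tau\}\in\mc{F}_k$ and that $\theta_k$ is $\mc{F}_k$-measurable, I would verify by a "take out what is known" (conditional Tonelli) argument on nonnegative variables that $(Z_n)$ is a nonnegative $(\mc{F}_n)$-martingale with $\E_{\PU}(Z_n)=1$ for every $n$. This is the step that defuses the main obstacle: the indicator $\I(S\neq 0)$ is $\mc{F}_\tau$-measurable but in general not $\mc{F}_k$-measurable, so it cannot be carried through a step-by-step conditioning argument, and likewise the bound $R$ from Condition \ref{condBound1} is only available on $\{S\neq 0\}$ rather than adaptedly. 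Absorbing all of the randomness into the adapted martingale $Z_n$ lets me postpone the use of $\I(S\neq 0)\leq 1$ to the very end.

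Then, on $\{S\neq 0\}$, where $\tau\leq s$ forces $Z_\tau=Z_s$, factoring off the deterministic part gives
\begin{equation}
\I(S\neq 0)L(b)^p = \I(S\neq 0)\,Z_s\prod_{k=0}^{\tau-1}\exp\bigl(p\wt{\Psi}(\Lambda_kb)+\wt{\Psi}(\theta_k)\bigr).
\end{equation}
Here Condition \ref{condBound1} enters through (\ref{Rlambda}): on $\{S\neq 0\}$ and for $0\leq k<\tau$ one has $||\Lambda_k||_{\infty}\leq R$, hence $|\Lambda_kb|\leq R|b|$ and $|\theta_k|\leq Rp|b|$, so by the definition (\ref{fadef}) of $\wt{F}$ (which is nondecreasing) $|\wt{\Psi}(\Lambda_kb)|\leq\wt{F}(R|b|)$ and $|\wt{\Psi}(\theta_k)|\leq\wt{F}(Rp|b|)$. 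Since $\tau\leq s$, the product is bounded by $\exp(s(p\wt{F}(R|b|)+\wt{F}(Rp|b|)))$. Taking $\E_{\PU}$, using $\I(S\neq 0)\leq 1$ and $Z_s\geq 0$, and invoking $\E_{\PU}(Z_s)=1$ yields (\ref{inequqq1}). Finally, for $p\geq 1$ the monotonicity of $\wt{F}$ gives $\wt{F}(R|b|)\leq\wt{F}(Rp|b|)$ and $p+1\leq 2p$, so $p\wt{F}(R|b|)+\wt{F}(Rp|b|)\leq 2p\wt{F}(Rp|b|)$, which gives (\ref{inequqq2}).
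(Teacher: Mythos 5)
Your proof is correct and is essentially the paper's argument in disguise: your martingale $Z_s$ coincides on $\{S\neq 0\}$ with the paper's likelihood ratio $\frac{1}{L(-pb)}$, your factorization of $\I(S\neq 0)L(b)^p$ into $Z_s$ times the deterministic product of $\exp(p\wt{\Psi}(\Lambda_k b)+\wt{\Psi}(-p\Lambda_k b))$ terms is the paper's identity $L(b)^p=\exp(pU(b)+U(-pb))\frac{1}{L(-pb)}$, and your step-by-step verification that $\E_{\PU}(Z_s)=1$ replaces the paper's one-line observation that $\E_{\PQ_1}(\I(S\neq0)\frac{1}{L(-pb)})=\PQ(-pb)(S\neq0)\leq 1$. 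Both routes then use the same $\wt{F}$-bounds and the same monotonicity argument for the case $p\geq 1$.
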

\begin{proof}
From (\ref{lnlbLETS}) we have $\PQ_1$ a.s. that if $S\neq 0$ (and thus from Condition \ref{condBound2}, $\tau\leq s$) then 
\begin{equation}
\begin{split}
L(b)^p&=\exp(p(U(b) +Hb))\\
&= \exp(pU(b)+U(-bp))\frac{1}{L(-bp)}\\
&\leq\exp(s(p\wt{F}(R|b|)+\wt{F}(Rp|b|)))\frac{1}{L(-bp)}.\\
\end{split}
\end{equation}
Now (\ref{inequqq1}) follows from 
$\E_{\PQ_1}(\I(S\neq0)\frac{1}{L(-bp)}) =\PQ(-bp)(S\neq0)\leq 1$. 
\end{proof}

\begin{condition}\label{condSpr}
Conditions \ref{condBound1} and \ref{condBound2} hold and for some $p\in(1,\infty)$, $S \in L^{2rp}(\PQ_1)$.
\end{condition}

\begin{theorem}\label{thSLLNsp}
If conditions \ref{condLi} and \ref{condSpr} hold, then 
Condition \ref{condPi} holds (for the same $S$, $p$, $r$, $K_i$, and $n_i$ as in these conditions and Condition \ref{condLiKi}). 
\end{theorem}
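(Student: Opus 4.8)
The plan is to deduce the statement directly from Lemma~\ref{lemHold}. Given the exponent $p \in (1,\infty)$ supplied by Condition~\ref{condSpr}, I would set $q = \frac{p}{p-1} \in (1,\infty)$, so that $\frac{1}{p}+\frac{1}{q}=1$. Lemma~\ref{lemHold} then reduces Condition~\ref{condPi} to three requirements: that $S \in L^{2rp}(\PQ_1)$, that Condition~\ref{condpqbllpq1} hold for $Z=S$, and that $\sum_{i=1}^{\infty} R_i/n_i^r < \infty$ for $R_i = \sup_{v \in K_{i-1}}||\I(S\neq 0)L(v)^{2r-1}||_{q}$. The first is immediate from Condition~\ref{condSpr}, so the work lies in the other two.

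For Condition~\ref{condpqbllpq1} with $Z=S$, I would simply chain the earlier remarks. Condition~\ref{condSpr} includes Condition~\ref{condBound2}, which by Remark~\ref{condImplzntau} gives Condition~\ref{condzntau} for $Z=S$; by Remark~\ref{remztauequiv} this is equivalent to Condition~\ref{condZB} with $B_1=\{\tau<\infty\}$; and by Remark~\ref{rempqbznpq1} the latter implies Condition~\ref{condpqbllpq1} for $Z=S$. Hence this hypothesis holds for free.

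The substantive estimate is the summability of $R_i/n_i^r$. Since $L(v)>0$ (the LETS density satisfies Condition~\ref{condg0}) and $\I(S\neq0)^q=\I(S\neq0)$, I have $||\I(S\neq0)L(v)^{2r-1}||_q = \left(\E_{\PQ_1}(\I(S\neq0)L(v)^{(2r-1)q})\right)^{1/q}$. Applying Lemma~\ref{lemIneqL} with exponent $p'=(2r-1)q$ bounds the inner expectation by $\exp\!\big(s((2r-1)q\,\wt{F}(R|v|)+\wt{F}(R(2r-1)q|v|))\big)$, whence
\begin{equation}
||\I(S\neq0)L(v)^{2r-1}||_q \leq \exp\!\Big(s(2r-1)\wt{F}(R|v|)+\tfrac{s}{q}\wt{F}(R(2r-1)q|v|)\Big).
\end{equation}
For $v \in K_{i-1}$ we have $|v|\leq L_{i-1}$ by Condition~\ref{condLiKi}, and since $\wt{F}$ is nonnegative and nondecreasing (being a supremum of $|\wt{\Psi}|$ over an enlarging ball) and $(2r-1)q\geq 1$, both terms in the exponent are dominated by $\wt{F}(a_2 L_{i-1})$ with $a_2 = R(2r-1)q+1$. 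Thus $R_i \leq \exp(a_1\wt{F}(a_2 L_{i-1}))$ with $a_1 = s(2r-1)+\tfrac{s}{q}$, and Condition~\ref{condLi} applied with these $a_1,a_2 \in \R_+$ yields $\sum_{i=1}^{\infty} R_i/n_i^r < \infty$. This verifies the last hypothesis of Lemma~\ref{lemHold}, giving Condition~\ref{condPi}.

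The argument is essentially bookkeeping, and I expect the only delicate points to be marshalling the correct chain of remarks for Condition~\ref{condpqbllpq1} and collapsing the two distinct exponential factors produced by Lemma~\ref{lemIneqL} into the single form $\exp(a_1\wt{F}(a_2 L_{i-1}))$ demanded by Condition~\ref{condLi}, which hinges on the monotonicity of $\wt{F}$.
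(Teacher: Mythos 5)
Your proposal is correct and follows essentially the same route as the paper: reduce to Lemma~\ref{lemHold} and bound $R_i$ via Lemma~\ref{lemIneqL}, then invoke Condition~\ref{condLi}. The only (harmless) differences are that you work from the first inequality (\ref{inequqq1}) and collapse the two exponential factors yourself using the monotonicity of $\wt{F}$, where the paper simply cites the pre-collapsed bound (\ref{inequqq2}) with $a_1=2s(2r-1)$, $a_2=Rq(2r-1)$, and that you spell out the verification of Condition~\ref{condpqbllpq1} for $Z=S$, which the paper leaves implicit.
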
 
\begin{proof} 
From Lemma \ref{lemHold} it is sufficient to check that for $q$ as in that lemma 
corresponding to $p$ from Condition \ref{condSpr}, and
for $R_i$ as in (\ref{ridef}), we have (\ref{rini}). 
From (\ref{inequqq2}) in Lemma \ref{lemIneqL}, it holds 
\begin{equation} 
\begin{split} 
R_i&=\sup_{b\in K_{i-1}}||\I(S\neq 0)L(b)^{2r-1}||_{q} \\ 
&\leq\sup_{b \in K_{i-1}}\exp(2s(2r-1)\wt{F}(Rq(2r-1)|b|))\\ 
&\leq \exp(2s(2r-1)\wt{F}(Rq(2r-1)L_{i-1})),\\ 
\end{split} 
\end{equation} 
so that (\ref{rini}) follows from Condition \ref{condLi} for $a_1= 2s(2r-1)$ and $a_2= Rq(2r-1)$. 
\end{proof} 

\begin{theorem}\label{thUnifMult} 
If conditions \ref{condKi}, \ref{condChi}, and \ref{condLi} hold
and Condition \ref{condSpr} holds for $S=U$ (that is for $S$ denoted as $U$), then for each $w \in\R$ and $v \in \N^l$, 
a.s. as $k\rightarrow \infty$, 
$b\in \R^l\to\wh{f}_k(b):=\overline{(UL(b_{k-1})\partial_v(L(b)^w))}_{n_k}(\wt{\chi}_k)$ 
converges locally uniformly to $b\in\R^l\to f(b):=\E_{\PQ_1}(U\partial_v(L(b)^w))\in\R$. 
\end{theorem}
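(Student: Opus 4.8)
The plan is to fix an arbitrary closed ball, recognise $\wh f_k$ as an instance of the multi-stage uniform strong law Theorem \ref{thSLLNEstUnif}, and then upgrade uniform convergence on each ball to locally uniform convergence on $\R^l$ by a countable exhaustion. So first I would fix $M\in\N_+$, set $V=\{b\in\R^l:|b|\leq M\}$, and define $h(\omega,b)=U(\omega)\partial_v(L(b)^w)(\omega)$. With this choice the estimator in Theorem \ref{thSLLNEstUnif} is exactly $\wh a_k(b)=\frac1{n_k}\sum_{i=1}^{n_k}h(\chi_{k,i},b)L(b_{k-1})(\chi_{k,i})=\wh f_k(b)$ and its limit is $a(b)=\E_{\PQ_1}(h(\cdot,b))=f(b)$. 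Since in the LETS setting $b\to L(b)(\omega)$ is, by (\ref{lnlbLETS}), a strictly positive smooth function of $b$ for every $\omega$ (using smoothness of $\wt\Psi$ on $\R^d$), the map $b\to\partial_v(L(b)^w)(\omega)$, and hence $b\to h(\omega,b)$, is continuous, which is the first hypothesis of Theorem \ref{thSLLNEstUnif}.

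The core of the work is the remaining hypothesis: Condition \ref{condPi} for $S=Y$, where $Y(\omega)=\sup_{b\in V}|h(\omega,b)|=|U(\omega)|\sup_{b\in V}|\partial_v(L(b)^w)(\omega)|$. I would derive it from Theorem \ref{thSLLNsp}; as Condition \ref{condLi} is assumed and Condition \ref{condLiKi} is in force, it suffices to check Condition \ref{condSpr} for $S=Y$. Since $\{Y\neq0\}\subset\{U\neq0\}$, conditions \ref{condBound1} and \ref{condBound2} hold for $Y$ with the same $R$ and $s$ as for $U$. It remains to produce $p'\in(1,\infty)$ with $Y\in L^{2rp'}(\PQ_1)$.

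For this I would fix $p'\in(1,p)$, with $p$ as in Condition \ref{condSpr} for $U$, and apply Theorem \ref{thYmore} with $t=1$, $p_1=w$, $p_2=2rp'$ (which is $\geq1$), $v_1=v$ and $S=U$. This yields Condition \ref{condUN} for $f=\I(U\neq0)|\partial_v(L^w)|^{2rp'}$, i.e. for each $M$ a $\PU$-a.s. domination $\sup_{|b|\leq M}\I(U\neq0)|\partial_v(L(b)^w)|^{2rp'}\leq\sum_{i=1}^N u_i\exp(\sum_{j=1}^s\phi_{i,j}^T\wt X(\eta_j))$. Consequently $Y^{2rp'}\leq|U|^{2rp'}\sum_{i=1}^N u_i\exp(\sum_{j=1}^s\phi_{i,j}^T\wt X(\eta_j))$ $\PU$-a.s.; taking $\E_{\PU}$ (which coincides with $\E_{\PQ_1}$ on the $\mc{S}_1$-measurable integrand $Y^{2rp'}$) and bounding each summand by Hölder's inequality with conjugate exponents $p/p'$ and $p/(p-p')$ gives
\begin{equation}
\E_{\PQ_1}(Y^{2rp'})\leq\sum_{i=1}^N u_i\,\big(\E_{\PU}(|U|^{2rp})\big)^{p'/p}\exp\Big(\frac{p-p'}{p}\sum_{j=1}^s\wt\Psi\big(\frac{p}{p-p'}\phi_{i,j}\big)\Big)<\infty,
\end{equation}
where $\E_{\PU}(|U|^{2rp})<\infty$ by Condition \ref{condSpr} for $U$ and the exponential factors are finite because $\wt\Psi<\infty$ on $\R^d$. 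Thus $Y\in L^{2rp'}(\PQ_1)$, so Condition \ref{condSpr} and then, by Theorem \ref{thSLLNsp}, Condition \ref{condPi} hold for $S=Y$.

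Theorem \ref{thSLLNEstUnif} then gives that almost surely $b\in V\to\wh f_k(b)$ converges uniformly to the finite continuous function $b\in V\to f(b)$. Intersecting over $M\in\N_+$ the corresponding full-probability events and noting that every compact $K\subset\R^l$ lies in some ball $V$, I obtain the asserted a.s. locally uniform convergence. I expect the integrability step to be the main obstacle: the whole difficulty is to find one $p'>1$ for which $Y\in L^{2rp'}$, which forces the combination of the exponential domination of $\sup_{b\in V}|\partial_v(L(b)^w)|$ from Theorem \ref{thYmore} with the finite $L^{2rp}$-moment of $U$ through the Hölder splitting above; the transfer of conditions \ref{condBound1}, \ref{condBound2} and the continuity check are comparatively routine.
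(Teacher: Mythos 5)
Your proposal is correct and follows essentially the same route as the paper's proof: fix a ball $V$, reduce to Theorem \ref{thSLLNEstUnif} with $W=\sup_{b\in V}|U\partial_v(L(b)^w)|$, and verify Condition \ref{condSpr} for $W$ with exponent $p'\in(1,p)$ via the exponential domination from Theorem \ref{thYmore} combined with a H\"older split against $\E_{\PU}(|U|^{2rp})$. The only cosmetic difference is that you carry out the H\"older estimate explicitly, whereas the paper packages it through Remark \ref{remHold} (applied to $S=U^{2rp'}$ with $q=p/p'$) and Remark \ref{remCondUN}.
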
 
\begin{proof} 
Let $M\in \R_+$, $V = \{x \in\R^l: |x|\leq M\}$, 
$h(\omega,b)= U(\omega)\partial_v(L(b)^w)(\omega)$, $\omega \in \Omega_1$, $b \in V$, 
and 
$W(\omega)=\sup_{b \in V} |h(\omega,b)|$, $\omega \in \Omega_1$. 
For some $1<p'<p$, from 
Remark \ref{remHold} for $S=U^{2rp'}$ and $q = \frac{p}{p'}$, 
Condition \ref{condlemYmore} holds for such an $S$. Thus, from
Theorem \ref{thYmore} for $p_1=w$ and $p_2=2rp'$ and from Remark \ref{remCondUN}, we have 
\begin{equation}
\E_{\PQ_1}(W^{2rp'})=\E_{\PQ_1}(\sup_{|b|\leq M}(U\partial_v(L(b)^w))^{2rp'})<\infty.  
\end{equation}
Furthermore, if $W\neq 0$ then also $U\neq 0$, so that Condition \ref{condSpr} holds 
for $S=W$ and $p=p'$. Thus, from theorems \ref{thSLLNEstUnif} and \ref{thSLLNsp} we receive that a.s. 
$b\to\wh{f}_k(b)$ converges to $f$ uniformly on $V$. 
\end{proof} 

\begin{theorem}\label{thunifDiffMult} 
Let conditions  \ref{condKi}, \ref{condChi}, and \ref{condLi} hold. If Condition \ref{condSpr} holds
\begin{enumerate} 
\item then a.s. $\overline{(L(b_{k-1})S)}_{n_k}(\wt{\chi}_k)$ converges to $\E_{\PQ_1}(S)$ (as $k\to \infty$), 
\item for $S=1$, then a.s. $b\to\overline{(L(b_{k-1})\partial_v(L(b)^{-1}))}_{n_k}(\wt{\chi}_k)$ converges locally uniformly 
to $0$ for $v \in \N^l\setminus \{0\}$ and to $1$ for $v=0$, 
\item for $S=Z^2$, then a.s. 
$b\to\partial_v\wh{\msq}_{n_k}(b_{k-1},b)(\wt{\chi}_k)\overset{loc}{\rightrightarrows}\partial_v\msq$, $v \in \N^l$,
\item for $S=C$, then $b\to\partial_v \wh{c}_{n_k}(b_{k-1},b)(\wt{\chi}_k)\overset{loc}{\rightrightarrows} \partial_v c$,
 $v \in \N^l$,
\item both for $S=1$ and $S=Z^2$, and if $n_k \in \N_2$, $k\in \N_+$,  then 
a.s. $b\to\partial_v \wh{\msq2}_{n_k}(b_{k-1},b)(\wt{\chi}_k)\overset{loc}{\rightrightarrows}\partial_v\msq$
and $b\to\partial_v \wh{\var}_{n_k}(b_{k-1},b)(\wt{\chi}_k)\overset{loc}{\rightrightarrows}\partial_v\var$, $v \in \N^l$,  
\item for $S=C$, $S=Z^2$, and $S=1$, and if $n_k \in \N_2$, $k\in \N_+$, then a.s. 
$b\to \partial_v \wh{\ic}_{n_k}(b_{k-1},b)(\wt{\chi}_k)\overset{loc}{\rightrightarrows}\partial_v \ic$,  $v \in \N^l$. 
\end{enumerate}
\end{theorem}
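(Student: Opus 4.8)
The plan is to deduce the entire theorem from the multi-stage uniform law \ref{thUnifMult}, reading off each estimator and its derivatives as instances of $\overline{(UL(b_{k-1})\partial_v(L(b)^w))}_{n_k}(\wt{\chi}_k)$ for suitable $U$ and $w$, and identifying the limits through the corresponding points of Theorem \ref{thDiff}. For point one I would apply Theorem \ref{thUnifMult} with $U=S$, $w=0$, and $v=0$, so that $\partial_0(L(b)^0)=1$ and the limit is $\E_{\PQ_1}(S)$. For point two I take $U=1$ and $w=-1$, obtaining the locally uniform limit $\E_{\PQ_1}(\partial_v(L(b)^{-1}))$, which by the first point of Theorem \ref{thDiff} equals $0$ for $v\neq0$ and $1$ for $v=0$. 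For point three I take $U=Z^2$ and $w=1$, since $\partial_v\wh{\msq}_{n_k}(b_{k-1},b)=\overline{(Z^2L(b_{k-1})\partial_vL(b))}_{n_k}$, and the second point of Theorem \ref{thDiff} identifies the limit as $\partial_v\msq$. For point four I take $U=\I(C\neq\infty)C$ and $w=-1$, using that $\partial_v\wh{c}_{n_k}(b_{k-1},b)=\overline{(\I(C\neq\infty)C\,L(b_{k-1})\partial_v(L(b)^{-1}))}_{n_k}$ and that the third point of Theorem \ref{thDiff} gives the limit $\partial_v c$; here Condition \ref{condSpr} for $S=C$ transfers to $U=\I(C\neq\infty)C$ because $\{\I(C\neq\infty)C\neq0\}\subset\{C\neq0\}$ and $|\I(C\neq\infty)C|\leq|C|$.

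For points five and six the strategy is to combine the already-established locally uniform convergences through the multivariate Leibniz rule. The key auxiliary fact is that locally uniform convergence of all derivatives is preserved under products and finite linear combinations: if $b\to\partial_u\phi_k(b)\overset{loc}{\rightrightarrows}\partial_u\phi$ and $b\to\partial_u\psi_k(b)\overset{loc}{\rightrightarrows}\partial_u\psi$ for every $u\in\N^l$, with continuous limits, then $\partial_v(\phi_k\psi_k)=\sum_{w\leq v}\binom{v}{w}\partial_w\phi_k\,\partial_{v-w}\psi_k\overset{loc}{\rightrightarrows}\partial_v(\phi\psi)$, each product of factors converging locally uniformly by Lemma \ref{thCompact} applied with $s$ equal to multiplication, the continuity of the limit derivatives being guaranteed by the smoothness in Theorem \ref{thDiff}. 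Using $\wh{\msq2}_{n_k}(b_{k-1},b)=\wh{\msq}_{n_k}(b_{k-1},b)\overline{(L(b_{k-1})/L(b))}_{n_k}$ together with points two and three, I obtain $\partial_v\wh{\msq2}_{n_k}\overset{loc}{\rightrightarrows}\partial_v(\msq\cdot1)=\partial_v\msq$. Since $\wh{\var}_{n_k}=\frac{n_k}{n_k-1}(\wh{\msq2}_{n_k}-\overline{(ZL(b_{k-1}))}_{n_k}^2)$ with $\frac{n_k}{n_k-1}\to1$ (using $n_k\in\N_2$) and $\overline{(ZL(b_{k-1}))}_{n_k}\to\alpha$ by point one with $S=Z$ (which follows from Condition \ref{condSpr} for $S=Z^2$, as $\{Z\neq0\}=\{Z^2\neq0\}$ and $Z\in L^{2rp}$), and noting that the last term is $b$-independent, I get $\partial_v\wh{\var}_{n_k}\overset{loc}{\rightrightarrows}\partial_v\var$. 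Point six then follows from $\wh{\ic}_{n_k}=\wh{c}_{n_k}\wh{\var}_{n_k}$ and the product fact, combining points four and five to reach $\partial_v\ic=\partial_v(c\var)$.

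The conceptual work has already been absorbed into Theorem \ref{thUnifMult}, so the remaining difficulty is bookkeeping rather than analysis. The step I expect to require the most care is the transfer of Condition \ref{condSpr} between the various choices of $S$ (most notably from $Z^2$ to $Z$ and from $C$ to $\I(C\neq\infty)C$) and the verification that the product and Leibniz argument in points five and six closes at the level of all derivatives and not merely the functions themselves; this in turn rests on having genuine locally uniform convergence of every $\partial_v$ in points two through four, which is exactly what those points supply.
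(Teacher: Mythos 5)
Your proposal is correct and takes essentially the same route as the paper: points one through four are exactly the paper's instances of Theorem \ref{thUnifMult} combined with the corresponding points of Theorem \ref{thDiff}, including the same transfers of Condition \ref{condSpr} (from $S=C$ to $S=\I(C\neq\infty)C$ and from $S=Z^2$ to $S=Z$). Your explicit Leibniz-rule bookkeeping for points five and six is just a spelled-out version of what the paper delegates to the product decompositions (\ref{msq2ndef}), (\ref{varest}), (\ref{icdef}) and Lemma \ref{thCompact}, following the proof of the fourth point of Theorem \ref{thunifDiff}.
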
 
\begin{proof} 
The first point follows directly from Theorem \ref{thUnifMult} for $v=0$ and $w=0$.
Points two to four follow from Theorem \ref{thUnifMult} and points one to three of Theorem \ref{thDiff} 
(note that from Condition \ref{condSpr} for $S=C$ we have such a condition for $S=\I(C\neq \infty)C$).
The fifth point follows from point one for $S=Z$ as well as points two, three, and Lemma \ref{thCompact}, similarly as in the proof of 
the fourth point of Theorem \ref{thunifDiff}. The sixth point follows from points four, five, and Lemma \ref{thCompact}. 
\end{proof}

Let us now discuss single- and multi-stage locally uniform convergence of the cross-entropy 
estimators, for which we shall consider the ECM and LETGS settings separately. 
\begin{theorem}\label{thECCEMunif} 
In the ECM setting, let us assume Condition \ref{condKappa} and that we have (\ref{zxifin}). Then, a.s. 
\begin{equation}\label{zlkappato} 
\overline{(ZL')}_n(\wt{\kappa}_n) \to \alpha 
\end{equation} 
and 
\begin{equation}\label{zlxkappato} 
\overline{(ZL'X)}_n(\wt{\kappa}_n)\to \E_{\PQ_1}(ZX). 
\end{equation} 
Assuming further Condition \ref{condPartvm}, we have a.s. 
\begin{equation}\label{cerrloc}
b\to \partial_v\wh{\ce}_n(b',b)(\wt{\kappa}_n) \overset{loc}{\rightrightarrows}\partial_v\ce,\quad v \in \N^l. 
\end{equation} 
\end{theorem}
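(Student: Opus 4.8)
The plan is to treat the two almost-sure convergence statements as ordinary strong laws of large numbers, and then to read off the locally uniform convergence of the derivatives directly from the closed-form expression (\ref{ceEstECM}): in the ECM setting the dependence of $\wh{\ce}_n$ on $b$ is carried entirely by the deterministic smooth functions $\Psi(b)$ and $b$, so once the random averages are controlled, everything else is deterministic.

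First I would establish (\ref{zlkappato}) and (\ref{zlxkappato}). Under Condition \ref{condKappa} the variables $(ZL')(\kappa_i)$ are i.i.d.\ with $\kappa_i \sim \PQ'$, and since in the ECM setting Condition \ref{condpqpq1} holds we have $L' = \frac{d\PQ_1}{d\PQ'}$; hence the change-of-measure identity (\ref{epr1ay}) (with $A = \Omega_1$) gives $\E_{\PQ'}(ZL') = \E_{\PQ_1}(Z) = \alpha$ and, coordinatewise, $\E_{\PQ'}(ZL'X_i) = \E_{\PQ_1}(ZX_i)$. The integrability needed for the strong law is $\E_{\PQ'}(|ZL'|) = \E_{\PQ_1}(|Z|) < \infty$ and $\E_{\PQ'}(|ZL'X_i|) = \E_{\PQ_1}(|ZX_i|) < \infty$, the latter being exactly (\ref{zxifin}). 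Thus Theorem \ref{thSLLNG} (or the ordinary SLLN) applied to each of these i.i.d.\ sequences yields (\ref{zlkappato}) and (\ref{zlxkappato}) on a single common almost-sure event.

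Next, for the derivative statement I would work on this event and use (\ref{ceEstECM}), writing $A_n := \overline{(ZL')}_n(\wt{\kappa}_n)$ and $B_n := \overline{(ZL'X)}_n(\wt{\kappa}_n)$, so that $\wh{\ce}_n(b',b)(\wt{\kappa}_n) = \Psi(b) A_n - b^T B_n$, while (\ref{cebform}) gives $\ce(b) = \alpha\Psi(b) - b^T\E_{\PQ_1}(ZX)$. Subtracting, the $b$-dependent random error is $\Psi(b)(A_n - \alpha) - b^T(B_n - \E_{\PQ_1}(ZX))$, a combination of the deterministic functions $\Psi(b)$ and $b$ with the constant-in-$b$ quantities $A_n - \alpha$ (scalar) and $B_n - \E_{\PQ_1}(ZX)$ (an $l$-vector). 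Under Condition \ref{condPartvm}, $\Psi$ is smooth, so for any $v \in \N^l$ I can differentiate term by term: $\partial_v(\wh{\ce}_n - \ce)(b) = \partial_v\Psi(b)\,(A_n - \alpha) - \partial_v(b^T w)$ with $w := B_n - \E_{\PQ_1}(ZX)$, where $\partial_v(b^T w)$ equals $b^T w$ for $v = 0$, the single component $w_j$ for $v = e_j$, and $0$ for $|v|\geq 2$. No differentiation under an integral sign and no dominated convergence argument is required, which is what keeps this routine.

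Finally I would upgrade this to local uniform convergence. Fixing a compact $K \subset \R^l$, continuity of $\partial_v\Psi$ gives $\sup_{b\in K}|\partial_v\Psi(b)| < \infty$, and $|b|$ is bounded on $K$, so $\sup_{b\in K}|\partial_v(\wh{\ce}_n - \ce)(b)| \leq |A_n - \alpha|\sup_{b\in K}|\partial_v\Psi(b)| + C_K\,|B_n - \E_{\PQ_1}(ZX)|$ for a constant $C_K$ depending only on $K$ and $v$, and both terms tend to $0$ by (\ref{zlkappato}) and (\ref{zlxkappato}). Taking the countable union of the exceptional null sets over $v \in \N^l$ yields (\ref{cerrloc}) simultaneously for all $v$. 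There is no genuine obstacle here; the only points needing care are the multi-index bookkeeping for $\partial_v(b^T w)$ and recording that the randomness is $b$-free, so that the two strong-law limits alone drive the uniform convergence.
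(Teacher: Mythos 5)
Your proposal is correct and follows essentially the same route as the paper's proof: SLLN (via the change of measure under $\PQ'$ and the integrability from (\ref{zxifin})) for the two averages, then the explicit decomposition $\partial_v\ce(b)-\partial_v\wh{\ce}_n(b',b)=\partial_v\Psi(b)(\alpha-\overline{(ZL')}_n)-(\partial_v b^T)(\E_{\PQ_1}(ZX)-\overline{(ZL'X)}_n)$ bounded uniformly on compacts. The only cosmetic difference is that the compacts should be taken in $A$ rather than in all of $\R^l$, since $\Psi$ is only defined (and smooth) on $A$.
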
 
\begin{proof} 
Formulas (\ref{zlkappato}) and (\ref{zlxkappato}) follow from the SLLN. 
Under Condition \ref{condPartvm}, from (\ref{ceEstECM}) and (\ref{cebform}) we have for $v \in \N^l$
\begin{equation}\label{vcebform}
\partial_v\ce(b)-\partial_v\wh{\ce}_n(b',b)(\wt{\kappa}_n) = 
\partial_v\Psi(b)(\alpha-\overline{(ZL')}_n(\wt{\kappa}_n))-(\partial_vb^T)(\E_{\PQ_1}(ZX)-\overline{(ZL'X)}_n(\wt{\kappa}_n)).
\end{equation} 
Thus, for each compact $K\subset A$, from (\ref{zlkappato}) and (\ref{zlxkappato}),
\begin{equation}
\begin{split}
\sup_{b\in K}|\partial_v\ce(b)-\partial_v\wh{\ce}_n(b',b)(\wt{\kappa}_n)|
&\leq \sup_{b\in K}|\partial_v\Psi(b)|(\alpha-\overline{(ZL')}_n(\wt{\kappa}_n))\\
&+\sup_{b\in K}|\partial_vb^T|(\E_{\PQ_1}(ZX)-\overline{(ZL'X)}_n(\wt{\kappa}_n))\to 0. 
\end{split}
\end{equation} 
\end{proof}

\begin{theorem}\label{thECCEMunifMult}
In the ECM setting, let us assume that $A=\R^l$, conditions \ref{condKi}, \ref{condChi},
and \ref{condLi} hold, and for some $s>2$ we have $Z\in L^{rs}(\PQ_1)$. Then, a.s.
\begin{equation} \label{zlpchi}
\lim_{k\to \infty}\overline{(ZL(b_{k-1}))}_{n_k}(\wt{\chi}_k)= \alpha, 
\end{equation} 
\begin{equation}\label{zlpx} 
\lim_{k\to \infty}\overline{(ZXL(b_{k-1}))}_{n_k}(\wt{\chi}_k)= \E_{\PQ_1}(ZX), 
\end{equation} 
and assuming further Condition \ref{condPartvm}, a.s.
\begin{equation}\label{pvceunif}
b\to \partial_v\wh{\ce}_{n_k}(b_{k-1},b)(\wt{\chi}_k)\overset{loc}{\rightrightarrows}\partial_v\ce,\quad v \in \N^l. 
\end{equation}
\end{theorem}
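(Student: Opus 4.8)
The plan is to reduce everything to the multi-stage strong laws of large numbers from Section \ref{secSLLN}, exactly as the single-stage statement in Theorem \ref{thECCEMunif} was reduced to the ordinary SLLN. Throughout I would use the identification from Remark \ref{remConstMat} of the ECM setting for $A=\R^l$ with the LETS setting for $\tau=1$ and $\Lambda_0=I_l$; under this identification conditions \ref{condBound1} and \ref{condBound2} hold automatically with $R=1$ and $s=1$ (Remark \ref{remCounter}), so that the machinery culminating in Theorem \ref{thSLLNsp} and Theorem \ref{thSLLNc} becomes available. Note also that Condition \ref{condpqpq1} holds in the ECM setting, hence Condition \ref{condpqbllpq1} holds for every $\PQ_1$-integrable $S$.

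First I would prove (\ref{zlpchi}). Taking $S=Z$ and $p=s/2>1$ we have $S\in L^{2rp}(\PQ_1)=L^{rs}(\PQ_1)$ by hypothesis, so Condition \ref{condSpr} is satisfied; together with the assumed Condition \ref{condLi} this yields Condition \ref{condPi} by Theorem \ref{thSLLNsp}, and then Theorem \ref{thSLLNc} gives $\overline{(ZL(b_{k-1}))}_{n_k}(\wt{\chi}_k)\to\E_{\PQ_1}(Z)=\alpha$ a.s. For (\ref{zlpx}) I would argue coordinatewise, applying the same two theorems with $S=ZX_i$, $i=1,\dots,l$. The only nonroutine point is verifying $ZX_i\in L^{2rp'}(\PQ_1)$ for some $p'>1$: since $A=\R^l$, $X$ has all mixed moments finite, and since $s>2$ one can pick $p'>1$ and a conjugate pair $a,a'$ with $2p'a\le s$; H\"older's inequality then bounds $\E_{\PQ_1}(|ZX_i|^{2rp'})$ by $(\E_{\PQ_1}(|Z|^{2rp'a}))^{1/a}(\E_{\PQ_1}(|X_i|^{2rp'a'}))^{1/a'}<\infty$, because $2rp'a\le rs$ and $X_i$ has all moments. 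With Condition \ref{condSpr} in hand, Theorem \ref{thSLLNsp} and Theorem \ref{thSLLNc} produce the limit $\E_{\PQ_1}(ZX_i)$ for each coordinate, that is (\ref{zlpx}).

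Finally, (\ref{pvceunif}) would follow from (\ref{zlpchi}) and (\ref{zlpx}) by the same estimate as in the proof of Theorem \ref{thECCEMunif}. Under Condition \ref{condPartvm} the ECM cross-entropy estimator formula (\ref{ceEstECM}) and the cross-entropy formula (\ref{cebform}) give, as in (\ref{vcebform}), the multi-stage identity
\begin{equation}
\partial_v\ce(b)-\partial_v\wh{\ce}_{n_k}(b_{k-1},b)(\wt{\chi}_k)=\partial_v\Psi(b)(\alpha-\overline{(ZL(b_{k-1}))}_{n_k}(\wt{\chi}_k))-(\partial_vb^T)(\E_{\PQ_1}(ZX)-\overline{(ZXL(b_{k-1}))}_{n_k}(\wt{\chi}_k)).
\end{equation}
The two difference factors on the right are independent of $b$ and tend to $0$ a.s. by (\ref{zlpchi}) and (\ref{zlpx}), while $\sup_{b\in K}|\partial_v\Psi(b)|$ and $\sup_{b\in K}|\partial_vb^T|$ are finite on every compact $K\subset A$ (the former since $\Psi$ is smooth by Condition \ref{condPartvm}, the latter since $b\to b^T$ is polynomial). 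Taking $\sup_{b\in K}$ of the absolute value of the identity therefore gives a bound tending to $0$ a.s., which is precisely the claimed local uniform convergence.

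The hard part, as indicated, is the integrability bookkeeping for $S=ZX_i$: the hypothesis $Z\in L^{rs}(\PQ_1)$ with $s>2$ is exactly strong enough to leave room for a H\"older split that absorbs the polynomial factor $X_i$ (which has all moments) while keeping the $Z$-exponent at most $rs$ and the outer exponent $2rp'$ strictly above the threshold $2r$ demanded by Condition \ref{condSpr}. The remaining steps are direct applications of the already-established multi-stage SLLNs together with the boundedness of smooth coefficients on compact sets.
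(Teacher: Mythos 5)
Your proposal is correct and follows essentially the same route as the paper: the paper likewise verifies $\E_{\PQ_1}(|ZX_i|^{rq})<\infty$ for $2<q<s$ via H\"older's inequality, invokes the ECM counterpart of the first point of Theorem \ref{thunifDiffMult} (which is itself just the wrapper around theorems \ref{thSLLNsp} and \ref{thSLLNc} that you apply directly) for $S=Z$ and $S=ZX$, and proves (\ref{pvceunif}) by the same identity (\ref{vcebform}) as in Theorem \ref{thECCEMunif}. Your only deviation is cosmetic — you unwind the wrapper and argue coordinatewise for $ZX_i$ — and the integrability bookkeeping you flag as the delicate point matches the paper's.
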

\begin{proof}
From H\"older's inequality, for each $2<q<s$ we have $\E_{\PQ_1}(|ZX_i|^{rq}) <\infty$, $i =1,\ldots,l$. 
Thus, 
(\ref{zlpchi}) and (\ref{zlpx})
follow from the counterpart of the first point of Theorem  \ref{thunifDiffMult} for ECM for $S=Z$
and $S=ZX$ respectively and (\ref{pvceunif}) can be proved similarly
as (\ref{cerrloc}) in Theorem \ref{thECCEMunif}.
\end{proof}

\begin{theorem}\label{thLETGSCEMunif}
In the LETGS setting, let us assume conditions \ref{condKappa} and \ref{condInt1}. Then, a.s. 
\begin{equation}\label{zglkappato}
\overline{(ZGL')}_n(\wt{\kappa}_n) \to \E_{\PQ_1}(ZG),
\end{equation}
\begin{equation}\label{zhlkappato}
\overline{(ZHL')}_n(\wt{\kappa}_n) \to \E_{\PQ_1}(ZH),
\end{equation}
and
\begin{equation}\label{ceLETGSloc}
b\to \partial_v\wh{\ce}_n(b',b)(\wt{\kappa}_n)\overset{loc}{\rightrightarrows}\partial_v\ce, \quad v \in \N^l. 
\end{equation}
\end{theorem}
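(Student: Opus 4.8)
The plan is to follow the structure of the proof of Theorem \ref{thECCEMunif}, replacing the scalar coefficients that appear there by the matrix/vector coefficients $\E_{\PQ_1}(ZG)$ and $\E_{\PQ_1}(ZH)$ governing the quadratic form (\ref{cebLETGS}). First I would establish the two coefficient limits (\ref{zglkappato}) and (\ref{zhlkappato}) by the ordinary strong law of large numbers applied entrywise to the sample $\wt{\kappa}_n$, which by Condition \ref{condKappa} is i.i.d.\ $\sim\PQ'$. The only nonroutine point is the identification of the limits: since $G$ and $H$ both carry the factor $\I(\tau<\infty)$ (see the formulas for $G$ and $H$ in Section \ref{secLETS}), the products $ZG$ and $ZH$ vanish on $\{\tau=\infty\}$, so applying the change-of-measure identity (\ref{epr1ay}) on $A=\{\tau<\infty\}$ with $L'=(\frac{d\PQ_1}{d\PQ'})_{\{\tau<\infty\}}$ (Condition \ref{condB1} for $B_1=\{\tau<\infty\}$, cf.\ Remark \ref{remztauequiv}) gives $\E_{\PQ'}(ZG_{i,j}L')=\E_{\PQ_1}(ZG_{i,j})$ and $\E_{\PQ'}(ZH_iL')=\E_{\PQ_1}(ZH_i)$. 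Condition \ref{condInt1} makes these expectations finite, indeed $\E_{\PQ'}(|ZG_{i,j}|L')=\E_{\PQ_1}(|ZG_{i,j}|)<\infty$ and similarly for $H$, so the SLLN applies and yields (\ref{zglkappato}) and (\ref{zhlkappato}).

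For the locally uniform derivative convergence (\ref{ceLETGSloc}) I would exploit that both estimator and limit are quadratic polynomials in $b$: by (\ref{whcebLETGS}) we have $\wh{\ce}_n(b',b)=b^T\overline{(ZL'G)}_nb+\overline{(ZL'H)}_nb+\ln(\epsilon)\overline{(ZL'\I(\tau=\infty))}_n$, while $\ce$ is given by (\ref{cebLETGS}). Hence for every multi-index $v$, the difference $\partial_v\wh{\ce}_n(b',b)-\partial_v\ce(b)$ is itself a polynomial in $b$ of degree at most $\max(0,2-|v|)$ whose coefficients are precisely the differences $\overline{(ZL'G)}_n-\E_{\PQ_1}(ZG)$ and $\overline{(ZL'H)}_n-\E_{\PQ_1}(ZH)$ together, for $v=0$, with the constant $\ln(\epsilon)\overline{(ZL'\I(\tau=\infty))}_n$; for $|v|\geq 3$ the difference is identically zero. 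On an arbitrary compact $K\subset\R^l$, say $K\subset\{|b|\leq M\}$, I would bound $\sup_{b\in K}|\partial_v\wh{\ce}_n(b',b)-\partial_v\ce(b)|$ by a fixed polynomial in $M$ times the norms of these coefficient differences, which tend to $0$ almost surely by (\ref{zglkappato}) and (\ref{zhlkappato}). This delivers the locally uniform convergence for all $v$ on a single almost sure event.

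The one step needing care is the constant ($v=0$) term carrying $\I(\tau=\infty)$: to match the vanishing constant in (\ref{cebLETGS}) one uses that $\PQ'(Z\neq0,\ \tau=\infty)=0$, so that each summand $(ZL'\I(\tau=\infty))(\kappa_i)$ is $\PQ'$-a.s.\ zero and the empirical constant is a.s.\ zero, in agreement with $\ce$; this is exactly the support hypothesis under which the quadratic formula (\ref{cebLETGS}) for $\ce$ is valid. I expect the main obstacle to be precisely this bookkeeping around the change of measure and the set $\{\tau=\infty\}$: once the limits are correctly identified with $\PQ_1$-expectations of $ZG$ and $ZH$, the passage from coefficient convergence to locally uniform convergence of all derivatives is routine, resting only on the finite-dimensional quadratic form of the functions and the elementary compact-set estimate above.
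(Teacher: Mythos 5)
Your proposal is correct and follows essentially the same route as the paper: the coefficient limits (\ref{zglkappato}) and (\ref{zhlkappato}) come from the SLLN (with the change-of-measure identification of the limits, which the paper leaves implicit), and the locally uniform convergence of all derivatives follows because the difference of the two quadratic polynomials has coefficients tending a.s.\ to zero. Your extra bookkeeping for the constant term $\ln(\epsilon)\overline{(ZL'\I(\tau=\infty))}_n$ via $\PQ'(Z\neq0,\ \tau=\infty)=0$ is a point the paper's displayed equation silently absorbs, so it is a welcome clarification rather than a deviation.
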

\begin{proof}
Formulas (\ref{zglkappato}) and (\ref{zhlkappato}) follow from the SLLN. 
From (\ref{whcebLETGS}) and (\ref{cebLETGS}),
$\partial_v\ce(b)$ and $\partial_v\wh{\ce}_n(b',b)(\wt{\kappa}_n)$ can be nonzero only for $v \in \N^l$ such that 
$\sum_{i=1}^lv_i\leq 2$. It is easy to check that for such a $v$, from (\ref{zglkappato}) and (\ref{zhlkappato}), a.s.
\begin{equation}
\partial_v\ce(b)-\partial_v\wh{\ce}_n(b',b)(\wt{\kappa}_n) = 
\partial_v(b^T(\E_{\PQ_1}(ZG) - \overline{(ZGL')}_n(\wt{\kappa}_n))b +(\E_{\PQ_1}(ZH)- \overline{(ZHL')}_n(\wt{\kappa}_n))b)\overset{loc}{\rightrightarrows} 0.
\end{equation} 
\end{proof}

\begin{theorem}\label{thLETGSCEMunifMult}
In the LETGS setting, let us assume conditions \ref{condBound1}, \ref{condBound2}, \ref{condKi}, \ref{condChi}, 
and \ref{condLi}, and that for some $p>2$ we have $Z\in L^{rp}(\PQ_1)$. 
Then, a.s. 
\begin{equation}\label{zgchi}
 \overline{(ZGL(b_{k-1}))}_{n_k}(\wt{\chi}_k) \to \E_{\PQ_1}(ZG),
\end{equation}
\begin{equation}\label{zhchi}
 \overline{(ZHL(b_{k-1}))}_{n_k}(\wt{\chi}_k) \to \E_{\PQ_1}(ZH),
\end{equation}
and 
\begin{equation}\label{ceLETGSunif}
b\to \partial_v\wh{\ce}_{n_k}(b_{k-1},b)(\wt{\chi}_k)\overset{loc}{\rightrightarrows}\partial_v\ce,\quad v \in \N^l. 
\end{equation}
\end{theorem}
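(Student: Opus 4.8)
The plan is to mirror the single-stage argument of Theorem~\ref{thLETGSCEMunif}, replacing the ordinary strong law of large numbers used there by its multi-stage counterpart, the first point of Theorem~\ref{thunifDiffMult}. Concretely, I would first secure the integrability needed to apply that point to each scalar entry $S=ZG_{i,j}$ and $S=ZH_i$, then deduce (\ref{zgchi}) and (\ref{zhchi}), and finally read off the locally uniform convergence of the gradients (\ref{ceLETGSunif}) from the explicit quadratic forms (\ref{whcebLETGS}) and (\ref{cebLETGS}). The hypotheses \ref{condKi}, \ref{condChi}, and \ref{condLi} are exactly those required by Theorem~\ref{thunifDiffMult}, so they will be in force throughout.

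For the integrability, note that $\{ZG_{i,j}\neq0\}$ and $\{ZH_i\neq0\}$ are contained in $\{Z\neq0\}$, so conditions~\ref{condBound1} and~\ref{condBound2}, assumed for $S=Z$, hold verbatim for $S=ZG_{i,j}$ and $S=ZH_i$ as well. Applying Lemma~\ref{lempInt} (with the stopping time terminated at $s$, i.e.\ taking $\wh{\tau}=0$, which is legitimate by Condition~\ref{condBound2}) to $Z\in L^{rp}(\PQ_1)$ yields $\E_{\PQ_1}(|ZG_{i,j}|^{u})<\infty$ and $\E_{\PQ_1}(|ZH_i|^{u})<\infty$ for every $u<rp$. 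Since $p>2$, I may fix some $p'\in(1,p/2)$, so that $2rp'<rp$ and hence $ZG_{i,j},\,ZH_i\in L^{2rp'}(\PQ_1)$. Together with the previous observation this is exactly Condition~\ref{condSpr} for these choices of $S$ and exponent $p'$.

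With Condition~\ref{condSpr} in hand, the first point of Theorem~\ref{thunifDiffMult} applied entrywise to $S=ZG_{i,j}$ and $S=ZH_i$ gives (\ref{zgchi}) and (\ref{zhchi}) almost surely. To obtain (\ref{ceLETGSunif}) I proceed exactly as in the proof of Theorem~\ref{thLETGSCEMunif}: by (\ref{whcebLETGS}) and (\ref{cebLETGS}) both $\partial_v\ce(b)$ and $\partial_v\wh{\ce}_{n_k}(b_{k-1},b)(\wt{\chi}_k)$ vanish whenever $\sum_{i=1}^l v_i>2$, while for $\sum_{i=1}^l v_i\le 2$ their difference is $\partial_v$ applied to the degree-two polynomial in $b$ with coefficient matrix $\E_{\PQ_1}(ZG)-\overline{(ZGL(b_{k-1}))}_{n_k}(\wt{\chi}_k)$ and coefficient vector $\E_{\PQ_1}(ZH)-\overline{(ZHL(b_{k-1}))}_{n_k}(\wt{\chi}_k)$. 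On any compact set the derivatives of such a polynomial are bounded by a fixed multiple of the norms of these coefficients, which tend to zero a.s.\ by (\ref{zgchi}) and (\ref{zhchi}); hence the convergence in (\ref{ceLETGSunif}) is locally uniform.

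The main point requiring care is the integrability step, and in particular the role of the hypothesis $p>2$: Condition~\ref{condSpr} demands membership in $L^{2rp'}$ with $p'>1$, i.e.\ an exponent strictly larger than $2r$, so one needs the extra factor of $2$ of integrability of $Z$ beyond $L^{r}$ to absorb the quadratic growth of $G$ and the linear growth of $H$ in the Gaussian increments $\eta_k$. This is precisely what $Z\in L^{rp}(\PQ_1)$ with $p>2$ provides, and it is where the present hypotheses differ (by that factor of $2$) from the bare integrability Condition~\ref{condInt1} that sufficed in the single-stage Theorem~\ref{thLETGSCEMunif}.
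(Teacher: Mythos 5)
Your proposal is correct and follows essentially the same route as the paper: Lemma \ref{lempInt} is used to upgrade $Z\in L^{rp}(\PQ_1)$ with $p>2$ to $ZG_{i,j},ZH_i\in L^{2rp'}(\PQ_1)$ for some $p'>1$ (the paper phrases this as $\E_{\PQ_1}(|ZG_{i,j}|^{ru})<\infty$ for $2<u<p$, which is the same statement), Condition \ref{condSpr} then lets you invoke the first point of Theorem \ref{thunifDiffMult} entrywise, and (\ref{ceLETGSunif}) is read off from the quadratic form of $\wh{\ce}_{n_k}$ exactly as in Theorem \ref{thLETGSCEMunif}. Your explicit verification that conditions \ref{condBound1} and \ref{condBound2} transfer to $S=ZG_{i,j}$ and $S=ZH_i$ via $\{ZG_{i,j}\neq0\}\subset\{Z\neq0\}$ is a detail the paper leaves implicit but is entirely in the spirit of its argument.
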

\begin{proof}
From Lemma \ref{lempInt}, for each $2<u<p$ and $i \in \{1,\ldots,l\}$, we have 
$\E_{\PQ_1}(|ZH_i|^{ru}) <\infty$ and $\E_{\PQ_1}(|ZG_{i,j}|^{ru}) <\infty$.  
Thus, (\ref{zgchi}) and (\ref{zhchi}) follow 
from the first point of Theorem  \ref{thunifDiffMult} for $S=ZH_i$ and $S=ZG_{i,j}$ respectively, and 
(\ref{ceLETGSunif}) can be proved similarly as (\ref{ceLETGSloc}) in Theorem \ref{thLETGSCEMunif}. 
\end{proof}

\section{\label{secESSM}Exact minimization of estimators} 
In this section we define exact single- and 
multi-stage minimization methods of estimators, abbreviated as ESSM and EMSM. 
We also discuss the possibility of their application
to the minimization of the cross-entropy estimators in the ECM and LETGS settings. 

Let $T \subset \R_+$ be unbounded and for some $l \in \N_+$, let $B\in \mc{B}(\R^l)$ be nonempty. 
The ESSM and EMSM methods can be viewed
as special cases of the following abstract method for exact minimization 
of random functions, which we call EM. 
In EM we assume the following condition. 
\begin{condition}\label{condftdef}
For each $t\in T$ we are given a function
$\wh{f}_t:\mc{S}(B) \otimes (\Omega,\mc{F})  \to \mc{S}(\R)$, a set $G_t\in \mc{F}$, 
and a $B$-valued random variable $d_t$. 
Random variable $\wh{f}_t(b,\cdot)$ is denoted shortly as $\wh{f}_t(b)$.
\end{condition}
Furthermore, it assumed that for each $t \in T$ and $\omega \in  G_t$, $d_t(\omega)$ is the unique minimum point of 
$b\to \wh{f}_t(b,\omega)$. 

Let us now define ESSM and EMSM. 
For some nonempty set $A\in \mc{B}(\R^l)$, $A\subset B$, and $p\in \N_+$, let us consider functions 
\begin{equation}\label{estnDef}
\wh{\est}_n:\mc{S}(A)\otimes \mc{S}(B) \otimes \mc{S}_1^n \to \mc{S}(\R), \quad n \in \N_p.
\end{equation}
For $B=A$ these can be some estimators as in (\ref{dvcdef}).  
We shall further often need the following condition. 
\begin{condition}\label{condMesDn}
For each $n \in \N_p$, a set 
$D_n \in \mc{B}(A) \otimes \mc{F}_1^n$
is such that for each $(b',\omega) \in D_n$, the function $b\in B\to\wh{\est}_n(b',b)(\omega)$ has a unique minimum point, denoted as $b_n^*(b',\omega)$.
\end{condition}

In ESSM and EMSM we assume the following condition. 
\begin{condition}\label{condmesbs} 
Condition \ref{condMesDn} holds and for each $n \in \N_p$, 
for $\mc{F}_n':=\{D_n\cap D: D \in\mc{B}(A)\otimes\mc{F}_1^n\}$, 
the function $(b',\omega)\to b_n^*(b',\omega)$ is measurable from $\mc{S}'_n=(D_n, \mc{F}_n')$ to $\mc{S}(B)$. 
\end{condition}

In ESSM we also assume Condition \ref{condKappa} and the following condition. 
\begin{condition}\label{condT}
$\N\cup\{\infty\}$-valued random variables $N_t$, $t \in T$, are such that a.s. (\ref{ntoinfty}) and (\ref{nt0}) hold. 
\end{condition} 
\begin{remark}\label{remCondT}
In Condition \ref{condT} one can take e.g. $T=\N_+$ and $N_k=k$, $k \in \N_+$. Alternatively, 
one can take $T=\R_+$ 
and for some nonnegative random variable $U$ on $\mc{S}_1$, 
$N_t$ can be given by formula (\ref{nt1}) or (\ref{nt2}) but for $C_i=U(\kappa_i)$, $i \in \N_+$ 
(i.e. for $S_n=\sum_{i=1}^nU(\kappa_i)$, $n \in \N_+$).
In such cases sufficient conditions for (\ref{ntoinfty}) and (\ref{nt0}) to hold  a.s. were discussed in Chapter \ref{secIneff}. 
For instance, such an $U$ can be some theoretical cost variable, fulfilling 
$\dot{U}=p_{\dot{U}}U$ for some $p_{\dot{U}}\in \R_+$ and 
an practical cost variable $\dot{U}$ 
for generating some replicates (e.g. of $Z$) under $\PQ'$ and doing some 
helper computations needed for the later estimator minimization. 
Such $U$ and $\dot{U}$ are defined analogously as such costs $C$ 
and $\dot{C}$ of an MC step in Chapter \ref{secIneff} 
and shall be called the cost variables of a step of SSM. 
In such a case, some $N_t$ as above can be interpreted as the number of steps of SSM 
corresponding to an approximate theoretical budget $t$. 
Often one can take $U=C$, as is the case in our numerical experiments. 
\end{remark}
For each $t \in T$, in ESSM we 
define $d_t$ to be a $B$-valued random variable such that on the event 
\begin{equation}\label{kappaDk}
G_t:=\{(N_t= k\in \N_p) \wedge ((b',\wt{\kappa}_{k})\in D_{k})\}, 
\end{equation}
we have 
\begin{equation}\label{ckdef}
d_t = b_{k}^*(b',\wt{\kappa}_{k}).
\end{equation}
On $G_t'=\Omega\setminus G_t$ one can set e.g. $d_t=b'$, $t \in T$. 

In EMSM we assume that conditions \ref{condKi} and \ref{condChi} hold 
for $n_k\in \N_p$, $k \in \N_+$. 
Furthermore, for each $k \in \N_+$, $d_k$ is a $B$-valued random variable such that on the event 
\begin{equation}\label{chiDk} 
G_k:=\{(b_{k-1},\wt{\chi}_k)\in D_{n_k}\} 
\end{equation} 
we have
\begin{equation}\label{akdef}
d_k = b_{n_k}^*(b_{k-1},\wt{\chi}_k).  
\end{equation}
On $G_k'$ one can set e.g. $d_k=b_0$ or $d_k=b_{k-1}$. 
\begin{remark}\label{remIdentEM}
ESSM and EMSM are special cases of EM for the respective $G_t$ and $d_t$ as above, 
in ESSM for $\wh{f}_t(b,\omega)=\I(N_t=k \in \N_p)\wh{\est}_k(b',b)(\wt{\kappa}_k(\omega))$,
while in EMSM for $T=\N_+$ 
and $\wh{f}_k(b,\omega)=\wh{\est}_{n_k}(b_{k-1},b)(\wt{\chi}_k(\omega))$. 
\end{remark}

In EMSM the variables $b_k$, $k \in \N$, satisfying Condition \ref{condKi} can be defined in various ways.
An important possibility is when we are given some $K_0$-valued random variable $b_0$, 
and $b_k$, $k \in \N_+$, are as in the below condition. 
\begin{condition}\label{condbkdef}
For each $k \in \N_+$, if 
$d_k\in K_k$, then $b_k =d_k$, and 
otherwise $b_k =r_k$ for some $K_k$-valued 
random variable $r_k$. 
\end{condition}
Note that if $K_k\subset K_{k+1}$, $k\in\N$, then for each $k \in \N_+$, in the above condition we can take e.g.
$r_k=b_0$ or $r_k=b_{k-1}$.  


Consider some function $f: A\to \R$ and let $b^* \in A$ be its unique minimum point. 
We will be interested in verifying when some of the below conditions hold for EM methods, 
like ESSM and EMSM under the identifications as 
in Remark \ref{remIdentEM}, or for some other methods defined further on.  

\begin{condition}\label{condKappaDk}
Almost surely for a sufficiently large $t\in T$, $G_t$ holds. 
\end{condition}
\begin{condition}\label{condESSM1}
It holds a.s. $\lim_{t\to \infty }d_t = b^*$.   
\end{condition}
\begin{condition}\label{condESSM1f}
It holds a.s. $\lim_{t\to \infty }\wh{f}_{t}(d_t) = f(b^*)$. 
\end{condition}

%
%
%

Consider the following condition. 
\begin{condition}\label{condKiA} 
$A$ is open and $K_i \in \mc{B}(A)$, $i\in \N$, are such that 
for each compact set $D \subset A$, for a sufficiently large $i$, $D\subset K_i$. 
\end{condition}

Note that if conditions \ref{condLiKi} and \ref{condLinf} hold, 
then Condition \ref{condKiA} holds. 

\begin{remark}\label{remcondKiA}
For EMSM let us assume conditions \ref{condbkdef} and \ref{condKiA} (for the same sets $K_i$). 
Then, if for some compact set $D\subset A$ a.s. 
$d_k\in D$ for a sufficiently large $k$  (which happens e.g. if a.s. $d_k \to b^*$ and $D$ is some compact neighbourhood 
of $b^*$), then a.s. for a sufficiently large $k$, $d_k= b_k$.  
In particular, if additionally Condition \ref{condESSM1} or \ref{condESSM1f} holds for EMSM 
then such a condition holds also for $d_k$ replaced by $b_k$.
\end{remark}


Let us now describe how ESSM and EMSM can be used 
for $\wh{\est}_n=\wh{\ce}_n$ in the ECM and LETGS settings. 
Let us first consider ECM as in sections \ref{secECM} and \ref{secCEECM}, 
assuming conditions \ref{condtX} and \ref{condPartvm}, as well as that we have 
(\ref{zxifin}), $\alpha>0$, and  (\ref{muepqzx}). Then, 
from the discussion in Section \ref{secCEECM}, 
Condition \ref{condMesDn} holds for 
\begin{equation} 
D_n=\{(b',\omega)\in A\times \Omega_1^n:\overline{(ZL')}_n(\omega)>0 \wedge 
\frac{\overline{(ZL'X)}_n}{\overline{(ZL')}_n}(\omega) \in \mu[A]\}, 
\end{equation} 
and from formula (\ref{bsceECM}), Condition \ref{condmesbs} holds. In ESSM, from 
(\ref{zlkappato}) and (\ref{zlxkappato}) in Theorem \ref{thECCEMunif} 
as well as from $\alpha >0$, a.s. for a sufficiently large $n$ we have $\overline{(ZL')}_n(\wt{\kappa}_n)>0$ and a.s. 
\begin{equation}\label{zlxto} 
\frac{\overline{(ZL'X)}_n}{\overline{(ZL')}_n}(\wt{\kappa}_n)\to\frac{\E_{\PQ_1}(ZX)}{\alpha}. 
\end{equation} 
Thus, using further (\ref{muepqzx}), the fact that $\mu[A]$ is open, 
and Condition \ref{condT}, 
a.s. for a sufficiently large $t$, $G_t$ as in (\ref{kappaDk}) holds (i.e. Condition \ref{condKappaDk} holds for ESSM), 
in which case 
\begin{equation}\label{cnmuzlx} 
d_t=\mu^{-1}\left(\frac{\overline{(ZL'X)}_k}{\overline{(ZL')}_k}(\wt{\kappa}_k)\right). 
\end{equation} 
From Condition \ref{condT}, (\ref{ceminbs}), (\ref{zlxto}), (\ref{cnmuzlx}), and the continuity of $\mu^{-1}$, Condition \ref{condESSM1} holds. 
For EMSM let us additionally make the assumptions as in Theorem \ref{thECCEMunifMult}. 
Then, from (\ref{zlpchi}) and (\ref{zlpx}) in that theorem, 
by similar arguments as above for ESSM, conditions \ref{condKappaDk} and \ref{condESSM1} hold for EMSM. 

Consider now the LETGS setting and, using the notations as in Section \ref{secCeLETGS}, let us assume that 
Condition \ref{condInt1} holds and $\wt{A}$ is positive definite. 
From the discussion in that section, Condition \ref{condMesDn} holds for 
$D_n=\{(b',\omega)\in A\times \Omega_1^n: A_n(b')(\omega)\text{ is positive definite}\}$, 
which, for $Z\geq0$, fulfills $D_n=A \times \{\omega \in \Omega_1^n:r_n(\omega)\}$. 
From formula (\ref{bsdefLETGS}), Condition \ref{condmesbs} holds. 
In ESSM, from the SLLN a.s. $A_n(b')(\wt{\kappa}_n)\to \wt{A}$ and $B_n(b')(\wt{\kappa}_n) \to \wt{B}$. 
Thus, from Lemma \ref{lemPosDef} and Condition \ref{condT}, 
a.s. for a sufficiently large 
$t$, $N_t=n \in \N_+$ and $A_{n}(b')(\wt{\kappa}_{n})$ 
is positive definite (i.e.  Condition \ref{condKappaDk} holds), in which case 
$d_t=(A_n(b')(\wt{\kappa}_n))^{-1}B_n(b')(\wt{\kappa}_n)$. Thus, from (\ref{zlg}), 
Condition \ref{condESSM1} holds. 
For EMSM, let us make the additional assumptions as in Theorem \ref{thLETGSCEMunifMult}, 
so that from (\ref{zgchi}), a.s. $A_{n_k}(b_{k-1})(\wt{\chi}_k) \to \wt{A}$, and from (\ref{zhchi}), a.s. 
$B_{n_k}(b_{k-1})(\wt{\chi}_k) \to \wt{B}$. 
Then, analogously as for ESSM above, conditions \ref{condKappaDk} and \ref{condESSM1} hold for EMSM. 

\section{\label{secMeans}Helper theorems for proving the convergence properties 
of minimization methods with gradient-based stopping criteria} 
\begin{condition}\label{condH2}
For a random variable $Y$ with values in a measurable space $\mc{S}$ and a nonempty set $A\in \mc{B}(\R^l)$, a function
$r:\mc{S}(A)\otimes\mc{S}\to \mc{S}(\overline{\R})$ is such that Condition \ref{condHlow} holds for 
$h(b,\cdot)=r(b,Y(\cdot))$, $b \in A$. 
\end{condition}

For $a \in \R^l$ and $\epsilon\in\R_+$, 
we define a sphere $S_l(a,\epsilon)=\{x\in \R^l:|x-a|=\epsilon\}$, a ball $B_l(a,\epsilon)=\{x\in \R^l:|x-a|<\epsilon\}$, 
and a closed ball $\overline{B}_l(a,\epsilon)=\overline{B_l(a,\epsilon)}=\{x\in \R^l:|x-a|\leq\epsilon\}$. 
The proof of the below lemma uses a similar
reasoning as in the proof of consistency of M-estimators in Theorem 5.14 in \cite{Vaart}.

\begin{lemma}\label{lemSuff}
Let Condition \ref{condH2} hold, $Y_1, Y_2,\ldots$ be i.i.d. $\sim Y$, 
$b\in A \to\wh{f}_n(b):= \frac{1}{n}\sum_{i=1}^nr(b,Y_i)$, $n \in \N_+$,
$K \subset A$ be a nonempty compact set, and $m$ be the minimum of $f$ on $K$ 
(which exists due to lemmas \ref{lemMin} and \ref{lemSemi}). 
Then, for each $a\in (-\infty,m)$, a.s. for a sufficiently large $n$, $\wh{f}_n(b) > a$, $b \in K$. 
\end{lemma}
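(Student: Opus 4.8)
The plan is to run a Wald-type localization argument: bound $\wh f_n$ from below on small balls by empirical averages of \emph{fixed} integrands, push the expectations of those integrands above the target level $a$ by exploiting lower semicontinuity, and then patch the local bounds together by compactness of $K$.

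First, for $b_0\in K$ and $\rho>0$ I would introduce the localized integrand
\[
g_{b_0,\rho}(y):=\inf_{b\in \overline{B}_l(b_0,\rho)\cap K} r(b,y).
\]
By conditions \ref{condH2} and \ref{condHlow}, $\E(\sup_{b\in A}(r(b,Y))_{-})<\infty$, and since $g_{b_0,\rho}(Y)\geq -\sup_{b\in A}(r(b,Y))_{-}$ its negative part is integrable, so $\E(g_{b_0,\rho}(Y))$ is well-defined in $(-\infty,\infty]$. As $b_0$ lies in every such ball we have $g_{b_0,\rho}\leq r(b_0,\cdot)$, and as $\rho\downarrow0$ the compact sets $\overline{B}_l(b_0,\rho)\cap K$ shrink to $\{b_0\}$, so that on the a.s. event where $b\mapsto r(b,Y)$ is lower semicontinuous, $g_{b_0,\rho}(Y)\uparrow r(b_0,Y)$. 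Adding the integrable dominating term $\sup_{b\in A}(r(b,Y))_{-}$ and applying monotone convergence, I would obtain $\lim_{\rho\downarrow0}\E(g_{b_0,\rho}(Y))=f(b_0)\geq m>a$ (valid also when $f(b_0)=\infty$). Hence for each $b_0\in K$ there is $\rho_{b_0}>0$ with $\E(g_{b_0,\rho_{b_0}}(Y))>a$.

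Next I would apply the generalized strong law of large numbers (Theorem \ref{thSLLNG}) to the i.i.d.\ sequence $g_{b_0,\rho_{b_0}}(Y_i)$, whose negative part is integrable: a.s.\ $\frac1n\sum_{i=1}^n g_{b_0,\rho_{b_0}}(Y_i)\to \E(g_{b_0,\rho_{b_0}}(Y))\in\R\cup\{\infty\}$, so a.s.\ for large $n$ this average exceeds $a$. The key monotonicity is that for every $b\in \overline{B}_l(b_0,\rho_{b_0})\cap K$ one has $r(b,Y_i)\geq g_{b_0,\rho_{b_0}}(Y_i)$, whence $\wh f_n(b)\geq \frac1n\sum_{i=1}^n g_{b_0,\rho_{b_0}}(Y_i)$. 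Finally I would cover the compact set $K$ by the open balls $B_l(b_0,\rho_{b_0})$, extract a finite subcover $B_l(b_1,\rho_{b_1}),\dots,B_l(b_N,\rho_{b_N})$, and intersect the finitely many a.s.\ events on which the corresponding empirical averages eventually exceed $a$. On the resulting a.s.\ event, for all large $n$ every $b\in K$ lies in some $B_l(b_j,\rho_{b_j})$, and therefore $\wh f_n(b)\geq \frac1n\sum_{i=1}^n g_{b_j,\rho_{b_j}}(Y_i)>a$, which is exactly the assertion.

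The main obstacle is the measurability of the localized infimum $g_{b_0,\rho}$: since $b\mapsto r(b,y)$ is only lower semicontinuous, $g_{b_0,\rho}$ need not coincide with the infimum over a countable dense subset of the ball, so it cannot be reduced to a countable operation in the naive way. I would resolve this by observing that $r$ is a normal integrand (jointly measurable and lower semicontinuous in $b$) on the compact set $\overline{B}_l(b_0,\rho)\cap K$, which makes $g_{b_0,\rho}$ measurable with respect to the completion of the probability space; this is all that is required to form its expectation and invoke Theorem \ref{thSLLNG}. In the settings where the lemma is applied the integrand $b\mapsto r(b,y)$ is in fact continuous, in which case $g_{b_0,\rho}$ is an honest countable infimum and no appeal to completion is needed.
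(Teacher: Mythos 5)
Your proof is correct and follows essentially the same route as the paper's: localize via infima of $r(\cdot,y)$ over shrinking balls, use lower semicontinuity plus monotone convergence (enabled by the integrable lower bound from Condition \ref{condHlow}) to push the localized expectations above $a$, apply the generalized SLLN of Theorem \ref{thSLLNG}, and finish with a finite subcover of $K$. The only addition is your explicit treatment of the measurability of the localized infimum, which the paper passes over in silence; that remark is sound but does not change the argument.
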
 
\begin{proof} 
Let $U_{i}=B_l(0,i^{-1})$, $i \in \N_+$. 
From the  a.s. lower semicontinuity of $b\to r(b,Y)$, for each $v \in K$, for 
$g_{l,v}(x)=\inf_{b\in \{v+U_l\}\cap A} r(b,x)$, we have a.s. $g_{l,v}(Y) \uparrow r(v,Y)$ as $l \to \infty$. 
Thus, from the monotone convergence theorem, 
$\E(g_{l,v}(Y))\uparrow f(v)$ as $l\to \infty$, $v \in K$. 
In particular, $\E(g_{l,v}(Y))>a$ for $l \geq l_v$ 
for some $l_v \in \N_+$, $v \in K$. 
The family $\{D_v:=v+U_{l_v}:v \in K\}$ is a cover of $K$. 
From the compactness of $K$, let $\{D_{v_1},\ldots,D_{v_m}\}$ be its finite subcover. 
Then, from the generalized SLLN in Theorem \ref{thSLLNG} 
(which can be used thanks to (\ref{lowerBound})),
\begin{equation}
\inf_{b \in K} \wh{f}_n(b)\geq \min_{k\in\{1,\ldots,m\}} \frac{1}{n}\sum_{i=1}^ng_{l_{v_k},v_k}(Y_i) 
\overset{\text{a.s.}}{\to}
\min_{k\in\{1,\ldots,m\}} \E(g_{l_{v_k},v_k}(Y)) >a.  
\end{equation}
\end{proof}

\begin{lemma}\label{lemProdBel}
Let Condition \ref{condH2} hold for $r$ equal 
to some nonnegative $r_1$ and $r_2$, for the same $Y$ and $A$. Let 
$g(b)=\E(r_1(b,Y))$ and $\E(r_2(b,Y))=1$, $b \in A$, 
let $Y_1, Y_2,\ldots$ be i.i.d. $\sim Y$, 
and let $b\in A\to \wh{f}_{i,n}(b):= \frac{1}{n}\sum_{j=1}^nr_i(b,Y_j)$, $i=1,2$, 
and $b\in A\to \wh{g}_n(b):=\wh{f}_{1,n}(b)\wh{f}_{2,n}(b)$, $n\in \N_+$. 
Let $K \subset A$ be a nonempty compact set and $m$ be the minimum of $g$ on $K$. 
Then, for each $a \in(-\infty,m)$, a.s. for a sufficiently large $n$, 
\begin{equation}\label{gnbab} 
\wh{g}_n(b) >a,\quad b \in K. 
\end{equation} 
\end{lemma}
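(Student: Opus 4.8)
The plan is to reduce Lemma~\ref{lemProdBel} to Lemma~\ref{lemSuff}. The statement concerns a product $\wh{g}_n = \wh{f}_{1,n}\wh{f}_{2,n}$ of two empirical averages, where $\wh{f}_{1,n}$ estimates $g = \E(r_1(\cdot,Y))$ and $\wh{f}_{2,n}$ estimates the constant function $1$. Since both $r_1$ and $r_2$ are nonnegative, both empirical averages are nonnegative, which is what makes a clean lower bound on the product possible. The key observation is that on the compact set $K$ we want $\wh{g}_n$ to stay above a threshold $a < m$, where $m = \min_K g$; intuitively, for large $n$ the factor $\wh{f}_{1,n}$ is uniformly close to being at least $m$ on $K$, while $\wh{f}_{2,n}$ is uniformly close to $1$, so their product is close to at least $m > a$.

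\textbf{Main steps.}
First I would fix $a \in (-\infty, m)$ and choose an intermediate value; the natural choice is to pick $a' \in (a, m)$ and a factor $c \in (0,1)$ with $c\,a' > a$ (possible because $a'>0$, as $m>0$ follows from $a'<m$ and $m \le g$ where $g \ge 0$; one must check $m>0$ here, which holds provided $m>a$ and we take $a \ge 0$, or more carefully treat the sign of $a$). Next, apply Lemma~\ref{lemSuff} to $r_1$ with threshold $a'$: since $a' < m = \min_K g$, a.s.\ for sufficiently large $n$ we have $\wh{f}_{1,n}(b) > a'$ for all $b \in K$. Separately, apply Lemma~\ref{lemSuff} to $r_2$ on $K$: because $\E(r_2(b,Y)) = 1$ identically, the minimum of its expectation over $K$ is $1$, so for any $c \in (0,1)$, a.s.\ for large $n$ we get $\wh{f}_{2,n}(b) > c$ for all $b \in K$. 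Then on the intersection of these two eventually-almost-sure events, using nonnegativity to multiply the inequalities, $\wh{g}_n(b) = \wh{f}_{1,n}(b)\,\wh{f}_{2,n}(b) > a' c > a$ for all $b \in K$, which is exactly \eqref{gnbab}.

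\textbf{Anticipated obstacle.}
The main subtlety is the bookkeeping around signs and the positivity of $m$ when forming the product. Lemma~\ref{lemSuff} gives a one-sided bound, and to multiply two lower bounds and preserve the inequality direction I must be sure the relevant quantities are nonnegative: $\wh{f}_{2,n} > c > 0$ is fine, but $\wh{f}_{1,n} > a'$ only helps the product if $a' \ge 0$, which requires $m > 0$. Since $r_1 \ge 0$ forces $g \ge 0$ hence $m \ge 0$, and the hypothesis $a < m$ together with wanting a nontrivial statement effectively places us in the regime $m > 0$; if $m = 0$ then any $a < 0$ and the bound $\wh{g}_n \ge 0 > a$ is immediate from nonnegativity of the product, so that degenerate case is handled trivially and separately. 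Establishing that the two ``a.s.\ for sufficiently large $n$'' events can be intersected (a countable intersection issue) is routine, as the union of two null sets is null. Beyond this case analysis, the argument is a direct combination of two invocations of Lemma~\ref{lemSuff} with nonnegativity, so I expect no deeper difficulty.
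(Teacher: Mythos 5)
Your proposal is correct and takes essentially the same route as the paper: reduce to two applications of Lemma \ref{lemSuff} (to $r_1$ with a threshold just below $m$, and to $r_2$ with a threshold just below $1$), multiply the resulting uniform lower bounds using nonnegativity, and dispose of the case $m=0$ (or $a\leq 0$) trivially via $\wh{g}_n\geq 0$. The paper parameterizes the two thresholds as $d$ and $a/d$ for $a<d<m$ rather than your $a'$ and $c$ with $ca'>a$, but this is a cosmetic difference.
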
 
\begin{proof} 
It holds $\wh{g}_n(b)\geq 0$, $n \in \N_+$, $b \in A$, 
so that it is sufficient to consider the case when $m>0$ and 
$0<a<m$. Let $a<d<m$. 
Then, from Lemma \ref{lemSuff}, a.s. for a sufficiently large $n$, $\wh{f}_{1,n}(b) > d$ 
and $\wh{f}_{2,n}(b) > \frac{a}{d}$, $b \in K$, in which case (\ref{gnbab}) holds. 
\end{proof}

\begin{condition}\label{condAf}
We have $b^*\in \R^l$ and $A\in \mc{B}(\R^l)$ is a neighbourhood of $b^*$. A function 
$f:A\to \overline{\R}$, $f>-\infty$, is lower semicontinuous and $b^*$ is  
its unique minimum point (in particular, $f(b^*)<\infty$).  
\end{condition}

\begin{condition}\label{condConvBel}
Condition \ref{condAf} holds and $B\subset\R^l$ is such that $A \subset B$. Functions $f_n:B \to \R$, $n \in \N_+$, fulfill 
\begin{equation}\label{limnfnbsfbs}
\lim_{n\to \infty}f_n(b^*)=f(b^*).
\end{equation}
Furthermore, for each compact set $K\subset A$, for $m$ equal to the minimum of $f$ on $K$, for each $a < m$, for a sufficiently large $n$, 
$f_n(x)>a$, $x \in K$. 
\end{condition} 
\begin{remark}\label{remsuffBelUnif} 
Let Condition \ref{condAf} hold, $B\subset\R^l$, $A \subset B$, and 
$f_n:B \to \R$, $n \in \N_+$ be such that $f_{n|A} \overset{loc}{\rightrightarrows}f$. Then, 
Condition \ref{condConvBel} holds.
\end{remark} 
 
\begin{remark}\label{remIneqs}
Let us assume Condition \ref{condConvBel}, 
let $\epsilon \in\R_+$ be such that $\overline{B}_l(b^*,\epsilon) \subset A$ and let 
$c$ be the minimum of $f$ on $S_l(b^*,\epsilon)$. From the uniqueness of the minimum 
point $b^*$ of $f$, we have $c>f(b^*)$. Let $\delta\in\R_+$ be such that  $c>f(b^*) +\delta$. Then, for a sufficiently large $n$
\begin{equation}\label{wfnb}
f_n(b)\geq f(b^*) +\delta, \quad b \in S_l(b^*,\epsilon)
\end{equation}
and
\begin{equation}\label{fnb}
f_n(b^*) \leq f(b^*) + \frac{\delta}{2}.
\end{equation}
\end{remark}

\begin{theorem}\label{convMinGenFun}
Let us assume that Condition \ref{condConvBel} holds for a convex $B$ and for $f_n$, $n\in \N_+$, which are convex and continuous.
Then, for a sufficiently large $n$, $f_n$ possesses a minimum point $a_n \in B$. Furthermore,
\begin{equation}\label{limanbs}
\lim_{n\to \infty} a_n=b^*
\end{equation}
and
\begin{equation}\label{fnannb}
\lim_{n\to \infty} f_n(a_n)=f(b^*).
\end{equation}
If further $B$ is open, $f_n$, $n\in\N_+$, are differentiable on $B$, and a sequence $b_n \in B$, $n\in\N_+$, 
is such that $\lim_{n\to \infty}|\nabla f_n(b_n)|=0$, then
\begin{equation}\label{limbnbs}
\lim_{n\to \infty} b_n=b^*
\end{equation}
and 
\begin{equation}\label{limfnbnfb}
\lim_{n\to \infty} f_n(b_n)=f(b^*). 
\end{equation}
\end{theorem}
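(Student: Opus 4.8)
We have Condition~\ref{condConvBel} holding for a convex set $B$ and convex continuous functions $f_n$, and we must produce, for large $n$, a minimizer $a_n$ of $f_n$ with $a_n \to b^*$ and $f_n(a_n) \to f(b^*)$; and, when $B$ is open and the $f_n$ are differentiable, show that any sequence $b_n$ with $|\nabla f_n(b_n)| \to 0$ satisfies $b_n \to b^*$ and $f_n(b_n) \to f(b^*)$.
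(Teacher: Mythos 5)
Your submission contains no proof: after the heading ``Theorem statement recap'' you simply restate what must be shown and stop. Nothing is argued, so every claim of the theorem --- existence of a minimizer $a_n$ for large $n$, the convergence $a_n\to b^*$, $f_n(a_n)\to f(b^*)$, and the corresponding statements for near-critical sequences $b_n$ --- is left entirely unproved. This is a complete gap, not a stylistic shortcoming.

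For comparison, the paper's argument proceeds as follows. Fix $\epsilon>0$ with $\overline{B}_l(b^*,\epsilon)\subset A$ and, using the uniqueness of the minimum point $b^*$ of the lower semicontinuous $f$, pick $\delta>0$ with $\inf_{b\in S_l(b^*,\epsilon)}f(b)>f(b^*)+\delta$. Condition~\ref{condConvBel} then yields, for large $n$, the two inequalities $f_n(b)\geq f(b^*)+\delta$ on the sphere $S_l(b^*,\epsilon)$ and $f_n(b^*)\leq f(b^*)+\delta/2$. Convexity of $f_n$ propagates the gap on the sphere outward: for $|b-b^*|\geq\epsilon$ one gets $f_n(b)-f_n(b^*)\geq \frac{|b-b^*|\delta}{2\epsilon}>0$, which traps any minimizer of $f_n$ (which exists by continuity on the compact closed ball) inside $B_l(b^*,\epsilon)$; combining with the uniform lower bound $f_n\geq f(b^*)-\delta/2$ on the closed ball gives $f_n(a_n)\to f(b^*)$. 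For the gradient part, convexity gives $|\nabla f_n(b)|\geq\frac{f_n(b)-f_n(b^*)}{|b-b^*|}\geq\frac{\delta}{2\epsilon}$ whenever $|b-b^*|\geq\epsilon$, so $|\nabla f_n(b_n)|\to 0$ forces $b_n$ into the ball eventually, and the same two-sided bounds squeeze $f_n(b_n)$ to $f(b^*)$. You would need to supply an argument of this kind (or an equivalent one) for the theorem to be established.
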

\begin{proof}
Let us consider $\epsilon, \delta\in\R_+$ as in Remark \ref{remIneqs}. 
From this remark, let $N\in\N_+$ be such that for $n>N$ we have
(\ref{wfnb}) and (\ref{fnb}). Then, for $n >N$,   
for each $b \in B$ such that $|b-b^*|\geq \epsilon$, from the convexity of $f_n$ 
\begin{equation}
\begin{split}\label{fnbfnb}
f_n(b) - f_n(b^*) &\geq \frac{|b-b^*|}{\epsilon}(f_n(b^* + \epsilon\frac{b-b^*}{|b-b^*|})-f_n(b^*))\\
&\geq \frac{|b-b^*|\delta}{2\epsilon} >0.\\ 
\end{split}
\end{equation}
For $n>N$, from (\ref{fnbfnb}) and the continuity of $f_n$,  
$f_n$ has a minimum point $a_n$ fulfilling 
\begin{equation}\label{bnb}
|a_n-b^*|< \epsilon.
\end{equation}
This proves (\ref{limanbs}). For $n > N$, from (\ref{fnb}) and $f_n(a_n)\leq f_n(b^*)$ we have
\begin{equation}\label{fnanfbs}
f_n(a_n) \leq f(b^*) + \frac{\delta}{2}.
\end{equation}
From Condition \ref{condConvBel}, for some $N_1>N$, for $n >N_1$,
\begin{equation}\label{wfnbd} 
f_n(b) \geq f(b^*)-\frac{\delta}{2}, \quad b \in \overline{B}_l(b^*,\epsilon).
\end{equation} 
Thus, for $n>N_1$, from (\ref{bnb}), (\ref{wfnbd}), 
and (\ref{fnanfbs}), we receive that $|f_n(a_n)-f(b^*)|\leq \frac{\delta}{2}$. 
Since we could have selected $\delta$ arbitrarily small, we receive (\ref{fnannb}). 

Let $B$ be open and $f_n$ be differentiable. Then, for $b \in B$ such that $b \neq b^*$, for
$v= \frac{b-b^*}{|b-b^*|}$, from the convexity of $f_n$
\begin{equation}\label{nablafnb}
|\nabla f_n(b)| \geq \nabla_v f_n(b) \geq  \frac{f_n(b)-f_n(b^*)}{|b-b^*|}.
\end{equation}
Thus, for each $b \in B$ for which $|b - b^*|\geq \epsilon$,
for $n>N$, from (\ref{nablafnb}) and (\ref{fnbfnb})
\begin{equation}\label{nablafn}
|\nabla f_n(b)| \geq \frac{\delta}{2\epsilon}.
\end{equation}
Let $N_2>N$ be such that for $n>N_2$ 
\begin{equation}\label{nablafbn}
|\nabla f_n(b_n)|< \frac{\delta}{2\epsilon}. 
\end{equation}
Then, from (\ref{nablafn}), for $n>N_2$
\begin{equation}\label{bnbse}
|b_n -b^*| < \epsilon,
\end{equation}
which proves (\ref{limbnbs}). For $n>N_1\vee N_2$ we have
\begin{equation}
\begin{split}
\frac{\delta}{2}&= \frac{\delta}{2\epsilon}\epsilon> |\nabla f_n(b_n)||b_n-b^*|  \geq f_n(b_n)-f_n(b^*)\\
&\geq f_n(b_n) - f(b^*) - \frac{\delta}{2} \geq -\delta, 
\end{split}
\end{equation}
where in the first inequality we used (\ref{nablafbn}) and (\ref{bnbse}), in the second (\ref{nablafnb}), in the third
(\ref{fnb}), and in the last one (\ref{wfnbd}). 
Thus, in such a case 
\begin{equation} 
\delta\geq f_n(b_n) - f(b^*)\geq - \frac{\delta}{2}, 
\end{equation} 
which proves (\ref{limfnbnfb}). 
\end{proof} 

\begin{lemma}\label{lemStrConvA}
Let $A \subset \R^l$ be open. If a twice continuously differentiable function $f:A\to \R$ has a positive definite Hessian on $A$, 
then for each convex $U \subset A$ such that for some compact $K \subset A$, $U \subset K$, 
$f$ is strongly convex on $U$. If further $\lim_{x\uparrow A}f(x)= \infty$, then for each $x_0 \in A$ as such a $U$ one can take the sublevel set 
$S=\{x \in A:f(x)\leq f(x_0)\}$. 
\end{lemma}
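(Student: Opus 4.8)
The statement to prove is Lemma~\ref{lemStrConvA}. Let me restate it: for an open $A \subset \R^l$ and a twice continuously differentiable $f:A\to\R$ with positive definite Hessian on $A$, $f$ is strongly convex on every convex $U \subset A$ that is contained in some compact $K \subset A$; moreover, if $\lim_{x\uparrow A}f(x) = \infty$, then for each $x_0 \in A$ one may take $U$ to be the sublevel set $S = \{x \in A : f(x) \leq f(x_0)\}$.

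\begin{proof}
The plan for the first claim is to produce a uniform positive lower bound on the eigenvalues of $\nabla^2 f$ over the compact set $K$. Let $U \subset A$ be convex and contained in a compact $K \subset A$. The function $x \in K \to m_l(\nabla^2 f(x))$, where $m_l$ is the lowest-eigenvalue function from the paragraph preceding Lemma~\ref{lemLipschMin}, is continuous: indeed $x\to\nabla^2 f(x)$ is continuous since $f$ is twice continuously differentiable, and $m_l$ is Lipschitz by Lemma~\ref{lemLipschMin}, so the composition is continuous. Since $K$ is compact and nonempty, this function attains a minimum $m := \min_{x \in K} m_l(\nabla^2 f(x))$. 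Because $\nabla^2 f(x)$ is positive definite for each $x \in A \supset K$, we have $m_l(\nabla^2 f(x)) > 0$ for every $x \in K$, and hence $m > 0$. Then for each $x \in U \subset K$ and each $b \in \R^l$ we have $b^T \nabla^2 f(x) b \geq m_l(\nabla^2 f(x))|b|^2 \geq m|b|^2$, which is exactly the definition of strong convexity on $U$ with constant $m$ given in Section~\ref{secStrong} (noting that $A$ is a neighbourhood of $U$ since $A$ is open and $U \subset A$). Taking $U$ convex and contained in the compact $K$ is precisely the hypothesis, so this settles the first part.

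For the second claim, under the additional assumption $\lim_{x\uparrow A}f(x) = \infty$, I must verify that the sublevel set $S = \{x \in A : f(x) \leq f(x_0)\}$ is convex and is contained in some compact subset of $A$, so that the first part applies. Convexity of $S$ follows from convexity of $f$, which in turn follows from the already-established positive definiteness of $\nabla^2 f$ on the convex intersection of $A$ with any line; alternatively, convexity of $S$ as a sublevel set will be inherited once convexity of $f$ on $S$ is known, but since we only need $S$ convex it suffices that $f$ is convex along segments joining points of $S$, which holds because $\nabla^2 f \geq 0$ everywhere on $A$. The key remaining step, which I expect to be the main obstacle, is to show that $S$ is contained in a compact subset of $A$. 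This is where $\lim_{x\uparrow A}f(x)=\infty$ enters: recall from Section~\ref{secGen} that $x_n \uparrow A$ means $\max(1/d_{x_n}, |x_n|) \to \infty$, i.e. the points either escape to infinity or approach the boundary $A'$ of $A$. The claim is that $S$ cannot contain a sequence $x_n \uparrow A$; for if it did, then $\lim_n f(x_n) = \infty$ by hypothesis, contradicting $f(x_n) \leq f(x_0) < \infty$.

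To turn this into containment in a compact set, I would argue by contrapositive using the structure of the sets $K$ from equation~(\ref{Kdef}). Suppose $S$ were not contained in any compact subset of $A$. Then for each $M > 0$ the compact set $K_M := \{x \in A : |x| \leq M,\ d_x \geq 1/M\}$ fails to contain $S$, so we may pick $x_M \in S \setminus K_M$; this gives a sequence with $\max(1/d_{x_M}, |x_M|) > M$ for each $M$, hence a sequence $x_n \uparrow A$ lying in $S$, yielding the contradiction above. Therefore $S \subset K_{M_0}$ for some $M_0$, and $K_{M_0}$ is a compact subset of $A$. Applying the first part of the lemma with $U = S$ and $K = K_{M_0}$ then shows $f$ is strongly convex on $S$, completing the proof. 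The only subtlety to handle carefully is the measurability/nonemptiness conventions: $S$ is nonempty since $x_0 \in S$, and all the compactness arguments are with respect to the standard topology on $\R^l$ restricted to $A$.
\end{proof}
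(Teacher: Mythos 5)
Your proof is correct and follows essentially the same route as the paper's: the first part via continuity of $x\to m_l(\nabla^2 f(x))$ (Lemma \ref{lemLipschMin}) and compactness of $K$ to get a uniform positive lower eigenvalue bound, and the second part by showing $S$ is contained in some compact set of the form (\ref{Kdef}) using $\lim_{x\uparrow A}f(x)=\infty$. You simply spell out in more detail the contrapositive argument that the paper leaves implicit.
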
 
\begin{proof} 
From Lemma \ref{lemLipschMin}, $b\in A\to m_l(\nabla^2f(b))$ is continuous and thus $f$ is strongly convex on $U$ 
with a constant $\inf_{x\in K}m_l(\nabla^2f(x))>0$. From the convexity of $f$, $S$ as above is convex. 
Furthermore, if $\lim_{x\uparrow A}f(x)= \infty$, 
then for a sufficiently large $M$, for a compact set $K$ as in (\ref{Kdef}) we have $S\subset K$. 
\end{proof}

\section{\label{secminGrad}Minimization of estimators with gradient-based stopping criteria} 
In this section we define single- and multi-stage minimization methods of estimators 
with gradient-based stopping criteria, 
abbreviated as GSSM and GMSM respectively. We also discuss the possibility of their application to the minimization of 
the well-known mean square estimators in the LETGS setting and 
both the well-known and the new mean square estimators in the ECM setting. 

Consider some sets $T$ and $B$ as in Section \ref{secESSM} and let additionally such a $B$ be open. 
GSSM and GMSM are special cases of the following 
minimization method of random functions with gradient-based stopping criteria, abbreviated as GM. 
In GM we assume Condition \ref{condftdef}. Furthermore, we assume that 
$b\to \wh{f}_t(b,\omega)$ is differentiable, $t\in T$, $\omega \in \Omega_1$, and that we are given
$[0,\infty]$-valued random variables $\epsilon_t$, $t \in T$, such that a.s. 
\begin{equation}\label{epstozero} 
\lim_{t\to \infty}\epsilon_t=0
\end{equation}
and
\begin{equation}\label{nablawhfdt}
|\nabla_b \wh{f}_t(d_t(\omega),\omega)| \leq \epsilon_t(\omega),\quad \omega \in  G_t,\ t \in T.
\end{equation}
We shall further need the following conditions and lemmas.
\begin{condition}\label{condMesDiff}
Condition \ref{condMesDn} holds, for each $n \in \N_p$ and $(b',\omega)\in D_n$, 
$b \in B\to \wh{\est}_n(b',b)(\omega)$ is differentiable, 
and $b_n^*(b',\omega)$ is equal to the unique point  $c\in B$ such that 
$\nabla_b \wh{\est}_n(b',c)(\omega)=0$. 
\end{condition}

\begin{lemma}\label{lemMesDiff} 
Condition \ref{condMesDiff} implies Condition \ref{condmesbs}. 
\end{lemma}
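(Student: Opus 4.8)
The plan is to verify Condition~\ref{condmesbs}, i.e. that for each $n\in\N_p$ the map $(b',\omega)\to b_n^*(b',\omega)$ is measurable from $\mc{S}'_n=(D_n,\mc{F}_n')$ to $\mc{S}(B)$; the set-theoretic part of Condition~\ref{condmesbs} (that Condition~\ref{condMesDn} holds) is already granted by Condition~\ref{condMesDiff}. The key observation is that Condition~\ref{condMesDiff} supplies, for each fixed $(b',\omega)\in D_n$, the differentiability and hence the continuity of $b\in B\to\wh{\est}_n(b',b)(\omega)$, while Condition~\ref{condMesDn} supplies that $b_n^*(b',\omega)$ is its unique global minimum point. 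Since $B$ is an open subset of $\R^l$, its Borel $\sigma$-field is generated by the countable family of closed balls $\overline{B}_l(a,\epsilon)$ with $a$ ranging over a countable dense subset of $B$, $\epsilon\in\Q_+$, and $\overline{B}_l(a,\epsilon)\subset B$. Thus it suffices to show that each set $\{(b',\omega)\in D_n:b_n^*(b',\omega)\in\overline{B}_l(a,\epsilon)\}$ lies in $\mc{F}_n'$.

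The central step is a characterization of this event purely in terms of infima of $\wh{\est}_n$. For such a closed ball, which is compact and contained in $B$, I would argue that for $(b',\omega)\in D_n$ one has $b_n^*(b',\omega)\in\overline{B}_l(a,\epsilon)$ if and only if $\inf_{c\in\overline{B}_l(a,\epsilon)}\wh{\est}_n(b',c)(\omega)=\inf_{c\in B}\wh{\est}_n(b',c)(\omega)$. Indeed, the global infimum on the right is attained at $b_n^*(b',\omega)$, so if the minimizer lies in the ball the two infima coincide; conversely, if it lies outside, then by continuity the infimum over the compact ball is attained at some point different from $b_n^*(b',\omega)$, whose value strictly exceeds the global minimum by the uniqueness in Condition~\ref{condMesDn}. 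This is the step where both continuity (to get attainment on the compact ball) and uniqueness (to get the strict inequality) are genuinely used.

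It then remains to check that both $(b',\omega)\to\inf_{c\in\overline{B}_l(a,\epsilon)}\wh{\est}_n(b',c)(\omega)$ and $(b',\omega)\to\inf_{c\in B}\wh{\est}_n(b',c)(\omega)$ are measurable on $\mc{S}'_n$. Fixing $c\in B$, the map $(b',\omega)\to\wh{\est}_n(b',c)(\omega)$ is the composition of the jointly measurable $\wh{\est}_n$ with the measurable insertion $(b',\omega)\to(b',c,\omega)$, hence measurable from $\mc{S}'_n$ to $\mc{S}(\R)$. Using the continuity of $c\to\wh{\est}_n(b',c)(\omega)$ together with a countable dense subset $Q$ of $B$, each infimum over $B$ equals the countable infimum over $Q$, and each infimum over $\overline{B}_l(a,\epsilon)$ equals the countable infimum over $Q\cap B_l(a,\epsilon)$ (the open ball being dense in its closure). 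A countable infimum of measurable functions being measurable, both maps are measurable, and the characterization from the previous paragraph places $\{b_n^*\in\overline{B}_l(a,\epsilon)\}$ in $\mc{F}_n'$, completing the proof.

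The main obstacle, and the point requiring care, is the boundary case in the characterization: one must work with closed (compact) balls rather than open ones, since for an open ball the infimum can equal the global infimum even when the minimizer lies on the boundary and hence outside the ball, so the clean equivalence would fail. A secondary bookkeeping point is to confirm that fixing $c$ and restricting to $D_n$ preserves measurability, which the argument handles by reducing every infimum to a countable one over $Q$ and thereby avoiding any appeal to joint measurability on $D_n\times B$ or to measurable selection theorems. Finally, I would remark that the differentiability hypothesis in Condition~\ref{condMesDiff} enters this argument only through the continuity it entails, the gradient-based description of $b_n^*$ not being needed here.
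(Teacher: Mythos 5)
Your proof is correct, and it takes a genuinely different route from the paper's. The paper works with the gradient characterization: it observes that under Condition \ref{condMesDiff} the preimage $(b_n^*)^{-1}(D)$ is the projection onto $D_n$ of $\{(c,(b',\omega)):\nabla_b\wh{\est}_n(b',c)(\omega)=0\}\cap(D\times D_n)$, a set in $\mc{B}(B)\otimes\mc{F}_n'$, and concludes measurability from there --- an argument that implicitly leans on the projection having singleton fibers (uniqueness of the critical point) and is rather terse about why such a projection of a product-measurable set is measurable. You instead use the variational characterization: $b_n^*(b',\omega)$ lies in a given closed ball $\overline{B}_l(a,\epsilon)\subset B$ exactly when the infimum of $\wh{\est}_n(b',\cdot)(\omega)$ over that ball equals its global infimum over $B$, and both infima are measurable in $(b',\omega)$ because continuity lets you replace them by countable infima over a dense set. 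Your route is more elementary --- it needs no projection or measurable-selection argument --- and it in fact proves slightly more: only Condition \ref{condMesDn}, the joint measurability of $\wh{\est}_n$, and the continuity of $b\to\wh{\est}_n(b',b)(\omega)$ are used, so the gradient-based description of $b_n^*$ in Condition \ref{condMesDiff} plays no role beyond supplying continuity, exactly as you remark. The two points where care is genuinely required --- using closed rather than open balls so that the equivalence does not fail when the minimizer sits on a boundary, and the attainment of the infimum on the compact ball combined with uniqueness to get the strict inequality in the converse direction --- are both handled correctly.
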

\begin{proof} 
A function $(b,(b',\omega))\in B\times D_n  \to g(b,(b',\omega)) :=\nabla_b \wh{\est}_n(b',b)(\omega)$ is measurable 
from $\mc{S}(B) \otimes (D_n,\mc{F}'_n)$ to $\mc{S}(\R^l)$ and for each $D \in \mc{B}(B)$, 
$(b^*_n)^{-1}(D)=\{(b',\omega) \in D_n: \text{there exists }c \in D, \text{ such that } g(c,(b',\omega))=0\}$ 
is a projection of $g^{-1}(0)\cap (D \times D_n) \in \mc{B}(B)\otimes\mc{F}'_n$ 
onto the second coordinate. Thus, $(b_n^*)^{-1}(D)\in \mc{F}'_n$, $D \in \mc{B}(B)$. 
\end{proof} 

\begin{condition}\label{condSmooth} 
The set $B$ is convex. Furthermore, for each $n \in \N_p$, a set 
$\wt{D}_n \in \mc{F}_1^n$ is such that for each $b'\in A$ 
and $\omega \in \wt{D}_n$, $b \in B\to g(b):=\wh{\est}_n(b',b)(\omega)$ 
is smooth with a positive definite Hessian on $B$, and $\lim_{b\uparrow B}g(b)= \infty$. 
\end{condition} 

\begin{lemma}\label{lemCondSmooth} 
Condition \ref{condSmooth} implies Condition \ref{condMesDiff} 
for $b_n^*(b',\omega)$ as in Condition \ref{condMesDn} and $D_n=A\times \wt{D}_n$, $n \in \N_p$.
\end{lemma}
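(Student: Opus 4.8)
The plan is to verify, one clause at a time, that Condition~\ref{condSmooth} delivers every requirement of Condition~\ref{condMesDiff} for the choice $D_n=A\times\wt{D}_n$. The measurability requirement of Condition~\ref{condMesDn} is immediate: since $\wt{D}_n\in\mc{F}_1^n$ and trivially $A\in\mc{B}(A)$, we have $D_n=A\times\wt{D}_n\in\mc{B}(A)\otimes\mc{F}_1^n$, $n\in\N_p$. The substantive work is to show that for each $(b',\omega)\in D_n$ the section $g(b):=\wh{\est}_n(b',b)(\omega)$ has a unique minimum point on $B$, that it is differentiable, and that this minimum point is exactly the unique zero of $\nabla_b g$.

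First I would fix $(b',\omega)\in D_n$, so that $\omega\in\wt{D}_n$, and invoke Condition~\ref{condSmooth} to obtain that $g$ is smooth on the open convex set $B$ with positive definite Hessian and with $\lim_{b\uparrow B}g(b)=\infty$. Smoothness gives differentiability for free. Positive definiteness of $\nabla^2 g$ on the convex set $B$ yields strict convexity of $g$ on $B$; concretely, applying Lemma~\ref{lemStrConvA} along the segment joining any two distinct points of $B$ (each such segment lies in a compact convex subset of the open set $B$) gives strong, hence strict, convexity there, and since the two points were arbitrary $g$ is strictly convex on all of $B$. Being real-valued, $g>-\infty$, and being smooth it is continuous, hence lower semicontinuous; together with $\lim_{b\uparrow B}g(b)=\infty$ this is Condition~\ref{condInf}, and with the strict convexity on the nonempty convex set $B$ (on which $g$ is finite) it is Condition~\ref{condStrict}. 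Lemma~\ref{lemMinPoint} then furnishes the unique minimum point $b^*_n(b',\omega)$, establishing Condition~\ref{condMesDn}.

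It remains to identify $b^*_n(b',\omega)$ with the unique critical point of $g$. Since $B$ is open and $b^*_n(b',\omega)$ is an interior minimizer, first-order optimality gives $\nabla_b\wh{\est}_n(b',b^*_n(b',\omega))(\omega)=0$. Conversely, if $c\in B$ satisfies $\nabla_b\wh{\est}_n(b',c)(\omega)=0$, then the first-order convexity inequality $g(b)\geq g(c)+(\nabla g(c))^T(b-c)=g(c)$, valid for all $b\in B$, shows $c$ is a global minimizer of $g$, whence $c=b^*_n(b',\omega)$ by the uniqueness just obtained. Thus $b^*_n(b',\omega)$ is the unique zero of $\nabla_b g$, which is the last assertion of Condition~\ref{condMesDiff}.

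I expect the only delicate point to be upgrading the pointwise positive-definiteness of the Hessian to genuine strict convexity on the whole of $B$: Lemma~\ref{lemStrConvA} as stated gives strong convexity only on convex subsets that sit inside a compact subset of $B$, so the argument must be localised to segments or sublevel sets and then patched together, and one must match the hypotheses of Condition~\ref{condSmooth} precisely to those of Condition~\ref{condStrict} (in particular that the finiteness set of $g$ is the nonempty convex set $B$). Everything else is a routine application of the convex-analysis lemmas of Sections~\ref{secGen} and~\ref{secStrong}.
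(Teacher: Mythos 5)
Your proof is correct and follows essentially the same route as the paper, which simply cites Lemma \ref{lemMinPoint} (existence and uniqueness of the minimizer under Condition \ref{condStrict}) together with Lemma \ref{lemConvFun}. The only cosmetic difference is that for uniqueness of the critical point the paper invokes the injectivity of the gradient of a strictly convex function (Lemma \ref{lemConvFun}), whereas you use the first-order convexity inequality to show every critical point is a global minimizer; these are interchangeable, and your localisation of the positive-definite Hessian to segments to obtain strict convexity is a valid (if slightly roundabout) way of filling in the standard step.
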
 
\begin{proof} 
It follows from lemmas \ref{lemConvFun} and \ref{lemMinPoint}. 
\end{proof} 

Except for some differences mentioned below, we define GSSM and GMSM in the same way as ESSM and EMSM in the previous section. 
The first difference is that in GSSM and GMSM we additionally assume that Condition \ref{condMesDiff} holds for 
$B$ as above and we consider $[0,\infty]$-valued random variables $\epsilon_t$, $t \in T$, such that a.s. (\ref{epstozero}) holds. 
Furthermore, in GSSM, for $t \in T$, on $G_t$ as in (\ref{kappaDk}), 
instead of (\ref{ckdef}) we require that $|\nabla_b\wh{\est}_{k}(b',d_t)(\wt{\kappa}_{k}(\omega))|\leq \epsilon_t$, 
while in GMSM, for $k \in \N_+$, on $G_k$ as in (\ref{chiDk}), instead of (\ref{akdef}) we 
require that $|\nabla_b\wh{\est}_{n_k}(b_{k-1},d_k)(\wt{\chi}_k)|\leq \epsilon_k$. 

Note that GSSM and GMSM are special cases of GM under the identifications as in Remark \ref{remIdentEM}. 
Such identifications shall be frequently considered below. 
From Lemma \ref{lemMesDiff}, for $\epsilon_t=0$, $t \in T$, 
GSSM and GMSM become special cases of ESSM and EMSM respectively. 

\begin{remark}\label{remRealizGM}
Let us discuss how one can construct the variables $d_t$, $t \in T$, in GSSM and GMSM, 
assuming that the other variables as above are given. Let $t \in T$. 
From Assumption \ref{condMesDiff}, on an arbitrary event $A_t$ contained in the appropriate $G_t$ as above, like 
$A_t=G_t$ or $A_t=G_t\cap \{\epsilon_t =0\}$, we can take in GSSM $d_t= b_k^*(b',\wt{\kappa}_k)$ and in GMSM 
$d_t= b_{n_t}^*(b_{t-1},\wt{\chi}_t)$. Note that from Lemma \ref{lemMesDiff},
in both these cases $d_t$ is measurable on $A_t$. 
Unfortunately, in the examples discussed below such $d_t(\omega)$, $\omega \in A_t$, typically cannot be found in practice. 
Let now $\omega \in \Omega$ be such that $\epsilon_t(\omega)>0$. Then, under some additional assumptions on 
$b\to g(b):=\wh{f}_t(b,\omega)$, 
$d_t(\omega)$ in GSSM or GMSM can be a result of some globally convergent 
iterative minimization method (i.e. one in which the gradients in the subsequent points converge to zero), 
minimizing $g$, started at $x_0$ equal to $b'$ in GSSM or $b_{t-1}(\omega)$ in GMSM, 
and stopped in the first point $d_t(\omega)$ in which (\ref{nablawhfdt}) holds. 
As such an iterative method one can potentially use the damped Newton method, for the global and quadratic convergence of 
which it is sufficient if $g$ is strongly convex 
on the sublevel set $S:=\{x\in B: g(x)\leq g(x_0)\}$, $g$ is twice continuously differentiable on some open neighbourhood of such an $S$, and 
the second derivative of $g$ is Lipschitz on $S$ (see Section 9.5.3 in \cite{Boyd_2004}). 
From Lemma \ref{lemStrConvA}, such assumptions 
hold in the above discussed GSSM and GMSM methods if 
Condition \ref{condSmooth} holds, we consider the corresponding $D_n$, $n \in \N_p$,  as in Lemma \ref{lemCondSmooth}, 
and we have $\omega \in G_t$. 
See \cite{Boyd_2004} and \cite{nocedal2006numerical} for some other examples 
of globally convergent minimization methods requiring typically weaker assumptions. 
In Remark \ref{remDetGM} below we discuss a situation when one can perform some minimization method 
of a $g$ as above for each $\omega \in \Omega$ such that $\epsilon(\omega)>0$. 
For most iterative minimization methods, including the damped Newton method, if the same method is used for each 
$\omega$ in some event $B_t$ contained in $\{\epsilon_t>0\}$, 
then the fact that the resulting $d_t$ is measurable on $B_t$ 
follows from the definition of the method. 
On $G_t'$ one can define $d_t$ in similar ways as for ESSM or EMSM in the previous section.
\end{remark} 

\begin{condition}\label{condsuffstrict} 
The above set $B$ is an open convex neighbourhood of some $b^* \in \R^l$, and  
$\epsilon\in\R_+$ is such that   $\overline{B}_l(b^*,\epsilon) \in B$. 
Furthermore, for some  $\wh{\est}_n$ as in (\ref{estnDef}) for some $n \in \N_p$, 
$b' \in A$, and  $\omega \in \Omega_1^n$ are such that 
\begin{equation}\label{slestg} 
\inf_{b \in S_l(b^*,\epsilon)}\wh{\est}_n(b',b)(\omega)> \wh{\est}_n(b',b^*)(\omega). 
\end{equation} 
\end{condition} 
The following remark will be useful for proving the convergence properties of the GM methods in the below examples. 
\begin{remark}\label{remsuffstrict} 
Consider the LETGS setting. Then, from Theorem \ref{thMsqStr}, if 
Condition \ref{condsuffstrict} holds for $\wh{\est}_n=\wh{\msq}_n$, 
then for  $a=b^*$ and each $b \in \R^l\setminus\{0\}$ we cannot have (\ref{fatb}) for 
\begin{equation}\label{badt} 
t= \frac{\epsilon}{|b|}, 
\end{equation} 
and thus $r_n(\omega)$ holds. 
Let us now consider the ECM setting. Then, if Condition \ref{condsuffstrict} holds for $\wh{\est}_n=g_{\var,n}$ (see (\ref{gvardef}))
then for $a=b^*$ and each $b\in \R^l\setminus\{0\}$ we cannot have (\ref{gnvarsm}) for $t$ as in (\ref{badt}). Thus, 
from Theorem \ref{thgvarInf}, in such a case system (\ref{sysIneq}) has only the zero solution.  
\end{remark}

For the GSSM and GMSM methods in the below examples we shall discuss when Condition \ref{condSmooth} holds in them and we consider 
Condition \ref{condMesDiff} holding in them as a result of Lemma \ref{lemCondSmooth}. For GMSM in all the below examples we assume that conditions 
\ref{condLiKi} and \ref{condLi} hold (where in the ECM setting we mean the counterparts of these conditions). 

Let us first discuss GSSM and GMSM for $\wh{\est}_n=\wh{\msq}_n$, $n \in \N_+$, in the LETGS setting. 
From Theorem \ref{thMsqStr}, we can and shall take in 
Condition \ref{condSmooth}, $\wt{D}_n=\{\omega  \in \Omega_1^n:r_n(\omega)\}$, $n \in \N_+$.  
Let us assume conditions \ref{condzntau} and \ref{cond1} and that for some $b\in A$, $\msq(b)<\infty$, so that 
from Theorem \ref{thMsq}, we can and shall take in Condition \ref{condAf}, $f=\msq$. 
In GSSM, from Condition \ref{condT} and Theorem \ref{thPos}, Condition \ref{condKappaDk} holds. 
From the SLLN and Lemma \ref{lemSuff} for 
\begin{equation}\label{hmsq}
r(b,x)=(Z^2L'L(b))(x),\quad b \in A,\ x \in \Omega_1, 
\end{equation} 
and $Y_i=\kappa_i$, $i \in \N_+$, for $\PR$ a.e. $\omega \in \Omega$, Condition \ref{condConvBel} 
holds for $B=A$ and $f_n(b)=\wh{\msq}_{n}(b',b)(\wt{\kappa}_n(\omega))$. 
Thus, from Theorem \ref{convMinGenFun}, (\ref{nablawhfdt}), and (\ref{epstozero}), conditions \ref{condESSM1} and \ref{condESSM1f} hold. 
For GMSM let us assume that Condition \ref{condSpr} holds 
for $S=Z^2$. Then, from the third point of Theorem \ref{thunifDiffMult} and 
remarks \ref{remsuffBelUnif}, \ref{remIneqs}, and \ref{remsuffstrict}, a.s. for a sufficiently large $k$, $r_{n_k}(\wt{\chi}_k)$ 
holds, i.e. Condition \ref{condKappaDk} holds. Thus, from (\ref{epstozero}), (\ref{nablawhfdt}), and 
Theorem \ref{convMinGenFun},  conditions \ref{condESSM1} and \ref{condESSM1f} hold too. 

Let us now consider the ECM setting, assuming the following condition. 
\begin{condition}\label{condECMFull} 
Conditions \ref{condtX} and \ref{condPartvm} hold and $f=\msq$ satisfies Condition \ref{condAf}. 
\end{condition} 
From Lemma \ref{lemMinPoint}, $f=\msq$ satisfies Condition \ref{condAf} 
for instance when $f=\msq$ satisfies Condition \ref{condStrict},
which due to Lemma \ref{lempmsq} holds e.g. if for some $b \in A$, $\msq(b)<\infty$, and 
\begin{equation}\label{pq1msq} 
\PQ_1(p_{\msq})>0. 
\end{equation} 

Let us first consider the case of $\wh{\est}_n=\wh{\msq}_n$, $n \in \N_+$, for which let us assume (\ref{pq1msq}). 
From remarks \ref{remMsqEst} and \ref{remMsqECM}, we can and shall take in Condition \ref{condSmooth},
$\wt{D}_n=\{\omega  \in \Omega_1^n:p_{\msq}(\omega_i) \text{ holds for some } i \in \{1,\ldots,n\}\}$, $n \in \N_+$. 
In GSSM, from the SLLN, a.s. $\lim_{n\to \infty}\overline{(p_{\msq})}_n(\wt{\kappa}_n)= \PQ_1(p_{\msq})$, so that 
from (\ref{pq1msq}) and the SLLN we have a.s. $\wt{\kappa}_k\in \wt{D}_k$ for a sufficiently 
large $k$. Thus, from Condition \ref{condT}, Condition \ref{condKappaDk} holds. Using further Lemma \ref{lemSuff} for $r$ as in (\ref{hmsq}) and 
Theorem \ref{convMinGenFun},  conditions \ref{condESSM1} and \ref{condESSM1f} hold too. 
In GMSM, let $A=\R^l$. Since the counterpart of Condition \ref{condSpr} for ECM is fulfilled 
for $S=\I(p_{\msq})$, from the first point of Theorem \ref{thunifDiffMult}, 
a.s. $\overline{(L(b_{k-1})S)}_{n_k}(\wt{\chi}_k) \to \PQ_1(p_{\msq})$. Thus, from (\ref{pq1msq}), 
Condition \ref{condKappaDk} holds. 
Let us assume Condition \ref{condSpr} for $S=Z^2$. Then, 
from the third point of Theorem \ref{thunifDiffMult} and
Theorem \ref{convMinGenFun}, conditions \ref{condESSM1} and \ref{condESSM1f} hold. 

Let us now consider for each $n \in \N_2$ 
\begin{equation}\label{estngvar} 
\wh{\est}_n(b',b)=\wt{\msq2}_n(b',b):=\frac{1}{n^2}g_{\var,n}(b',b)+\frac{1}{n}\overline{((ZL')^2)}_n,\quad b'\in A,\ b \in B=\R^l 
\end{equation}
(see (\ref{gvardef})). 
Then, from (\ref{gvarfvar}), (\ref{fnvardef}), and (\ref{msq2sum}) 
\begin{equation}\label{estnbpb}
\wt{\msq2}_n(b',b)= \wh{\msq2}_{n}(b',b), \quad b', b \in A.
\end{equation}
Note that $\frac{1}{n}\overline{((ZL')^2)}_n$ does not depend on $b$. 
Thus, from Theorem \ref{thgvarInf}, we can and shall take in Condition \ref{condSmooth}, 
$\wt{D}_n=\{\omega  \in \Omega_1^n: \text{ system (\ref{sysIneq}) has only the zero solution}\}$.
In GSSM, from the SLLN and Lemma \ref{lemProdBel} for $r_1(b,y)=(Z^2L'L(b))(y)$, $r_2(b,y)=\frac{L'}{L(b)}(y)$, $b \in \R^l$, $y \in \Omega_1$,
and $Y_i=\kappa_i$, $i \in \N_+$, for $\PR$ a.e. $\omega \in \Omega$, Condition \ref{condConvBel} holds for $B=A$ and
$f_n(b)=\wh{\msq2}_{n}(b',b)(\wt{\kappa}_n(\omega))$, $b \in A$, and thus from (\ref{estnbpb}) it holds also for $B=\R^l$ and 
$f_n(b)=\wt{\msq2}_n(b',b)(\wt{\kappa}_n(\omega))$, $b \in B$. 
Therefore, from remarks \ref{remIneqs} and \ref{remsuffstrict} and Condition \ref{condT}, Condition \ref{condKappaDk} holds.
Using further Theorem \ref{convMinGenFun}, conditions \ref{condESSM1} and \ref{condESSM1f} hold as well. 
In GMSM, let $A=\R^l$ and let us assume that the counterpart of 
Condition \ref{condSpr} for ECM 
holds for $S=Z^2$ (note that for $S=1$ it holds automatically). Then, from the fifth 
point of Theorem \ref{thunifDiffMult} and from 
remarks \ref{remIneqs} and \ref{remsuffstrict}, Condition \ref{condKappaDk} holds. 
Using further Theorem \ref{convMinGenFun}, conditions \ref{condESSM1} and \ref{condESSM1f} 
hold as well. 

\begin{remark}\label{remDetGM}
Checking if $G_t$ holds in possible practical realizations of GSSM or GMSM methods, 
as it can be done when using the damped Newton method as discussed in Remark \ref{remRealizGM},
may be inconvenient. For instance, for $\wh{\est}_n=\wh{\msq}_n$ in the LETGS setting  or 
$\wh{\est}_n$ as in (\ref{estngvar}) in the ECM setting as above, this typically cannot be done precisely due to numerical errors, 
and one has to make a rather arbitrary decision when such a condition holds approximately. 
From the below discussion, in the latter case one can avoid checking if $G_t$ holds 
and perform some minimization method of a $g$ as in Remark \ref{remRealizGM} for each $\omega \in \Omega$ such that $\epsilon(\omega)>0$. 
From the Zoutendjik theorem (see Theorem 3.2 in \cite{nocedal2006numerical}), 
for a number of line search minimization methods of a function  $g:B\to \R$ started at $x_0 \in B$ to be globally convergent 
it is sufficient if $g$ is bounded from below and continuously differentiable on some open neighbourhood 
$\mc{N} \subset B$ of the sublevel set $\{x \in B:g(x)\leq g(x_0)\}$, and if 
$\nabla g$ is Lipschitz on $\mc{N}$. 
In particular, it is sufficient if, in addition to the boundedness from below, $g$ is twice differentiable and 
$||\nabla^2g||_{\infty}$ is bounded on such an $\mc{N}$. One of the methods for which this holds is gradient descent 
with step lengths satisfying the Wolfe conditions; 
see \cite{nocedal2006numerical}. 
Note that from (\ref{gvardef}), (\ref{nabla2g}), and $||vv^T||_{\infty}=|v|^2$, $v \in \R^l$, for $\omega \in \Omega_1^n$ and 
$K=\max_{i,j \in \{1,\ldots,n\}} |v_{j,i}(\omega)|^2$, we have for each $b'\in A$ and $b\in \R^l$
\begin{equation}
\begin{split}
||\nabla^2_bg_{\var,n}(b',b)(\omega)||_{\infty}&\leq 
\sum_{i=1}^n(Z^2L')(\omega_i)\sum_{j\in \{1,\ldots,n\}, j \neq i}L'(\omega_j)||v_{j,i}(\omega)v_{j,i}(\omega)^T||_{\infty}\exp(b^Tv_{j,i}(\omega))\\
&\leq K g_{\var,n}(b',b)(\omega). \\
\end{split}
\end{equation}
Thus, for $\wh{\est}_n$ as in (\ref{estngvar}) it also holds
$||\nabla^2_b\wh{\est}_{n}(b',b)(\omega)||_{\infty}\leq K \wh{\est}_{n}(b',b)(\omega)$.
From this it follows that for $g$ as in Remark \ref{remRealizGM} corresponding to the GSSM or GMSM methods 
for $\wh{\est}_n$ as above, for each $x_0 \in \R^l$ and $\delta\in\R_+$, the assumptions of the Zoutendjik theorem as above
hold for $\mc{N}= \{x \in \R^l: g(x)< g(x_0)+\delta\}$. 
\end{remark}

\section{Helper theorems for proving the convergence properties of 
multi-phase minimization methods}
\begin{theorem}\label{thConvStrong}
Let $U\subset \R^l$ be an open ball with a center $b^*$ and 
$f:U\to \R$ be strongly convex with a constant $s\in \R_+$.  
Let $f_n:U\to \R$, $n \in \N_+$, be twice differentiable and such that 
$\nabla^2f_n \rightrightarrows \nabla^2f$. 
Then, for each $0<m<s$, for a sufficiently large $n$, 
$f_n$ is strongly convex with a constant $m$. 
Let further $b^*$ as above be the minimum point of $f$ and 
$\nabla f_n \rightrightarrows \nabla f$.  
Then, for a sufficiently large $n$, $f_{n}$ 
possesses a unique minimum point $a_n$, which is equal to the unique point $x\in U$ for which $\nabla f_n(x)=0$, and 
each $b \in U$ is a $\frac{1}{2m}|\nabla f_{n}(b)|^2$-minimizer of $f_{n}$. 
Furthermore,  $\lim_{n\to \infty}a_n = b^*$. 
\end{theorem} 
\begin{proof} 
Let $0<m<s$. From Lemma \ref{lemLipschMin}, for the sufficiently large $n$ for which 
$||\nabla^2f_n(x)-\nabla^2f(x)||_{\infty}<s-m$, $x \in U$, 
we have $m_l(\nabla^2f_n(x)) >m$, $x\in U$, 
so that $f_n$ is strongly convex with a constant $m$. Under the additional assumptions as above, 
let $h_n=f_n + f(b^*)-f_n(b^*)$, $n \in \N_+$. Then, $h_n(b^*)=f(b^*)$ 
and $\nabla h_n=\nabla f_n\rightrightarrows \nabla f$, so that $h_n \rightrightarrows f$. 
Furthermore, since $\nabla^2h_{n}=\nabla^2f_{n}$, $n \in \N_+$, $h_{n}$ is strongly convex for a sufficiently large $n$. 
Thus, from Remark \ref{remsuffBelUnif} and Theorem 
\ref{convMinGenFun}, for a sufficiently large $n$, $h_{n}$ and thus also $f_{n}$ possesses 
a unique minimum point $a_n$
and $\lim_{n\to \infty }a_n=b^*$. 
The rest of the thesis follows from the discussion in Section \ref{secStrong}. 
\end{proof}

\begin{condition}\label{condUniffn} 
A function $f:\R^l\to \R$ is continuous, functions $f_n:\R^l\rightarrow \R$, $n \in \N_+$, are such that $f_n\overset{loc}{\rightrightarrows}f$, 
and for a sequence $d_n \in \R^l$, $n \in \N_+$, we have 
\begin{equation}\label{limdnds}
\lim_{n\rightarrow\infty}d_n=d^*\in \R^l.
\end{equation}
\end{condition}


We have the following easy-to-prove lemma. 
\begin{lemma}\label{lemConvUnif}
If Condition \ref{condUniffn} holds, then $\lim_{n\to\infty} f_n(d_n)=f(d^*)$. 
\end{lemma}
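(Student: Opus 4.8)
The plan is to prove that $\lim_{n\to\infty} f_n(d_n)=f(d^*)$ under Condition \ref{condUniffn} by a standard triangle-inequality argument, splitting the deviation $f_n(d_n)-f(d^*)$ into a piece controlled by the locally uniform convergence $f_n\overset{loc}{\rightrightarrows}f$ and a piece controlled by the continuity of $f$ together with the convergence $d_n\to d^*$.

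First I would fix $\epsilon>0$ and choose a compact neighbourhood of $d^*$ that eventually contains all $d_n$. Since $\lim_{n\to\infty}d_n=d^*$, the sequence $(d_n)$ is bounded, so there exists $M\in\R_+$ such that $d_n\in K:=\overline{B}_l(d^*,M)$ for all $n$ (and $d^*\in K$ as well). This $K$ is a nonempty compact subset of $\R^l$, which is the relevant domain. Then I would write
\begin{equation}
|f_n(d_n)-f(d^*)|\leq |f_n(d_n)-f(d_n)| + |f(d_n)-f(d^*)|.
\end{equation}

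For the first term on the right, since $f_n\overset{loc}{\rightrightarrows}f$ and $K$ is compact, for a sufficiently large $n$ we have $\sup_{x\in K}|f_n(x)-f(x)|<\frac{\epsilon}{2}$, and in particular $|f_n(d_n)-f(d_n)|<\frac{\epsilon}{2}$ because $d_n\in K$. For the second term, the continuity of $f$ at $d^*$ together with $d_n\to d^*$ gives $|f(d_n)-f(d^*)|<\frac{\epsilon}{2}$ for a sufficiently large $n$. Combining the two bounds yields $|f_n(d_n)-f(d^*)|<\epsilon$ for all sufficiently large $n$, which is the claim.

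This statement is genuinely easy, so there is no serious obstacle; the only point requiring a little care is ensuring that $d_n$ lies in a single fixed compact set on which the uniform convergence can be applied, rather than trying to apply uniform convergence on a moving domain. Boundedness of the convergent sequence $(d_n)$ supplies exactly such a compact set, after which the two estimates are immediate.
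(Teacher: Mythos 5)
Your proof is correct and is essentially the standard argument the paper has in mind (the lemma is stated there as "easy-to-prove" with no written proof): bound the sequence $(d_n)$ in a fixed compact ball around $d^*$, then split $|f_n(d_n)-f(d^*)|$ via the triangle inequality into a term handled by uniform convergence on that ball and a term handled by continuity of $f$. Nothing is missing.
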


\begin{theorem}\label{thConvda}
Assuming condition \ref{condUniffn}, let for a bounded sequence $s_n \in \R^l$, $n\in \N_+$, 
it hold $f_n(s_n) \leq f_n(d_n)$, $n \in \N_+$. Then, 
\begin{equation}\label{limsupfan}
\limsup_{n\rightarrow \infty}f(s_n) \leq f(d^*).
\end{equation}
Let further $d^*\in \R^l$ be the unique minimum point of $f$. Then, 
\begin{equation}\label{limfan}
\lim_{n\to \infty}f(s_n) = f(d^*).
\end{equation}
If further $f$ is convex, then
\begin{equation}\label{limsnds}
\lim_{n\to \infty}s_n=d^*.
\end{equation}
\end{theorem}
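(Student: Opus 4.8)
Theorem \ref{thConvda} asks us to prove three conclusions under Condition \ref{condUniffn} together with the hypothesis $f_n(s_n) \leq f_n(d_n)$ for a bounded sequence $s_n$. Let me sketch the proof.

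=== PROOF PROPOSAL ===

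The plan is to exploit the interplay between local uniform convergence $f_n \overset{loc}{\rightrightarrows} f$, the boundedness of $(s_n)$, and the defining inequality $f_n(s_n) \leq f_n(d_n)$, reducing each claim to a subsequence argument.

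First, to prove (\ref{limsupfan}) I would argue by contradiction via subsequences. Suppose $\limsup_{n\to\infty} f(s_n) > f(d^*)$. Then there is a subsequence $(s_{n_k})$ along which $f(s_{n_k})$ converges to some limit $\ell > f(d^*)$. Since $(s_n)$ is bounded, we may pass to a further subsequence (relabelled) so that $s_{n_k} \to s^*$ for some $s^* \in \R^l$; by continuity of $f$ we get $f(s^*) = \ell > f(d^*)$. Now I would apply Lemma \ref{lemConvUnif} twice: once to the sequence $(s_{n_k})$, which satisfies Condition \ref{condUniffn} (the $f_{n_k}$ still converge locally uniformly to $f$, and $s_{n_k}\to s^*$), giving $\lim_k f_{n_k}(s_{n_k}) = f(s^*)$; and once to the original $(d_n)$, giving $\lim_n f_n(d_n) = f(d^*)$. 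Passing to the limit in $f_{n_k}(s_{n_k}) \leq f_{n_k}(d_{n_k})$ then yields $f(s^*) \leq f(d^*)$, contradicting $f(s^*) > f(d^*)$. This establishes (\ref{limsupfan}).

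Next, for (\ref{limfan}), assume additionally that $d^*$ is the unique minimum point of $f$. Since $d^*$ minimizes $f$ we have $f(s_n) \geq f(d^*)$ for every $n$, so $\liminf_{n\to\infty} f(s_n) \geq f(d^*)$. Combining this with (\ref{limsupfan}) gives $\lim_{n\to\infty} f(s_n) = f(d^*)$.

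Finally, for (\ref{limsnds}) under the extra hypothesis that $f$ is convex, I would again use a subsequence argument together with the uniqueness of the minimizer. Given boundedness of $(s_n)$, it suffices to show every convergent subsequence of $(s_n)$ has limit $d^*$. If $s_{n_k} \to s^*$, then by continuity and (\ref{limfan}) we get $f(s^*) = \lim_k f(s_{n_k}) = f(d^*)$, so $s^*$ is a minimizer of $f$; by uniqueness of the minimum point, $s^* = d^*$. Since $(s_n)$ is bounded and every convergent subsequence has the same limit $d^*$, the whole sequence converges to $d^*$. The main subtlety to handle carefully is the first step: I must verify that passing to a subsequence preserves the local uniform convergence hypothesis needed to invoke Lemma \ref{lemConvUnif}, and that the boundedness of $(s_n)$ is genuinely used to extract the convergent subsequence — this is the crux, since without boundedness the $\limsup$ could fail to be attained along a convergent subsequence. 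The convexity of $f$ is used only in the last step, and there only to guarantee (via uniqueness) that the set of minimizers is a single point, so the subsequential-limit argument closes.
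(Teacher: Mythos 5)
Your proposal is correct, but it takes a genuinely different route from the paper in two of the three parts. For (\ref{limsupfan}) the paper argues directly: boundedness of $\{s_n:n\in\N_+\}$ upgrades locally uniform convergence to uniform convergence on that set, and combining $|f_n(s_n)-f(s_n)|<\frac{\epsilon}{2}$ with Lemma \ref{lemConvUnif} applied to $(d_n)$ gives the explicit chain $f(s_n)<f_n(s_n)+\frac{\epsilon}{2}\leq f_n(d_n)+\frac{\epsilon}{2}<f(d^*)+\epsilon$; you instead argue by contradiction, extracting a convergent subsequence $s_{n_k}\to s^*$ and applying Lemma \ref{lemConvUnif} along the subsequence. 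Both are valid (a subsequence of a locally uniformly convergent sequence still converges locally uniformly, and $f$ is continuous by Condition \ref{condUniffn}), though the paper's direct estimate has the advantage of producing a quantitative inequality that it then reuses for the third claim. The treatment of (\ref{limfan}) is identical. For (\ref{limsnds}) the difference is more substantive: the paper uses convexity to derive the growth bound $f(x)-f(d^*)\geq m>0$ for $|x-d^*|\geq\delta$ and concludes from the earlier inequality that $s_n$ must eventually lie in $B_l(d^*,\delta)$, whereas your argument uses only boundedness of $(s_n)$, continuity of $f$, and uniqueness of the minimizer: every subsequential limit $s^*$ satisfies $f(s^*)=f(d^*)$ and hence equals $d^*$. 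Your argument thus shows that the convexity hypothesis is not actually needed for (\ref{limsnds}) once $(s_n)$ is bounded — a mild strengthening — while the paper's convexity-based bound would also work for unbounded sequences at that stage (its boundedness use is confined to the first claim). Both proofs are complete; yours is slightly more general in the last step and slightly less economical in the first.
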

\begin{proof}
Let $\epsilon > 0$. From the boundedness of the set $D=\{s_n: n \in \N\}$ and $f_n\overset{loc}{\rightrightarrows}f$,
let $N_1 \in \N_+$ be such that for $n \geq N_1$, $|f_n(x)-f(x)|< \frac{\epsilon}{2}$, $x\in D$. From Lemma \ref{lemConvUnif}, 
let $N_2 \geq N_1$ be such that for $n \geq N_2$, $|f_n(d_n) - f(d^*)| < \frac{\epsilon}{2}$.
Then, for each $n \geq N_2$,
\begin{equation}\label{fsnleq}
f(s_n) < f_n(s_n) +\frac{\epsilon}{2} \leq f_n(d_n) +\frac{\epsilon}{2} <   f(d^*) +\epsilon,
\end{equation}
which proves (\ref{limsupfan}). Let $d^*$ be the unique minimum point of $f$. 
Then, (\ref{limfan}) follows from $f(s_n)\geq f(d^*)$, $n\in \N_+$, and (\ref{limsupfan}). 
Let now $f$ be convex and $\delta \in \R_+$. Then, from the continuity of $f$, 
there exists $x_0 \in S_l(d^*,\delta)$
such that $f(x_0)=\inf_{x\in S_{l}(d^*,\delta)}f(x)$. From the uniqueness of $d^*$, $m:=f(x_0) -f(d^*)>0$. 
From the convexity of $f$, for $x \in \R^l$ such that $|x-d^*|\geq\delta$ we have
\begin{equation}
f(x)-f(d^*) \geq \frac{|x-d^*|}{\delta}(f(d^* + \delta\frac{x-d^*}{|x-d^*|}) -f(d^*))\geq m. 
\end{equation}
Thus, when (\ref{fsnleq}) holds for $\epsilon \leq m$, 
then we must have $|s_n-d^*|<\delta$, 
which proves (\ref{limsnds}). 
\end{proof}

\section{\label{secTwoPhase}Two-phase minimization of estimators 
with gradient-based stopping criteria and constraints or function modifications} 
In this section we describe minimization methods of estimators 
in which two-phase minimization can be used. In their first phase one can use some 
GM method as in Section \ref{secminGrad} 
and in the second phase e.g. constrained minimization 
of the estimator considered or unconstrained minimization 
of such a modified estimator, using gradient-based stopping criteria. 
The single- and multi-stage versions of these methods shall be 
abbreviated as CGSSM and CGMSM respectively. 
We also discuss applications of these methods to the minimization of 
the new mean square estimators 
in the LETGS setting and the inefficiency constant estimators 
in the ECM setting for $C=1$. 

Let us further in this section assume  that $A=\R^l$ and 
that the following condition holds. 
\begin{condition}\label{condG}
For some $\epsilon \in\R_+$, functions $g_1,g_2:\R^l \rightarrow \mc{B}(\R^l)$ are such that 
for each $x \in \R^l$, $g_1(x)$ is open,
\begin{equation}\label{klepsg}
B_l(x,\epsilon) \subset g_1(x), 
\end{equation} 
\begin{equation}
\overline{g_1(x)}\subset g_2(x), 
\end{equation}
and for each bounded set $B\subset \R^l$, the set $\bigcup_{x \in B}g_2(x)$ is bounded. 
\end{condition}
CGSSM and CGMSM will be defined as special cases of the following CGM method. 
In CGM we assume that for some unbounded $T\subset \R_+$, for each $t \in T$ 
we are given a $[0,\infty]$-valued random variable $\wt{\epsilon}_t$, 
an $A$-valued random variable $d_t$, a 
function $\wt{d}_t:\Omega \to A$, and a random function $\wt{f}_t:\mc{S}(A)\otimes (\Omega,\mc{F}) \to\mc{S}(\R)$, such that
$b \to \wt{f}_t(b,\omega)$ is differentiable, $\omega \in \Omega$, we have
\begin{equation}\label{ckingck}
\wt{d}_t \in g_2(d_t), 
\end{equation}
\begin{equation}\label{whsmSSM}
\wt{f}_t(\wt{d}_t)\leq  \wt{f}_t(d_t), 
\end{equation}
and if $\wt{d}_t \in g_1(d_t)$, then
\begin{equation}\label{nablawtestbSSM}
|\nabla_b\wt{f}_{t}(\wt{d}_t)|\leq \wt{\epsilon}_t. 
\end{equation} 
Furthermore, we assume that Condition \ref{condESSM1} holds for the above variables $d_t$, $t \in T$, and some $b^* \in A$.

\begin{remark} 
Functions $\wt{d}_t$ as above always exist, assuming that the other variables as above are given. 
Indeed, without loss of generality let 
$\wt{\epsilon}_t=0$. Then, if $\wt{d}_t$ fulfilling (\ref{ckingck}), (\ref{whsmSSM}), 
and $\wt{d}_t \notin g_1(d_t)$ 
does not exist, then $\wt{d}_t$ can be chosen to be a minimum point of 
$b \in g_1(d_t)\to \wt{f}_{t}(b)$, which exists
due to $\overline{g_1(d_t)}$ being compact (see Condition \ref{condG}) and $\wt{f}_t(b)>  \wt{f}_t(d_t)$, $b\in \partial g_1(d_t)
\subset g_2(d_t)\setminus g_1(d_t)$. 
\end{remark} 
Consider some functions $\wt{\est}_k$, $k \in \N_p$, as in (\ref{estnDef}) such that 
$b\to \wt{\est}_{k}(b',b)(\omega)$ is differentiable, $b' \in \R^l$, $\omega \in \Omega_1^k$, $k \in \N_p$. 
\begin{definition}\label{defCGM}
CGSSM is defined as CGM in which conditions \ref{condT} and \ref{condKappa} hold and
$\wt{f}_t(b)=\I(N_t= k \in \N_p)\wt{\est}_{k}(b',b)(\wt{\kappa}_k)$, $t \in T$.  
CGMSM is defined as CGM in which $T=\N_+$, Condition \ref{condKi} holds, Condition \ref{condChi} holds for $n_k \in \N_p$, $k \in \N_+$,
and we have $\wt{f}_k(b)=\wt{\est}_{n_k}(b_{k-1},b)(\wt{\chi}_k)$, $k \in \N_+$. 
\end{definition}

The following condition is needed e.g. if we want to investigate the asymptotic properties of $\wt{d}_t$, $t \in T$. 
\begin{condition}\label{condVarck}
The functions $\wt{d}_t$, $t \in T$, are random variables (i.e. they are measurable functions from $(\Omega,\mc{F})$ to $\mc{S}(A)$).
 \end{condition}
\begin{remark}\label{remMesUnc}
Whenever dealing with some set $D\in \Omega$ for which it is not clear if $D \in \mc{F}$, when trying to prove that 
$\PR(D)=1$ and in particular $D \in \mc{F}$, 
we shall implicitly assume that we are working on a complete probability space, so that to achieve the goal 
it is sufficient to prove that $\PR(E)=1$ for some $E \in \mc{F}$ such that $E\subset D$. 
Such a $D$ will further typically appear when considering functions $g_t:\Omega \to \R^l$,  like $\wt{d}_t$ as above, without assuming that 
they are random variables. For instance, for some  $b^* \in \R^l$, we will consider 
$D=\{\omega\in \Omega: \lim_{t\to \infty}g_t(\omega)\to b^*\}$ or
$D=\{\omega\in \Omega: g_t(\omega)= b^* \text{ for a sufficiently large }t\}$. 
\end{remark}

\begin{condition}\label{condRE}
It holds $\wt{\epsilon}_t(\omega) >0$, $t \in T$, $\omega\in \Omega$, and we are given a function $R:\mc{S}(A)\to  \mc{S}((\epsilon,\infty))$
such that $\sup_{|x|\leq M}R(x)<\infty $, $M \in \R_+$.
\end{condition}
As the function $R$ in the above condition one can take e.g. $R(x)=a|x|+b$ for some $a \in (0,\infty)$ and $b\in(\epsilon, \infty)$.
\begin{remark}\label{rem1Comp}
Let us assume Condition \ref{condRE}. Then, using  e.g. boxes 
$g_1(x)=\{y \in \R^l: |x_i-y_i|< R(x),\ i= 1,\ldots,l\}$, 
or balls $g_1(x)=B_l(x,R(x))$, and $g_2(x)=\overline{g}_1(x)$, $x \in A$, for each $t \in T$,
under some additional regularity assumptions on $b\to \wt{f}_t(\omega,b)$, $\omega \in \Omega$
(which in the case of CGSSM and CGMSM reduce to appropriate such assumptions on $\wt{\est}_k$, $k \in \N_p$), 
$\wt{d}_t$ as above can be a result of some constrained minimization method of the respective 
$\wt{f}_t(b)$, started at $d_t$,  constrained to 
$g_2(d_t)$, and stopped in the first point $\wt{d}_t$ in which the respective requirements for CGM as above are fulfilled. 
See e.g. \cite{nocedal2006numerical,Coleman94,coleman1996interior} for some examples of such constrained 
minimization algorithms (also called minimization methods with bounds when box constraints are used). In such a case 
(and assuming that the same minimization algorithm is used for each $\omega \in \Omega$) Condition \ref{condVarck} 
typically holds and can be proved using the definition of the algorithm used. 
\end{remark}
Consider the following condition. 
\begin{condition}\label{condh} 
It holds $\delta\in \R_+$ and $h: A \to [0,\infty)$ is a twice continuously differentiable function 
such that $h(x)=0$ for $x \in \overline{B}_l(0,1)$ and $h(x)> 1$ for $|x| > 1+\delta$.
\end{condition}
An example of an easy to compute function fulfilling Condition \ref{condh} is 
$h(x)= 0$ for $|x|\leq 1$ and $h(x)=\frac{(|x|^2-1)^{3}}{\delta^{3}}$ for $|x|>1$. 
\begin{remark}\label{rem2Comp} 
Let us assume conditions \ref{condRE} and \ref{condh}, and let $\wt{f}_t$ be nonnegative, $t \in T$. 
Let $g_1(x) = B_l(x, R(x))$ and $g_2(x)=\overline{B}_l(x,R(x)(1+\delta))$, $x \in A$. 
Then, under some additional assumptions on $b\to \wt{f}_t(b,\omega)$,  $\omega \in \Omega$,
rather than using constrained minimization as in Remark \ref{rem1Comp}, 
to obtain $\wt{d}_t$ in CGM one can use some globally convergent unconstrained minimization method the following modification of $\wt{f}_{t}$
\begin{equation}\label{fbhdef}
b\to h_t(b)=\wt{f}_t(b) + \wt{f}_{t}(d_t)h(\frac{|b-d_t|}{R(d_t)}). 
\end{equation}
Such a method could start at $d_t$ and stop in the first point $\wt{d}_t$  
in which  
\begin{equation}\label{hwtdt}
h_t(\wt{d}_t)\leq h_t(d_t) 
\end{equation}
and if $\wt{d}_t \in g_1(d_t)$, then
\begin{equation}\label{nablabh}
|\nabla_b h_t(\wt{d}_t)|\leq \wt{\epsilon}_t. 
\end{equation}
Sufficient assumptions for the global convergence 
of a class of such minimization methods are given by the Zoutendjik theorem, as discussed in Remark \ref{remDetGM}.
Such assumptions are fulfilled in the above case 
if we have twice continuous differentiability of $b\to\wt{f}_t(b,\omega)$, $\omega \in \Omega$, and $h$, which is why we assumed the latter in Condition \ref{condh}. 

Let us check that the assumptions of CGM are satisfied for such constructed $\wt{d}_t$. 
If $\wt{f}_t(d_t)=0$, then from $\wt{f}_t$ being nonnegative, it holds $\nabla_b \wt{f}_t(d_t)=0$,
and thus $\wt{d}_t=d_t$ and we have (\ref{ckingck}). If 
$\wt{f}(d_t)>0$, then from (\ref{fbhdef}) and Condition \ref{condh}, 
$h_t(b)>h_t(d_t)$ for $|b-d_t|> (1+\delta)R(d_t)$, and thus from (\ref{hwtdt}) we also have (\ref{ckingck}). 
From (\ref{fbhdef}) we have $h_t(d_t)=\wt{f}_t(d_t)$
and  $\wt{f}_t(\wt{d}_t) \leq h_t(\wt{d}_t)$, so that from (\ref{hwtdt}) we have (\ref{whsmSSM}). 
Finally, if  $\wt{d}_t \in g_1(d_t)$, then from  (\ref{nablabh}) and 
$\nabla_b h_t(x)= \nabla_b \wt{f}_t(x)$, $x \in g_1(d_t)$, we have (\ref{nablawtestbSSM}). 
Similarly as in Remark \ref{rem1Comp}, Condition \ref{condVarck} typically holds for such constructed $\wt{d}_t$. 
\end{remark}

Consider the following condition, which will be useful for 
proving the asymptotic properties of minimization results 
of CGSSM, CGMSM, and some further methods. 
\begin{condition}\label{condgradck}
Almost surely for a sufficiently large $t$, (\ref{nablawtestbSSM}) holds. 
\end{condition}

The following theorem will be useful for proving the convergence properties of CGM methods. 
\begin{theorem}\label{thConstrOpt} 
Let us assume that Condition \ref{condUniffn} holds for $A=\R^l$ and $f$ which is convex and has a unique minimum point $d^*$. 
Let $\wt{d}_n\in g_2(d_n)$ be such that $f_n(\wt{d}_n)\leq f_n(d_n)$, $n \in \N_+$. Then, 
\begin{equation}\label{snb}
\lim_{n\to \infty}\wt{d}_n = d^* 
\end{equation}
and
\begin{equation}\label{fnsnb}
\lim_{n\to \infty}f_n(\wt{d}_n) = f(d^*).
\end{equation}
Let further $f$ be twice continuously differentiable with a positive definite Hessian on $A$ and
let $f_n$, $n \in \N_+$, be twice differentiable and whose $i$th derivatives for $i=1,2$, converge locally uniformly to such derivatives of 
$f$. Let $\epsilon_n \geq 0$, $n\in \N_+$, be such that $\lim_{n\to \infty} \epsilon_n=0$.  
Let for $n \in \N_+$ it hold that if $\wt{d}_n \in g_1(d_n)$ then
\begin{equation}\label{gradfnsn} 
|\nabla f_n(\wt{d}_n)| \leq \epsilon_n. 
\end{equation}
Then, for a sufficiently large $n$, (\ref{gradfnsn}) holds. 
Let further $D$ be a bounded neighbourhood of $d^*$. 
Then, for a sufficiently large $n$, $f_{n|D}$ has a unique minimum point equal to a unique  $\wt{d}_n \in D$ such that $\nabla f_n(\wt{d}_n)=0$. 
\end{theorem}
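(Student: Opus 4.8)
The plan is to split the statement into its three successive assertions --- the convergence of $\wt{d}_n$ and of $f_n(\wt{d}_n)$, the eventual validity of the gradient bound, and the uniqueness of the minimizer and critical point of $f_{n|D}$ --- and to dispatch each by reducing it to the convex and strongly convex convergence results already proved. First I would establish (\ref{snb}) and (\ref{fnsnb}). The key point is that $\wt{d}_n$ is a bounded sequence: since $d_n\to d^*$, the points $d_n$ lie in some bounded set $B_0$, and by Condition \ref{condG} the set $\bigcup_{x\in B_0}g_2(x)$ is bounded, so $\wt{d}_n\in g_2(d_n)$ stays inside it. With $f_n(\wt{d}_n)\leq f_n(d_n)$ and $f$ convex with unique minimum point $d^*$, Theorem \ref{thConvda} applied to the bounded sequence $s_n=\wt{d}_n$ gives (\ref{limsnds}), i.e. $\lim_{n\to\infty}\wt{d}_n=d^*$, which is (\ref{snb}). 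Then Condition \ref{condUniffn} holds with $\wt{d}_n$ in place of $d_n$, so Lemma \ref{lemConvUnif} yields $\lim_{n\to\infty}f_n(\wt{d}_n)=f(d^*)$, which is (\ref{fnsnb}).

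Second, to obtain (\ref{gradfnsn}) for large $n$ I would show $\wt{d}_n\in g_1(d_n)$ eventually, after which the stated hypothesis forces $|\nabla f_n(\wt{d}_n)|\leq\epsilon_n$. Indeed, both $d_n$ and $\wt{d}_n$ converge to $d^*$ by (\ref{limdnds}) and (\ref{snb}), so $|\wt{d}_n-d_n|\to 0$; hence for large $n$ we have $|\wt{d}_n-d_n|<\epsilon$, that is $\wt{d}_n\in B_l(d_n,\epsilon)\subset g_1(d_n)$ by (\ref{klepsg}).

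Third, for the assertion about $f_{n|D}$ I would localize around $d^*$ using strong convexity. Choosing $r>0$ with $\overline{B}_l(d^*,r)\subset D$ and setting $U=B_l(d^*,r)$, Lemma \ref{lemStrConvA} (using the positive definiteness of $\nabla^2 f$ and the compactness of $\overline{U}$) makes $f$ strongly convex on $U$ with some constant $s>0$. Since the first and second derivatives of $f_n$ converge uniformly to those of $f$ on the compact set $\overline{U}$, Theorem \ref{thConvStrong} shows that for large $n$ the function $f_n$ has a unique minimum point $a_n\in U$, equal to the unique zero of $\nabla f_n$ in $U$, with $a_n\to d^*$, and then $f_n(a_n)\to f(d^*)$ by Lemma \ref{lemConvUnif}. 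I would then close the argument with two gap estimates on the compact set $\overline{D}\setminus U$. For the value gap, $c:=\min_{\overline{D}\setminus U}f>f(d^*)$ because $d^*$ is the unique minimizer of $f$ and lies in $U$; as $f_n\rightrightarrows f$ uniformly on $\overline{D}$ and $f_n(a_n)\to f(d^*)$, for large $n$ one has $f_n(a_n)<\inf_{\overline{D}\setminus U}f_n$, so every minimizer of $f_n$ over $\overline{D}$ lies in $U$ and therefore equals $a_n$; as $a_n\in U\subset D$, this makes $a_n$ the unique minimizer of $f_{n|D}$. For the gradient gap, $\eta:=\min_{\overline{D}\setminus U}|\nabla f|>0$ since $\nabla f$ vanishes only at $d^*\in U$, and $\nabla f_n\rightrightarrows\nabla f$ on $\overline{D}$ forces $|\nabla f_n|>0$ on $\overline{D}\setminus U$ for large $n$, so $a_n$ is the unique point of $D$ with $\nabla f_n=0$. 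Taking $\wt{d}_n:=a_n$ gives the claim.

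I expect the main obstacle to be this third part, namely reconciling the merely local strong convexity of $f_n$ --- which Theorem \ref{thConvStrong} guarantees only on the small ball $U$ --- with the possibly non-convex, arbitrary bounded neighbourhood $D$. The two gap arguments, one trapping the minimizer inside $U$ via function values and one excluding extraneous critical points via gradient norms, are precisely what bridge this gap; they hinge on the uniqueness of $d^*$ as the minimizer of $f$ together with the local uniform convergence of $f_n$ and of its first two derivatives on the compact set $\overline{D}$.
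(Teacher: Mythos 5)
Your proof is correct, and for the first two assertions it coincides with the paper's: boundedness of $(\wt{d}_n)$ via Condition \ref{condG}, then Theorem \ref{thConvda} for (\ref{snb}) and (\ref{fnsnb}) (the paper leaves the final appeal to Lemma \ref{lemConvUnif} implicit), and the observation that $|\wt{d}_n-d_n|\to 0$ together with (\ref{klepsg}) forces $\wt{d}_n\in g_1(d_n)$ eventually. The only real divergence is in the last assertion, and there your route is more laborious than necessary. You localize on a small ball $U\subset D$ and then need two separate gap estimates on $\overline{D}\setminus U$ to rule out extraneous minimizers and critical points. The paper instead exploits the hypothesis that $\nabla^2 f$ is positive definite on all of $A=\R^l$: by Lemma \ref{lemStrConvA}, $f$ is strongly convex on \emph{any} bounded convex set, so one may take $U$ to be an open ball centered at $d^*$ large enough that $D\subset U$ and apply Theorem \ref{thConvStrong} to the restrictions of $f$ and $f_n$ to that $U$. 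Uniqueness of the minimizer and of the critical point of $f_n$ over $U$ then restricts immediately to the subset $D$ (once $a_n\in D$, which holds for large $n$ since $a_n\to d^*$ and $D$ is a neighbourhood of $d^*$), and both of your gap arguments become superfluous. Your version is still a valid proof --- the value-gap and gradient-gap estimates are carried out correctly and rest only on the uniqueness of $d^*$ and the locally uniform convergence of $f_n$, $\nabla f_n$ on the compact set $\overline{D}$ --- but the obstacle you flag at the end (reconciling strong convexity on a small ball with an arbitrary bounded $D$) dissolves once one notices that the ball can be chosen to contain $D$.
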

\begin{proof}
From (\ref{limdnds}) and Condition \ref{condG}, the set $\bigcup_{n\in \N_+} g_2(d_n)$ is bounded 
and so is the sequence $(\wt{d}_n)_{n\in \N_+}$. Thus, (\ref{snb}) and (\ref{fnsnb}) follow from Theorem \ref{thConvda}. 
From (\ref{limdnds}) and (\ref{klepsg}), for a sufficiently large $n$, $d_n \in \overline{B}_l(d^*,\frac{\epsilon}{2})\subset g_1(d_n)$, 
and thus (\ref{gradfnsn}) holds. The rest of the thesis follows from 
Theorem \ref{thConvStrong}, in which from Lemma \ref{lemStrConvA} as $U$ one can take any open ball with the center $d^*$,
such that $D\subset U$, and as $f$ and $f_n$ in that theorem use restrictions to $U$ of the above $f$ and $f_n$. 
\end{proof}

For CGMSM in the below examples we assume conditions \ref{condLiKi} and \ref{condLi}.
By saying that the counterparts of conditions \ref{condESSM1} 
or \ref{condESSM1f} hold for CGSSM or CGMSM
or that Condition \ref{condbkdef} holds for CGMSM, 
we mean that these conditions hold for 
$d_t$ and $\wh{f}_t$ replaced by $\wt{d}_t$ and $\wt{f}_t$,
in the counterpart of Condition \ref{condbkdef} additionally assuming that Condition \ref{condVarck} holds. 
\begin{remark}\label{remCountKiA}
Note that we have a counterpart of Remark \ref{remcondKiA} 
with $d_k$ replaced by $\wt{d}_k$ and conditions \ref{condESSM1}, \ref{condESSM1f}, 
and \ref{condbkdef} replaced by the counterparts of such conditions for CGMSM. 
\end{remark}

In the below CGSSM methods let us assume that Condition \ref{condlemYmore} holds for $S=Z^2$, while for the CGMSM methods
that Condition \ref{condSpr} holds for $S=Z^2$ (where when considering the ECM setting we mean the counterparts of these conditions), 
in the LETGS setting additionally assuming these conditions for $S=1$.

Let us now consider CGSSM or CGMSM in the LETGS setting, assuming conditions \ref{condzntau} and \ref{cond1}, 
that $b^*$ as above is a unique minimum point of $\msq$, and 
that $\wt{\est}_n=\wh{\msq2}_{n}$, $n \in \N_+$. Note that in such a case the variables $d_t$ satisfying Condition \ref{condESSM1} as assumed for 
CGM above can be e.g. the results of GSSM or GMSM  respectively for $\wh{\est}_n=\wh{\msq}_n$ as in Section \ref{secminGrad}. 
From conditions \ref{condESSM1}, \ref{condT}, and the fourth point of Theorem \ref{thunifDiff} for CGSSM
or the fifth point of Theorem \ref{thunifDiffMult} for CGMSM, as well as
from theorems \ref{thLETGSStrong}, \ref{thConstrOpt}, and Remark \ref{remMesUnc}, the
counterparts of conditions \ref{condESSM1} and \ref{condESSM1f} and Condition 
\ref{condgradck} hold for CGSSM and CGMSM. 

Let us now consider CGSSM or CGMSM in the ECM setting for $\wt{\est}_n=\wh{\ic}_{n}$. Let us assume Condition \ref{condECMFull} for $b^*$ as above
and that $C=1$ as discussed in Remark \ref{remECMcost}, so that $\ic =\var$ and $b^*$ is its unique minimum point. 
Note that in such a case the variables $d_t$ satisfying Condition \ref{condESSM1} as above 
can be e.g. the results of GSSM or GMSM as in Section \ref{secminGrad} respectively for $\wh{\est}_n=\wh{\msq}_n$ or $\wh{\est}_n=\wh{\msq2}_{n}$, 
for $\wh{\est}_n=\wh{\msq}_n$ additionally assuming (\ref{pq1msq}) as in that section. 
From Condition \ref{condESSM1}, 
the fifth point of Theorem \ref{thunifDiff} for CGSSM or
the sixth point of Theorem \ref{thunifDiffMult} for CGMSM, as well as from
theorems \ref{thECMPos} and \ref{thConstrOpt}, the 
counterparts of conditions \ref{condESSM1} and \ref{condESSM1f} 
and Condition \ref{condgradck} hold for such a CGSSM and CGMSM. 

\section{\label{secECG}Three-phase minimization of estimators with gradient-based stopping criteria and function modifications}
In this section we define minimization methods of estimators in which three-phase minimization can be used. 
In their first phase one can perform some GM method as in Section \ref{secminGrad}, 
in the second a search of 
step lengths satisfying the Wolfe conditions can be carried out on a modification of the estimator considered, 
and in the third phase one can perform unconstrained minimization 
of the modified estimator using gradient-based stopping criteria. 
The single- and multi-stage versions of these methods 
shall be abbreviated as MGSSM and MGMSM respectively. 
We also discuss the possibility of the application of such methods 
to the minimization of the inefficiency constant estimators in the LETGS setting. 

MGSSM and MGMSM will be defined as special cases of the following MGM method. 
We assume in it that Condition \ref{condh} holds, $0<\alpha_1<\alpha_2<1$, and $A=\R^l$. 
Let $d^*\in A$ and for $T$ as in Section \ref{secESSM}, let $d_t$, $t \in T$, be $A$-valued random variables such that a.s. 
\begin{equation}\label{limdtd}
\lim_{t\to \infty}d_t =d^*. 
\end{equation}
Let random functions 
$\wt{f}_t:\mc{S}(A)\otimes (\Omega,\mc{F}) \to\mc{S}([0,\infty))$, $t \in T$, be such that 
$b \to \wt{f}_t(b,\omega)$ is continuously differentiable, $\omega \in \Omega$, $t\in T$. 
Let $\wt{\epsilon}_t$, $t \in T$, be as in the previous section and such that additionally a.s. 
\begin{equation}\label{wtepstozero}
\lim_{t\to \infty}\wt{\epsilon}_t=0. 
\end{equation} 
Let for each $t \in T$, $r_t$ be an $\R_+$-valued random variable, 
\begin{equation}\label{fbh2def}
h_t(b)=\wt{f}_{t}(b) + \wt{f}_{t}(d_t)h(\frac{|b-d_t|}{r_t}), \quad b \in A, 
\end{equation}
and a function $\wt{d}_t:\Omega\to A$ and an $A$-valued random variable $d_t'$  be such that 
\begin{equation}\label{wtcdef}
\wt{d}_t, d_t' \in \overline{B}_l(d_t, r_t(1+\delta)). 
\end{equation}
For each $t \in T$, let for some $[0,\infty)$-valued random variable $p_t$ it hold 
\begin{equation}\label{dtpdt}
d_t'-d_t=-p_t\nabla h_t(d_t),  
\end{equation}
and let the following inequalities hold (which are the Wolfe conditions on the step length $p_t$ 
when considering the steepest descent search direction, 
see e.g. (3.6) in \cite{nocedal2006numerical}) 
\begin{equation}\label{wolf1} 
h_t(d_t') \leq h_t(d_t)-p_t\alpha_1|\nabla h_t(d_t)|^2, 
\end{equation}
\begin{equation}\label{wolf2}
\nabla h_t(d_t')\nabla h_t(d_t) \leq \alpha_2|\nabla h_t(d_t)|^2.
\end{equation}
Finally, in MGM we assume that for each $t \in T$ 
\begin{equation}\label{whsmMGSSM}
h_t(\wt{d}_t)\leq h_t(d_t') 
\end{equation}
and 
\begin{equation}\label{nablawtestbEMSSM}
|\nabla h_t(\wt{d}_t)| \leq \wt{\epsilon}_t. 
\end{equation}
Let $\wt{\est}_k$, $k\in \N_p$, as in (\ref{estnDef}) be such that  
$b\to \wt{\est}_{k}(b',b)(\omega) \in [0,\infty)$ is continuously differentiable, $b' \in \R^l$, $\omega \in \Omega_1^k$, $k \in \N_p$.
We define MGSSM and MGMSM as special cases of MGM in the same way as CGSSM and CGMSM are defined as special cases of CGM 
in Definition \ref{defCGM}.  
\begin{remark}\label{remconstrEMG}
For a given $t \in T$, assuming that the other variables and constants as above are given, 
a possible construction of $p_t$, $d_t'$, and $\wt{d}_t$ in MGM is as follows. 
On the event $\nabla h_t(d_t)=0$, let us set $d_t'=\wt{d}_t=d_t$. Let further $\omega \in \Omega$ be such that 
\begin{equation}\label{nablaht}
\nabla h_t(d_t(\omega),\omega)\neq 0.  
\end{equation}
Then, to obtain $p_t(\omega)$ and thus also $d_t'(\omega)$, one can perform a
line search of $h_t(\cdot,\omega)$ in the steepest descent direction
$-\nabla h_t(d_t(\omega),\omega)$, started in $d_t(\omega)$ and stopped when $p_t(\omega)$ and the corresponding $d_t'(\omega)$ (see (\ref{dtpdt}))
start to satisfy the Wolfe conditions 
(\ref{wolf1}) and (\ref{wolf2}) (evaluated on such an $\omega$). The line search can be performed e.g. using  Algorithm 3.5 from \cite{nocedal2006numerical}.
If this algorithm is used for each $\omega$ as above, then such constructed $d_t'$ is a random variable.  
Let further the variable $d_t'$ as above be given. 
Consider now $\omega\in \Omega$ satisfying (\ref{nablaht}) and 
$\wt{\epsilon}_t(\omega)>0$.  
Then, under some additional assumptions on $\wt{f}_t(\cdot,\omega)$, 
to construct $\wt{d}_t(\omega)$
one can use some convergent unconstrained minimization algorithm of $h_t(\cdot,\omega)$ 
started in $d_t'(\omega)$ and stopped in the first point $\wt{d}_t(\omega)$ in which 
(\ref{whsmMGSSM}) and (\ref{nablawtestbEMSSM}) hold.  See e.g. the assumptions of the Zoutendjik theorem
in Remark \ref{remDetGM}. 
Note that from (\ref{nablaht}) and $h_t$ being nonnegative  it holds $h_t(d_t(\omega),\omega)>0$, and 
thus $h_t(x,\omega)>h_t(d_t(\omega),\omega)$ for $|x -d_t(\omega)|>r_t(\omega)(1+\delta)$, so that 
from $h_t(\wt{d}_t(\omega),\omega)\leq h_t(d_t'(\omega),\omega)\leq h_t(d_t(\omega),\omega)$, (\ref{wtcdef}) holds. 
If $\wt{\epsilon}_t(\omega)>0$ for each $\omega$ such that (\ref{nablaht}) holds, 
and the same unconstrained minimization algorithm is used for each such $\omega$,
then from the definition of such an algorithm
it typically follows that such constructed $\wt{d}_t$ is a random variable. 
For $\omega \in \Omega$ such that we have (\ref{nablaht}) and $\wt{\epsilon}_t(\omega)= 0$,  $\wt{d}_t(\omega)$ can be e.g. some (global) minimum point of 
$h_t(\cdot,\omega)$.  
\end{remark}

For $x \in \R^l$ and $B \subset \R^l$, let us denote 
\begin{equation}
\dist(x,B)=\inf_{y \in B}|x-y|.
\end{equation}
The following theorems will be useful for proving the convergence properties of MGM methods. 

\begin{theorem}\label{thfnddb}
Let $K\subset \R^l$ be nonempty and compact and let $g_n:K\to \R$, $n\in \N_+$, converge uniformly to a continuous function 
$g:K\to \R$. Let $m$ be the minimum of $g$ and $B$ be its set of minimum points. 
Then, for each sequence of points $d_n\in K$, $n \in \N_+$, such that 
$\lim_{n\to \infty}g_n(d_n)= m$,  
we have 
$\lim_{n\to \infty}\dist(d_n,B)=0$. 
\end{theorem}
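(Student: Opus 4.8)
The plan is to argue by contradiction, exploiting the compactness of $K$ together with the uniform convergence of $g_n$ to $g$. Suppose the conclusion fails. Then there exists some $\delta > 0$ and a subsequence $(d_{n_k})_{k \in \N_+}$ such that $\dist(d_{n_k}, B) \geq \delta$ for all $k$. Since $K$ is compact, I would pass to a further subsequence (still denoted $d_{n_k}$ for simplicity) converging to some limit point $d^* \in K$. Because the distance function $x \mapsto \dist(x, B)$ is continuous, the inequality $\dist(d_{n_k}, B) \geq \delta$ passes to the limit, giving $\dist(d^*, B) \geq \delta > 0$; in particular $d^* \notin B$, so $g(d^*) > m$.

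The next step is to derive a contradiction from the hypothesis $\lim_{n\to\infty} g_n(d_n) = m$, which of course forces $\lim_{k\to\infty} g_{n_k}(d_{n_k}) = m$ along the subsequence. The idea is to compare $g_{n_k}(d_{n_k})$ with $g(d^*)$ by splitting the difference through $g(d_{n_k})$:
\begin{equation}
|g_{n_k}(d_{n_k}) - g(d^*)| \leq |g_{n_k}(d_{n_k}) - g(d_{n_k})| + |g(d_{n_k}) - g(d^*)|.
\end{equation}
The first term on the right is controlled by the uniform convergence $g_n \rightrightarrows g$ on $K$, since $d_{n_k} \in K$ and the supremum $\sup_{x\in K}|g_{n_k}(x) - g(x)|$ tends to zero. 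The second term tends to zero by the continuity of $g$ together with $d_{n_k} \to d^*$. Hence $g_{n_k}(d_{n_k}) \to g(d^*)$, and combining this with $g_{n_k}(d_{n_k}) \to m$ yields $g(d^*) = m$, contradicting $g(d^*) > m$.

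I expect the main subtlety, rather than obstacle, to lie in handling the two error terms with the correct order of quantifiers: the uniform convergence term must be bounded using the supremum over all of $K$ (not pointwise at the moving points $d_{n_k}$), which is precisely why uniform rather than merely pointwise convergence is assumed in the hypothesis. Since $g$ attains its minimum $m$ on the compact set $K$, the set $B$ of minimum points is nonempty and closed, so $\dist(\cdot, B)$ is well-defined and the whole argument is sound. No appeal to convexity or differentiability is needed here; the statement is purely a consequence of compactness and uniform convergence, so the proof should be short and clean once the subsequence extraction is set up.
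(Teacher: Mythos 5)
Your proof is correct, but it takes a different route from the paper's. You argue by contradiction: extract a subsequence staying at distance at least $\delta$ from $B$, use sequential compactness of $K$ to pass to a convergent sub-subsequence with limit $d^*$, conclude $g(d^*)>m$ from $d^*\notin B$, and then derive the contradiction $g(d^*)=m$ via the triangle-inequality split $|g_{n_k}(d_{n_k})-g(d^*)|\leq \sup_{x\in K}|g_{n_k}(x)-g(x)|+|g(d_{n_k})-g(d^*)|$. The paper instead argues directly: for each $\epsilon>0$ it forms the compact set $K_2=\{x\in K:\dist(x,B)\geq\epsilon\}$, uses the extreme value theorem to get a uniform gap $\delta:=\inf_{x\in K_2}g(x)-m>0$, and then observes that for $n$ large enough that $\sup_{x\in K}|g_n(x)-g(x)|<\delta/2$ and $|g_n(d_n)-m|<\delta/2$, one has $|g(d_n)-m|<\delta$, forcing $d_n\notin K_2$. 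The paper's version yields an explicit threshold beyond which $\dist(d_n,B)<\epsilon$ and avoids subsequence extraction entirely, at the cost of needing the extreme value theorem on $K_2$ (and implicitly the harmless case $K_2=\emptyset$); your version is the more mechanical sequential-compactness argument and only uses continuity of $g$ at a single limit point. Both are sound, and your handling of the quantifiers in the uniform-convergence term is exactly right.
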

\begin{proof}
Let $\epsilon \in \R_+$. From the continuity of $x \in K \to \dist(x,B)$, 
$K_2:=\{x \in K:\dist(x,B)\geq \epsilon\}$ is a closed subset of $K$ and thus it is compact. 
From $g$ being continuous, it attains its infimum 
$w:=\inf_{x\in K_2}g(x)$ on $K_2$, and thus we must have $\delta:=w-m>0$. 
For sufficiently large $n$ for which $|g_n(x)-g(x)|< \frac{\delta}{2}$, $x \in K$, and $|g_n(d_n)- m|< \frac{\delta}{2}$, 
we have $|g(d_n)-m|\leq |g(d_n)-g_n(d_n)|+ |g_n(d_n)-m|< \delta$ and thus $d_n \notin K_2$, i.e. $\dist(d_n,B)<\epsilon$. 
\end{proof}

\begin{theorem}\label{thECGSM}
Let $f:\R^l\to $ and $f_n:\R^l\rightarrow \R$, $n \in \N_+$, be continuously differentiable and 
such that $f_n \overset{loc}{\rightrightarrows}f$ and $\nabla f_n \overset{loc}{\rightrightarrows}\nabla f$. 
Let further for some $d^* \in \R^l$, $s\in \R_+$, and $0<w<r<\infty$ it hold 
\begin{equation} \label{fdbddelta}
f(b)\geq f(d^*)+s, \quad b \in \R^l,\ |b-d^*|\geq w, 
\end{equation}
and let $r_n \in \R_+$, $n \in \N_+$,  be such that $\lim_{n\to \infty} r_n=r$. 
Let for a sequence $d_n \in \R^l$, $n \in \N_+$, it hold
\begin{equation}
\lim_{n\rightarrow\infty}d_n=d^*. 
\end{equation}
Let for each $n \in \N_+$, $h_n :\R^l \to \R$ be such that
\begin{equation}\label{hnbfnb}
h_n(b)=f_n(b) + f_n(d_n)h(\frac{|b-d_n|}{r_n}), \quad b \in \R^l.
\end{equation} 
Let $\epsilon_n \geq 0$, $n \in \N_+$, be such that 
$\lim_{n\to \infty}\epsilon_n=0$, and
let for each $n \in \N_+$, for some $p_n \in [0,\infty)$, points
$d_n', \wt{d}_n \in B_l(d_n,(1+\delta)r_n)$ be such that 
\begin{equation}\label{dnpdn}
d_n' -d_n=-p_n\nabla h_n(d_n),  
\end{equation}
\begin{equation}\label{wolfh1}
h_n(d_n')\leq f_n(d_n) - p_n\alpha_1|\nabla f_n(d_n)|^2,
\end{equation}
\begin{equation}\label{wolfh2}
\nabla h_n(d_n')\nabla f_n(d_n) \leq \alpha_2|\nabla f_n(d_n)|^2,
\end{equation}
\begin{equation}\label{dleqd}
h_n(\wt{d}_n)\leq h_n(d_n'),
\end{equation}
and 
\begin{equation}\label{nablahneps}
|\nabla h_n(\wt{d}_n)| \leq \epsilon_n.
\end{equation}
Then, for a sufficiently large $n$ we have
\begin{equation}\label{dnpdninbl}
d_n', \wt{d}_n \in B_l(d^*, w)
\end{equation}
and
\begin{equation}\label{nablafnwtdn}
|\nabla f_n(\wt{d}_n)| \leq \epsilon_n. 
\end{equation}
Let further
\begin{equation}\label{sudef}
\phi(u)=\nabla f(d^*-u\nabla f(d^*))\nabla f(d^*) - \alpha_2|\nabla f(d^*)|^2,\quad u \in \R,  
\end{equation}
and $v= \inf\{u\geq 0:\phi(u)=0\}$. Then, if $|\nabla f (d^*)|= 0$ then  $v=0$
and if $|\nabla f (d^*)|\neq 0$ then $v \in (0,w)$. Furthermore, for 
\begin{equation}
\mu= f(d^*) -v \alpha_1|\nabla f(d^*)|^2
\end{equation}
we have
\begin{equation}\label{limsupfwtdp}
\limsup_{n\to \infty} f(d_n')\leq \mu,
\end{equation}
\begin{equation}\label{limsupfwtd}
\limsup_{n\to \infty} f(\wt{d}_n)\leq \mu,
\end{equation}
for 
$E = \{x \in \R^l: f(x)\leq \mu\}$,
we have
\begin{equation}\label{dnsnbe}
\lim_{n\to \infty}\dist(d_n',E)= 0, 
\end{equation}
and for the set 
$D=\{x \in \R^l: \nabla f(x)=0,\ f(x)\leq \mu\}\subset E$, 
containing the nonempty set of minimum points of $f$, we have
\begin{equation}\label{dnsnb}
\lim_{n\to \infty}\dist(\wt{d}_n,D) =0. 
\end{equation}
\end{theorem}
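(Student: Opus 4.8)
The plan is to prove Theorem \ref{thECGSM} in two main blocks. The first block establishes the ``trapping'' conclusions \eqref{dnpdninbl} and \eqref{nablafnwtdn}, showing that for large $n$ the points $d_n'$ and $\wt{d}_n$ fall inside the ball $B_l(d^*,w)$ where the modification term $h$ vanishes, so that the modified function $h_n$ locally coincides with $f_n$. The second block identifies the limiting behaviour of $f(d_n')$ and $f(\wt{d}_n)$ through the scalar function $\phi$ and the step length $v$, and then passes from function-value convergence to the distance conclusions \eqref{dnsnbe} and \eqref{dnsnb} via Theorem \ref{thfnddb}.

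First I would prove \eqref{dnpdninbl}. The key observation is that $d_n', \wt{d}_n \in B_l(d_n,(1+\delta)r_n)$ and $d_n \to d^*$, $r_n \to r$, so the sequences $(d_n')$ and $(\wt{d}_n)$ are bounded and all limit points lie within distance $(1+\delta)r$ of $d^*$. To pin them inside $B_l(d^*,w)$ I would argue via function values: from \eqref{wolfh1} and the fact that $h(0)=0$ gives $h_n(d_n)=f_n(d_n)$, together with $p_n\alpha_1|\nabla f_n(d_n)|^2\geq 0$, we get $h_n(d_n')\leq f_n(d_n)$, and then \eqref{dleqd} gives $h_n(\wt{d}_n)\leq f_n(d_n)$. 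Using $f_n \overset{loc}{\rightrightarrows} f$ and $\nabla f_n \overset{loc}{\rightrightarrows}\nabla f$ on the compact ball $\overline{B}_l(d^*,(1+\delta)r)$, together with $f_n(d_n)\to f(d^*)$ (Lemma \ref{lemConvUnif}), I would show that any limit point $b$ of $(d_n')$ or $(\wt{d}_n)$ satisfies $f(b)\leq f(d^*)$; by \eqref{fdbddelta} this forces $|b-d^*|<w$. Since $h$ vanishes on $\overline{B}_l(0,1)$ and $w<r$, once $d_n',\wt{d}_n \in B_l(d^*,w)$ and $d_n\to d^*$ the argument $|b-d_n|/r_n$ is eventually below $1$, so $h_n\equiv f_n$ near these points; hence $\nabla h_n(\wt{d}_n)=\nabla f_n(\wt{d}_n)$, giving \eqref{nablafnwtdn} from \eqref{nablahneps}.

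Next I would analyze $\phi$ and $v$. The $\phi(u)=0$ equation is exactly the curvature (second Wolfe) condition evaluated at the limiting steepest-descent ray from $d^*$; when $\nabla f(d^*)=0$ it holds trivially at $u=0$ so $v=0$, and when $\nabla f(d^*)\neq 0$ I would use $\phi(0)=(1-\alpha_2)|\nabla f(d^*)|^2>0$ together with the descent/coercivity encoded in \eqref{fdbddelta} to locate the first zero $v$ in $(0,w)$. Passing to the limit in \eqref{wolfh1} using $d_n\to d^*$, $\nabla h_n(d_n)=\nabla f_n(d_n)\to\nabla f(d^*)$, and $p_n\to v$ (which I would extract along subsequences from the Wolfe conditions), I would obtain \eqref{limsupfwtdp}, and then \eqref{limsupfwtd} from \eqref{dleqd} and $h_n(\wt{d}_n)=f_n(\wt{d}_n)$ for large $n$. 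Finally, \eqref{dnsnbe} and \eqref{dnsnb} follow by applying Theorem \ref{thfnddb} on a compact set containing the relevant points, once the function values are shown to converge to the minimum of $f$ over the appropriate sublevel set; the stationarity in $D$ comes from \eqref{nablafnwtdn} and $\nabla f_n\overset{loc}{\rightrightarrows}\nabla f$.

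The main obstacle I expect is the convergence $p_n\to v$ of the step lengths and the consequent sharp $\limsup$ in \eqref{limsupfwtdp}. The step lengths $p_n$ are only constrained by the two Wolfe inequalities \eqref{wolfh1}--\eqref{wolfh2}, not uniquely determined, so I cannot expect $p_n$ to converge outright; instead I would argue along arbitrary convergent subsequences, showing every subsequential limit $p_*$ of $(p_n)$ satisfies both limiting Wolfe conditions and hence $p_*\geq$ the relevant threshold tied to $v$, while the first Wolfe condition forces the limiting function value down to at most $\mu$. Handling the degenerate case $\nabla f(d^*)=0$ separately (where $p_n|\nabla f_n(d_n)|\to 0$ so $d_n'\to d^*$ directly) and carefully justifying the interchange of $\limsup$ with the locally uniform convergence will be the delicate bookkeeping. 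A secondary subtlety is that $\wt{d}_n$ need not be a genuine minimizer, only a near-stationary point satisfying \eqref{nablahneps}, so the distance conclusion \eqref{dnsnb} must route through the value bound \eqref{limsupfwtd} plus stationarity rather than through a direct minimization argument.
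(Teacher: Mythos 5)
Your proposal is correct and follows essentially the same route as the paper's proof: trap $d_n'$ and $\wt{d}_n$ in $B_l(d^*,w)$ via the chain $h_n(\wt{d}_n)\leq h_n(d_n')\leq f_n(d_n)$ and the gap $s$ in \eqref{fdbddelta} so that $h_n$ coincides with $f_n$ there, lower-bound the step lengths through the curvature condition and $\phi$ to get $\limsup f(d_n')\leq\mu$ from the sufficient-decrease inequality, and conclude with Theorem \ref{thfnddb}. The only cosmetic difference is that you handle the step lengths by extracting convergent subsequences of $p_n$, whereas the paper shows directly that $p_n>v'$ eventually for each $v'<v$ by proving uniform convergence of the functions $\phi_n$ to $\phi$ on $[0,v']$; both establish $\liminf_n p_n\geq v$, which is all that is needed.
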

\begin{proof}
Let $K= \overline{B}_l(d^*, w)$. Let $N_1\in \N_+$ be such that for $n\geq N_1$,
$|d_n-d^*|\leq \frac{r-w}{2}$ and $r-r_n\leq \frac{r-w}{2}$, in which case 
for $x \in K$ we have 
\begin{equation}
|x-d_n|\leq |d^*-d_n|+ |x-d^*|\leq \frac{r-w}{2} +w \leq \frac{r-w}{2} +w +(\frac{r-w}{2} - (r-r_n))= r_n,
\end{equation}
so that 
\begin{equation}\label{kldtkldr}
K\subset \overline{B}_l(d_n,r_n). 
\end{equation}
From the set $F:=\bigcup_{n=1}^{\infty} B_l(d_n,(1+\delta)r_n)$ being bounded, let $N_2 \geq N_1$ be such that for $n\geq N_2$ 
\begin{equation}\label{fnxfxd3}
|f_n(x)-f(x)| \leq \frac{s}{2},\quad x \in F.
\end{equation}
From Lemma \ref{lemConvUnif},
\begin{equation}\label{limfndnfds}
\lim_{n\to\infty}f_n(d_n)=f(d^*), 
\end{equation}
and thus let $N_3 \geq N_2$ be such that for $n\geq N_3$ 
\begin{equation}\label{fndnfd}
|f_n(d_n)-f(d^*)| < \frac{s}{2}. 
\end{equation}
Then, for $n\geq N_3$ and $x \in F$ such that $|x-d^*| \geq w$, we have 
\begin{equation}
h_n(x)\geq f_n(x)\geq f(x)-\frac{s}{2}\geq f(d^*) +\frac{s}{2}> f_n(d_n), 
\end{equation} 
where in the first inequality we used (\ref{hnbfnb}) and Condition \ref{condh}, in the second (\ref{fnxfxd3}), in the third 
(\ref{fdbddelta}), and in the last (\ref{fndnfd}). Thus, since from (\ref{wolfh1}) and (\ref{dleqd}), 
\begin{equation}\label{leqhns} 
h_n(\wt{d}_n)\leq h_n(d_n')\leq f_n(d_n), 
\end{equation} 
for $n\geq N_3$ we have (\ref{dnpdninbl}). 
For $n \geq N_3$, from (\ref{kldtkldr}) and (\ref{dnpdninbl}), we have 
$h_n(d_n')=f_n(d_n')$, $\nabla h_n(d_n')=\nabla f_n(d_n')$, and similarly for 
$d_n'$ replaced by $\wt{d}_n$, so that from (\ref{nablahneps}),  (\ref{nablafnwtdn}) holds, and from (\ref{leqhns}), 
\begin{equation}\label{leqfns} 
f_n(\wt{d}_n)\leq f_n(d_n')\leq f_n(d_n). 
\end{equation} 
If $\nabla f(d^*)=0$, then $v=0$, in which case (\ref{limsupfwtdp}) and (\ref{limsupfwtd}) follow 
from (\ref{leqfns}), the sequences $(d_n')_{n\in\N_+}$ and $(\wt{d}_n)_{n\in\N_+}$ being bounded, and Theorem \ref{thConvda}. 
Let now $\nabla f(d^*)\neq 0$. Then, $\phi(0)=(1-\alpha_2)|\nabla f(d^*)|^2>0$ 
and thus from the continuity of $\phi$, $v>0$. The fact that 
$v <w$ follows from (\ref{fdbddelta}) and Lemma 3.1 in \cite{nocedal2006numerical} 
about the existence of steps $u>0$ satisfying the Wolfe conditions: $\phi(u) \leq 0$ and
$f(d^* - \nabla f(d^*)u)\leq f(d^*) - u\alpha_1|\nabla f(d^*)|^2$. 
Let $0<v'<v$. Then, from the continuity of $\phi$, 
\begin{equation}\label{infsu} 
\inf_{0\leq u\leq v'} \phi(u)>0. 
\end{equation}
For $n \in \N_+$, and $u \in \R$, let
$\phi_n(u)=\nabla f_n(d_n-\nabla f_n(d_n)u)\nabla f(d^*)- \alpha_2|\nabla f(d^*)|^2$. 
Since from Lemma \ref{lemConvUnif}
\begin{equation}\label{limnablafndn}
\lim_{n\to \infty }\nabla f_n(d_n)=\nabla f(d^*),  
\end{equation}
the function $u\to d_n-u\nabla f_n(d_n)$ converges uniformly to $u\to d^* - u\nabla f(d^*)$ on $[0,v']$, and thus 
from Theorem \ref{thCompact}, 
$\phi_n$ converges to $\phi$ uniformly on $[0,v']$. Thus, from (\ref{infsu}), let $N_4 \geq N_3$, be such that for $n\geq N_4$, 
$\inf_{u \in [0,v']} \phi_n(u)>0$. For such an $n$, from (\ref{dnpdn}) and (\ref{wolfh2}) it must hold $p_n > v'$ and from (\ref{wolfh1}) we have
\begin{equation}
f_n(d_n')\leq f_n(d_n) - v'\alpha_1|\nabla f_n(d_n)|^2,
\end{equation}
and thus 
\begin{equation}\label{fwtdnleq}
f(d_n')\leq f(d_n') - f_n(d_n') + f_n(d_n) - v'\alpha_1|\nabla f_n(d_n)|^2. 
\end{equation}
From (\ref{dnpdninbl}) and the fact that $f_n$ converges to $f$ uniformly on $K$, we have 
$\lim_{n\to\infty}(f(d_n') - f_n(d_n'))=0$. Thus, from (\ref{fwtdnleq}), (\ref{limfndnfds}), and 
(\ref{limnablafndn}), 
\begin{equation}
\limsup_{n\to \infty} f(d_n')\leq f(d^*) - v'\alpha_1|\nabla f(d^*)|^2.
\end{equation}
Since this holds for each $v'<v$, we have (\ref{limsupfwtdp}), and from (\ref{leqfns}), we also have (\ref{limsupfwtd}). 
Due to (\ref{fdbddelta}), $f$ attains a minimum. For each minimum point $x_0$ of $f$ we have $\nabla f(x_0)=0$
and from (\ref{limsupfwtdp}) and $f(x_0)\leq f(d_n')$, $n \in \N_+$, we have $f(x_0) \leq \mu$. Thus, $x_0 \in D$.
The minimum of $g:= (f\vee\mu)_{|K}$ is equal to $\mu$ and $E\subset K$ is its set of minimum points. 
From (\ref{limsupfwtdp}), we have $\lim_{n\to \infty}f(d_n')\vee \mu = \mu$.  
Thus, from (\ref{dnpdninbl}) and Theorem \ref{thfnddb} for such a $g$ and $g_n= g$, $n \in \N_+$,  we receive (\ref{dnsnbe}). 
The minimum of $g:=(|\nabla f| + f\vee\mu)_{|K}$ is $\mu$ and its set of minimum points is $D\subset K$. 
From $\epsilon_n\to 0$, (\ref{nablafnwtdn}), and (\ref{limsupfwtd}), 
$\lim_{n\to\infty}(|\nabla f_n(\wt{d}_n)| + f(\wt{d}_n)\vee \mu) = \mu$. Thus, from (\ref{dnpdninbl}) and
Theorem \ref{thfnddb} for such a $g$ and $g_n=(|\nabla f_n| + f\vee\mu)_{|K}$, $n \in \N_+$, we receive (\ref{dnsnb}).
\end{proof}

Let us now discuss how MGSSM and MGMSM can be applied in the LETGS setting for $\wt{\est}_n=\wh{\ic}_n$, $n \in \N_p$, for $p=2$. 
We assume conditions \ref{cond1},  \ref{condcgegc0}, and 
Condition \ref{condlemYmore} for $S=Z^2$. Then, from Theorem \ref{thLETGSStrong}, 
$\msq$ has a unique minimum point $d^*$. The variables $d_t$, $t\in T$, such that (\ref{limdtd}) holds a.s. for such a $d^*$, can be obtained e.g. 
using GSSM or GMSM methods respectively for $\wh{\est}_n=\wh{\msq}_n$ as in Section \ref{secminGrad}. Furthermore,
for a positive definite matrix $M$ and its lowest eigenvalue $m>0$ 
as in Theorem \ref{thLETGSStrong}, 
we have from (\ref{strongfb}) that 
\begin{equation}
\var(d^*+b) \geq \var(d^*)+\frac{m}{2}|b|^2,\quad b \in \R^l, 
\end{equation}
and thus 
\begin{equation}\label{icneq}
\ic(d^*+b) \geq c_{min}(\var(d^*)+\frac{m}{2}|b|^2),\quad b \in \R^l.
\end{equation}
For some $\sigma_1,\sigma_2\in \R_+$, $\sigma_1<\sigma_2$, let us define 
\begin{equation}
r=\sqrt{\frac{2}{m}\left(\frac{\ic(d^*)}{c_{min}}-\var(d^*)\right)+\sigma_2}
\end{equation}
and 
\begin{equation}
w=\sqrt{\frac{2}{m}\left(\frac{\ic(d^*)}{c_{min}}-\var(d^*)\right)+\sigma_1}.
\end{equation}
It holds $r>w>0$ and from (\ref{icneq}), for $b \in \R^l$, $|b|\geq w$,
\begin{equation}
\ic(d^*+b) \geq c_{min}(\var(d^*)+\frac{mw^2}{2})= \ic(d^*)+\frac{m}{2}c_{min}\sigma_1,
\end{equation}
so that we have (\ref{fdbddelta}) for $f=\ic$ and $s = \frac{m}{2}c_{min}\sigma_1$. 

Let us assume that Condition \ref{condlemYmore} holds for $S=C$ (in addition to this condition holding for $S=Z^2$ as assumed above), 
so that from the fourth point of Theorem \ref{thDiff}, $\ic$ is smooth. 
Let $\mu$, $E$, and $D$ be as in Theorem \ref{thECGSM} for $f=\ic$ and $d^*$ as above. 
Note that we have $\mu < \ic(d^*)$ only if $\nabla \ic(d^*)\neq 0$, which from 
Remark \ref{remicvar} holds only if $\var(d^*)\neq 0$ and $\nabla c(d^*)\neq 0$.
Let for $n \in \N_+$ and $b \in \R^l$
\begin{equation}
\wh{M}_n(b)=\overline{\left(2L(b)GZ^2\exp(-\frac{1}{2}\sum_{i=1}^{\tau}|\eta_i|^2)\right)}_n,
\end{equation}
and let $\wh{m}_n(b)=m_l(\wh{M}_n(b))$ for $m_l$ as in Section \ref{secSomeCondLETGS}, i.e. $\wh{m}_n(b)$ is the lowest eigenvalue of $\wh{M}_n(b)$. 
For $n \in \N_p$, and $b,d \in \R^l$, 
let us define $\wh{r}_n(b,d):\Omega_1^n\to \R$ to be such that for $\omega \in \Omega_1^n$ for which 
$\wh{m}_n(b)(\omega)>0$ and $\wh{\ic}(b,d)(\omega)-c_{min}\wh{\var}_n(b,d)(\omega)>0$
\begin{equation}
\wh{r}_n(b,d)(\omega)=\sqrt{\frac{2}{\wh{m}_n(b)(\omega)}\left(\frac{\wh{\ic}_n(b,d)(\omega)}{c_{min}}-\wh{\var}_n(b,d)(\omega)\right)+\sigma_2}, 
\end{equation}
and otherwise $\wh{r}_n(b,d)(\omega)=a$ for some $a \in \R_+$.

Let us now focus on MGSSM, for which let us assume Condition \ref{condlemYmore} for $S=1$ 
and that $r_t=\I(N_t=k\in \N_p)\wh{r}_{k}(b',d_t)(\wt{\kappa}_k)$, $t\in T$. 
Then, from Theorem \ref{thunifDiff}, a.s. $b\to\wh{\var}_n(b',b)(\wt{\kappa}_n)\overset{loc}{\rightrightarrows}\var$ 
and $b\to\wh{\ic}_n(b',b)(\wt{\kappa}_n)\overset{loc}{\rightrightarrows}\ic$. 
Thus, from Lemma \ref{lemConvUnif} and Condition \ref{condT}, we have a.s. $\I(N_t=k\in \N_p)\wh{\var}_{k}(b',d_t)(\wt{\kappa}_k)\to \var(d^*)$ 
and $\I(N_t=k\in \N_p)\wh{\ic}_{k}(b',d_t)(\wt{\kappa}_k)\to \ic(d^*)$. 
Furthermore, from the SLLN, a.s. $\wh{M}_n(b')(\wt{\kappa}_n)\rightarrow M$ and thus from Lemma \ref{lemLipschMin}, 
$\wh{m}_n(b')(\wt{\kappa}_n)\to m$. Therefore, a.s. 
$\lim_{t\to \infty}r_t = r$. Thus, from Theorem \ref{thECGSM} 
and Remark \ref{remMesUnc} we receive that the following condition holds for MGSSM. 
\begin{condition}\label{condMGM}
Condition \ref{condgradck} holds, a.s. 
$\limsup_{t\to \infty}\ic(d_t')\leq \mu$,  
$\limsup_{t\to \infty}\ic(\wt{d}_t)\leq \mu$, $\lim_{t\to \infty} \dist(d_t',E)=0$, and
\begin{equation}\label{limdistcnD}
\lim_{t\to \infty} \dist(\wt{d}_t,D)=0. 
\end{equation}
Furthermore, a.s. for a sufficiently large $t$, $d_t',\wt{d}_t \in B_l(d^*, w)$.  
\end{condition}

For MGMSM let us assume that conditions \ref{condLiKi} and \ref{condLi} hold, that 
Condition \ref{condSpr} holds for $S=Z^2$, $S=C$, and $S=1$, and that
$r_k=\wh{r}_{n_k}(b_{k-1},d_k)(\wt{\chi}_k)$, $k \in \N_+$. 
From Theorem \ref{thunifDiffMult}, a.s. $b\to\wh{\var}_{n_k}(b_{k-1},b)(\wt{\chi}_k)\overset{loc}{\rightrightarrows}\var$ and 
$b\to\wh{\ic}_{n_k}(b_{k-1},b)(\wt{\chi}_k)\overset{loc}{\rightrightarrows}\ic$. 
From H\"{o}lder's inequality and Theorem \ref{thzb} it easily follows that Condition
\ref{condSpr} holds for $S$ equal to the different entries of $\wh{M}_1(0)$.  
Thus, from the first point of Theorem \ref{thunifDiffMult} a.s. 
$\wh{M}_{n_k}(b_{k-1})(\wt{\chi}_k)\rightarrow M$, and thus $\wh{m}_{n_k}(b_{k-1})(\wt{\chi}_k)\to m$. Therefore, we have  
a.s. $\lim_{k\to \infty}r_k=r$. 
Thus, from Theorem \ref{thECGSM} and Remark \ref{remMesUnc} it follows that Condition 
\ref{condMGM} holds for MGMSM.

\begin{theorem}\label{thicvar}
Let functions $\var$, $c$, and $\ic$ be as in Section \ref{secCoeffDiv} for $A$ open,
let $\var$ be lower semicontinuous and convex 
and have a unique minimum point $b^*\in A$, and let $\ic$ be continuous in $b^*$.
Let for some $d_n \in A$, $n \in \N_+$, 
\begin{equation}\label{limsupic}
\limsup_{n\to \infty}\ic(d_n) < \ic(b^*).
\end{equation}
Then,
\begin{equation}\label{liminfvar}
\liminf_{n\to \infty}\var(d_n)> \var(b^*)
\end{equation}
and  
\begin{equation}\label{limsupc}
\limsup_{n\to \infty}c(d_n)< c(b^*). 
\end{equation}
\end{theorem}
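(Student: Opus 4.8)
The plan is to exploit the identity $\ic = c\var$ together with the nonnegativity of $\var$ and $c$ and the uniqueness and convexity of $\var$, turning the gap $\limsup_n \ic(d_n) < \ic(b^*)$ into a uniform separation of the $d_n$ from $b^*$. First I would record some elementary consequences of the hypotheses. Since $\var \geq 0$ and $c \geq 0$ we have $\ic \geq 0$, so $\gamma := \limsup_n \ic(d_n) \geq 0$; the strict inequality $\gamma < \ic(b^*)$ then forces $\ic(b^*) > 0$, hence $\var(b^*) > 0$ and $c(b^*) > 0$. Next, fix $s$ with $\gamma < s < \ic(b^*)$. Because $\ic$ is continuous at $b^*$ (interpreting continuity into $\overline{\R}$, so that the case $\ic(b^*)=\infty$ is covered) and $A$ is open, I can choose $\rho \in \R_+$ with $\overline{B}_l(b^*,\rho) \subset A$ and $\ic(b) > s$ for all $b \in B_l(b^*,\rho)$. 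Since $\limsup_n \ic(d_n) = \gamma < s$, for all sufficiently large $n$ we have $\ic(d_n) < s$, whence $d_n \notin B_l(b^*,\rho)$, i.e.\ $|d_n - b^*| \geq \rho$.

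The crux is to convert this separation into a uniform lower bound on $\var$. Restricting $\var$ to the compact sphere $S_l(b^*,\rho) \subset A$, Lemma \ref{lemMin} (applicable since $\var$ is lower semicontinuous and $\var > -\infty$) furnishes a minimizer $\hat b \in S_l(b^*,\rho)$; as $\hat b \neq b^*$ and $b^*$ is the unique minimum point of $\var$, the value $m_\rho := \var(\hat b)$ satisfies $m_\rho > \var(b^*)$. Then, for any $b \in A$ with $|b - b^*| \geq \rho$, writing $\theta = \rho/|b-b^*| \in (0,1]$ and $\tilde b = (1-\theta)b^* + \theta b \in S_l(b^*,\rho)$, convexity of $\var$ along the segment $[b^*,b]$ (which lies in $A$ by convexity of $A$) yields $\var(\tilde b) \leq (1-\theta)\var(b^*) + \theta \var(b)$, and rearranging, using $\frac{1}{\theta} \geq 1$ and $\var(\tilde b) - \var(b^*) \geq 0$, gives $\var(b) \geq \var(b^*) + (\var(\tilde b) - \var(b^*)) \geq m_\rho$. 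Applying this with $b = d_n$ for large $n$ yields $\var(d_n) \geq m_\rho$, so $\liminf_n \var(d_n) \geq m_\rho > \var(b^*)$, which is (\ref{liminfvar}).

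Finally, for (\ref{limsupc}) I would use $\ic(d_n) = c(d_n)\var(d_n)$. For large $n$ we have $\var(d_n) \geq m_\rho > 0$ and $\ic(d_n) < s < \infty$, so $c(d_n) \leq \ic(d_n)/\var(d_n)$, with the convention that the right-hand side is $0$ when $\var(d_n) = \infty$. Passing to the limit through the standard inequality $\limsup_n (a_n/b_n) \leq (\limsup_n a_n)/(\liminf_n b_n)$ for nonnegative $a_n$ and positive $b_n$ gives $\limsup_n c(d_n) \leq \gamma/m_\rho$. Since $\gamma < \ic(b^*) = c(b^*)\var(b^*)$ and $\var(b^*) < m_\rho$, I obtain $\gamma/m_\rho < c(b^*)\var(b^*)/m_\rho < c(b^*)$, which is the desired (\ref{limsupc}).

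I expect the main obstacle to be the step propagating the sphere bound to a global gap for $\var$: a priori the sublevel sets $\{\var \leq \var(b^*)+\epsilon\}$ need not be bounded, so one cannot simply invoke compactness of the region $\{|b-b^*|\geq\rho\}$, and it is precisely the convexity-along-segments computation that repairs this. A secondary source of care is the treatment of the degenerate values $\var(b) = \infty$ and $c(b^*) = \infty$, which must be accommodated through the $\overline{\R}$-valued conventions and the continuity of $\ic$ into $\overline{\R}$; the formulation of the ratio bound above is chosen so as to remain valid in these cases.
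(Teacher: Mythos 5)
Your proof is correct and follows essentially the same route as the paper's: separate the $d_n$ from $b^*$ via the continuity of $\ic$ at $b^*$, bound $\var$ from below outside the ball using the sphere minimum (lower semicontinuity) propagated by convexity, and conclude for $c$ via the ratio $\ic/\var$. The only difference is that you spell out the convexity-along-segments step and the degenerate-value bookkeeping explicitly, which the paper leaves implicit.
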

\begin{proof}
For some $\ic(b^*)>s> \limsup_{n\to \infty}\ic(d_n)$, let $\epsilon\in\R_+$ be such that 
$\ic(b)> s$ for $b\in B_l(b^*,\epsilon)\subset A$. Then, $d_n \in A\setminus B_l(b^*,\epsilon)$
for a sufficiently large $n$. 
From the semicontinuity of $\var$, for some $b_0 \in S_l(b^*,\epsilon)$, $\var(b_0)=\min_{b\in S_l(b^*,\epsilon)}\var(b)>\var(b^*)$, and thus 
from the convexity of $\var$ it holds $\var(b)\geq\var(b_0)$ for $b\in A\setminus B_l(b^*,\epsilon)$, 
and we have (\ref{liminfvar}). Note that from (\ref{limsupic}), $\ic(b^*)>0$ and thus $\var(b^*)>0$. Therefore,
\begin{equation}
\limsup_{n\to \infty}c(d_n)\leq \frac{\limsup_{n\to \infty}\ic(d_n)}{\liminf_{n\to \infty} \var(d_n)}< \frac{\ic(b^*)}{\var(b^*)}=c(b^*). 
\end{equation}
\end{proof}

If $\nabla \ic(d^*)\neq 0$, so that $\mu < \ic(d^*)$, then 
for $c_t=\wt{d}_t$ or $c_t=d_t'$ as above for which we have a.s. $\limsup_{t\to \infty}\ic(c_t)\leq \mu$, 
from Theorem \ref{thicvar} it also holds a.s.  $\liminf_{t\to \infty}\var(c_t)> \var(d^*)$ and $\limsup_{t\to \infty}c(c_t)< c(d^*)$. 

\begin{condition}\label{condbbsrl}
$D=\{b^*\}$ for some $b^* \in \R^l$. 
\end{condition}
\begin{remark}\label{remCasesconbbsrl}
Note that Condition \ref{condbbsrl} holds under the assumptions as above e.g. if $C$ is a positive constant or if $\var(d^*)=0$, and in both these cases $b^*=d^*$.  
\end{remark}
\begin{remark}\label{remLimIC}
Let us assume Condition \ref{condbbsrl}. 
Then, $b^*$ is the unique minimum point of $\ic$ as above. 
Furthermore, under the above assumptions for MGSSM and MGMSM, 
from (\ref{limdistcnD}) 
and Lemma \ref{lemConvUnif}, counterparts of conditions 
\ref{condESSM1} and \ref{condESSM1f} hold in these methods
(by which we mean the same as above Remark \ref{remCountKiA}).
\end{remark}
Note that Remark \ref{remCountKiA} applies also to MGMSM.  

\section{\label{secCompFirst}Comparing the first-order asymptotic efficiency of minimization methods} 
Let $A \in \mc{B}(\R^l)$ be nonempty and $T \subset \R_+$ be unbounded. Consider 
a function $\phi:\mc{S}(A)\to \mc{S}(\overline{\R})$ and an $A$-valued stochastic process $d=(d_t)_{t \in T}$. For $t \in T$, 
$d_t$ can be an adaptive random parameter trying to minimize $\phi$ for $t$ being e.g. the simulation budget, 
the total number of steps in SSM methods, or the number of stages or simulations in MSM methods, used to compute $d_t$. 
We describe some such possibilities in more detail in the below remark. 

\begin{remark}\label{remdk} 
 In the various SSM methods as in the previous sections, for some $T$ as in Condition \ref{condT}, 
 we can consider $d_t$ equal to $d_t$, $\wt{d}_t$, or $d_t'$ as in these methods, $t \in T$ (see Remark \ref{remCondT}). 
 Let us further consider the case of the various MSM methods as in the previous sections. 
 Then, for variables $p_k$ equal to 
 $d_k$, $\wt{d}_k$, or $d_k'$ as in these methods, $k \in \N_+$, 
 for some $\N\cup\{\infty\}$-valued random variables $N_t$, $t \in T$, and some $A$-valued random variables $p_0$ and $p_\infty$, one can set 
 \begin{equation}\label{ptau}
 d_t=p_{N_t}\I(N_t \neq \infty) +p_\infty\I(N_t=\infty), \quad t\in T. 
 \end{equation}
 The simplest choice would be to take $T=\N_+$ and $N_k=k$, so that $d_k=p_k$, $k\in \N_+$, i.e. $k$ is the number of stages of MSM in which 
 $d_k$ is computed. If we want $t\in T$ to correspond 
 to the number of samples generated to compute $d_t$, 
 then for $s_k:=\sum_{i=1}^kn_i$, $k \in \N_+$, and $T=\{s_k,k\in\N_+\}$, we can take $N_{s_k}=k$, $k \in \N_+$. 
 Alternatively, we can take $T=\R_+$ and for each $t\in T$, 
 $N_t$ can be the smallest number of stages using the simulation budget $t$, 
 or the highest such number before we exceed that budget.  
 Let us discuss how one can model this. 
For some $[0,\infty)$-valued random variables $M_i$ modelling the costs of the minimization algorithms 
in the $i$th stage of MSM (we can set $M_i=0$ if we do not want to consider them), $i \in \N_+$, 
and $U$ being a theoretical cost variable analogous as of a step of SSM in Remark \ref{remCondT},
under Condition \ref{condChi} we can take e.g. 
\begin{equation}\label{taucpinf} 
N_t=\inf\{k \in \N:\sum_{i=1}^{k}(M_i+\sum_{j=1}^{n_i}U(\chi_{i,j})) \geq t\} 
\end{equation} 
or 
\begin{equation}\label{taucpsup} 
N_t=\sup\{k \in \N:\sum_{i=1}^{k}(M_i+\sum_{j=1}^{n_i}U(\chi_{i,j})) \leq t\}. 
\end{equation} 
Note that if we have (\ref{ptau}) and a.s. (\ref{ntoinfty}) and (\ref{nt0})
and one of the following holds: a.s. $p_k \to b^*$, for some $f:A\to\R$ a.s. $f(p_k)\to f(b^*)$, or 
for some $m \in \R$, a.s. $\limsup_{k\to \infty} f(p_k)\leq m$, 
then we have respectively that a.s. $d_t \to b^*$ (compare with Remark \ref{remtaua}), 
$f(d_t) \to f(b^*)$, or $\limsup_{t\to \infty}f(d_t) \leq m$. 
For $N_t$ as in (\ref{taucpinf}) or (\ref{taucpsup}), (\ref{ntoinfty}) holds a.s. if $U < \infty$, $\PQ(b)$ a.s., $b \in A$. Furthermore,  
(\ref{nt0}) holds a.s. if Condition \ref{condcgegCmin} holds for $C=U$,
or from Theorem \ref{thSLLNh}, if for some $K \in \mc{B}(A)$ such that $\inf_{b \in K} \E_{\PQ(b)}(U)>0$, 
Condition \ref{condKi} holds for
$K_i=K$, $i \in \N_+$, and the assumptions of Theorem \ref{thSLLNS} hold for $g(v,x)=U(x)$, $v\in A$, $x\in \Omega_1$.  
\end{remark}

For each $\overline{\R}$-valued stochastic process
$b=(b_t)_{t\in T}$, let us denote $\sigma_{-}(b)= \sup\{x \in \R: \lim_{t\to \infty} \PR(b_t>x)=1\}$
and $\sigma_+(b)= \inf \{x \in \R: \lim_{t\to \infty} \PR(b_t<x)=1\} = -\sigma_-(-b)$. 
Note that $\sigma_-(b)\leq \sigma_+(b)$ 
and $\sigma_-(b)=\sigma_+(b)=x\in\overline{\R}$ only if $b_t \overset{p}{\to}x$.
For $b'$ analogous as $b$ we have $\sigma_-(b'-b)\geq \sigma_-(b')-\sigma_+(b)$.
In particular, for each $\delta \in \R$ such that
$\sigma_-(b')-\sigma_+(b)>\delta$, we have $\lim_{t\to \infty}\PR(b'-b >\delta)= 1$,
and such a $\delta$ can be chosen positive if $\sigma_-(b')>\sigma_+(b)$. 

For $d=(d_t)_{t\in T}$ as above, let us denote $\phi(d)=(\phi(d_t))_{t\in T}$. 
Let $d'$ be analogous as $d$. 
We shall call $d$ asymptotically not less efficient 
than $d'$ for the minimization of $\phi$ 
if $\sigma_+(\phi(d))\leq \sigma_-(\phi(d'))$, and (asymptotically) more efficient for this purpose
if this inequality is strict. If $\phi(d_t)$ and $\phi(d_t')$ both converge in probability to the same real number, 
then $d$ and $d'$ shall be called equally efficient.
We call such defined relations the first-order asymptotic efficiency relations, to distinguish them 
from such second-order relations which will be defined in Section \ref{secSecond}.

For instance, 
for some $d=(d_t)_{t\in T}$ as above, which can be some parameters corresponding to the single- or multi-stage minimization of some mean square estimators 
as in the above remark, and $d^*$ 
being the unique minimum point of mean square, let
it hold  a.s. $d_t \to d^*$, and thus assuming further that $\ic$ is continuous in $d^*$, also a.s. 
$\ic(d_t)\to\ic(d^*)$. Let further for
$(d_t')_{t\in T}$, which can be some parameters corresponding to the minimization of the inefficiency constant estimators as in the above remark,
it hold for $\mu$ as in Section \ref{secECG} (for which $\mu \leq \ic(d^*)$
and if $\nabla\ic(d^*)\neq 0$, then $\mu<\ic(d^*)$), that a.s. $\limsup_{t\to \infty}\ic(d_t')\leq \mu$.
Then, $d'$ is asymptotically not less efficient for the minimization  of $\ic$
than $d$ and more efficient if $\nabla \ic(d^*)\neq 0$. 

Let now some 
$d=(d_t)_{t \in T}$ as above, which can be some parameters corresponding to the single- or multi-stage minimization of the cross-entropy 
estimators as in the above remark, fulfill a.s. 
$d_t \to p^*$ for $p^*$ being the unique minimum point of 
the cross-entropy. Let further $d'=(d_t')_{t \in T}$, which can correspond to the minimization of mean square
or inefficiency constant estimators for $C=1$, 
fulfill a.s. $d_t\to b^*$ for 
$b^*$ being the unique minimum point of $\msq$, which is continuous in $b^*$ and convex on the set on which it is finite.
Then, $d'$ is asymptotically not less efficient than $d$
for the minimization of $\msq$, and more efficient if $b^*\neq p^*$. 

\section{\label{secFindOpt}Finding exactly a zero- or optimal-variance IS parameter} 
In this section we describe situations in which a.s. for a sufficiently large $t$, the minimization results $d_t$ of our new estimators 
are equal to a zero- or optimal-variance IS parameter $b^*$ as in Definition \ref{defParamZero}. 
When proving that this holds in the below examples  we shall impose an assumption that we can find the minimum or critical points 
of these estimators exactly. Even though such an assumption is unrealistic when minimization is performed 
using some iterative numerical methods, it is a frequent idealisation used to simplify the convergence analysis 
of stochastic counterpart methods, see e.g. \cite{Jourdain2009,Shapiro2003,KimH07}. 
Note also that such an assumption is fulfilled for linearly parametrized control variates 
(at least when the numerical errors occurring when solving a system of linear equations are ignored) 
when a zero-variance control variates parameter exists (see e.g. Chapter 5, Section 3 in \cite{asmussen2007stochastic}), 
in which case the below theory can be easily applied to prove that a.s. such a parameter is found by the method after sufficiently many steps. 

For a nonempty set $A \in \mc{B}(\R^l)$, consider a function
$h:\mc{S}(A) \otimes \mc{S}_1 \to \mc{S}(\overline{\R})$. 
We will be most interested in the cases corresponding to the below two conditions. 
\begin{condition}\label{condhzlb}
Condition \ref{condLmes} holds and $h(b,\omega)= (ZL(b))(\omega)$, $\omega \in \Omega_1$, $b \in A$. 
\end{condition}
\begin{condition}\label{condhzlb2}
Condition \ref{condLmes} holds and 
$h(b,\omega)= (|Z|L(b))(\omega)$, $\omega \in \Omega_1$, $b \in A$. 
\end{condition}

Let $b^* \in A$ and consider a family of distributions as in Section \ref{secFamily}, satisfying Condition \ref{condpqpq1}. 
\begin{condition}\label{condhperf}
For some $\beta \in \R$, $\PQ_1$ a.s. (and thus also $\PQ(b)$ a.s. for each $b \in A$)
\begin{equation}
h(b^*,\cdot)=\beta. 
\end{equation}
\end{condition}

\begin{condition}\label{condSuffhperf}
Condition \ref{condhzlb} holds and $b^*$ is a zero-variance IS parameter. 
\end{condition}

\begin{condition}\label{condSuffhperf2}
Condition \ref{condhzlb2} holds and $b^*$ is an optimal-variance IS parameter. 
\end{condition}

\begin{remark}\label{remCondhperf}
From the discussion in Section \ref{secIS}, under Condition \ref{condSuffhperf}, Condition \ref{condhperf} holds for $\beta= \alpha=\E_{\PQ_1}(Z)$, and under 
Condition \ref{condSuffhperf2} --- for $\beta=\E_{\PQ_1}(|Z|)$.
\end{remark}


\begin{remark}\label{remhalphakappa}
If  conditions \ref{condKappa} and \ref{condhperf} hold, then a.s. 
\begin{equation}
h(b^*,\kappa_i)=\beta,\quad i \in \N_+. 
\end{equation} 
\end{remark}

\begin{lemma}\label{lemhalphachi}
If conditions \ref{condChi} and \ref{condhperf} hold, then a.s. 
\begin{equation}\label{bchik}
h(b^*,\chi_{k,i})=\beta,\quad i \in \{1,\ldots,n_k\},\ k \in \N_+.
\end{equation} 
\end{lemma}
\begin{proof}
For each $k \in \N_+$ and $i\in \{1,\ldots,n_k\}$
\begin{equation}
\begin{split}
\PR(h(b^*,\chi_{k,i})=\beta)=\E(\PQ(b_{k-1})(h(b^*,\cdot)=\beta))=1, 
\end{split}
\end{equation} 
so that (\ref{bchik}) holds a.s. as a conjunction of a countable number of conditions holding with probability one. 
\end{proof}

Let $D \in\mc{B}(A)$ be such that $b^* \in D$. For each  $n \in \N_+$ and $\omega \in \Omega_1^n$, let
us define 
\begin{equation}\label{defbnomega} 
B_{n}(\omega)=\{b \in D:h(b,\omega_i)=h(b,\omega_j)\in \R,\ i, j\in\{1,\ldots,n\}\}. 
\end{equation} 

For some $p \in \N_+$ and each $n \in \N_p$, let $\wh{\est}_n$ be as in (\ref{estnDef}). 
Consider the following condition. 
\begin{condition}\label{condEst} 
For each $n \in \N_p$, if the set 
$B_n(\omega)$ is nonempty, then for each $d\in A$, $B_n(\omega)$ is a subset of the set of the minimum points of 
$b\in D\to\wh{\est}_n(d,b)(\omega)$. 
\end{condition} 

Note that if Condition \ref{condEst} holds, then it holds also for $D$ replaced by its arbitrary subset (where
$D$ is replaced also in (\ref{defbnomega})).

\begin{remark}\label{remExact}
Let us assume Condition \ref{condg0} and let $D=A$. It holds for $b',b\in A$
\begin{equation}
\wh{\msq2}_n(b',b)=\frac{1}{n^2}\sum_{i<j \in \{1,\ldots,n\}} \frac{L_i'L_j'}{L_i(b)L_j(b)}(|Z_i|L_i(b) - |Z_j|L_j(b))^2+ (\overline{(|Z|L')}_n)^2. 
\end{equation}
Thus, under Condition \ref{condhzlb2}, Condition \ref{condEst} is satisfied for $p=1$ and $\wh{\est}_n$ equal to 
$\wh{\msq2}_n$ or $\wt{\msq2}_n$ (for the latter see (\ref{estngvar}) and (\ref{estnbpb})), or 
for $p=2$ and $\wh{\est}_n$ equal to $\wh{\var}_n$ (which is positively linearly equivalent to $\wh{\msq2}_n$ in the function of $b$
as discussed in Section \ref{secEstMin}).  
Furthermore, under Condition \ref{condhzlb}, Condition \ref{condEst} is satisfied for 
 $p=2$ and $\wh{\est}_n=\wh{\ic}_n$ (see formulas (\ref{varest})
and (\ref{icdef})) or  $p=3$ and $\wh{\est}_n=\wh{\ic2}_n$ (see (\ref{ic2Est})). 
\end{remark} 

\begin{lemma}\label{lemMinKappa}
If conditions \ref{condKappa} and \ref{condhperf} hold, then a.s. for each $k \in \N_p$,
$b^* \in B_k(\wt{\kappa}_k)$. If further Condition \ref{condEst} holds then 
a.s. for each $k \in \N_p$ and $d \in A$, 
$b^*$ is a minimum point of $b\in D\to\wh{\est}_k(d,b)(\wt{\kappa}_k)$. 
\end{lemma}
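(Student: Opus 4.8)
The plan is to prove the two assertions in sequence, reducing each to results already established in the excerpt. The first assertion is that a.s., for each $k \in \N_p$, the zero-/optimal-variance parameter $b^*$ lies in the set $B_k(\wt{\kappa}_k)$ defined in (\ref{defbnomega}). The second assertion then upgrades this, via Condition \ref{condEst}, to the statement that $b^*$ minimizes $b \in D \to \wh{\est}_k(d,b)(\wt{\kappa}_k)$ for each $d$ and each $k$.

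For the first assertion, I would start from Remark \ref{remhalphakappa}, which guarantees that under conditions \ref{condKappa} and \ref{condhperf} we have a.s. $h(b^*,\kappa_i)=\beta$ for all $i \in \N_+$. Fix a realization $\omega$ in the almost-sure event on which this holds for every $i$. Then for any $k \in \N_p$ and any $i,j \in \{1,\ldots,k\}$, we have $h(b^*,\kappa_i(\omega))=\beta=h(b^*,\kappa_j(\omega))\in\R$, so $b^*$ satisfies the defining conditions of $B_k(\wt{\kappa}_k(\omega))$; since also $b^*\in D$ by hypothesis, we conclude $b^*\in B_k(\wt{\kappa}_k(\omega))$. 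As this holds simultaneously for all $k\in\N_p$ on a single almost-sure event (the intersection over $i\in\N_+$ is still almost sure), the first claim follows.

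For the second assertion, I would invoke Condition \ref{condEst} directly. On the same almost-sure event, for each $k \in \N_p$ the set $B_k(\wt{\kappa}_k)$ is nonempty (it contains $b^*$), so by Condition \ref{condEst} every element of $B_k(\wt{\kappa}_k)$ — in particular $b^*$ — is a minimum point of $b \in D \to \wh{\est}_k(d,b)(\wt{\kappa}_k)$ for each $d \in A$. Intersecting over $k \in \N_p$ (a countable family) preserves the almost-sure property, yielding the claim.

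I do not anticipate a genuine obstacle here: the statement is essentially a bookkeeping consequence of Remark \ref{remhalphakappa} together with the definition of $B_k$ and Condition \ref{condEst}. The only point requiring mild care is the quantifier handling — ensuring that the exceptional null sets coming from Remark \ref{remhalphakappa} (one per index $i$, or a single one already packaged by that remark) are combined into a single almost-sure event before quantifying over all $k \in \N_p$ and all $d \in A$. Since the indexing over $i$ and $k$ is countable and the quantifier over $d$ is handled uniformly inside Condition \ref{condEst} rather than generating new exceptional sets, no measurability or uncountable-union difficulty arises.
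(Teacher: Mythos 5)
Your proof is correct and follows exactly the paper's route: the paper's own proof is the one-line citation of Remark \ref{remhalphakappa} and definition (\ref{defbnomega}), which you simply unpack, together with the direct application of Condition \ref{condEst}. The care you take with combining the countable family of null sets is sound and implicit in the paper's argument.
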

\begin{proof}
It follows from Remark \ref{remhalphakappa} and (\ref{defbnomega}).    
\end{proof}

\begin{condition}\label{condUniqSSM}
Conditions \ref{condKappa} and \ref{condT} hold and functions $d_t:\Omega\to B$, $t \in T$, are such that
a.s. for a sufficiently large $t$, $b\in D\to \wh{\est}_{N_t}(b',b)(\wt{\kappa}_{N_t})$ has a unique minimum point equal to $d_t$.
\end{condition}

\begin{theorem}\label{thckbs} 
If conditions \ref{condhperf}, \ref{condEst}, and \ref{condUniqSSM} hold, 
then a.s. for a sufficiently large $t$, $d_t=b^*$. 
\end{theorem}
\begin{proof}
It follows from Lemma \ref{lemMinKappa}. 
\end{proof}

\begin{lemma}\label{lemMinChi} 
If Condition \ref{condChi} holds for $n_k \in \N_p$, $k \in \N_+$, 
and Condition \ref{condhperf} holds, then a.s. for each $k \in \N_+$, $b^* \in B_{n_k}(\wt{\chi}_k)$. 
If further Condition \ref{condEst} holds, then a.s. for each $k \in \N_+$ and
$d \in A$, $b^*$ is a minimum point of $b\in D\to\wh{\est}_{n_k}(d,b)(\wt{\chi}_k)$. 
\end{lemma}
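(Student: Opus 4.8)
Looking at Lemma \ref{lemMinChi}, this is the multi-stage (EMSM) analogue of Lemma \ref{lemMinKappa}, which was the single-stage version proved via Remark \ref{remhalphakappa}. The statement has two parts: first, that almost surely $b^* \in B_{n_k}(\wt{\chi}_k)$ for each $k$; second, that if additionally Condition \ref{condEst} holds, then $b^*$ minimizes $b \in D \to \wh{\est}_{n_k}(d,b)(\wt{\chi}_k)$ for each $k$ and $d$.

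Let me look at the parallel structure. Lemma \ref{lemMinKappa} uses Remark \ref{remhalphakappa} (the i.i.d. $\kappa_i$ case) plus the definition \eqref{defbnomega} of $B_n(\omega)$. For the multi-stage case, the corresponding tool is Lemma \ref{lemhalphachi}, which states that under Conditions \ref{condChi} and \ref{condhperf}, almost surely $h(b^*,\chi_{k,i}) = \beta$ for all $i \in \{1,\ldots,n_k\}$, $k \in \N_+$. So the proof should essentially mirror the proof of Lemma \ref{lemMinKappa} but substitute Lemma \ref{lemhalphachi} for Remark \ref{remhalphakappa}.

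The plan is as follows. First I would invoke Lemma \ref{lemhalphachi}: under Conditions \ref{condChi} and \ref{condhperf}, almost surely $h(b^*,\chi_{k,i}) = \beta \in \R$ for all $i \in \{1,\ldots,n_k\}$ and all $k \in \N_+$. On this almost-sure event, fix any $k$. Then $h(b^*,(\wt{\chi}_k)_i) = h(b^*,\chi_{k,i}) = \beta = h(b^*,\chi_{k,j})$ for all $i,j \in \{1,\ldots,n_k\}$, and all these common values lie in $\R$. Since $b^* \in D$ by hypothesis, the definition \eqref{defbnomega} of $B_{n_k}(\omega)$ with $\omega = \wt{\chi}_k$ gives immediately $b^* \in B_{n_k}(\wt{\chi}_k)$. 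This establishes the first assertion. For the second assertion, assuming Condition \ref{condEst}, note that on the same almost-sure event the set $B_{n_k}(\wt{\chi}_k)$ is nonempty (it contains $b^*$), so Condition \ref{condEst} applied with $n = n_k$ and $\omega = \wt{\chi}_k$ tells us that for each $d \in A$, every element of $B_{n_k}(\wt{\chi}_k)$ — in particular $b^*$ — is a minimum point of $b \in D \to \wh{\est}_{n_k}(d,b)(\wt{\chi}_k)$. Since this holds simultaneously for all $k$ on the single almost-sure event, we are done.

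I do not anticipate a genuine obstacle here, since this is a clean transcription of the single-stage argument using the already-established conditional-distribution bookkeeping. The only point requiring minor care is that Lemma \ref{lemhalphachi} produces a \emph{single} almost-sure event on which the conclusion holds for all $k$ and all $i$ simultaneously (it was proved there as a countable conjunction of probability-one events), so the quantifier ``a.s. for each $k$'' in the present lemma is correctly handled without any additional union-of-null-sets argument. I would just make sure to phrase the conclusion so that the ``for each $k$'' lives inside the almost-sure event rather than outside it.
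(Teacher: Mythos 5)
Your proposal is correct and is exactly the paper's argument: the paper's proof is the one-line "It follows from Lemma \ref{lemhalphachi} and (\ref{defbnomega})," and you have simply spelled out that deduction, including the correct handling of the single almost-sure event coming from the countable conjunction in Lemma \ref{lemhalphachi} and the application of Condition \ref{condEst} for the second assertion.
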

\begin{proof}
It follows from Lemma \ref{lemhalphachi} and (\ref{defbnomega}).    
\end{proof}

\begin{condition}\label{condUniqMSM}
Condition \ref{condChi} holds for $n_k \in \N_p$, $k \in \N_+$, and functions $d_k:\Omega\to B$, $k \in \N_+$, are such that
a.s. for a sufficiently large $k$,  
$b\in D\to \wh{\est}_{n_k}(b_{k-1},b)(\wt{\chi}_k)$ has a unique minimum point equal to $d_k$.
\end{condition}

\begin{theorem}\label{thakbs}
If conditions \ref{condhperf}, \ref{condEst}, and \ref{condUniqMSM}
hold, then a.s. for a sufficiently large $k$, $d_k=b^*$. 
\end{theorem}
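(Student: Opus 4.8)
The plan is to prove Theorem \ref{thakbs} as the multi-stage counterpart of Theorem \ref{thckbs}, replacing the role of Lemma \ref{lemMinKappa} by that of Lemma \ref{lemMinChi}. The whole argument reduces to combining the a.s.\ minimality of $b^*$ supplied by Lemma \ref{lemMinChi} with the a.s.\ uniqueness of the minimizer supplied by Condition \ref{condUniqMSM}.

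First I would observe that Condition \ref{condUniqMSM} already incorporates Condition \ref{condChi} for $n_k \in \N_p$, $k \in \N_+$. Hence all hypotheses of Lemma \ref{lemMinChi} are in force, namely conditions \ref{condChi}, \ref{condhperf}, and \ref{condEst}. Applying that lemma yields a single probability-one event $E$ on which, for every $k \in \N_+$ and every $d \in A$, the point $b^*$ is a minimum point of $b\in D\to\wh{\est}_{n_k}(d,b)(\wt{\chi}_k)$. The crucial feature to stress here is that the a.s.\ statement is uniform in $d$: the exceptional null set does not depend on $d$, so on $E$ the conclusion holds simultaneously for all $d \in A$.

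Next I would specialize $d$ to the $A$-valued random variable $b_{k-1}$. Since each $b_{k-1}$ takes values in $A$ (it lies in $K_{k-1}\in\mc{B}(A)$ by the multi-stage setup, cf.\ Condition \ref{condKi}), on the event $E$ we may substitute $d=b_{k-1}(\omega)$ and conclude that a.s., for every $k$, the point $b^*$ is a minimum point of $b\in D\to\wh{\est}_{n_k}(b_{k-1},b)(\wt{\chi}_k)$. Intersecting $E$ with the probability-one event furnished by Condition \ref{condUniqMSM} — on which, for sufficiently large $k$, this same function possesses a \emph{unique} minimum point equal to $d_k$ — the uniqueness forces $d_k=b^*$ for all sufficiently large $k$. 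This gives the claim a.s.

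I do not expect any genuine obstacle, since the theorem is a direct corollary of the lemmas already established; the only point requiring care is the legitimacy of inserting the random parameter $b_{k-1}$ into the uniform-in-$d$ almost-sure conclusion of Lemma \ref{lemMinChi}, which is precisely why that lemma was phrased with the quantifier ``for each $d\in A$'' inside the almost-sure statement rather than outside it.
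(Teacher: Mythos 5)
Your proposal is correct and follows exactly the route of the paper, whose proof of this theorem is simply the observation that it follows from Lemma \ref{lemMinChi} combined with the uniqueness in Condition \ref{condUniqMSM}. Your additional remarks on the uniformity in $d$ of the almost-sure conclusion and the substitution of the random parameter $b_{k-1}$ merely spell out details the paper leaves implicit.
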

\begin{proof}
It follows from Lemma \ref{lemMinChi}. 
\end{proof}


Let us consider the ECM setting and assume Condition \ref{condECMFull} and that $b^*$ 
is an optimal-variance IS parameter. Let us take $\wh{\est}_n=\wt{\msq2}_n$ as in (\ref{estngvar}), and $D=A$. 
As discussed in Section \ref{secminGrad} for ESSM (that is for GSSM for $\epsilon_t=0$, $t \in T$), 
conditions \ref{condKappaDk} and \ref{condESSM1} hold. Thus, a.s. for a sufficiently 
large $t$ for which $\wt{\kappa}_{N_t} \in \wt{D}_{N_t}$ and $d_t \in A$, 
$b\in D\to \wt{\msq2}_{N_t}(b',b)(\wt{\kappa}_{N_t})$ 
has a unique minimum point equal to $d_t$, i.e. Condition \ref{condUniqSSM} holds. 
Thus, from remarks \ref{remCondhperf}, \ref{remExact}, 
and Theorem \ref{thckbs}, a.s. for a sufficiently large $t \in T$, $d_t=b^*$. 
For EMSM, under the assumptions as in Section \ref{secminGrad} which ensure that Condition 
\ref{condKappaDk} holds, Condition \ref{condUniqMSM} holds and thus 
from remarks \ref{remCondhperf},
\ref{remExact}, and Theorem \ref{thakbs}, a.s. for a sufficiently large $k\in \N_+$, $d_k=b^*$. 

Let us now consider CGSSM and CGMSM
in the LETGS setting for $\wt{\est}_n=\wh{\msq2}_{n}$ and $b^*$ being an optimal-variance IS parameter, or
in the ECM setting for $\wt{\est}_n=\wh{\ic}_n$, $C=1$, and $b^*$ being a zero-variance IS parameter. 
Let $D$ be a bounded neighbourhood of $b^*$. Let us consider the corresponding assumptions 
as in Section \ref{secTwoPhase} for $\wt{\epsilon}_t=0$, $t \in T$, for the appropriate $T$ as in that section. 
Then, from Theorem \ref{thConstrOpt} we receive that for $d_t=\wt{d}_t$ and $\wh{\est}_n=\wt{\est}_n$,
Condition \ref{condUniqSSM} holds for CGSSM and  Condition \ref{condUniqMSM} holds for CGMSM. Thus, from 
remarks \ref{remCondhperf}, \ref{remExact}, and theorems \ref{thckbs} and \ref{thakbs}, 
a.s. for a sufficiently large $t \in T$, $\wt{d}_t=b^*$ in CGSSM and CGMSM. 

Let now $b^*$ be a zero-variance IS parameter and consider the MGSSM and MGMSM methods
in the LETGS setting for $\wt{\est}_n=\wh{\ic}_n$, $n\in \N_2$, and $\wt{\epsilon}_t=0$, $t \in T$,
under the assumptions as in Section \ref{secECG}. 
Then, from Remark \ref{remicpos}, $\nabla^2\ic(b^*)$ is positive definite. Thus, from the continuity of $\nabla^2\ic$ and from Lemma \ref{lemLipschMin},
$\nabla^2\ic$ is strongly convex on some open ball $U$ with center $b^*$. Therefore, from theorems \ref{thECGSM} and \ref{thConvStrong},
as well as remarks \ref{remCasesconbbsrl} and \ref{remLimIC}, conditions \ref{condUniqSSM} and  \ref{condUniqMSM} hold 
for such a MGSSM and MGMSM respectively for $d_t=\wt{d}_t$, $\wh{\est}_n=\wh{\ic}_n$, and $D\subset U$ being some neighbourhood of $b^*$. 
Therefore, by similar arguments as above, a.s. for a sufficiently large $t$, $\wt{d}_t=b^*$ in MGSSM and MGMSM. 

\chapter{\label{secAsymp}Asymptotic properties of minimization methods}

\section{\label{secHelpAsymp}Helper theory for proving the asymptotic properties of minimization results}

For some $l\in \N_+$, let $A \subset \R^l$ be open and nonempty, $f:A\to \R$ be twice continuously differentiable, 
$b^* \in A$, and $H=\nabla^2f(b^*)$. 
\begin{condition}\label{condftwice}
$\nabla f(b^*)=0$ and $H$ is positive definite. 
\end{condition}

Condition \ref{condftwice} is implied e.g. by the following one. 
\begin{condition}\label{condfNice}
$H$ is positive definite and $b^*$ is the unique minimum point of $f$.
\end{condition}

\begin{remark}\label{remneighb}
Let us assume Condition \ref{condftwice}. 
Then, from Lemma \ref{lemLipschMin} and the continuity of $\nabla^2f$, 
for an open or closed ball $B\subset A$ with center $b^*$ and sufficiently small positive radius,
$f$ is strongly convex on $B$ and from the discussion in Section \ref{secStrong},
$b^*$ is the unique minimum point of $f_{|B}$.  
\end{remark}


Let $T\subset \R_+$ be unbounded. 
Consider functions $f_t:\mc{S}(A)\otimes (\Omega,\mc{F}) \to \mc{S}(\R)$, $t \in T$, such that
$b\to f_t(b,\omega)$ is twice continuously differentiable, $t \in T$, $\omega\in \Omega$. 
We shall denote $f_t(b)=f_t(b,\cdot)$ and $\nabla^i f_t(b)=\nabla^i_bf_t(b,\cdot)$, $i=1,2$. 

\begin{condition}\label{condfntwice}
For some neighbourhood $D\in \mc{B}(A)$ of $b^*$, $\nabla^2f_t$ converges to  $\nabla^2f$ on $D$ uniformly in probability
(as $t \to \infty$), i.e. $\sup_{b\in D} ||\nabla^2f_t(b) - \nabla^2f(b)||_{\infty}\overset{p}{\to}0$. 
\end{condition} 

Let $d_t$, $t \in T$, be $A$-valued random variables.  
\begin{condition}\label{condbntobs}
It holds 
$d_t \overset{p}{\to} b^*$. 
\end{condition}
For $g_t \in \R_+$, $t \in T$, we shall write $X_t = o_{p}(g_t)$ if 
$\frac{X_t}{g_t}\overset{p}{\to}0$ (as $t \to \infty$). 
Let $r_t \in \R_+$, $t\in T$, be such that $\lim_{t\to \infty} r_t=\infty$.

\begin{condition}\label{condbdelta}
For some nonnegative random variables $\delta_t$, $t \in T$, such that $\delta_t=o_p(r_t^{-1})$ and 
for some neighbourhood $U\in \mc{B}(A)$ of $b^*$, for 
the event $A_t$ that $d_t$ is a $\delta_t$-minimizer of $f_{t|U}$ (in particular, $d_t \in U$, see Section \ref{secStrong}), we have 
\begin{equation}\label{pranlim}
\lim_{t\to \infty }\PR(A_t)=1. 
\end{equation}
\end{condition}

\begin{condition}\label{condftY}
For some $\R^l$-valued random variable $Y$, $\sqrt{r_t}\nabla f_t(b^*) \Rightarrow Y$.
\end{condition}

The below theorem 
is a consequence of Theorem 2.1 and the discussion of its assumptions in sections 2 
and 4 in \cite{Shapiro1993}, of the implicit function theorem, 
and of the below Remark \ref{remAs} (see also formula (4.14) and its discussion in \cite{Shapiro1996}). 

\begin{theorem}\label{thAsympMin}
Under conditions \ref{condftwice}, \ref{condfntwice}, \ref{condbntobs}, \ref{condbdelta}, 
and \ref{condftY}, we have 
\begin{equation}\label{dnbsn}
\sqrt{r_t}(d_t-b^*) \Rightarrow -H^{-1}Y. 
\end{equation}
In particular, if $Y \sim \ND(0,\Sigma)$ for some covariance matrix $\Sigma \in \R^{l\times l}$, then 
\begin{equation}\label{lndnb}
\sqrt{r_t}(d_t-b^*)\Rightarrow \ND(0,H^{-1}\Sigma H^{-1}). 
\end{equation}
\end{theorem}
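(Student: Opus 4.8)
This is Theorem \ref{thAsympMin}, and the statement explicitly attributes it to Theorem 2.1 in \cite{Shapiro1993} together with the implicit function theorem and Remark \ref{remAs}. So the plan is not to reprove an asymptotic M-estimation result from scratch but to verify that the hypotheses of the cited Shapiro theorem are met by conditions \ref{condftwice}--\ref{condftY}, and then to translate its conclusion into the form \eqref{dnbsn}. I would proceed in four steps.

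**Step 1 (localization and strong convexity).** First I would invoke Remark \ref{remneighb}: under Condition \ref{condftwice}, on a suitable closed ball $B\subset A$ centered at $b^*$, $f$ is strongly convex with some constant $m>0$ and $b^*$ is the unique minimizer of $f_{|B}$. I would shrink $B$ so that $B\subset D\cap U$, where $D$ and $U$ are the neighborhoods from conditions \ref{condfntwice} and \ref{condbdelta}. By Condition \ref{condfntwice}, $\sup_{b\in B}\|\nabla^2 f_t(b)-\nabla^2 f(b)\|_\infty \overset{p}{\to} 0$, so with probability tending to one $f_t$ is strongly convex on $B$ with constant, say, $m/2$ (this is the uniform-in-probability analogue of Lemma \ref{lemLipschMin} and Theorem \ref{thConvStrong}). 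This puts us on the event where the minimizer of $f_{t|B}$ is unique and characterized by a vanishing gradient.

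**Step 2 (the $\delta_t$-minimizer is close to the critical point).** On the event $A_t$ of Condition \ref{condbdelta}, $d_t$ is a $\delta_t$-minimizer of $f_{t|U}$ with $\delta_t = o_p(r_t^{-1})$, and by Condition \ref{condbntobs}, $d_t \overset{p}{\to} b^*$, so with probability tending to one $d_t \in B$. On the event where $f_t$ is strongly convex on $B$ with constant $m/2$, the bound from Section \ref{secStrong}, namely $f_t(d_t)\geq f_t(x_t^*)+\frac{m}{4}|d_t-x_t^*|^2$ for the true minimizer $x_t^*$ of $f_{t|B}$, combined with $f_t(d_t)\leq f_t(x_t^*)+\delta_t$, yields $|d_t-x_t^*|^2 \leq \frac{4}{m}\delta_t = o_p(r_t^{-1})$, hence $\sqrt{r_t}(d_t - x_t^*)\overset{p}{\to}0$. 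This reduces \eqref{dnbsn} to proving the same limit for the exact constrained minimizer $x_t^*$, which satisfies $\nabla f_t(x_t^*)=0$.

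**Step 3 (linearization / implicit function theorem) and conclusion.** For $x_t^*$ I would expand $0=\nabla f_t(x_t^*)$ around $b^*$. Writing $\nabla f_t(x_t^*)=\nabla f_t(b^*)+\big(\int_0^1 \nabla^2 f_t(b^*+u(x_t^*-b^*))\,du\big)(x_t^*-b^*)$, the averaged Hessian converges in probability to $H=\nabla^2 f(b^*)$ (by Condition \ref{condfntwice} and $x_t^*\overset{p}{\to}b^*$), which is invertible by Condition \ref{condftwice}. Hence $\sqrt{r_t}(x_t^*-b^*) = -\big(H+o_p(1)\big)^{-1}\sqrt{r_t}\,\nabla f_t(b^*)$. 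By Condition \ref{condftY}, $\sqrt{r_t}\,\nabla f_t(b^*)\Rightarrow Y$, so Slutsky's lemma gives $\sqrt{r_t}(x_t^*-b^*)\Rightarrow -H^{-1}Y$, and combining with Step 2 yields \eqref{dnbsn}. The Gaussian specialization \eqref{lndnb} is then immediate: if $Y\sim\ND(0,\Sigma)$ then $-H^{-1}Y\sim\ND(0,H^{-1}\Sigma H^{-1})$ since $H$ is symmetric, so $H^{-1}$ is symmetric and $(H^{-1})^T=H^{-1}$.

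**The main obstacle.** The delicate point is Step 2, i.e. controlling the gap between the \emph{approximate} ($\delta_t$-minimizer) $d_t$ and an \emph{exact} stationary point, and ensuring this happens on an event of probability tending to one where all the strong-convexity estimates are simultaneously valid. The rate bookkeeping matters: one must verify that $\delta_t=o_p(r_t^{-1})$ is exactly what is needed so that $|d_t-x_t^*|=o_p(r_t^{-1/2})$ becomes negligible after multiplication by $\sqrt{r_t}$. This is precisely where the paper's explicit citation to Theorem 2.1 in \cite{Shapiro1993} and the accompanying Remark \ref{remAs} does the heavy lifting, so in practice the cleanest route is to check the cited theorem's hypotheses and quote its conclusion rather than rederiving the linearization by hand.
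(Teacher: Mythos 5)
Your proposal is correct in substance, but it takes a more self-contained route than the paper. The paper does not rederive the linearization at all: its ``proof'' consists of the citation to Theorem 2.1 of Shapiro's work together with Remark \ref{remAs}, whose only real content is bookkeeping --- noting that Shapiro's argument for $T=\N_+$, $r_n=n$ goes through verbatim for general unbounded $T\subset\R_+$ and general $r_t\to\infty$, and then reducing to the case where $d_t$ always lies in the good ball $B$ and the event $A_t$ is all of $\Omega$ by replacing $d_t$ with $\wt{d}_t=\I(d_t\in B)d_t+\I(d_t\notin B)b^*$ and $\delta_t$ with a version set to $\infty$ off $A_t\cap\{d_t\in B\}$, invoking Remark \ref{remgnan} to see that this modification is asymptotically irrelevant. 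Your Steps 1--3 instead reconstruct the core of the cited theorem: localization to a ball where $f_t$ is strongly convex with probability tending to one, the $\delta_t=o_p(r_t^{-1})$ bookkeeping showing $\sqrt{r_t}(d_t-x_t^*)\overset{p}{\to}0$ for the exact local minimizer $x_t^*$, and the mean-value expansion $0=\nabla f_t(b^*)+\bigl(\int_0^1\nabla^2f_t(b^*+u(x_t^*-b^*))\,du\bigr)(x_t^*-b^*)$ combined with conditions \ref{condfntwice} and \ref{condftY} and Slutsky. What your route buys is independence from the external reference; what the paper's route buys is brevity and the ability to lean on Shapiro's already-verified hypotheses.

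One point you should tighten: at the end of Step 1 you assert that the minimizer of $f_{t|B}$ is ``characterized by a vanishing gradient,'' and Step 2 then uses the inequality $f_t(d_t)\geq f_t(x_t^*)+\frac{m}{4}|d_t-x_t^*|^2$, which is (\ref{strongfb}) and requires $\nabla f_t(x_t^*)=0$. Strong convexity on the closed ball $B$ alone does not rule out a boundary minimizer. You need the extra observation that Condition \ref{condftY} forces $\nabla f_t(b^*)\overset{p}{\to}0$, whence by (\ref{fxfy}) every boundary point $y$ of $B$ satisfies $f_t(y)\geq f_t(b^*)-|\nabla f_t(b^*)|\,\rho+\frac{m}{4}\rho^2>f_t(b^*)$ with probability tending to one ($\rho$ the radius of $B$), so the minimizer is interior and stationary. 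This is a one-line repair, not a structural flaw.
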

We will need the following trivial remark. 
\begin{remark}\label{remgnan}
Note that if for random variables $\wt{a}_t$ and $a_t$, $t \in T$, with probability tending to one (as $t \to \infty$)
we have $a_t=\wt{a}_t$, then for each $g:T \to \R$,  $g(t)(\wt{a}_t-a_t) \overset{p}\to 0$. 
\end{remark}

\begin{remark}\label{remAs} 
Theorem 2.1 in \cite{Shapiro1993} uses $T=\N_+$ and $r_n=n$, $n\in T$, but its proof for the general $T$ 
and $r_t$, $t\in T$, as above is analogous. 
Let us assume the conditions mentioned in Theorem \ref{thAsympMin}. Let from Remark \ref{remneighb}, 
$B$ be a closed ball contained in the set $D$ as in Condition \ref{condfntwice} and $U$ 
as in Condition \ref{condbdelta}, and such that 
$f$ is strongly convex on $B$ and $b^*$ is the unique minimum point of $f_{|B}$. 
From the generalization of Theorem 2.1 in \cite{Shapiro1993} to the general $T$
and $r_t$ as above and the discussion of assumptions of this theorem in \cite{Shapiro1993}, 
one easily receives the thesis (\ref{dnbsn}) of Theorem \ref{thAsympMin} under the additional assumptions that 
we have $d_t \in B$, $t \in T$, and Condition \ref{condbdelta} holds with $A_t=\Omega$, $t \in T$. 
From Remark \ref{remgnan} and Condition \ref{condbntobs}, 
to prove Theorem \ref{thAsympMin} it is sufficient to prove (\ref{dnbsn}) with $d_t$ replaced by
$\wt{d}_t=\I(d_t \in B)d_t + \I(d_t \notin B)b^*$. 
For $C_t=A_t \cap \{d_t \in B\}$, let $\wt{\delta}_t(\omega)=\delta_t(\omega)$, $\omega \in C_t$, and 
$\wt{\delta}_t(\omega)=\infty$, $\omega \in  \Omega\setminus C_t$.
Then, from  Remark \ref{remgnan}, Condition \ref{condbntobs}, and (\ref{pranlim}), 
for $d_t$ replaced by $\wt{d}_t$ and $\delta_t$ by  $\wt{\delta}_t$,
the conditions of Theorem \ref{thAsympMin} are satisfied and the above additional 
assumptions hold. Thus, (\ref{dnbsn}) with $d_t$ replaced by $\wt{d}_t$ follows from Theorem 2.1 in \cite{Shapiro1993} as discussed above. 
\end{remark}

\begin{condition}\label{condfndiff}
On some neighbourhood $K\in \mc{B}(A)$ of $b^*$, for $i=1,2$, 
the $i$th derivatives of $b\to f_t(b)$ (i.e. $\nabla f_t$ and $\nabla^2f_t$) converge to such derivatives of $f$
uniformly in probability. 
\end{condition} 
Condition \ref{condfndiff} is implied e.g. by the following one. 

\begin{condition}\label{condfndiffAS} 
On some neighbourhood $K\in \mc{B}(A)$ of $b^*$, for $i=1,2$,
a.s. the $i$th derivatives of $b\to f_t(b)$, converge uniformly to such derivatives of $f$. 
\end{condition} 

\begin{condition}\label{condnablafn}
It holds $|\nabla f_t(d_t)| =o_p(r_t^{-\frac{1}{2}})$. 
\end{condition}

\begin{lemma}\label{lemfeps}
If conditions  \ref{condftwice}, \ref{condbntobs}, \ref{condfndiff},
and \ref{condnablafn} hold, then Condition \ref{condbdelta} holds.
\end{lemma}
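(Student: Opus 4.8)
The plan is to select a small closed ball $U=\overline{B}_l(b^*,\rho)$ about $b^*$ on which $f$ is strongly convex, to show that with probability tending to one $f_t$ is strongly convex on $U$ with a \emph{deterministic} constant $m$, and then to read off the required $\delta_t$ from the $\delta$-minimizer property of strongly convex functions established in Section \ref{secStrong}. First I would fix the geometry. Since $H=\nabla^2f(b^*)$ is positive definite by Condition \ref{condftwice}, we have $m_l(H)>0$; set $m:=\frac12 m_l(H)$. By continuity of $\nabla^2f$ on $A$ together with the continuity of $m_l$ (Lemma \ref{lemLipschMin}), I can choose $\rho>0$ so small that $U\subseteq K$ for the neighbourhood $K$ of Condition \ref{condfndiff}, that $A$ is a neighbourhood of $U$, and that $m_l(\nabla^2f(b))\geq \frac32 m$ for all $b\in U$. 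Then $f$ is strongly convex on $U$, and from Remark \ref{remneighb} the point $b^*$ is the unique minimum point of $f_{|U}$.

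Next I would transfer strong convexity to $f_t$ while keeping the constant fixed. On the event $G_t:=\{\sup_{b\in U}||\nabla^2f_t(b)-\nabla^2f(b)||_\infty< \frac{m}{2}\}$, Lemma \ref{lemLipschMin} gives $m_l(\nabla^2f_t(b))\geq m_l(\nabla^2f(b))-\frac{m}{2}\geq \frac32 m-\frac{m}{2}=m$ for every $b\in U$, so that $f_t$ is strongly convex on $U$ with the deterministic constant $m$. By Condition \ref{condfndiff} we have $\PR(G_t)\to1$. On $G_t$ the function $f_{t|U}$ is strictly convex on the compact set $U$ and hence has a unique minimum point, so the discussion in Section \ref{secStrong} applies verbatim: every $x\in U$ is a $\frac{1}{2m}|\nabla f_t(x)|^2$-minimizer of $f_{t|U}$.

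I would then set $\delta_t:=\frac{1}{2m}|\nabla f_t(d_t)|^2$, which is a well-defined nonnegative random variable for every $t$ because $b\to f_t(b)$ is twice continuously differentiable. Condition \ref{condnablafn} gives $|\nabla f_t(d_t)|=o_p(r_t^{-1/2})$, whence $r_t\delta_t=\frac{1}{2m}\bigl(r_t^{1/2}|\nabla f_t(d_t)|\bigr)^2\overset{p}{\to}0$, i.e. $\delta_t=o_p(r_t^{-1})$. Letting $A_t$ be the event that $d_t$ is a $\delta_t$-minimizer of $f_{t|U}$, the previous paragraph shows $G_t\cap\{d_t\in U\}\subseteq A_t$. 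Since $\PR(G_t)\to1$ and, because $U$ is a neighbourhood of $b^*$, also $\PR(d_t\in U)\to1$ by Condition \ref{condbntobs}, I conclude $\PR(A_t)\to1$, which is exactly Condition \ref{condbdelta}.

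The steps are individually routine, so the only point needing care is the bookkeeping that keeps the strong-convexity constant $m$ deterministic and independent of $t$: the uniform-in-probability convergence of the Hessians must be turned into a genuine pointwise lower eigenvalue bound, and this is precisely what the $1$-Lipschitz property of $m_l$ delivers. Fixing $m$ in advance is what lets me define $\delta_t$ cleanly and establish $\delta_t=o_p(r_t^{-1})$ directly from Condition \ref{condnablafn} rather than having a random, possibly degenerating, normalizing constant.
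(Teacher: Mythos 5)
Your proof is correct and follows essentially the same route as the paper's: both choose a small ball around $b^*$ on which $f$ is strongly convex, transfer strong convexity with a fixed constant $m$ to $f_t$ with probability tending to one via the uniform-in-probability Hessian convergence (the paper routes this through Theorem \ref{thConvStrong}, which itself rests on the $1$-Lipschitz property of $m_l$ that you invoke directly), and then take $\delta_t=\frac{1}{2m}|\nabla f_t(d_t)|^2$ together with Condition \ref{condnablafn}. The only cosmetic difference is that you dispense with the gradient-convergence part of Condition \ref{condfndiff}, which is indeed not needed for the $\delta$-minimizer bound.
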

\begin{proof}
From Remark \ref{remneighb}, let $U$ be an open ball with center $b^*$, contained in 
$K$ as in Condition \ref{condfndiff}, and such that 
$f$ is strongly convex on $U$ with a constant $s\in \R_+$. Let $m\in (0,s)$.
Then, from Theorem \ref{thConvStrong}, there exists $\epsilon\in\R_+$ such that for each twice differentiable function $g:U \to \R$ for which 
\begin{equation}
\sup_{x\in U}(||\nabla^2f(x) -\nabla^2g(x)||_{\infty} + |\nabla f(x)-\nabla g(x)|)<\epsilon, 
\end{equation}
each $b \in U$ is a $\frac{1}{2m}|\nabla g(b)|^2$-minimizer of $g$. Thus, 
Condition \ref{condbdelta} 
for the above $U$ and $\delta_t = \frac{1}{2m}|\nabla f_t(d_t)|^2$ follows from conditions
\ref{condbntobs}, \ref{condfndiff}, and \ref{condnablafn}.
\end{proof}

From the above lemma we receive the following remark.
\begin{remark}\label{remEquivAsymp}
The assumption that conditions \ref{condfntwice} and \ref{condbdelta}
hold in Theorem \ref{thAsympMin} can be replaced by the assumption that conditions 
\ref{condfndiff} and \ref{condnablafn} hold. 
\end{remark}

Consider the following composite condition. 
\begin{condition}\label{condAllAs}
Conditions \ref{condftwice}, \ref{condbntobs}, \ref{condfndiff}, and \ref{condnablafn} hold. 
\end{condition}

\begin{theorem}\label{thAsymp0}
Let us assume Condition \ref{condAllAs} and that 
\begin{equation}\label{limtprnft0}
\lim_{t\to\infty}\PR(\nabla f_t(b^*)=0)=1. 
\end{equation}
Then, 
\begin{equation}
\sqrt{r_t}(d_t-b^*)\overset{p}{\to} 0. 
\end{equation}
\end{theorem}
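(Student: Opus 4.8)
The plan is to apply Theorem \ref{thAsympMin} in the degenerate case where the limiting random variable $Y$ vanishes. Condition \ref{condAllAs} bundles together conditions \ref{condftwice}, \ref{condbntobs}, \ref{condfndiff}, and \ref{condnablafn}, so the first step is to invoke Lemma \ref{lemfeps}, which shows that under conditions \ref{condftwice}, \ref{condbntobs}, \ref{condfndiff}, and \ref{condnablafn} the $\delta_t$-minimizer Condition \ref{condbdelta} holds (with $\delta_t = \frac{1}{2m}|\nabla f_t(d_t)|^2$). By Remark \ref{remEquivAsymp}, the pair of assumptions \ref{condfntwice} and \ref{condbdelta} required by Theorem \ref{thAsympMin} may be replaced by \ref{condfndiff} and \ref{condnablafn}, both of which are part of Condition \ref{condAllAs}. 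Thus all the hypotheses of Theorem \ref{thAsympMin} except Condition \ref{condftY} are already in hand.

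The key remaining step is to verify Condition \ref{condftY} with the specific choice $Y=0$, the $\R^l$-valued random variable identically equal to zero. I would argue that the assumption (\ref{limtprnft0}), namely $\lim_{t\to\infty}\PR(\nabla f_t(b^*)=0)=1$, forces $\sqrt{r_t}\,\nabla f_t(b^*)\overset{p}{\to}0$: indeed, for every $\epsilon>0$,
\begin{equation}
\PR(|\sqrt{r_t}\,\nabla f_t(b^*)|>\epsilon)\leq \PR(\nabla f_t(b^*)\neq 0)\to 0,
\end{equation}
since on the event $\{\nabla f_t(b^*)=0\}$ the scaled gradient is exactly zero regardless of the factor $\sqrt{r_t}$. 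Convergence in probability to the constant $0$ implies convergence in distribution to the degenerate random variable $Y=0$, so Condition \ref{condftY} holds with this $Y$.

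With all five conditions of Theorem \ref{thAsympMin} established, its conclusion (\ref{dnbsn}) gives
\begin{equation}
\sqrt{r_t}(d_t-b^*)\Rightarrow -H^{-1}Y = -H^{-1}\cdot 0 = 0,
\end{equation}
where $H=\nabla^2 f(b^*)$ is the invertible matrix from Condition \ref{condftwice}. Finally, since convergence in distribution to a constant is equivalent to convergence in probability to that constant, this yields $\sqrt{r_t}(d_t-b^*)\overset{p}{\to}0$, which is the desired claim.

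I do not expect any serious obstacle here, as the statement is essentially a clean corollary of the machinery already assembled. The only point requiring minor care is the passage from the probabilistic statement (\ref{limtprnft0}) to Condition \ref{condftY}: one must note that the multiplicative blow-up $\sqrt{r_t}\to\infty$ is harmless precisely because the quantity being scaled is zero with probability approaching one, so the scaling never acts on a nonzero value except on an asymptotically negligible event. This is exactly the kind of situation captured informally by Remark \ref{remgnan}, and invoking it (or the elementary Markov-type bound above) makes the degeneracy argument rigorous without any computation.
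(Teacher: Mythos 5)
Your proposal is correct and follows exactly the paper's own argument: establish Condition \ref{condftY} with $Y=0$ from (\ref{limtprnft0}), then invoke Remark \ref{remEquivAsymp} and Theorem \ref{thAsympMin}, and conclude via the equivalence of convergence in distribution and in probability for a constant limit. You simply spell out the Markov-type bound and the degeneracy of the scaling, which the paper leaves implicit.
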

\begin{proof}
From (\ref{limtprnft0}), 
Condition \ref{condftY} holds for $Y =0$, so that the thesis 
follows from Remark \ref{remEquivAsymp} and Theorem \ref{thAsympMin}. 
\end{proof}

\begin{condition}\label{condbnEps}
For some nonnegative random variables $\delta_t$, $t \in T$, such that  $\delta_t=o_p(r_t^{-\frac{1}{2}})$,
with probability tending to one (as $t\to \infty$) we have 
\begin{equation}\label{nablafnbn}
 |\nabla f_t(d_t)| \leq \delta_t. 
\end{equation}
\end{condition}

\begin{lemma}\label{lemEquivEps}
Conditions \ref{condnablafn} and \ref{condbnEps} are equivalent.
\end{lemma}
\begin{proof}
If Condition \ref{condnablafn} holds, then Condition \ref{condbnEps} holds for 
$\delta_t=|\nabla f_t(d_t)|$. Let us assume Condition \ref{condbnEps}. Then, for 
$\wt{\delta}_t$ equal to $\delta_t$ if (\ref{nablafnbn}) holds and $\infty$ otherwise, 
we have  $|\nabla f_t(d_t)| \leq \wt{\delta}_t$  and from 
Remark \ref{remgnan}, $\wt{\delta}_t=o_p(r_t^{-\frac{1}{2}})$, from which Condition \ref{condnablafn} follows.
\end{proof}

\section{\label{secAsympFun}Asymptotic properties of functions of minimization results}
Let us consider $T$ and $r_t$, $t\in T$, as in the previous section. 
We shall further need the following theorem on the delta method (see e.g. Theorem 3.1 and Section 3.3 in \cite{Vaart}). 
\begin{theorem}\label{thdelta} 
Let $m,n \in \N_+$, let $D\in\mc{B}(\R^m)$ be a neighbourhood of $\theta \in \R^m$, and consider a function 
$\phi:\mc{S}(D)  \to  \mc{S}(\R^n)$. Let  $Y_t$, $t \in T$, and $Y$ be $D$-valued random variables 
such that we have  $\sqrt{r_t}(Y_t-\theta) \Rightarrow  Y$ (as $t \to \infty$). 
If $\phi$ is differentiable in $\theta$ with a differential $\phi'(\theta)$ (which we identify with its matrix), then 
\begin{equation}\label{lntr} 
\sqrt{r_t}(\phi(Y_t)-\phi(\theta)) \Rightarrow \phi'(\theta)Y. 
\end{equation} 
If $\phi$ is twice differentiable in $\theta$ with the second differential $\phi''(\theta)$ and we have $\phi'(\theta)=0$, then 
\begin{equation}\label{lntnpp} 
r_t(\phi(Y_t)-\phi(\theta)) \Rightarrow \frac{1}{2}\phi''(\theta)(Y,Y). 
\end{equation} 
\end{theorem}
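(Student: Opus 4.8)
The plan is to reduce both statements to a Taylor expansion of $\phi$ around $\theta$ combined with Slutsky's lemma and the continuous mapping theorem. First I would record the preliminary fact that $\sqrt{r_t}(Y_t-\theta)\Rightarrow Y$ together with $r_t\to\infty$ forces $Y_t\overset{p}{\to}\theta$: writing $Y_t-\theta=\frac{1}{\sqrt{r_t}}\sqrt{r_t}(Y_t-\theta)$, the second factor converges in distribution while the first tends to zero, so the product tends to zero in probability by Slutsky's lemma. Since $D$ is a neighbourhood of $\theta$, this also shows $\PR(Y_t\in D)\to 1$, so all expressions below are well defined with probability tending to one, and the values of the remainders on $\{Y_t\notin D\}$ are irrelevant by Remark \ref{remgnan}.

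For the first part, I would use the first-order expansion $\phi(y)-\phi(\theta)=\phi'(\theta)(y-\theta)+R(y)$, where the remainder satisfies $R(y)/|y-\theta|\to 0$ as $y\to\theta$. Define $g(y)=R(y)/|y-\theta|$ for $y\neq\theta$ and $g(\theta)=0$; then $g$ is continuous at $\theta$, so by the continuous mapping theorem $g(Y_t)\overset{p}{\to}0$. Multiplying the expansion by $\sqrt{r_t}$ gives
\begin{equation}
\sqrt{r_t}(\phi(Y_t)-\phi(\theta))=\phi'(\theta)\sqrt{r_t}(Y_t-\theta)+g(Y_t)|\sqrt{r_t}(Y_t-\theta)|.
\end{equation}
The first term converges in distribution to $\phi'(\theta)Y$ by the continuous mapping theorem applied to the continuous linear map $u\mapsto\phi'(\theta)u$. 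In the second term $|\sqrt{r_t}(Y_t-\theta)|\Rightarrow|Y|$, again by continuous mapping, while $g(Y_t)\overset{p}{\to}0$, so their product tends to zero in probability by Slutsky's lemma. A final application of Slutsky's lemma to the sum yields $\sqrt{r_t}(\phi(Y_t)-\phi(\theta))\Rightarrow\phi'(\theta)Y$.

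For the second part, with $\phi'(\theta)=0$, I would instead use the second-order expansion $\phi(y)-\phi(\theta)=\tfrac12\phi''(\theta)(y-\theta,y-\theta)+\wt{R}(y)$ with $\wt{R}(y)/|y-\theta|^2\to 0$. Multiplying by $r_t$ and using $r_t|Y_t-\theta|^2=|\sqrt{r_t}(Y_t-\theta)|^2$ gives
\begin{equation}
r_t(\phi(Y_t)-\phi(\theta))=\tfrac12\phi''(\theta)\bigl(\sqrt{r_t}(Y_t-\theta),\sqrt{r_t}(Y_t-\theta)\bigr)+\wt{g}(Y_t)|\sqrt{r_t}(Y_t-\theta)|^2,
\end{equation}
where $\wt{g}(y)=\wt{R}(y)/|y-\theta|^2$ extended by $\wt{g}(\theta)=0$ is continuous at $\theta$. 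Since $u\mapsto\tfrac12\phi''(\theta)(u,u)$ is continuous, the continuous mapping theorem applied to $\sqrt{r_t}(Y_t-\theta)\Rightarrow Y$ sends the first term to $\tfrac12\phi''(\theta)(Y,Y)$, while $\wt{g}(Y_t)\overset{p}{\to}0$ and $|\sqrt{r_t}(Y_t-\theta)|^2\Rightarrow|Y|^2$ make the remainder vanish in probability; Slutsky's lemma finishes the argument.

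I expect the only delicate point to be the control of the remainder terms, that is, justifying $g(Y_t)\overset{p}{\to}0$ and $\wt{g}(Y_t)\overset{p}{\to}0$ rigorously. This rests on the differentiability of $\phi$ at $\theta$ being exactly the statement that $g$ and $\wt{g}$ are continuous at $\theta$, combined with $Y_t\overset{p}{\to}\theta$ and the continuous mapping theorem; the set $\{Y_t\notin D\}$ on which $g$ and $\wt{g}$ are undefined is a routine matter handled via Remark \ref{remgnan}, since it has probability tending to zero.
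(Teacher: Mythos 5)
Your proof is correct. The paper does not prove Theorem \ref{thdelta} at all --- it is stated with a citation to Theorem 3.1 and Section 3.3 of \cite{Vaart} --- and your Taylor-expansion-plus-Slutsky argument (first-order Peano remainder controlled via $Y_t\overset{p}{\to}\theta$ and tightness of $\sqrt{r_t}(Y_t-\theta)$, and the analogous second-order expansion when $\phi'(\theta)=0$) is exactly the standard proof given in that reference, so it supplies precisely what the paper omits.
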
 

\begin{remark}\label{remChi2gen} 
For $m \in \N_+$, let $\chi^2(m)$ denote the $\chi$-squared distribution with $m$ degrees of freedom. 
Let $m \in \N_+$, $S \sim \ND(0,I_m)$, $B\in\Sym_m(\R)$, and  $X=S^TBS$. Then, $X$ has a special case of the generalized $\chi$-squared 
distribution, which we shall denote as $\wt{\chi^2}(B)$. For $B$ being a diagonal matrix $B=\diag(v)$ for some $v \in \R^m$, 
$\wt{\chi^2}(B)$ will be also denoted as $\wt{\chi^2}(v)$. It holds $\E(X)=\Tr(B)$. 
If $B=wI_m$ for some  $w\in \R$, then we have $X \sim w\chi^2(m)$ (by which we mean that $X \sim wY$ for $Y \sim \chi^2(m)$).
Let $\eig(B)\in \R^m$ denote a vector of eigenvalues of $B$
and let $\lambda=\eig(B)$. 
Consider an orthogonal matrix $U \in \R^{m\times m}$ such that $B=U\diag(\lambda)U^T$.  
Then, for $W =U^TS \sim \ND(0,I_m)$ we have 
\begin{equation}
X= W^T\diag(\lambda)W=\sum_{i=1}^m\lambda_i W_i^2, 
\end{equation}
and thus $\wt{\chi^2}(B)=\wt{\chi^2}(\lambda)$. 
Let $\Lambda=\{0\}\cup\{v \in (\R\setminus \{0\})^k: k \in \N_+,\ v_1\leq v_2\leq\ldots \leq v_k\}$, 
i.e. $\Lambda$ is the set of all real-valued vectors in different dimensions with ordered nonzero coordinates or having only one zero coordinate.  
Let for $v \in \R^m$,  $\ord(v)\in \Lambda$ be equal to $0\in \R$ if $v=0$ and otherwise 
result from ordering the coordinates of $v$ in nondecreasing order and removing the zero coordinates. 
Then, we have $\wt{\chi^2}(v)=\wt{\chi^2}(\ord(v))$.  
For $Y \sim \chi^2(1)$ we have a moment-generating function $M_{Y}(t):=\E(\exp(tY))=(1-2t)^{-\frac{1}{2}}$, $t <\frac{1}{2}$. 
Thus, for each $k \in \N_+$, $v \in \Lambda\cap \R^k$,  $Y \sim \wt{\chi^2}(v)$, and $t \in \R$ such that $1-2v_it>0$, $i =1,\ldots,k$, 
we have $M_Y(t)=\prod_{i=1}^k(1-2v_it)^{-\frac{1}{2}}$. Such an $M_Y$ is defined on some neighbourhood of $0$
and it is a different function for different $v \in \Lambda$. Thus, 
for $v_1, v_2 \in \Lambda$ such that $v_1\neq v_2$, we have $\wt{\chi^2}(v_1)\neq\wt{\chi^2}(v_2)$. 
It follows that for two real symmetric matrices $B_1$, $B_2$, we have
$\wt{\chi^2}(B_1)=\wt{\chi^2}(B_2)$ only if $\ord(\eig(B_1))=\ord(\eig(B_2))$.
\end{remark}
\begin{remark}\label{remdelta}
Using notations as in Theorem \ref{thdelta},
let $Y \sim \ND(0,M)$ for some covariance matrix $M\in \R^{m\times m}$. Then, (\ref{lntr}) implies that
\begin{equation}
\sqrt{r_t}(\phi(Y_t)-\phi(\theta)) \Rightarrow \ND(0,\phi'(\theta)M\phi'(\theta)^T).
\end{equation}
Let further $n=1$.  Then, (\ref{lntnpp}) is equivalent to
\begin{equation}
r_t(\phi(Y_t)-\phi(\theta)) \Rightarrow R:=\frac{1}{2}Y^T\nabla^2\phi(\theta)Y.
\end{equation}
For $S \sim \ND(0,I_l)$ we have $Y \sim M^{\frac{1}{2}}S$. Thus, from Remark \ref{remChi2gen}, for 
\begin{equation}\label{bdef}
B=\frac{1}{2}M^{\frac{1}{2}}\nabla^2\phi(\theta)M^{\frac{1}{2}},
\end{equation}
we have $R \sim \wt{\chi^2}(B)$ and 
\begin{equation}\label{ertr}
\E(R)= \Tr(B)= \frac{1}{2}\Tr(\nabla^2\phi(\theta)M). 
\end{equation}
Note that if $\theta$ is a local minimum point of $\phi$, then $\nabla^2\phi(\theta)$ is positive semidefinite and so is $B$.   
\end{remark}

\begin{remark}\label{remHY}
If we have assumptions as in Theorem \ref{thAsympMin} leading to (\ref{lndnb}), then from Remark \ref{remdelta}, for $M=H^{-1}\Sigma H^{-1}$ and 
\begin{equation}
B=\frac{1}{2}M^{\frac{1}{2}}HM^{\frac{1}{2}} = \frac{1}{2}H^{-\frac{1}{2}}\Sigma H^{-\frac{1}{2}}, 
\end{equation}
we have 
\begin{equation}\label{lnfnfb}
r_t(f(d_t)-f(b^*))\Rightarrow \wt{\chi^2}(B). 
\end{equation}
Note that for $R \sim \wt{\chi^2}(B)$ it holds
\begin{equation}\label{ersigmah}
\E(R)=\Tr(B)= \frac{1}{2}\Tr(\Sigma H^{-1}).  
\end{equation}
\end{remark}

\section{\label{secSecond}Comparing the second-order asymptotic efficiency of minimization methods}
For some $T$ and $r_t$, $t\in T$, as in Section \ref{secHelpAsymp}, let $\phi$ and $d=(d_t)_{t\in T}$ be as in Section \ref{secCompFirst} and such that 
for some probability $\mu$ on $\R$ and $y \in \R$ we have 
\begin{equation} \label{rtdty}
r_t(\phi(d_t)-y) \Rightarrow  \mu. 
\end{equation} 
Let further for some analogous $d'=(d_t')_{t\in T}$ and $\mu'$ it hold $r_t(\phi(d_t')-y) \Rightarrow  \mu'$. 
Then, $\phi(d_t)\overset{p}{\to}y$ and similarly in the primed case, so that the
processes $d$ and $d'$ are equivalent from the point of view of the first-order asymptotic efficiency for the minimization 
of $\phi$ as discussed in Section \ref{secCompFirst}. Their 
second-order asymptotic efficiency for this purpose can be compared by comparing the asymptotic distributions $\mu$ and $\mu'$. 
For instance, if $\mu=\mu'$, then they can be considered equally efficient. If 
 for each $x\in \R$, $\mu((-\infty,x])\geq\mu'((-\infty,x])$, then it is natural to consider 
 the unprimed process to be not less efficient and more efficient if further for some $x$ this inequality is strict. 
The second-order asymptotic efficiency as above can be also compared using some moments like means or some quantiles like medians 
of the asymptotic distributions, 
where the process corresponding to lower such parameter can  be considered more efficient. 
For $\mu=\wt{\chi^2}(B)$ and $\mu'=\wt{\chi^2}(B')$ for some symmetric 
matrices $B$ and $B'$, which can arise e.g. from situations like in remarks \ref{remdelta} or \ref{remHY}, 
it may be convenient to compare the second-order efficiency of 
the corresponding processes using the 
means $\Tr(B)$ and $\Tr(B')$ of these distributions. In situations like in remarks \ref{remdelta} or \ref{remHY}, 
such means can be alternatively expressed by formula
(\ref{ertr}) or (\ref{ersigmah}) respectively, using which in some cases they can be estimated or even 
computed analytically (see e.g. Section \ref{secAsympProp}). 
For a number of stochastic optimization methods from the literature we do not have formulas like (\ref{rtdty}) and some
other ways of comparing the second-order asymptotic efficiency of such methods are needed; see \cite{StochOptComp} for some ideas. 

\begin{remark}\label{remmoreeff} 
  Under the assumptions as above, let $\mu([0,\infty))=1$ and let for some $X \sim \mu$ and $s \in [1,\infty)$ it hold $\mu'\sim sX$. 
  Note that from Remark \ref{remChi2gen} this holds e.g. if for some symmetric positive semidefinite matrices $B$ and $B'$ we have
  $\mu=\wt{\chi^2}(B)$, $\mu'=\wt{\chi^2}(B')$, and $\ord(\eig(B'))=s\ord(\eig(B))$. 
  If further $r_t=t$, $t\in T=\R_+$, 
  then $t(\phi(d_{t})-y)$ and $t(\phi(d_{st}')-y)$ both converge in distribution to 
 $\mu$, but computing $d_{st}'$ requires $s$ times higher budget than $d_t$, $t\in T$ (assuming that such interpretation holds). 
   Thus, the unprimed process can be called $s$ times (asymptotically) more efficient for the minimization of $\phi$. 
  Similarly, if $M'=sM$ for some nonzero covariance matrix $M\in \R^{l\times l}$, and 
  for some $\theta\in A$, $\sqrt{t}(d_t -\theta)\Rightarrow \ND(0,M)$ and  $\sqrt{t}(d_t' -\theta)\Rightarrow \ND(0,M')$, then 
  $d$ can be said to converge $s$ times faster to $\theta$ than $d'$. In such a case, 
  if additionally $\phi$ is twice differentiable in $\theta$ with a zero gradient and a positive definite Hessian in this point, then from Remark \ref{remdelta}, 
  for  $B$ as in (\ref{bdef}) we have $t(\phi(d_{t})-y)\Rightarrow \wt{\chi^2}(B)$
  and similarly for the primed process for $B'=sB$.  Thus, the unprimed process is $s$ times more efficient
  for the minimization of $\phi$. 
  \end{remark}
  
\begin{remark}\label{remOpt}
 Analogously as we have discussed certain properties 
 of minimization results in Section \ref{secHelpAsymp} and their functions in Section \ref{secAsympFun}, 
 or proposed how to compare the asymptotic efficiency of stochastic 
 minimization methods in Section \ref{secCompFirst} and
 this section, one can formulate such a theory for maximization methods. It is sufficient to notice that 
maximization of a function is equivalent to the minimization of its negative, so that it is sufficient to 
apply the above reasonings to the negatives of appropriate functions. 
\end{remark}

\section{\label{secSomeUs}Discussion of some conditions useful for proving the asymptotic properties in our methods}
Let us discuss when, under appropriate identifications given below,
Condition \ref{condAllAs} holds in the LETGS and ECM settings for the different
SSM methods from the previous sections, i.e. for ESSM, GSSM, CGSSM, and MGSSM, 
and for such MSM methods, i.e. for EMSM, GMSM, CGMSM, and MGMSM.   
We consider in this condition $f$ equal to  $\ce$, $\msq$, or $\ic$, each defined on $A=\R^l$.  
Furthermore, we take $T$ as for the minimization methods in the previous sections, in particular for the MSM methods we take $T=\N_+$. 
For the EM and GM methods we take $f_t=\wh{f}_t$ and $d_t$ as in these methods,
while for the CGM and MGM methods $f_t=\wt{f}_t$ and $d_t=\wt{d}_t$,  assuming Condition \ref{condVarck}. 

Sufficient conditions for the smoothness of such functions $f$ 
follow from the discussion in sections \ref{secCEECM}, \ref{secCeLETGS}, and \ref{secDiff}. 
The smoothness of such $b\to f_t(b,\omega)$, $\omega \in \Omega$, $t \in T$, in the LETGS setting is obvious and in the ECM setting 
it holds under Condition \ref{condPartvm}, which follows from $A=\R^l$ as discussed in Remark \ref{remPartv}. 
Sufficient assumptions for Condition \ref{condfNice}, implying Condition \ref{condftwice}, to hold for $f$ equal to $\ce$,
in the ECM setting were provided in Section \ref{secCEECM}, and in the 
LETGS setting --- in Section \ref{secCeLETGS}. From the discussion in Section \ref{secStrong}, for $A=\R^l$ as above, 
Condition \ref{condfNice} follows from the
strong convexity of $f$, sufficient assumptions for which for $f$ equal to $\msq$ or $\var$ in the LETGS setting were provided 
in Theorem \ref{thLETGSStrong}. For $f$ equal to $\msq$ in the ECM setting, sufficient conditions for 
it to have a unique minimum point were discussed in sections \ref{secMsqGen} 
(see e.g. Lemma \ref{lempmsq}) and \ref{secECM}, and for $\nabla^2f$ to be positive definite --- in Theorem \ref{thECMPos}. 
For $f=\ic$, if $C$ is a positive constant, then Condition \ref{condfNice} follows from such a condition for $f=\var$, 
and some other sufficient assumptions for Condition \ref{condfNice} to hold were discussed in Remark \ref{remicpos}. 
From the discussion in Section \ref{secUnifEst} we receive assumptions for which Condition \ref{condfndiffAS}, implying Condition \ref{condfndiff}, 
holds in the MSM methods as well as in the SSM methods for $T=\N_+$ and $N_k=k$, and thus from Condition \ref{condT} also in the general case. 
For $d_t$ as above, Condition \ref{condbntobs} follows from Condition \ref{condESSM1} and its counterparts. 

Recall that from Lemma \ref{lemEquivEps}, conditions \ref{condnablafn} and \ref{condbnEps} are equivalent. 
For the EM methods, if $b\to \wh{\est}_n(b',b)(\omega)$ is differentiable, $b'\in A$, $\omega \in \Omega_1^n$, $n \in \N_p$, 
and if Condition \ref{condKappaDk} holds, then Condition \ref{condbnEps} holds in these methods even for $\delta_t=0$, $t \in T$. 
For the GM methods, if Condition \ref{condKappaDk} holds and  
$\epsilon_t=o_p(r_t^{-\frac{1}{2}})$ 
(which holds e.g. if $\epsilon_t = r_t^{-q}$ for some $\frac{1}{2}<q<\infty$),
then Condition \ref{condbnEps} holds for $\delta_t=\epsilon_t$. 
In the CGM and MGM methods, if Condition \ref{condgradck} holds and $\wt{\epsilon}_t=o_p(r_t^{-\frac{1}{2}})$, 
then Condition \ref{condbnEps} holds for $\delta_t=\wt{\epsilon}_t$. 

\section{\label{secAsympSSM}Asymptotic properties of single-stage minimization methods} 
Let $T \subset \R_+$ be unbounded, conditions \ref{condpqbllpq1}, \ref{condLmes}, \ref{condg0}, 
\ref{condKappa}, and \ref{condT} hold, $A$ be open, $b^*\in A$,  
and $b\in A\to L(b)(\omega)$ be differentiable, $\omega \in \Omega_1$. 
For some function $u \in A\to [0,\infty]$ such that $u(b')\in \R_+$, let us assume that
\begin{equation}\label{nttpto}
\frac{N_t}{t} \overset{p}{\to}\frac{1}{u(b')}, \quad (t\to \infty).  
\end{equation}
\begin{remark}\label{remub} 
Let $N_t$ be given by some $U$ as in Remark \ref{remCondT}, and 
let $u(b)=\E_{\PQ(b)}(U)$, $b \in A$. For such an $U$ being 
the theoretical cost variable of a step of SSM as in Remark \ref{remCondT}, 
$u(b')$ is such a mean cost. 
If $u(b')\in\R_+$, then, as discussed in Chapter \ref{secIneff} 
(see (\ref{nttc})), 
we have a stronger fact than (\ref{nttpto}), namely that
a.s. $\frac{N_t}{t} \to\frac{1}{u(b')}$. For the special case of $U=1$, 
we have $u(b)=1$, $b \in A$. 
\end{remark}
Below we shall prove that for $g$ substituted by $\ce$, $\msq$, $\msq2$, 
or $\ic$, under appropriate assumptions, 
for some covariance matrix $\Sigma_g(b') \in \R^{l\times l}$ and 
\begin{equation}\label{ftdef}
f_t(b)=\I(N_t=k\in\N_+)\wh{g}_{k}(b',b)(\wt{\kappa}_{k}),\quad t \in T,  
\end{equation}
we have
\begin{equation}\label{nablaftng}
\sqrt{t}\nabla f_t(b^*)\Rightarrow \ND(0,u(b')\Sigma_{g}(b')). 
\end{equation}
\begin{remark}\label{remSSMAsymp}
Let (\ref{nablaftng}) hold for some $g$ as above and let Condition \ref{condAllAs} hold for the corresponding $f_t$ as above and 
$r_t=t$, $t\in T$, as well as for the minimized function $f$ 
equal to $\msq$ if $g= \msq2$, and to $g$ otherwise.  
Then, from Theorem \ref{thAsympMin}, denoting $H_f=\nabla^2f(b^*)$ and 
\begin{equation}\label{vgdef}
V_g(b')=H_f^{-1}\Sigma_g(b')H_f^{-1}, 
\end{equation}
we have
\begin{equation}\label{dtbsSSM}
\sqrt{t}(d_t-b^*) \Rightarrow \ND(0,u(b')V_g(b')).
\end{equation}
Furthermore, from Remark \ref{remHY}, for 
\begin{equation}\label{bgdef}
B_g(b')=\frac{1}{2}H_f^{-\frac{1}{2}}\Sigma_g(b')H_f^{-\frac{1}{2}},
\end{equation}
it holds
\begin{equation}\label{fdtSSM}
t(f(d_t)-f(b^*)) \Rightarrow \wt{\chi^2}(u(b')B_g(b')). 
\end{equation}

Let $u(b')$ be interpreted as the 
mean theoretical cost of a step of SSM as in Remark \ref{remub}. Let
us consider different processes  $d=(d_t)_{t\in T}$ from SSM methods for which 
(\ref{fdtSSM}) holds for possibly different $b'$ and $g$, and 
whose SSM methods have the same proportionality constant $p_{\dot{U}}$ of the theoretical to the practical cost variables of SSM steps 
(see Remark \ref{remCondT}). Then,
from the discussion in Section \ref{secSecond}, 
for $R \sim \wt{\chi^2}(u(b')B_g(b'))$, 
the second-order asymptotic efficiency  of such processes for the minimization of $f$ can be compared using the quantities
\begin{equation}\label{quantSSM}
\E(R) = \frac{u(b')}{2}\Tr(\Sigma_g(b')H_f^{-1}).  
\end{equation}
For the SSM methods having different constants $p_{\dot{U}}$, one can compare 
such quantities multiplied by such a $p_{\dot{U}}$.
\end{remark}
Let $p(b)=\frac{1}{L(b)}$, 
let us define the likelihood function $l(b)=\ln (p(b))=-\ln L(b)$, and the score function
\begin{equation}
S(b)=\nabla l(b) = \frac{\nabla p(b)}{p(b)}=-\frac{\nabla L(b)}{L(b)},  
\end{equation}
$b\in A$, where such a terminology is used in maximum likelihood estimation; see \cite{Vaart}.
Then, $\wh{\ce}_n(b',b)=  -\overline{(ZL'l(b))}_n$ and 
$\nabla_{b}\wh{\ce}_n(b',b)=  -\overline{(ZL'S(b))}_n$. 
Thus, if  
\begin{equation}
\E_{\PQ'}((ZL'S_i(b^*))^2) = \E_{\PQ_1}(L'(ZS_i(b^*))^2)<\infty,\quad i=1,\ldots,l,  
\end{equation}
(for which to hold in the LETS setting, from Theorem \ref{thYmore} 
and Remark \ref{remCondUN}, 
it is sufficient if Condition \ref{condlemYmore} holds for $S=Z^2$),
and $\nabla \ce(b^*)=-\E_{\PQ_1}(ZS(b^*))=0$, 
then, from Theorem \ref{theConvIneff}, for 
\begin{equation}\label{sigmacedef}
\Sigma_{\ce}(b') = \E_{\PQ'}((ZL')^2S(b^*)S(b^*)^T)=\E_{\PQ_1}(L'Z^2S(b^*)S(b^*)^T),
\end{equation}
we have (\ref{nablaftng}) for $g=\ce$  (and $f_t$ as in 
(\ref{ftdef}) for such a $g$).

It holds 
\begin{equation}
\nabla_{b}\wh{\msq}_n(b',b)=  \overline{(Z^2L'\nabla L(b))}_n. 
\end{equation}
Thus, if 
\begin{equation}\label{msqCLTInt}
\E_{\PQ'}(Z^4(L'\partial_iL(b^*))^2) = \E_{\PQ_1}(Z^4L'(\partial_iL(b^*))^2)<\infty,\quad  i=1,\ldots,l,
\end{equation}
and 
\begin{equation}\label{msqbs0}
\nabla\msq(b^*)=\E_{\PQ_1}(Z^2\nabla L(b^*))=0, 
\end{equation}
then, from Theorem \ref{theConvIneff}, for
\begin{equation}\label{sigmamsq}
\Sigma_{\msq}(b') =\E_{\PQ_1}(L'Z^4\nabla L(b^*)(\nabla L(b^*))^T),\\
\end{equation}
we have (\ref{nablaftng}) for $g=\msq$. 

Let us further in this section assume Condition \ref{condpqpq1} and let
\begin{equation}\label{rnbdef}
\wh{1}_n(b',b)=\overline{\left(\frac{L'}{L(b)}\right)}_n,\quad n \in \N_+
\end{equation}
(see (\ref{cons1})). 
Consider now the case of $g=\msq2$. We have $\wh{\msq2}_{n}=\wh{\msq}_n\wh{1}_n$ and thus
\begin{equation}
\nabla_b\wh{\msq2}_{n}=(\nabla_b\wh{\msq}_n)\wh{1}_n+\wh{\msq}_n\nabla_b \wh{1}_n.
\end{equation}
 Let 
 \begin{equation}\label{tndef}
 \begin{split}
 T_t(b')&= \sqrt{t}\I(N_t =k\in \N_+)(\nabla_b\wh{\msq}_{k}(b',b^*) + \msq(b^*)\nabla_b\wh{1}_{k}(b',b^*))(\wt{\kappa}_{k})\\
 &= \sqrt{t}\I(N_t =k\in \N_+)\overline{(L'\nabla L(b^*)(Z^2 -\msq(b^*)L(b^*)^{-2}))}_{k}(\wt{\kappa}_{k}) \\
 \end{split}
 \end{equation}
 and
 \begin{equation}\label{znDef}
 \begin{split}
 Z_{t}(b')&=\sqrt{t}\I(N_t= k\in \N_+)\nabla_b\wh{\msq2}_{k}(b',b^*)(\wt{\kappa}_{k}) - T_t(b')\\
 &= \I(N_t =k\in \N_+)(((\wh{1}_{k}-1)\sqrt{t}\nabla_b\wh{\msq}_{k})(b',b^*)\\
 &+(\wh{\msq}_{k}(b',b^*)-\msq(b^*))\sqrt{t}\nabla_b\wh{1}_{k}(b',b^*))(\wt{\kappa}_{k}). \\
 \end{split}
 \end{equation}
 Let
 \begin{equation}\label{zeroderivL}
 0 =\E_{\PQ_1}(\nabla_b(L^{-1}(b^*)))
 \end{equation}
 (see the first point of Theorem \ref{thDiff} for sufficient conditions for this in the LETS setting) 
 and $\E_{\PQ_1}(L'L(b)^{-4}(\partial_i L(b))^2) < \infty$, $i=1,\ldots,l$.
Then, from Theorem \ref{theConvIneff},
 \begin{equation}\label{Rnas}
 \sqrt{t}\I(N_t=k\in\N_+)\nabla_b\wh{1}_{k}(b',b)(\wt{\kappa}_{k})\Rightarrow \ND(0,u(b')\E_{\PQ_1}(L'L(b)^{-4}\nabla L(b)(\nabla L(b))^T)).
 \end{equation}
Assuming further (\ref{msqCLTInt}) and (\ref{msqbs0}), 
from (\ref{nablaftng}) for $g=\msq$,  (\ref{cons1}), (\ref{Rnas}), the fact that 
 from the SLLN and Condition \ref{condT}, a.s. $\I(N_t=k\in\N_+)\wh{\msq}_{k}(b',b^*)(\wt{\kappa}_{k})\to \msq(b^*)$,
 as well as from (\ref{znDef}) and Slutsky's lemma,
 \begin{equation}\label{zn0}
 Z_{t}(b') \overset{p}{\rightarrow} 0.
 \end{equation}
 Let
 \begin{equation}\label{sigmamsq2}
 \Sigma_{\msq2}(b')
 =\E_{\PQ_1}(L'(Z^2 -\msq(b^*)L(b^*)^{-2})^2\nabla L(b^*)(\nabla L(b^*))^T).\\
 \end{equation}
 From Theorem \ref{theConvIneff}, $T_t(b')\Rightarrow \ND(0,u(b')\Sigma_{\msq2}(b'))$, so that from (\ref{zn0}), 
the first line of (\ref{znDef}), and Slutsky's lemma, we receive (\ref{nablaftng}).

Let us finally consider the case of $g=\ic$. We have for $n \in \N_2$
\begin{equation}
\nabla_b\wh{\ic}_{n}=(\nabla_b\wh{c}_{n})\wh{\var}_{n}+\wh{c}_n\nabla_b\wh{\var}_{n},
\end{equation}
where 
\begin{equation} 
\nabla_b\wh{\var}_{n}=\frac{n}{n-1}((\nabla_b\wh{\msq}_n(b',b))\wh{1}_n
+\wh{\msq}_n(b',b)\nabla_b \wh{1}_n)
\end{equation} 
Let for $D=(\R\times\R^l)^3\times\R$ and $n \in \N_+$, $U_n(b'):\Omega_1^n \to D$ be equal to
\begin{equation}\label{sndef}
(\wh{c}_{n}(b',b^*),\nabla_b\wh{c}_{n}(b',b^*), 
\wh{\msq}_n(b',b^*),\nabla_b \wh{\msq}_n(b',b^*), \wh{1}_n(b',b^*), 
\nabla_b\wh{1}_n(b',b^*), \overline{(ZL')}_n)
\end{equation}
and let
\begin{equation}
\theta:= (c(b^*),\nabla c(b^*), \msq(b^*),\nabla \msq(b^*), 1,0,\alpha)=\E_{\PQ'}(U_1(b')) \in D.  
\end{equation}
Let the coordinates of $U_1(b')$ be square-integrable under $\PQ'$ and   
\begin{equation}\label{psidef}
\Psi:=\E_{\PQ'}((U_1(b')-\theta)(U_1(b')-\theta)^T). 
\end{equation}
Then, from Theorem \ref{theConvIneff}, 
\begin{equation}\label{uconv}
\sqrt{t}\I(N_t=k \in\N_+)(U_{k}(\wt{\kappa}_{k})-\theta)\Rightarrow \ND(0,u(b')\Psi). 
\end{equation}
For $\phi: D\to \R^l$ such that 
\begin{equation}
\phi((x_i)_{i=1}^7)=x_2(x_3x_5 -x_7^2) +  x_1(x_4x_5 +x_3x_6), 
\end{equation}
we have for $n \in \N_2$ 
\begin{equation}\label{icphi}
\nabla_b\wh{\ic}_{n}(b',b^*)=\frac{n}{n-1}\phi(U_n(b')) 
\end{equation}
and $\nabla\ic(b^*)= \frac{n}{n-1}\phi(\theta)$. Let us assume that 
\begin{equation}\label{nablaicb}
\nabla\ic(b^*)= \phi(\theta)= 0.
\end{equation}
Using the delta method from Theorem \ref{thdelta}, as well as Remark \ref{remdelta} and (\ref{uconv}), 
we receive that for 
\begin{equation}\label{sigmaicdef}
\Sigma_{\ic}(b') = \phi'(\theta)\Psi (\phi'(\theta))^T
\end{equation}
we have 
\begin{equation}\label{uasymp}
\sqrt{t}\I(N_t=k \in \N_+)\phi(U_{k}(b')(\wt{\kappa}_{k})) \Rightarrow \ND(0, u(b')\Sigma_{\ic}(b')). 
\end{equation}
From $\lim_{n\to \infty}\frac{n}{n-1}=1$, (\ref{icphi}), (\ref{uasymp}), and Slutsky's lemma, we thus have
(\ref{nablaftng}).
From (\ref{psidef}) and (\ref{sigmaicdef}), for 
$W(b'):=\phi'(\theta)(U_1(b')-\theta)$
\begin{equation}\label{sigmaicdef2}
\Sigma_{\ic}(b') = \E_{\PQ'}(W(b')W(b')^T). 
\end{equation}
We have
\begin{equation}
 \begin{split}
 \phi'(\theta)((x_i)_{i=1}^7)&=\nabla \msq(b^*)x_1+\var(b^*)x_2+\nabla c(b^*)x_3\\
 &+c(b^*)x_4+(\nabla c(b^*)\msq(b^*)+c(b^*)\nabla\msq(b^*))x_5\\
 &+c(b^*)\msq(b^*)x_6-2\nabla c(b^*)\alpha x_7,  \\
 \end{split}
 \end{equation}
so that
\begin{equation}\label{wbdef}
\begin{split}
W(b')&= \nabla \msq(b^*)(CL'L(b^*)^{-1}-c(b^*))+ \var(b^*)(-CL'L^{-2}(b^*)\nabla L(b^*)-\nabla c(b^*))\\
&+\nabla c(b^*)(Z^2L'L(b^*)-\msq(b^*))
+c(b^*)(Z^2L'\nabla L(b^*) -\nabla \msq(b^*))\\
&+(\nabla c(b^*)\msq(b^*)+c(b^*)\nabla\msq(b^*))(L' L^{-1}(b^*)-1)\\
&-c(b^*)\msq(b^*)L'L(b^*)^{-2}\nabla L(b^*)-2\nabla c(b^*)\alpha(ZL'-\alpha). 
\end{split}
\end{equation}
\begin{remark}
Let us make assumptions as above and that $C=1$. Then, we have $c(b)=1$, $\nabla c(b)=0$, and $\ic(b)=\var(b)$, $b \in A$, and
from (\ref{nablaicb}), $\nabla \msq(b^*) =\nabla \var(b^*)=0$, so that from (\ref{wbdef}),
\begin{equation}
W(b')=L'\nabla L(b^*)(Z^2-(\var(b^*)+\msq(b^*))L(b^*)^{-2}),  
\end{equation}
and thus 
\begin{equation}\label{sigmaicc1}
\Sigma_{\ic}(b')= \E_{\PQ_1}(L'(Z^2-(\var(b^*)+\msq(b^*))L(b^*)^{-2})^2\nabla L(b^*)(\nabla L(b^*))^T).   
\end{equation}
\end{remark}

\section{A helper CLT}\label{secHelpCLT}
For some $l\in \N_+$, consider a nonempty set $A \in \mc{B}(\R^l)$ and a 
corresponding family of probability distributions as in Section \ref{secFamily}. Let $m \in \N_+$, 
$u:\mc{S}(A)\otimes \mc{S}_1 \to \mc{S}(\R^m)$, and $B\in \mc{B}(A)$ be  nonempty.
\begin{condition}\label{condCLT}
For
\begin{equation}
f(b,M):= \E_{\PQ(b)}(|u(b,\cdot)|\I(|u(b,\cdot)|>M)),\quad b \in B,\ M \in \R,
\end{equation}
and $R(M):= \sup_{b \in B} f(b,M)$, $M \in \R$, we have 
\begin{equation}\label{minfR}
\lim_{M\rightarrow \infty} R(M)= 0.
\end{equation}
\end{condition}
Note that the above condition is equivalent to saying that for random variables $\psi_b \sim \PQ(b)$, $b \in B$, 
the family $\{|u(b,\psi_b)|:b \in B\}$ is uniformly integrable. In particular, similarly as for uniform integrability,
using H\"{o}lder's inequality one can prove the following criterion for the above condition to hold. 
\begin{lemma}
If for some $p>1$, 
\begin{equation}
\sup_{b\in B}\E_{\PQ(b)}(|u(b,\cdot)|^p)<\infty,  
\end{equation}
then Condition \ref{condCLT} holds.
\end{lemma}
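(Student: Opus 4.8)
The plan is to prove that the uniform $L^p$ bound on $|u(b,\cdot)|$ forces the truncated tail expectations $f(b,M)$ to vanish uniformly in $b$ as $M\to\infty$, which is exactly a version of the classical fact that an $L^p$ bound with $p>1$ implies uniform integrability. Denote $K:=\sup_{b\in B}\E_{\PQ(b)}(|u(b,\cdot)|^p)$, which is finite by assumption.

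First I would establish the key pointwise inequality. Since $p>1$, on the event $\{|u(b,\cdot)|>M\}$ for $M>0$ we have $|u(b,\cdot)|^{1-p}\leq M^{1-p}$ (the exponent $1-p$ is negative), and hence $|u(b,\cdot)|\I(|u(b,\cdot)|>M)\leq M^{1-p}|u(b,\cdot)|^p\I(|u(b,\cdot)|>M)$. Taking the expectation under $\PQ(b)$ and dropping the indicator on the right gives
\begin{equation}
f(b,M)=\E_{\PQ(b)}(|u(b,\cdot)|\I(|u(b,\cdot)|>M))\leq M^{1-p}\E_{\PQ(b)}(|u(b,\cdot)|^p)\leq M^{1-p}K,
\end{equation}
valid for every $b\in B$ and $M>0$.

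Next, taking the supremum over $b\in B$ yields $R(M)=\sup_{b\in B}f(b,M)\leq M^{1-p}K$. Since $1-p<0$, we have $M^{1-p}\to 0$ as $M\to\infty$, and therefore $\lim_{M\to\infty}R(M)=0$, which is precisely (\ref{minfR}); thus Condition \ref{condCLT} holds.

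As the remark preceding the statement indicates, an alternative route is via H\"older's inequality with conjugate exponents $p$ and $q=p/(p-1)$: one bounds $f(b,M)\leq (\E_{\PQ(b)}(|u(b,\cdot)|^p))^{1/p}\,\PQ(b)(|u(b,\cdot)|>M)^{1/q}$ and then controls the tail probability by Markov's inequality, $\PQ(b)(|u(b,\cdot)|>M)\leq K/M^p$, arriving at the same $M^{-(p-1)}$ decay. Either way the argument is elementary and presents no real obstacle; the only point to keep in mind is that the $L^p$ bound is uniform in $b$, so that the single constant $K$ can be used across all $b\in B$ when passing to the supremum in the definition of $R(M)$.
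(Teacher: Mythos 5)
Your proof is correct. The paper gives no explicit proof of this lemma, only the remark that it follows "using H\"older's inequality" in the same way as the classical criterion for uniform integrability; your secondary argument is exactly that route, and your primary argument (the direct truncation bound $|u|\I(|u|>M)\leq M^{1-p}|u|^p$, giving $R(M)\leq KM^{1-p}\to 0$) is an even more elementary variant that reaches the same $M^{-(p-1)}$ decay. Both are valid, and you correctly emphasize the one point that matters, namely that the $L^p$ bound is uniform over $b\in B$ so a single constant $K$ survives the supremum.
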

For some $T \in \N_+\cup \infty$, $\overline{\R}$-valued random variables $(\psi_i)_{i=1}^T$  are said to be martingale differences for 
a filtration $(\mc{F}_i)_{i=0}^T$, if  $M_n= \sum_{i=1}^n\psi_i$, $n \in \N$,
is a martingale for $(\mc{F}_i)_{i=0}^T$, that is if 
$\E(|\psi_i|)<\infty$, $\psi_i$ is $\mc{F}_i$-measurable, and $\E(\psi_i|\mc{F}_{i-1})=0$, $i=1,\ldots,T$. 
The following martingale CLT is a special case of Theorem 8.2 with conditions II, page 442 in \cite{lipster89}. 
\begin{theorem}\label{thMartCLT} 
For each $n \in \N_+$, let $m_n \in \N_+$, $(\mc{F}_{n,k})_{k=0}^{m_n}$ be a filtration, 
and $(\psi_{n,k})_{k=1}^{m_n}$ be martingale differences for it such that $\E(\psi_{n,k}^2)<\infty$, $k=1,\ldots,m_n$. Let further
\begin{enumerate} 
\item for each $\delta>0$, $\sum_{k=1}^{m_n}\E(\psi_{n,k}^2\I(|\psi_{n,k}|>\delta)|\mc{F}_{n,k-1}) \overset{p}{\to} 0$ (as $n \to \infty$), 
\item for some $\sigma \in [0,\infty)$, $\sum_{k=1}^{m_n}\E(\psi_{n,k}^2|\mc{F}_{n,k-1}) \overset{p}{\to} \sigma^2$. 
\end{enumerate} 
Then, 
\begin{equation} 
\sum_{k=1}^{m_n}\psi_{n,k} \Rightarrow \ND(0,\sigma^2). 
\end{equation} 
\end{theorem}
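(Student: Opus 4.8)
The plan is to obtain Theorem \ref{thMartCLT} as a direct specialization of the general martingale central limit theorem quoted as Theorem 8.2 (under conditions II, page 442) in \cite{lipster89}, so the entire argument amounts to checking that our triangular array satisfies the hypotheses of that theorem and that its conclusion collapses to ordinary weak convergence to $\ND(0,\sigma^2)$. First I would fix the row index $n$ and record the martingale structure: from $\E(|\psi_{n,k}|)<\infty$ together with $\E(\psi_{n,k}^2)<\infty$ and $\E(\psi_{n,k}\mid \mc{F}_{n,k-1})=0$, the partial sums $M_{n,k}=\sum_{j=1}^{k}\psi_{n,j}$, $k=0,\ldots,m_n$, form a square-integrable martingale for the filtration $(\mc{F}_{n,k})_{k=0}^{m_n}$ with $M_{n,0}=0$, which is exactly the object to which the cited theorem applies. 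The quantity of interest is the terminal value $M_{n,m_n}=\sum_{k=1}^{m_n}\psi_{n,k}$.

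Next I would match our two numbered assumptions against the two requirements making up conditions II in the reference. Our point one is the conditional Lindeberg condition
\begin{equation}
\sum_{k=1}^{m_n}\E(\psi_{n,k}^2\I(|\psi_{n,k}|>\delta)\mid\mc{F}_{n,k-1}) \overset{p}{\to} 0,\quad \delta>0,
\end{equation}
which is precisely the negligibility-of-large-jumps hypothesis in that formulation. Our point two identifies the limit of the predictable quadratic characteristic,
\begin{equation}
\sum_{k=1}^{m_n}\E(\psi_{n,k}^2\mid\mc{F}_{n,k-1}) \overset{p}{\to} \sigma^2,
\end{equation}
with the deterministic constant $\sigma^2\in[0,\infty)$, which supplies the variance-convergence requirement of conditions II. I would verify that these two statements are literally the hypotheses invoked in Theorem 8.2 (possibly after noting that the cited theorem is stated for a functional limit, whereas we only need the one-dimensional marginal at the terminal time, so that the general conclusion restricts to weak convergence of $M_{n,m_n}$).

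Having matched the hypotheses, the conclusion of the general theorem gives convergence of $M_{n,m_n}$ to a centered (possibly mixed) Gaussian law whose variance is the limit of the predictable characteristic; since here that limit is the deterministic $\sigma^2$, the mixing variable is degenerate and the limit is exactly $\ND(0,\sigma^2)$, interpreted as the point mass at $0$ when $\sigma=0$. I expect the only genuine obstacle to be the bookkeeping of this translation: confirming that the precise form of ``conditions II'' in \cite{lipster89} is the pair above (rather than a Lyapunov-type strengthening), that the mode of convergence in those conditions is convergence in probability as we use it, and that the potentially stable or mixed-normal limit of the general statement indeed reduces to an unmixed $\ND(0,\sigma^2)$ under a constant limiting variance. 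These are all matters of aligning notation and reading off the degenerate case, rather than of fresh analysis, which is why the statement is presented as a special case.
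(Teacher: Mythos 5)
Your proposal is correct and matches the paper exactly: the paper offers no independent proof of this theorem, stating only that it is ``a special case of Theorem 8.2 with conditions II, page 442 in \cite{lipster89},'' which is precisely the reduction you carry out. Your additional bookkeeping — verifying the martingale-difference structure, matching the conditional Lindeberg condition and the convergence of the predictable quadratic characteristic to the two parts of conditions II, and noting that a deterministic limiting variance degenerates the mixed-normal limit to $\ND(0,\sigma^2)$ — is a faithful elaboration of what the paper leaves implicit.
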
 

Let $r:\mc{S}(A)\otimes \mc{S}_1 \to \mc{S}(\R^m)$ be such that for each $b \in A$,
\begin{equation}
\E_{\PQ(b)}(r(b,\cdot))=0.
\end{equation} 
Consider a matrix 
\begin{equation}
\Sigma(b)=\E_{\PQ(b)}(r(b,\cdot)r(b,\cdot)^T)
\end{equation}
for $b \in A$ for which it is well-defined. 
Note that under Condition \ref{condCLT} for $u=|r|^2$, from (\ref{minfR}) we have $R(M)<\infty$ for some $M>0$, and thus
$R(0) \leq M  + R(M)<\infty$ and $\Sigma(b)\in \R^{m\times m}$,  $b\in B$. 

\begin{theorem}\label{thCLTM} 
Let us assume that Condition \ref{condChi} holds and we have
\begin{equation}\label{limnk}
\lim_{k\rightarrow \infty}n_k=\infty.
\end{equation}
Let further $B$ as above be a neighbourhood of $b^*\in A$, 
\begin{equation}\label{bnpto0} 
b_n\overset{p}{\rightarrow} b^*,
\end{equation} 
Condition \ref{condCLT} hold for $u=|r|^2$, and
$b\in B\to \Sigma(b)$ be continuous in $b^*$.  
Then, 
\begin{equation}
\frac{1}{\sqrt{n_k}}\sum_{i=1}^{n_k}r(b_{k-1},\chi_{k,i})\Rightarrow \ND(0,\Sigma(b^*)).
\end{equation}
\end{theorem}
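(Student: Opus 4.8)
The plan is to combine the Cram\'er--Wold device with the martingale central limit theorem in Theorem \ref{thMartCLT}. First I would fix $t \in \R^m$ and set $r_t(b,x) = t^Tr(b,x)$, so that $\E_{\PQ(b)}(r_t(b,\cdot))=0$ and $\E_{\PQ(b)}(r_t(b,\cdot)^2)= t^T\Sigma(b)t=:\sigma_t^2(b)$ for $b\in B$. Since establishing $\frac{1}{\sqrt{n_k}}\sum_{i=1}^{n_k}t^Tr(b_{k-1},\chi_{k,i}) \Rightarrow \ND(0,\sigma_t^2(b^*))$ for every such $t$ yields the thesis by Cram\'er--Wold (see page 16 in \cite{Vaart}), it suffices to treat this scalar array.

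For each $k$ I would introduce the nested $\sigma$-fields $\mc{F}_{k,0}=\sigma(b_{k-1})$ and $\mc{F}_{k,i}=\sigma(b_{k-1},\chi_{k,1},\ldots,\chi_{k,i})$, $i=1,\ldots,n_k$, and the triangular array $\psi_{k,i}= n_k^{-1/2}r_t(b_{k-1},\chi_{k,i})$. Using Condition \ref{condChi} and the properties of conditional distributions (as in the proof of Theorem \ref{thSLLNh}), $\chi_{k,i}$ has conditional distribution $\PQ(v)$ given $b_{k-1}=v$ and is conditionally independent of $\chi_{k,1},\ldots,\chi_{k,i-1}$, so that $\E(\psi_{k,i}|\mc{F}_{k,i-1}) = n_k^{-1/2}(\E_{\PQ(v)}(r_t(v,\cdot)))_{v=b_{k-1}}=0$; hence $(\psi_{k,i})_{i=1}^{n_k}$ are martingale differences for $(\mc{F}_{k,i})_{i=0}^{n_k}$, with $\lim_{k\to\infty}n_k=\infty$ supplying the growing row lengths required by Theorem \ref{thMartCLT}.

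The two hypotheses of that theorem would then be verified. For the conditional-variance condition, the conditional independence gives $\sum_{i=1}^{n_k}\E(\psi_{k,i}^2|\mc{F}_{k,i-1}) = \sigma_t^2(b_{k-1})$ on $\{b_{k-1}\in B\}$; since $B$ is a neighbourhood of $b^*$ with $b_{k-1}\overset{p}{\to}b^*$ we have $\PR(b_{k-1}\in B)\to1$, and the continuity of $\Sigma$ at $b^*$ together with the continuous mapping theorem give $\sigma_t^2(b_{k-1})\overset{p}{\to}\sigma_t^2(b^*)$, which is the required limit. For the Lindeberg-type condition, again by conditional independence the relevant sum equals $g_k(b_{k-1})$ with $g_k(v)=\E_{\PQ(v)}(r_t(v,\cdot)^2\I(|r_t(v,\cdot)|>\delta\sqrt{n_k}))$; bounding $r_t(v,\cdot)^2\leq |t|^2|r(v,\cdot)|^2$ and $\{|r_t(v,\cdot)|>\delta\sqrt{n_k}\}\subset\{|r(v,\cdot)|^2>\delta^2 n_k/|t|^2\}$, on $\{b_{k-1}\in B\}$ one obtains $g_k(b_{k-1})\leq |t|^2 R(\delta^2 n_k/|t|^2)$ in the notation of Condition \ref{condCLT} for $u=|r|^2$. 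Since $n_k\to\infty$, (\ref{minfR}) forces this deterministic bound to $0$, which combined with $\PR(b_{k-1}\in B)\to1$ yields the required convergence in probability to $0$. Theorem \ref{thMartCLT} then delivers the scalar CLT, and Cram\'er--Wold the vector statement.

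The main obstacle I expect is the careful handling of the randomness of $b_{k-1}$: all control on the conditional expectations, the finiteness of $\Sigma(b)$, and the uniform-integrability bound $R$ are only available for arguments in $B$, so each verification must be localised to the high-probability event $\{b_{k-1}\in B\}$ and then merged with $\PR(b_{k-1}\in B)\to1$. The secondary point requiring attention is relating the scalar uniform integrability of $r_t^2$ back to Condition \ref{condCLT}, which is phrased for $|r|^2$, via the elementary domination $r_t^2\leq |t|^2|r|^2$.
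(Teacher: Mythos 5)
Your overall strategy --- Cram\'er--Wold reduction followed by the martingale CLT of Theorem \ref{thMartCLT}, with the Lindeberg condition controlled by $R(n_k\delta^2/|t|^2)\to 0$ and the conditional-variance condition by continuity of $\Sigma$ at $b^*$ together with $b_{k-1}\overset{p}{\to}b^*$ --- is exactly the paper's.

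The one step that does not go through as written is the application of Theorem \ref{thMartCLT} to the array $\psi_{k,i}=n_k^{-1/2}\,t^Tr(b_{k-1},\chi_{k,i})$. That theorem requires $\E(\psi_{k,i}^2)<\infty$ (indeed it requires the $\psi_{k,i}$ to be genuine martingale differences, hence integrable), and Condition \ref{condCLT} gives moment control only for parameters in $B$: for $b_{k-1}\notin B$ nothing prevents $\E_{\PQ(b)}(|r(b,\cdot)|^2)$ from being infinite, so your $\psi_{k,i}$ need not be square-integrable, and the hypotheses of the martingale CLT cannot be ``localised'' to the event $\{b_{k-1}\in B\}$ --- they are unconditional statements about the array. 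The repair is the one the paper makes: put the truncation inside the summands, i.e. take $\psi_{k,i}=n_k^{-1/2}\I(b_{k-1}\in B)\,t^Tr(b_{k-1},\chi_{k,i})$, equivalently work with $D_k=\I(b_{k-1}\in B)W_k$ in place of $W_k:=n_k^{-1/2}\sum_{i=1}^{n_k}r(b_{k-1},\chi_{k,i})$. For this truncated array all of your computations are valid verbatim (the indicator makes the second-moment bound $\E(\psi_{k,i}^2)\leq |t|^2 R(0)/n_k<\infty$ and the Lindeberg bound $|t|^2R(n_k\delta^2/|t|^2)$ hold unconditionally) and yield $D_k\Rightarrow\ND(0,\Sigma(b^*))$; since $W_k-D_k=\I(b_{k-1}\notin B)W_k\overset{p}{\to}0$ by (\ref{bnpto0}), Slutsky's lemma transfers the limit to $W_k$. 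You clearly saw the issue --- your closing paragraph names it --- but the merge has to happen at the level of the random variables, not at the level of the hypotheses of Theorem \ref{thMartCLT}.
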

\begin{proof}
Let $W_k=\frac{1}{\sqrt{n_k}}\sum_{i=1}^{n_k}r(b_{k-1},\chi_{k,i})$
and $D_k=\I(b_{k-1} \in B)W_k$, $k \in \N_+$. From (\ref{bnpto0}) we have $W_k-D_k = \I(b_{k-1} \notin B)W_k\overset{p}{\to} 0$,
so that from Slutsky's lemma it is sufficient to prove 
that $D_k \Rightarrow \ND(0,\Sigma(b^*))$. Furthermore, using Cram\'{e}r-Wold device 
it is sufficient to prove that for each $t \in \R^l$, for 
$v(b):=t^T\Sigma(b)t$, $b \in B$, 
and $S_k: =t^TD_k$, we have $S_k \Rightarrow \ND(0,v(b^*))$. For $t=0$ this is obvious, so let us consider $t\neq 0$.  
It is sufficient to check that the assumptions of Theorem \ref{thMartCLT} hold for 
$m_k=n_k$, $\mc{F}_{k,i}=\sigma(b_{k-1};\ \chi_{k,j}:j\leq i)$, 
$\psi_{k,i}=\frac{1}{\sqrt{n_k}}\I(b_{k-1}\in B)t^Tr(b_{k-1},\chi_{k,i})$, and $\sigma^2=v(b^*)$. 
From Condition \ref{condCLT} (for $u=|r|^2$),  
\begin{equation}\label{psiki2}
\begin{split}
\E(\psi_{k,i}^2)&= \frac{1}{n_k}\E((\E_{\PQ(b)}(\I(b\in B)(t^Tr(b,\cdot))^2))_{b=b_{k-1}}) \\
&\leq \frac{|t|^2}{n_k}\E((\E_{\PQ(b)}(\I(b\in B)|r(b,\cdot)|^2))_{b=b_{k-1}}) \\
&\leq \frac{|t|^2}{n_k}\E(\I(b_{k-1}\in B)f(b_{k-1},0))\\
&\leq \frac{|t|^2}{n_k} R(0) <\infty.  
\end{split}
\end{equation}
For $\delta>0$, from Condition \ref{condCLT} and (\ref{limnk}), 
\begin{equation}\label{sumCLT1}
\begin{split}
\sum_{i=1}^{m_k}\E(\psi_{k,i}^2\I(|\psi_{k,i}|>\delta)|\mc{F}_{k,i-1})&=
(\I(b\in B)\E_{\PQ(b)}(|t^Tr(b,\cdot)|^2\I(t^Tr(b,\cdot)>\sqrt{n_k}\delta)))_{b=b_{k-1}}\\
&\leq |t|^2(\I(b\in B)\E_{\PQ(b)}(|r(b,\cdot)|^2\I(|t||r(b,\cdot)|>\sqrt{n_k}\delta)))_{b=b_{k-1}}\\
&= |t|^2\I(b_{k-1}\in B)f(b_{k-1},n_k(\frac{\delta}{|t|})^2)\\
&\leq |t|^2R(n_k(\frac{\delta}{|t|})^2) \rightarrow 0, \quad k \to \infty.\\
\end{split}
\end{equation}
 To prove the second point of Theorem \ref{thMartCLT} let us notice that 
\begin{equation}\label{limpvbs}
\begin{split}
\sum_{i=1}^{m_k}\E(\psi_{k,i}^2|\mc{F}_{k,i-1})&= (\I(b\in B)\E_{\PQ(b)}(|t^Tr(b,\cdot)|^2))_{b=b_{k-1}}\\
&= (\I(b\in B)v(b))_{b=b_{k-1}} \overset{p}{\rightarrow} v(b^*), 
\end{split}
\end{equation}
where in the last line we used (\ref{bnpto0}) and the continuity of $b\to \I(b \in B)v(b)$ in $b^*$.
\end{proof}

We will be most interested in the IS case in which we shall assume the following condition.
\begin{condition}\label{condIScaseM}
For some $\R^m$-valued random variable $Y$ on $\mc{S}_1$, Condition 
\ref{condpqbllpq1} holds for $Z$ replaced by $Y$,
and we have 
\begin{equation}\label{IScaseM}
r(b,\omega)=(YL(b))(\omega), \quad  b\in A,\ \omega \in \Omega_1.
\end{equation}
\end{condition}

\begin{lemma}\label{lemSuffCLT}
Let us assume Condition \ref{condIScaseM}
and that for $F=\sup_{b\in B}\I(Y\neq 0)L(b)$ we have 
\begin{equation}\label{pqzfm}
\E_{\PQ_1}(|Y|^2F)<\infty. 
\end{equation}
Then, Condition \ref{condCLT} holds for $u=|r|^2$. If further $\PQ_1$ a.s. $b\to L(b)$ is continuous, then 
$b\in B\to \Sigma(b)$ is continuous.
\end{lemma}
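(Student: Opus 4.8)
The plan is to reduce both claims to dominated-convergence arguments after rewriting the relevant $\PQ(b)$-expectations as $\PQ_1$-expectations via the importance-sampling identity (\ref{epr1ay}). Since $u=|r|^2$ is nonnegative, $|u(b,\cdot)|=|r(b,\cdot)|^2=|Y|^2L(b)^2$, which vanishes on $\{Y=0\}$; hence for $M\geq 0$ the truncation defining $f(b,M)$ is supported on $\{Y\neq 0\}$. Because Condition \ref{condIScaseM} assumes Condition \ref{condpqbllpq1} for $Y$, i.e.\ $L(b)=(\frac{d\PQ_1}{d\PQ(b)})_{Y\neq 0}$, formula (\ref{epr1ay}) applied with $A=\{Y\neq 0\}$ and the nonnegative integrand $|Y|^2L(b)\,\I(|Y|^2L(b)^2>M)$ gives
\begin{equation}
f(b,M)=\E_{\PQ_1}\!\left(\I(Y\neq 0)|Y|^2L(b)\,\I(|Y|^2L(b)^2>M)\right),\quad M\geq 0.
\end{equation}
This is the key step that trades one factor of $L(b)$ for the change of measure and leaves the problem as a family of $\PQ_1$-integrals indexed by $b$.

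Next I would make the bound uniform in $b\in B$. On $\{Y\neq 0\}$ one has $L(b)\leq F$, so $|Y|^2L(b)^2\leq |Y|^2F^2$ and therefore $\{|Y|^2L(b)^2>M\}\cap\{Y\neq 0\}\subset\{|Y|^2F^2>M\}$; inserting both inequalities into the expectation yields
\begin{equation}
R(M)=\sup_{b\in B}f(b,M)\leq \E_{\PQ_1}\!\left(|Y|^2F\,\I(|Y|^2F^2>M)\right).
\end{equation}
The dominating variable $|Y|^2F$ is $\PQ_1$-integrable by assumption (\ref{pqzfm}), hence $\PQ_1$-a.s.\ finite; on $\{Y\neq 0\}$ this forces $F<\infty$ and so $|Y|^2F^2<\infty$ a.s., while on $\{Y=0\}$ the integrand is identically $0$. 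Thus the integrand converges to $0$ $\PQ_1$-a.s.\ as $M\to\infty$ and is dominated by $|Y|^2F$, so the dominated convergence theorem gives $\lim_{M\to\infty}R(M)=0$, which is Condition \ref{condCLT} for $u=|r|^2$.

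For the continuity claim I would argue entrywise. Writing $r(b,\cdot)r(b,\cdot)^T=YY^TL(b)^2$ and applying (\ref{epr1ay}) exactly as above, now with integrand $YY^TL(b)$, gives $\Sigma(b)=\E_{\PQ_1}(\I(Y\neq 0)YY^TL(b))$, each entry of which is bounded in absolute value by $|Y|^2\I(Y\neq 0)L(b)\leq |Y|^2F$. Under the added hypothesis that $b\mapsto L(b)$ is $\PQ_1$-a.s.\ continuous, the integrand is $\PQ_1$-a.s.\ continuous in $b$ and is dominated on all of $B$ by the integrable variable $|Y|^2F$; continuity of $\Sigma$ on $B$ then follows from the dominated convergence theorem along arbitrary convergent sequences in $B$. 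The only point requiring care is the a.s.\ convergence in the first claim, namely that $\I(|Y|^2F^2>M)\to 0$ $\PQ_1$-a.s.; this is where the integrability of $|Y|^2F$ is used to exclude $F=\infty$ on $\{Y\neq 0\}$ up to a null set, and it is the main, though mild, obstacle in the argument.
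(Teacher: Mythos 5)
Your proposal is correct and follows essentially the same route as the paper's own proof: rewrite $f(b,M)$ as a $\PQ_1$-expectation via the density identity, bound it uniformly over $B$ by $\E_{\PQ_1}(|Y|^2F\,\I(|Y|^2F^2>M))$, and conclude by dominated convergence using (\ref{pqzfm}), with the continuity of $\Sigma$ handled entrywise by the same domination. The only cosmetic difference is that you carry the indicator $\I(Y\neq 0)$ explicitly where the paper leaves it implicit.
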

\begin{proof}
For each $M \in \R$ and $b \in B$
\begin{equation}\label{fbmm}
\begin{split}
f(b,M)&= \E_{\PQ(b)}(|YL(b)|^2\I(|YL(b)|>\sqrt{M}))\\
&=\E_{\PQ_1}(|Y|^2L(b)\I(|YL(b)|>\sqrt{M}))\\
&\leq\E_{\PQ_1}(|Y|^2F\I(|YF|>\sqrt{M})),\\
\end{split}
\end{equation}
so that
\begin{equation}
R(M)\leq\E_{\PQ_1}(|Y|^2F\I(|YF|>\sqrt{M})) \leq \E_{\PQ_1}(|Y|^2F)<\infty. 
\end{equation}
From (\ref{pqzfm}) we have $0=\PR(|Y|^2F=\infty)=\PR(|Y|F=\infty)$. Therefore, 
as $M\rightarrow\infty$, we have $\I(|YF|>\sqrt{M})\to 0$ and thus from Lebesgue's dominated convergence theorem also
$\E_{\PQ_1}(|Y|^2F\I(|YF|>\sqrt{M})) \to 0$ and $R(M) \to 0$, i.e. Condition \ref{condCLT} holds. We have 
\begin{equation}
\Sigma_{i,j}(b)= \E_{\PQ(b)}(Y_iY_jL(b)^2)= \E_{\PQ_1}(Y_iY_jL(b)) 
\end{equation}
and $|Y_iY_jL(b)| \leq |Y|^2F$, $b \in B$. Thus, the continuity of $b\in B\to \Sigma_{i,j}(b)$ (and thus 
also of $b\in B\to \Sigma(b)$) follows from Lebesgue's dominated convergence theorem. 
\end{proof}
\begin{remark}\label{remSuffpqzfm}
From Theorem \ref{thYmore} and Remark \ref{remCondUN}, in the LETS setting, 
for $B$ bounded,
under Condition \ref{condIScaseM}, and for $F$ as in Lemma \ref{lemSuffCLT},  
(\ref{pqzfm}) holds if Condition \ref{condlemYmore} holds for $S=|Y|^2$. 
\end{remark}

\section{\label{secAsympMSM}Asymptotic properties of multi-stage minimization methods} 
Consider the following conditions. 
\begin{condition}\label{condbkconv} 
We have $d^* \in A \in \mc{B}(\R^l)$ and $A$-valued random 
variables $b_k$, $k\in \N$, are such that 
\begin{equation}\label{bktods} 
b_k \overset{p}{\to} d^*. 
\end{equation} 
\end{condition} 
\begin{remark}\label{remdsbs}
From Remark \ref{remcondKiA} and its counterparts for CGMSM and MGMSM as discussed in Remark \ref{remCountKiA}, 
under conditions \ref{condbkdef} and \ref{condESSM1} for EMSM and GMSM or their 
counterparts for CGMSM and MGMSM as discussed above Remark \ref{remCountKiA},
as well under Condition \ref{condKiA}, we have a.s. for a sufficiently large $k$, $b_k=d_k$ for EMSM and GMSM, and 
$b_k=\wt{d}_k$ for CGMSM and MGMSM. In particular, a.s. $\lim_{k\to \infty} b_k=b^*$ and thus 
Condition \ref{condbkconv} holds for $d^*=b^*$.   
\end{remark}
Let us further in this section assume conditions \ref{condChi} and \ref{condbkconv}. 
Using analogous reasonings and assumptions as in Section \ref{secAsympSSM}, but 
using CLT from Theorem \ref{thCLTM} for $b^*$ replaced by $d^*$ instead of Theorem \ref{theConvIneff}, 
we receive that under the appropriate assumptions as in that theorem we have 
for $g$ in the below formulas substituted by $\ce$, $\msq$, or $\ic$, that 
for $f_k(b)= \wh{g}_{n_k}(b_{k-1},b)(\wt{\chi}_k)$ 
\begin{equation}\label{asympnkdvc} 
\sqrt{n_k} \nabla f_k(b^*) \Rightarrow \ND(0,\Sigma_{g}(d^*)). 
\end{equation} 
To prove (\ref{asympnkdvc}) for $g=\msq2$ using a reasoning analogous as in Section \ref{secAsympSSM} 
we additionally need the facts that 
$\wh{\msq}_{n_k}(b_{k-1},b^*)(\wt{\chi}_k)\overset{p}{\to} \msq(b^*)$ and $\wh{1}_{n_k}(b_{k-1},b^*)(\wt{\chi}_k)\overset{p}{\to} 1$. 
Under appropriate assumptions, such convergence results 
follow from the convergence in distribution of $\sqrt{n_k}(\wh{\msq}_{n_k}(b_{k-1},b^*)(\wt{\chi}_k)-\msq(b^*))$ 
and $\sqrt{n_k}(\wh{1}_k(b_{k-1},b^*)(\wt{\chi}_k)- 1)$, which can be proved using Theorem \ref{thCLTM} as above.
For different  $g$ as above, assuming (\ref{asympnkdvc}) and that 
Condition \ref{condAllAs} holds for $T=\N_+$, $r_k=n_k$, $f_k$ as above, and $f$ corresponding to $g$ as in Remark \ref{remSSMAsymp} 
(see Section \ref{secSomeUs} for some sufficient assumptions), 
for $V_g$ and $B_g$ as in that remark 
we have from Theorem \ref{thAsympMin} 
\begin{equation}\label{nkdkvf} 
\sqrt{n_k}(d_k - b^*)\Rightarrow \ND(0,V_{g}(d^*)), 
\end{equation} 
and from Remark \ref{remHY} 
\begin{equation}\label{nkfdkfb} 
n_k(f(d_k) - f(b^*))\Rightarrow \wt{\chi^2}(B_{g}(d^*)). 
\end{equation} 
If for $s_k$ denoting the number of samples generated till the $k$th stage of MSM, i.e. $s_k:=\sum_{i=1}^kn_i$, $k \in \N_+$, we have 
\begin{equation}\label{limsknk} 
\lim_{k\to \infty}\frac{s_k}{n_k}=\gamma \in [1,\infty), 
\end{equation} 
then from (\ref{nkdkvf}) it follows that 
\begin{equation}\label{skdk}
\sqrt{s_k}(d_{k} - b^*)\Rightarrow \ND(0,\gamma V_{g}(d^*)), 
\end{equation}
while from (\ref{nkfdkfb}) --- that 
\begin{equation}\label{skfdk} 
s_k(f(d_k) - f(b^*))\Rightarrow \wt{\chi^2}(\gamma B_{g}(d^*)). 
\end{equation}
For instance, for $n_k=A_1 + A_2m^k$ as in Remark \ref{remniLi}, we have $s_k=A_1k +A_2\frac{m^{k+1} - 1}{m-1}$, 
so that $\gamma=\frac{m}{m-1}$. For $n_k=A_1 + A_2k!$ as in that remark, we have 
\begin{equation}
\frac{s_k}{n_k}\leq\frac{s_k}{A_2k!}=\frac{A_1}{A_2(k-1)!} +1+  \frac{1}{k}+\frac{1}{k(k-1)}+\ldots +\frac{1}{k!}\leq \frac{A_1}{A_2(k-1)!} + 1+\frac{2}{k}, 
\end{equation}
so that $\gamma=1$.

\begin{remark}\label{remPs}
Let us assume that (\ref{skfdk}) holds and let $p_{s_k}=d_{k}$, $k \in \N_+$, 
i.e. $p$ is the process of MSM results but indexed 
by the total number of the generated samples rather than the number of stages.  
 Consider now $d_k$ as in SSM in Section \ref{secAsympSSM} 
 for $T=\N_+$, $N_k=k$, $k \in T$,  and $b'=d^*$, so 
 that we have (\ref{nttpto}) for $u(b')=u(d^*)=1$. 
 Let us assume that (\ref{fdtSSM}) holds for such $d_k$, 
 and let $p_{s_k}'= d_{s_k}$, $k \in \N_+$. Let further $T=\{s_k:k\in\N_+\}$.  
 Then, $t(f(p_t) - f(b^*))\Rightarrow \wt{\chi^2}(\gamma B_{g}(d^*))$ 
 and $t(f(p_t') - f(b^*))\Rightarrow \wt{\chi^2}(B_{g}(d^*))$,
 $t \to \infty$, $t \in T$. 
 Thus, for $\gamma=1$ or $B_{g}(d^*)=0$, the processes $p$ and $p'$ 
 can be considered asymptotically equally efficient for the minimization of 
 $f$ in the second-order sense as discussed in Section \ref{secSecond}, 
 while for $\gamma>1$ and $B_{g}(d^*)\neq 0$
 the process from SSM can be considered more efficient 
 than the one from MSM  in such sense.
\end{remark}  

\begin{remark}\label{remCompMSM}
From the discussion in Section \ref{secSecond} for $T=\N_+$ and $r_k=n_k$, $k\in T$, for $R\sim \wt{\chi^2}(B_{g}(d^*))$, 
the second-order asymptotic inefficiency for the minimization of $f$ of processes
$d=(d_k)_{k\in \N_+}$  from MSM satisfying (\ref{nkfdkfb}) like above, 
e.g. for different $g$ or $d^*$ but for the same $b^*$ and $n_k$, can be quantified using
\begin{equation}\label{quantMSM}
\E(R)= \frac{1}{2}\Tr(\Sigma_{g}(d^*)H_f^{-1}). 
\end{equation}
Let (\ref{skfdk}) hold and consider a process $p_{s_k}=d_{k}$, $k \in \N_+$, as in Remark \ref{remPs}. 
Then, the asymptotic inefficiency of $p$ for the minimization of $f$ can be quantified using
\begin{equation}\label{quantMSM2}
\gamma\frac{1}{2}\Tr(\Sigma_{g}(d^*)H_f^{-1}). 
\end{equation}
Using (\ref{quantMSM2}) one can compare the asymptotic efficiency of such a process $p$ from MSM with that of a process $p'$ from SSM as in Remark \ref{remCompMSM}, 
but this time without assuming that $b'=d^*$, so that 
the inefficiency of $p'$ is quantified by $\frac{1}{2}\Tr(\Sigma_{g}(b')H_f^{-1})$. 
In particular, if $\gamma\Tr(\Sigma_{g}(d^*)H_f^{-1})
<\Tr(\Sigma_{g}(b')H_f^{-1})$ then $p$ can be considered asymptotically more efficient for the minimization of $f$
than $p'$.

Consider further the mean theoretical cost $u$ of MSM steps, analogous as in Remark \ref{remub} for SSM. 
For two MSM processes $d$ as above 
for which $u$ is continuous in the corresponding points $d^*$, 
(\ref{quantMSM}) is positive and not higher for the first process than for the second one, 
and $u(d^*)$ is lower for the first process than for the second one 
if the constants $p_{\dot{U}}$ as in Remark \ref{remCondT} for these processes are the same, or
the mean practical cost $p_{\dot{U}}u(d^*)$ of this process is lower if these constants are different,
it seems reasonable to consider the first process asymptotically more efficient for the minimization of $f$. 
More generally, by analogy to formula (\ref{quantSSM}) for SSM, rather than using (\ref{quantMSM}),
one can quantify the asymptotic inefficiency of MSM processes $d$ as above by
\begin{equation}\label{quantMSMfix}
u(d^*)\frac{1}{2}\Tr(\Sigma_{g}(d^*)H_f^{-1}), 
\end{equation}
or such a quantity multiplied by $p_{\dot{U}}$ respectively. 
\end{remark}

A more desirable possibility than having Condition \ref{condbkconv} satisfied for $d^*=b^*$ as discussed in Remark \ref{remdsbs} 
(where for the minimization methods from the previous sections such a $b^*$ is equal to the unique minimum point of the minimized function $f$), 
may be to have it fulfilled for 
$d^*$ minimizing some measure of the asymptotic inefficiency of $d_k$ for the minimization of $f$, like 
(\ref{quantMSM}) or (\ref{quantMSMfix}) (assuming that such a $d^*$ exists). 
See Chapter \ref{secConcl} for further discussion of this idea. 
From Remark \ref{remRabmin} in Section \ref{secAsympProp} it will follow that for $g=\msq$, the minimum point of  $\msq$
does not need to be the minimum of (\ref{quantMSM}) in the function of $d^*$. 

\section{\label{secAsympZero}Asymptotic properties of the minimization results of the new estimators when a zero- or optimal-variance IS parameter exists} 
\begin{condition}\label{conddiffest}
We have the assumptions as in Section \ref{secFindOpt} above Condition \ref{condEst}, $D$ as in that section is a neighbourhood of $b^*$, conditions 
\ref{condhperf} and \ref{condEst} hold, and for each $b' \in A$, $k \in \N_p$, and $\omega\in \Omega_1^k$,
$b\to \wh{\est}_k(b',b)(\omega)$ is differentiable in $b^*$. 
\end{condition}

\begin{theorem}\label{thAsd0Kappa} 
Let conditions \ref{condKappa} and \ref{conddiffest} hold, $T \subset \R_+$ be unbounded, 
$N_t$, $t \in T$, be $\N\cup\{\infty\}$-valued random variables, and
$b\in B\to f_t(b):=\I(N_t=k \in \N_p)\wh{\est}_{k}(b',b)(\wt{\kappa}_{k})$, $t\in T$. Then, it holds a.s.
\begin{equation}\label{nfnbs0p}
\nabla f_t(b^*)=0, \quad t \in T. 
\end{equation}
If further Condition \ref{condAllAs} holds for such $f_t$, then 
\begin{equation}\label{superCan}
\sqrt{r_t}(d_t-b^*)\overset{p}{\to} 0, \quad t \to \infty. 
\end{equation} 
\end{theorem}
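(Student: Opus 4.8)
The plan is to split the argument along the two displayed claims, reducing (\ref{nfnbs0p}) to Lemma~\ref{lemMinKappa} together with a first-order optimality argument, and (\ref{superCan}) to Theorem~\ref{thAsymp0}.

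First I would establish (\ref{nfnbs0p}). Conditions \ref{condKappa} and \ref{conddiffest} include conditions \ref{condhperf} and \ref{condEst}, so Lemma~\ref{lemMinKappa} applies and furnishes a single almost sure event $\Omega_0$ on which, for every $k \in \N_p$ and every $d \in A$, the point $b^*$ is a minimum point of $b \in D \to \wh{\est}_k(d,b)(\wt{\kappa}_k)$. Specializing to $d=b'$ and using that $D$ is a neighbourhood of $b^*$, so that $b^*$ is an interior local minimizer, together with the differentiability of $b \to \wh{\est}_k(b',b)(\wt{\kappa}_k)$ at $b^*$ guaranteed by Condition~\ref{conddiffest}, the first-order necessary condition yields $\nabla_b\wh{\est}_k(b',b^*)(\wt{\kappa}_k)=0$ on $\Omega_0$, simultaneously for all $k \in \N_p$.

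Then I would transfer this to $f_t$. Fixing $\omega \in \Omega_0$ and $t \in T$, if $N_t(\omega)=k\in\N_p$ then $f_t(\cdot)(\omega)$ coincides with $\wh{\est}_k(b',\cdot)(\wt{\kappa}_k(\omega))$ on a neighbourhood of $b^*$, so $\nabla f_t(b^*)(\omega)=0$; if $N_t(\omega)\notin\N_p$, then $f_t\equiv 0$ and the gradient vanishes trivially. Since $\Omega_0$ does not depend on $t$, this proves (\ref{nfnbs0p}). For the second claim, (\ref{nfnbs0p}) gives $\PR(\nabla f_t(b^*)=0)=1$ for each $t \in T$, hence in particular $\lim_{t\to\infty}\PR(\nabla f_t(b^*)=0)=1$, which is precisely the hypothesis (\ref{limtprnft0}) of Theorem~\ref{thAsymp0}. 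Invoking that theorem, whose remaining hypothesis Condition~\ref{condAllAs} is assumed here, delivers $\sqrt{r_t}(d_t-b^*)\overset{p}{\to}0$, i.e. (\ref{superCan}).

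The main obstacle is essentially bookkeeping rather than analysis: one must ensure that the null set is uniform in $t \in T$, which is handled by taking the single countable-intersection event from Lemma~\ref{lemMinKappa} valid for all $k \in \N_p$ at once, and must justify that the indicator-weighted $f_t$ agrees with $\wh{\est}_{N_t}$ near $b^*$ so that its gradient inherits the vanishing property. Once these points are in place, the result is a direct chaining of Lemma~\ref{lemMinKappa}, the interior first-order condition, and Theorem~\ref{thAsymp0}.
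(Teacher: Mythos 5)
Your proposal is correct and follows essentially the same route as the paper: the paper's proof is just the terse statement that (\ref{nfnbs0p}) follows from Lemma \ref{lemMinKappa} and then (\ref{superCan}) from Theorem \ref{thAsymp0}, and your write-up fills in exactly the intermediate steps the paper leaves implicit (the single a.s.\ event uniform in $k$, the interior first-order condition on the neighbourhood $D$, and the transfer from $\wh{\est}_{N_t}$ to the indicator-weighted $f_t$).
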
 
\begin{proof}
From  Lemma \ref{lemMinKappa} we receive (\ref{nfnbs0p}). 
Thus, (\ref{superCan}) follows from Theorem \ref{thAsymp0}. 
\end{proof}

\begin{theorem}\label{thAsd0Chi} 
Let Condition \ref{condChi} hold for $n_k \in \N_p$, $k \in \N_+$, let Condition \ref{conddiffest} hold, 
and let $b\in B\to f_k(b):=\wh{\est}_{n_k}(b_{k-1},b)(\wt{\chi}_k)$, $k \in T:=\N_+$. Then, we have a.s. (\ref{nfnbs0p}). 
If further Condition \ref{condAllAs} holds for such $f_k$, then (\ref{superCan}) holds. 
\end{theorem}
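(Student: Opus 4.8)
The plan is to follow the proof of Theorem \ref{thAsd0Kappa} verbatim in structure, substituting the multi-stage machinery for the single-stage one: in place of Lemma \ref{lemMinKappa} I would invoke its multi-stage counterpart Lemma \ref{lemMinChi}, and the convergence conclusion will again come from Theorem \ref{thAsymp0}. The argument splits into first establishing (\ref{nfnbs0p}) and then deducing (\ref{superCan}) from it under Condition \ref{condAllAs}.

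For (\ref{nfnbs0p}), I first note that Condition \ref{conddiffest} already bundles conditions \ref{condhperf} and \ref{condEst}; together with the standing assumption that Condition \ref{condChi} holds for $n_k \in \N_p$, all hypotheses of Lemma \ref{lemMinChi} are in force. Applying that lemma with $d = b_{k-1}$ (which is $A$-valued by Condition \ref{condChi}) yields that a.s., for every $k \in \N_+$, the point $b^*$ is a minimum point of $b \in D\to f_k(b) = \wh{\est}_{n_k}(b_{k-1},b)(\wt{\chi}_k)$. Since $D$ is a neighbourhood of $b^*$ (Condition \ref{conddiffest}) and $b \to \wh{\est}_{n_k}(b_{k-1},b)(\wt{\chi}_k)$ is differentiable at $b^*$ (again Condition \ref{conddiffest}), the point $b^*$ is an interior minimizer and the first-order optimality condition forces $\nabla f_k(b^*) = 0$. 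Because the exceptional null set from Lemma \ref{lemMinChi} is a single one valid for all $k$ simultaneously (the lemma's conclusion is a countable conjunction, cf. Lemma \ref{lemhalphachi}), this gives (\ref{nfnbs0p}) with $T = \N_+$.

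For (\ref{superCan}), the passage from (\ref{nfnbs0p}) to the conclusion is then immediate: since $\nabla f_k(b^*) = 0$ a.s. for all $k$, we have $\PR(\nabla f_k(b^*)=0) = 1$ for each $k$ and hence $\lim_{k\to\infty}\PR(\nabla f_k(b^*)=0) = 1$, which is exactly hypothesis (\ref{limtprnft0}). Invoking Theorem \ref{thAsymp0} under the additional Condition \ref{condAllAs} then delivers $\sqrt{r_t}(d_t-b^*)\overset{p}{\to}0$.

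There is essentially no computational content here; the only point that requires care — and the one I would regard as the main (if modest) obstacle — is verifying that the hypotheses needed for Lemma \ref{lemMinChi} are genuinely supplied by Condition \ref{conddiffest} together with Condition \ref{condChi}, and that the interior first-order condition may be applied for every $k$ off a single null set. Everything else is a direct transcription of the single-stage argument, with $\wt{\kappa}_k$, the budget-indexed $N_t$, and Lemma \ref{lemMinKappa} replaced by $\wt{\chi}_k$, the stage index $k$, and Lemma \ref{lemMinChi} respectively.
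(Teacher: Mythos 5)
Your proposal is correct and follows exactly the paper's own argument: Lemma \ref{lemMinChi} gives (\ref{nfnbs0p}) via the interior first-order condition, and Theorem \ref{thAsymp0} then yields (\ref{superCan}). The additional details you supply (the single null set uniform in $k$, the verification that Condition \ref{conddiffest} supplies the hypotheses of Lemma \ref{lemMinChi}) are consistent with, and merely more explicit than, the paper's two-line proof.
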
 
\begin{proof} 
From  Lemma \ref{lemMinChi} we have (\ref{nfnbs0p}), 
so that (\ref{superCan}) follows from Theorem \ref{thAsymp0}. 
\end{proof} 
\begin{remark}\label{remSymp0Est} 
Let conditions \ref{condLmes}, \ref{condpqpq1}, and \ref{condg0} hold, $A$ be a neighbourhood of $b^*$, and $b\to L(b)(\omega)$ be differentiable, 
$\omega \in\Omega_1$. Let $\wh{\est}_{n}$ be equal to $\wh{\msq2}_{n}$ and  $b^*$ be an optimal-variance IS parameter, or 
$\wh{\est}_{n}$ be equal to $\wh{\ic}_{n}$ and $b^*$ be a zero-variance IS parameter. 
Let further for SSM Condition \ref{condKappa} hold and $T$ and  $N_t$, $t \in T$, be as in Theorem \ref{thAsd0Kappa}, while for MSM 
let Condition \ref{condChi} hold for $n_k\in \N_p$, $k \in T:=\N_+$, for $p=1$ for $\wh{\est}_{n}=\wh{\msq2}_{n}$ or $p=2$ for 
$\wh{\est}_{n}=\wh{\ic}_{n}$. 
Then, from remarks \ref{remCondhperf}, \ref{remExact}, and the above theorems, 
we have a.s. (\ref{nfnbs0p}) for $f_t$ as in Theorem \ref{thAsd0Kappa} for SSM, and as in Theorem \ref{thAsd0Chi} 
for MSM. If further Condition \ref{condAllAs} holds (see Section \ref{secSomeUs} for some sufficient assumptions), 
then we also have (\ref{superCan}) in these methods. If we have (\ref{superCan}) 
for $r_t$ growing to infinity faster than $t$ for SSM 
or than $n_t$ for MSM, i.e. such that $\lim \frac{t}{r_t}\to 0$ or $\lim \frac{n_t}{r_t}\to 0$ respectively, 
then we have in a sense faster rate of convergence of $d_t$ to $b^*$ than in Section \ref{secAsympSSM} for SSM 
or in Section \ref{secAsympMSM} for MSM respectively. 
\end{remark}

\section{\label{secAsympProp}Some properties of the matrices characterizing the asymptotic distributions when a zero- or optimal-variance IS parameter exists}
Let us further in this section assume conditions \ref{condpqpq1} and \ref{condg0}. 
Consider matrix-valued functions $\Sigma_g$, $V_g$, and $B_g$, 
given by the formulas from Section \ref{secAsympSSM} and considered on the subsets of $A$ on which these formulas make sense. 

From the reasonings in  Section \ref{secAsympSSM}, for each $b' \in A$, under Condition \ref{condKappa}, for $g$ replaced by $\msq2$ or $\ic$, 
for $f_n(b)=\wh{g}_n(b',b)(\wt{\kappa}_n)$, $n \in \N_+$, under appropriate assumptions $\Sigma_{g}(b')$ is the asymptotic covariance matrix of 
$\sqrt{n}\nabla f_n(b^*)$. Under appropriate assumptions as in Remark \ref{remSymp0Est},  including
$b^*$ being an optimal-variance IS parameter in the case of $g=\msq2$ or a zero-variance one for $g=\ic$, 
we have a.s. $\nabla f_k(b^*)=0$, $k \in \N_+$. Thus, in such cases $\Sigma_g(b')=0$, $b' \in A$. 
This can be also verified by the following more generally valid direct calculations. 
For $b^*$ being an optimal-variance IS parameter, from  (\ref{zlsmsq}), $\PQ_1$ a.s. 
(and thus from Condition \ref{condpqpq1} also $\PQ(b)$ a.s., $b \in A$) 
\begin{equation}\label{zmsql}
(ZL(b^*))^{2} =\msq(b^*),
\end{equation}
and thus from (\ref{sigmamsq2}), $\Sigma_{\msq2} =0$. Let now $b^*$ be a zero-variance IS parameter. 
Then, we have $\var(b^*)=0$ and under appropriate differentiability assumptions $\nabla \msq(b^*)=0$. 
Using further (\ref{zmsql}) and the fact that $\PQ_1$ a.s. $ZL(b^*)= \alpha$, from (\ref{wbdef}) we receive that for each $b'\in A$, $\PQ'$ a.s.
\begin{equation} 
\begin{split}\label{wb0ic} 
W(b')&=\nabla c(b^*)(Z^2L'L(b^*)-\msq(b^*) +\msq(b^*)(L' L^{-1}(b^*)-1)-2\alpha(ZL'-\alpha))\\ 
&+c(b^*)\nabla L(b^*)L'(Z^2-\msq(b^*)L(b^*)^{-2}) =0.\\ 
\end{split} 
\end{equation} 
Thus, from (\ref{sigmaicdef2}), $\Sigma_{\ic}=0$. Using the notations as in Remark \ref{remSSMAsymp}, 
if $H_{\msq}$ is positive definite for $g=\msq2$ or $H_{\ic}$ is positive definite for $g=\ic$, 
and we have $\Sigma_{g}=0$ as above, then it also holds $V_{g}=B_g=0$. 

Let us further use the notations $p(b)$, $l(b)$, and $S(b)$ as in Section \ref{secAsympSSM}. We define the Fisher information matrix as 
\begin{equation}\label{Ibdef1} 
I(b)=\E_{\PQ(b)}(S(b)S(b)^T), \quad b\in A 
\end{equation} 
(assuming that it is well defined). 
It is well known that under appropriate assumptions, allowing one to move the derivatives inside the expectations in the below derivation, 
we have
\begin{equation}\label{ibdef} 
I(b)= -\E_{\PQ(b)}(\nabla^2l(b)). 
\end{equation}
The following derivation is as on page 63 in \cite{Vaart}. 
From $\E_{\PQ_1}(p(b))=1$, $b \in A$, we have $\E_{\PQ_1}(\nabla p(b))=0$, $b\in A$, and 
$0=\E_{\PQ_1}(\nabla^2 p(b))=\E_{\PQ(b)}(\frac{\nabla^2 p(b)}{p(b)})$, $b \in A$, so that taking the expectation with respect to $\PQ(b)$ of
\begin{equation}
\nabla^2 l(b)=\frac{\nabla^2 p(b)}{p(b)}-\frac{\nabla p(b)(\nabla p(b))^T}{p^2(b)},
\end{equation}
we receive (\ref{ibdef}). 

Let us define
\begin{equation}
R(a,b)= \E_{\PQ(a)}(\frac{L(b)}{L(a)}S(a)S(a)^T), \quad a,b \in A, 
\end{equation}
Note that
\begin{equation}\label{rbb}
R(b,b)=I(b).
\end{equation}
\begin{remark}\label{remRabpos}
For some $a,b \in A$, let $R(a,b)$ and $I(a)$ have real-valued entries. Then, $R(a,b)$ 
is positive definite only if for each $v \in \R^l$, $v \neq 0$, 
$\E_{\PQ(a)}(\frac{L(b)}{L(a)}|S^T(a)v|^2)>0$, which holds only if $\PQ(a)(|S^T(a)v|\neq 0)>0$, $v \in \R^l$, $v \neq 0$, 
and thus only if $I(a)$ is positive definite. 
\end{remark}
Let $b^*$ be an optimal-variance IS parameter. Then, from (\ref{zmsql})
\begin{equation}\label{sigmaceopt}
\begin{split}
\Sigma_{\ce}(b)&= \E_{\PQ_1}(Z^2L(b)S(b^*)S(b^*)^T)\\
&=\msq(b^*)\E_{\PQ_1}(\frac{L(b)}{L(b^*)^2}S(b^*)S(b^*)^T)\\
&= \msq(b^*)R(b^*,b)
\end{split}
\end{equation}
and
\begin{equation}\label{sigmamsqopt}
\begin{split}
\Sigma_{\msq}(b) &= \E_{\PQ_1}(Z^4L(b)L(b^*)^2S(b^*)S(b^*)^T)\\
&=\msq(b^*)^2\E_{\PQ_1}(\frac{L(b)}{L(b^*)^2}S(b^*)S(b^*)^T)\\
&=\msq(b^*)^2R(b^*,b). 
\end{split}
\end{equation}
For the cross-entropy, let us assume that $b^*$ is a zero-variance IS parameter, so that  $\PQ_1$ a.s. we have $ZL(b^*)= \alpha$. Then
\begin{equation}
\ce(b)= -\E_{\PQ_1}(Zl(b))=-\alpha\E_{\PQ_1}(L(b^*)^{-1}l(b)).
\end{equation}
Thus, assuming that one can move the derivatives inside the expectation 
\begin{equation}\label{nabla2ce}
\nabla^2\ce(b)= -\alpha\E_{\PQ_1}(L(b^*)^{-1}\nabla^2l(b))=-\alpha\E_{\PQ(b^*)}(\nabla^2l(b)),
\end{equation}
in which case from (\ref{ibdef})
\begin{equation}\label{hceopt}
\begin{split}
H_{\ce}&=\nabla^2\ce(b^*)= \alpha I(b^*).\\
\end{split}
\end{equation}
Assuming that $I(b^*)$ is positive definite and $\alpha\neq 0$,  from $\msq(b^*)=\alpha^2$, 
(\ref{vgdef}), (\ref{sigmaceopt}), and (\ref{hceopt}) we have
\begin{equation}\label{vceeq}
V_{\ce}(b)= H_{\ce}^{-1}\Sigma_{\ce}(b)H_{\ce}^{-1}=I(b^*)^{-1}R(b^*,b)I(b^*)^{-1}, 
\end{equation}
which is positive definite from Remark \ref{remRabpos}. 
\begin{remark} 
Under the assumptions as above, from (\ref{rbb}) and (\ref{vceeq}) we have $V_{\ce}(b^*)=I(b^*)^{-1}$. Note that this is the asymptotic covariance 
matrix of maximum likelihood estimators for $b^*$ being the true parameter, see e.g. page 63 in \cite{Vaart}. This should be the case, 
since under Condition \ref{condKappa} for $b'=b^*$, a.s. 
\begin{equation}\label{cenbsb}
\begin{split}
\wh{\ce}_n(b^*,b)(\wt{\kappa}_n)&= \overline{(ZL(b^*) \ln(L(b)))}_n(\wt{\kappa}_n)\\
&=-\alpha\overline{\ln(p(b)))}_n(\wt{\kappa}_n),\\ 
\end{split}
\end{equation} 
and $b \to \overline{\ln(p(b)))}_n(\wt{\kappa}_n)=\frac{1}{n}\sum_{i=1}^n\ln(p(b)(\kappa_i))$ 
is maximized in such maximum likelihood estimation 
(note that for $\alpha>0$ we should minimize (\ref{cenbsb}) while for $\alpha<0$ --- maximize it). 
\end{remark} 
Under the assumptions as above and for $\alpha>0$,  from (\ref{bgdef}), (\ref{sigmaceopt}), and (\ref{hceopt})
\begin{equation} 
B_{\ce}(b)= \frac{1}{2}H_{\ce}^{-\frac{1}{2}}\Sigma_{\ce}(b)H_{\ce}^{-\frac{1}{2}}=\frac{1}{2}\alpha I(b^*)^{-\frac{1}{2}}R(b^*,b)I(b^*)^{-\frac{1}{2}},\\ 
\end{equation} 
which is positive definite. Note that $B_{\ce}(b^*)=\frac{1}{2}\alpha I_l$ and $\Tr(B_{\ce}(b^*))=\frac{l\alpha}{2}$. 

For the mean square, let us assume that $b^*$ is an optimal-variance IS parameter.
Then, from (\ref{zmsql})
\begin{equation}\label{msqbopt}
\msq(b)= \E_{\PQ_1}(Z^2L(b))=\msq(b^*)\E_{\PQ_1}(\frac{L(b)}{L(b^*)^2})\\
\end{equation}
Thus, assuming that one can move the derivatives inside the expectation we have
\begin{equation}\label{nabla2msq2}
\begin{split}
\nabla^2 \msq(b)&= -\msq(b^*)\nabla \E_{\PQ_1}(\frac{L(b)}{L(b^*)^2}S(b)^T)\\
&=\msq(b^*)\E_{\PQ_1}(\frac{L(b)}{L(b^*)^2}(S(b)S(b)^T - \nabla^2l(b))).\\
\end{split}
\end{equation}
In such a case, from (\ref{ibdef}),
\begin{equation}\label{hmsqform}
\begin{split}
H_{\msq}&=\msq(b^*)\E_{\PQ_1}(\frac{L(b)}{L(b^*)^2}(S(b^*)S(b^*)^T - \nabla^2l(b^*)))\\
&=\msq(b^*)2I(b^*).\\
\end{split}
\end{equation}
\begin{remark}
Under the appropriate assumptions as in Remark \ref{remfdif}, 
when $b^*$ is a zero-variance IS parameter, then, for $\dist$ denoting the cross-entropy distance,
$b\to \dist(\PQ(b^*),\PQ(b))$ and $\ce$ are 
linearly equivalent with a linear proportionality constant $\alpha$ (see (\ref{dKul})). 
Thus, in such a case (\ref{hceopt}) follows from the well-known fact that under appropriate assumptions
\begin{equation}
(\nabla^2_b\dist(\PQ(b^*),\PQ(b))_{b=b^*}=I(b^*). 
\end{equation}
Furthermore, from Remark (\ref{remPers}), when $b^*$ is an optimal-variance IS parameter, then, for $\dist$ denoting the Pearson divergence, 
$b\to \dist(\PQ(b^*),\PQ(b))$ and $\msq$ are 
 linearly equivalent with a linear proportionality constant 
 $\msq(b^*)$ (see (\ref{distvar}) and (\ref{zlsmsq})). Thus, in such a case (\ref{hmsqform}) is equivalent to the fact that 
 \begin{equation}
(\nabla^2_b\dist(\PQ(b^*),\PQ(b))_{b=b^*}=2I(b^*). 
\end{equation}
\end{remark} 

Assuming that $I(b^*)$ is positive definite and $\msq(b^*)\neq 0$, from (\ref{vgdef}),
(\ref{sigmamsqopt}), and (\ref{hmsqform}), 
\begin{equation}\label{vmsqdef}
\begin{split}
V_{\msq}(b)&= H_{\msq}^{-1}\Sigma_{\msq}(b) H_{\msq}^{-1}\\
&=\frac{1}{4}I(b^*)^{-1}R(b^*,b)I(b^*)^{-1},\\
\end{split}
\end{equation}
which is positive definite from Remark \ref{remRabpos}.
Thus, assuming further that $b^*$ is 
a zero-variance IS parameter and we have (\ref{vceeq}), it holds
\begin{equation}\label{Vmsqce}
V_{\msq}(b)=\frac{1}{4}V_{\ce}(b). 
\end{equation}
From (\ref{vmsqdef}), we have $V_{\msq}(b^*)=\frac{1}{4}I(b^*)^{-1}$. Furthermore, from (\ref{bgdef})
\begin{equation}\label{bmsqdef}
\begin{split}
B_{\msq}(b)&=\frac{1}{2}H_{\msq}^{-\frac{1}{2}}\Sigma_{\msq}(b)H_{\msq}^{-\frac{1}{2}}\\
&= \frac{\msq(b^*)}{4}I(b^*)^{-\frac{1}{2}}R(b^*,b)I(b^*)^{-\frac{1}{2}},\\ 
\end{split}
\end{equation}
which is positive definite. Note that $B_{\msq}(b^*)=\frac{\msq(b^*)}{4}I_l$ and $\Tr(B_{\msq}(b^*))=\frac{l\msq(b^*)}{4}$. 
\begin{remark}\label{rembcemsq}
Let $A$, $T$, $d_t$, and $r_t$ be as in Section \ref{secHelpAsymp}, let $b\in A$, and 
$u:A\to\overline{\R}$ be such that $u(b)\in \R_+$ and $r_t(d_t-b^*) \Rightarrow \ND(0,u(b)V_{\ce}(b))$ 
(see Section \ref{secAsympSSM} for sufficient conditions for this to hold 
for $b=b'$ and $r_t=t$ for the SSM of the cross-entropy estimators and Section \ref{secAsympMSM} for $b=d^*$, $u(d^*)=1$, and $r_k =n_k$ for 
the MSM of such estimators).
Let further $\msq$ be twice differentiable in $b^*$ with $\nabla \msq(b^*)=0$ and $H_{\msq}=\nabla^2\msq(b^*)$. Then, 
from Remark \ref{remdelta}, for
\begin{equation}\label{bcemsq}
B_{\ce,\msq}(b):= \frac{1}{2}V_{\ce}(b)^{\frac{1}{2}}H_{\msq} V_{\ce}(b)^{\frac{1}{2}},
\end{equation}
we have
\begin{equation}\label{rtmsqbcemsq}
r_t(\msq(d_t)-\msq(b^*))\Rightarrow \wt{\chi^2}(u(b)B_{\ce,\msq}(b)).
\end{equation}
For $b^*$ being a zero-variance IS parameter, $I(b^*)$ being positive definite, and $\alpha\neq 0$,
from (\ref{bcemsq}), (\ref{hmsqform}), (\ref{vceeq}), and (\ref{bmsqdef})
\begin{equation}\label{bcemsq2}
\begin{split}
B_{\ce,\msq}(b)&=\msq(b^*)I(b^*)^{-\frac{1}{2}}R(b^*,b)I(b^*)^{-\frac{1}{2}}\\
&=4B_{\msq}(b).  
\end{split}
\end{equation}
\end{remark}

\begin{remark}\label{remWellDef}
Consider the LETS setting and let Condition \ref{condlemYmore} hold for $S=1$. 
Then, from $\E_{\PQ(a)}(|\frac{L(b)}{L(a)}S_i(a)S_j(a)|)=\E_{\PQ_1}(|\frac{L(b)}{L(a)^2}S_i(a)S_j(a)|)$, 
Theorem \ref{thYmore}, and Remark \ref{remCondUN}, we have $R(a,b)\in \R^{l\times l}$,$a,b \in A$, and thus also $I(b)\in \R^{l\times l}$, $b \in A$. 
Furthermore, from Theorem \ref{thDiffLp}, the above derivation leading to (\ref{ibdef}) can be carried out.
From these theorems and remark we can also move the derivatives inside the expectation in
(\ref{nabla2msq2}), and using analogous reasoning as in the proof of Theorem \ref{thDiffLp} - also in (\ref{nabla2ce}). 
\end{remark}

Consider now the ECM setting as in Section \ref{secECM}, assuming Condition \ref{condPartvm}. Then, from (\ref{Ibdef1})
\begin{equation}
I(b)=\E_{\PQ(b)}((X-\mu(b))(X-\mu(b))^T)=\Sigma(b). 
\end{equation}
Furthermore,
\begin{equation}\label{rab}
\begin{split}
R(a,b)&= \E_{\PQ_1}(\frac{L(b)}{L(a)^2}S(a)S(a)^T)\\
&= \exp(\Psi(b)-2\Psi(a))\E_{\PQ_1}(\exp((2a-b)^TX)(X-\mu(a))(X-\mu(a))^T)\\
&= \exp(\Psi(b)+\Psi(2a-b)-2\Psi(a))\E_{\PQ(2a-b)}((X-\mu(a))(X-\mu(a))^T)\\
&= \exp(\Psi(b)+\Psi(2a-b)-2\Psi(a))(\Sigma(2a-b)\\
&+(\mu(2a-b)-\mu(a))(\mu(2a-b)-\mu(a))^T).\\
\end{split}
\end{equation}
For a positive definite matrix $B\in\Sym_l(\R)$ and $b \in \R^l$, let 
$|b|_B=\sqrt{b^TBb}$. Then, $|\cdot|_B$  is a norm on $\R^l$. 
For $X \sim \ND(\mu_0, M)$ under $\PQ_1$ for some positive definite covariance matrix $M$, we have from (\ref{rab}) and discussion in Section \ref{secECM} 
\begin{equation}
R(a,b)=\exp(|a-b|_M^2)(M+M(a-b)(a-b)^TM). 
\end{equation}
In particular, for $b^*$ being 
a zero-variance IS parameter,  from (\ref{vceeq}) and (\ref{Vmsqce}) we have
\begin{equation}
V_{\ce}(b)= \exp(|b^*-b|_M^2)(M^{-1}+(b^*-b)(b^*-b)^T) = 4V_{\msq}(b), 
\end{equation}
and
\begin{equation}
B_{\msq}(b)=\frac{\msq(b^*)}{4}\exp(|b^*-b|_M^2)(I_l+M^{\frac{1}{2}}(b^*-b)(b^*-b)^TM^{\frac{1}{2}}). 
\end{equation}
Thus, the mean of $\wt{\chi^2}(B_{\msq}(b))$ is
\begin{equation}
\Tr(B_{\msq}(b)) =\frac{\msq(b^*)}{4}\exp(|b^*-b|_M^2)(l+ |b^*-b|_{M}^2).  
\end{equation}
Note that $b^*$ is the unique minimum point of $b\to\Tr(B_{\msq}(b))$. From the below remark it follows that 
for $X$ having a different distribution under $\PQ_1$ this may be not the case. 
\begin{remark}\label{remRabmin}
Consider the ECM setting for $A=\R$. Then, 
$R(a,b)= \E_{\PQ(a)}(\frac{L(b)}{L(a)}(S(a))^2)$ and 
\begin{equation}
\begin{split}
(\nabla_{b} R(a,b))_{b=a}&= -\E_{\PQ(a)}(S(a)^3)\\
 &=-\E_{\PQ(a)}((X-\mu(a))^3).\\
\end{split}
\end{equation}
From the convexity of $b \to R(a,b)$ 
(which follows from the convexity of $b \to L(b)$), 
a necessary and sufficient condition for $b=a$ to be its minimum point 
(and thus for a zero-variance IS parameter $b^*=a$ to be 
the minimum point of $b \to B_{\msq}(b)$ as in (\ref{bmsqdef})),
is that $X$ has a zero third central moment under $\PQ(a)$. 
This does not hold e.g. for $X \sim \Pois(\mu(a))$ under $\PQ(a)$ as in Section 
\ref{secECM}, for which $\E_{\PQ(a)}((X-\mu(a))^3)=\mu(a)>0$. 
\end{remark}
In the LETGS setting, assuming Condition \ref{equivCond}, that for some $b \in A$,
$G$ has $\PQ(b)$-integrable entries, and that we have (\ref{ibdef}), from (\ref{lnlG}) 
\begin{equation}
I(b)= 2\E_{\PQ(b)}(G),
\end{equation}
which from Lemma \ref{lemPosK} is positive definite only if Condition \ref{cond1} holds for $Z=1$.

\section{\label{sec3Point}An analytical example of a symmetric three-point distribution}
Let us consider ECM as in Section \ref{secECM} for $l=1$, 
assuming that $\PQ_1(X=-1)=\PQ_1(X=0)=\PQ_1(X=1)=\frac{1}{3}$. Then, we have 
$A=\R$, $\Phi(b)= \frac{e^b+e^{-b} +1}{3}$, 
$L(b)=\Phi(b)\exp(-bX)$, $\Psi(b)=\ln(\Phi(b))$, $\mu(b)=\nabla\Psi(b)=\frac{e^b-e^{-b}}{e^b+e^{-b} +1}$, and 
$\nabla^2\Psi(b)=\frac{e^b+e^{-b}+4}{(e^b+e^{-b}+1)^2}$. 
For some $d \in \R$, let $Z=\I(X=0)+d\I(X\neq 0)$. 
We have $\alpha=\frac{1+2d}{3}$ and 
\begin{equation}
\msq(b)=\E_{\PQ_1}(Z^2L(b))=\frac{1}{9}(1+e^b+e^{-b})(1+d^2(e^b+e^{-b})).
\end{equation}
The unique minimizer of $\msq$ is $0$, which corresponds to crude MC. It holds 
$\msq(0)= \frac{1+2d^2}{3}$
and
$\var(0)=\frac{1 +2d^2}{3}-(\frac{1+2d}{3})^2= \frac{2}{9}(1-d)^2$. 
Thus, there exists a zero-variance IS parameter only if $d=1$, in which case such a parameter is $b^*:=0$. 
We have $\E_{\PQ_1}(ZX)=0$, so that from (\ref{cebform}),
$\ce(b)=\alpha\Psi(b)$. Thus, if $d\neq -\frac{1}{2}$, then $\ce$ has a unique optimum point $0=b^*$,
which is a minimum point if $\alpha>0$ and a maximum point if $\alpha<0$. We have $\nabla^2 \Psi(0)=\frac{2}{3}$ and 
\begin{equation}
H_{\ce}=\nabla^2\ce(0)= \alpha\nabla^2\Psi(0)= \frac{2(1+2d)}{9}.
\end{equation}
It holds 
$\nabla L(b)=-X \exp(-bX)\Phi(b) + \exp(-bX) \frac{e^b-e^{-b}}{3}$, so that 
$\nabla L(0)=-X$.  
Thus, from (\ref{sigmacedef}), for 
\begin{equation}
g(b):= (1+e^b+e^{-b})(e^b+e^{-b}),
\end{equation}
we have
\begin{equation}
\Sigma_{\ce}(b)= \E_{\PQ_1}(Z^2L(b)X^2)=\frac{d^2}{9}g(b)
\end{equation}
and for $d \neq {-\frac{1}{2}}$
\begin{equation}
V_{\ce}(b) = H_{\ce}^{-2}\Sigma_{\ce}(b)= \left(\frac{3d}{2(1+2d)}\right)^2g(b). 
\end{equation}
We have
\begin{equation}
H_{\msq}=\nabla^2\msq(0)=\frac{2}{9}(1+5d^2),
\end{equation}
from (\ref{sigmamsq})
\begin{equation}
\Sigma_{\msq}(b)=\E_{\PQ_1}(L(b)Z^4X^2)=\frac{d^4}{9}g(b),
\end{equation}
and thus 
\begin{equation}
V_{\msq}(b)=H_{\msq}^{-2} \Sigma_{\msq}(b)=
\left(\frac{3d^2}{2(1+5d^2)}\right)^2g(b).
\end{equation}
From (\ref{sigmamsq2})
\begin{equation}
\Sigma_{\msq2}(b)=\E_{\PQ_1}((Z^2 -\msq(0))^2L(b)X^2)
=\frac{1}{3^4}(d^2 -1)^2g(b),
\end{equation}
so that
\begin{equation}
V_{\msq2}(b)= H_{\msq}^{-2} \Sigma_{\msq2}(b)=  \left(\frac{d^2 -1}{2(1+5d^2)}\right)^2g(b).
\end{equation}
From (\ref{sigmaicc1})
\begin{equation}
\Sigma_{\ic}(b)=\E_{\PQ_1}((Z^2 -\msq(0)-\var(0))^2L(b)X^2)
=\frac{1}{3^6}((d-1)(d+5))^2g(b),
\end{equation}
and thus
\begin{equation} 
V_{\ic}(b)=H_{\msq}^{-2} \Sigma_{\ic}(b)=\left(\frac{(d-1)(d+5)}{6(1+5d^2)}\right)^2g(b).
\end{equation}
For $s$ substituted by $\ce$, $\msq$, $\msq2$, and $\ic$, 
let us further write $V_s(b,d)$ rather than $V_s(b)$ to mark its dependence on $d$.
Let for such different $s$, $f_s(d)=2(1+5d^2)\sqrt{\frac{V_s(b,d)}{g(b)}}$, so that 
$f_{\ce}(d)=|\frac{3d(1+5d^2)}{1+2d}|$, 
$f_{\msq}(d)=3d^2$,  $f_{\msq2}(d)=|d^2-1|$, 
and $f_{\ic}(d)=\frac{1}{3}|(d-1)(d+5)|$. 
For different substitutions of $s_1$ and $s_2$ as for $s$ above, 
it holds $\frac{V_{s_1}}{V_{s_2}}(b,d)=(\frac{f_{s_1}}{f_{s_2}}(d))^2$ whenever $f_{s_2}(d)\neq 0$. 
The graphs of functions $f_s$ for different $s$ are shown in Figure \ref{figCompfs}. 
Note that $b\to g(b)$ is positive and has a unique minimum point in $0=b^*$. Thus, if $f_s(d)\neq 0$, then 
$b\to V_s(b,d)=g(b)(\frac{f_s(d)}{2(1+5d^2)})^2$ also has a unique minimum point in $b^*$. 
It holds $f_{\ce}(0)=f_{\msq}(0)=0$. For $d \in \R \setminus \{-\frac{1}{2},0\}$, let
$r(d)=\frac{f_{\ce}}{f_{\msq}}(d)= |\frac{1+5d^2}{d(1+2d)}|$. One can easily show that this function assumes a unique minimum in 
$d=1$, in which $r(d)=2$. 
In particular, for $d \in \R\setminus\{-\frac{1}{2}\}$ we have
$f_{\ce}(d)\geq 2f_{\msq}(d)$ and thus $V_{\ce}(b,d)\geq 4V_{\msq}(b,d)$, $b \in \R$, with equalities holding only for $d$ equal 
to $0$ and $1$, the latter being in 
agreement with the theory in Section \ref{secAsympProp} since for $d=1$, $b^*$ is a zero-variance IS parameter. 
From an easy calculation we receive that $f_{s_1}<f_{s_2}$ on $D$, $f_{s_1}=f_{s_2}$ on $\partial D$, and 
$f_{s_1}>f_{s_2}$ otherwise for $s_1=\msq$, $s_2=\msq2$, and $D=(-\frac{1}{2},\frac{1}{2})$, 
$s_1=\msq$, $s_2=\ic$, and $D=(\frac{-2-3\sqrt{2}}{10},\frac{-2+3\sqrt{2}}{10})$, as well as 
$s_1=\msq2$, $s_2=\ic$, and $D=(-2,1)$. 
Since for $s$ equal to $\msq2$ or $\ic$,  $f_s$ is continuous and $f_s(0)>0$, 
such $f_s$ are higher than $f_{\ce}$ (and thus also than $f_{\msq}$) on some neighbourhood of $0$. 
Note that in agreement with the theory in Section \ref{secAsympProp}, 
for $d=1$, for which $b^*$ is a zero-variance IS parameter, for $s=\ic$, $V_s(b,d)=0$, $b \in \R$, 
and for $s=\msq2$,  this holds also for $d=-1$, for which $b^*$ is an optimal-variance IS parameter. 
For $B_s(b,d)=\frac{1}{2}V_s(b,d)H_{\msq}$ for $s$ equal to $\msq$, $\msq2$, or $\ic$, and for 
$B_{\ce,\msq}(b,d)=\frac{1}{2}V_{\ce}(b,d)H_{\msq}$ for $s=\ce$, 
we have analogous relations as for the different $V_s(b,d)$ above. 
\begin{figure}[h]
\includegraphics[width=0.6\textwidth]{./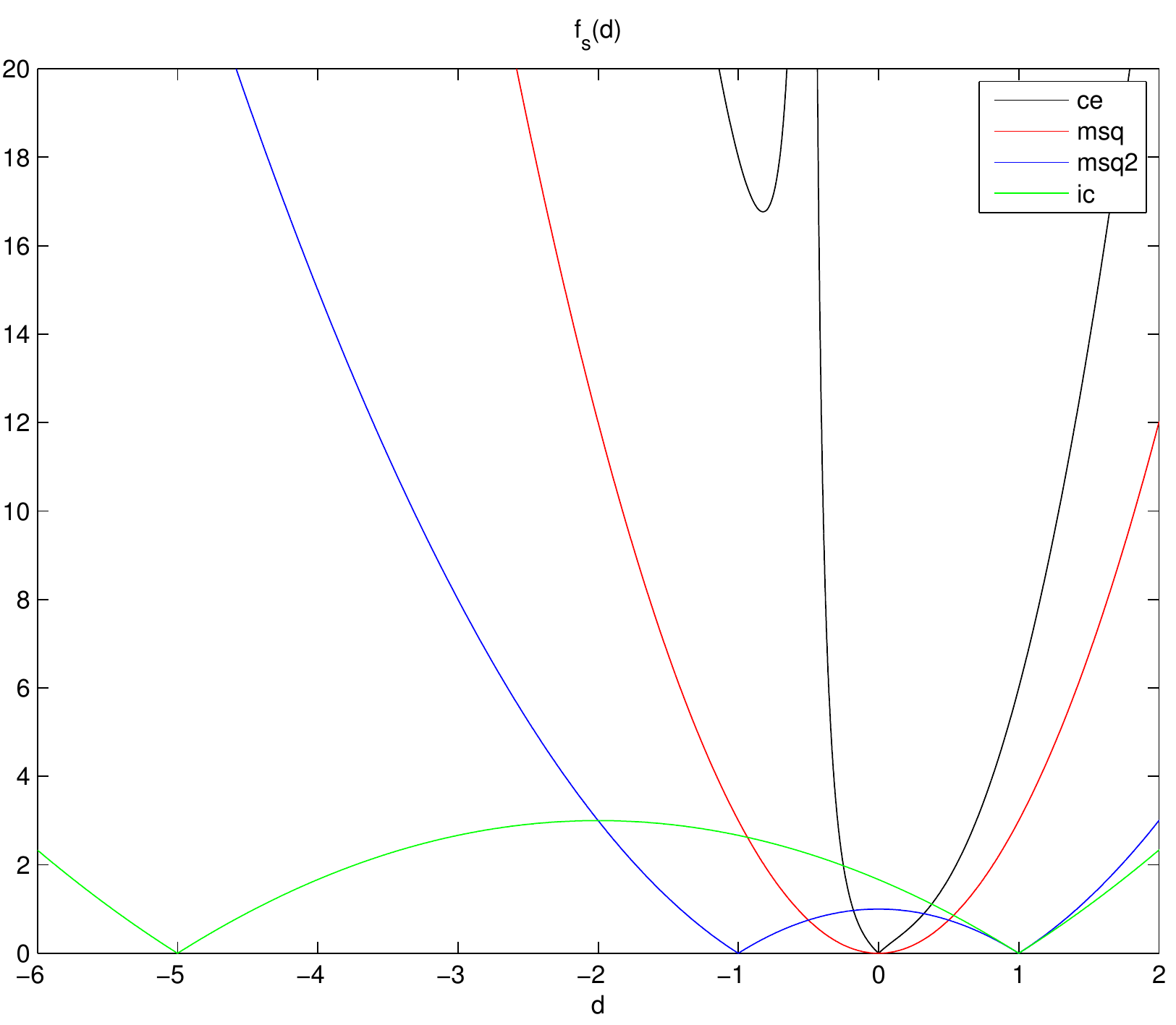}
\caption{Functions $f_s$ as in the main text for $s$ equal to $\ce$, $\msq$, $\msq2$, and $\ic$.}
\label{figCompfs}
\end{figure}

\chapter{\label{secTwo}Two-stage estimation} 
In this chapter we shortly discuss a two-stage adaptive method for estimation, in the first stage of which 
some parameter vector $d$ is computed, with the help of which in the second stage an estimator of the quantity of interest $\alpha \in \R$ 
is calculated. 
For a nonempty parameter set $A \in \mc{B}(\R^l)$ and a family of distributions as in Section \ref{secFamily}, 
let a measurable function $h:\mc{S}(A) \otimes \mc{S}_1 \to \mc{S}(\overline{\R})$ be such that for each $b \in A$, 
\begin{equation} 
\E_{\PQ(b)}(h(b,\cdot))=\alpha.
\end{equation} 
Let $\var(b)= \Var_{\PQ(b)}(h(b,\cdot))$, $b \in A$.
Consider some cost variable $C$ and functions $c$ and $\ic$ as in Section \ref{secCoeffDiv}, 
but in the definition of $\ic$ using the above $\var$ (in particular, for $\ic$ we assume Condition \ref{condicwelldef}). 
In the IS case conditions \ref{condpqbllpq1} and \ref{condhzlb} hold, 
but the above setting is more general and 
can also describe e.g. control variates or control variates in conjunction with IS. 
The parameter vector $d$ computed in the first stage is an $A$-valued random variable, 
which can be obtained e.g. using some adaptive algorithm like 
single- or multi-stage minimization method of some estimators described in the previous sections. This can be done in many different 
ways as discussed in the below remark. 
\begin{remark}\label{remdk2} 
One possibility is to use as $d$ some parameter $d_t$ corresponding to some first-stage budget $t$ as in Remark \ref{remdk}. 
Alternatively, in methods in which to compute some parameter 
$p_{k}$ one first needs to compute $p_l$, $l=1,\ldots,k-1$, 
like in the MSM methods from the previous sections for $p_k$ as in Remark \ref{remdk}, one can 
set $d=p_{\tau}\I(\tau \neq \infty) +p_\infty\I(\tau=\infty)$ for some $p_\infty \in A$ 
and some a.s. finite  $\N_+\cup\{\infty\}$-valued stopping time 
$\tau$  for the filtration $\mc{F}_n=\sigma\{p_l:l\leq n\}$, $n \in \N_+$. For instance, if a.s. $p_k \to b^*\in A$, 
then $\tau$ can be the moment when the change of $p_k$ from $p_{k-1}$, or such a relative change if $b^*\neq 0$, becomes smaller than some $\epsilon\in \R_+$ 
(see e.g. Remark 10 in \cite{pupa2011}). 
For the various $p_k$ from MSM methods as in Remark \ref{remdk}, 
the fact that a.s. $p_k \to b^*$ 
follows from Condition \ref{condESSM1} or its counterparts.
When for some functions $f_k:\mc{S}(A) \otimes (\Omega, \mc{F}) \to \mc{S}(\R)$, $k \in \N_+$, (which can be e.g. the minimized estimators) and 
$f:A \to \R$ we have a.s. $f_k(p_k)\to f(b^*)$, then such a stopping time can be based on the behaviour of 
$f_k(p_k)$, similarly as for $p_k$ above. 
Assuming that a.s. $f_k \overset{loc}{\rightrightarrows} f$ (see Section \ref{secUnifEst} for sufficient assumptions), if a.s. $p_k \to b^*$, 
then from Lemma \ref{lemConvUnif} a.s. $f_k(p_k) \to f(b^*)$. 
\end{remark}

One way to model the second stage of a two-stage estimation method is to consider it on a different probability space than the first one,
for a fixed computed value $v$ of the variable $d$ from the first stage. 
In such a case, for some $\kappa_1,\kappa_2,\ldots,$ i.i.d. $\sim  \PQ(v)$,
in the second stage we perform an MC procedure as in Chapter \ref{secIneff} but using $Z_i=h(v,\kappa_i)$, $i \in \N_+$,
e.g. using a fixed number of samples or a fixed approximate budget. 
We can also construct asymptotic confidence intervals for $\alpha$ as in that chapter.
From the discussion in Chapter \ref{secIneff}, the inefficiency of such a procedure 
can be quantified using the inefficiency constant $\ic(v)$. This justifies comparing the asymptotic efficiency of methods for finding the
adaptive parameters in the first stage of a two-stage method as above
by comparing their first- and, if applicable, second-order asymptotic efficiency for the minimization 
of $\ic$ as discussed in sections \ref{secCompFirst} and \ref{secSecond}. 
 
An alternative way to model the second stage of a two-stage method is to consider its second stage on the 
same probability space as the first one. Let us assume the following condition. 
 \begin{condition}\label{condKappaa}
 Random variables  $\phi_i$, $i \in \N_+$, are conditionally independent given $d$ and have the same conditional distribution $\PQ(b)$ given $d=b$. 
 \end{condition} 
 Condition \ref{condKappaa} is implied by the following one. 
 \begin{condition}\label{condabeta} 
 Condition \ref{condxi} holds, and for $\beta_{1}, \beta_2, \ldots$, i.i.d. $\sim\PR_1$ and independent of $d$ 
 we have $(\phi_i)_{i\in\N_+}=(\xi(d,\beta_i))_{i\in \N_+}$. 
 \end{condition}
In the second stage of the considered method one computes an estimator 
 \begin{equation}\label{alphaN} 
  \wh{\alpha}_n=\frac{1}{n}\sum_{i=1}^{n}h(d,\phi_i). 
 \end{equation}
Similarly as above, the number $n$ of samples can be deterministic or random. In the first case
the resulting estimator is unbiased, while in the second this needs not to be true. 
Random $n$ can correspond e.g. to a fixed approximate computational budget and be given by definition (\ref{nt1}) or (\ref{nt2}) but for  $C_i$ replaced by 
$C(\phi_i)$, $i \in \N_+$. 

 \begin{remark}\label{remAlt} 
 A possible alternative to the above discussed two-stage estimation method is the same as its second model 
 above except that for the computation of $\wh{\alpha}_n$ in the second stage one uses the variables $\phi_i=\xi(d,\beta_i)$ as in Condition 
 \ref{condabeta} but without assuming that $\beta_i$, $i \in \N_+$, are independent of $d$.  
 In such a case, Condition \ref{condKappaa} may not hold. For example, one could reuse the
i.i.d. random  variables with distribution $\PR_1$ generated for the estimation of $d$ in the first stage 
as some (potentially all) the variables $\beta_i$ used for the computation of $\wh{\alpha}_n$, which could save the computation time. 
Under appropriate identifications, such an approach using exactly the same $\beta_i$, $i=1,\ldots,n$ in ESSM to compute $d$ and then (\ref{alphaN}) 
is used in the multiple control variates method (see \cite{asmussen2007stochastic,Szechtman2001}),
while for IS it was considered in \cite{Jourdain2009}. 
In such a reusing approach, $\wh{\alpha}_n$ as in (\ref{alphaN}) needs not to be unbiased even for $n$ deterministic. 
Furthermore, one needs to store a potentially large random number of the generated values of random variables, 
which may be more difficult to implement and requires additional computer memory. 
Finally, in a number of situations, like in the case of our numerical experiments, 
generating the required parts of the variables $\beta_i$  forms only a small fraction of the computation time 
needed for computing the variables $h(d,\phi_i)=(ZL(d))(\xi(d,\beta_i))$ in the second stage, 
so that reusing some $\beta_i$ from the first stage would not lead to considerable time savings. 
\end{remark}

\chapter{\label{secNumExp}Numerical experiments} 
Our numerical experiments were carried out using programs written in matlab2012a
and run on a laptop. 
Unless stated otherwise, we used the simulation parameters, variables, and IS basis functions as 
for the problems of estimation of the expectations $\mgf(x_0)$, $q_{1,a}(x_0)$, $q_{2,a}(x_0)$, and $p_{T,a}(x_0)$ 
in Section \ref{secSpecNumExp}. In some of our experiments we performed 
the single- or multi-stage minimization of 
estimators 
$\wh{\est}$ (where for short we write $\wh{\est}$ rather 
than $\wh{\est}_n$, $n \in \N_p$, for appropriate $p$) equal to 
$\wh{\ce}$, $\wh{\msq}$, $\wh{\msq2}$, and $\wh{\ic}$, 
as discussed in Section \ref{secSimpleMin}. 
In the MSM we used in each case $b_0=0$. 
For the minimization of $\wh{\msq2}$ and $\wh{\ic}$ in these methods we used the 
matlab fminunc unconstrained minimization function 
with the default settings and exact gradients, for $\wh{\msq}$ 
additionally using their exact Hessians,  
as discussed Section \ref{secSimpleMin}. 
The minimum points of $\wh{\ce}$ were found by solving the linear systems of equations 
as in that section. 
Both for the crude MC (CMC) and when using IS, the computation times 
of the MC replicates in our experiments were typically approximately proportional 
to the replicates of the exit times $\tau$ for the MGF and translated committors, and 
to the replicates of $\tau'$ as in (\ref{taup}) for $p_{T,a}(x_0)$. 
Thus, we consider the theoretical 
cost variables $C$ equal to  $h\tau$ for the MGF and translated committors and to 
$h\tau'$ for $p_{T,a}(x_0)$. 
The proportionality constants $p_{\dot{C}}$ of the replicates of such $C$ to the simulation times 
as in  Chapter \ref{secIneff}) were different for CMC and when using different 
basis functions in IS. 

The remainder of this chapter is organized as follows. 
In Section \ref{secStat} we discuss some methods for 
testing statistical hypotheses, which are later 
used for interpreting the
results of our numerical experiments.
In Section \ref{secExpEst} we describe two-stage estimation experiments 
as in Section \ref{secSimpleMin}, performing MSM 
in the first stages and in the second stages estimating the expectations of the
functionals of the Euler scheme as above. In the second stages we also 
estimated some other quantities, like 
inefficiency constants, variances, mean costs, and the proportionality constants $p_{\dot{C}}$ as above. 
We use these quantities to compare the 
efficiency of applying in a IS MC method the IS parameters obtained from the MSM of different estimators, as well as 
of using different added constants $a$ and IS basis functions
in such adaptive IS procedures.  
In Section \ref{secSpread} we compare the spread of the IS drifts coming from the SSM 
of different estimators and using different parameters $b'$. 
In Section \ref{secIntu} we provide some intuitions behind 
the results of our numerical experiments. 

\section{Testing statistical hypotheses}\label{secStat}
Let $\mu_X, \mu_Y \in \R$, $\sigma_X, \sigma_Y \in \R_+$, and 
$\R$-valued random variables $X_n$ and $Y_n$, $n \in \N_+$, be such that 
$\wt{X}_n:=\sqrt{n}(X_n-\mu_X) \Rightarrow \ND(0,\sigma_X^2)$, $\wt{Y}_n:=\sqrt{n}(Y_n-\mu_Y) \Rightarrow \ND(0,\sigma_Y^2)$, 
and for each $j,k \in \N_+$, $X_j$ is independent of $Y_k$. 
Let $\wh{\sigma}_{X,n}$ and $\wh{\sigma}_{Y,n}$, $n \in \N_+$, be $[0,\infty)$-valued random variables such that 
$\wh{\sigma}_{X,n}\overset{p}{\to} \sigma_X$ and $\wh{\sigma}_{Y,n}$ $\overset{p}{\to}\sigma_Y$. Let
$a_n,b_n \in \N_+$, $n\in \N_+$, be such that $\lim_{n\to\infty}a_n=\lim_{n\to\infty}b_n=\infty$ and
\begin{equation}
\lim_{n\to \infty} \frac{a_{n}}{b_{n}}=\rho \in \R_+.
\end{equation}
Let 
\begin{equation}
t_n=\frac{X_{a_n}-Y_{b_n}}{\sqrt{\frac{\wh{\sigma}_{X,a_n}^2}{a_n}+\frac{\wh{\sigma}_{Y,b_n}^2}{b_n}}}
=\frac{\sqrt{a_n}(X_{a_n}-Y_{b_n})}{\sqrt{\wh{\sigma}_{X,a_n}^2+\frac{a_n}{b_n}\wh{\sigma}_{Y,b_n}^2}}\\ 
\end{equation}
and $H_n=\frac{\sqrt{a_n}(\mu_Y-\mu_X)}{\sqrt{\wh{\sigma}_{X,a_n}^2+\frac{a_n}{b_n}\wh{\sigma}_{Y,b_n}^2}}$.

\begin{lemma}
Under the assumptions as above, we have
\begin{equation}\label{equhypho}
t_n+ H_n= \frac{\wt{X}_{a_n}-\sqrt{\frac{a_n}{b_n}}\wt{Y}_{b_n}}{\sqrt{\wh{\sigma}_{X,a_n}^2+\frac{a_n}{b_n}\wh{\sigma}_{Y,b_n}^2}}\Rightarrow \ND(0,1). 
\end{equation}
\end{lemma}
\begin{proof}
From the asymptotic properties of  $\wt{X}_i$ and $\wt{Y}_j$ as above and their independence, 
we receive,  e.g. using Fubini's theorem and 
the fact that convergence in distribution is equivalent to the pointwise convergence of characteristic functions, that
\begin{equation}
\wt{X}_{a_n} - \sqrt{\rho} \wt{Y}_{b_n}\Rightarrow \ND(0,\sigma_X^2+\rho\sigma_Y^2).
\end{equation}
Thus, from  $G_n:=\wt{Y}_{b_n}(\sqrt{\rho} - \sqrt{\frac{a_n}{b_n}}) \overset{p}{\to} 0$,
\begin{equation}\label{wtXYto}
\wt{X}_{a_n}-\sqrt{\frac{a_n}{b_n}}\wt{Y}_{b_n} = \wt{X}_{a_n}-\sqrt{\rho}\wt{Y}_{b_n}
+G_n\Rightarrow \ND(0,\sigma_X^2+\rho\sigma_Y^2).  
\end{equation}
Furthermore, from the continuous mapping theorem, 
\begin{equation}\label{overpsigma}
\sqrt{\wh{\sigma}_{X,a_n}^2+\wh{\sigma}_{Y,b_n}^2\frac{a_n}{b_n}}\overset{p}{\to}\sqrt{\sigma_X^2+\rho\sigma_Y^2}.
\end{equation}
Now, (\ref{equhypho}) follows from (\ref{wtXYto}), (\ref{overpsigma}), and Slutsky's lemma. 
\end{proof}
If $\mu_X\leq \mu_Y$, then from (\ref{equhypho}) and $H_n \geq 0$, $n \in \N_+$,
for each $\alpha \in (0,1)$ and $z_{1-\alpha}$ as in Remark \ref{remConfIC},
\begin{equation}
\limsup_{n\to \infty}\PR(t_n > z_{1-\alpha}) 
\leq \lim_{n\to \infty}\PR(t_n + H_n > z_{1-\alpha}) = \alpha, 
\end{equation}
i.e. the tests of the null hypothesis $\mu_X\leq \mu_Y$ with the regions of rejection $t_n > z_{1-\alpha}$, $n \in \N_+$, 
are pointwise asymptotically level $\alpha$ (see Definition 11.1.1 in \cite{lehmann2005testing}). 
We shall further use such tests for $z_{1-\alpha}=3$, so that $\alpha \approx 0.00270$. 
If for some selected $n$ we have $t_n \geq 3$, i.e. the null hypothesis as above can be rejected, then we shall informally say that the estimate 
$X_{a_n}\pm \frac{\wh{\sigma}_{X,a_n}}{\sqrt{a_n}}$ of $\mu_X$ is (statistically significantly) higher than such an estimate
$Y_{b_n}\pm \frac{\wh{\sigma}_{Y,b_n}}{\sqrt{b_n}}$ of $\mu_Y$. 

Most frequently, for some i.i.d. square integrable random variables $X_i'$, $i \in \N_+$, and 
such variables $Y_i'$, $i \in \N_+$, independent of $X_j'$, in such tests 
we shall use $X_n=\frac{1}{n}\sum_{i=1}^n X_i'$ and $\wh{\sigma}_{X,n}=\sqrt{\frac{1}{n-1}\sum_{i=1}^n(X_i'-X_n)^2}$,
and analogously for $Y_n$ and $\wh{\sigma}_{Y,n}$. In such a case $\frac{\wh{\sigma}_{X,n}}{\sqrt{n}}$ shall be called an 
estimate of the standard deviation of the mean $X_n$. 

\section{\label{secExpEst}Estimation experiments} 
We first performed $k$-stage minimization methods 
of the different estimators and for the different estimation problems, 
using $n_i=50\cdot 2^{i-1}$ samples in the $i$th stage 
for $i=1,\ldots,k$ for various $k\in \N_+$ (see Section \ref{secSimpleMin}). 
We chose $k=3$ for the problem of estimating $q_{1,a}(x_0)$, 
$k=5$ for $\mgf(x_0)$, and $k=6$ 
for $p_{T,a}(x_0)$ and $q_{2,a}(x_0)$. 
We first used $a=0.05$ and $M=10$ time-independent IS basis functions as in (\ref{riexp}). 
For $i=1,2,\ldots, 6$, the IS drifts $r(b_i)$ corresponding to the minimization results $b_i$ from the $i$th 
stage of the MSM of different estimators 
for estimating the translated committor $q_{2,a}(x_0)$ are shown in Figure \ref{fig3WAll}. 
The IS 
drifts corresponding to the final results of 
MSM for the estimation of all the expectations are shown in Figure \ref{fig3WOpt}. 
In figures \ref{fig3WAll} and \ref{fig3WOpt} we also show for comparison approximations 
of the zero-variance IS drifts $r^*$ for the diffusion problems 
for the translated committors and MGF, computed 
from formula (\ref{rstar2}) using finite differences instead of derivatives and finite difference approximations of $u$ 
in that formula computed as in Section \ref{secSpecNumExp}. 
In Figure \ref{fig3WAll}, the IS drifts from the consecutive stages of the MSM of 
$\wh{\msq2}$ and $\wh{\ic}$ seem to converge the fastest to some limiting drift 
close to (the approximation of) $r^*$, from the MSM of $\wh{\msq}$ --- slower, and of $\wh{\ce}$ --- the slowest. 
See Section \ref{secIntu} for some intuitions behind these results. 

\begin{figure}[h]%
\centering
\subfloat[]{\includegraphics[width=0.45\textwidth]{./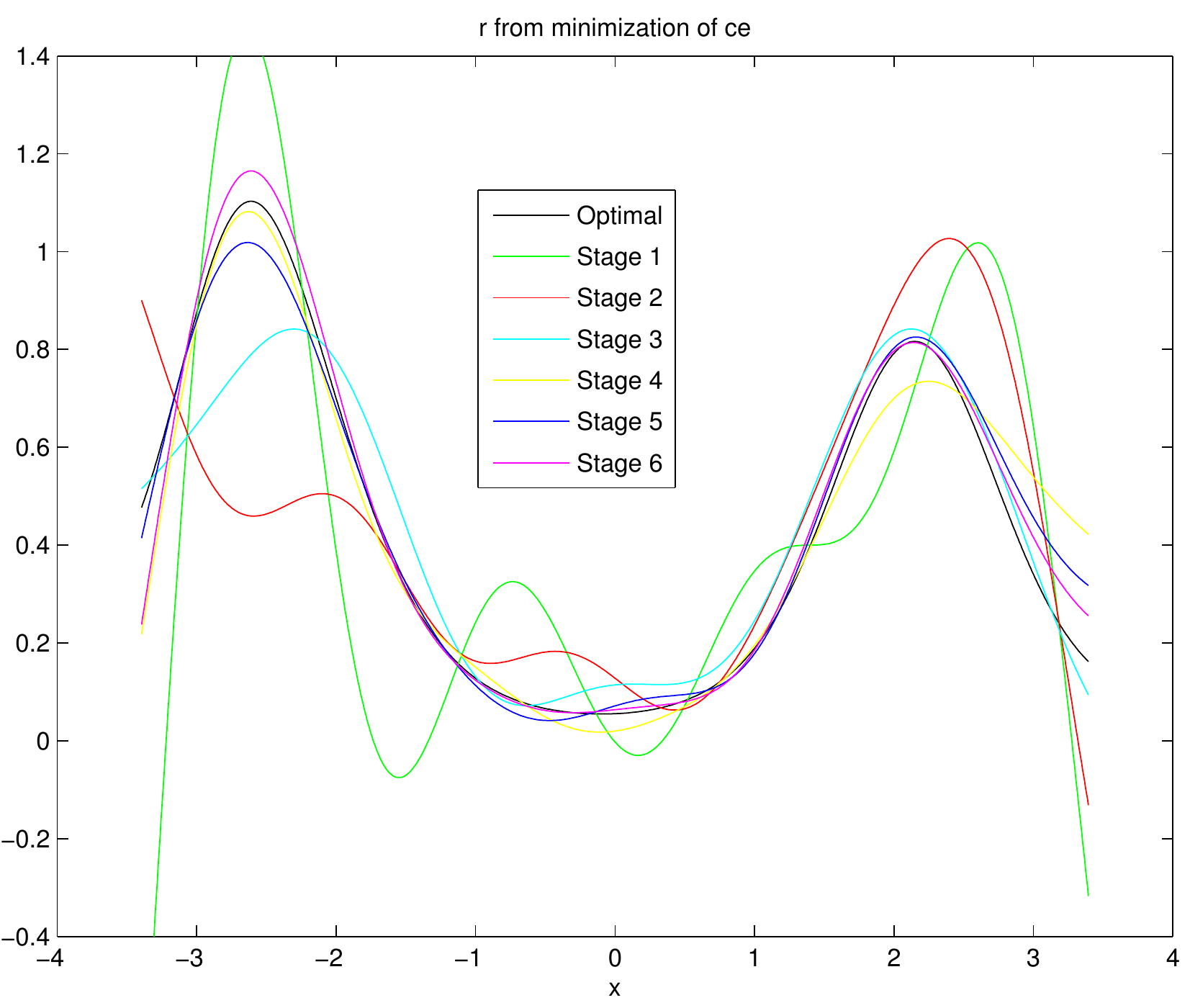}}
\qquad  
\subfloat[]{\includegraphics[width=0.45\textwidth]{./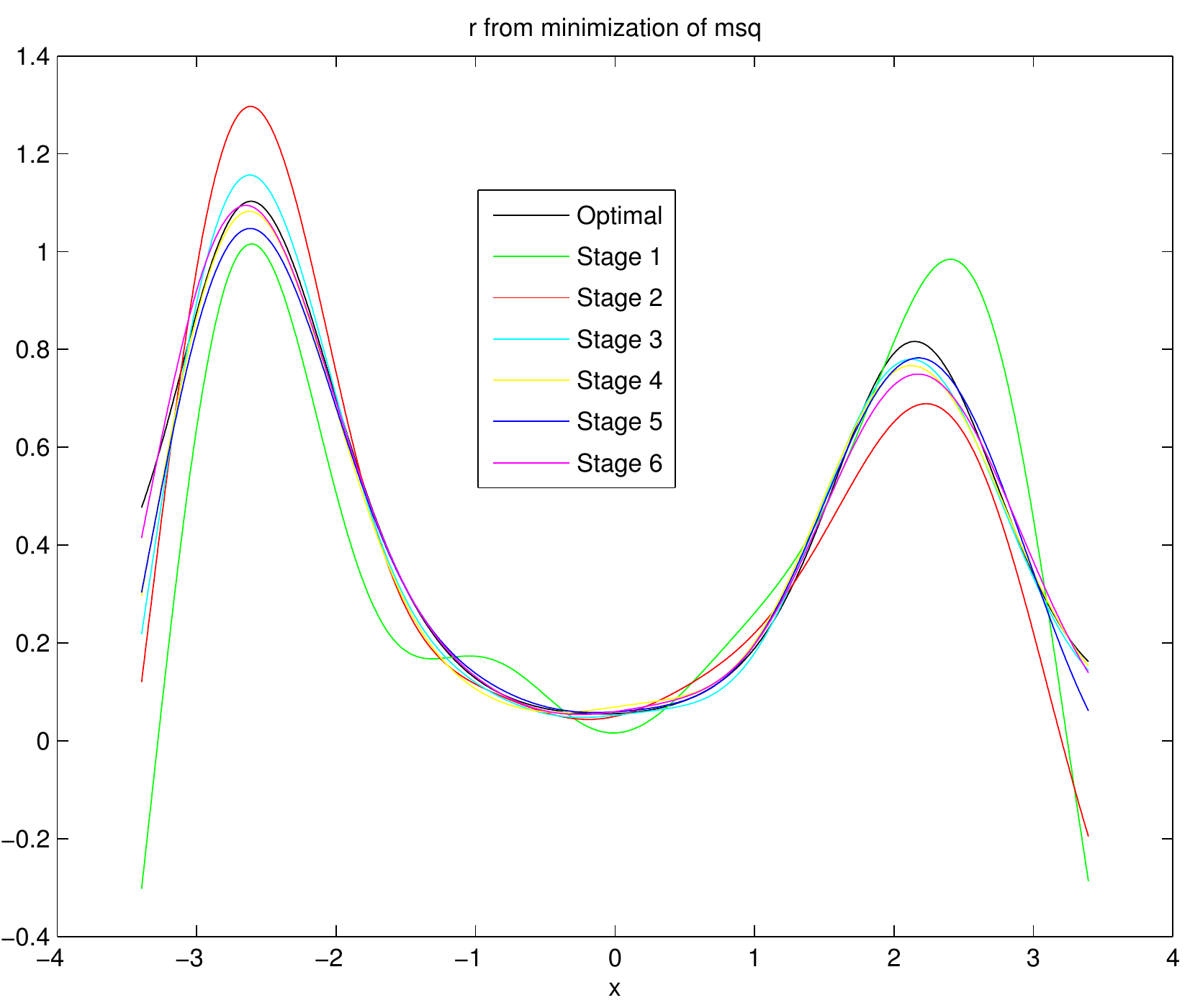}} 
\\
\subfloat[]{\includegraphics[width=0.45\textwidth]{./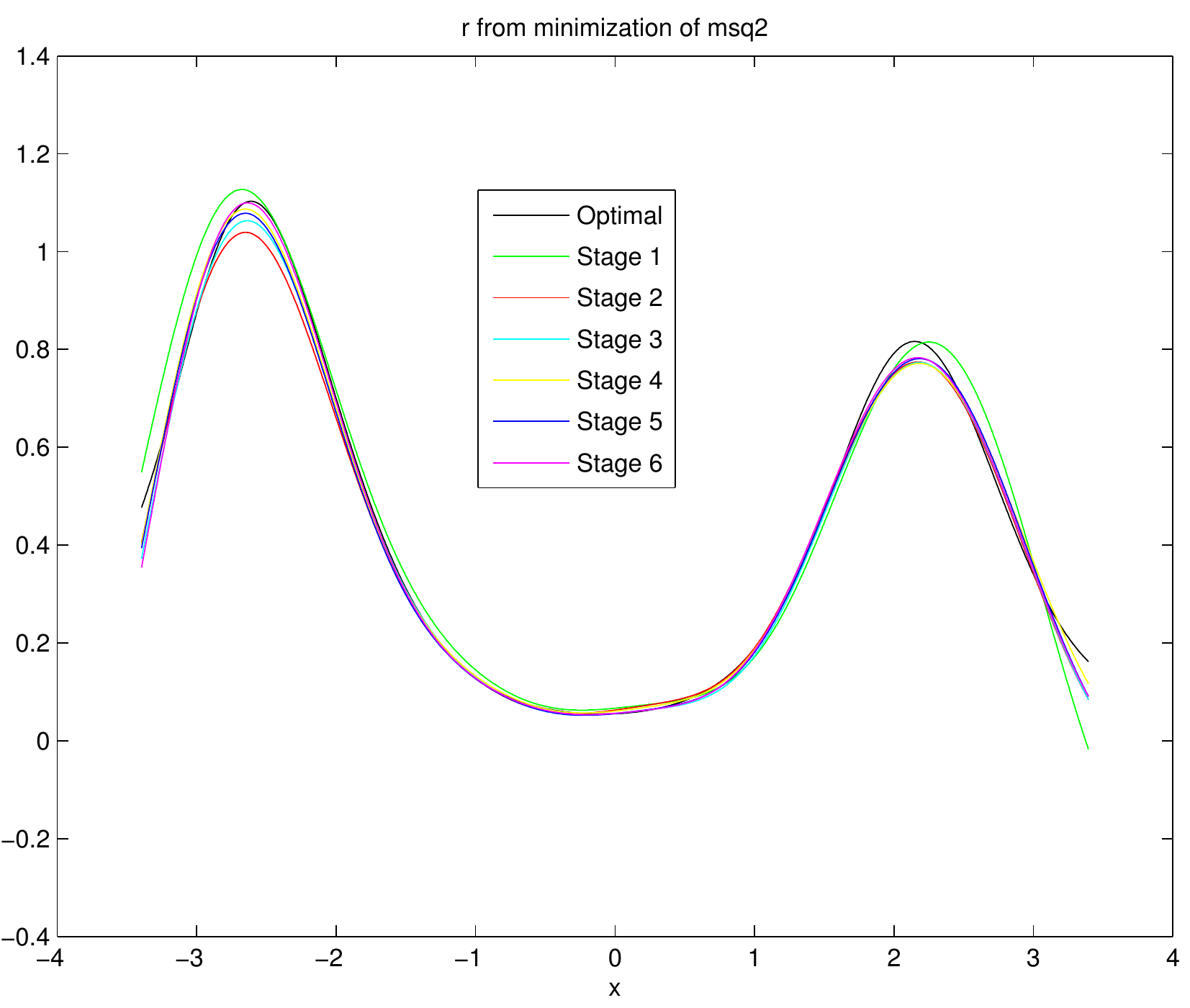}}
\qquad  
\subfloat[]{\includegraphics[width=0.45\textwidth]{./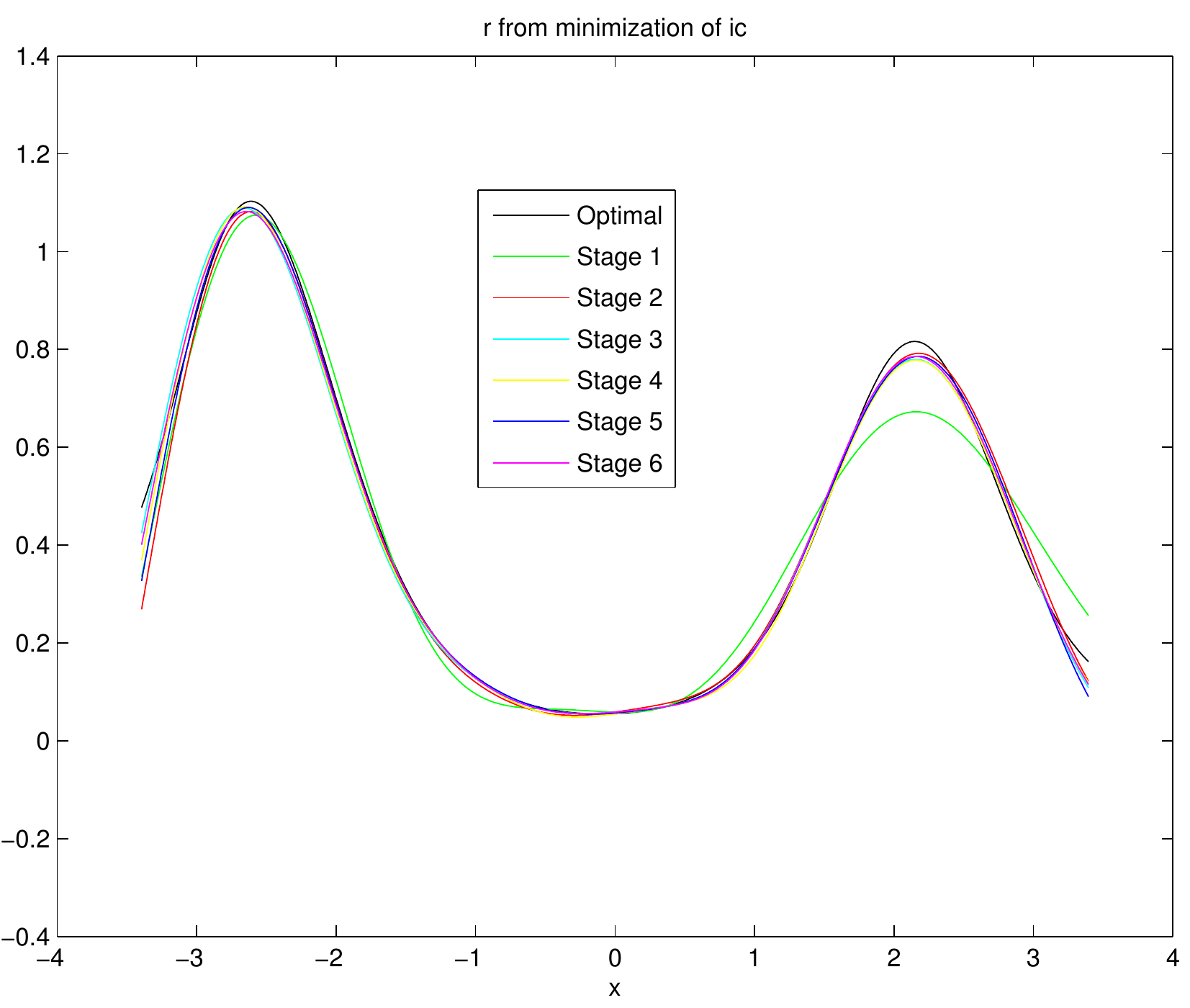}}

\caption{\label{fig3WAll} IS drifts from different stages of MSM for estimating $q_{2,a}$, minimizing $\wh{\ce}$ in (a),
$\wh{\msq}$ in (b), $\wh{\msq2}$ in (c), and $\wh{\ic}$ in (d). 'Optimal' denotes an approximation of the zero-variance IS drift $r^*$.}
\end{figure}

\begin{figure}[h]%
\centering
\subfloat[]{\includegraphics[width=0.45\textwidth]{./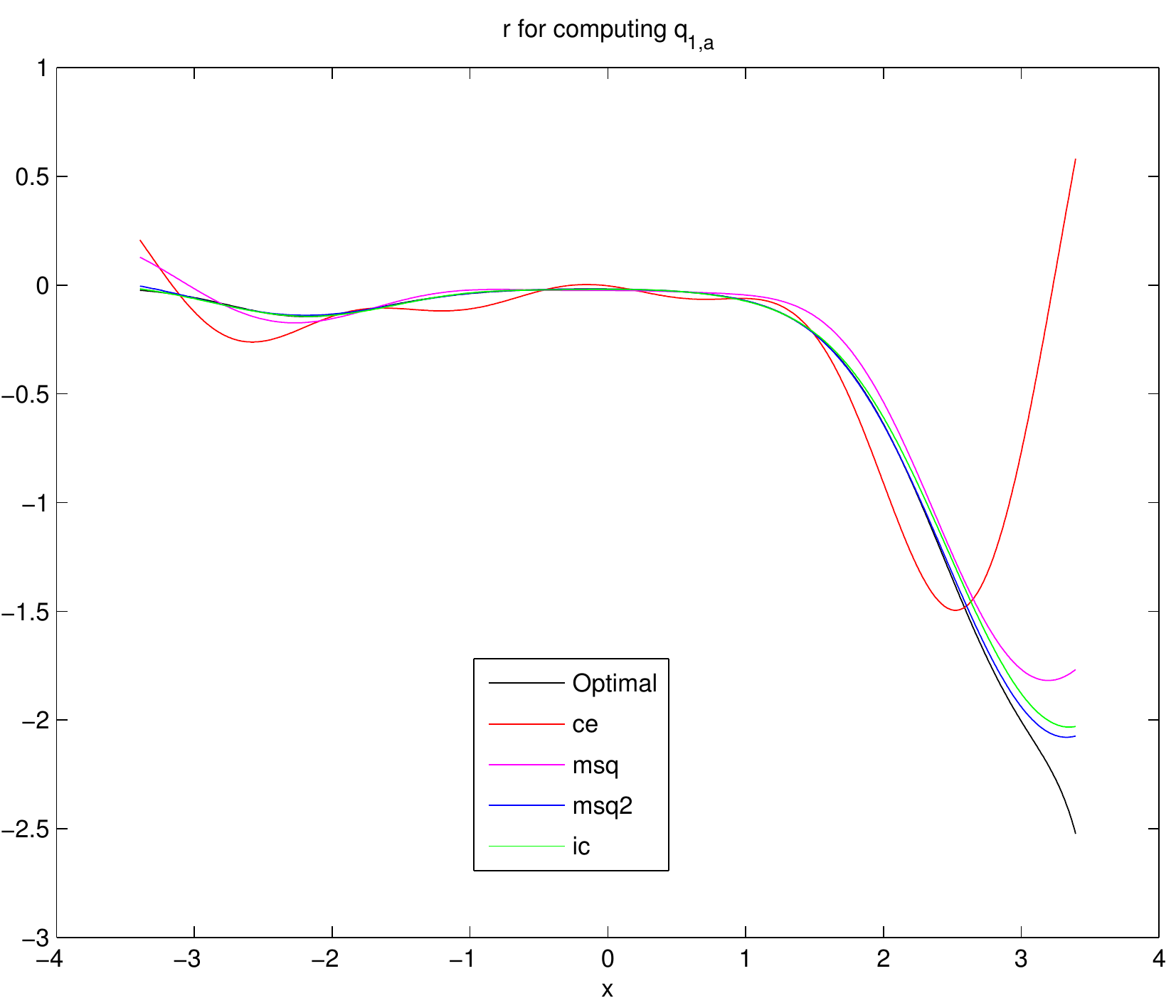}}
\qquad  
\subfloat[]{\includegraphics[width=0.45\textwidth]{./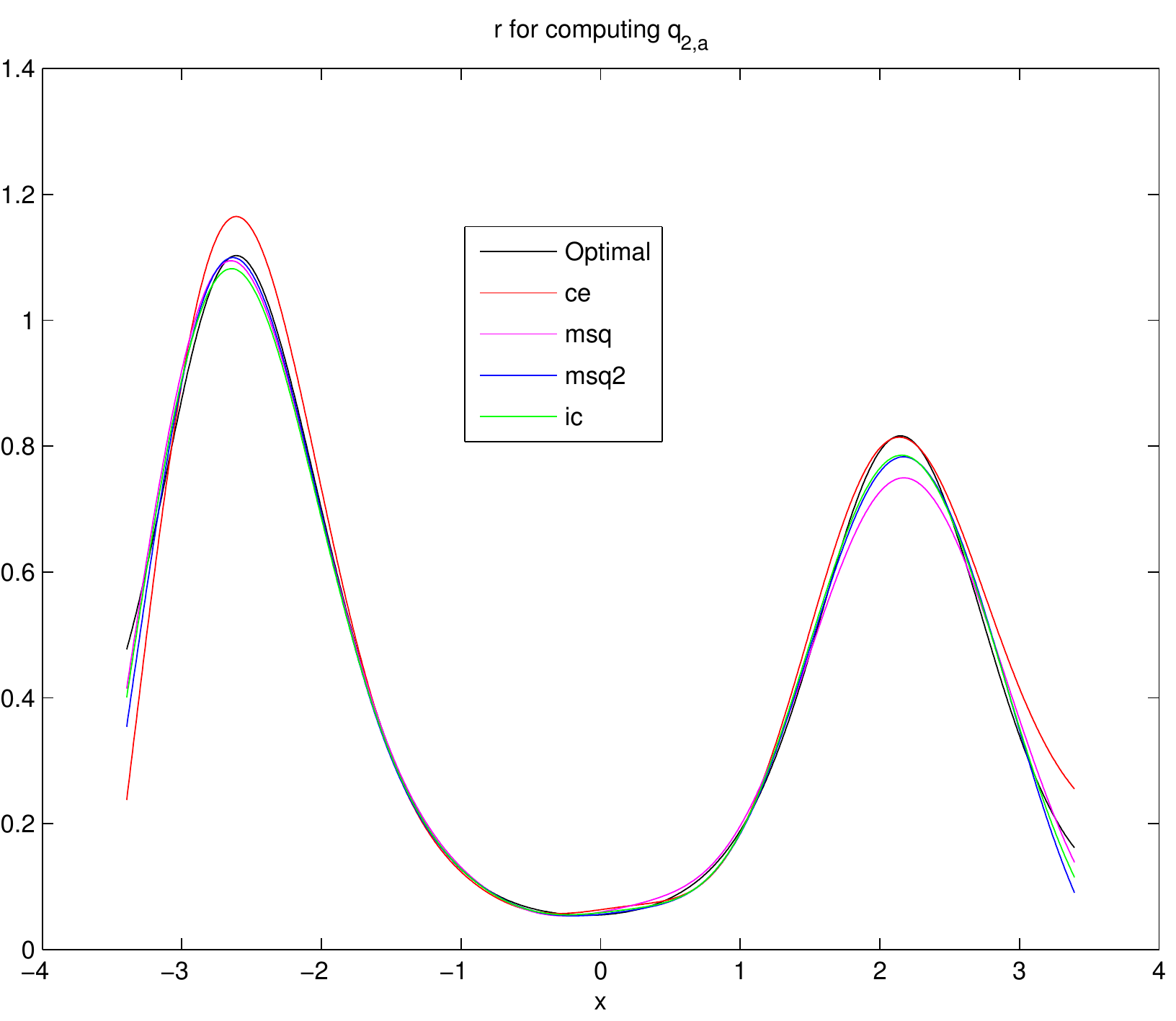}} 
\\
\subfloat[]{\includegraphics[width=0.45\textwidth]{./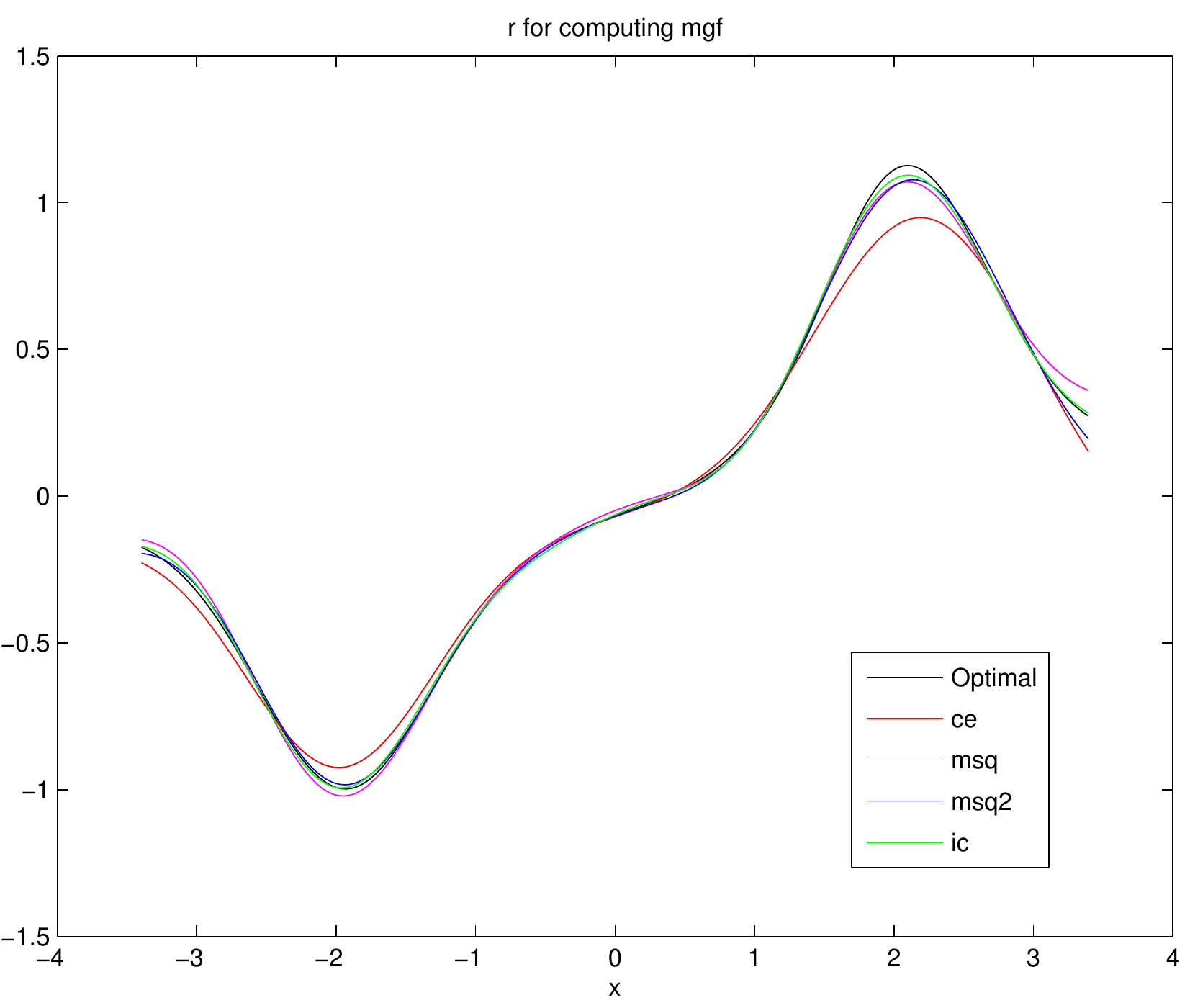}}
\qquad  
\subfloat[]{\includegraphics[width=0.45\textwidth]{./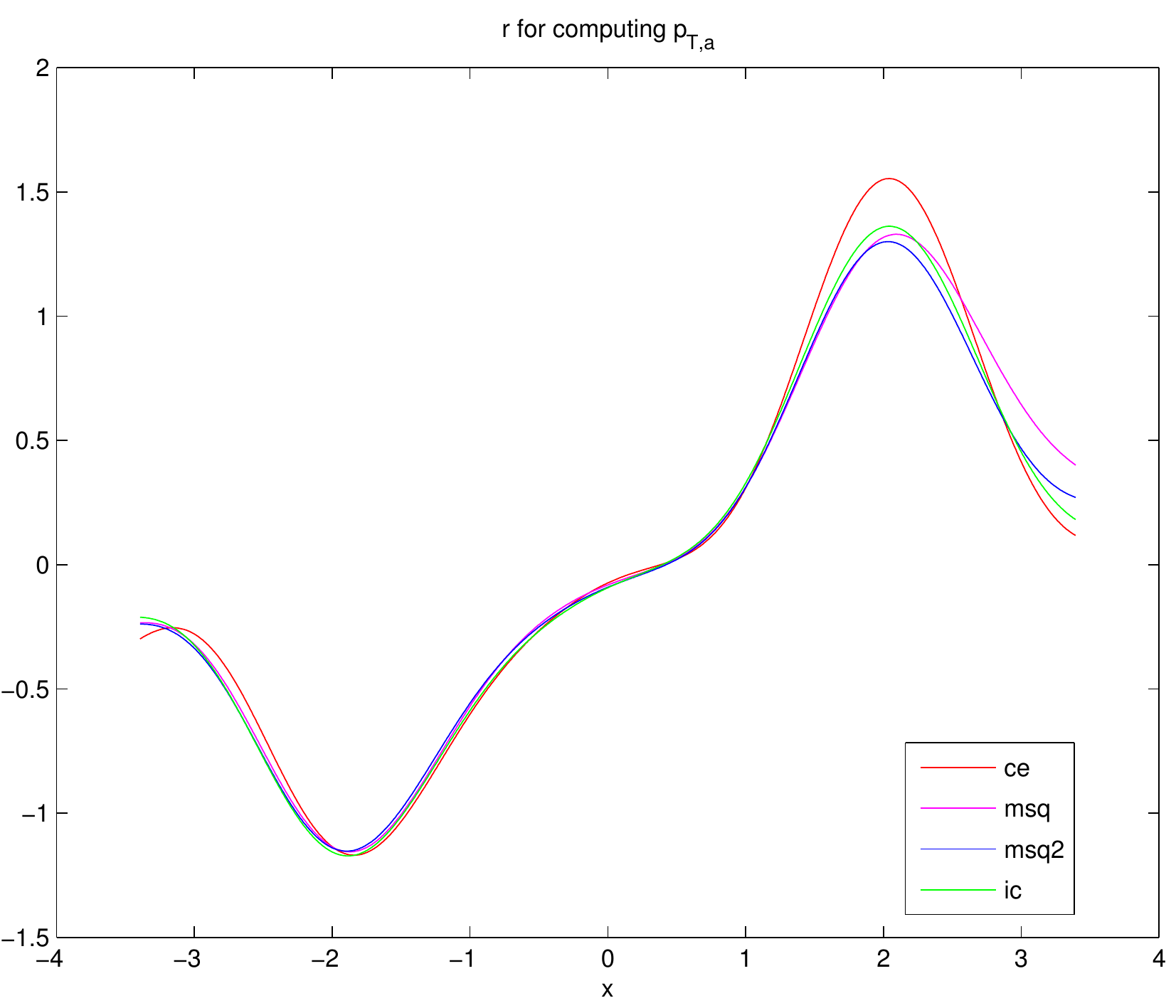}} 
\caption{\label{fig3WOpt} Final IS drifts from the MSM experiments minimizing different estimators,  for $q_{1,a}(x_0)$
in (a), $q_{2,a}(x_0)$ in (b), $\mgf(x_0)$ in (c), and $p_{T,a}(x_0)$ in (d). ``Optimal`` denotes an approximation of the zero-variance IS drift.}
\end{figure}

Consider a numerical experiment in which, for a given IS parameter $v \in A$, we 
compute unbiased estimates of the mean cost $c(v)$ as well as of the
variance $\var(v)$ and the
(theoretical) inefficiency constant $\ic(v)$ 
of the IS estimator of the expectation of the functional of the Euler scheme  
of interest, using  estimators (\ref{cest}),
(\ref{ic2Est}), and (\ref{varest}) respectively for $b=b'=v$ and $n=10$. 
For some $\beta_1, \ldots,\beta_{n}$ i.i.d. $\sim\PU$,
these estimators are evaluated on $(\xi(\beta_{i},v))_{i=1}^{n}$ (for $\xi$ as in (\ref{xigss})),  
so that the computations involve simulating $n$ 
independent Euler schemes with an additional drift $r(v)$ as in (\ref{Xkb}). 
Such experiments for the 
different estimation problems and for $v$ equal to the final results 
of the MSM of $\wh{\ce}$, $\wh{\msq}$, 
$\wh{\msq2}$, and $\wh{\ic}$ as above, and for $v=0$ for CMC,  
were repeated independently $K$ times in an outer MC loop for different $K$. 
For the problem of the estimation of $q_{2,a}(x_0)$ we additionally used as $v$ 
the minimization result from the third step of MSM. 
For the MGF we made in all the cases $K=75000$ repetitions. 
For the translated committors, when using CMC or IS with a parameter $v$ 
from the $3$-stage MSM of estimators other than $\wh{\ce}$,  
we took $K=2000$, while in the other cases we chose $K=5000$. 
For $p_{T,a}$ we made $K=20 000$ repetitions both for the CMC 
and when using $v$ from the MSM of $\wh{\ce}$, and $K=2\cdot10^5$ repetitions for $v$ from the MSM of 
$\wh{\msq}$, $\wh{\msq2}$, and $\wh{\ic}$. 
The MC means of the inefficiency constant and variance estimators from the outer loops 
for the translated committors and MGF are given in Table \ref{tabCompIC}, along with the
estimates of the standard deviations of such means. 
For $p_{T,a}(x_0)$, such outer MC loop estimates of the
inefficiency constants, variances, and mean costs 
are given in Table \ref{tabCompT}. 

\begin{table}[h]
\resizebox{14cm}{!} {
\begin{tabular}{|l|c|c|c|c|c|}
\hline
& CMC & $\wh{\ce}$&$\wh{\msq}$&$\wh{\msq2}$&$\wh{\ic}$ \\
\hline
&\multicolumn{5}{c|}{\text{Estimates of inefficiency constants ($\cdot 10^{-3}$)}}\\
\hline
$q_{1,a}$ & $7204\pm 92 $ & $4855\pm 649$ & $507 \pm 10 $ & $129.1\pm 3.8$ & $132.9\pm 3.5$\\
$q_{2,a}$,\ $k=3$& $7300 \pm 88 $& $553 \pm 26$ & $60.8 \pm 1.2$ & $53.4 \pm 1.0$& $50.5 \pm 1.1$\\
$q_{2,a}$ ,\ $k=6$&  & $73.1\pm 1.0$ & $56.10 \pm 0.69$ & $48.78 \pm 0.59$& $47.99 \pm 0.61$\\
$\mgf$ & $2041 \pm 15$& $ 6.276 \pm0.055$ & $3.691 \pm 0.035$ & $3.177 \pm 0.029 $& $3.163\pm 0.028$\\
\hline 
&\multicolumn{5}{c|}{\text{Estimates of variances $(\cdot 10^{-3})$}}\\
\hline
$q_{1,a}$ & $174.0 \pm 1.7$ & $126\pm 14$ & $12.36  \pm 0.20$ & $3.155\pm0.083$& $3.208\pm 0.073$ \\
$q_{2,a}$, $k=3$ & $177.43 \pm 1.7$ & $16.35\pm 0.79$ & $1.437 \pm 0.021$ & $1.268 \pm 0.019$ & $1.223 \pm 0.020$\\
$q_{2,a}$, $k=6$&  & $1.794 \pm 0.021$ & $1.347 \pm 0.013$ &$1.164 \pm 0.011$ & $1.169\pm 0.012$\\
$\mgf$& $49.41 \pm 0.32$ & $0.9040\pm 0.0072$ & $0.3732\pm 0.0029$ & $0.3195\pm 0.0024$  & $0.3209 \pm 0.0024$\\
\hline
\end{tabular}
}
\caption{\label{tabCompIC} Estimates of the inefficiency constants and variances of the estimators of the translated committors for $a=0.05$ and the MGF 
when using CMC or IS with IS parameters from the MSM of different estimators. For $q_{2,a}$ we consider using the IS parameters from the 
$k$th stages of MSM for $k\in\{3,6\}$.} 
\end{table} 

\begin{table}[h]
\resizebox{14cm}{!} {
\begin{tabular}{|l|c|c|c|c|c|}
\hline
& CMC & $\wh{\ce}$&$\wh{\msq}$&$\wh{\msq2}$&$\wh{\ic}$ \\
\hline
$\ic\ (\cdot 10^{-3})$ & $1391 \pm 5$& $65.22 \pm 0.30 $ & $63.332 \pm 0.09 $ & $62.677 \pm 0.090 $& $61.863 \pm 0.091 $\\
$\var\ (\cdot 10^{-3})$ & $150.78 \pm 0.58$ & $10.258 \pm 0.042$ & $10.036 \pm 0.013$ & $9.8937\pm 0.0126 $& $ 9.9491 \pm 0.0130 $\\
$c$ & $9.227 \pm 0.004 $& $6.3608 \pm 0.0066 $ & $6.3117\pm 0.0021 $ & $6.3355 \pm 0.0021 $& $ 6.2173 \pm 0.0021$\\
\hline
\end{tabular}
}
\caption{\label{tabCompT} Estimates of the inefficiency constants 
and variances of the estimators of $p_{T,a}$ for $a=0.05$ as well as of the
mean costs, when using CMC or IS with the IS parameters 
from the MSM of different estimators.} 
\end{table} 

\begin{remark}
Note that due to Remark \ref{remCond}, the variables $C$ as 
above have all moments (and thus also variance) finite under $\PQ(v)$, $v \in A$. 
For $v=0$ (i.e. for CMC), from the boundedness of the considered $Z$, we have the finiteness of the mean costs and of 
the variances and inefficiency constants of the estimators 
of the Euler scheme expectations of interest as well as of the variances of the utilized estimators of such quantities. 
For the general $v$ and bounded stopping times (as is the case for such times equal to $\tau'$ as in (\ref{taup})
when estimating $p_{T,a}$), the finiteness of the quantities as in the previous sentence 
follows from the corresponding $Z$ being bounded, as well as from Theorem \ref{thYmore} and Remark \ref{remCondUN}. 
In cases when the stopping time is not bounded (like for such a time equal to $\tau$ for the MGF and translated committors as above), 
one can ensure such boundedness by terminating the simulations at some fixed time as discussed in Section \ref{secCondLETS}. 
We did not terminate our simulations, but still our results can be interpreted as coming from simulations 
terminated at some time larger than any of the exit times encountered in our experiments. 
\end{remark}


From tables \ref{tabCompIC} and \ref{tabCompT} we can see that using the IS parameters from the MSM of 
$\wh{\ic}$ and $\wh{\msq2}$ led in each case to the lowest estimates of variances and (theoretical) inefficiency constants, followed by 
the ones from 
using the parameters from the MSM of $\wh{\msq}$, and finally $\wh{\ce}$. 
Using CMC led in each case to the highest such estimates. 
For $q_{2,a}(x_0)$ and each of $\wh{\ce}$, $\wh{\msq}$, and $\wh{\msq2}$, using the IS parameter from the sixth stage of MSM led to a
lower estimate of variance and inefficiency constant than using such a parameter from the third stage. 
Note also that the estimates of the inefficiency constants and variances for $q_{2,a}(x_0)$ when using 
the IS parameters from the third stage of the MSM of $\wh{\msq2}$ and $\wh{\ic}$ are lower than when using the parameters 
from the sixth stage of the MSM of $\wh{\ce}$ and $\wh{\msq}$ (though for the estimates of the inefficiency constants for $\wh{\msq2}$ and $\wh{\msq}$
we cannot confirm this at the desired significance level as in Section \ref{secStat}). 
For $p_{T,a}(x_0)$, the estimate of the inefficiency constant, variance, and mean cost is respectively 
lower, higher, and lower when minimizing $\wh{\ic}$ than $\wh{\msq2}$. 
Some intuitions behind these results are given by Theorem \ref{thicvar} 
and Remark \ref{remicvar}, see also the discussion in Section \ref{secIntu}. 

Using the IS parameters $v$ from the MSM of $\wh{\ic}$ as above and averaging the estimates 
from the $nK$ simulations available in each case 
we computed the IS MC estimates of the quantities of interest: 
$\mgf(x_0)$, and using the translated estimators as in Section \ref{secImpSpec} 
also of $p_{T}(x_0)$ and $q_{i}(x_0)$, $i=1,2$. 
The results are presented in Table \ref{tabResults}.  
Note that we have $q_1(x_0)=1-q_2(x_0)\approx 0.78$ and the
estimates of the inefficiency constants in Table \ref{tabCompIC} for estimating the lower value committor $q_2(x_0)$ 
are lower. Thus, it seems reasonable to use the translated IS estimator for $q_2(x_0)$ also for computing $q_1(x_0)$ 
as discussed in Section \ref{secImpSpec}.

\begin{table}[h]
\resizebox{12cm}{!} {
\begin{tabular}{|l|c|c|c|}
\hline
 $q_1(x_0)$&$q_2(x_0)$&$\mgf(x_0)$&$p_T(x_0) $  \\
\hline
 $0.7751\pm0.0004$& $0.22597\pm0.00025$& $0.16682\pm (6 \cdot 10^{-5})$ & $0.18396\pm (7\cdot 10^{-5})$\\
\hline
\end{tabular}
}
\caption{\label{tabResults} 
Estimates of different expectations obtained from IS MC using IS parameters from the MSM of $\wh{\ic}$. 
} 
\end{table} 

In the above experiments utilising $nK$ simulations we also computed the MC estimates of the mean costs $c(v)$. 
For comparison we also computed an estimate of the mean cost in CMC (equal to $c(0)=\E_{\PU}(\tau)$), using an MC average of such costs from
$7.5 \cdot10^5$ simulations. The results are provided in Table \ref{tabMeans}.
Note that the estimates of the mean costs in tables \ref{tabCompT} and \ref{tabResults} are 
lower for IS using the IS parameters $v$ from the MSM methods for computing $\mgf(x_0)$ and $p_{T,a}(x_0)$ than for the respective CMC methods. 
As discussed in Section \ref{secISIneff}, an intuition behind these results is provided by Theorem \ref{thDecrCost}. 

\begin{table}[h]
\resizebox{8cm}{!} {
\begin{tabular}{|l|c|c|c|c|}
\hline
&$q_{1,a}(x_0)$&$q_{2,a}(x_0)$&$\mgf(x_0)$& CMC  \\
\hline
$c$& $41.26\pm0.28$ &$41.18\pm0.28$ & $9.89\pm0.03$& $41.44\pm 0.15$ \\
\hline
\end{tabular}
}
\caption{\label{tabMeans} 
Estimates of the mean costs when using the IS parameters from the MSM of $\wh{\ic}$ and for CMC, 
for the problems of computing the translated committors for $a=0.05$ and MGF. 
} 
\end{table} 

We also performed two-stage experiments similar as above for $q_{2,a}(x_0)$ and $p_{T,a}(x_0)$ 
for several different added constants $a\in \R_+$ other than $a=0.05$ considered above. 
For $q_{2,a}(x_0)$ we used the IS basis functions as above, while for $p_{T,a}(x_0)$ also
the time-dependent basis functions as in (\ref{rtimedep}) for $M=5$ and $M=10$ and various $p\in \N_+$. 
This time in the first stages we performed the MSM only of $\wh{\ic}$ for 
$k=3$ and $n_i=400\cdot 2^{i-1}$, $i=1,\ldots,k$, so that the number of samples $n_k =1600$ used in the final stages of MSM 
was the same as for $a=0.05$ above. 
In the second stages we estimated the inefficiency constants, mean costs, and variances in an external loop like above. 
For $q_{2,a}(x_0)$ we made $K=3000$ repetitions in such a loop, while for $p_{T,a}(x_0)$ --- $K=10000$ 
for the basis functions as in (\ref{riexp}), as well as $K=50000$ for the basis functions as in (\ref{rtimedep}) for $M=5$ and $K=30000$ for $M=10$. 
The results are presented in tables \ref{tabCompCom2a}, \ref{tabCompTa}, and \ref{tabCompTaTime}, along with the results for the case of 
$a=0.05$ considered before. The smallest estimates of the inefficiency constants and variances for $q_{2,a}(x_0)$ were obtained for $a=0.05$. 
For $p_{T,a}(x_0)$ and the basis functions as in (\ref{riexp}), we obtained the smallest variance for $a=0.2$ and the lowest 
inefficiency constants for $a=0.1$ and $a=0.2$. Among all the cases for $p_{T,a}$, the smallest variances and theoretical inefficiency 
constants were received for $a=0$ and when using the time-dependent IS basis functions (\ref{rtimedep}) for $M=10$ and $p=3$. 

\begin{table}[h]
\begin{tabular}{|l|c|c|c|c|c|c|}
\hline
$a=$& $0$ & $0.05$ & $0.1$ & $0.2$  \\
\hline
$\ic\ (\cdot 10^{-3})$ & $82.6 \pm 2.3$ & $47.99 \pm 0.61$ & $51.89 \pm 0.80$ & $55.52 \pm 0.86$\\
$\var\ (\cdot 10^{-3})$ & $2.008 \pm 0.058$ & $1.169\pm 0.012$ & $1.252 \pm 0.016$ & $1.359 \pm 0.016$\\
\hline
\end{tabular}
\caption{\label{tabCompCom2a} Estimates of the inefficiency constants and variances of the IS estimators of $q_{2,a}$ for different $a$, 
corresponding to IS with the parameters from the MSM of $\wh{\ic}$.} 
\end{table} 

\begin{table}[h]
\resizebox{14cm}{!} {
\begin{tabular}{|l|c|c|c|c|c|}
\hline
$a=$& $0$ & $0.05$ & $0.1$ & $0.2$& $0.3$ \\
\hline
$\ic\ (\cdot 10^{-3})$ &$100.8\pm 0.7$& $61.86 \pm 0.09 $& $55.95\pm 0.35$&  $56.44\pm 0.39$&$61.43 \pm 0.48$\\
$\var\ (\cdot 10^{-3})$ &$17.88 \pm 0.10$ & $ 9.949 \pm 0.013 $& $8.224\pm 0.047$&  $7.544\pm 0.050$&$7.794\pm 0.058$\\
 $c$ & $5.641\pm0.009$ &$ 6.2173 \pm 0.0021$& $6.803\pm0.009$& $7.489\pm 0.009$& $7.874\pm 0.009$\\ 
\hline
\end{tabular}
}
\caption{\label{tabCompTa} Estimates of the inefficiency constants and variances of the estimators of $p_{T,a}$ for different $a$, 
and estimates of the mean costs, corresponding to IS with the parameters from the MSM of $\wh{\ic}$
and using $M=10$ time-independent IS basis functions as in (\ref{riexp}).} 
\end{table} 

\begin{table}[h]
\resizebox{14cm}{!} {
\begin{tabular}{|l|c|c|c|c|c|c|c|}
\hline
& \multicolumn{4}{c|}{$M=5$ }& \multicolumn{3}{c|}{$M=10,\ a=0$ } \\
\hline
& $p=1,\ a=0$  &$p=1,\ a=0.05$ & $p=2,\ a=0$ & $p=3,\ a=0$& $p=1$&$p=2$ &$p=3$ \\
\hline
$\ic\ (\cdot 10^{-3})$ & $63.01 \pm 0.39$ & $93.44 \pm 0.45$&$48.18\pm 0.28$&$46.52\pm 0.23$& $34.11\pm 0.16$ &$22.95\pm 0.17$ &$17.34 \pm 0.12$\\
$\var\ (\cdot 10^{-3})$ & $10.544 \pm 0.064$ &$13.60 \pm0.06$&$7.975\pm 0.045$&$7.629\pm 0.036$&$5.714 \pm 0.025$&$3.899\pm0.027$ &$2.951\pm 0.019$ \\
$c$ & $5.978 \pm 0.003$& $6.871 \pm0.004$&$6.037\pm 0.003$&$6.086 \pm 0.003$& $5.966\pm 0.004$&$5.878 \pm 0.004$ &$5.887 \pm 0.004$\\
\hline
\end{tabular}
}
\caption{\label{tabCompTaTime} Estimates of the inefficiency constants and variances of IS estimators of $p_{T,a}$ for different $a$, as well as of
the mean costs, corresponding to IS with the parameters from the MSM of $\wh{\ic}$, 
 using the time-dependent IS basis functions as in (\ref{rtimedep}) for different $M$ and $p$.} 
\end{table} 


In our experiments, when using CMC and IS MC with different sets of IS basis functions, 
the proportionality constants $p_{\dot{C}}$ as in Chapter \ref{secIneff} were considerably different. 
Thus, to compare the efficiency of the MC methods using estimators corresponding to these different bases, one should 
compare their practical rather than theoretical inefficiency constants. 
We performed separate experiments approximating some $p_{\dot{C}}$ as above and computing the corresponding practical inefficiency constants 
(equal to the products of such $p_{\dot{C}}$ and the respective theoretical inefficiency constants).
For $n=10^5$, we ran $n$-step CMC and IS MC procedures for estimating $q_{2,0.05}(x_0)$ using the IS basis functions as in (\ref{riexp}), 
and for estimating $p_{T,a}$: for $a=0.05$ for IS basis functions as in (\ref{riexp}) for $M=10$, and
for $a=0$ for IS basis functions as in 
(\ref{rtimedep}): for $M=5$ and $p=1$, and for $M=10$ and $p\in \{1,3\}$. 
When performing the IS MC we used the IS parameters from the final stages of the corresponding MSM procedures as above. 
For $C_i$ being the theoretical cost of the $i$th step of a given MC procedure and $\dot{C}_i$ being its practical cost equal to
its computation time calculated using the matlab tic and toc functions, as an approximation of $p_{\dot{C}}$ we used 
the ratio $\wh{p}_{\dot{C},n}=\frac{\sum_{i=1}^n\dot{C}_i}{\sum_{i=1}^nC_i}$. Treating $(\dot{C}_i, C_i)$, $i=1,2,\ldots,$ as i.i.d. 
random vectors with square-integrable coordinates, for $p_{\dot{C}}:=\frac{\E(\dot{C}_1)}{\E(C_1)}$, from the delta method it easily follows that for 
\begin{equation}
\sigma: =p_{\dot{C}} \sqrt{\frac{\Var(C_1)}{(\E(C_1))^2} +\frac{\Var(\dot{C_1})}{(\E(\dot{C}_1))^2}-2\frac{\Cov(C_1,\dot{C_1})}{\E(\dot{C}_1)\E(C_1)}}
\end{equation}
we have
$\sqrt{n}(\wh{p}_{\dot{C},n}-p_{\dot{C}})\Rightarrow \ND(0,\sigma^2)$. For $\wh{\sigma}_n$ being an estimate of $\sigma$ in which instead of means, variances, and 
covariances one uses their standard unbiased estimators computed using
$(C_i,\dot{C}_i)_{i=1}^n$, we have a.s. $\frac{\wh{\sigma}_n}{\sigma}\to 1$. Thus, from Slutsky's lemma, 
$\frac{\sqrt{n}}{\wh{\sigma}_n}(\wh{p}_{\dot{C},n}-p_{\dot{C}})\Rightarrow \ND(0,1)$, which can be used for constructing asymptotic confidence intervals for $p_{\dot{C}}$.
In Table \ref{tabComppC} we provide the computed estimates in form $\wh{p}_{\dot{C},n} \pm \frac{\wh{\sigma}_n}{\sqrt{n}}$. 
It can be seen that these approximations of $p_{\dot{C}}$ are 
close for $q_{2,a}(x_0)$ and $p_{T,a}(x_0)$ when using in both cases CMC or IS with the basis functions as in (\ref{riexp}), and 
for $p_{T,a}(x_0)$ when using the basis functions as in (\ref{rtimedep}) for $M=10$ and different $p$. However, such $p_{\dot{C}}$ differ significantly 
for the other pairs of MC methods. 
In Table \ref{tabComppC} we also provide the estimates of practical inefficiency constants $\dot{\ic}$ obtained by multiplying the corresponding
estimates of the theoretical inefficiency constants computed earlier by the received approximations of $p_{\dot{C}}$. From this table we can see that using IS
in the considered cases led to considerable practical inefficiency constant reductions over using CMC.

\begin{table}[h]
\resizebox{14cm}{!} {
\begin{tabular}{|l|c|c|c|c|c|c|c|}
\hline
& \multicolumn{2}{c|}{$q_{2,a}$ }& \multicolumn{5}{c|}{$p_{T,a}$ } \\
\hline
&CMC &$M=10$ &CMC & $M=10$ &$M=5$,\ $p=1$& $M=10$,\ $p=1$& $M=10$,\ $p=3$ \\
\hline
$p_{\dot{C}}$ ($\cdot 10^{-6}$s)& $33.612\pm 0.003$&$137.06\pm0.01$ &$33.998 \pm 0.004$&$138.75\pm 0.01$&$96.36\pm0.01$& $148.92\pm0.01 $ &$149.74\pm 0.01$\\
$\dot{\ic}$ ($\cdot 10^{-3}s$)& $245.4 \pm 3.0$&$6.578\pm 0.084$ &$47.3 \pm 0.2$&$8.584 \pm 0.013$&$6.072 \pm 0.038$& $5.080 \pm 0.024$ &$2.597 \pm 0.018$\\
\hline
\end{tabular}
}
\caption{\label{tabComppC} Estimates of $p_{\dot{C}}$ and $\dot{\ic}$ for computing $q_{2,a}$ and $p_{T,a}$ for various $a$ 
and IS basis functions as discussed in the main text. }
\end{table} 

\section{Experiments comparing the spread of IS drifts.} \label{secSpread}
In the experiments described in this section we consider the assumptions 
as for the estimation of $q_{2,a}(x_0)$ in Section \ref{secSpecNumExp}. For the
estimators $\wh{\est}$ equal to each of 
$\wh{\ce}$, $\wh{\msq}$, $\wh{\msq2}$, and $\wh{\ic}$, 
we performed $20$ independent SSM experiments 
for $n_1=100$ and $b'=0$ as in Section \ref{secSimpleMin}, i.e. 
minimizing $b\to\wh{\est}_{n_1}(b',b)(\wt{\chi}_1)$ 
for some $\wt{\chi}_1$ as in that section. 
For each such experiment for $\wh{\est}=\wh{\ic}$, for the same $\wt{\chi}_1$ as in that experiment, we additionally 
carried out a two-phase minimization, 
in its first phase minimizing $b\to\wh{\msq}_n(b',b)(\wt{\chi}_1)$ and in the second 
$b\to\wh{\ic}_n(b',b)(\wt{\chi}_1)$, using the first-phase minimization result as a starting point. 
The IS drifts corresponding to the IS parameters computed in the 
above experiments are shown in Figure \ref{fig3WFirst}, 
in which we also show an approximation of the zero-variance IS drift $r^*$ for the corresponding diffusion problem as in the previous section. 

\begin{figure}[h] %
\centering
\subfloat[]{\includegraphics[width=0.45\textwidth]{./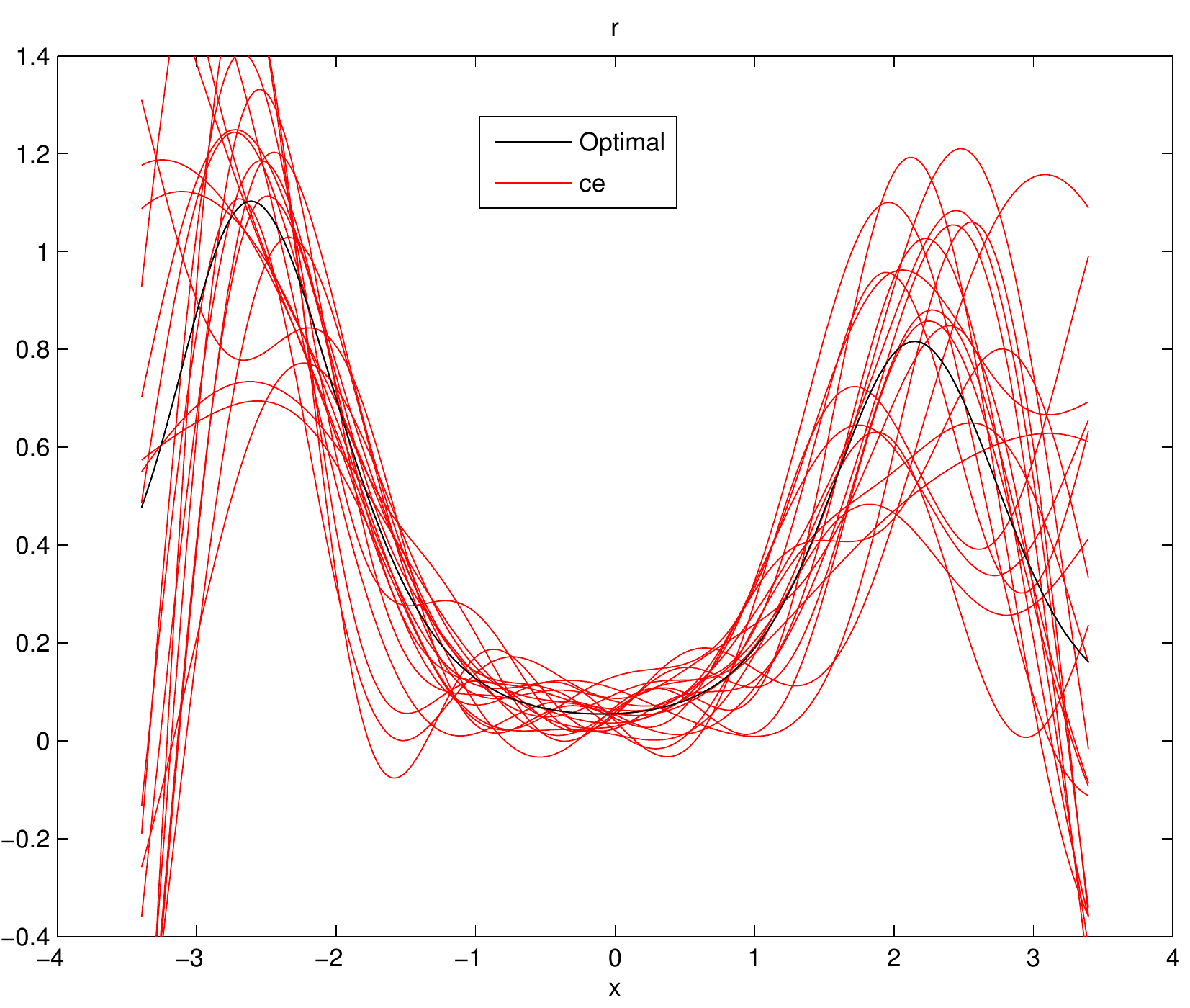}}
\qquad  
\subfloat[]{\includegraphics[width=0.45\textwidth]{./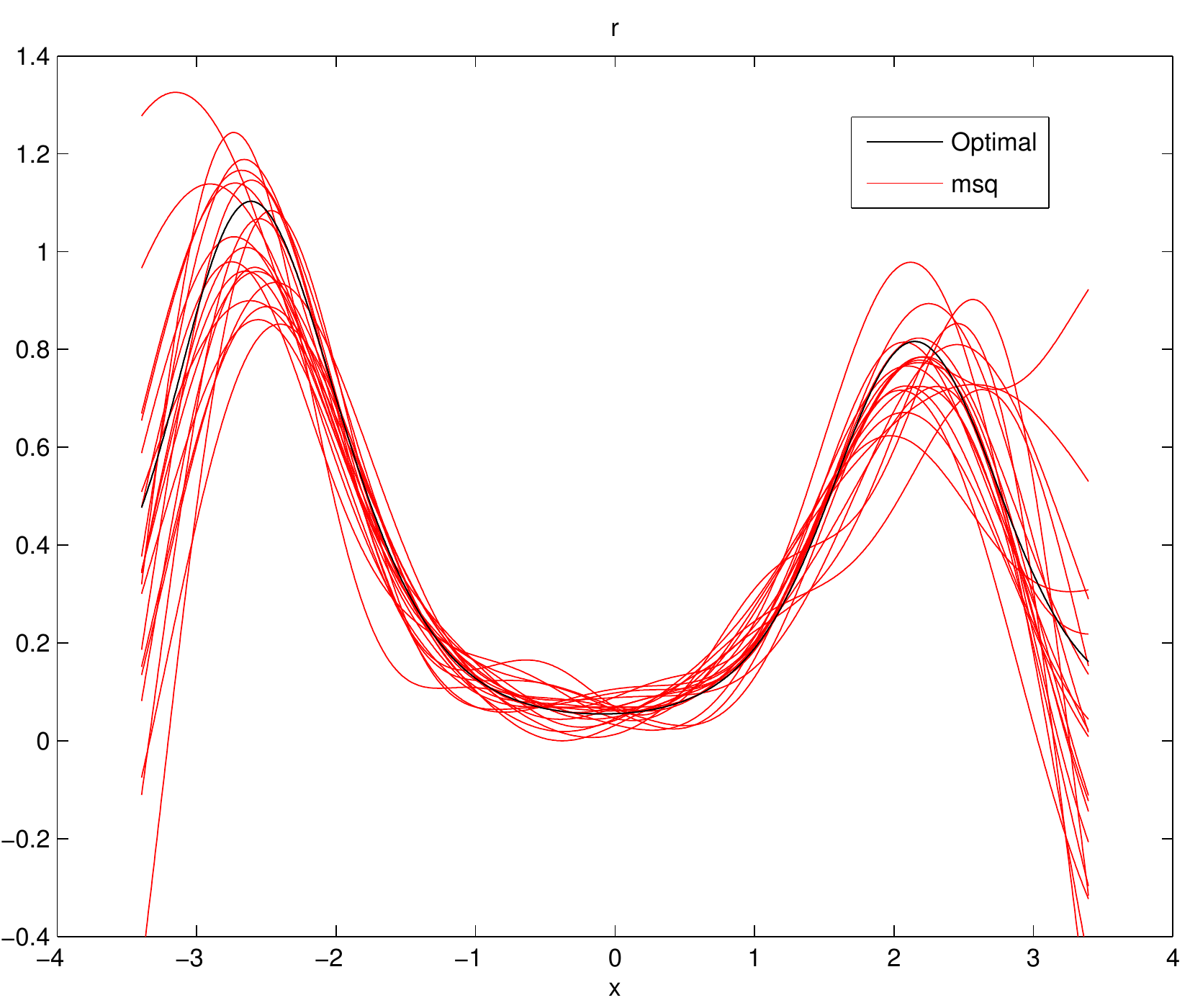}} 
\\
\subfloat[]{\includegraphics[width=0.45\textwidth]{./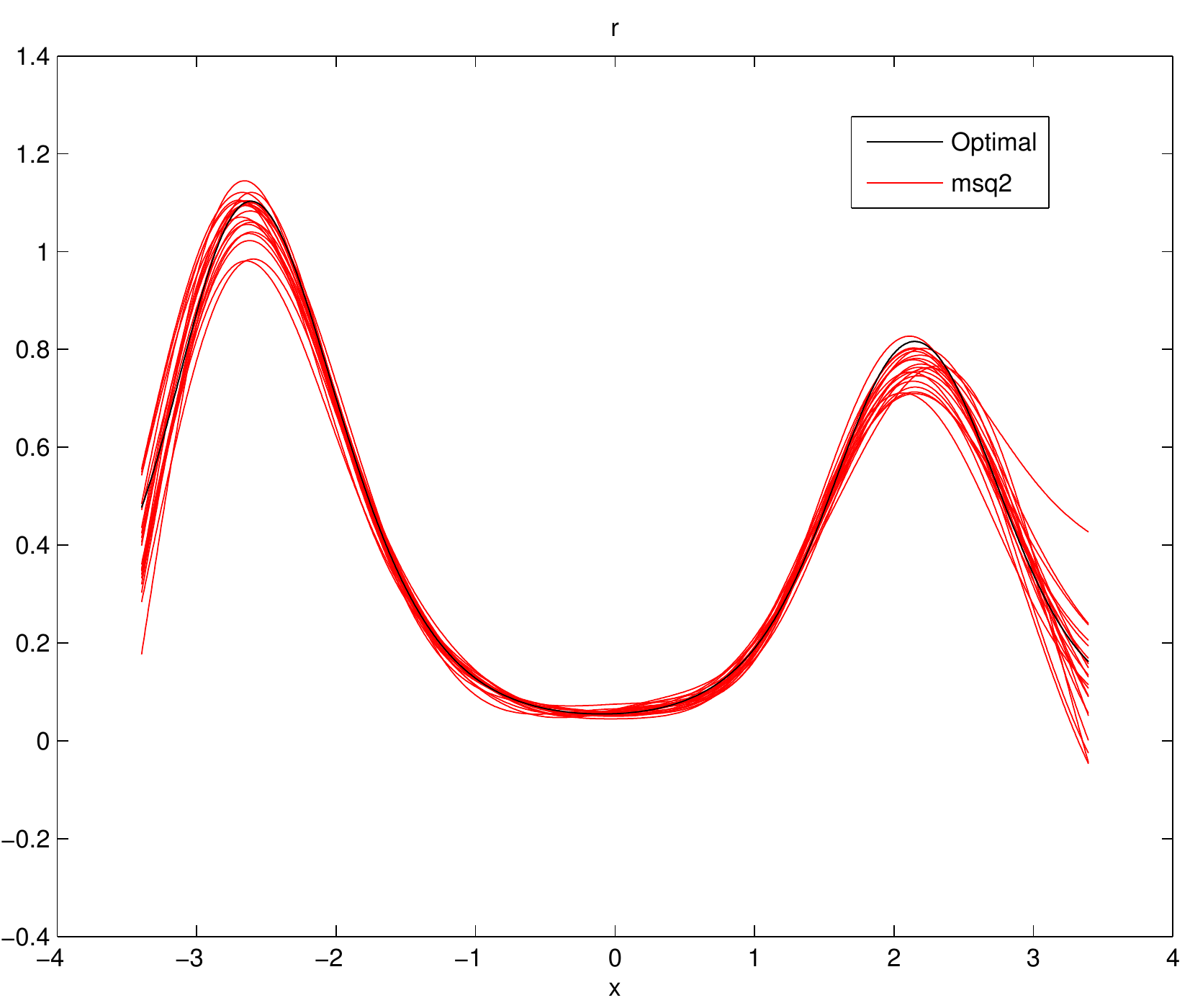}}
\qquad  
\subfloat[]{\includegraphics[width=0.45\textwidth]{./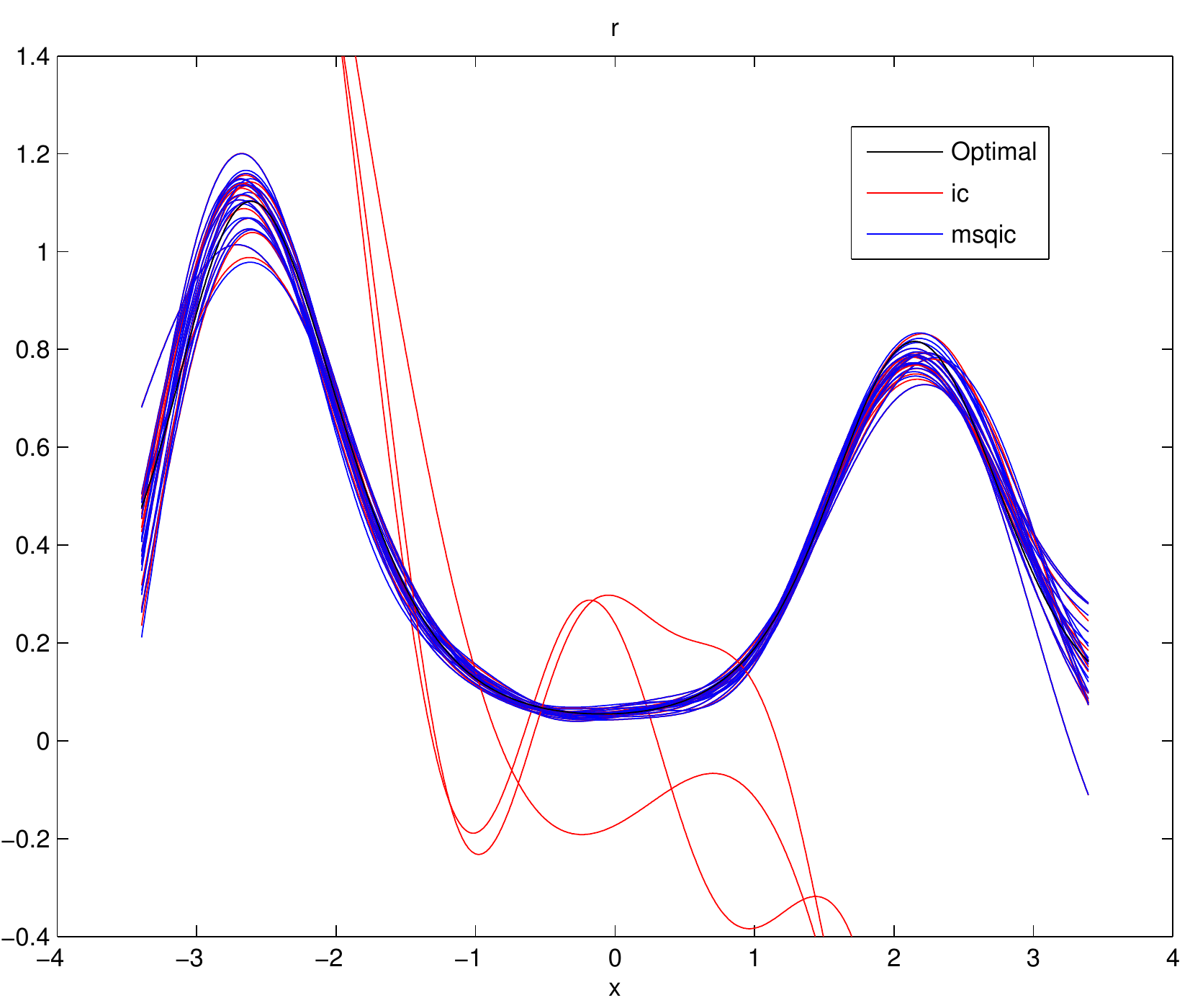}}

\caption{\label{fig3WFirst} The IS drifts from SSM for estimating $q_{2,a}$, minimizing $\wh{\ce}$ in (a),
$\wh{\msq}$ in (b), $\wh{\msq2}$ in (c), and $\wh{\ic}$ in (d). The ''optimal`` IS drift is a finite-difference approximation of the zero-variance IS drift $r^*$.
The label
'msqic' in (d) corresponds to the two-phase minimization and 'ic' to the single-phase minimization as discussed in the main text.}
\end{figure}

From Figure \ref{fig3WFirst} (d) it can be seen 
that ordinary (i.e. single-phase) SSM using $\wh{\ic}$ yielded in three experiments IS drifts far from 
$r^*$, while in the other $17$ experiments and in all $20$ experiments when using two-phase minimization we received drifts 
close to $r^*$. In the experiments in which ordinary minimization led to drifts far from $r^*$, the value of 
$\wh{\ic}_n(b',b)(\wt{\chi}_1)$  in the minimization result $b$ was several 
times smaller than when using two-phase minimization, while in the other 
cases these values were very close 
(e.g. the absolute value of difference of such values divided by the 
smaller of them was in each case below $1\%$). 
In Figure \ref{fig3WFirst}, the IS drifts from the SSM of $\wh{\msq2}$ and the two-phase minimization 
minimizing $\wh{\ic}$ in the second phase as above seem to be the least spread, followed by the ones 
from the minimization of $\wh{\msq}$, and finally $\wh{\ce}$. 

From the below remark it can be expected that for a sufficiently large $n$, the IS drifts from SSM experiments like above 
should have approximately normal distribution in each point. 
\begin{remark}\label{rempropr} 
Let $A$, $T$, $d_t$, and $r_t$ be as in Section \ref{secHelpAsymp} and 
let for some covariance matrix $D\in \R^{l\times l}$ it hold
\begin{equation}\label{rtdt}
\sqrt{r_t}(d_t-b^*) \Rightarrow \ND(0,D).  
\end{equation}
This holds e.g. for $d_t$ being the SSM or MSM results of the estimators $\wh{g}$ 
for $g$ replaced by $\ce$, $\msq$, $\msq2$, or $\ic$, for the SSM 
under the assumptions as in Section \ref{secAsympSSM} for $D=u(b')V_{g}(b')$ and $r_t=t$, $t\in T$, while 
for the MSM under the assumptions as in Section 
\ref{secAsympMSM} for $D=V_g(d^*)$, $T=\N_+$, and $r_k=n_k$, $k \in T$. 
Let us assume a linear parametrization of the IS drifts as in (\ref{rbxdef}), let $x \in \R^m$, and let $B\in \R^{l\times d}$ 
be such that $B_{i,j}=(\wt{r}_{i})_j(x)$, $i=1,\ldots,l$, $j=1\ldots,d$. 
Then, $r(d_t)(x)=B^Td_t$, $t \in T$, and from (\ref{rtdt})
\begin{equation}\label{rdn0} 
\sqrt{r_t}(r(d_t)-r(b^*))(x)\Rightarrow \ND(0, B^TDB). 
\end{equation} 
\end{remark}

In the further experiments, to 
be able to carry out more simulations in a reasonable time, 
we changed the model considered by increasing the temperature $10$ times. 
For such a new temperature we received an estimate $1.468 \pm 0.002$ of the mean cost $h\E_{\PU}(\tau)$ in CMC, as compared to $41.44\pm 0.15$ 
under the original temperature as in Table \ref{tabResults}. 
We carried out an MSM procedure of $\wh{\msq2}$ for $k=6$ and $n_i=50\cdot 2^{i-1}$, $i=1,\ldots,k$, receiving the final minimization result $b_{MSM}$. 
For $\wh{\est}$ equal to each of the
different estimators as above and for $b'=0$ and $b'=b_{MSM}$, we carried out independently $N=5000$ times the
SSM of  $\wh{\est}$ for $n_1=200$, in the $i$th SSM receiving a result $a_i$ and then computing
$r(a_i)(0)$, i.e. the corresponding IS drift at zero, $i=1,\ldots,N$. 
The histograms of such IS drifts at zero for $b'=0$, with fitted Gaussian functions, are shown in Figure \ref{fig3WHists}. 
\begin{figure}[h] %
\centering
\subfloat[]{\includegraphics[width=0.45\textwidth]{./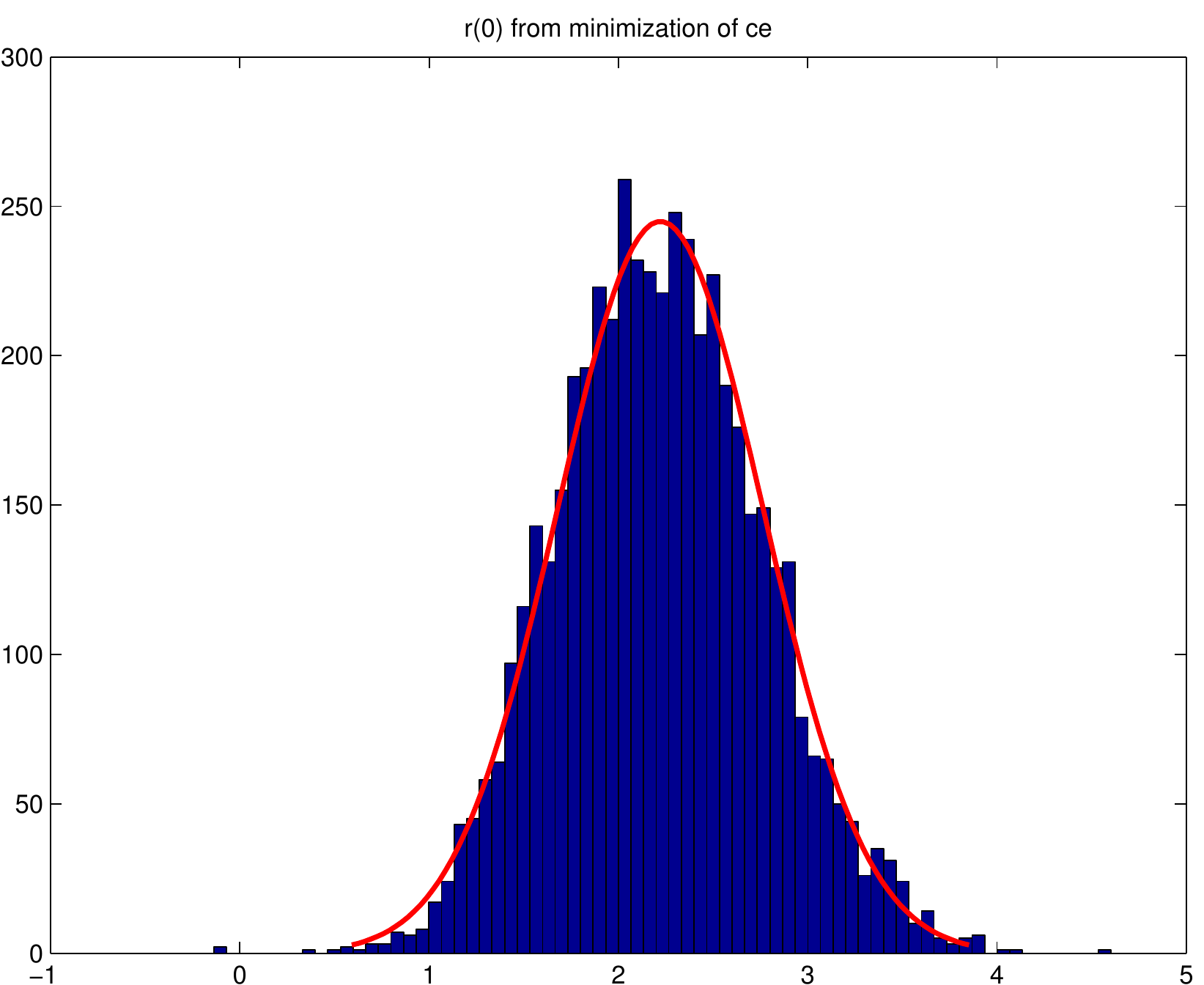}}
\qquad  
\subfloat[]{\includegraphics[width=0.45\textwidth]{./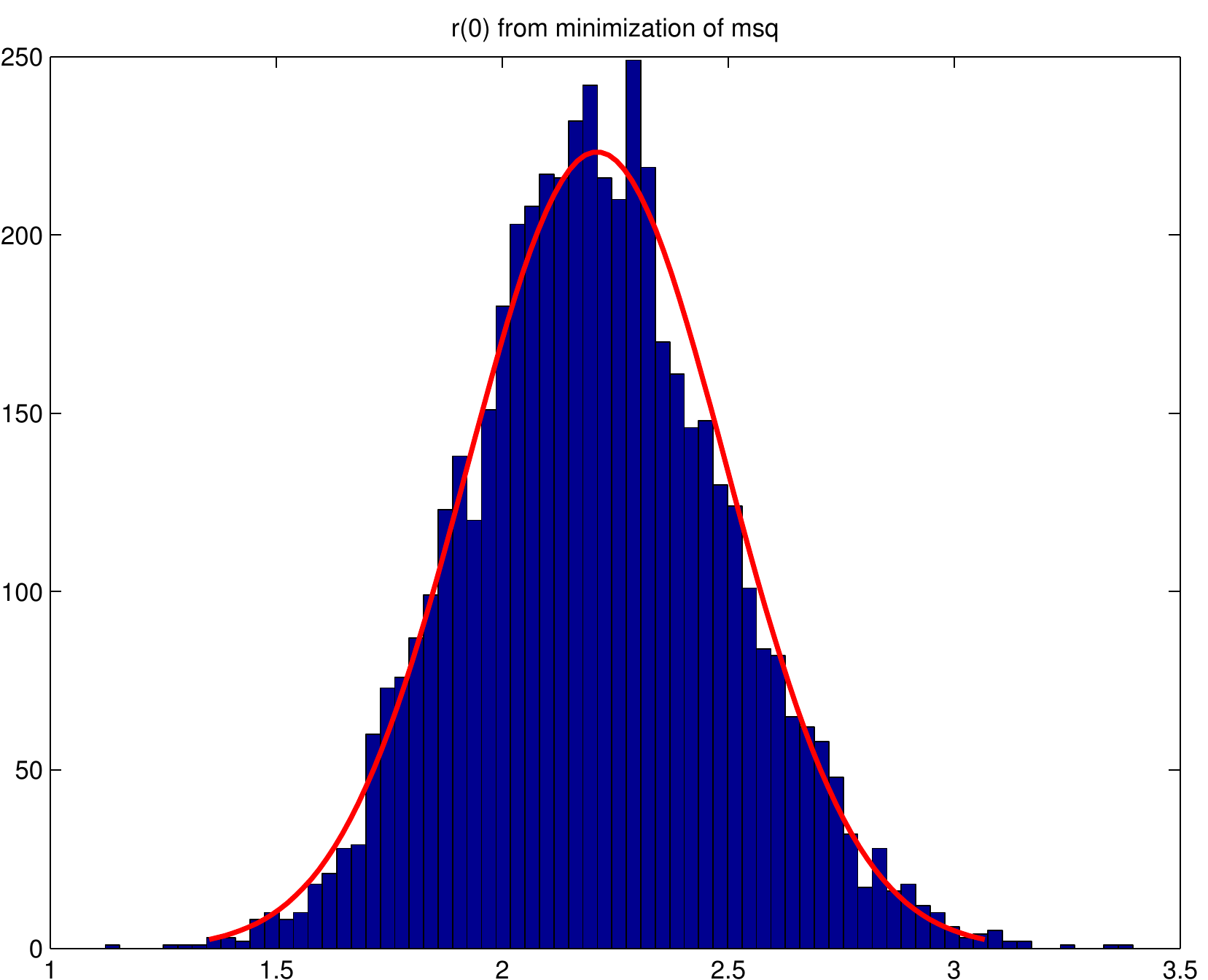}} 
\\
\subfloat[]{\includegraphics[width=0.45\textwidth]{./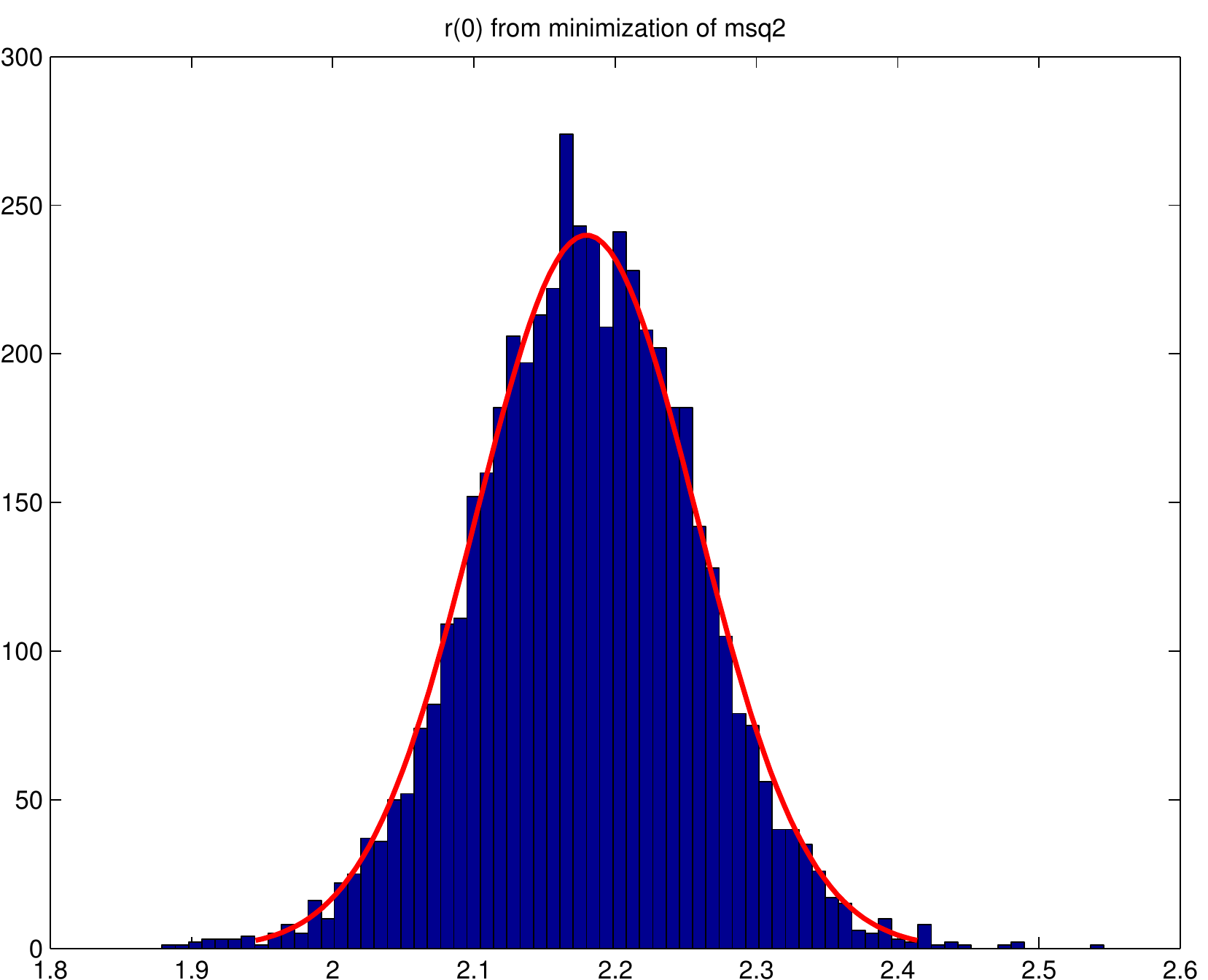}}
\qquad  
\subfloat[]{\includegraphics[width=0.45\textwidth]{./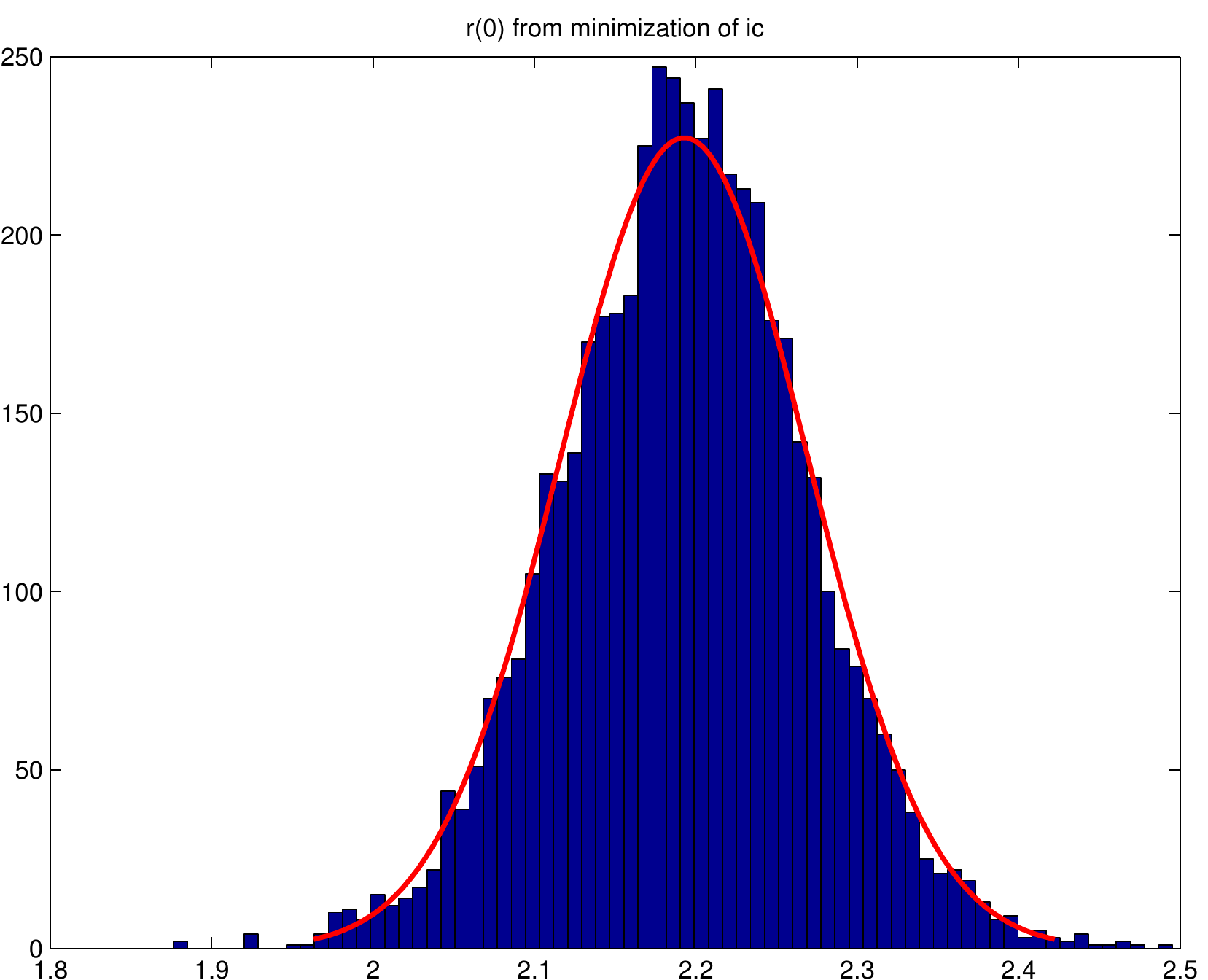}}
\caption{\label{fig3WHists} Histograms of the IS drifts at zero from the SSM for computing $q_{2,a}$, minimizing $\wh{\ce}$ in (a),
$\wh{\msq}$ in (b), $\wh{\msq2}$ in (c), and $\wh{\ic}$ in (d).}
\end{figure}
This figure suggests that the distributions 
of the IS drifts at zero are approximately normal, as could be expected from Remark \ref{rempropr}. 
Furthermore, the (empirical) distribution of the IS drifts at zero for $b'=0$ seems to be in a sense the least 
spread when minimizing $\wh{\msq2}$ and $\wh{\ic}$, followed by $\wh{\msq}$, and finally $\wh{\ce}$.
The same observations can be made from the inspection of histograms for the case of $b'=b_{MSM}$, which are not shown. 
We shall now compare quantitatively the spread of empirical distributions of the IS drifts at zero 
in the above experiments for the different estimators and $b'$ used, 
using interquartile ranges (IQRs), 
the definition and some required properties of which are provided in the below remark. 

\begin{remark}\label{remIQR}
For i.i.d. random variables $X_1, X_2, \ldots$, and $k,n \in \N_+$, 
let $X_{k:n}$ be the $k$th coordinate of $\wt{X}_n:=(X_i)_{i=1}^n$ in the nondecreasing order. 
For $n \geq 4$, let us define the interquartile range (IQR) of the coordinates of $\wt{X}_n$ as 
$\wh{\IQR}_n=X_{\lfloor\frac{3n}{4}\rfloor:n} -X_{\lfloor\frac{n}{4}\rfloor:n}$.  
Let further $X_1\sim \ND(\mu,\sigma^2)$ for some $\mu \in \R$ and $\sigma\in \R_+$, 
and let $q$ denote the IQR of $\ND(\mu,\sigma^2)$ (i.e. the difference of its third and first quartile).  
Then, for a certain $d \approx 1.36$ we have  $\sqrt{n}(\wh{\IQR}_n-q) \Rightarrow\ND(0,dq^2)$
(see page 327 in \cite{dasgupta2011probability}). Thus, for $\wh{\sigma}_n=\sqrt{d}\wh{\IQR}_n$ we have 
$\frac{\sqrt{n}}{\wh{\sigma}_n}(\wh{\IQR}_n-q)\Rightarrow \ND(0,1)$, 
which can be used for constructing asymptotic confidence intervals for $q$. 
For some $\mu' \in \R$ and $\sigma'\in \R_+$,
consider further $X_1', X_2'\ldots,$ i.i.d. $\sim \ND(\mu', (\sigma')^2)$, such that
$(X_i')_{i\in \N_+}$ is independent of $(X_i)_{i\in \N_+}$, and let $q'\in\R_+$ be the IQR of $\ND(\mu', (\sigma')^2)$. 
Then, for $\wh{\IQR}_n'$ analogous as above but for the primed variables
we have $(\wh{\IQR}_n-q,\wh{\IQR}_n'-q')\Rightarrow \ND(0,d\diag(q^2,(q')^2))$.
Thus, for $R=\frac{q}{q'}$, $\wh{R}_n=\frac{\wh{\IQR}_n}{\wh{\IQR}_n'}$, and $\sigma_{R}=R\sqrt{2d}$, from the
delta method we have 
$\sqrt{n}(\wh{R}_n-R) \Rightarrow \ND(0,\sigma_R^2)$. Therefore, for $\wh{\sigma}_{R,n}=\wh{R}_n\sqrt{2d}$ we have
$\frac{\sqrt{n}}{\wh{\sigma}_{R,n}}(\wh{R}_n-R) \Rightarrow \ND(0,1)$.
\end{remark}
For $X_i=r(a_i)(0)$, $i=1,\ldots,N$, received from the SSM of different estimators as above, we computed the estimates $\wh{\IQR}_N$ 
of the IQRs of drifts at zero and the values $\frac{\wh{\sigma}_N}{\sqrt{N}}$ 
as in Remark \ref{remIQR}. The results are provided in Table \ref{tabMeanspreads}. 

\begin{table}[h]
\begin{tabular}{|l|c|c|c|c|c|}
\hline
& $\wh{\ce}$ &$\wh{\msq}$& $\wh{\msq2}$& $\wh{\ic}$ \\
\hline
$b'=0$& $0.736 \pm 0.012$ & $0.3770 \pm 0.0062$ & $0.1039 \pm 0.0017$ & $0.1006 \pm 0.0017$\\
\hline 
$b'= b_{MSM}$& $0.546 \pm 0.009$ & $0.2910 \pm 0.0048$ & $0.0932 \pm 0.0015$ & $0.0929 \pm 0.0015$ \\
\hline
\end{tabular}
\caption{\label{tabMeanspreads} Estimates of the IQRs of the IS drifts at zero and the values $\frac{\wh{\sigma}_N}{\sqrt{N}}$ from the 
SSM of various estimators.} 
\end{table} 
From this table we can see that for the both values of $b'$
the computed estimates of IQRs from the minimization of $\wh{\ic}$ and $\wh{\msq2}$ are the lowest, followed by such 
estimates from the minimization of $\wh{\msq}$, and finally $\wh{\ce}$. 
The ratio of the estimates of IQRs from the minimization of $\wh{\ce}$ to $\wh{\msq}$ is
$1.951\pm  0.045$ for $b'=0$ and $1.8771 \pm   0.044$ for $b'=b_{MSM}$ 
(where the results are provided in the form $\wh{R}_n\pm \frac{\wh{\sigma}_{R,N}}{\sqrt{N}}$ under appropriate identifications with 
the variables from Remark \ref{remIQR}).  
Note that these ratios are close to $2$. Intuitions supporting the above results are given in Section \ref{secIntu}. 
Note also that the estimates of IQRs are lower when using $b'=b_{MSM}$ than $b'=0$.

\section{\label{secIntu}Some intuitions behind certain results of our numerical experiments} 
Recall that in the numerical experiments in Section \ref{secExpEst} we observed the fastest convergence of the IS drifts in the MSM 
results and in Section \ref{secSpread} the lowest spreads of such drifts in the SSM results 
when minimizing $\wh{\msq2}$ and $\wh{\ic}$, followed by $\wh{\msq}$, and finally $\wh{\ce}$. 
Furthermore, in Section \ref{secSpread} the 
IQRs of the values at zero of the IS drifts corresponding to the SSM results were approximately two times higher when minimizing $\wh{\ce}$ than $\wh{\msq}$. 
In this section we provide some intuitions behind these and some other of our experimental results. We will need the following remark. 
\begin{remark}\label{remZeroR}
Let us assume that, similarly as in Section \ref{secAsympProp}, for $b^*$ being a zero-variance IS parameter, 
for each $b \in \R^l$ we have 
$V_{\msq2}(b)=V_{\ic}(b)=0$, $V_{\ce}(b)=4V_{\msq}(b)$, and $V_{\ce}(b)$ is positive definite. 
Let further, similarly as in Remark \ref{rempropr}, for $d=1$, for $g$ replaced by each of $\ce$, $\msq$, $\msq2$, and $\ic$, 
for $x \in \R^m$ and $B=((\wt{r}_{i})(x))_{i=1}^l$, for 
$u(b')\in \R_+$, $r_t=t$, and $v_g=u(b')B^TV_g(b')B$ for SSM or 
$r_k=n_k$ and $v_g=B^TV_g(d^*)B$ for MSM, the IS drifts corresponding to the SSM or MSM results $d_t$ of the
estimators $\wh{g}$ respectively fulfill 
\begin{equation}
\sqrt{r_t}(r(d_t)-r(b^*))(x)\Rightarrow \ND(0,v_g).
\end{equation}
Then, for $g$ replaced by $\msq2$ or $\ic$ we have $v_g=0$ and the distribution $\ND(0,v_g)$ has zero IQR. If
further $\wt{r}_i(x)\neq 0$ for some $i\in\{1,\ldots,l\}$, then $0<v_{\ce}=4v_{\msq}$, so that $\ND(0,v_{\ce})$ has a positive IQR, which is exactly 
two times higher than the IQR of $\ND(0,v_{\msq})$. 
\end{remark} 

A possible reason why we received the above mentioned experimental results is that we can have approximately the 
same relations as in the above remark between the matrices $V_g(b)$  for the appropriate $b$
in our experiments, that is the entries of $V_g(b)$ can be much smaller 
for $g$ equal to $\msq2$ and $\ic$ than $\msq$ and $\ce$, and we can have $V_{\ce}(b) \approx 4V_{\msq}(b)$. 
This would lead to approximately the same relations between the asymptotic variances of the IS drifts in different points and the IQRs 
of their asymptotic distributions as as in the above remark. 

Such approximate relations between the matrices $V_g(b)$ can be a consequence of 
the IS distributions and densities corresponding to the minimum points of the minimized functions being close to the zero-variance ones, 
in the sense that the derivations as in Section \ref{secAsympProp} can be carried out approximately. 
For the estimation problems for whose diffusion counterparts there exist zero-variance IS drifts, like for the 
case of the translated committors and MGF, we also have the following possible intuition behind the hypothesized approximate 
relations between the matrices $V_g(b)$ as above. 
For the diffusion counterparts of these estimation problems, the zero-variance IS drifts 
minimize the mean square, inefficiency constant, and cross-entropy among all the appropriate drifts. 
Furthermore, as evidenced in Figure \ref{fig3WOpt}, the diffusion zero-variance IS drifts can be 
approximated very well using linear combinations of the IS basis functions considered. Thus, the diffusion IS drifts corresponding to the 
minimizers of the functions considered are likely to be close to the zero-variance ones. Therefore, using 
such drifts in the place of the zero-variance ones, the derivations as in Section \ref{secAsympProp} can be carried out approximately 
and we should have approximately the same relations between the matrices $V_g(b)$ for the diffusion case as in Remark \ref{remZeroR}. 
For small stepsizes $h$, like the ones used in our numerical experiments, the matrices $V_g(b)$ 
for the Euler scheme case can be expected to be close to their diffusion counterparts and thus 
we should also have approximately the same relations between them as above. 

For small stepsizes we can also expect the IS drifts corresponding to the minimizers of the functions 
considered for the Euler scheme case to be close to their diffusion counterparts, and thus, from the above discussion, also close to the 
diffusion zero-variance IS drifts. This would provide an intuition why in Figure \ref{fig3WOpt} the IS drifts from the minimization of 
various estimators of the functions considered are close to the approximations of the zero-variance IS drifts 
for the diffusion case. 

In the experiments from Section \ref{secExpEst} for computing $p_{T,a}(x_0)$, 
the MSM results of $\wh{\ic}$ led to a lower estimate of the inefficiency constant 
than these of $\wh{\msq2}$, at the same time yielding a higher estimate of the variance and a lower of the mean cost. 
A possible intuition behind these results is provided by Theorem \ref{thicvar}, from which it follows that 
under appropriate assumptions 
a.s. we eventually should have such relations for the corresponding functions evaluated on
some parameters converging a.s. to the minimum point of the mean square 
and the ones minimizing the inefficiency constant (see Section \ref{secECG} for some sufficient assumptions). 
Note, however, that this intuition fails when comparing the estimates of the variances in the minimization results of $\wh{\msq}$ 
and $\wh{\ic}$, as the latter were smaller in all of our estimation experiments. 
A possible factor that could have contributed to the fact that in Section \ref{secExpEst} we 
obtained the lowest estimates of the inefficiency constants and variances when minimizing the new estimators $\wh{\ic}$ and $\wh{\msq2}$, followed by 
$\wh{\msq}$, and $\wh{\ce}$, 
is that, from the above hypothesis on the approximate relations of the matrices $V_g(b)$,
we may have the lowest spread of the distributions of the minimization results of the new estimators around the minimum points of 
variances and inefficiency constants, followed by such results for $\wh{\msq}$, and $\wh{\ce}$. 
We suspect that if sufficiently long minimization 
methods are performed (i.e. for a sufficiently large $n_1$ for SSM or $k$ for MSM), so that the distributions of the minimization results of 
the estimators considered
become much less spread around the minimum points of their corresponding functions, then, as suggested by Theorem \ref{thicvar}, 
the minimization results of $\wh{\msq}$ should typically lead to lower variance 
than these of $\wh{\ic}$. However, if the above hypothesis on the entries of $V_{\msq}(b)$ being much smaller 
than these of $V_{\msq2}(b)$ is correct, then, 
for a longer minimization, the minimization results of $\wh{\msq2}$ should still typically lead to lower variance than these of $\wh{\msq}$. 
This is because such results $d_t$ for $\wh{\msq2}$ would be asymptotically much more efficient for the minimization of variance 
in the different second-order senses discussed in Section \ref{secSecond}. For instance, in the sense of 
the mean of the asymptotic distribution of $r_t(\msq(d_t)-\msq(b^*))$ (for the appropriate $r_t$), equal to 
$\frac{u(b')}{2}\Tr(V_g(b')H_{\msq})$ for SSM or $\frac{1}{2}\Tr(V_g(d^*)H_{\msq})$ for MSM, 
being much smaller for $g$ equal to $\msq2$ than $\msq$. 
Apart from the highest spread of the distributions of the minimization 
results when minimizing $\wh{\ce}$, 
another factor that could have contributed to the higher estimates of the variances 
in the minimization results of $\wh{\ce}$ than for the mean square estimators in our experiments
is that the minimum points of the cross-entropy functions are likely to be different 
from the ones of the mean square functions, so that, 
as discussed in Section \ref{secCompFirst}, in such cases minimizing the mean square estimators 
can be more efficient for the minimization of variance in the first-order sense. 

\chapter{Conclusions and further ideas}\label{secConcl} 
In this work we developed methods for obtaining the parameters of the 
IS change of measure adaptively via single- and multi-stage minimization 
of well-known estimators 
of cross-entropy and mean square, as well as of new estimators of mean square 
and inefficiency constant, 
ensuring their various convergence and asymptotic properties 
in the ECM and LETGS settings. 
It would be interesting to prove such properties 
of our methods, or some their modifications, using some other 
parametrizations of IS;
see e.g. \cite{Lemaire2010, Rubinstein_2004} for some examples. 

We proposed criteria for comparing the first- and second-order asymptotic 
efficiency of certain stochastic optimization methods of functions, 
which for such functions being equal to inefficiency constants 
can be used for comparing the efficiency of 
methods for finding the adaptive parameters in the first stage of a two-stage 
estimation method as in Chapter \ref{secTwo}. 
We also derived formulas for measures of the second-order asymptotic inefficiency 
of the above minimization methods of estimators. 

Let us now discuss some problems which one can face when trying to use in practice 
the minimization methods 
for the results of which we proved strong convergence 
and asymptotic properties, 
as well as possible solutions to 
these problems.
When using gradient-based stopping criteria in some of these methods, 
one has to choose some nonnegative 
random bounds $\epsilon_i$ or $\wt{\epsilon}_i$ on the norms of the gradients 
in the minimization results, 
converging to zero a.s. (or, equivalently, ensure that these gradients 
converge to zero a.s.). 
If chosen too large, such bounds can make the minimization algorithm 
perform in practice no steps at all, and if taken too small, 
they can make the algorithm run longer than it can be afforded. 
To ensure the a.s. convergence of the gradients to zero 
in the MSM methods and that a reasonable computational 
effort is made
by the minimization algorithm in each stage, for some $q\in (1,\infty)$,
one can perform at least a fixed number of steps of the minimization 
algorithm plus an additional number of steps needed to make the norm of the gradient 
at least $q$ times smaller 
than in the most recent step in which the final gradient was nonzero (assuming that such a step exists). 

As discussed in Remark \ref{remdsbs}, under appropriate assumptions, to ensure that $b_i \overset{p}{\to}b^*$ in the MSM methods
one can choose appropriate sets $K_i$ containing the variables $b_i$ and such that $b_i$ is equal to the $i$th minimization result
$d_i$ whenever $d_i \in K_i$.
If for some $m \in \N_+$ the sets 
$K_i$ contain  $b^*$ only for $i\geq m$, then the 
convergence of $b_i$ to $b^*$ may be very slow until $i$ 
exceeds such an $m$. One  can try to deal with this problem by performing some preliminary SSM or MSM until 
the sequence of the minimization results has approximately converged to $b^*$ 
and then taking in a new MSM all the sets $K_i$ containing some neighbourhood of the computed approximation of $b^*$. 

As discussed in Section \ref{secAsympMSM}, 
as an alternative to 
using in MSM methods variables $b_i$ converging to $b^*$ minimizing the function $f$ considered, 
it may be reasonable to choose such $b_i$ converging to 
some $d^*$ minimizing some measure of the second-order asymptotic inefficiency of $d_k$ for the minimization of $f$,  
assuming that such a $d^*$ exists. 
Such variables $b_i$ could be potentially obtained 
by minimizing some estimators of such a measure. 
A similar idea is to use as the parameter $b'$ in SSM methods
an estimate of $d^*$ minimizing the measure of inefficiency (\ref{quantSSM}).  

For IS in which the mean theoretical cost is not constant in the function of the 
IS parameter, minimizing the inefficiency constant estimator can be asymptotically the best option 
as under appropriate assumptions it can outperform the minimization of the other estimators in terms 
of the first-order asymptotic efficiency for the minimization 
of the inefficiency constant (see Section \ref{secCompFirst}). 
However, if the mean cost does not depend on the IS parameter, 
so that the inefficiency constant is proportional to the variance, then the 
minimization results of all the mean square and inefficiency constant estimators considered 
can converge to the minimum point of variance, in which  case
minimizing them is asymptotically equally efficient for the minimization of variance in the first-order sense. 
In such a case it may be reasonable to minimize the estimators whose minimization results are the 
most efficient for the minimization (e.g. using SSM or MSM) of the variance in the second-order sense, 
as discussed in Section \ref{secSecond}. 
A possible idea is to estimate the 
measures of the second-order asymptotic inefficiency of different estimators for the minimization of variance, 
which can be combined with the estimation of the parameters $d^*$ minimizing 
such measures as discussed above. The estimators, and potentially also the estimate of $d^*$ as above, leading to the lowest estimates 
of the inefficiency measure, can be later used in a separate SSM or MSM procedure. 

In our numerical experiments, using different IS basis functions and added constants $a$ led to considerably  
different inefficiency constants of the adaptive IS estimators. It would be interesting 
to develop adaptive methods for choosing such basis functions 
and constants. For instance, the added constant $a$ can be chosen adaptively 
via minimization of the estimators of variance or inefficiency constant in which such an $a$ is treated as an additional minimization parameter. 

In MSM, an alternative approach to the minimization of the estimators constructed using only the samples from the last stage, 
as in this work, would be to minimize some weighted average of such estimators from all the previous stages. 
In our initial numerical experiments, minimizing such averages typically yielded drifts farther 
from the approximations of the zero-variance IS drifts for the corresponding diffusions than 
the approach from this work (data not shown), which is why we focused on the current approach. 
Similarly, the mean $\alpha$ of interest could be estimated using a weighted average of the estimators from all the stages, which closely resembles 
the purely adaptive approach used in stochastic approximation methods \cite{KimH07, arouna04, Lemaire2010, Lapeyre2011}. 
For instance, under the assumptions as in Section \ref{secSimpleMin} and denoting $s_k =\sum_{i=1}^kn_i$, such an estimator of $\alpha$ from the $k$th 
stage could be $\frac{1}{s_k}\sum_{l=1}^{k}\sum_{i=1}^{n_l}(ZL(b_{l-1}))(\chi_{l,i})$. 
An SLLN and CLT for such an estimator can be proved similarly as in \cite{Lapeyre2011}. 

The model which we used for the numerical experiments in this work is only a toy one. 
It would be interesting to test and compare the performance of our minimization methods of different estimators on some realistic molecular models,
as well as on models arising in some other application 
areas of IS sampling, like computational finance and queueing theory. 
When using our methods for rare event simulation in practice one should take care to choose  
the IS parameter $b$ equal to $b'$ in SSM or $b_0$ in MSM so that the considered event is not too rare under the IS distribution $\PQ(b)$.   
This is because if such an event was too rare, then it would typically
not occur at all in a reasonable simulation time. To find such a $b$ adaptively one can use e.g. some MSM method
in which the problem is modified in the initial stages to make the considered event less rare in these stages as in 
\cite{Rubinstein_optim,  Rubinstein_2004, ZhangWang2014}.

\bibliographystyle{plain}
{\footnotesize
\bibliography{FEC}}

\providecommand{\noopsort}[1]{}\providecommand{\singleletter}[1]{#1}%
\begin{thebibliography}{10}

\bibitem{egreg2014}
http://math.stackexchange.com/questions/655825/proving-linear-independence-of-gaussian-functions.

\bibitem{Ali_1966}
S.~M. Ali and S.~D. Silvey.
\newblock A general class of coefficients of divergence of one distribution
  from another.
\newblock {\em J. Roy. Statist. Soc. Ser. A}, 28(1):131--142, 1966.

\bibitem{Allen_2009}
R.~J. Allen, V.~Chantal, and P.~Rein~ten Wolde.
\newblock Forward flux sampling for rare event simulations.
\newblock {\em J. Phys.: Condens. Matter}, 21(46):463102, 2009.

\bibitem{arouna04}
B.~Arouna.
\newblock Adaptative {M}onte {C}arlo method.
\newblock {\em Monte Carlo Methods Appl.}, 10(1):1--24, 2004.

\bibitem{asmussen2007stochastic}
S.~Asmussen and P.~W. Glynn.
\newblock {\em Stochastic simulation: algorithms and analysis}.
\newblock Stochastic Modelling and Applications. Springer, 2007.

\bibitem{badowski2011}
T.~Badowski.
\newblock Variance-based sensitivity analysis for stochastic chemical kinetics.
\newblock Master's thesis, Faculty of Mathematics, Informatics and Mechanics,
  University of Warsaw, 2011.

\bibitem{badowski2013}
T.~Badowski.
\newblock Variance-based sensitivity analysis and orthogonal approximations for
  stochastic models.
\newblock Master's thesis, Faculty of Physics, University of Warsaw, 2013.

\bibitem{billingsley1979}
P.~Billingsley.
\newblock {\em Probability and measure}.
\newblock Wiley series in probability and mathematical statistics. Probability
  and mathematical statistics. Wiley, 1979.

\bibitem{Boyd_2004}
S.~Boyd and L.~Vandenberghe.
\newblock {\em Convex Optimization}.
\newblock Cambridge University Press, New York, NY, USA, 2004.

\bibitem{bucklew2004}
J.~A. Bucklew.
\newblock {\em Introduction to rare event simulation}.
\newblock Springer series in statistics. Springer, New York, 2004.

\bibitem{chung2001course}
K.~L. Chung.
\newblock {\em A Course in Probability Theory}.
\newblock Academic Press, 2001.

\bibitem{Coleman94}
T.~F. Coleman and Y.~Li.
\newblock On the convergence of interior-reflective {N}ewton methods for
  nonlinear minimization subject to bounds.
\newblock {\em Math. Program.}, 67(1-3):189--224, 1994.

\bibitem{coleman1996interior}
T.~F. Coleman and Y.~Li.
\newblock An interior trust region approach for nonlinear minimization subject
  to bounds.
\newblock {\em SIAM J. Optim.}, 6(2):418--445, 1996.

\bibitem{cover2006elements}
T.M. Cover and J.A. Thomas.
\newblock {\em Elements of Information Theory}.
\newblock A Wiley-Interscience publication. Wiley, 2006.

\bibitem{dasgupta2011probability}
A.~DasGupta.
\newblock {\em Probability for Statistics and Machine Learning: Fundamentals
  and Advanced Topics}.
\newblock Springer Texts in Statistics. Springer, 2011.

\bibitem{DupuisSpil2012}
P.~Dupuis, K.~Spiliopoulos, and H.~Wang.
\newblock Importance sampling for multiscale diffusions.
\newblock {\em Multiscale Model. Simul.}, 10(1):1--27, July 2011.

\bibitem{Dupuis2012}
P.~Dupuis, K.~Spiliopoulos, and X.~Zhou.
\newblock Escaping from an attractor: Importance sampling and rest points {I}.
\newblock {\em Ann. Appl. Probab.}, 25(5):2909 -- 2958, 2015.

\bibitem{fristedt1997modern}
B.~E. Fristedt and L.~F. Gray.
\newblock {\em A Modern Approach to Probability Theory}.
\newblock Probability and Its Applications. Birkh{\"a}user Boston, 1997.

\bibitem{glasserman2004monte}
P.~Glasserman.
\newblock {\em Monte Carlo Methods in Financial Engineering}.
\newblock Applications of mathematics : stochastic modelling and applied
  probability. Springer, 2004.

\bibitem{effimprGlynn94}
P.~W. Glynn.
\newblock Efficiency improvement techniques.
\newblock {\em Ann. Oper. Res.}, 53(1):175--197, 1994.

\bibitem{Glynn_2012}
P.~W. Glynn, H.~P. Awad, and R.~Y. Rubinstein.
\newblock Zero-variance importance sampling estimators for {M}arkov process
  expectations.
\newblock {\em Math. Oper. Res., Articles in advance}, pages 1--30, 2012.

\bibitem{Szechtman2001}
P.~W. Glynn and R.~Szechtman.
\newblock Some new perspectives on the method of control variates.
\newblock In K.~Fang, H.~Niederreiter, and F.~J. Hickernell, editors, {\em
  Monte Carlo and Quasi-Monte Carlo Methods 2000}, pages 27--49. Springer
  Berlin Heidelberg, 2002.

\bibitem{Glynn_1992}
P.~W. Glynn and W.~Whitt.
\newblock The asymptotic efficiency of simulation estimators.
\newblock {\em Oper. Res.}, 40(3):505--520, 1992.

\bibitem{Gobet2000}
E.~Gobet.
\newblock Weak approximation of killed diffusion using {E}uler schemes.
\newblock {\em Stochastic Process. Appl.}, 87(2):167--197, 2000.

\bibitem{Gobet2010}
E.~Gobet and S.~Menozzi.
\newblock Stopped diffusion processes: Boundary corrections and overshoot.
\newblock {\em Stochastic Process. Appl.}, 120(2):130 -- 162, 2010.

\bibitem{HartmannEntr14}
C.~Hartmann, R.~Banisch, M.~Sarich, T.~Badowski, and C.~Sch{\"u}tte.
\newblock Characterization of rare events in molecular dynamics.
\newblock {\em Entropy}, 16(1):350--376, 2014.

\bibitem{HartmannShuette2012}
C.~Hartmann and C.~Sch{\"u}tte.
\newblock Efficient rare event simulation by optimal nonequilibrium forcing.
\newblock {\em J. Stat. Mech.}, 2012(11):P11004, 2012.

\bibitem{HomemdeMello02rareevent}
T.~Homem-de Mello and R.~Y. Rubinstein.
\newblock Rare event estimation for static models via cross-entropy and
  importance sampling, 2002.

\bibitem{Ikeda1981}
N.~Ikeda and S.~Watanabe.
\newblock {\em Stochastic Differential Equations and Diffusion Processes}.
\newblock North-Holland Pub. Co., 1981.

\bibitem{Jourdain2009}
B.~Jourdain and J.~Lelong.
\newblock Robust adaptive importance sampling for normal random vectors.
\newblock {\em Ann. Appl. Probab.}, 19(5):1687--1718, 2009.

\bibitem{karatzas1991brownian}
I.~Karatzas and S.~E. Shreve.
\newblock {\em Brownian Motion and Stochastic Calculus}.
\newblock Graduate Texts in Mathematics. Springer New York, 1991.

\bibitem{KimH07}
S.~Kim and S.~G. Henderson.
\newblock Adaptive control variates for finite-horizon simulation.
\newblock {\em Math. Oper. Res.}, 32(3):508--527, 2007.

\bibitem{Kolbig1979}
K.~S. K\"{o}lbig and F.~Schwartz.
\newblock A program for solving systems of homogeneous linear inequalities.
\newblock {\em Comput. Phys. Commun.}, pages 375--382, 1979.

\bibitem{HandbookMonteCarlo}
D.~P. Kroese, T.~Taimre, and Z.~I. Botev.
\newblock {\em Handbook of {M}onte {C}arlo methods}.
\newblock Wiley series in probability and statistics. Hoboken, N.J. Wiley,
  2011.

\bibitem{Lapeyre2011}
B.~Lapeyre and J.~Lelong.
\newblock A framework for adaptive {M}onte {C}arlo procedures.
\newblock {\em Monte Carlo Methods Appl.}, 17(1):1--98, 2011.

\bibitem{lehmann2005testing}
E.~L. Lehmann and J.~P. Romano.
\newblock {\em Testing statistical hypotheses}.
\newblock Springer Texts in Statistics. Springer, New York, third edition,
  2005.

\bibitem{Lemaire2010}
V.~Lemaire and G.~Pages.
\newblock Unconstrained recursive importance sampling.
\newblock {\em Ann. Appl. Probab.}, 20(3):1--24, 2010.

\bibitem{LieseV06}
F.~Liese and I.~Vajda.
\newblock On divergences and informations in statistics and information theory.
\newblock {\em IEEE Trans. Inform. Theory}, 52(10):4394--4412, 2006.

\bibitem{lipster89}
R.~S. Lipster and A.~N. Shiryayev.
\newblock {\em Theory of Martingales}.
\newblock Kluwer Academic, 1989.

\bibitem{Nelson_1990}
B.~L. Nelson.
\newblock Control variate remedies.
\newblock {\em Oper. Res.}, 38(6):974--992, 1990.

\bibitem{nocedal2006numerical}
J.~Nocedal and S.~Wright.
\newblock {\em Numerical Optimization}.
\newblock Springer Series in Operations Research and Financial Engineering.
  Springer New York, 2006.

\bibitem{Prinz_2011}
M.~Prinz, J.~Held, J.~C. Smith, and F.~No\'{e}.
\newblock Efficient computation, sensitivity, and error analysis of committor
  probabilities for complex dynamical processes.
\newblock {\em Multiscale Model. Simul.}, 9(2):545--567, 2011.

\bibitem{pupa2011}
M.~Pupashenko.
\newblock Sampling methods for rare event simulation.
\newblock Master's thesis, Department of Mathematics, University of
  Kaiserslautern, 2011.

\bibitem{revuz1999continuous}
D.~Revuz and M.~Yor.
\newblock {\em Continuous Martingales and Brownian Motion}.
\newblock Grundlehren der mathematischen Wissenchaften A series of
  comprehensive studies in mathematics. Springer, 1999.

\bibitem{rubino2009rare}
G.~Rubino and B.~Tuffin.
\newblock {\em Rare Event Simulation using Monte Carlo Methods}.
\newblock Wiley, 2009.

\bibitem{Rubinstein97optimizationof}
R.~Y. Rubinstein.
\newblock Optimization of computer simulation models with rare events.
\newblock {\em Eur. J. Oper. Res.}, 99:89--112, 1997.

\bibitem{Rubinstein_optim}
R.~Y. Rubinstein.
\newblock The cross-entropy method for combinatorial and continuous
  optimization.
\newblock {\em Methodol. Comput. Appl. Probab.}, 1:127--190, 1999.

\bibitem{Rubinstein_2004}
R.~Y. Rubinstein and D.~P. Kroese.
\newblock {\em The Cross Entropy Method: A Unified Approach To Combinatorial
  Optimization, Monte-Carlo Simulation and Machine Learning}.
\newblock Information Science and Statistics. Springer-Verlag New York, Inc.,
  Secaucus, NJ, USA, 2004.

\bibitem{rubinstein1993discrete}
R.~Y. Rubinstein and A.~Shapiro.
\newblock {\em Discrete event systems: sensitivity analysis and stochastic
  optimization by the score function method}.
\newblock Wiley series in probability and mathematical statistics. Wiley, 1993.

\bibitem{rudin1970}
W.~Rudin.
\newblock {\em Real and Complex Analysis}.
\newblock McGraw-Hill Book Co., New York, 1970.

\bibitem{Shapiro1993}
A.~Shapiro.
\newblock Asymptotic behaviour of opimal solutions in stochastic programming.
\newblock {\em Math. Oper. Res.}, 18(4):829--845, 1993.

\bibitem{Shapiro1996}
A.~Shapiro.
\newblock {Simulation-based optimization---convergence analysis and statistical
  inference}.
\newblock {\em Comm. Statist. Stochastic Models}, 12(3):425--454, 1996.

\bibitem{Shapiro2003}
A.~Shapiro.
\newblock {\em Stochastic Programming}, chapter {M}onte {C}arlo sampling
  methods.
\newblock Handbooks in Operations Research and Management Science. Elsevier,
  2003.

\bibitem{StochOptComp}
J.~C. Spall, S.~D. Hill, and D.~R. Stark.
\newblock Theoretical framework for comparing several stochastic optimization
  approaches.
\newblock In Giuseppe Calafiore and Fabrizio Dabbene, editors, {\em
  Probabilistic and Randomized Methods for Design under Uncertainty}, pages
  99--117. Springer London, 2006.

\bibitem{Vaart}
A.~W. van~der Vaart.
\newblock {\em Asymptotic Statistics}.
\newblock Cambridge Series in Statistical and Probabilistic Mathematics.
  Cambridge University Press, 2000.

\bibitem{ZhangWang2014}
W.~Zhang, H.~Wang, C.~Hartmann, M.~Weber, and C.~Sch{\"u}tte.
\newblock Applications of the cross-entropy method to importance sampling and
  optimal control of diffusions.
\newblock {\em SIAM J. Sci. Comput.}, 36(6):A2654--A2672, 2014.

\end{thebibliography}
\end{document}